\newcommand{\Index}[1]          {{#1}\index{#1}}
\newcommand{\indexbf}[1]        {\index{#1|textbf}}
\newcommand{\bigX}   {\mathop{\text{\LARGE \upshape \textsf X}}\displaylimits}
\DeclareMathOperator{\id}                {{\mathsf{id}}}
\DeclareMathOperator{\Id}                {{\mathsf{Id}}}
\DeclareMathOperator{\supp}              {\mathsf{supp}}
\DeclareMathOperator{\tr}                {{\mathsf{tr}}}
\DeclareMathOperator{\ad}                {ad}
\DeclareMathOperator{\Ad}                {Ad}
\DeclareMathOperator{\bild}              {\mathsf{im}}
\DeclareMathOperator{\Deg}               {Deg}
\DeclareMathOperator{\Vol}               {Vol}
\DeclareMathOperator{\Diff}              {\mathsf{Diff}}
\newcommand{\Prim}[1]         {\mathsf{Prim}(#1)}
\newcommand{\Lie}[1][{}]      {\mathcal{L}_{\sss {#1}}}
\newcommand{\LieAlg}[1]       {\mathfrak{#1}}
\newcommand{\LieAlgd}[1]      {\mathfrak{#1}^{\ast}}
\newcommand{\Obj}             {\mathsf{Obj}}
\newcommand{\Morph}           {\mathsf{Morph}}
\newcommand{\Bimorph}         {\mathsf{2\textrm{-}Morph}}
\newcommand{\Def}             {\mathsf{Def}}
\newcommand{\cl}[1][{}]       {\mathsf{cl}_{#1}}
\newcommand{\clr}[1][{}]      {\mathsf{cl}^{r}_{#1}}
\newcommand{\pr}              {\mathsf{pr}}
\newcommand{\Pol}           {\mathsf{Pol}}
\newcommand{\DiffOp}        {\mathsf{DiffOp}}
\newcommand{\DiffOpPol}     {\mathsf{DiffOpPol}}
\newcommand{\tensor}[1][{}]       {\mathbin{\otimes_{\sss {#1}}}}
\newcommand{\tensorhat}[1][{}]    {\mathbin{\widehat{\otimes}_{\sss {#1}}}}
\newcommand{\tensorhatohne}       {\mathbin{\widehat{\otimes}}}
\newcommand{\tensortilde}[1][{}]  {\mathbin{\widetilde{\otimes}_{\sss {#1}}}}
\newcommand{\cross}[2]            {{#1}\!\rtimes\!{#2}}
\newcommand{\lcross}[2]           {{#1}\!\ltimes\!{#2}}
\newcommand{\ctimes}              {\! \otimes \!} 
\DeclareMathOperator{\dega}       {deg_{\sss a}}
\DeclareMathOperator{\degs}       {deg_{\sss s}}
\DeclareMathOperator{\degl}       {deg_{\sss \lambda}}
\renewcommand{\eqref}[1]   {{(\ref{#1})}}
\newcommand{\cc}[1]        {\overline{#1}}
\newcommand{\uu}[1]        {\underline{#1}}
\renewcommand{\to}         {\,{\rightarrow}\,}
\newcommand{\too}          {\longrightarrow}
\renewcommand{\ker}        {\mathsf{ker\,}}
\newcommand{\del}          {\partial}
\newcommand{\de}           {\,{\mathrm{d}}}
\newcommand{\deohne}       {\mathrm{d}} 
\newcommand{\im}           {\mathrm{i}}
\newcommand{\eu}           {\mathrm{e}}
\newcommand{\sss}          {\scriptscriptstyle}
\newcommand{\mbf}[1]       {\boldsymbol{#1}} 
\renewcommand{\mit}[1]     {\mathnormal{#1}}   
\newcommand{\msf}[1]       {\mathsf{#1}}       
\renewcommand{\mathbb}[1]  {\mathbbm{#1}}
\newcommand{\defpro}[1]    {\boldsymbol{#1}} 
\newcommand{\bfcal}[1]     {\boldsymbol{\mathcal{#1}}}
\newcommand{\lconn}[1][{}]    {\nabla{_{\!\! \sss {#1}}}}
\newcommand{\lconnE}[1][{}]   {\nabla^{\sss E}_{\!\! \sss {#1}}}
\newcommand{\lconnEnd}[2][{}] {\nabla^{\sss \mathsf{End} ({#2})}_{\!\!
      \sss {#1}}}
\newcommand{\END}[1]       {\ensuremath{\mathcal{E}\mathit{nd}(#1)}}
\newcommand{\End}[2][{}]   {\ensuremath{\mathsf{End}_{\sss {#1}}(#2)}}
\newcommand{\GR}[2]             {\ensuremath{\mathsf{#1}(#2)}}
\newcommand{\GRn}[2]            {\ensuremath{\mathsf{#1}_{0}(#2)}} 
\newcommand{\Sympl}             {\mathsf{Sympl}}
\newcommand{\universell}[1]     {\alg{U} \left({#1} \right)}
\newcommand{\universellC}[1]    {\alg{U}_{\sss \field{C}}\left({#1} \right)}
\newcommand{\universellR}[1]    {\alg{U}_{\sss \field{R}}\left({#1}
  \right)}
\newcommand{\universellRingC}[1]    {\alg{U}_{\sss \ring{C}}\left({#1} \right)}
\newcommand{\universellRingR}[1]    {\alg{U}_{\sss \ring{R}}\left({#1}
  \right)}
\newcommand{\universelll}[1]    {\ensuremath{\alg{U}_{\lambda}\left({#1} \right)}}
\newcommand{\llb}          {[[}
\newcommand{\rrb}          {]]}
\newcommand{\FP}           {\llb \lambda \rrb}
\newcommand{\HdeRham}[1][{}]    {\mathrm{H}_{\sss \mathrm{dR}}^{#1}}
\newcommand{\HChern}[1][{}]     {\mathrm{H}^{#1}}
\newcommand{\HCech}[1][{}]      {\mathrm{\check{H}}^{#1}}
\newcommand{\HPoisson}[1][{}]   {\mathrm{H}_{\sss \pi}^{#1}}
\newcommand{\field}[1]     {\mathbb{#1}}
\newcommand{\fieldf}[1]    {\field{#1} \llb \lambda \rrb}
\newcommand{\alg}[1]       {\mathcal{#1}}
\newcommand{\algf}[1]      {\mathcal{#1} \llb \lambda \rrb}
\newcommand{\defalg}[1]    {\bfcal{#1}}
\newcommand{\modul}[1]     {\alg{#1}} 
\newcommand{\modulf}[1]    {\alg{#1} \llb \lambda \rrb}
\newcommand{\ring}[1]      {\mathsf{#1}}
\newcommand{\ringf}[1]     {\mathsf{#1} \llb \lambda \rrb}
\newcommand{\qring}[1]     {\mathsf{\hat{#1}}}
\newcommand{\schnitt}[1]   {\Gamma^\infty \left({#1}\right)}
\newcommand{\schnittf}[1]  {\Gamma^\infty \left({#1}\right)\! \llb
  \lambda \rrb} 
\newcommand{\kat}[1]       {\mathbf{#1}}
\newcommand{\hilbert}[1]   {\mathfrak{#1}} 
\newcommand{\Cinf}[1]      {C^{\infty} \!\left({#1}\right)}
\newcommand{\Cinff}[1]     {C^{\infty} \!\left({#1}\right)\llb \lambda \rrb}
\newcommand{\Cinfc}[1]     {C^{\infty}_{0} \!\left({#1}\right)}
\newcommand{\Comega}[1]    {C^{\omega} \!\left({#1} \right)}
\newcommand{\Comegaf}[1]   {C^{\omega} \!\left({#1} \right) \llb \lambda \rrb}
\newcommand{\verband}[1][{}] {\mathcal{L}_{\sss {#1}}}
\newcommand{\SP}[2][{}]     {{\left\langle{{#2}}\right\rangle}^{#1}}
\newcommand{\rSP}[3][{}]    {{\left\langle{{#2}}\right\rangle}_{\!
    \alg{#3}}^{#1}} 
\newcommand{\rSPcr}[3][{}]    {{\left\langle{{#2}}\right\rangle}_{\! {#3}}^{#1}}  
\newcommand{\lSP}[3][{}]    {^{~}_{\alg{#2}}{\!\left\langle{{#3}}\right\rangle}^{#1}} 
\newcommand{\lSPcr}[3][{}]    {^{~}_{{#2}}{\!\left\langle{{#3}}\right\rangle}^{#1}} 
\newcommand{\SPf}[1]         {{\left\langle \!\!\!\left\langle {#1}
        \right\rangle \!\!\! \right\rangle}} 
\newcommand{\rSPf}[3][{}]         {{\left\langle \!\!\!\left\langle {#2}
        \right\rangle \!\!\! \right\rangle_{\! {\defalg{#3}}}^{#1}}} 
\newcommand{\rmod}[2]      {\alg{#1}_{\alg{#2}}}
\newcommand{\rmodo}[2]     {{#1}_{{#2}}}
\newcommand{\rmodf}[2]     {\alg{#1}_{\alg{#2}} \llb \lambda \rrb}
\newcommand{\lmod}[2]      {{_{\alg{#1}}}{\alg{#2}}}
\newcommand{\lmodo}[2]     {{_{{#1}}}{{#2}}}
\newcommand{\lmodf}[2]     {{_{\alg{#1}}}{\alg{#2}} \llb \lambda \rrb}
\newcommand{\bimod}[3]     {{_{\alg{#1}}}{\alg{#2}}_{\alg{#3}}}
\newcommand{\bimodo}[3]    {{_{#1}}{#2}_{#3}}
\newcommand{\bimodf}[3]    {{_{\alg{#1}}}{\alg{#2}}_{\alg{#3}} \llb \lambda \rrb} 
\newcommand{\bimodcross}[4][{H}]
{{_{\cross{\alg{#2}}{#1}}}{\cross{\alg{#3}}{#1}}_{\cross{\alg{#4}}{#1}}}
\newcommand{\bimodcrosso}[4][{H}]
{{_{\cross{\alg{#2}}{#1}}}{{\alg{#3}} \otimes {#1}}_{\cross{\alg{#4}}{#1}}}  
\newcommand{\defrmod}[2]  {\boldsymbol{\alg{#1}}_{\defalg{#2}}} 
\newcommand{\deflmod}[2]  {_{\defalg{#1}}\boldsymbol{\alg{#2}}} 
\newcommand{\defbimod}[3] {_{\defalg{#1}}\boldsymbol{\alg{#2}}_{\defalg{#3}}}
\newcommand{\lmodplus}[2]      {\left({\lmod{#1}{#2}},
      \lSP{\alg{#1}}{\cdot,\cdot} \right)}
\newcommand{\rmodplus}[2]      {\left({\rmod{#1}{#2}},
      \rSP{\cdot,\cdot}{\alg{#2}} \right)} 
\newcommand{\rmodplusne}[2]      {\left({\rmod{#1}{#2}} /
      \rmod{#1}{#2}^{\perp}, \rSP{\cdot,\cdot}{\alg{#2}} \right)} 
\newcommand{\bimodplus}[3]   {\left(\bimod{#1}{#2}{#3},
      \lSP{#1}{\cdot,\cdot},
      \rSP{\cdot,\cdot}{#3} \right)} 
\newcommand{\bimodrplus}[3]   {\left(\bimod{#1}{#2}{#3},
      \rSP{\cdot,\cdot}{#3} \right)} 
\newcommand{\srep}[1][{}]  {\sideset{^{\ast}}{_{\sss {#1}}}{\operatorname{\textrm{-}\mathsf{rep}}}}
\newcommand{\sRep}[1][{}]  {\sideset{^{\ast}}{_{\sss {#1}}}{\operatorname{\textrm{-}\mathsf{Rep}}}}
\newcommand{\smod}[1][{}]  {\sideset{^{\ast}}{_{\sss {#1}}}{\operatorname{\textrm{-}\mathsf{mod}}}}
\newcommand{\sMod}[1][{}]  {\sideset{^{\ast}}{_{\sss {#1}}}{\operatorname{\textrm{-}\mathsf{Mod}}}}
\newcommand{\rieffel}[2]   {\mathsf{#1}_{\sss {#2}}}
\newcommand{\katlmod}[1]   {{#1}{\operatorname{\textrm{-}\mathsf{mod}}}}
\newcommand{\katlMod}[1]   {{#1}{\operatorname{\textrm{-}\mathsf{Mod}}}}
\newcommand{\katrmod}[1]   {\operatorname{\mathsf{mod}\textrm{-}}{#1}}
\newcommand{\katrMod}[1]   {\operatorname{\mathsf{Mod}\textrm{-}}{#1}}
\newcommand{\CSpan}[1]
{{\ring{C}{\operatorname{\textrm{-}\mathsf{span}}}}\left\{ {#1} \right\}}
\newcommand{\spstd}        {\mathbin{\star_{\sss \mathsf{Std}}}}
\newcommand{\spweyl}       {\mathbin{\star_{\sss \mathsf{Weyl}}}}
\newcommand{\spwick}       {\mathbin{\star_{\sss \mathsf{Wick}}}}
\newcommand{\spkappa}      {\mathbin{\star_{\sss \kappa}}}
\newcommand{\spemkappa}    {\mathbin{\star_{\sss 1-\kappa}}}
\newcommand{\sptkappa}     {\mathbin{\star_{\sss \tilde{\kappa}}}}
\newcommand{\spgutt}       {\mathbin{\star_{\sss \mathsf{G}}}}
\newcommand{\ordstd}       {\varrho_{\sss \mathsf{Std}}}
\newcommand{\ordweyl}      {\varrho_{\sss \mathsf{Weyl}}}
\newcommand{\ordwick}      {\varrho_{\sss \mathsf{Wick}}}
\newcommand{\ordkappa}     {\varrho_{\sss \kappa}}
\newcommand{\ordtkappa}    {\varrho_{\sss \tilde{\kappa}}}
\newcommand{\ordzahlweyl}  {\varrho_{\sss \!1\! /\!2}}
\newcommand{\symbstd}      {\sigma_{\sss \mathsf{Std}}}
\newcommand{\symbweyl}     {\sigma_{\sss \mathsf{Weyl}}}
\newcommand{\symbkappa}    {\sigma_{\sss \kappa}}
\newcommand{\symbtkappa}   {\sigma_{\sss \tilde{\kappa}}}
\newcommand{\symbwick}     {\sigma_{\sss \mathrm{Wick}}}
\newcommand{\fpweyl}       {\mathbin{{\circ_{\sss \mathsf{Weyl}}}}}
\newcommand{\Nweyl}[1][{}] {N_{\sss \!1\! /\!2}^{#1}}
\newcommand{\bundle}[3]     {{#1} \stackrel{#2}{\longrightarrow} {#3}}
\newcommand{\Name}[1]         {{\sc{#1}}}
\newcommand{\Nameb}[2]        {{\sc{#1}} $(^\ast{#2})$}
\newcommand{\Namebd}[3]        {{\sc{#1}} $(^\ast${#2}, $\dagger${#3})}
\newcommand{\NameSection}[1]  {{\sc{#1}}} 
\newcommand{\zentrum}[2][{}]  {\sideset{}{_{\sss {#1}}}{\operatorname{\mathfrak{Z}}}(#2)}
\newcommand{\ideal}[1]        {\mathcal{#1}}
\newcommand{\Deltaop}         {\Delta^{\mathrm{\sss op}}}
\newcommand{\muop}            {\mu^{\mathrm{\sss op}}}
\newcommand{\neact}           {\mathbin{\triangleright}}  
\newcommand{\act}             {\mathbin{\triangleright}} 
\newcommand{\neactt}[1][{}]   {\mathbin{\triangleright^{\sss {#1}}}}
\newcommand{\neactb}          {\mathbin{\triangleright^{\sss {\msf{b}}}}} 
\newcommand{\ccact}           {\mathbin{\overline{\triangleright}}}
\newcommand{\neacta}          {\mathbin{\triangleright_{\sss {\msf{a}}}}}
\newcommand{\Hgutt}            {H_{\sss \mathsf{G}}}
\newcommand{\Hom}[1][{}]    {\mathsf{Hom}_{\sss {#1}}}
\newcommand{\Pic}      {\operatorname{\mathsf{Pic}}}
\newcommand{\starPic}  {\sideset{}{^{\ast}}{\operatorname{\mathsf{Pic}}}}
\newcommand{\strPic}   {\sideset{}{^{\mathrm{str}}}{\operatorname{\mathsf{Pic}}}}
\newcommand{\PicH}     {\sideset{}{_{\sss H}}{\operatorname{\mathsf{Pic}}}}
\newcommand{\strPicH}  {\sideset{}{^{\mathrm{str}}_{\sss H}}{\operatorname{\mathsf{Pic}}}}
\newcommand{\starPicH} {\sideset{}{^{\ast}_{\sss H}}{\operatorname{\mathsf{Pic}}}}
\newcommand{\KPic}     {\operatorname{\underline{\mathsf{Pic}}}}
\newcommand{\SPic}     {{\operatorname{\mathsf{SPic}}}}
\newcommand{\Iso}      {\operatorname{\mathsf{Iso}}}
\newcommand{\IsoH}     {\sideset{}{_{\sss H}}{\operatorname{\mathsf{Iso}}}}
\newcommand{\starIso}  {\sideset{}{^{\ast}}{\operatorname{\mathsf{Iso}}}}
\newcommand{\starIsoH} {\sideset{}{^{\ast}_{\sss H}}{\operatorname{\mathsf{Iso}}}}
\newcommand{\Bij}      {\operatorname{\mathsf{Bij}}}
\newcommand{\Aut}[1][{}] {\sideset{}{_{\sss {#1}}}{\operatorname{\mathsf{Aut}}}}
\newcommand{\AutH}     {\sideset{}{_{\sss H}}{\operatorname{\mathsf{Aut}}}}
\newcommand{\InnAut}   {\operatorname{\mathsf{InnAut}}} 
\newcommand{\starInnAut}{\sideset{}{^{\ast}}{\operatorname{\mathsf{InnAut}}}}
\newcommand{\starAut}  {\sideset{}{^{\ast}}{\operatorname{\mathsf{Aut}}}}
\newcommand{\starAutH} {\sideset{}{^{\ast}_{\sss H}}{\operatorname{\mathsf{Aut}}}}
\newcommand{\OutAut}     {\operatorname{\mathsf{OutAut}}} 
\newcommand{\starOutAut} {\sideset{}{^{\ast}}{\operatorname{\mathsf{OutAut}}}}
\newcommand{\starAlg}  {\sideset{^\ast}{}{\operatorname{\mathsf{Alg}}}}
\newcommand{\staralg}  {\sideset{^\ast}{}{\operatorname{\mathsf{alg}}}}
\newcommand{\PraeHilbert}[1] {\operatorname{\mathsf{Pr\ddot{a}Hilbert}}(#1)}
\newcommand{\KProj}         {\operatorname{\uu{\mathsf{Proj}}}}
\newcommand{\KstarProj}     {\sideset{}{^\ast}{\operatorname{\uu{{\mathsf{Proj}}}}}}
\newcommand{\KstrProj}      {\sideset{}{^\mathrm{str}}{\operatorname{\uu{\mathsf{Proj}}}}}
\newcommand{\KProjH}        {\sideset{}{_{\sss H}}{\operatorname{\uu{\mathsf{Proj}}}}}
\newcommand{\KstarProjH}     {\sideset{}{^\ast_{\sss H}}{\operatorname{\uu{\mathsf{Proj}}}}}
\newcommand{\KstrProjH}      {\sideset{}{^\mathrm{str}_{\sss H}}{\operatorname{\uu{\mathsf{Proj}}}}}
\newtheoremstyle{break}
  {}                   
  {}                   
  {\itshape}           
  {}                   
  {\bf \bf}          
  {.}                  
  {\newline}           
  {}                   
\theoremstyle{break}
\newtheorem{definition}{Definition}[section]
\newtheorem{satz}[definition]{Satz}
\newtheorem{lemma}[definition]{Lemma}
\newtheorem{dlemma}[definition]{Definiton und Lemma}
\newtheorem{dkonstruktion}[definition]{Definition und Konstruktion}
\newtheorem{korollar}[definition]{Korollar}
\newtheorem{proposition}[definition]{Proposition}
\newtheoremstyle{break2}
  {}                   
  {}                   
  {\upshape}           
  {}                   
  {\bfseries}          
  {.}                  
  {\newline}           
  {}                   
 \theoremstyle{break2}
\newtheorem{bemerkung}[definition]{Bemerkung}
\newtheorem{bemerkungen}[definition]{Bemerkungen}
\newtheorem{beispiel}[definition]{Beispiel}
\newtheorem{beispiele}[definition]{Beispiele}
\numberwithin{equation}{section}
\def\cleardoublepage{\clearpage\if@twoside \ifodd\c@page\else
    \hbox{}
    \vspace*{\fill}

\begin{center}
\end{center}                                                
    
    \vspace{\fill}                                              %
    \thispagestyle{empty}                                       %
    \newpage                                                    %
    \if@twocolumn\hbox{}\newpage\fi\fi\fi}                      %
\begin{document}
\xyoption{curve}


\index{Intertwiner|see{Verschr"ankungsoperator}}
\index{Hopf-Aktion@\Name{Hopf}-Aktion|see{\Name{Hopf}-Wirkung}}
\index{Smash-Produkt|see{Cross-Produktalgebra}}
\index{Rechtsmodul|see{Modul}}
\index{Linksmodul|see{Modul}}
\index{Operator!adjungierbarer|see{Abbildungen}}
\index{Flip|see{Vertauschungsoperator}}
\index{pull-back|see{Zur\"uckziehen}}
\index{Stern-Algebra@$^\ast$-Algebra|see{Algebra}}
\index{Standardordnung|see{Ordnungsvorschrift}}
\index{Quantengruppe|see{\Name{Hopf}-Algebra}}

\pagestyle{empty}
\begin{titlepage}
\begin{center}

\vspace{0.6cm}
\Huge \scshape
$H$-\"aquivariante Morita-\"Aquivalenz \\
\vspace{0.3cm}
und Deformationsquantisierung 
\vspace{4cm}

\Large
\mdseries

Inaugural-Dissertation\\
zur Erlangung des Doktorgrades\\
\vspace{2cm}
\large
vorgelegt von \\
\vspace{1cm}
{\LARGE
\scshape

Stefan Jansen\\}

\vspace{1cm}
\large
\mdseries
aus Aachen

\vspace{2cm}
\end{center}
\vfill
\begin{center}
    \begin{minipage}{15cm}
        \begin{minipage}{3.8cm}
            \center{\includegraphics[width=3.1cm]{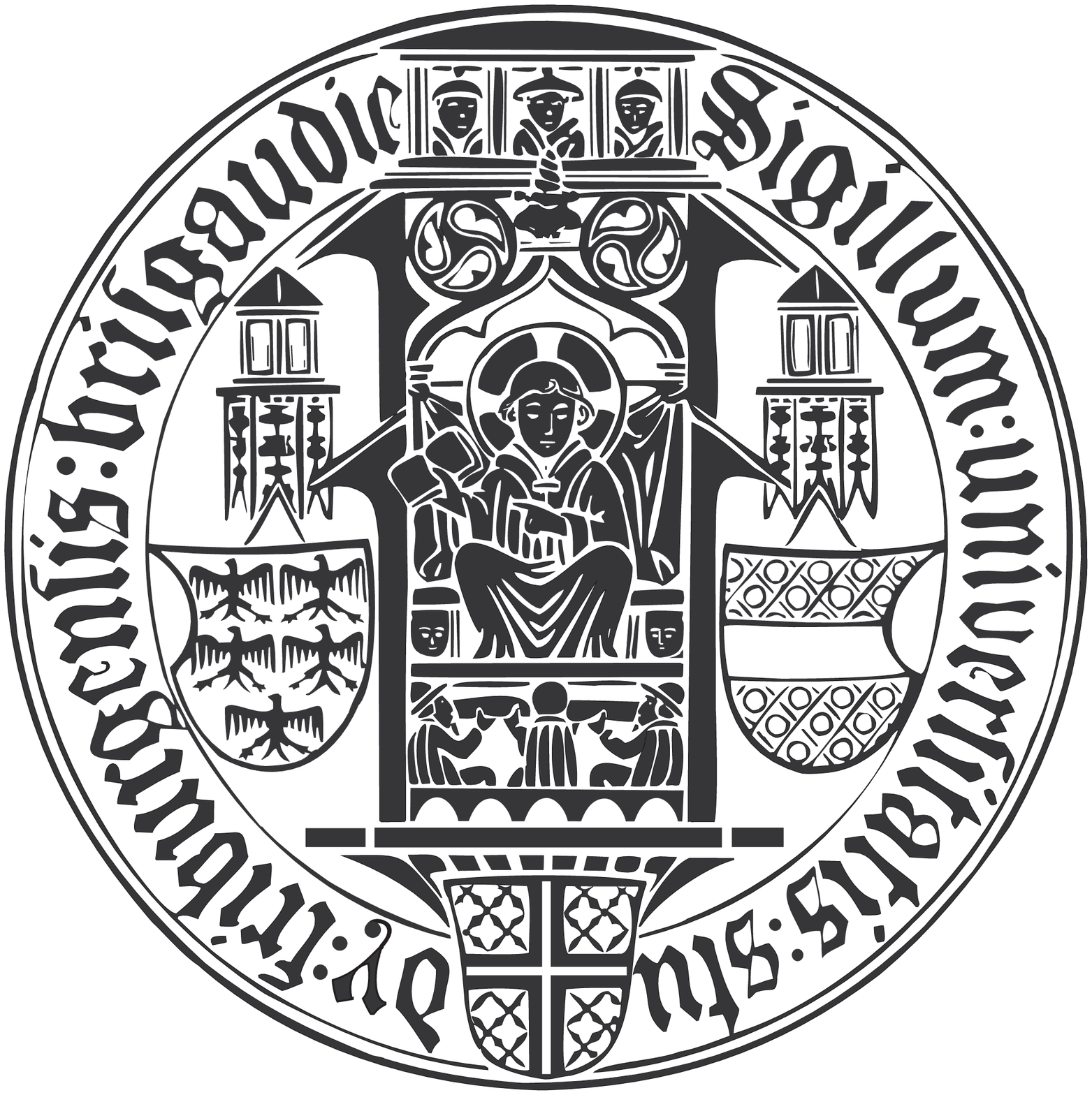}}
        \end{minipage}
        \hfill
        \begin{minipage}{11cm}
            \begin{flushleft}
                \scshape
                \Large
                \vspace{0.1cm}
                Albert-Ludwigs-Universit"at Freiburg\\
                Fakult"at f"ur Mathematik und Physik\\
                Physikalisches Institut \\
                November 2006\\
            \end{flushleft}
        \end{minipage}
    \end{minipage}
\end{center}

\pagebreak

~\\ \vfill

\begin{tabular}{ll}
Dekan: & Prof.~Dr.~\Name{J"org Flum}\\
Referent: & Prof.~Dr.~\Name{Hartmann R"omer}\\
Korreferent: & Prof.~Dr.~\Name{Wolfgang Soergel} \\
Tag der m"undlichen Pr"ufung: & 18.12.2006
\end{tabular}
\end{titlepage}



\frontmatter

\pagestyle{fancy}
\tableofcontents

\chapter*{Danksagung}
\fancyhead[CE]{\slshape \nouppercase{Danksagung}} 
\fancyhead[CO]{\slshape \nouppercase{Danksagung}} 
\addcontentsline{toc}{chapter}{Danksagung}

Das Verfassen dieser Arbeit w"are ohne die Unterst"utzung
vieler Menschen nicht m"oglich gewesen. Ich m"ochte an dieser Stelle
all denjenigen meinen Dank aussprechen, die mich in den letzten
viereinhalb Jahren auf ihre Art weitergebracht,
unterst"utzt und ertragen haben. Besonders die folgenden Personen
haben unmittelbar Einflu"s auf das Gelingen der Arbeit genommen.\\

\vspace{-2mm}
An erster Stelle sei Professor \Name{Hartmann R"omer} genannt, der
mich herzlich in seine Arbeitsgruppe aufgenommen hat und stets
hilfsbereit war, wenn es irgendwelche Probleme zu bew"altigen
galt.\\  

\vspace{-2mm}
Ein besonderer Dank gilt \Name{Stefan Waldmann}, der extrem viel Geduld mit mir
aufgebracht hat, immer ein offenes Ohr f"ur meine Probleme hatte und
mir quasi immer den richtigen Tip geben konnte, wenn es mal nicht so
weiterging, wie ich mir das vorgestellt habe. 
Gerne denke ich an die lustigen und sch"onen Zwiebelkuchen- und Neuer
S"u"ser-Abende im Hause \Name{Waldmann}.  
Ohne die wissenschaftliche Betreuung durch \Name{Stefan} w"are diese
Arbeit in dieser Form sicherlich nicht denkbar gewesen w"are. \\

\vspace{-2mm}
In gleichem Ma"se geb"uhrt \Name{Nikolai Neumaier} Dank. Auch er war
stets bereit seine Zeit f"ur Diskussionen zu opfern, was immer eine
wissenschaftliche und menschliche Bereicherung f"ur mich war. Wie bei der Gestaltung von "Ubungszetteln setzt \Name{Nikolai} auch bei seinen
Kuchenkreationen Ma"sst"abe f"ur die kommenden Generationen. \\

\vspace{-2mm}
Ich danke den Doktoranden des achten Stocks, die alle mehr als nur Kollegen
geworden sind:\\
\Name{Svea Beiser} danke ich f"ur die sch"one Zeit, die wir
gemeinsam in einem B"uro verbracht haben, f"ur das Korrekturlesen,
f"ur die Versorgung mit Keksen und Schokolade, f"ur die \Name{Monk}-DVDs und
nat"urlich die Einladung zur Hochzeitsfeier... \\   
Obwohl \Name{Matteo Carrera} keinen fachlichen Beitrag zu der
vorliegenden Arbeit geleistet hat, danke ich ihm f"ur die viereinhalb
Jahre, die wir nun zusammen in der \glqq Abteilung
\Name{R"omer}\grqq{} verbracht haben. Ohne ihn 
w"are die Zeit hier deutlich weniger lustig ausgefallen. \\ 
Letzteres gilt auch f"ur \Name{Michael Carl}, dem ich au"serdem f"ur
die inspirativen Diskussionen danken m"ochte.\\  
\Name{Florian Becher} sei Dank f"ur das penible Korrekturlesen, die
t"aglichen Cartoons sowie den zahllosen Versuche mich mit
Automatenkaffee zu versorgen -- auch wenn ich seine Angebote nicht
einmal in Anspruch genommen habe...\\
Allen Mitarbeitern, Doktoranden und Diplomanden, die an dieser Stelle nicht genannt
wurden, danke ich f"ur die angenehme Zeit, die wir gemeinsam im
Physikhochhaus verbracht haben. \\ 

\vspace{-2mm}
F"ur die Hilfe beim Erstellen des Personenindexes danke ich \Name{Simone
  Gutt} und \Name{Martin Bordemann} herzlichst. Wer h"atte schon
gedacht, da"s \Name{Wick} Italiener war und \Name{Gian-Carlo} mit
Vorname hie"s? \\ 

\vspace{-2mm}
Ein ganz besonderer Dank gilt \Name{Alexandra Teynor}, die mir in den
letzten Monaten dieser Arbeit wichtigen Beistand geleistet hat. Danke f"ur den
Zwetschgendatschi. \\ 

\vspace{-2mm}
Zu guter Letzt m"ochte ich meinen Eltern \Name{Franz-Josef} und \Name{Irmgard
Jansen} danken, die mich immer unterst"utzten. \\ 

\hfill {\it Stefan Jansen}\\

\vspace{-6mm}

\hfill {\it Freiburg, im November 2006}\\


\chapter*{}
\vspace{4cm}
\begin{center}
{\it \glqq I learned to distrust all physical concepts as the basis
  for a theory. Instead one should put one's trust in a mathematical
  scheme, even if the scheme does not appear at first sight to be
  connected with physics. One should concentrate on getting
  interesting mathematics.\grqq} \\ 
\end{center}
\hfill \citet{marlow:1978a}  

\chapter*{Einleitung}
\fancyhead[CE]{\slshape \nouppercase{Einleitung}} 
\fancyhead[CO]{\slshape \nouppercase{Einleitung}} 
\addcontentsline{toc}{chapter}{Einleitung}

\section*{Motivation und Ziel}

Schon seit jeher haben sich die Physik und die Mathematik gegenseitig
in ihren Entwicklungen beeinflu"st. 
H"aufig war es die Physik, die neue mathematische Konzepte
ben"otigte, um die Forschung voranzutreiben, aber auch neue
Erkenntnisse in der Mathematik erweiterten den Horizont der
Physiker, so da"s es m"oglich war, neue physikalische Konzepte zu
finden. Beispielsweise brauchte \Name{Newton}
die Infinitesimal- und Integralrechnung, um die klassische Mechanik
mathematisch formulieren zu k"onnen oder \Name{Dirac} die
Distributionentheorie oder den Spektralkalk"ul von Operatoren auf
\Name{Hilbert}-R"aumen f"ur
die Quantenmechanik. Andererseits w"are es
\Name{Einstein} nicht m"oglich gewesen die Allgemeine
Relativit"atstheorie (ART) zu formulieren, wenn nicht im
19.~Jahrhundert die \Name{Riemann}sche Geometrie hinreichend
ausgearbeitet worden w"are. So gibt es von der Antike bis heute
zahllose Beispiele, die diese parallele Entwicklung der beiden Gebiete untermauern. 

Auch die vorliegende Arbeit wurde von uns mit dem Ziel verfa"st,
eine weitere Br"ucke zwischen der Physik und der Mathematik zu
schlagen. Motiviert durch physikalische Fragestellungen haben wir
versucht, neue mathematische Konzepte zu formulieren, die im
Rahmen der Physik ihre Anwendung finden k"onnen. Die drei
wesentlichen in dieser Arbeit behandelten Gebiete wollen wir kurz beleuchten und ihre Bedeutung
f"ur die Physik in aller K"urze darlegen.

\subsubsection*{Sternprodukte}

Sternprodukte sind ein Versuch, {\em klassische Theorien} (insbesondere
klassische Mechanik) zu
quantisieren und stellen damit eine alternative M"oglichkeit zur {\em
  kanonischen Quantisierung} dar. Man bezeichnet das Quantisieren
mittels Sternprodukten auch als {\em Deformationsquantisierung}. W"ahrend die kanonische
Quantisierung, die 1932 erstmals von \citet{vonneumann:1996a}
mathematisch ansprechend formuliert worden 
ist\footnote{Ein wichtiges Werk zur Quantenmechanik lieferte
  \Name{Dirac} schon 1930 \citep{dirac:1982a} -- also einige Zeit vor
  \Name{von Neumann}, jedoch war seine Herangehensweise an die
  mathematischen Konzepte der Quantenmechanik sehr klassisch gepr"agt
und daher nach heutiger Sicht der Dinge unzureichend.}, einen
funktional-analytischen Zugang bietet, sind 
Sternprodukte ein algebraischer bzw.~geometrischer Zugang zur
Quantisierung. 
Sternprodukte sind, wie wir in Kapitel~\ref{chapter:Sternprodukte}
sehen werden, direkt aus der klassischen Theorie abgeleitet und
daher ein sehr nat"urlicher Zugang zur Quantenmechanik. 
Eine Folge dessen ist, da"s der {\em klassische Limes} (\glqq $\hbar \to 0$\grqq )
im Rahmen der Deformationsquantisierung sehr gut zu verstehen ist, da
er aufgrund des konzeptuellen Aufbaus gleich mitgeliefert
wird. Man geht von einer Mannigfaltigkeit $M$ aus, die physikalisch dem
Phasenraum entspricht, und betrachtet die kommutative Algebra der
glatten, komplexwertigen Funktionen $\Cinf{M}$ mit dem punktweise
Produkt. Diese Algebra entspricht einer klassischen Observablenalgebra. 

Der "Ubergang zu einer Quantentheorie geschieht indem man diese
Algebra {\em deformiert}. Dies bedeutet das punktweise Produkt wird zu
einem assoziativen, jedoch nicht mehr kommutativen Produkt, dem {\em
  Sternprodukt} der Form
\begin{align*}
    f \star g = fg + \lambda C_{1}(f,g) + \lambda^{2} C_{2}(f,g) + ...
 \end{align*} 
Dabei bezeichnen wir $\lambda$ als den {\em Deformationsparameter} und
$C_{n}$ sind Bidifferentialoperatoren, die die Deformation \glqq kodieren\grqq. 
Elemente in der deformierten Algebra sind damit {\em formale Potenzreihen} der Form 
\begin{align*}
    f = \sum_{n=0}^{\infty} \lambda^{n} f_{n}.
\end{align*}
Das Produkt k"onnen wir als eine St"orungsreihe in $\lambda$ auffassen.
Unter dem klassischen Limes verstehen wir die Abbildung $\cl:f \mapsto
f_{0}$, was einem \glqq Vergessen der Quantenkorrekturen\grqq{} entspricht. 
Die Reihen bezeichnet man als {\em formal}, da es sich um ein algebraisches
Konzept handelt, und der Parameter $\lambda$ erstmal
ausschlie"slich der Ordnung dient. Erst in einem weiteren Schritt, n"amlich in einem
konvergenten Rahmen, wie er beispielsweise von 
\citet{beiser:2005a} betrachtet wird, tritt an die Stelle des formalen
Parameters eine reelle Zahl, das \Name{Planck}sche Wirkungsquantum
$\hbar$. Ein anderer Zugang zu konvergenten Sternprodukten ist
die {\em strikte Deformationsquantisierung}. Dazu verweisen wir auf die Arbeiten
von \citet{rieffel:1993a}, \citet{landsman:1998a}, \citet{bieliavsky:2002a}, 
\citet{heller.neumaier.waldmann:2006a:pre} und \citet{becher:2006a}. 

Der Vorteil des algebraischen Rahmens ist, da"s man nicht "uber
Konvergenz reden {\it mu"s}, also komplett auf
funktional-analytische Aspekte der Quantisierung verzichtet, und somit
eine gr"o"sere Klasse von Algebren betrachten kann. 

Bei der kanonischen Quantisierung ist das Bilden eines klassischen
Limes weitaus diffiziler und oft ist es
alles andere als trivial, einen "Ubergang zum Makroskopischen zu
finden. Ein einfaches Beispiel daf"ur ist der Impulsoperator im
Ortsraum $P_{i}=\tfrac{\hbar}{\im} \tfrac{\del}{\del q_{i}}$, der bei
\glqq Vernachl"assigung der Quantenkorrekturen\grqq{} nur die Null w"are.

Eine weitere Motivation, sich Sternprodukte anzusehen, ist die
Tatsache, da"s man auf s"amtlichen symplektischen sowie
\Name{Pois\-son}-Man\-nig\-fal\-tig\-kei\-ten Sternprodukte
konstruieren kann. Dies bedeutet, da"s man diese R"aume im Rahmen der
Deformationsquantisierung {\em immer} quantisieren kann. Die kanonische
Quantisierung versagt hier insofern, als da"s sie keineswegs mehr
kanonisch ist, sondern {\em ad hoc}-L"osungen verlangt.  
Der offensichtliche Nachteil der Deformationsquantisierung liegt im Fehlen eines
Spektralkalk"uls. Zwar wurde in
\citep{bayen.et.al:1977a} eine Methode vorgestellt einige einfache
Spektren zu berechnen, allerdings sind diese Ergebnisse
alles andere als befriedigend. Bis heute gibt es (leider) keine
sinnvolle L"osung dieses Problems, da man konvergente Sternprodukte
braucht und diese sehr schwierig zu behandeln sind. Dies liegt insbesondere daran, da"s
$\hbar$ eine dimensionsbehaftete Gr"o"se ist, deren Wert man durch
"andern der Einheiten beliebig manipulieren kann, und somit nicht \glqq
klein\grqq{} im Sinne einer St"orungsreihe ist.

\subsubsection{\Name{Morita}-Theorie und \Name{Picard}-Gruppoid}

Auch die \Name{Morita}-"Aquivalenz erfreut sich einer wichtigen
Anwendung in der Physik. Ein zentrales Objekt sowohl der klassischen
Mechanik wie auch der Quantenmechanik ist die {\em
  Observablenalgebra}. Observablenalgebren sind $^\ast$-Algebren, das
hei"st Algebren mit einer $^\ast$-Involution. Im speziellen ist das der klassischen
Mechanik eine Funktionenalgebra auf einem Phasenraum mit der komplexen
Konjugation und in der Quantenmechanik sind es (beschr"ankte) adjungierbare
Operatoren auf einem \Name{Hilbert}-Raum. 
Bei der \Name{Morita}-Theorie setzt man Objekte einer Kategorie, den Algebren,
zueinander in Beziehung indem man die Klasse der Morphismen, "uber die Algebramorphismen
hinaus, erweitert. Die Erweiterungen sind die sogenannte
{\em "Aquivalenzbimoduln}, und die Menge (oder Klasse) aller
"Aquivalenzbimoduln bezeichnet man dann als das 
{\em \Name{Picard}-Gruppoid}. Das \Name{Picard}-Gruppoid dient somit die
\Name{Morita}-"Aquivalenz auf eine elegante Art zu kodieren. Anders
ausgedr"uckt bedeutet dies, da"s man mittels der
\Name{Morita}-"Aquivalenz Algebren klassifizieren kann. 
Dies spielt in der Physik aber eine wichtige Rolle, denn eine Eigenschaft von
\Name{Morita}-"aquivalenten Objekten (also Algebren) ist, da"s sie eine "aquivalente
{\em Darstellungstheorie} haben. Dies spielt zum Beispiel in der
axiomatischen Quantenfeldtheorie nach \citet{haag.kastler:1964a} eine
Rolle, bei der man {\em Superauswahlregeln} 
untersucht. Diese geben an wieviele irreduzible Darstellungen eine
gegebene Ob\-ser\-va\-blen\-al\-ge\-bra hat. Mit Hilfe der
\Name{Morita}-Theorie kann man die Darstellungstheorie \glqq
komplizierter\grqq{} Algebren auf die Darstellungstheorie besser verstandener
zur"uckf"uhren. 

Eine mit der Darstellungstheorie von Sternproduktalgebren verbundene
Anwendung der \Name{Mo\-ri\-ta}-"Aqui\-va\-lenz ist 
der \Name{Dirac}-Monopol. Betrachtet man "Aquivalenzbimodul von Sternproduktalgebren, so stellt
sich die \Name{Dirac}sche Quantisierungsbedingung 
f"ur magnetische Monopole als eine Integralbedingung an die Klasse
der beiden Sternprodukte heraus, und nur dann, wenn beide
Sternproduktalgebren \Name{Morita}-"aquivalent sind, ist diese
Integralbedingung zu erf"ullen. Eine detaillierte Ausf"uhrung findet
man z.~B.~in \citep[Section~7.3]{waldmann:2005b} und
\citep[Section~4.2]{bursztyn.waldmann:2002a}.

\subsubsection{\Name{Hopf}-Algebren}

Im Zuge des Versuchs eine vereinheitlichende Theorie von
Quantentheorie und Gravitation (Quantengravitation) zu formulieren, wurden in der Physik 
{\em \Name{Hopf}-Algebren}, die man auch unter dem Begriff {\em
  Quantengruppen} findet, entwickelt und
diskutiert. \Name{Hopf}-Algebren stellen ein verallgemeinerndes
Konzept dar, um sich jenseits von Gruppen oder
\Name{Lie}-Algebren dem Begriff der {\em Symmetrie} zu n"ahern. Symmetrien spielen
in der Physik an vielen Stellen eine wichtige Rolle. In der
Festk"orperphysik sind es {\em diskrete Symmetrien}, die beispielsweise
Gitterstrukturen beschreiben (vgl.~z.~B.~\citep{Lax:2001a}). 
Wir sind jedoch eher an {\em kontinuierlichen Symmetrien}
interessiert, die sich in der Physik h"aufig in Form von
\Name{Lie}-Gruppen bzw.~\Name{Lie}-Algebren manifestieren. Einige in
der Physik auftretende Beispiele sind zum einen Erhaltungsgr"o"sen, die
nach \citet{noether:1918a} einer Symmetrie entsprechen
(Drehimpulserhaltung $\rightleftharpoons$ Rotationsinvarianz,
Impulserhaltung $\rightleftharpoons$ Translationsinvarianz,
Energieerhaltung $\rightleftharpoons$ Homogenit"at der Zeit, etc.) und
zum anderen Eichtheorien oder das Standardmodell. 
Da"s \Name{Hopf}-Algebren in diesem Zusammenhang ein interessantes Konzept sind, spiegelt
sich in der Tatsache wider, da"s sowohl die {\em Gruppenalgebren
von \Name{Lie}-Gruppen} als auch die {\em universell Einh"ullenden
von \Name{Lie}-Algebren} \Name{Hopf}-Algebren sind. Damit haben wir
die beiden f"ur die Physik wichtigen F"alle mittels eines \glqq
universelleren Formalismus\grqq{} abgedeckt, und uns zudem die Option auf
allgemeinere Symmetrien offengelassen.

\section*{Aufbau und Ergebnisse der Arbeit}

Die Arbeit ist in f"unf Kapitel sowie zwei Anh"ange eingeteilt. 

In Kapitel~\ref{chapter:Morita} wollen wir einen "Uberblick "uber die
\Name{Morita}-Theorie geben. Dabei werden wir -- aufbauend auf einer
f"ur das Verst"andnis wichtigen mathematischen Einleitung -- die drei
wichtigen Auspr"agungen der \Name{Morita}-Theorie diskutieren. Zuerst
die {\em ringtheoretische Theorie}, die zur Klassifizierung von
Darstellungen von Ringen entwickelt wurde und geschichtlich die Grundlage
f"ur die sp"ater entstandene {\em $^\ast$-\Name{Morita}-Theorie}
beziehungsweise {\em starke \Name{Morita}-Theorie} bildet. 
Wir besch"aftigen uns im wesentlichen mit den beiden letzteren, die
beide eine \Name{Morita}-Theorie f"ur $^\ast$-Algebren
darstellen. Damit klassifizieren sie eine f"ur die Physik interessante Kategorie
von Algebren, den Observablenalgebren. Das wichtige Ergebnis, da"s wir zur
\Name{Morita}-Theorie beitragen, ist eine Erweiterung auf den $H$-"aquivarianten
Fall, wobei $H$ eine \Name{Hopf}-$^\ast$-Algebra ist. Die
betrachteten $^\ast$-Algebren tragen nun eine, durch die Wirkung einer
\Name{Hopf}-$^\ast$-Algebra vorgegebene, Symmetrie. Wir haben
herausgearbeitet wie und unter welchen Umst"anden man diese
$H$-"aqui\-va\-ri\-an\-te \Name{Morita}-Theorie etablieren kann.

Eine konzeptionelle Fortf"uhrung des Kapitels~\ref{chapter:Morita} bildet
Kapitel~\ref{chapter:Picard}. Dort beleuchten  wir das
\Name{Picard}-Gruppoid bzw.~die \Name{Picard}-Gruppe n"aher. Auch hier
interessiert uns insbesondere der 
$H$-"aqui\-va\-ri\-an\-te Fall. Das wahrscheinlich wichtigste Ergebnis der
Arbeit ist die Untersuchung des Morphismus $\PicH \to \Pic$ vom
$H$-"aquivarianten \Name{Picard}-Gruppoid in das \glqq
gew"ohnliche\grqq{} \Name{Picard}-Gruppoid. Wir untersuchen den Kern sowie das Bild der 
Abbildung, und sind in der Lage den Kern vollst"andig zu charakterisieren und
das Bild unter gewissem Umst"anden angeben zu k"onnen.   
Den Kern der Abbildung $\PicH \to \Pic$ zu verstehen ist in diesem
Rahmen insofern ein sehr sch"ones Ergebnis, da die Existenz eines
f"ur die \Name{Morita}-"Aquivalenz notwendigen $H$-"aquivarianten "Aquivalenzbimoduls
ein Hebungsproblem ist. Im allgemeinen ist dies an eine kohomologische Obstruktionen
gebunden (siehe z.~B.~\citep{guillemin.ginzburg.karshon:2002a}).
Wir sind somit, mit den von uns entwickelten
Techniken, in der Lage, anzugeben wie das $H$-"aquivariante \Name{Picard}-Gruppoid aussieht. 
Das Bild der Abbildung $\PicH \to \Pic$ k"onnen angeben, wenn es auf
den jeweiligen $^\ast$-Algebren Impulsabbildungen gibt. 
Anschlie"send betrachten wir $H$-"aqui\-va\-ri\-an\-te \Name{Morita}-Invarianten,
die wir durch die Wirkung eines $H$-"aqui\-va\-ri\-an\-ten
\Name{Picard}-Gruppoids charakterisieren wollen. Hier haben wir eine
Formulierung mittels der Kategorientheorie angegeben, so da"s die
Struktur, die eine \Name{Morita}-Invariante auszeichnet, deutlich
wird. Mit diesen Mittel formulieren wir einige wichtige Beispiele
f"ur $H$-"aqui\-va\-ri\-an\-te \Name{Morita}-Theorie.

In Kapitel~\ref{sec:MoritaAequivalenzVonCrossProdukten} ist es uns nun
gelungen die $H$-\"aquivariante
\Name{Morita}-"Aquivalenz von Algeren und die gew"ohnliche
\Name{Morita}-"Aquivalenz von sogenannten {\em Cross-Produktalgebren} in Beziehung zu
setzen. Die Cross-Produktalgebra ist dabei eine $^\ast$-Algebra mit
den \glqq Informationen\grqq{} der $^\ast$-Algebra und der
\Name{Hopf}-$^\ast$-Algebra.  
 Wie wir herausgearbeitet haben, ist der
"Ubergang von einer $^\ast$-Algebra zur der korrespüondierenden
Cross-Produktalgebra funktoriell und liefert auf
\Name{Picard}-Gruppoidniveau einen 
Gruppoidmorphismus. Abschlie"sen werden wir das Kapitel mit einem
einfachen, jedoch instruktiven Beispiel.

In Kapitel~\ref{chapter:Sternprodukte} geben  wir eine Einf"uhrung in
das Konzept der Sternprodukte. Ausgehend von der Idee des Quantisierens
geben wir eine Motivation, sich mit Sternprodukten
auseinanderzusetzen. 
Dabei stellt ein flacher Phasenraum den einfachsten Fall dar, der die
bekannten Sternprodukte f"ur den $\field{R}^{2n}$ sowie den
$\field{C}^{n}$ liefert. Aufbauend darauf axiomatisieren wir
Sternprodukte im Sinne von \citet{bayen.et.al:1977a} und 
werden Sternprodukte f"ur symplektische Mannigfaltigkeiten
bzw.~\Name{Poisson}-Mannigfaltigkeiten definieren und deren wichtigste
Eigenschaften erl"autern, sowie grundlegende Definitionen angeben. 
Mit Hilfe der \Name{Fedosov}-Konstruktion werden wir vorf"uhren, wie
man Sternprodukte auf beliebigen symplektischen Mannigfaltigkeiten mit einem symplektischen
Zusammenhang konstruieren kann. Eine Erweiterung
dieser Konstruktion wird uns zur Deformation von Vektorb"undeln
f"uhren, die wir dazu noch in einem gruppeninvarianten Rahmen
formulieren werden. Dies ist insofern interessant, da 
es ein erstes (sehr spezielles, aber wichtiges) Beispiel f"ur die in
Kapitel~\ref{sec:MoritaAequivalenzdeformierterAlgebren} behandelte $H$-"aquivariante
\Name{Morita}-Theorie von deformierten Algebren liefert.    

Kapitel~\ref{sec:MoritaAequivalenzdeformierterAlgebren} verbindet die
Sternprodukte mit dem algebraischen Konzept einer
\Name{Hopf}-Wirkung. Wir definieren (rein algebraisch) 
$H$-"aquivariante Sternprodukte mit Hilfe von
Impulsabbildungen. Die Formulierung unterscheidet sich insofern
von bisherigen, da sie die Symmetrie durch eine \Name{Hopf}-$^\ast$-Algebra
zulassen. A posteriori geben wir dann eine geeignete
\Name{Hopf}-$^\ast$-Algebra an, so da"s die rein algebraische
Formulierung in die altbekannte, geometrisch motivierte Definition
"ubergeht.
Im weiteren f"uhren wir deformierte "Aquivalenzbimoduln ein, mit
deren Hilfe wir \Name{Morita}-"Aquivalenz von deformierten Algebren
(und damit insbesondere von Sternprodukten) beschreiben
k"onnen. Dar"uber gelangen wir zum \Name{Picard}-Gruppoid, dessen
Objekte deformierte Algebren sind. Besonders interessant ist die
{\em klassische Limes-Abbildung}, die auf \Name{Picard}-Gruppoidniveau
einen Morphismus in das \glqq klassische\grqq{} \Name{Picard}-Gruppoid
induziert.  

Die f"ur die Arbeit relevanten mathematischen Grundlagen haben wir in
einem Anhang zusammengetragen. Diese gehen an mancher Stelle "uber
das hinaus, was im Hauptteil der Arbeit gebraucht wird. 

In Anhang~\ref{chapter:AlgebraischeGrundlagen} findet man die algebraischen Grundlagen 
zu (geordneten) Ringen, formale Potenzreihen, \Name{Hopf}-$^\ast$-Algebren,
Cross-Produktalgebren, Verb"anden und projektive Moduln. Insbesondere
das Kapitel zu den Gruppen  $\GR{GL}{H,\alg{A}}$, $\GRn{GL}{H,\alg{A}}$,
  $\GR{U}{H,\alg{A}}$ und $\GRn{U}{H,\alg{A}}$ sei dem Leser ans Herz
  gelegt, da diese in Kapitel~\ref{sec:HaequivariantesPicardGruppoid}
  eine entscheidende Rolle spielen. In
  Anhang~\ref{chapter:GeometrischeGrundlagen} sind ein paar geometrische 
Grundlagen zusammengestellt. Interessant, da sie "uber das normale
Lehrbuchwissen hinausgehen, sind insbesondere die Kapitel "uber
symplektische Geometrie und $G$-invariante bzw.~$\LieAlg{g}$-invariante
Zusammenh"ange. 
Am Ende der Arbeit findet sich ein Symbolverzeichnis und ein
Personenindex, in dem wir alle namentlich erw"ahnten Personen
aufgelistet haben.


\mainmatter
\pagestyle{fancy}
\chapter{\NameSection{Morita}-"Aquivalenz} 
\label{chapter:Morita}
\fancyhead[CO]{\slshape \nouppercase{\rightmark}} 
\fancyhead[CE]{\slshape \nouppercase{\leftmark}} 

\section{Grundlagen zur \Name{Morita}-Theorie}
\label{sec:MoritaEinfuehrendeDefinitionenUndSaetze}

\subsection{Pr"a-\Name{Hilbert}-R"aume, $^\ast$-Algebren und Positivit"at} 
\label{sec:PraeHilbertRaeumeUndSternAlgebren}

\subsubsection{Pr"a-\Name{Hilbert}-R"aume}
\label{sec:PraeHilbertRaeume}

Im folgenden wollen wir Pr"a-\Name{Hilbert}-R"aume definieren, um
sp"ater das verallgemeinernde Konzept der Pr"a-\Name{Hilbert}-Moduln
zu motivieren. Wir werden in dieser Arbeit eine leicht
verallgemeinernde Version von Pr"a-\Name{Hilbert}-R"aumen
ben"otigen, da wir nicht ausschlie"slich R"aume "uber
$\field{C}$ betrachten, sondern "uber einem Ring $\ring{C}$. Dabei
ist $\ring{C}=\ring{R}(\im)$ die komplexe Erweiterung des geordneten
Rings $\ring{R}$ mit $\im^2 = -1$. Diese Verallgemeinerung ist dadurch
motiviert, da"s wir uns sp"ater mit formal deformierten Algebren,
also insbesondere Sternprodukten, auseinandersetzen. Sternprodukte
sind $\fieldf{C}$-lineare
Abbildungen, und in diesem Formalismus k"onnen wir die Kategorie beibehalten, wenn
wir "uber formale Potenzreihen\index{formale Potenzreihe} reden wollen. Somit denken wir
insbesondere an die Ringe $\ring{R}=\field{R}$ und $\ring{R}=\fieldf{R}$. 
Es ist wichtig, da"s der Ring $\ring{R}$ geordnet ist, damit
wir das Konzept der Positivit"at implementieren k"onnen. Eine
detaillierte Einf"uhrung in das Konzept von (geordneten) Ringen,
sowie einfache und wichtige Beispiele haben wir in
Kapitel~\ref{sec:GeordeneteRinge} zusammengestellt. Eine weitere
Motivation, weshalb wir bei der Definition von
Pr"a-\Name{Hilbert}-R"aumen den Ring $\ring{C}$ verwenden wollen, ist,
da"s man bei der herk"ommlichen Definition wie sie in
Lehrb"uchern zu finden ist, nur die \glqq Ringeigenschaften\grqq{}
des K"orpers $\field{C}$ braucht. Somit ist unsere Definition eine in
gewisser Hinsicht sehr nat"urliche.

\begin{definition}[Pr"a-\Name{Hilbert}-Raum]
   \label{Definition:PraeHilbertRaum} 
   \indexbf{Pr\"a-\Name{Hilbert}-Raum}
    Sei $\alg{H}$ ein $\ring{C}$-Vektorraum mit einer
    Abbildung $\SP{\cdot,\cdot}: \alg{H} \times  \alg{H} \to \ring{C}$. Man
    nennt $\SP{\cdot,\cdot}$ ein {\em inneres Produkt}, falls die
    Abbildung $\SP{\cdot,\cdot}$ sesquilinear ist, d.~h.~f"ur alle
    $\phi,\psi,\chi \in \alg{H}$ und $s,t\in \ring{C}$ gilt 
    \begin{compactenum}
    \item $\SP{\phi, s\psi + t\chi} = s \SP{\phi,\psi} + t
        \SP{\phi,\chi}$,
    \item $\SP{\phi,\psi} = \cc{\SP{\psi,\phi}}$.
    \end{compactenum}

Ein inneres Produkt \indexbf{inneres Produkt} hei"st {\em positiv semidefinit}, falls
$\SP{\phi,\phi} \ge 0$ und {\em positiv definit}, \index{inneres Produkt!positiv definites} falls
$\SP{\phi,\phi} > 0$ f"ur alle $\phi\neq 0$.
Ein $\ring{C}$-Vektorraum mit einem positiv definiten inneren Produkt hei"st {\em
  Pr"a-\Name{Hilbert}-Raum}.   

Man nennt ein inneres Produkt {\em nichtausgeartet},\index{inneres Produkt!nichtausgeartetes} falls 
$\SP{\phi,\psi}=0$ f"ur alle $\psi \in \alg{H}$ impliziert, da"s
$\phi=0$ ist. Den {\em Ausartungsraum}\index{inneres Produkt!Ausartungsraum} von $\SP{\cdot, \cdot}$ bezeichnen wir mit

\begin{align}
    \label{eq:AusartungsraumSP}
 \alg{H}^{\perp} =\left\{\phi\in \alg{H} \,| \SP{\psi, \phi}=0 \;
     \text{f"ur alle $\psi \in \alg{H}$} \right\}. 
\end{align}
\end{definition}
Ein positives inneres Produkt kann damit nicht ausgeartet sein. 

Ein wichtiges Handwerkszeug wird die \Name{Cauchy-Schwarz}-Ungleichung
sein. 
\begin{satz}[{\Name{Cauchy-Schwarz}-Ungleichung}]
    \label{Satz:CauchySchwarzUngleichung}
\index{Ungleichung!\Name{Cauchy-Schwarz}}
\index{Cauchy-Schwarz-Ungleichung@\Name{Cauchy-Schwarz}-Ungleichung}
Sei $\alg{H}$ ein $\ring{C}$-Vektorraum mit positiv semidefinitem inneren
Produkt $\SP{\cdot, \cdot}$, dann gilt 
\begin{align}
    \label{eq:CauchySchwarzUngleichung}
    \SP{\phi, \psi}\SP{\psi,\phi} \le \SP{\phi,\phi}\SP{\psi,\psi}.
\end{align}
\end{satz}
\begin{proof}
Der Beweis von Satz~\ref{Satz:CauchySchwarzUngleichung} findet sich
beispielsweise f"ur den Fall $\ring{C}=\field{C}$ in
\citet{weidmann:2000a} oder \citet{lang:1997a}.  
\end{proof}
\begin{korollar}[\Index{Quotientenraum} eines semidefiniten inneren Produkts]
    \label{Korollar:QuotientenraumInneresProdukt}
    Sei $\alg{H}$ ein $\ring{C}$-Vektorraum mit einem positiv
    semidefiniten inneren Produkt $\SP{\cdot,\cdot}$. Das durch 
    \begin{align}
       \SP{[\phi],[\psi]} = \SP{\phi,\psi}  
    \end{align}
f"ur $[\phi], [\psi] \in \alg{H}/ \alg{H}^{\perp}$ definierte
innere Produkt ist ein wohldefiniertes positiv definites inneres
Produkt, und  $\alg{H}/ \alg{H}^{\perp}$ wird zu einem
Pr"a-\Name{Hilbert}-Raum.  
\end{korollar}
\begin{proof}
Das Korollar ist eine direkte Konsequenz der
\Name{Cauchy-Schwarz}-Ungleichung.  
\end{proof}

\begin{bemerkung}[Formale Potenzreihen]
Es kann vorkommen, da"s wir der Deutlichkeit halber auch von
$\ringf{R}$ reden werden. Hiermit soll herausgestellt werden, da"s es
sich beim Ring $\ring{R}$ um eine formale Potenzreihe
handelt. In gewisser Hinsicht ist diese Schreibweise tautologisch, da
der Ring $\ring{R}$ an sich bereits eine formale Potenzreihe sein kann. 
\end{bemerkung}

Wir wollen nun die {\em Kategorie der Pr"a-\Name{Hilbert}-R"aume} "uber
einem Ring $\ring{C}$ definieren. Die Objekte sind nat"urlich die
Pr"a-\Name{Hilbert}-R"aume, jedoch
m"ussen wir eine passende Wahl f"ur die Morphismen treffen. Als
geeignet werden sich die adjungierbaren Abbildungen zwischen
Pr"a-\Name{Hilbert}-R"aumen herausstellen. Motiviert ist dies
durch den Satz von \Name{Hellinger-Toeplitz}, der insbesondere in der
Quantenmechanik seine Anwendung findet.

\begin{definition}[\Name{Hilbert}-Raum]
    \label{Definition:HilbertRaum}
\index{Hilbert-Raum@\Name{Hilbert}-Raum}
Ein {\em \Name{Hilbert}-Raum} $\hilbert{H}$ ist ein vollst"andiger
Pr"a-\Name{Hilbert}-Raum "uber dem K"orper $\field{C}$. 
\end{definition}
Dabei bezeichen wir einen Raum als {\em vollst"andig}, wenn jede
\Name{Cauchy}-Folge konvergiert \citep{weidmann:2000a}.  
 
\begin{satz}[Satz von \Name{Hellinger-Toeplitz}]
    \label{Satz:HellingerToeplitzTheorem}
\index{Satz!Hellinger-Toeplitz@von \Name{Hellinger-Toeplitz}}
Seien $\hilbert{H}_{1}$und $\hilbert{H}_{2}$ \Name{Hilbert}-R"aume "uber
$\field{C}$, dann ist ein Operator $A:\hilbert{H}_{1} \to \hilbert{H}_{2}$
genau dann adjungierbar, wenn $A$ ein linearer und stetiger Operator ist.
\end{satz}
\begin{proof}
    Der Beweis findet sich in einschl"agiger Literatur, z.~B.~in
    \citep[S.~117]{rudin:1991a}. 
\end{proof}

\begin{bemerkung}[Funktional-analytische Aspekte]
F"ur einen tieferen Einblick in die funktional-analytischen Aspekt,
d.~h.~insbesondere des Spektralkalk"uls von Operatoren auf
\Name{Hilbert}-R"aumen, sei auf
\citep{werner:2002a,rudin:1991a,hirzebruch.scharlau:1991a,grossmann:1988a}
verwiesen. Da wir im weiteren nicht an Spektren oder anderen
funktional-analytischen Betrachtungen interessiert sind, werden wir
uns ausschlie"slich mit Pr"a-\Name{Hilbert}-R"aume besch"aftigen.
\end{bemerkung}

Die adjungierbaren Abbildung stellen somit eine besonders interessante
Klasse von Morphismen dar. Dies verleitet uns zu der folgenden Definition.

\begin{definition}[Adjungierbare Abbildungen]
 \label{Definition:AdjungierbareAbbildung}
 \index{Abbildung!adjungierbare}
   Seien $\alg{H}_{1}$ und $\alg{H}_{2}$ zwei
    Pr"a-\Name{Hilbert}-R"aume "uber $\ring{C}$ und $A:\alg{H}_{1}
    \to \alg{H}_{2}$ eine Abbildung. Man nennt $A$ {\em adjungierbar},
    falls es eine Abbildung $A^{\ast}:\alg{H}_{2} \to
    \alg{H}_{1}$ gibt, so da"s 
    \begin{align}
        \label{eq:AdjungierbareAbbildung}
        \SP{A\phi,\psi}_{\sss 2} = \SP{\phi,A^{\ast}\psi}_{\sss 1}
    \end{align}
f"ur alle $\phi \in \alg{H}_{1}$ und $\psi \in \alg{H}_{2}$. Man
nennt $A^{\ast}$ die {\em adjungierte Abbildung} zu $A$, und die Menge
aller adjungierbaren Abbildungen von $\alg{H}_{1}$ nach $\alg{H}_{2}$
bezeichnen wir mit 
$\alg{B}(\alg{H}_{1},\alg{H}_{2})$ bzw.~mit $\alg{B}(\alg{H})$ falls
$\alg{H}_{1}=\alg{H}_{2}=\alg{H}$.  
\end{definition}

\begin{lemma}[Adjungierte Abbildungen]
\label{Lemma:AdjungierteAbbildungen}
Seien $A,B \in \alg{B}(\alg{H}_{1},\alg{H}_{2})$ und $C \in
\alg{B}(\alg{H}_{2},\alg{H}_{3})$ $\ring{C}$-lineare Abbildungen. 
Die adjungierten Abbildungen $A^{\ast}, B^{\ast}, C^{\ast}$ sind
eindeutig bestimmte, $\ring{C}$-lineare Abbildungen und es gilt:
\begin{align}
    (aA+bB)^{\ast} = \cc{a}A^{\ast} + \cc{b}B^{\ast}, \quad
  (CA)^{\ast} = A^{\ast}C^{\ast}, \quad (A^{\ast})^{\ast}=A,
\end{align}
f"ur alle $a,b \in \ring{C}$. Ferner ist $\id=\id^{\ast} \in
\alg{B}(\alg{H})$. 
\end{lemma}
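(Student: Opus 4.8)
The plan is to base everything on the \emph{non-degeneracy} of the inner products, which holds on pre-\Name{Hilbert}-spaces since a positive definite inner product cannot be degenerate (as noted after Definition~\ref{Definition:PraeHilbertRaum}). Concretely, I will repeatedly use that $\SP{\chi,\phi}=0$ for all $\phi$ forces $\chi=0$, together with the Hermitian symmetry $\SP{\phi,\chi}=\cc{\SP{\chi,\phi}}$ from property~(ii), so that vanishing of $\SP{\phi,\chi}$ for all $\phi$ likewise forces $\chi=0$. This is the only genuinely load-bearing ingredient of the whole lemma.

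First I would settle \textbf{uniqueness}: if $A^{\ast}$ and $\tilde A^{\ast}$ both satisfy \eqref{eq:AdjungierbareAbbildung}, then $\SP{\phi,(A^{\ast}-\tilde A^{\ast})\psi}_{\sss 1}=\SP{A\phi,\psi}_{\sss 2}-\SP{A\phi,\psi}_{\sss 2}=0$ for all $\phi$ (using linearity in the second slot to combine the two terms), and non-degeneracy gives $(A^{\ast}-\tilde A^{\ast})\psi=0$ for every $\psi$. This uniqueness is the tool that turns every remaining identity into a matter of verifying that the proposed right-hand side satisfies the defining relation \eqref{eq:AdjungierbareAbbildung}. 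For $\ring{C}$-\textbf{linearity} of $A^{\ast}$, I compute $\SP{\phi,A^{\ast}(s\psi+t\chi)}_{\sss 1}=\SP{A\phi,s\psi+t\chi}_{\sss 2}=s\SP{\phi,A^{\ast}\psi}_{\sss 1}+t\SP{\phi,A^{\ast}\chi}_{\sss 1}=\SP{\phi,sA^{\ast}\psi+tA^{\ast}\chi}_{\sss 1}$, and conclude by uniqueness.

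The algebraic formulas follow the same pattern. For $(aA+bB)^{\ast}$ I use that, by~(ii) and~(i), the inner product is conjugate-linear in the first slot, so $\SP{(aA+bB)\phi,\psi}_{\sss 2}=\cc{a}\SP{A\phi,\psi}_{\sss 2}+\cc{b}\SP{B\phi,\psi}_{\sss 2}=\SP{\phi,\cc{a}A^{\ast}\psi+\cc{b}B^{\ast}\psi}_{\sss 1}$, whence $(aA+bB)^{\ast}=\cc{a}A^{\ast}+\cc{b}B^{\ast}$. For the product, $\SP{(CA)\phi,\chi}_{\sss 3}=\SP{A\phi,C^{\ast}\chi}_{\sss 2}=\SP{\phi,A^{\ast}C^{\ast}\chi}_{\sss 1}$ identifies $A^{\ast}C^{\ast}$ as the adjoint of $CA$. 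For reflexivity, conjugating \eqref{eq:AdjungierbareAbbildung} gives $\SP{A^{\ast}\psi,\phi}_{\sss 1}=\SP{\psi,A\phi}_{\sss 2}$, which is precisely the defining relation exhibiting $A$ as adjoint of $A^{\ast}$; this simultaneously shows $A^{\ast}\in\alg{B}(\alg{H}_{2},\alg{H}_{1})$ and $(A^{\ast})^{\ast}=A$. Finally $\SP{\id\phi,\psi}=\SP{\phi,\id\psi}$ gives $\id^{\ast}=\id$.

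I expect no serious obstacle: the mild care needed is only to keep track of which slot carries the conjugation (property~(ii)) when passing scalars across the inner product. Once uniqueness is in place, each assertion reduces to a one-line check that the claimed adjoint satisfies \eqref{eq:AdjungierbareAbbildung}.
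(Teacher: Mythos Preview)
Your proof is correct and follows essentially the same approach as the paper: the paper's proof simply states that linearity and uniqueness follow from the non-degeneracy of the inner product and that the remaining properties are obtained by \glqq simples Nachrechnen\grqq, which is precisely the strategy you have spelled out in detail.
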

\begin{proof}
    Die Linearit"at sowie die Eindeutigkeit erh"alt man aus der
    Nichtausgeartetheit des inneren Produkts, die restlichen
    Eigenschaften durch simples Nachrechnen. 
\end{proof}

Damit gelangen wir zur Definition der Kategorie von Pr"a-\Name{Hilbert}-R"aumen
"uber einem Ring $\ring{C}$.

\begin{definition}[Kategorie der Pr"a-\Name{Hilbert}-R"aume]
\label{Definition:KategoriePraeHilbert}
\index{Kategorie!Prae-Hilbert-Raeume@der Pr\"a-\Name{Hilbert}-R\"aume}
Die Kategorie $\PraeHilbert{\ring{C}}$ der
Pr"a-\Name{Hilbert}-R"aume "uber $\ring{C}=\ring{R}(\im)$ ist
gegeben durch die Objekte, d.~h.~alle
Pr"a-\Name{Hilbert}-R"aume "uber dem Ring $\ring{C}$, sowie die
Morphismen zwischen den 
Objekten, die durch alle adjungierbaren Abbildungen zwischen je zwei
Pr"a-\Name{Hilbert}-R"aumen "uber $\ring{C}$ gegeben sind.
\end{definition}

\begin{definition}[Operatoren vom Rang Eins]
\index{Operator!vom Rang Eins}
Seien $\alg{H}_{1}$ und $\alg{H}_{2}$ Pr"a-\Name{Hilbet}-R"aume und
$\phi \in \alg{H}_{2}$, $\psi\in \alg{H}_{1}$, dann definiert man eine
lineare Abbildung $\Theta_{\phi,\psi}:\alg{H}_{1} \to \alg{H}_{2}$
durch
\begin{align}
    \label{eq:RangEinsOperator}
    \Theta_{\phi,\psi}\chi = \phi\SP{\psi,\chi}
\end{align}
f"ur $\chi \in \alg{H}_{1}$. Man nennt $\Theta_{\phi,\psi}$ einen
{\em Operator vom Rang Eins}. Linearkombinationen solcher Operatoren nennt man {\em Operatoren
von endlichem Rang}, \index{Operator!von endlichem Rang} deren
Gesamtheit man mit $\alg{F}(\alg{H}_{1},\alg{H}_{2})$ 
bezeichnet. Analog zu den adjungierbaren Operatoren sei $\alg{F}(
\alg{H},\alg{H}) =: \alg{F}(\alg{H})$.
\end{definition}
\begin{proposition}[Operatoren von endlichem Rang]
\label{Proposition:OperatorenVonEndlichemRang}
Seien $\alg{H}_{1}$, $\alg{H}_{2}$ und $\alg{H}_{3}$
Pr"a-\Name{Hilbert}-R"aume "uber dem Ring $\ring{C}$. Dann gilt
\begin{compactenum}
\item $\alg{F}(\alg{H}_{1},\alg{H}_{2}) \subseteq
    \alg{B}(\alg{H}_{1},\alg{H}_{2})$.
\item Die Operatoradjungtion 
\begin{align*} 
^\ast: \alg{B}(\alg{H}_{1},\alg{H}_{2})
    \to \alg{B}(\alg{H}_{2}\alg{H}_{1})
\end{align*}
 ist eine $\ring{C}$-antilineare Bijektion mit $\alg{F}(\alg{H}_{1},\alg{H}_{2})^\ast =
    \alg{F}(\alg{H}_{2},\alg{H}_{1})$.
\item $\alg{B}(\alg{H}_{2},\alg{H}_{3})\cdot
    \alg{F}(\alg{H}_{1},\alg{H}_{2}) \subseteq
    \alg{F}(\alg{H}_{1},\alg{H}_{3})$ und
    $\alg{F}(\alg{H}_{2},\alg{H}_{3}) \cdot
    \alg{B}(\alg{H}_{1},\alg{H}_{2}) \subseteq
    \alg{F}(\alg{H}_{1},\alg{H}_{3})$.     
\end{compactenum}
\end{proposition}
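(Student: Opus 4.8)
The plan is to reduce every assertion to the behaviour of the elementary rank-one operators $\Theta_{\phi,\psi}$, since by definition all finite-rank operators are $\ring{C}$-linear combinations of these, and then to invoke Lemma~\ref{Lemma:AdjungierteAbbildungen} for all structural statements about the adjunction (antilinearity, uniqueness, the involution property). The whole proposition then becomes a short sequence of explicit computations together with an extension by linearity.

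For part \textit{i.)} I would first show that each $\Theta_{\phi,\psi}$ is adjointable, with adjoint the rank-one operator obtained by swapping the two vectors,
\begin{align*}
    (\Theta_{\phi,\psi})^{\ast} = \Theta_{\psi,\phi}.
\end{align*}
This is a direct verification: for $\chi \in \alg{H}_{1}$ and $\eta \in \alg{H}_{2}$, pulling the scalar $\SP{\psi,\chi}$ out of the first argument (antilinearly) and using $\cc{\SP{\psi,\chi}} = \SP{\chi,\psi}$ gives
\begin{align*}
    \SP{\Theta_{\phi,\psi}\chi, \eta}_{\sss 2} = \cc{\SP{\psi,\chi}}\, \SP{\phi,\eta}_{\sss 2} = \SP{\chi,\psi}_{\sss 1}\, \SP{\phi,\eta}_{\sss 2} = \SP{\chi, \Theta_{\psi,\phi}\eta}_{\sss 1}.
\end{align*}
Since $\Theta_{\psi,\phi}\colon \alg{H}_{2}\to\alg{H}_{1}$, this exhibits an adjoint, so $\Theta_{\phi,\psi}\in\alg{B}(\alg{H}_{1},\alg{H}_{2})$. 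Because adjointable maps are closed under $\ring{C}$-linear combinations by Lemma~\ref{Lemma:AdjungierteAbbildungen}, every finite-rank operator is adjointable, giving \textit{i.)}, and taking adjoints term by term already shows $\alg{F}(\alg{H}_{1},\alg{H}_{2})^{\ast}\subseteq\alg{F}(\alg{H}_{2},\alg{H}_{1})$.

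For part \textit{ii.)} the $\ring{C}$-antilinearity of $^{\ast}$ and the identity $(A^{\ast})^{\ast}=A$ are exactly the content of Lemma~\ref{Lemma:AdjungierteAbbildungen}; the latter makes $^{\ast}$ its own inverse, hence a bijection. Applying the inclusion just obtained in both directions and using $(A^{\ast})^{\ast}=A$ upgrades it to the stated equality $\alg{F}(\alg{H}_{1},\alg{H}_{2})^{\ast}=\alg{F}(\alg{H}_{2},\alg{H}_{1})$. For part \textit{iii.)} I would again test on rank-one operators and record the two composition formulas
\begin{align*}
    B\,\Theta_{\phi,\psi} = \Theta_{B\phi,\psi}, \qquad \Theta_{\phi,\psi}\,A = \Theta_{\phi, A^{\ast}\psi},
\end{align*}
the first by moving the scalar $\SP{\psi,\chi}$ through the linear map $B$, the second by rewriting $\SP{\psi,A\chi}_{\sss 2}=\SP{A^{\ast}\psi,\chi}_{\sss 1}$. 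Both right-hand sides are again rank-one operators in $\alg{F}(\alg{H}_{1},\alg{H}_{3})$, and extending by $\ring{C}$-linearity in the finite-rank factor yields the two claimed inclusions.

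The argument is essentially a chain of routine verifications, so there is no serious obstacle; the one genuinely load-bearing step is the second composition formula in \textit{iii.)}, where the adjointability of $A\in\alg{B}(\alg{H}_{1},\alg{H}_{2})$ is precisely what allows one to push $A$ across the inner product and land back inside $\alg{F}$. The only remaining care lies in keeping track of which space each vector inhabits and in consistently applying the antilinear-in-the-first-argument convention fixed in Definition~\ref{Definition:PraeHilbertRaum}.
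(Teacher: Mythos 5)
Your proof is correct and complete. Note that the paper does not actually carry out a proof of this proposition at all -- it only refers to the literature (\citep{waldmann:2004a:script}) -- so your argument fills in exactly the standard verification that reference contains: establish $(\Theta_{\phi,\psi})^{\ast} = \Theta_{\psi,\phi}$ and the two composition identities $B\,\Theta_{\phi,\psi} = \Theta_{B\phi,\psi}$, $\Theta_{\phi,\psi}\,A = \Theta_{\phi,A^{\ast}\psi}$ on the rank-one generators, extend by $\ring{C}$-(anti)linearity, and let Lemma~\ref{Lemma:AdjungierteAbbildungen} supply antilinearity and involutivity of $^{\ast}$. All the sign and conjugation bookkeeping (antilinearity in the first argument, $\cc{\SP{\psi,\chi}}=\SP{\chi,\psi}$) is handled correctly, and you rightly identify the adjointability of $A$ as the step that makes $\Theta_{\phi,\psi}A$ land back in $\alg{F}(\alg{H}_{1},\alg{H}_{3})$.
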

\begin{proof}
    Ein Beweis hierzu findet sich beispielsweise in
    \citep{waldmann:2004a:script}. 
\end{proof}

\subsubsection{$^\ast$-Algebren}

\index{Algebra!Stern-Algebra@$^\ast$-Algebra|textbf}

In diesem Kapitel werden wir das Konzept der $^\ast$-Algebren
einf"uhren. Dabei wird schnell klar, da"s dieses Konzept in
der Physik an vielen Stellen Verwendung findet, wie man an den
Beispielen \ref{Beispiele:SternAlgebren} sieht.
Mittels der $^\ast$-Algebren k"onnen wir {\em positive Funktionale} definieren,
denen eine ebenso wichtige Rolle zukommt, da sie die mathematische
Beschreibung von Zust"anden in der Physik entsprechen. 
\begin{definition}[$^\ast$-Algebra]
\label{Definition:SternAlgebra}
\index{Involution}
\index{Stern-Involution@$^\ast$-Involution}
\index{Antiautomorphismus!antilinearer}
   Sei $(\alg{A},\cdot)$ eine assoziative Algebra "uber dem Ring
   $\ring{C}$. Man nennt eine Algebra 
   $\alg{A}$ eine {\em $^\ast$-Algebra}, falls sie mit einem antilinearen
   Antiautomorphismus $^\ast:
   \alg{A} \to \alg{A}$ ausgestattet ist, 
   so da"s  
   \begin{align}
   (a^{\ast})^{\ast} = a, \quad (a\cdot b)^{\ast} = b^{\ast}\cdot
   a^{\ast},  
   \end{align} 
f"ur alle $a,b \in \alg{A}$. Ist die Algebra mit einem Einselement
$1_{\sss \alg{A}}$ ausgestattet, so gilt zus"atzlich $(1_{\sss
  \alg{A}})^{\ast} = 1_{\sss \alg{A}}$.
Wir bezeichnen die $^\ast$-Algebra mit
$(\alg{A},\cdot, ^\ast)$, bzw.~wenn keine Verwechslung m"oglich ist
nur mit $\alg{A}$ und unterdr"ucken wie gew"ohnlich die
Verkn"upfung und die Involution. 
\end{definition}
\begin{definition}[Kategorie der $^\ast$-Algebren]
\index{Kategorie!Stern-Algebren@der $^\ast$-Algebren}
Ein Morphismus von $^\ast$-Algebren ist ein Algebramorphismus $\phi:
\alg{A} \to \alg{B}$ f"ur den zus"atzlich $\phi(a^{\ast}) =
\phi(a)^{\ast}$ gilt. Auf diesem Wege erh"alt man die {\em Kategorie der
$^\ast$-Algebren (mit Einselement) "uber $\ring{C}$}, die man mit
$\staralg(\ring{C})$ (bzw.~$\starAlg(\ring{C})$)
bezeichnet.       
\end{definition}
Wie wichtig $^\ast$-Algebren sind, werden wir an den folgenden
Beispielen illustrieren. 
\begin{beispiele}[$^\ast$-Algebren]
\label{Beispiele:SternAlgebren}
~\vspace{-5mm}
\begin{compactenum}
    \item Der Ring $\ring{C}$ ist auf nat"urliche Weise eine
        $^\ast$-Algebra mit Einselement "uber sich selbst. Die $^\ast$-Involution ist die komplexe
        Konjugation $z \mapsto z^{\ast}=\cc{z}$. 
     \item F"ur jeden Pr"a-\Name{Hilbert}-Raum $\alg{H}$ bilden die
        adjungierbaren Operatoren $\alg{B}(\alg{H})$ eine
        $^\ast$-Algebra. Die Involution ist durch
        Adjungieren gegeben, und $\alg{F}(\alg{H})$ ist ein
        $^\ast$-Ideal, wie wir bereits in Proposition
        \ref{Proposition:OperatorenVonEndlichemRang} herausgestellt haben.
    \item Die Algebra $\alg{F}(\alg{H})$ ist auf nat"urliche Weise
         eine $^\ast$-Algebra (siehe
         \ref{Proposition:OperatorenVonEndlichemRang}). Sie besitzt
         i.~a.~kein Einselement.  
    \item Die $n\times n$-Matrizen "uber einem Ring $\ring{C}$ bilden
        die $^\ast$-Algebra $M_{n}(\ring{C})$. Die Verkn"upfung ist
        die "ubliche Matrixmultiplikation und die $^\ast$-Involution
        ist durch $(z_{ij})^{\ast}=(\cc{z_{ji}})$ gegeben. Ferner gilt
        $M_{n}(\ring{C}) \cong \alg{B}(\ring{C}^{n}) \cong
        \alg{F}(\ring{C}^{n})$. 
    \item Sind $\alg{A}$ und $\alg{B}$ zwei $^\ast$-Algebren, so ist
        auch $\alg{A} \otimes \alg{B}$ eine $^\ast$-Algebra. Die
        Involution der einzelnen Tensorfaktoren geschieht
        argumentweise $(a \otimes b)^{\ast} = a^{\ast} \otimes
        b^{\ast}$, die Multiplikation ist kanonisch $(a\otimes b)(a'
        \otimes b') = aa' \otimes bb'$, und beide Strukturen
        sind konsistent.
    \item Ist die Algebra $\alg{A}$ eine $^\ast$-Algebra, dann ist
        auch $M_{n}(\alg{A}):=\alg{A}\otimes M_{n}(\ring{C})$ eine $^\ast$-Algebra.
    \item Sei $M$ eine nichtkompakte Mannigfaltigkeit. Die Algebra
        $\Cinfc{M}$, der komplexen Funktionen mit 
        kompaktem Tr"ager auf $M$ mit der punktweisen
        komplexen Konjugation $\Cinfc{M} \ni f \mapsto
          f^{\ast}= \cc{f}$ bilden eine kommutative $^\ast$-Algebra, die
          allerdings kein Einselement besitzt.  
   \item Die Algebra eines \Name{Hermite}schen Sternprodukts \index{Sternprodukt!Hermitesches@\Name{Hermite}sches}
        $(\Cinff{M},\star,^{\cc{~}})$ auf einer
        \Name{Poisson}-Man\-nig\-fal\-tig\-keit $M$ ist eine $^\ast$-Algebra
        mit Einselement "uber dem Ring $\ring{C}=\fieldf{C}$. Die Involution
        ist durch die gew"ohnliche komplexe Konjugation gegeben, $(f
        \star g)^{\ast}= \cc{f \star g} = \cc{g} \star \cc{f} =
        g^{\ast} \star f^{\ast}$. 
        (wir werden sp"ater noch konkreter auf dieses Beispiel
        eingehen und verweisen auf Definitionen \ref{Definition:FormaleSternprodukte}
        und \ref{Definition:TypenVonSternprodukten}).  
\item Ein f"ur diese Arbeit wichtiges Beispiel sind die {\em
         Cross-Produktalgebren}, die wir sp"ater detailliert
       einf"uhren werden.\index{Cross-Produktalgebra}
       \index{Algebra!Cross-Produktalgebra} Dabei sei $\alg{A}$ eine
       $^\ast$-Algebra und $H$ eine \Name{Hopf}-$^\ast$-Algebra mit
       einer $^\ast$-Wirkung $\neact$ auf $\alg{A}$. Die
       Algebra $\cross{\alg{A}}{H}$ mit der Multiplikation 
       \begin{align*}
           (a \otimes h)\cdot (a' \otimes h') := a(h_{\sss (1)} \act a')
           \otimes h_{\sss (2)} h'
       \end{align*}
    und der $^\ast$-Involution $(a \otimes h)^\ast := h_{\sss (1)}^{\ast} \act a^{\ast} \otimes
          h_{\sss (2)}^{\ast}$ bilden eine $^\ast$-Algebra mit dem
          Einselement $1_{\sss \alg{A}} \otimes 1_{\sss H}$. Eine
            ausf"uhrliche Beschreibung, sowie Verweise auf geeignete
            Literatur sind in Kapitel~\ref{sec:CrossProdukte} zu finden.
\end{compactenum}
\end{beispiele}

\subsubsection{Positivit"at}
\label{sec:MoritaPositivitaetPraeHilbertSternalgebra}
\index{Positivitaet@Positivit\"at|textbf}
\begin{definition}[Positives Funktional und Zustand]
\label{Definition:PositivesFunktionalZustand}
\index{Funktional!positives}
Sei $\alg{A}$ eine $^\ast$-Algebra "uber einem Ring $\ring{C}$. Man
bezeichnet ein lineares Funktional $\omega: \alg{A} \to \ring{C}$ als
{\em positiv}, falls $\omega(a^{\ast}a) \ge 0$ f"ur alle $a \in \alg{A}$ ist. 
Ein positives Funktional f"ur das zus"atzlich $\omega(1_{\sss
  \alg{A}}) = 1_{\sss \ring{C}}$ gilt bezeichnet man als {\em
  \Index{Zustand}}.    
\end{definition}
\begin{bemerkungen}[Positives Funktional und Zustand]
~\vspace{-5mm}
\begin{compactenum}
  \item Die Definition \ref{Definition:PositivesFunktionalZustand}
      schlie"st auch deformierte Algebren, d.~h.~insbesondere
      Sternprodukte, ein. Im Fall von Sternprodukten wird ein
      Funktional $\omega$ $\fieldf{C}$-linear und nimmt Werte in
      $\fieldf{C}$ an. 
Elemente in dem Ring $\ring{R}=\fieldf{R}$ sind genau dann positiv,
wenn der erste nichtverschwindende Term positiv ist, d.~h.~ein Element 
        \begin{align}
            \label{eq:PositivitaetEinesElementsInR}
            \ring{R} \ni r = \sum_{n=n_{0}}^{\infty} \lambda^n r_{n}
        \end{align}
         ist genau dann positiv, wenn $r_{n_{0}} > 0$ im Sinne von
         Definition \ref{Definition:GeordneterRing}.   
   \item Wie wir unter den Beispielen \ref{Beispiele:SternAlgebren} gesehen
haben, sind nicht alle interessanten $^\ast$-Algebren mit einem Einselement
ausgestattet. Daher kann man in einigen F"allen nicht von Zust"anden
reden. Im wesentlichen werden wir uns auf $^\ast$-Algebren
mit einem Einselement beschr"anken. Bei $^\ast$-Algebra mit
Einselement "uber einem Ring $\ring{C}=\qring{C}$, der sogar ein
K"orper ist, kann man durch Normierung aus einem linearen Funktional
einen Zustand gewinnen. 
\end{compactenum}
\end{bemerkungen}
\begin{lemma}[\Name{Cauchy-Schwarz}-Ungleichung f"ur positive Funktionale]
    \label{Lemma:CauchySchwarzFunktional}
\index{Ungleichung!\Name{Cauchy-Schwarz}}
\index{Cauchy-Schwarz-Ungleichung@\Name{Cauchy-Schwarz}-Ungleichung}
Sei $\alg{A}$ eine $^\ast$-Algebra und $\omega:\alg{A}\to \ring{C}$
ein lineares positives Funktional, dann gilt
\begin{compactenum}
\item $\omega(a^{\ast} b) = \cc{\omega(b^{\ast} a)}$. Besitzt die
    Algebra ein Einselement $1_{\sss \alg{A}} \in \alg{A}$, so ist insbesondere
    $\omega(a^{\ast}) =\cc{\omega(a)}$ und aus $\omega(1_{\sss
      \alg{A}})=0$ folgt $\omega=0$.
\item $\omega(a^{\ast} b) \cc{\omega(a^{\ast} b)} \le \omega(a^{\ast}
    a)\omega(b^{\ast} b)$. 
\end{compactenum}
\end{lemma}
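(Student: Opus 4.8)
The plan is to recognise the sesquilinear expression $\SP{a,b} := \omega(a^\ast b)$ as a positive semidefinite inner product on $\alg{A}$, viewed as a $\ring{C}$-vector space, and then to extract both assertions from the abstract \Name{Cauchy-Schwarz}-Ungleichung, Satz~\ref{Satz:CauchySchwarzUngleichung}. Positivity of $\omega$ gives $\SP{a,a} = \omega(a^\ast a) \ge 0$ directly, and $\ring{C}$-linearity of $\omega$ together with the antilinearity of $^\ast$ makes $\SP{\cdot,\cdot}$ linear in the second and antilinear in the first slot. The only axiom of Definition~\ref{Definition:PraeHilbertRaum} that is not automatic is the hermitian symmetry $\SP{a,b} = \cc{\SP{b,a}}$, which is precisely the first identity of part one; so I would prove that first and treat everything else as a consequence.

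To establish the symmetry I would polarise. For every $c \in \alg{A}$ the value $\omega(c^\ast c) \ge 0$ lies in the ordered ring $\ring{R}$, hence is fixed by conjugation (``real''). Expanding $\omega\bigl((a+b)^\ast(a+b)\bigr) \ge 0$ and discarding the real summands $\omega(a^\ast a)$ and $\omega(b^\ast b)$ shows that $\omega(a^\ast b) + \omega(b^\ast a)$ is real. Expanding $\omega\bigl((a+\im b)^\ast(a+\im b)\bigr) \ge 0$, using $(\im b)^\ast = -\im\, b^\ast$, yields that $\im\bigl(\omega(a^\ast b) - \omega(b^\ast a)\bigr)$ is real, i.e.\ $\omega(a^\ast b) - \omega(b^\ast a)$ is purely imaginary. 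Writing each element of $\ring{C} = \ring{R}(\im)$ uniquely as a real part plus $\im$ times an imaginary part, these two facts together force $\omega(b^\ast a) = \cc{\omega(a^\ast b)}$, which is the first claim.

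With the symmetry in hand, $\SP{\cdot,\cdot}$ is a genuine positive semidefinite inner product, so Satz~\ref{Satz:CauchySchwarzUngleichung} gives $\SP{a,b}\SP{b,a} \le \SP{a,a}\SP{b,b}$, that is $\omega(a^\ast b)\,\omega(b^\ast a) \le \omega(a^\ast a)\,\omega(b^\ast b)$; substituting $\omega(b^\ast a) = \cc{\omega(a^\ast b)}$ produces exactly the inequality of part two. For the unital addenda I would set $b = \Eins{\alg{A}}$ in the symmetry and use $\Eins{\alg{A}}^\ast = \Eins{\alg{A}}$ to obtain $\omega(a^\ast) = \cc{\omega(a)}$. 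For the implication $\omega(\Eins{\alg{A}}) = 0 \Rightarrow \omega = 0$ I would feed $a = \Eins{\alg{A}}$ into the inequality just proved, giving $\omega(b)\cc{\omega(b)} \le \omega(\Eins{\alg{A}})\,\omega(b^\ast b) = 0$ for every $b$; since $\omega(b)\cc{\omega(b)}$ is a sum of two squares in $\ring{R}$, it can only be $\le 0$ if it vanishes, whence $\omega(b) = 0$.

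The one genuinely delicate point, and the step I expect to need the most care, is that everything runs over a general ordered ring $\ring{R}$ rather than $\field{R}$, so I must avoid any Archimedean or field-theoretic shortcut. The polarisation argument is safe because it uses only the ring structure and the reality of $\omega(c^\ast c)$, and the final vanishing step relies solely on the order-ring fact that a sum of squares equal to $0$ forces each square to vanish. This is exactly why I route the implication $\omega(\Eins{\alg{A}}) = 0 \Rightarrow \omega = 0$ through the Cauchy-Schwarz inequality of part two instead of through an $\varepsilon$-type estimate, which would silently assume the Archimedean property that $\ring{R} = \fieldf{R}$ lacks.
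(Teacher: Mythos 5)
Your argument is correct. Note, however, that the paper states Lemma~\ref{Lemma:CauchySchwarzFunktional} without any proof at all, so there is no in-text argument to compare against; your write-up supplies exactly the proof the surrounding text implicitly intends, namely reading $\SP{a,b}:=\omega(a^{\ast}b)$ as a positive semidefinite inneres Produkt and invoking Satz~\ref{Satz:CauchySchwarzUngleichung}. The two points you flag are indeed the only delicate ones and you handle both correctly: the polarisation step uses no division and only the fact that $\omega(c^{\ast}c)\in\ring{R}$ is conjugation-invariant, and the implication $\omega(1_{\sss\alg{A}})=0\Rightarrow\omega=0$ goes through because in a geordneter Ring a sum of squares vanishes only if each square vanishes (Korollar~\ref{Korollar:GeordneterRing}). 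The one loose end is inherited from the paper rather than from you: the proof of Satz~\ref{Satz:CauchySchwarzUngleichung} is only cited for $\ring{C}=\field{C}$, so strictly speaking one should check (or remark) that the standard argument, e.g.\ expanding $\SP{\phi\SP{\psi,\psi}-\psi\SP{\psi,\phi},\phi\SP{\psi,\psi}-\psi\SP{\psi,\phi}}\ge 0$ and treating the case $\SP{\psi,\psi}=0$ separately, survives over an arbitrary ordered ring.
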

Mit Hilfe der linearen positiven Funktionale sind wir nun in der Lage
positive Algebraelemente zu definieren.  
\begin{definition}[Positive Algebraelemente -- die Mengen
    $\alg{A}^{+}$ und $\alg{A}^{++}$]
\index{positive Algebraelemente}
Sei $\alg{A}$ eine $^\ast$-Algebra "uber dem Ring $\ring{C}$. 
Man bezeichnet ein Element $a \in \alg{A}$ als {\em positiv},
    falls $\omega(a) > 0$ f"ur jedes beliebige positive Funktional
    $\omega: \alg{A} \to \ring{C}$. Die Menge aller dieser Elemente
    sei mit $\alg{A}^{+}$ bezeichnet:   
    \begin{align}
        \alg{A}^{+}=\left\{ a\in \alg{A} \,|\,\omega(a)> 0 \text{ f"ur alle
              positiven }\omega \right\}.
    \end{align}
Ferner definiert man  
    \begin{align}
        \alg{A}^{++}=\left\{ a\in \alg{A}\, \Big| \,
            a=\sum\nolimits_{i=1}^{n} \beta_i b^{\ast}_{i} b_{i}, \;
            \text{mit $0< \beta_{i} \in \ring{R}$ und $b_{i} \in
              \alg{A}$} \right\} .   
    \end{align} 
\end{definition}
Es liegt nah, positive Abbildungen zu betrachten, die positive
Elemente einer $^\ast$-Algebra auf positive Elemente einer weiteren
$^\ast$-Algebra abbilden.
\begin{definition}[Positive und vollst"andig positive Abbildungen]
\index{Abbildung!positive}
\index{Abbildung!vollstaendig positive@vollst\"andig positive}
 Seien $\alg{A}$ und $\alg{B}$ $^\ast$-Algebren. Man bezeichnet eine
 lineare Abbildung $\Phi:\alg{A} \to \alg{B}$ als {\em positiv}, wenn
 f"ur alle $a\in \alg{A}^{+}$ gilt $\Phi(a) \in \alg{B}^{+}$. Ferner
 nennt man $\Phi$ {\em vollst"andig positiv}, wenn die komponentenweise
 Erweiterung auf $\Phi: M_{n}(\alg{A}) \to M_{n}(\alg{B})$ positiv
 f"ur alle $n\in \field{N}$ ist. 
\end{definition}
\begin{bemerkungen}
\label{Bemerkung:PositiveElemente}
~\vspace{-5mm}
\begin{compactenum}
\item Offensichtlich ist $\alg{A}^{++} \subseteq \alg{A}^{+}$, da wir
aufgrund der Linearit"at von $\omega$ jedes
Element $a\in \alg{A}^{++}$ schreiben k"onnen als
\begin{align}
    \label{eq:A++UntermengeVonA+}
    \omega(a) & =  \omega \left(\sum\nolimits_{i} \beta_{i} b^{\ast}_{i} b_{i}
    \right) = \sum_{i} \underbrace{\beta_{i}}_{>0}
    \underbrace{\omega (b^{\ast}_{i} b_{i})}_{>0} >0.  
\end{align} 
Es kann jedoch Elemente in $\alg{A}^{+}$ geben, die sich nicht als
positive Linearkombination von Quadraten schreiben lassen. 
\item Bei $C^{\ast}$-Algebren ist $\alg{A}^{+}=\alg{A}^{++}$, da
jedes positive Element in einer $C^\ast$-Algebra eine eindeutige
Wurzel besitzt, so da"s $a=(\sqrt{a})^{2}$ geschrieben werden
kann. Wir verweisen auf 
\citep{sakai:1971a,kadison.ringrose:1997a,kadison.ringrose:1997b,landsman:1998a}.
\end{compactenum}
\end{bemerkungen} 


\subsection{Pr"a-\Name{Hilbert}-Moduln und Darstellungen} 

Wir wollen nun eine Verallgemeinerung des
Konzepts der Pr"a-\Name{Hilbert}-R"aume dar\-le\-gen, die {\em
  Pr"a-\Name{Hilbert}-Moduln}. Diese werden bei der Klassifikation
von Algebren mit Hilfe der \Name{Morita}-Theorie ein wichtiges Werkzeug
sein. 

\begin{definition}[Innerer Produktmodul und Pr"a-\Name{Hilbert} Modul]
\label{Definition:PraeHilbertModul}
\index{Prae-Hilbert-Modul@Pr\"a-\Name{Hilbert}-Modul}
\index{innerer Produktmodul}
Sei $\alg{D}$ eine $^\ast$-Algebra "uber dem Ring $\ring{C}$. Ein
{\em innerer Produktmodul} ist ein $\alg{D}$-Rechtsmodul mit einem inneren Produkt  
\begin{align}
    \label{eq:InneresProduktPraeHilbertModul}
    \rSP{\cdot,\cdot}{\alg{D}}: \rmod{H}{D} \times \rmod{H}{D} \to \alg{D},
\end{align}
mit den folgenden Eigenschaften.
\begin{compactenum}
\item Das innere Produkt $\rSP{\cdot,\cdot}{\alg{D}}$ ist
    $\ring{C}$-sesquilinear, d.~h.~linear im zweiten Argument und
    antilinear im ersten, so da"s
    $\rSP{x,ry+sz}{\alg{D}} = r \rSP{x,y}{\alg{D}} + s
    \rSP{x,z}{\alg{D}}$ f"ur alle $x,y,z \in \rmod{H}{D}$ und alle
    $r,s \in \ring{C}$.  
\item $\rSP{x, y}{\alg{D}} = \rSP[\ast]{y, x}{\alg{D}}$ f"ur alle $x, y \in \rmod{H}{D}$
\item Das innere Produkt ist $\alg{D}$-rechtslinear $\rSP{x,y \cdot
      d}{\alg{D}} = \rSP{x, y}{\alg{D}} d$ 
    f"ur alle $x, y \in \rmod{H}{D}$ und $d \in \alg{D}$.
\item Das innere Produkt $\rSP{\cdot,\cdot}{\alg{D}}$ ist
    nichtausgeartet, das hei"st f"ur $\rSP{x,\cdot}{D}=\rSP{y,
      \cdot}{D}$ ist $x=y$, und aus $\rSP{x, \cdot}{\alg{D}} \equiv 0$ folgt
    $x=0$. 
\end{compactenum}
Ferner spricht man von einem {\em Pr"a-\Name{Hilbert}-Modul}, wenn
zus"atzlich noch die folgende Bedingung erf"ullt ist:
\begin{compactenum}
\item[\it v.)] Das innere Produkt $\rSP{\cdot,\cdot}{\alg{D}}$ ist vollst"andig
    positiv. Dies bedeutet, da"s f"ur alle $n\in \field{N}$ und
    alle $x_{1},x_{2},\cdots, x_{n} \in \rmod{H}{D}$ die Matrix
    $(\rSP{x_{i},x_{j}}{\alg{D}}) \in M_{n}(\alg{D})^{+}$
    ist. 
\end{compactenum}
\end{definition}
\begin{definition}[Vollheit eines inneren Produkts]
\index{Vollheit}
    Sei $\alg{D}$ eine $^\ast$-Algebra "uber dem Ring $\ring{C}$ und
    $\rSP{\cdot,\cdot}{\alg{D}}:\rmod{H}{D} \times \rmod{H}{D} \to
    \alg{D}$ ein $\alg{D}$-wertiges inneres Produkt. Man nennt das
    innere Produkt {\em voll}, falls die $\ring{C}$-lineare H"ulle
    des inneren Produkts die ganze Algebra aufspannt, d.~h.~
    \begin{align}
        \label{eq:VollheitInneresProdukt}
        \CSpan{\rSP{x,y}{\alg{D}}\,|\,\text{f"ur }x,y \in
          \rmod{H}{D}}=\alg{D}. 
    \end{align}
\end{definition}

Nicht alle Algebren sind als Hilfsalgebren
geeignet, daher wollen wir definieren, wann eine Algebra
$\alg{D}$ f"ur unsere Zwecke {\em zul"assig} ist.

\begin{definition}[Zul"assige Algebra $\alg{D}$]
\label{Definition:ZulaessigeHilfsalgebra}
\index{Algebra!zulaessig@zul\"assige}
   Man nennt eine Algebra $\alg{D}$ {\em zul"assig}, falls f"ur
   jeden Pr"a-\Name{Hilbert}-Modul $\rmod{H}{D}$ das innere Produkt
   positiv definit ist, d.~h.~f"ur $\rSP{x,x}{\alg{D}}=0$ ist $x=0$.
\end{definition}
\begin{bemerkungen}[Innerer Produktmodul und Pr"a-\Name{Hilbert}-Modul]
\label{Bemerkung:PraeHilbertModul}
~\vspace{-5mm}
\begin{compactenum}
\item Die Definition f"ur einen inneren Produktmodul oder einen Pr"a-\Name{Hilbert}
    $\alg{B}$-Linksmoduln $\lmod{B}{H}$ mit einem $\ring{C}$-ses\-qui\-li\-ne\-aren
    inneren Produkt $\lSP{\alg{B}}{\cdot,\cdot}: \lmod{B}{H} \times
    \lmod{B}{H} \to \alg{B}$ verl"auft analog. Allerdings ist ein
    solches inneres Produkt $\ring{C}$-linear im ersten Argument und mit einer
      Linksmultiplikation von $\alg{B}$ vertr"aglich, so da"s die
      folgenden Bedingungen gelten
      $\lSP{\alg{B}}{rx+sy,z}=r \lSP{\alg{B}}{x,z}+s \lSP{\alg{B}}{y,z}$ und
     $\lSP{\alg{B}}{b\cdot x,y} = b \lSP{\alg{B}}{x,y}$ f"ur
     alle $b\in \alg{B}$, $x,y,z \in \lmod{B}{H}$ und $r,s \in
     \ring{C}$.
\item F"ur den Fall $\alg{D}=\ring{C}$ geht die Definition
    \ref{Definition:PraeHilbertModul} in die eines
    Pr"a-\Name{Hilbert}-Raums (Definition
    \ref{Definition:PraeHilbertRaum}) "uber. Einen Beweis f"ur die vollst"andige
    Positivit"at des inneren Produkts $\rSP{\cdot,\cdot}{\ring{C}}$
    f"ur Pr"a-\Name{Hilbert}-R"aume findet man in
    \citep[App.~A]{bursztyn.waldmann:2001a}. Es stellt sich heraus,
    da"s vollst"andige Positivit"at dort bereits aus der Positivit"at von
    $\rSP{\cdot,\cdot}{\ring{C}}$ folgt. 
\item Definition \ref{Definition:PraeHilbertModul} verallgemeinert
    desweiteren \Name{Hilbert}-Moduln "uber $C^{\ast}$-Algebren, wie
    sie in \citet{lance:1995a} behandelt werden. Die vollst"andige
    Positivit"at ist auch hier eine direkte Folge der
    Positivit"at des inneren Produkts \citep[Lemma 4.2]{lance:1995a}.
\item Die Definitionen zu adjungierbaren Abbildungen und Operatoren
    endlichen Rangs sind identisch zu denen in Kapitel
    \ref{sec:PraeHilbertRaeumeUndSternAlgebren}. Insbesondere sind die
    bei Pr"a-\Name{Hilbert}-Moduln die adjungierbaren Abbildung
    eindeutig, und es gilt f"ur alle $A \in \alg{B}(\rmod{H}{D})$
    Lemma~\ref{Lemma:AdjungierteAbbildungen}, was an der
    Nichtausgeartetheit des inneren Produkts liegt.
\item Eine Hilfsalgebra $\alg{D}$ ist immer dann zul"assig, wenn es
    gen"ugend viele positive lineare Funktionale $\omega: \alg{D} \to
    \ring{C}$ gibt, so da"s f"ur \Name{Hermite}sche Elemente
    $d=d^{\ast} \neq 0$ ein Funktional $\omega (d) \neq 0$
    existiert und $d+d=0$ ist, folgt da"s $d=0$ ist.
\end{compactenum}
\end{bemerkungen}
\begin{beispiel}[\Name{Hermite}sches Vektorb"undel]
\index{Vektorbuendel@Vektorb\"undel!Hermitesch@\Name{Hermite}sches}
    Ein wichtiges Beispiel f"ur die Pr"a-\Name{Hilbert}-Moduln
    kommt aus der Differentialgeometrie. Sei $\bundle{E}{\pi}{M}$ ein
    komplexes Vektorb"undel mit einer \Name{Hermite}schen Fasermetrik
    $h$, dann ist der $\Cinf{M}$-Rechtsmodul
    $\rmodo{\schnitt{E}}{\Cinf{M}}$ mit dem inneren Produkt
    $\SP{s,s^{\prime}}(x) := h_{\sss
      x}(s(x),s^{\prime}(x))$ mit $x\in M$ und $s,s^{\prime} \in
    \schnitt{E}$ ein Pr"a-\Name{Hilbert}-Modul. Die adjungierbaren
    Abbildungen sind die Schnitte im Endomorphismenb"undel
    $\alg{B}(\rmodo{\schnitt{E}}{\Cinf{M}}) = 
    \schnitt{\End{E}}$ mit der kanonischen Wirkung auf
    $\schnitt{E}$. Die $^\ast$-Involution wird durch die
    Fasermetrik $h$ induziert.   
\end{beispiel}
\begin{definition}[$^\ast$-Darstellung einer $^\ast$-Algebra]
\index{Darstellung!Sterndarstellung@$^{\ast}$-Darstellung}
\index{Stern-Darstellung@$^\ast$-Darstellung}
    Seien $\alg{A}$ und $\alg{D}$ $^\ast$-Algebren "uber dem Ring
    $\ring{C}$, und $\rmod{H}{D}$ ein
    Pr"a-\Name{Hilbert}-Modul. Eine {\em $^\ast$-Darstellung}
    $(\rmod{H}{D},\pi)$ ist das Paar bestehend aus einem Pr"a-\Name{Hilbert}-Modul
    $\rmod{H}{D}$ und einem $^\ast$-Ho\-mo\-mor\-phis\-mus $\pi: \alg{A} \to
    \alg{B}(\rmod{H}{D})$, d.~h.~es gilt f"ur alle $a,b \in
    \alg{A}$
    \begin{compactenum}
    \item $\pi(ab) = \pi(a)\pi(b)$,
    \item $\pi(a^{\ast}) = \pi(a)^\ast$.
    \end{compactenum}
Man bezeichnet eine $^\ast$-Darstellung $(\rmod{H}{D},\pi)$ ferner als
{\em stark nichtausgeartet}, wenn 
\begin{align}
\label{eq:StarkNichtausgearteteDarstellung}
    \pi(\alg{A}) \rmod{H}{D} = \rmod{H}{D}.
\end{align}
\end{definition}
Damit wird $\rmod{H}{D}$ zu einem $(\alg{B}(\rmod{H}{D}),
\alg{D})$-Bimodul (vgl.~Bemerkungen
\ref{Bemerkung:SternDarstellungen}). Eine $^\ast$-Darstellung $(\rmod{H}{D}, \pi$) ist immer stark
    nichtausgeartet, falls die dargestellte $^\ast$-Algebra $\alg{A}$ ein
    Einselement $1_{\sss \alg{A}} \in \alg{A}$ besitzt, da
    $\pi(1_{\sss \alg{A}})= \id_{\sss \rmod{H}{D}}$.     
\begin{definition}[Verschr"ankungsoperator (Intertwiner)]
\index{Verschraenkungsoperator@Verschr\"ankungsoperator}
    Seien $(\rmod{H}{D},\pi)$ und $(\rmod{K}{D}, \rho)$ zwei
    $^\ast$-Darstellungen einer $^\ast$-Algebra $\alg{A}$, so bezeichnet man
    eine Abbildung $T \in \alg{B}(\rmod{H}{D},
    \rmod{K}{D})$ als {\em Verschr"ankungsoperator} oder {\em
      Intertwiner}, falls 
    \begin{align}
        \label{eq:Intertwiner}
        T \pi (a) = \rho(a) T
    \end{align}
f"ur alle $a\in \alg{A}$ ist.
\end{definition}
\begin{definition}[Kategorie der $^\ast$-Darstellungen auf
    Pr"a-\Name{Hilbert}-Moduln] 
\index{Kategorie!@Sterndarstellungen auf Prae-Hilbert-Raeumen@der $^\ast$-Darstellungen auf
    Pr\"a-\Name{Hilbert}-Moduln}
Man bezeichnet die Kategorie der $^\ast$-Darstellungen der $^\ast$-Algebra $\alg{A}$ auf
inneren Produkt $\alg{D}$-Rechts\-modul mit $\smod[\alg{D}](\alg{A})$. Die
Objekte sind die $^\ast$-Darstellungen und 
die Morphismen die Ver\-schr"an\-kungs\-opera\-tor\-en. Ferner
bezeichnet man die Unterkategorie der $^\ast$-Darstellungen auf
Pr"a-\Name{Hilbert}-Moduln mit $\srep[\alg{D}](\alg{A})$. Sind die
$^\ast$-Darstellungen au"serdem noch
stark nichtausgeartet, so werden die Kategorien mit 
$\sMod[\alg{D}](\alg{A})$ bzw.~$\sRep[\alg{D}](\alg{A})$ bezeichnet.   
\end{definition}

\subsection{Moduln und Bimoduln}

Da f"ur die \Name{Morita}-Theorie die Bimoduln eine wichtige
Rolle spielen werden, werden wir die wichtigen Definitionen kurz zusammenfassen.
\begin{definition}[Kategorien der $\ring{R}$-Moduln]
\label{Definition:KategorieRModuln}
\index{Kategorie!R-Moduln@der $\ring{R}$-Moduln}
\index{Modul!R-Modul@$\ring{R}$-Modul}
    Sei $\ring{R}$ ein Ring. Man bezeichnet die Kategorie der
    $\ring{R}$-Linksmoduln mit $\katlmod{\ring{R}}$ und die
    Kategorie der $\ring{R}$-Rechtsmoduln mit
    $\katrmod{\ring{R}}$. Die Objekte der Kategorie sind die Moduln,
    die Morphismen die Modulmorphismen,
    \begin{align}
        \label{eq:ModulMorphismus}
        T:\lmodo{\ring{R}}{\alg{E}} \to \lmodo{\ring{R}}{\alg{E'}},
        \quad \lmodo{\ring{R}}{\alg{E}} \ni x \mapsto T(x) \in
        \lmodo{\ring{R}}{\alg{E'}},
    \end{align}
so da"s $T(r\cdot x) = r\cdot T(x)$ f"ur alle $r\in \ring{R}$ und alle
$x \in \lmodo{\ring{R}}{\alg{E}}$ ist. F"ur $\ring{R}$-Rechtsmoduln
ist die Definition analog. Kurz schreiben wir  
    \begin{align}
        \label{eq:LinksUndRechtsmoduln}
        \katlmod{\ring{R}} = \{\lmodo{\ring{R}}{\alg{E}}\, |
        \,\lmodo{\ring{R}}{\alg{E}} \,
        \text{ist $\ring{R}$-Linksmodul} \} \quad \text{und} \quad
        \katrmod{\ring{R}} = \{\rmodo{\alg{E}}{\ring{R}}\, |
        \,\rmodo{\alg{E}}{\ring{R}} \, \text{ist $\ring{R}$-Rechtsmodul} \}.
    \end{align}
Gilt zus"atzlich, da"s $\ring{R}\cdot \lmodo{\ring{R}}{\alg{E}} =
\lmodo{\ring{R}}{\alg{E}}$ bzw.~$ \rmodo{\alg{E}}{\ring{R}} \cdot
\ring{R} = \rmodo{\alg{E}}{\ring{R}}$ so bezeichnet man die
Kategorien mit $\katlMod{\ring{R}}$ bzw.~$\katrMod{\ring{R}}$. Analog 
definiert man f"ur eine gegebene Algebra $\alg{A}$ die Kategorien
$\katlmod{\alg{A}}$, $\katlMod{\alg{A}}$, $\katrmod{\alg{A}}$ und
$\katrMod{\alg{A}}$.     
\end{definition}
\begin{bemerkung}[Kategorie der $\ring{R}$-Moduln und $\alg{A}$-Moduln]
Im Fall von Ringen sind die Definitionen von $\katlMod{\ring{R}}$
bzw.~$\katrMod{\ring{R}}$ tautologisch, da
in unserer Definition \ref{Definition:Ring} jeder Ring ein Einselement
besitzt und damit jeder $\ring{R}$-Modul entweder in
$\katlMod{\ring{R}}$ oder $\katrMod{\ring{R}}$ ist. In der Literatur
  wird jedoch nicht immer gefordert, da"s $1_{\sss \ring{R}} \in \ring{R}$ ist
  (vgl.~beispielsweise \citep{gerritzen:1994a}).    

F"ur $\alg{A}$-Moduln ist diese Unterscheidung jedoch wichtig, da es
wichtige Beispiele f"ur Algebren ohne Einselement gibt
(vgl.~Beispiele \ref{Beispiele:SternAlgebren}). Beschr"ankt man sich
auf die Kategorien $\katlMod{\alg{A}}$ bzw.~$\katrMod{\alg{A}}$, so umgeht
man technische Schwierigkeiten, die beispielsweise in \citep{bursztyn.waldmann:2003a:pre}
beleuchtet werden. 
\end{bemerkung}

\begin{definition}[Bimodul]
\label{Definition:Bimodul}
\index{Bimodul|textbf}
Seien $\alg{A}$ und $\alg{B}$ Algebren "uber einem Ring
$\ring{C}$. Man bezeichnet einen 
$\alg{A}$-Rechtsmodul $\rmod{E}{A} \in \katrMod{\alg{A}}$, der gleichzeitig ein
$\alg{B}$-Links\-mo\-dul $\lmod{B}{E} \in \katlMod{\alg{B}}$ ist als
{\em $(\alg{B}, \alg{A})$-Bimodul} $\bimod{B}{E}{A}$, falls
zus"atzlich die Bedingung    
\begin{align}
(b \cdot x) \cdot a = b \cdot (x \cdot a)
\end{align}
f"ur alle $a \in \alg{A}$ sowie $b \in \alg{B}$ und $x \in
\bimod{B}{E}{A}$ erf"ullt ist.
\end{definition}

\begin{bemerkung}[Bimodul]
Wir haben bei der Definition bewu"st die Kategorien
$\katrMod{\alg{A}}$ bzw.~$\katlMod{\alg{B}}$ gew"ahlt um 
\glqq pathologische\grqq{} Bimoduln auszuschlie"sen. Sind die
Algebren mit einem Einselement ausgestattet, so m"ochten wir
insbesondere die Eigenschaften $1_{\sss \alg{B}} \cdot x = x$ und $x
\cdot 1_{\sss \alg{A}} = x$ f"ur alle $x \in \bimod{B}{E}{A}$.  
\end{bemerkung}

F"ur die sp"ater behandelte $^\ast$- bzw.~starke
\Name{Morita}-"Aquivalenz werden sich Bimoduln f"ur
$^\ast$-Algebren "uber dem Ring $\ring{C}$, die zudem mit
zwei inneren Produkten ausgestattet sind, als besonders n"utzlich
herausstellen. Daher wollen wir Bimoduln $\bimod{B}{E}{A}$ betrachten,
die mit einem $\alg{B}$-wertigen sowie einem $\alg{A}$-wertigen
inneren Produkt ausgestattet sind, und diese mit $\bimodplus{B}{E}{A}$
bezeichnen. Wir werden ferner fordern, da"s 
diese inneren Produkte mit den Modulstrukturen und miteinander
vertr"aglich sind.

\begin{definition}[Bimodulmorphismus]
\index{Bimodulmorphismus}
Seien $\alg{A},\alg{B}$ zwei Algebren "uber dem Ring $\ring{C}$ und
$\bimod{B}{E}{A}$ sowie $\bimod{B}{E^{\prime}}{A}$ zwei
$(\alg{B},\alg{A})$-Bimoduln. Man nennt die Abbildung  
\begin{align}
    \label{eq:BimodulMorphismus}
T: \bimod{B}{E}{A} \ni x \mapsto T(x) \in \bimod{B}{E^{\prime}}{A}  
\end{align}
einen {\em Bimodulmorphismus}, falls f"ur alle Elemente $x \in
\bimod{B}{E}{A}$, alle $b\in \alg{B}$ und $a\in \alg{A}$ 
\begin{align}
T(b\cdot x\cdot a)=b\cdot T(x)\cdot a
\end{align}
gilt. Desweiteren nennt man den Bimodulmorphismus {\em isometrisch}, falls
die Algebren $\alg{A}$ und $\alg{B}$ $^\ast$-Algebren sind und f"ur die inneren Produkte
\begin{align}
    \label{eq:IsometrischeBimodulmorphismen}
    \SP[\sss \alg{E}^{\prime}]{T(x),T(y)} = \SP[\sss \alg{E}]{x,y}
\end{align}
gilt.
\end{definition}

\begin{definition}[Kompatibilit"at Bimodul]
\label{Definition:KompatibilitaetBimodulMitLinkswirkung}
\index{Bimodul!kompatibel mit innerem Produkt}
Seien $\alg{A}$ und $\alg{B}$ $^\ast$-Algebren und sei weiter
$\bimod{B}{E}{A}$ ein $(\alg{B},\alg{A})$-Bimodul mit
$\alg{A}$-wertigem inneren Produkt, dann nennt man das innere Produkt {\em
  kompatibel} oder {\em vertr"aglich  mit der $\alg{B}$-Linksmodulstruktur}, falls  
\begin{align}
    \label{eq:KompatibilitaetBimodulMitLinkswirkung}
    \rSP{b\cdot x, y}{\alg{A}} = \rSP{x, b^{\ast} \cdot y}{\alg{A}}
\end{align}
f"ur alle $b\in \alg{B}$ und alle $x,y \in \bimod{B}{E}{A}$. Analog
ist die $\alg{A}$-Rechtswirkung kompatibel mit einem
$\alg{B}$-wertigen inneren Produkt falls 
\begin{align}
    \label{eq:KompatibilitaetBimodulMitRechtswirkung}
    \lSP{\alg{B}}{x, y\cdot a} = \lSP{\alg{B}}{x \cdot a^{\ast}, y}
\end{align}
f"ur alle $a\in \alg{A}$ und alle $x,y \in \bimod{B}{E}{A}$.
\end{definition}

Sei nun ein $(\alg{B},\alg{A})$-Bimodul $\bimod{B}{E}{A}$ mit einem $\alg{B}$-wertigen
und einem $\alg{A}$-wertigen inneren Produkt ausgestattet, so sind die
beiden inneren Produkte a priori v"ollig unabh"angig voneinander. Wir
ben"otigen sp"ater jedoch eine Vertr"aglichkeit der beiden inneren Produkte. 

\begin{definition}[Vertr"aglichkeit der inneren Produkte]
\label{Lemma:KompabilitaetDerInnrenProdukte}
\index{inneres Produkt!Vertraeglichkeit@Vertr\"aglichkeit}
Gegeben ein Bimodul $\bimodplus{B}{E}{A}$. Man nennt die beiden
inneren Produkte {\em kom\-pa\-ti\-bel} oder {\em vertr"aglich}, falls f"ur alle $x,y,z \in
\bimod{B}{E}{A}$ 
\begin{align}
    \label{eq:KompabilitaetInnereProdukte}
    \lSP{\alg{B}}{x,y} \cdot z = x \cdot \rSP{y,z}{\alg{A}}
\end{align}
gilt.
\end{definition}

\begin{bemerkung}
\label{Bemerkung:SternDarstellungen}
 Sei $(\rmod{H}{D},\pi)$ eine $^\ast$-Darstellung der
    $^\ast$-Algebra $\alg{A}$, so ist $\rmod{H}{D}$ auf kanonische
    Weise ein Bimodul, da $\alg{B}({\rmod{H}{D}})$ von links auf
    $\rmod{H}{D}$ wirkt und $\bimod{\alg{B}({\rmod{H}{D}})}{H}{D}$ zu
    einem $(\alg{B}({\rmod{H}{D}}), \alg{D})$-Bimodul macht. 
    Das innere Produkt $\rSP{\cdot,\cdot}{\alg{D}}$ ist per Definition
    \ref{Definition:AdjungierbareAbbildung} kompatibel mit der
    $\alg{B}({\rmod{H}{D}})$-Linksmodulstruktur. 
\end{bemerkung}

\subsection{Tensorprodukte und \Name{Rieffel}-Induktion von
  $^\ast$-Darstellungen} 
\label{sec:TensorprodukteRieffelInduktion}

Ziel dieses Kapitels ist es einen Funktor zwischen zwei Kategorien von
$^\ast$-Darstellungen anzugeben. Die folgenden Ideen gehen auf
\citet{rieffel:1974a,rieffel:1974b} zur"uck. Die Darstellung dieses
Kapitels orientiert sich im wesentlichen an
\citep{bursztyn.waldmann:2003a:pre}. 

Gegeben zwei $^\ast$-Algebren $\alg{A}$ und $\alg{B}$ "uber einem
Ring $\ring{C}$. Sei $\rmodplus{H}{B}$ ein
$\alg{B}$-Rechtsmodul mit einem $\alg{B}$-wertigen inneren Produkt und
$\bimodrplus{B}{E}{A}$ ein Bimodul mit einem $\alg{A}$-wertigen
inneren Produkt, das kompatibel mit der $\alg{B}$-Linkswirkung ist. Wir
betrachten nun das Tensorprodukt $\rmod{H}{B} \tensor[\alg{B}]
\bimod{B}{E}{A}$ "uber der Algebra $\alg{B}$.

\begin{lemma}[Inneres Produkt auf {$\rmod{H}{B} \tensor[\alg{B}]
    \bimod{B}{E}{A}$}]
\index{inneres Produkt!Tensorprodukt@auf Tensorprodukt}
    Auf dem Tensorprodukt $\rmod{H}{B} \tensor[\alg{B}]
    \bimod{B}{E}{A}$ existiert ein wohldefiniertes $\alg{A}$-wertiges
    inneres Produkt via
    \begin{align}
        \label{eq:InneresProduktAufTensorproduktalgebra}
        \rSP[\sss \alg{H}\otimes\alg{E}]{x_{1} \tensor[\alg{B}]
          y_{1}, x_{2} \tensor[\alg{B}] y_{2}}{\alg{A}} :=
        \rSP[\sss \alg{E}]{y_{1}, \rSP[\sss \alg{H}]{x_{1}, x_{2}}{\alg{B}}
          \cdot y_{2}}{\alg{A}}
    \end{align}
f"ur alle $x_{1},x_{2} \in \rmod{H}{B}$ und $y_{1},y_{2} \in
\bimod{B}{E}{A}$. 
\end{lemma}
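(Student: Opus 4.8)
The plan is to read the right-hand side of the defining formula on representatives $x_{1} \tensor[\alg{B}] y_{1}$, $x_{2} \tensor[\alg{B}] y_{2}$ and to show that the resulting value does not depend on the chosen representatives. Since the expression is manifestly additive and $\ring{C}$-sesquilinear in the representatives, well-definedness reduces to a single point: invariance under the balancing relation $x \cdot b \tensor[\alg{B}] y = x \tensor[\alg{B}] b \cdot y$ of $\tensor[\alg{B}]$, checked separately in the left and in the right tensor slot. Once well-definedness is established, conjugate symmetry and right $\alg{A}$-linearity remain to be verified; these will follow directly from the corresponding properties of the two given inner products $\rSP[\sss \alg{H}]{\cdot,\cdot}{\alg{B}}$ and $\rSP[\sss \alg{E}]{\cdot,\cdot}{\alg{A}}$.

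The main obstacle is the balancing in the \emph{left} slot, and this is exactly where the compatibility of the $\alg{A}$-valued inner product with the $\alg{B}$-left action, equation~\eqref{eq:KompatibilitaetBimodulMitLinkswirkung}, is used. Using conjugate symmetry and right $\alg{B}$-linearity of $\rSP[\sss \alg{H}]{\cdot,\cdot}{\alg{B}}$ together with the antiautomorphism property of $^{\ast}$, one first moves the factor $b$ inside the $\alg{B}$-valued inner product,
\begin{align*}
  \rSP[\sss \alg{H}]{x_{1} \cdot b, x_{2}}{\alg{B}}
  &= \bigl(\rSP[\sss \alg{H}]{x_{2}, x_{1} \cdot b}{\alg{B}}\bigr)^{\ast}
   = \bigl(\rSP[\sss \alg{H}]{x_{2}, x_{1}}{\alg{B}} \cdot b\bigr)^{\ast} \\
  &= b^{\ast} \cdot \bigl(\rSP[\sss \alg{H}]{x_{2}, x_{1}}{\alg{B}}\bigr)^{\ast}
   = b^{\ast} \cdot \rSP[\sss \alg{H}]{x_{1}, x_{2}}{\alg{B}} ,
\end{align*}
so the value at $x_{1} \cdot b \tensor[\alg{B}] y_{1}$ equals $\rSP[\sss \alg{E}]{y_{1}, b^{\ast} \cdot (\rSP[\sss \alg{H}]{x_{1}, x_{2}}{\alg{B}} \cdot y_{2})}{\alg{A}}$. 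On the other hand, the value at $x_{1} \tensor[\alg{B}] b \cdot y_{1}$ is $\rSP[\sss \alg{E}]{b \cdot y_{1}, \rSP[\sss \alg{H}]{x_{1}, x_{2}}{\alg{B}} \cdot y_{2}}{\alg{A}}$, which by \eqref{eq:KompatibilitaetBimodulMitLinkswirkung} equals the same thing; hence the two agree. Balancing in the \emph{right} slot is purely formal: right $\alg{B}$-linearity gives $\rSP[\sss \alg{H}]{x_{1}, x_{2} \cdot b}{\alg{B}} = \rSP[\sss \alg{H}]{x_{1}, x_{2}}{\alg{B}} \cdot b$, and associativity of the $\alg{B}$-action on $\bimod{B}{E}{A}$ rewrites $(\rSP[\sss \alg{H}]{x_{1}, x_{2}}{\alg{B}} \cdot b) \cdot y_{2}$ as $\rSP[\sss \alg{H}]{x_{1}, x_{2}}{\alg{B}} \cdot (b \cdot y_{2})$, i.e.\ the value at $x_{2} \tensor[\alg{B}] b \cdot y_{2}$, so no compatibility is needed here.

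It then remains to check the inner-product axioms, all of which are routine. $\ring{C}$-sesquilinearity is immediate from that of $\rSP[\sss \alg{H}]{\cdot,\cdot}{\alg{B}}$ and $\rSP[\sss \alg{E}]{\cdot,\cdot}{\alg{A}}$, and right $\alg{A}$-linearity follows by pulling the $\alg{A}$-factor through $\rSP[\sss \alg{E}]{\cdot,\cdot}{\alg{A}}$, using $(x_{2} \tensor[\alg{B}] y_{2}) \cdot a = x_{2} \tensor[\alg{B}] (y_{2} \cdot a)$. For conjugate symmetry I would start from $\bigl(\rSP[\sss \alg{E}]{y_{2}, \rSP[\sss \alg{H}]{x_{2}, x_{1}}{\alg{B}} \cdot y_{1}}{\alg{A}}\bigr)^{\ast}$, apply Hermiticity of $\rSP[\sss \alg{E}]{\cdot,\cdot}{\alg{A}}$, then \eqref{eq:KompatibilitaetBimodulMitLinkswirkung} to carry the factor $\rSP[\sss \alg{H}]{x_{2}, x_{1}}{\alg{B}}$ to the other entry as its adjoint, and finally Hermiticity of $\rSP[\sss \alg{H}]{\cdot,\cdot}{\alg{B}}$ to identify that adjoint with $\rSP[\sss \alg{H}]{x_{1}, x_{2}}{\alg{B}}$, which reproduces $\rSP[\sss \alg{H}\otimes\alg{E}]{x_{1} \tensor[\alg{B}] y_{1}, x_{2} \tensor[\alg{B}] y_{2}}{\alg{A}}$. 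The genuinely decisive step is thus the left-slot balancing, for which the left-compatibility hypothesis on $\bimodrplus{B}{E}{A}$ is indispensable; I note that nondegeneracy of the new form is not claimed and would have to be treated separately.
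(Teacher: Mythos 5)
Your proposal is correct and follows the standard route that the paper itself only sketches: the paper's proof merely notes that $\ring{C}$-sesquilinearity, compatibility with the $\alg{A}$-right-module structure and the behaviour under $^{\ast}$ must be checked and delegates the verification to \citet{rieffel:1974a} and \citet{bursztyn.waldmann:2003a:pre}, whereas you carry out the computation explicitly. You correctly identify the one non-formal point, namely well-definedness over the balanced tensor product in the first argument via the left-compatibility \eqref{eq:KompatibilitaetBimodulMitLinkswirkung}, and your closing remark that nondegeneracy is not claimed here matches the paper, which only passes to the quotient by the degeneracy space afterwards.
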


\begin{proof}
    Da wir bisher keine Forderungen bez"uglich Positivit"at oder
    Ausgeartetheit an die inneren Produkte gestellt
    haben, m"ussen lediglich $\ring{C}$-Sesquilinearit"at, die Kompatibilit"at
    mit der $\alg{A}$-Rechts\-mo\-dul\-struk\-tur und das Verhalten unter der
    $^\ast$-Involution gepr"uft werden. Das innere Produkt ist jedoch
    so konstruiert, da"s diese Forderungen erf"ullt
    sind, vergleiche
    \citep{rieffel:1972a,rieffel:1974b,rieffel:1974a,bursztyn.waldmann:2003a:pre}.    
\end{proof}

\begin{definition}[Ausartungsraum von $\rmod{H}{B} \otimes_{\sss
      \alg{B}} \bimod{B}{E}{A}$] 
Wir bezeichnen den Ausartungsraum von $\rmod{H}{B} \tensor[\alg{B}]
    \bimod{B}{E}{A}$ bez"uglich des inneren Produktes $\rSP[\sss
    \alg{H} \otimes \alg{E}]{\cdot, \cdot}{\alg{A}}$ mit $(\rmod{H}{B}
    \tensor[\alg{B}] \bimod{B}{E}{A})^{\bot}$. 
\end{definition}

Auf dem Quotienten $(\rmod{H}{B} \tensor[\alg{B}]
\bimod{B}{E}{A}) / (\rmod{H}{B} \tensor[\alg{B}]
\bimod{B}{E}{A})^{\bot}$ existiert nun ein induziertes,
nichtausgeartetes inneres Produkt\index{inneres Produkt!induziertes},
das wir ebenfalls mit $\rSP[\sss 
\alg{H} \otimes \alg{E}]{\cdot, \cdot}{\alg{A}}$ bezeichnen. Damit
sind wir nun in der Lage das {\em innere Tensorprodukt $\tensorhatohne$} zu
definieren.

\begin{definition}[Inneres Tensorprodukt $\tensorhatohne$]
\index{inneres Tensorprodukt}
   Seien $\rmodplus{H}{B}$ ein $\alg{B}$-Rechtsmodul und $\bimodrplus{B}{E}{A}$
   ein mit der $\alg{B}$-Linkswirkung kompatibler Bimodul mit einem
   $\alg{A}$-wertigen inneren Produkt, so definiert man das {\em innere
     Tensorprodukt $\tensorhat[\alg{B}]$ "uber der Algebra $\alg{B}$} von $\rmodplus{H}{B}$ und
   $\bimodrplus{B}{E}{A}$ mittels  
   \begin{align}
       \label{eq:InneresTensorprodukt}
       \rmod{H}{B} \tensorhat[\alg{B}] \bimod{B}{E}{A} :=
       \left((\rmod{H}{B} \tensor[\alg{B}] \bimod{B}{E}{A}) /
           (\rmod{H}{B} \tensor[\alg{B}]
           \bimod{B}{E}{A})^{\bot}, \rSP[\sss \alg{H} \otimes
           \alg{E}]{\cdot,\cdot}{\alg{A}}  \right).
   \end{align}
\end{definition}

\begin{bemerkung}[Ausartungsr"aume bei innerem Tensorprodukt]
\label{Bemerkung:InneresProduktBimodul}
Sei nun $\bimod{B}{E}{A}$ ein Bimodul mit einem $\alg{A}$-wertigem und einem
$\alg{B}$-wertigen inneren Produkt, so ist erstmal nicht eindeutig, welchen
Ausartungsraum man herausteilen mu"s. Sind die beiden inneren
Produkte kompatibel (Gleichung
\eqref{eq:KompabilitaetInnereProdukte}), so sind beide
Ausartungsr"aume identisch. F"ur die $^\ast$- und starke
\Name{Morita}-"Aquivalenz sind genau diese Bimoduln von Interesse.  
Wir werden daher $\tensortilde$ statt $\tensorhatohne$ verwenden, wenn wir
ausdr"ucken wollen, da"s zwei kompatible innere Produkte involviert sind. Die
funktoriellen Eigenschaften von $\tensortilde$ und $\tensorhat$
unterscheiden sich nicht.     
\end{bemerkung}

\begin{lemma}[Kanonische Linkswirkung bei innerem Tensorprodukt]
 \label{Lemma:KanonischeLinkswirkungAufInneremTensorprodukt}
 Seien $\alg{A}$, $\alg{B}$ und $\alg{C}$ $^\ast$-Algebren "uber dem
 Ring $\ring{C}$, und $\bimodrplus{C}{H}{B}$
 bzw.~$\bimodrplus{B}{E}{A}$ sind mit der jeweiligen Linkswirkung 
 vertr"agliche Bimoduln mit innerem Produkt, dann hat $\bimod{C}{H}{B}
 \tensorhat[\alg{B}] \bimod{B}{E}{A}$ eine kanonische, mit dem
 inneren Produkt $\rSP[\sss \alg{H} \otimes \alg{E}]{\cdot,\cdot}{\alg{A}}$
 vertr"agliche $\alg{C}$-Linkswirkung.
\end{lemma}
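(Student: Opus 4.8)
The plan is to let $\alg{C}$ act on the left tensor factor and to track that action through the two quotients that build up $\tensorhat[\alg{B}]$. Concretely I would set
\begin{align*}
  c\cdot(x\tensor[\alg{B}]y):=(c\cdot x)\tensor[\alg{B}]y
\end{align*}
for $c\in\alg{C}$, $x\in\bimod{C}{H}{B}$, $y\in\bimod{B}{E}{A}$, and then establish three things in order: that this prescription descends past the balancing over $\alg{B}$, that it is compatible with the induced inner product $\rSP[\sss \alg{H}\otimes\alg{E}]{\cdot,\cdot}{\alg{A}}$, and that it survives the passage to the inner tensor product.

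First I would check well-definedness on the algebraic tensor product $\bimod{C}{H}{B}\tensor[\alg{B}]\bimod{B}{E}{A}$. Since $\bimod{C}{H}{B}$ is a $(\alg{C},\alg{B})$-bimodule in the sense of Definition~\ref{Definition:Bimodul}, the left $\alg{C}$- and right $\alg{B}$-actions commute, so $c\cdot(x\cdot b)=(c\cdot x)\cdot b$. Hence $c\cdot\bigl((x\cdot b)\tensor[\alg{B}]y\bigr)=(c\cdot x)\tensor[\alg{B}](b\cdot y)=c\cdot\bigl(x\tensor[\alg{B}](b\cdot y)\bigr)$, so the prescription respects the defining relations of $\tensor[\alg{B}]$; $\ring{C}$-linearity and multiplicativity in $c$ are inherited from the $\alg{C}$-module structure of $\bimod{C}{H}{B}$.

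The crux is the compatibility with the inner product, which I would verify already on the algebraic tensor product and then extend by $\ring{C}$-sesquilinearity to arbitrary (finite-sum) elements. For elementary tensors, unfolding the left-hand side by the defining formula \eqref{eq:InneresProduktAufTensorproduktalgebra} gives $\rSP[\sss \alg{E}]{y_{1}, \rSP[\sss \alg{H}]{c\cdot x_{1}, x_{2}}{\alg{B}}\cdot y_{2}}{\alg{A}}$; now the hypothesis that the $\alg{B}$-valued inner product on $\bimod{C}{H}{B}$ is compatible with the $\alg{C}$-left action (Definition~\ref{Definition:KompatibilitaetBimodulMitLinkswirkung}, equation~\eqref{eq:KompatibilitaetBimodulMitLinkswirkung}) rewrites $\rSP[\sss \alg{H}]{c\cdot x_{1}, x_{2}}{\alg{B}}=\rSP[\sss \alg{H}]{x_{1}, c^{\ast}\cdot x_{2}}{\alg{B}}$, and re-folding via \eqref{eq:InneresProduktAufTensorproduktalgebra} produces exactly the right-hand side. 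This yields
\begin{align*}
  \rSP[\sss \alg{H}\otimes\alg{E}]{c\cdot(x_{1}\tensor[\alg{B}]y_{1}),\, x_{2}\tensor[\alg{B}]y_{2}}{\alg{A}}
  = \rSP[\sss \alg{H}\otimes\alg{E}]{x_{1}\tensor[\alg{B}]y_{1},\, c^{\ast}\cdot(x_{2}\tensor[\alg{B}]y_{2})}{\alg{A}}.
\end{align*}

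Finally I would push the action to the quotient defining $\tensorhat[\alg{B}]$. The identity just obtained shows that the $\alg{C}$-action maps the degeneracy space $(\bimod{C}{H}{B}\tensor[\alg{B}]\bimod{B}{E}{A})^{\bot}$ into itself: if $v$ is degenerate, then $\rSP[\sss \alg{H}\otimes\alg{E}]{c\cdot v, w}{\alg{A}}=\rSP[\sss \alg{H}\otimes\alg{E}]{v, c^{\ast}\cdot w}{\alg{A}}=0$ for every $w$. Thus the action descends to $\bimod{C}{H}{B}\tensorhat[\alg{B}]\bimod{B}{E}{A}$, and the same identity, now read on the quotient, is precisely the asserted compatibility with the non-degenerate $\rSP[\sss \alg{H}\otimes\alg{E}]{\cdot,\cdot}{\alg{A}}$. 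I expect the only genuine subtlety to be this last descent: the naive left action exists trivially, but its passage to $\tensorhat[\alg{B}]$ is exactly what forces the compatibility hypothesis on the \emph{first} inner product, and one must argue stability of the whole degeneracy space rather than of individual vectors. Everything else is routine sesquilinear bookkeeping.
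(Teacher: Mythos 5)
Your proposal is correct and follows essentially the same route as the paper: the paper's proof consists precisely of the chain of equalities $\rSP[\sss \alg{H}\otimes\alg{E}]{c\cdot(x_{1}\tensor[\alg{B}]y_{1}),x_{2}\tensor[\alg{B}]y_{2}}{\alg{A}}=\rSP[\sss \alg{H}\otimes\alg{E}]{x_{1}\tensor[\alg{B}]y_{1},c^{\ast}\cdot(x_{2}\tensor[\alg{B}]y_{2})}{\alg{A}}$ obtained by unfolding \eqref{eq:InneresProduktAufTensorproduktalgebra} and using the compatibility \eqref{eq:KompatibilitaetBimodulMitLinkswirkung} of the $\alg{B}$-valued inner product with the $\alg{C}$-left action. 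You merely make explicit the well-definedness over the balanced tensor product and the stability of the degeneracy space, both of which the paper leaves implicit.
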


\begin{proof}
Der Beweis ist eine einfache Rechnung. Sei nun $c\in \alg{C}$,
$x_{1},x_{2} \in \bimod{C}{H}{B}$ und $y_{1},y_{2} \in
\bimod{B}{E}{A}$, dann gilt
\begin{align*}
     \rSP[\sss \alg{H}\otimes\alg{E}]{c \cdot (x_{1} \tensor[\alg{B}]
          y_{1}), x_{2} \tensor[\alg{B}] y_{2}}{\alg{A}} & =
        \rSP[\sss \alg{H}\otimes\alg{E}]{(c \cdot x_{1}) \otimes_{\sss
            \alg{B}} y_{1}, x_{2} \tensor[\alg{B}]
          y_{2}}{\alg{A}} \\ & = \rSP[\sss \alg{E}]{y_{1}, \rSP[\sss
          \alg{H}]{c \cdot x_{1}, x_{2}}{\alg{B}} \cdot
          y_{2}}{\alg{A}} \\ & = \rSP[\sss \alg{E}]{y_{1}, \rSP[\sss
          \alg{H}]{x_{1}, c^{\ast} \cdot x_{2}}{\alg{B}} \cdot
          y_{2}}{\alg{A}} \\ & = \rSP[\sss
        \alg{H}\otimes\alg{E}]{x_{1} \tensor[\alg{B}] y_{1}, (c^{\ast} \cdot
          x_{2}) \tensor[\alg{B}] y_{2}}{\alg{A}} \\ & =
        \rSP[\sss \alg{H}\otimes\alg{E}]{x_{1} \tensor[\alg{B}]
          y_{1}, c^{\ast} \cdot (x_{2} \tensor[\alg{B}] y_{2})}{\alg{A}}. 
\end{align*}
\end{proof}
\begin{lemma}[Assoziativit"at von $\tensorhatohne$]
    \label{Lemma:AssoziativitaetInneresTensorprodukt}
Seien $\rmodplus{H}{B}$ ein $\alg{B}$-Rechtsmodul und $\bimodplus{B}{E}{A}$
bzw.~$\bimodplus{A}{F}{C}$ mit der jeweiligen Linkswirkung kompatible
Bimoduln, so existiert ein nat"urlicher isometrischer
Isomorphismus, so da"s $\tensorhatohne$ assoziativ wird:
\begin{align}
    \label{eq:AssoziativitaetInneresTensorprodukt}
    \left(\rmod{H}{B} \tensorhat[\alg{B}] \bimod{B}{E}{A} \right)
    \tensorhat[\alg{A}] \bimod{A}{F}{C} \cong \rmod{H}{B}
    \tensorhat[\alg{B}] \left ( \bimod{B}{E}{A} \tensortilde[\alg{A}]
        \bimod{A}{F}{C} \right).
\end{align}
\end{lemma}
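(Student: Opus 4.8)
The plan is to reduce everything to the purely algebraic associativity of the tensor product over rings and then to check that it is compatible with the inner products. At the level of the underlying $\ring{C}$-modules the tensor product over $\alg{B}$ and over $\alg{A}$ is associative, so there is a natural $\ring{C}$-linear isomorphism
\begin{align*}
    \Phi\colon \left(\rmod{H}{B} \tensor[\alg{B}] \bimod{B}{E}{A}\right) \tensor[\alg{A}] \bimod{A}{F}{C} \too \rmod{H}{B} \tensor[\alg{B}] \left(\bimod{B}{E}{A} \tensor[\alg{A}] \bimod{A}{F}{C}\right)
\end{align*}
determined by $(x \tensor[\alg{B}] y) \tensor[\alg{A}] z \mapsto x \tensor[\alg{B}] (y \tensor[\alg{A}] z)$; it is $\ring{C}$-linear and compatible with the right $\alg{C}$-module structures. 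It therefore suffices to show that $\Phi$ is isometric for the $\alg{C}$-valued inner products obtained by iterating the construction \eqref{eq:InneresProduktAufTensorproduktalgebra}, and that it then descends to an isometric isomorphism of the respective quotients by the degeneracy spaces.

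First I would compute both inner products explicitly. Applying \eqref{eq:InneresProduktAufTensorproduktalgebra} twice on the left-hand side yields, for all $x_1, x_2 \in \rmod{H}{B}$, $y_1, y_2 \in \bimod{B}{E}{A}$ and $z_1, z_2 \in \bimod{A}{F}{C}$,
\begin{align*}
    \rSP[\sss (\alg{H}\otimes\alg{E})\otimes\alg{F}]{(x_1 \tensor[\alg{B}] y_1) \tensor[\alg{A}] z_1, (x_2 \tensor[\alg{B}] y_2) \tensor[\alg{A}] z_2}{\alg{C}} = \rSP[\sss \alg{F}]{z_1, \rSP[\sss \alg{E}]{y_1, \rSP[\sss \alg{H}]{x_1, x_2}{\alg{B}} \cdot y_2}{\alg{A}} \cdot z_2}{\alg{C}}.
\end{align*}
On the right-hand side I would use that, by Lemma~\ref{Lemma:KanonischeLinkswirkungAufInneremTensorprodukt} (with the labels suitably permuted), the canonical $\alg{B}$-left action on $\bimod{B}{E}{A} \tensortilde[\alg{A}] \bimod{A}{F}{C}$ acts on the first tensor factor, so that $\rSP[\sss \alg{H}]{x_1,x_2}{\alg{B}} \cdot (y_2 \tensor[\alg{A}] z_2) = (\rSP[\sss \alg{H}]{x_1,x_2}{\alg{B}} \cdot y_2) \tensor[\alg{A}] z_2$. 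Expanding \eqref{eq:InneresProduktAufTensorproduktalgebra} for the outer and inner products then produces exactly the same iterated expression, whence $\Phi$ preserves inner products.

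Finally, an inner-product-preserving $\ring{C}$-linear bijection carries the degeneracy space of its source onto that of its target, so $\Phi$ descends to a $\ring{C}$-linear isometric isomorphism of the quotients, and this descended map inherits naturality from $\Phi$. Here one must be careful that the two sides are built by \emph{iterated} quotients rather than by a single quotient of the triple tensor product; the point to verify is the elementary fact that if a submodule $N$ lies inside the radical of a positive semidefinite inner product on $M$, then $(M/N)/(M/N)^{\perp}$ is canonically isometric to $M/M^{\perp}$. Applying this in each bracketing identifies both iterated quotients with the quotient of $\rmod{H}{B} \tensor[\alg{B}] \bimod{B}{E}{A} \tensor[\alg{A}] \bimod{A}{F}{C}$ by its total degeneracy space, and the isometry of $\Phi$ then transports one to the other. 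The main obstacle is exactly this bookkeeping of nested degeneracy spaces, together with the check that $\tensortilde$ may legitimately be used on the right, i.e.\ that the $\alg{A}$-valued and $\alg{C}$-valued degeneracy spaces on $\bimod{B}{E}{A} \tensortilde[\alg{A}] \bimod{A}{F}{C}$ coincide (Remark~\ref{Bemerkung:InneresProduktBimodul}); once this is settled, the coincidence of the two iterated inner products makes the isomorphism essentially automatic.
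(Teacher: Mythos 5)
Your proposal is correct and follows essentially the same route as the paper, which simply remarks that associativity is induced by the associativity of the algebraic tensor product and that the isometry of the inner products is a straightforward computation; you merely carry out that computation explicitly and add the (correct) bookkeeping of the nested degeneracy spaces that the paper leaves implicit.
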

\begin{proof}
    Die Assoziativit"at wird durch die Assoziativit"at des
    algebraischen Tensorprodukts induziert. Die Isometrie der inneren
    Produkte ist eine einfache Rechnung. 
\end{proof}

\begin{lemma}[Vertr"aglichkeit von $\tensorhatohne$ mit Morphismen]
\label{Lemma:VertraeglichkeittensorBmitMorphismen}
Seien die Bimoduln $\bimodrplus{B}{E}{A}$, $\bimodrplus{B}{E^{\prime}}{A}$ und
$\bimodrplus{A}{F}{C}$, $\bimodrplus{A}{F^{\prime}}{C}$ gegeben, wobei die
Linkswirkung jeweils kompatibel mit den inneren Produkten sei. Ferner
seien $S \in \alg{B}(\bimod{B}{E}{A}, \bimod{B}{E^{\prime}}{A})$ und $T \in
\alg{B}(\bimod{A}{F}{C}, \bimod{A}{F^{\prime}}{C})$. Das algebraische
Tensorprodukt $S \otimes_{\sss \alg{A}} T$ induziert einen
wohldefinierten Bimodulmorphismus $S \tensorhat[\alg{A}]
T:\bimod{B}{E}{A} \tensorhat[\alg{A}] \bimod{A}{F}{C} \to \bimod{B}{E^{\prime}}{A}
\tensorhat[\alg{A}] \bimod{A}{F^{\prime}}{C}$. Das Adjungieren erfolgt
komponentenweise und falls $S$ und $T$ isometrisch sind, dann ist
es auch $S \tensorhat[\alg{A}] T$.
\end{lemma}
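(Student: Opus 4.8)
The plan is to derive all four assertions --- well-definedness on the inner tensor product, the bimodule-morphism property, componentwise adjointness, and preservation of isometries --- from a single explicit computation of the induced $\alg{C}$-valued inner product on elementary tensors, using throughout the defining formula \eqref{eq:InneresProduktAufTensorproduktalgebra} for the inner product on $\bimod{B}{E}{A} \tensorhat[\alg{A}] \bimod{A}{F}{C}$.

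First I would record the linearity bookkeeping. Since $S$ and $T$ are adjointable with respect to the $\alg{A}$- resp.\ $\alg{C}$-valued inner products, they are automatically right-$\alg{A}$- resp.\ right-$\alg{C}$-linear by the non-degeneracy argument recorded in Bemerkung~\ref{Bemerkung:PraeHilbertModul}; the left-module linearities (left-$\alg{B}$ for $S$, left-$\alg{A}$ for $T$) I read as part of the bimodule-morphism hypothesis encoded in $\alg{B}(\bimod{B}{E}{A},\bimod{B}{E^{\prime}}{A})$ and $\alg{B}(\bimod{A}{F}{C},\bimod{A}{F^{\prime}}{C})$. These make the algebraic map $S \otimes_{\sss \alg{A}} T$, $x \tensor[\alg{A}] y \mapsto Sx \tensor[\alg{A}] Ty$, well-defined on the balanced tensor product over $\alg{A}$, because the balancing relation is respected: $S(xa)\tensor[\alg{A}] Ty = (Sx)a \tensor[\alg{A}] Ty = Sx \tensor[\alg{A}] a(Ty) = Sx \tensor[\alg{A}] T(ay)$.

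The central step is the adjointness identity already at the level of the unreduced tensor product. Here I also need the auxiliary fact that $T^{\ast}$ is again left-$\alg{A}$-linear; this follows, exactly as in Lemma~\ref{Lemma:KanonischeLinkswirkungAufInneremTensorprodukt}, from the compatibility \eqref{eq:KompatibilitaetBimodulMitLinkswirkung} of the left $\alg{A}$-action with the $\alg{C}$-valued inner product together with the left-$\alg{A}$-linearity of $T$ and non-degeneracy. With this in hand one computes, for $x_1 \in \bimod{B}{E}{A}$, $x_2 \in \bimod{B}{E^{\prime}}{A}$, $y_1 \in \bimod{A}{F}{C}$, $y_2 \in \bimod{A}{F^{\prime}}{C}$,
\begin{align*}
\rSP[\sss \alg{E}^{\prime}\otimes\alg{F}^{\prime}]{(S \otimes_{\sss \alg{A}} T)(x_1 \tensor[\alg{A}] y_1), x_2 \tensor[\alg{A}] y_2}{\alg{C}}
&= \rSP[\sss \alg{F}^{\prime}]{T y_1, \rSP[\sss \alg{E}^{\prime}]{S x_1, x_2}{\alg{A}} \cdot y_2}{\alg{C}} \\
&= \rSP[\sss \alg{F}^{\prime}]{T y_1, \rSP[\sss \alg{E}]{x_1, S^{\ast} x_2}{\alg{A}} \cdot y_2}{\alg{C}} \\
&= \rSP[\sss \alg{F}]{y_1, T^{\ast}\!\left(\rSP[\sss \alg{E}]{x_1, S^{\ast} x_2}{\alg{A}} \cdot y_2\right)}{\alg{C}} \\
&= \rSP[\sss \alg{E}\otimes\alg{F}]{x_1 \tensor[\alg{A}] y_1, (S^{\ast} \otimes_{\sss \alg{A}} T^{\ast})(x_2 \tensor[\alg{A}] y_2)}{\alg{C}},
\end{align*}
using \eqref{eq:InneresProduktAufTensorproduktalgebra}, the adjointness of $S$, the adjointness of $T$, and finally the left-$\alg{A}$-linearity of $T^{\ast}$ followed once more by \eqref{eq:InneresProduktAufTensorproduktalgebra}. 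By sesquilinearity this extends from elementary tensors to all of $\bimod{B}{E}{A} \otimes_{\sss \alg{A}} \bimod{A}{F}{C}$. Because $S^{\ast} \otimes_{\sss \alg{A}} T^{\ast}$ is thus an honest adjoint of $S \otimes_{\sss \alg{A}} T$ on the unreduced product, the map sends the degeneracy space into the degeneracy space (if $\rSP[\sss \alg{E}\otimes\alg{F}]{\psi,\xi}{\alg{C}}=0$ for all $\psi$, then $\rSP[\sss \alg{E}^{\prime}\otimes\alg{F}^{\prime}]{\eta,(S\otimes_{\sss \alg{A}} T)\xi}{\alg{C}} = \overline{\rSP[\sss \alg{E}\otimes\alg{F}]{(S^{\ast}\otimes_{\sss \alg{A}} T^{\ast})\eta,\xi}{\alg{C}}}=0$), hence descends to a well-defined map $S \tensorhat[\alg{A}] T$ on the inner tensor products, and the very same identity shows that this induced map is adjointable with $(S \tensorhat[\alg{A}] T)^{\ast} = S^{\ast} \tensorhat[\alg{A}] T^{\ast}$, i.e.\ that adjoining is componentwise. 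That $S \tensorhat[\alg{A}] T$ is a morphism of $(\alg{B},\alg{C})$-bimodules then follows from the left-$\alg{B}$-linearity of $S$ and the right-$\alg{C}$-linearity of $T$ together with the description of the canonical left action in Lemma~\ref{Lemma:KanonischeLinkswirkungAufInneremTensorprodukt}.

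For the isometry statement I would assume $\rSP[\sss \alg{E}^{\prime}]{Sx,Sx^{\prime}}{\alg{A}} = \rSP[\sss \alg{E}]{x,x^{\prime}}{\alg{A}}$ and $\rSP[\sss \alg{F}^{\prime}]{Ty,Ty^{\prime}}{\alg{C}} = \rSP[\sss \alg{F}]{y,y^{\prime}}{\alg{C}}$ and run the analogous chain: by \eqref{eq:InneresProduktAufTensorproduktalgebra} the inner product of $(S \otimes_{\sss \alg{A}} T)(x_1 \tensor[\alg{A}] y_1)$ with $(S \otimes_{\sss \alg{A}} T)(x_2 \tensor[\alg{A}] y_2)$ collapses to $\rSP[\sss \alg{F}^{\prime}]{Ty_1, \rSP[\sss \alg{E}]{x_1,x_2}{\alg{A}} \cdot Ty_2}{\alg{C}}$ after using the isometry of $S$; pulling $a = \rSP[\sss \alg{E}]{x_1,x_2}{\alg{A}}$ through $T$ by left-$\alg{A}$-linearity and applying the isometry of $T$ returns $\rSP[\sss \alg{E}\otimes\alg{F}]{x_1 \tensor[\alg{A}] y_1, x_2 \tensor[\alg{A}] y_2}{\alg{C}}$. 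The only genuinely delicate point in the whole argument is the auxiliary left-$\alg{A}$-linearity of $T^{\ast}$ (and the parallel right-linearity bookkeeping); everything else is formal manipulation of \eqref{eq:InneresProduktAufTensorproduktalgebra}, so I expect no essential obstacle beyond keeping careful track of which algebra each bracket is valued in.
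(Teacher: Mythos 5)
Your argument is correct and is exactly the standard proof; the paper in fact states this lemma without proof, so there is nothing to compare against beyond the expected computation, which you carry out accurately: adjointness of $S\otimes_{\sss\alg{A}}T$ with componentwise adjoint on the unreduced tensor product via \eqref{eq:InneresProduktAufTensorproduktalgebra}, descent to the quotient because an adjointable map sends the degeneracy space into the degeneracy space, and the isometry chain using the left-$\alg{A}$-linearity of $T$. The one point you rightly flag is the only genuinely delicate one: the symbol $\alg{B}(\cdot,\cdot)$ by itself only guarantees adjointability (hence right-linearity over the algebra carrying the inner product), while both the balancing relation over $\alg{A}$ and the bimodule-morphism claim need $T$ (and $T^{\ast}$) to be left-$\alg{A}$-linear and $S$ to be left-$\alg{B}$-linear; this is the intertwiner property implicit in reading $S$ and $T$ as morphisms of the categories $\smod[\alg{A}](\alg{B})$ and $\smod[\alg{C}](\alg{A})$, and your derivation of the left-$\alg{A}$-linearity of $T^{\ast}$ from that of $T$ together with \eqref{eq:KompatibilitaetBimodulMitLinkswirkung} and non-degeneracy is correct.
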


Eine wichtige Frage f"ur das weitere Vorgehen ist nun, ob das
innere Tensorprodukt vollst"andig positive innere Produkte erh"alt. Wie von 
\citet{bursztyn.waldmann:2003a:pre} gezeigt, ist dies der Fall.

\begin{satz}[{\citep[Thm.~4.7]{bursztyn.waldmann:2003a:pre}}]
\label{Satz:KomplettePositivitaetDesInnerenProdukts}
 Sei $\rmodplus{H}{B}$ ein $\alg{B}$-Rechtsmodul und
 $\bimodrplus{B}{E}{A}$ ein $(\alg{B},\alg{A})$-Bimodul, und beide
 inneren Produkte seien vollst"andig positiv nach Definition
 \ref{Definition:PraeHilbertModul}, dann ist das innere Produkt
 $\rSP[{\sss \alg{H} \otimes \alg{E}}]{\cdot, \cdot}{\alg{A}}$ auf
 $\rmod{H}{B} \tensorhat[\alg{B}] \bimod{B}{E}{A}$ auch vollst"andig
 positiv.
\end{satz}

Mittels $\tensorhatohne$ haben wir somit f"ur feste $^\ast$-Algebren
$\alg{A}$, $\alg{B}$ und $\alg{C}$ 
einen Funktor zwischen den Kategorien der Moduln 

\begin{align}
\label{eq:FunktorFuersmod}   
    \tensorhat[\alg{B}]: \smod[\alg{B}](\alg{C}) \times
    \smod[\alg{A}](\alg{B}) \to \smod[\alg{A}](\alg{C}).   
\end{align}

gefunden. Im Fall von $^\ast$-Algebren mit Einselement sind die
Kategorien $\smod(\cdot)$ durch $\sMod(\cdot)$ zu ersetzen. Wegen Satz
\ref{Satz:KomplettePositivitaetDesInnerenProdukts} ist $\tensorhatohne$ auch
f"ur die Kategorien der $^\ast$-Darstellungen ein Funktor in dem
Sinne, da"s  

\begin{align}
    \tensorhat[\alg{B}]: \srep[\alg{B}](\alg{C}) \times
    \srep[\alg{A}](\alg{B}) \to \srep[\alg{A}](\alg{C}). 
\end{align}

Auch hier sind die Kategorien $\srep(\cdot)$ durch $\sRep(\cdot)$ zu ersetzen, falls
die $^\ast$-Algebren dies zulassen, sprich ein Einselement besitzen. 
Die beiden wichtigen Beispiele f"ur das innere Tensorprodukt sind die
{\em \Name{Rieffel}-Induktion} und der {\em Wechsel der
  Basisalgebra}, die wir uns im folgenden ansehen wollen. 

\begin{definition}[Nat"urliche Transformation, {\citep{kassel:1995a}}]
\index{Natuerliche Transformation@Nat\"urliche Transformation}
  Seien $F,G: \kat{A} \to \kat{B}$ zwei Funktoren. Eine {\em
    nat"urlichen Transformation $I$} ist eine Familie von Morphismen von
  $F$ nach $G$ (man schreibt $I: F \to G$), indiziert durch die
  Elemente der Kategorie $\kat{A}$, so da"s f"ur jeden Morphismus
  $\Morph(\kat{A}) \ni f: A \to A^{\prime}$ mit $A,A^{\prime} \in
  \Obj(\kat{A})$ das folgende Diagramm kommutiert:
\begin{equation}
\label{eq:KommutativesDiagramm:NatuerlicheTransformation}
\bfig
\square<600,600>[F(A)`G(A)`F(A^{\prime})`{G(A^{\prime}).};I(A)`F(f)`G(f)`I(A^{\prime})]
\efig
\end{equation}
Ist weiter $I(A)$ ein Isomorphismus von $\kat{B}$, so nennt man $I: F
\to G$ einen {\em nat"urlichen Isomorphismus}. 
\end{definition}

\begin{beispiele}[\Name{Rieffel}-Induktion, Wechsel der Basisalgebra]
\label{Beispiel:RieffelInduktion}
\index{Rieffel-Induktion@\Name{Rieffel}-Induktion|textbf}
\index{Wechsel der Basisalgebra}
 Seien $\alg{A}$, $\alg{B}$ und $\alg{D}$, $\alg{D}'$
 $^\ast$-Algebren "uber einem Ring
$\ring{C}$. Desweiteren sei $\bimod{B}{E}{A} \in
\srep[\alg{A}](\alg{B})$ und $\bimod{D}{G}{D'} \in \srep[\alg{D'}](\alg{D})$. 
\begin{compactenum}
\item Wir bezeichnen den Funktor 
\begin{align}
    \label{eq:RieffelInduktion}
    \rieffel{R}{\alg{E}} = \bimod{B}{E}{A}\tensorhat[\alg{A}] \cdot :
    \srep[\alg{D}](\alg{A}) \to \srep[\alg{D}](\alg{B}) 
\end{align}
als {\em \Name{Rieffel}-Induktion}. Dies bedeutet, auf einem Objekt der
Kategorie gilt
\begin{align*}
\rieffel{R}{\alg{E}}(\rmod{H}{D}) & = \bimod{B}{E}{A}
\tensorhat[\alg{A}] \rmod{H}{D}.  
\end{align*}
Angewendet auf einen Morphismus ergibt sich
\begin{align*}
\rieffel{R}{\alg{E}}(T) & = \id \tensorhat[\alg{A}] T, 
\end{align*}
f"ur $T\in \alg{B}(\alg{H},\alg{H}')$.    

\item Analog dazu kann man die Basisalgebra wechseln und den folgenden
    Funktor
\begin{align}
    \label{eq:WechselDerBasisAlgebra}
    \rieffel{S}{\alg{G}} = \cdot \tensorhat[\alg{D}] \bimod{D}{G}{D'}:
    \srep[\alg{D}](\alg{A}) \to \srep[\alg{D'}](\alg{A})  
\end{align}
angeben. Angewendet auf ein Objekt bedeutet dies
$\rieffel{S}{\alg{G}}(\bimod{A}{E}{D})= \bimod{A}{E}{D} \tensorhat[\alg{D}]
\bimod{D}{G}{D'}$ und auf einen Morphismus $\rieffel{S}{\alg{G}}(T) =
T \tensorhat[\alg{D}] \id$.
\end{compactenum}
\end{beispiele}

Aufgrund von Lemma~\ref{Lemma:AssoziativitaetInneresTensorprodukt}
kommutieren die beiden Funktoren $\rieffel{R}{\alg{E}}$ und
$\rieffel{S}{\alg{G}}$ bis auf eine nat"urliche
Transformation, so da"s $\rieffel{R}{\alg{E}}
\circ\rieffel{S}{\alg{G}} \cong \rieffel{S}{\alg{G}} \circ
\rieffel{R}{\alg{E}}$, was gleichbedeutend mit der Kommutativit"at
des folgenden Diagramms ist:

$$\bfig\square/->`->`->`->/<900,600>[{\srep[\alg{D}](\alg{A})}`{\srep[\alg{D'}](\alg{A})}`{\srep[\alg{D}](\alg{B})}`{\srep[\alg{D'}](\alg{B}).};{\rieffel{S}{\alg{G}}}`{\rieffel{R}{\alg{E}}}`{\rieffel{R}{\alg{E}}}`{\rieffel{S}{\alg{G}}}]\efig$$

\begin{bemerkung}[\Name{Rieffel}-Induktion Notation]
\label{Bemerkung:RieffelInduktionNotation}
Manchmal ist es zweckm"a"sig, die Notation der
\Name{Rieffel}-Induktion nicht zu formal zu schreiben. Gegeben eine
Darstellung $(\rmod{H}{D}, \pi)$, so bezeichnen wir auch kurz $\pi$
als Darstellung. Dementsprechend verstehen wir unter
$\rieffel{R}{\alg{E}} (\pi)$ die Darstellung, die wir durch Anwenden der
\Name{Rieffel}-Induktion
gem"a"s Beispiel \ref{Beispiel:RieffelInduktion} auf $(\rmod{H}{D}, \pi)$ erhalten. Ein
Anwenden von $\rieffel{R}{\alg{E}}$ auf $\pi$ als {\em Homomorphismus} ist
folglich nicht definiert. 
\end{bemerkung}

\section{\Name{Morita}-"Aquivalenz}
\label{sec:MoritaAequivalenz}
\index{Morita-Aequivalenz@\Name{Morita}-\"Aquivalenz}

In diesem Kapitel werden wir einen kurzen "Uberblick zu den 
Ideen der \Name{Morita}-Theorie geben. F"ur einen ausgiebigen "Uberblick verweisen wir auf
\citep{morita:1958a,bass:1968a,jacobson:1989a,lam:1999a} f"ur die
{\em ringtheoretischen
  \Name{Morita}-"Aquivalenz}\index{Morita-Aequivalenz@\Name{Morita}-\"Aquivalenz!ringtheoretische}auf \citep{ara:1999a,ara:1999b}, der  die  
 {\em {$^\ast$-}\Name{Morita}-"Aquivalenz}\index{Morita-Aequivalenz@\Name{Morita}-\"Aquivalenz!Stern@$^\ast$-} behandelt, sowie
 \citep{bursztyn.waldmann:2001a,bursztyn.waldmann:2001b,bursztyn.waldmann:2003a:pre} f"ur die {\em starke \Name{Morita}-"Aquivalenz}\index{Morita-Aequivalenz@\Name{Morita}-\"Aquivalenz!starke}. Desweiteren spielen die Arbeiten \citep{rieffel:1972a,rieffel:1974a,rieffel:1974b} eine wichtige Rolle, in denen die Grundlagen f"ur die $^\ast$- und starke \Name{Morita}-"Aquivalenz f"ur $C^{\ast}$-Algebren\index{Algebra!CStern-Algebra@$C^{\ast}$-Algebra} gelegt wurden.     
 
Motiviert ist die \Name{Morita}-"Aquivalenz durch die
Darstellungstheorie von Ringen bzw.~Algebren:
\Name{Morita}-"aquivalente Ringe bzw.~Algebren haben eine "aquivalente
Darstellungstheorien. Wie wir sehen werden, sind noch weitere Eigenschaften von Ringen
bzw.~Algebren unter \Name{Morita}-"Aquivalenz erhalten, siehe Satz
\ref{Satz:MoritaInvariantenRingtheoretisch} und Kapitel
\ref{sec:MoritaInvarianten}.   

Ausgangspunkt bilden die Kategorie der
Moduln\index{Kategorie!Moduln@der Moduln} f"ur Ringe bzw.~die
Kategorie der Algebren, deren "Aquivalenz als
Kategorien zu \Name{Morita}-"Aquivalenz der jeweiligen Ringe oder Algebren
f"uhrt. Der Unterschied bei den verschiedenen "Aquivalenzbegriffen
besteht darin, da"s der Ring bzw.~die Algebra (sowie der Funktor, der die
"Aquivalenz generieren wird) zus"atzliche
Strukturen tragen k"onnen. Dies werden wir in den n"achsten Kapiteln
genauer erl"autern.

\subsection{Ringtheoretische \Name{Morita}-"Aquivalenz} 
\label{sec:MoritaBimodulnRingtheoretisch}

Die grundlegende Idee der ringtheoretischen \Name{Morita}-Theorie ist
eine Klassifikation von Ringen mittels Ihrer Darstellungstheorie als
Endomorphismen von \Name{Abel}schen Gruppen. Dabei betrachten wir in
diesem Kapitel Ringe im Sinne von Definition \ref{Definition:Ring},
die nicht notwendigerweise geordnet sein m"ussen. 

\begin{definition}["Aquivalenz und Isomorphie von Kategorien]
\label{Definition:AequivalenzVonKategorien}
\index{Aequivalenz@\"Aquivalenz!Kategorien@von Kategorien}
\index{Isomorphie!Kategorien@von Kategorien}
    Man nennt zwei Kategorien $\kat{A}$ und $\kat{B}$ {\em
      "aquivalent}, falls es zwei kovariante Funktoren $F: \kat{A} \to
    \kat{B}$ und $G: \kat{B} \to \kat{A}$ gibt, so da"s das
    Hintereinanderausf"uhren bis auf einen nat"urliche
    Transformation die jeweiligen Identit"aten auf den Kategorien liefern
    \begin{align}
        \label{eq:AequivalenzVonKategorien}
        F \circ G \cong \Id_{\sss \kat{B}} \quad \text{und} \quad G \circ
        F \cong \Id_{\sss \kat{A}}.
    \end{align}
Ferner nennt man die beiden Kategorien {\em isomorph} wenn das
Hintereinanderausf"uhren der beiden Funktoren die Identit"at auf den
Kategorien ist
 \begin{align}
        \label{eq:IsomorphieVonKategorien}
        F \circ G = \Id_{\sss \kat{B}} \quad \text{und} \quad G \circ
        F = \Id_{\sss \kat{A}}.
    \end{align}
\end{definition}

Dabei bezeichnet $\Id$ den {\em Identit"atsfunktor}
\index{Identitaetsfunktor@Identit\"atsfunktor} auf der jeweiligen
Kategorie. 

\begin{definition}[\Name{Morita}-"Aquivalenz von Ringen]
    \label{Definition:MoritaAequivalenzRinge}
Man nennt zwei Ringe $\ring{R}$ und $\ring{S}$ {\em
  \Name{Morita}-"aquivalent}, falls die Kategorien $\katlMod{\ring{R}}$
und $\katlMod{\ring{S}}$ "a\-qui\-va\-lent im Sinne von Definition
\ref{Definition:AequivalenzVonKategorien} sind.  
\end{definition}

Die Frage nach der \Name{Morita}-"Aquivalenz zweier Ringe manifestiert
sich damit in der Suche nach zwei geeigneten Funktoren. Bei zwei
gegebenen Ringen ist es im allgemeinen sehr schwierig, diese
Funktoren zu finden. Wir wollen im weiteren ein Beispiel f"ur
\Name{Morita}-"aquivalente Ringe angeben. Sp"ater werden wir sehen,
da"s die \Name{Rieffel}-Induktion genau die Funktoren liefert um
\Name{Morita}-"Aquivalenz elegant zu formulieren.

\begin{beispiel}[{\citep[Thm.~17.20]{lam:1999a}}]
\label{Beispiel:MoritaRingUndMatrizenUeberRing}
Ein einfaches und wichtiges Beispiel ist ein Ring $\ring{R}$ und der
Ring $M_{n}(\ring{R})$, den $n\times n$-Matrizen "uber dem Ring
$\ring{R}$. Sei $V$  
ein $\ring{R}$-Linksmodul, dann definieren wir den
$M_{n}(\ring{R})$-Linksmodul durch $_{M_{n}(\ring{R})} F(V)=V^{n}$, wobei die Linkswirkung
mittels Matrixmultiplikation auf Vektoren gegeben ist. F"ur den Ring
$\ring{R}$ ist $V^{n}$, durch komponentenweise Multiplikation, ein
Rechtsmodul. Die andere Richtung verl"auft analog, und somit ist
"uber den Funktor $F: \katlMod{\ring{R}} \to
\katlMod{M_{n}(\ring{R})}$ eine "Aquivalenz von Kategorien gegeben
und $\ring{R}$ und $M_{n}(\ring{R})$ sind \Name{Morita}-"aquivalent.
\end{beispiel}

\begin{lemma}[\Name{Morita}-"Aquivalenz von kommutativen Ringen]
    Zwei kommutative Ringe sind genau dann
    \Name{Morita}-"aquivalent, wenn sie isomorph sind.
\end{lemma}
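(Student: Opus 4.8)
The plan is to prove the two implications separately. The direction ``isomorph $\Rightarrow$ Morita-"aquivalent'' is immediate: a ring isomorphism $\phi\colon\ring{R}\to\ring{S}$ induces, by restriction of scalars along $\phi$ and along $\phi^{-1}$, a pair of mutually inverse functors between $\katlMod{\ring{R}}$ and $\katlMod{\ring{S}}$, hence even an isomorphism of categories in the sense of Definition~\ref{Definition:AequivalenzVonKategorien}, and in particular an equivalence; so $\ring{R}$ and $\ring{S}$ are Morita-"aquivalent by Definition~\ref{Definition:MoritaAequivalenzRinge}.

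For the converse the key point is that the center is a Morita invariant, because it is recoverable from the module category alone. I would establish a ring isomorphism $\zentrum{\ring{R}}\cong\End{\Id_{\katlMod{\ring{R}}}}$, the right-hand side being the ring --- under vertical composition --- of natural transformations of the identity functor to itself. In one direction a central element $z$ is sent to the transformation $\eta^{z}$ with components $\eta^{z}_{M}\colon m\mapsto zm$; left-$\ring{R}$-linearity of each component is equivalent to centrality of $z$, naturality is automatic, and composition of $\eta^{z}$ and $\eta^{w}$ corresponds to the product $zw$. In the other direction one evaluates a natural transformation $\eta$ on the regular module $\lmodo{\ring{R}}{\ring{R}}$: the component $\eta_{\ring{R}}$ is left-$\ring{R}$-linear, hence right multiplication by $c:=\eta_{\ring{R}}(1_{\sss\ring{R}})$. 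Naturality against the right multiplications $\rho_{s}\colon x\mapsto xs$ --- which are exactly the left-module endomorphisms of $\ring{R}$ --- forces $sc=cs$ for all $s$, so that $c\in\zentrum{\ring{R}}$, and naturality against the maps $r\mapsto rm$ then yields $\eta=\eta^{c}$ on every module.

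Since an equivalence of categories transports the natural transformations of the one identity functor to those of the other compatibly with vertical composition, it induces a ring isomorphism $\End{\Id_{\katlMod{\ring{R}}}}\cong\End{\Id_{\katlMod{\ring{S}}}}$; combined with the previous step this gives $\zentrum{\ring{R}}\cong\zentrum{\ring{S}}$. As $\ring{R}$ and $\ring{S}$ are commutative one has $\zentrum{\ring{R}}=\ring{R}$ and $\zentrum{\ring{S}}=\ring{S}$, whence $\ring{R}\cong\ring{S}$, which is exactly the claim.

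The main obstacle is the surjectivity half of the central step, namely that \emph{every} natural transformation of the identity is of the form $\eta^{c}$ for some central $c$. This is precisely where the naturality squares for the maps $\rho_{s}$ and $r\mapsto rm$ must be chased, and, crucially, where commutativity of $\ring{R}$ is not yet available, so the argument must genuinely produce a central element rather than assume one. The remaining verifications --- additivity and multiplicativity of $z\mapsto\eta^{z}$, and that an equivalence respects $\End{\Id}$ --- are routine, and I would only sketch them.
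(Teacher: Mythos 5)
Your proof is correct and is precisely the argument behind the reference the paper gives in place of a proof (\citep[Cor.~18.42]{lam:1999a}): one identifies $\zentrum{\ring{R}}$ with the ring of natural transformations of the identity functor on $\katlMod{\ring{R}}$, observes that this ring is transported isomorphically by any equivalence of categories, and then uses commutativity to conclude $\ring{R}=\zentrum{\ring{R}}\cong\zentrum{\ring{S}}=\ring{S}$. The two naturality chases you single out as the crux (centrality of $c=\eta_{\ring{R}}(1_{\sss\ring{R}})$ via the right multiplications, and $\eta=\eta^{c}$ via the maps $r\mapsto rm$) are carried out correctly, so there is no gap.
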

\begin{proof}
    Der Beweis findet sich in \citep[Cor.~18.42]{lam:1999a}.
\end{proof}

Nun stellt sich eine wichtige Frage: Welche tiefere Bedeutung hat
\Name{Morita}-"Aquivalenz? Ein erster Schritt ist nun nach
{\em
  \Name{Morita}-Invarianten} zu suchen, d.~h.~welche Eigenschaften 
der Ringe sind unter \Name{Morita}-"Aquivalenz erhalten und welche nicht. 
Das erste von uns behandelte Beispiel
\ref{Beispiel:MoritaRingUndMatrizenUeberRing} zeigt bereits, da"s 
weder {\em Dimension}\footnote{In dem angegebenen
  Beispiel~\ref{Beispiel:MoritaRingUndMatrizenUeberRing} kann man von
  einer Dimension im Sinn einer Vektorraumdimension sprechen.}  noch {\em Kommutativit"at} 
\Name{Morita}-Invarianten sein k"onnen.  

\begin{satz}[\Name{Morita}-Invarianten]
\label{Satz:MoritaInvariantenRingtheoretisch}
\index{Morita-Invariante@\Name{Morita}-Invariante} 
Die beiden Ringe $\ring{R}$ und $\ring{S}$ seien im Sinne von
 Definition \ref{Definition:MoritaAequivalenzRinge}
 \Name{Morita}-"aquivalent. Der Ring $\ring{R}$ ist nur genau dann 
 einfach, halbeinfach, \Name{Noether}sch, \Name{Artin}sch oder
 primitiv, wenn der Ring $\ring{S}$ einfach, halbeinfach, \Name{Noether}sch, \Name{Artin}sch oder
 primitiv ist. Desweiteren haben die beiden Ringe $\ring{R}$ und
 $\ring{S}$ $^\ast$-isomorphe Zentren $\zentrum{\ring{R}}$ und
 $\zentrum{\ring{S}}$ sowie isomorphe \Name{Jacobson}-Radikale 
 $J(\ring{R})$ und $J(\ring{S})$.
\end{satz}

\begin{proof}
    F"ur die Beweise verweisen wir auf \citep{lam:1999a,jacobson:1989a}.
\end{proof}

In Kapitel \ref{sec:MoritaInvarianten} werden wir uns im Rahmen der
(unter einer \Name{Hopf}-$^\ast$-Algebra $H$ "aquivarianten) $^\ast$- und
starken \Name{Morita}-"Aquivalenz mit 
\Name{Morita}-Invarianten auseinandersetzen.
 Es stellt sich heraus, da"s \Name{Morita}-"aquivalente
$^\ast$-Al\-ge\-bren {\em isomorphe $K$-Theorie\index{K-Theorie@$K$-Theorie}, isomorphe
  Verb"ande\index{Verband!Ideale@der Ideale} von Idealen} und  {\em isomorphe Zentren}
haben. Wie bereits erw"ahnt haben \Name{Morita}-"aquivalente
$^\ast$-Al\-ge\-bren auch {\em isomorphe Darstellungstheorien}, was
insbesondere in der Physik eine wichtige Anwendung findet.

\subsection{\Name{Morita}-"Aquivalenz und \Name{Rieffel}-Induktion}
\label{sec:RieffelInduktion}

Wir wollen nun das Konzept der \Name{Morita}-"Aquivalenz ein wenig
brauchbarer formulieren. Dazu wollen wir ein \glqq Rezept\grqq{} angeben,
Funktoren zu konstruieren, die notwendig sind um "uber
\Name{Morita}-"Aquivalenz zu entscheiden. Dies geht   
auf einen Satz von \citet{eilenberg:1960a} und
\citet{watts:1960a} zur"uck und wurde von
\citet{rieffel:1972a,rieffel:1974a,rieffel:1974b} sp"ater auf
$C^{\ast}$- und $W^{\ast}$-Algebren angewendet. In
\citep{landsman:1998a} findet man eine moderne und ausf"uhrliche Darstellung dessen. 
Die Idee besteht darin, den in Beispiel
\ref{Beispiel:MoritaRingUndMatrizenUeberRing} angegebenen Funktor
$F:\katlMod{\ring{R}} \to \katlMod{\ring{S}}$ auf eine kanonische
Weise angeben zu k"onnen, wenn er denn existiert. Dazu ben"otigen
wir Bimoduln, die wir bereits in Definition \ref{Definition:Bimodul} eingef"uhrt
haben. 


Mit Hilfe der Bimoduln sind wir in der Lage den Satz von
\citet{eilenberg:1960a} und \citet{watts:1960a} f"ur Ringe zu formulieren. 

\begin{satz}[Satz von \Name{Eilenberg}-\Name{Watts},
    \citep{eilenberg:1960a, cartan.eilenberg:1999a}]
    \label{Satz:EilenbergWattsTheorem}
\index{Satz!Eilenberg-Watts@von \Name{Eilenberg}-\Name{Watts}}
Seien $\ring{R}$ und $\ring{S}$ Ringe und $\katlMod{\ring{R}}$
bzw.~$\katlMod{\ring{S}}$ die Kategorien der 
$\ring{R}$-Links\-mo\-duln bzw.~der $\ring{S}$-Links\-mo\-duln. 
$F:\katlMod{\ring{R}} \to \katlMod{\ring{S}}$ sei ein kovarianter,
additiver\footnote{{\em Additiv} bedeutet in diesem Zusammenhang,
  da"s der Funktor die durch die Ringstruktur gegebene additive
  Struktur der Morphismen respektiert.} Funktor. 
$F(\ring{R})$ wird zu einem $\ring{S}$-Links\-mo\-dul und einem
$\ring{R}$-Rechts\-mo\-dul, so da"s
$\bimodo{\ring{S}}{F(\ring{R})}{\ring{R}}$ ein
$(\ring{S},\ring{R})$-Bimodul wird, und
\begin{align}
   \bimodo{\ring{S}}{F(\ring{R})}{\ring{R}} \otimes_{\sss \ring{R}}
   \cdot :\katlMod{\ring{R}} \to
    \katlMod{\ring{S}}  
\end{align}
ist ein additiver, kovarianter Funktor.
\end{satz}

\begin{lemma}[\citep{eilenberg:1960a}]
\label{Lemma:Eilenberg}
Sei $F: \katlMod{\ring{R}} \to \katlMod{\ring{S}}$ ein additiver,
kovarianter Funktor und $\lmodo{\ring{R}}{\alg{E}}$ ein
$\ring{R}$-Linksmodul.
\begin{compactenum}
\item Die Gruppe
$\Hom[\ring{R}](\lmodo{\ring{R}}{\alg{E}}, F(\ring{R}))$ ist f"ur
jeden $\ring{R}$-Linksmodul $\lmodo{\ring{R}}{\alg{E}}$ ein wohldefinierter
$\ring{S}$-Linksmodul.
\item $\Hom[\ring{R}](\cdot , F(\ring{R})):\katlMod{\ring{R}} \to
    \katlMod{\ring{S}}$ ist ein additiver, kontravarianter Funktor. 
\end{compactenum}     
\end{lemma}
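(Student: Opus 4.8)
The plan is to read off both assertions from the $(\ring{S},\ring{R})$-Bimodul structure of $F(\ring{R})$ provided by Satz~\ref{Satz:EilenbergWattsTheorem}: there $F(\ring{R})$ is equipped with a left $\ring{S}$-action and a right $\ring{R}$-action which commute, and this commutativity is the only non-formal input in what follows. Indeed, $F$ itself enters only through the fixed bimodule $\bimodo{\ring{S}}{F(\ring{R})}{\ring{R}}$, so no property of $F$ beyond additivity and covariance (already used to build that bimodule) will be needed.

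For the first claim I observe first that $\Hom[\ring{R}](\lmodo{\ring{R}}{\alg{E}}, F(\ring{R}))$ is an abelian group under pointwise addition, since a sum of $\ring{R}$-linear maps is again $\ring{R}$-linear. I then define a left $\ring{S}$-action by transporting the one on $F(\ring{R})$,
\begin{align*}
    (s \cdot \phi)(x) := s \cdot (\phi(x)),
\end{align*}
for $s \in \ring{S}$, $\phi \in \Hom[\ring{R}](\lmodo{\ring{R}}{\alg{E}}, F(\ring{R}))$ and $x \in \lmodo{\ring{R}}{\alg{E}}$. The one step with content is to check that $s \cdot \phi$ is again $\ring{R}$-linear: here I pull $s$ through the defining $\ring{R}$-linearity relation using precisely that the left $\ring{S}$-action commutes with the $\ring{R}$-action on $F(\ring{R})$. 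The module axioms $(s+t)\cdot\phi = s\cdot\phi + t\cdot\phi$, $(st)\cdot\phi = s\cdot(t\cdot\phi)$ and $1_{\sss \ring{S}}\cdot\phi = \phi$ then hold pointwise, inherited from $F(\ring{R}) \in \katlMod{\ring{S}}$; since $1_{\sss \ring{S}}$ acts as the identity, we even obtain $\ring{S}\cdot\Hom[\ring{R}](\lmodo{\ring{R}}{\alg{E}}, F(\ring{R})) = \Hom[\ring{R}](\lmodo{\ring{R}}{\alg{E}}, F(\ring{R}))$, so the group really is an object of $\katlMod{\ring{S}}$.

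For the second claim I let the functor act on a morphism $T: \lmodo{\ring{R}}{\alg{E}} \to \lmodo{\ring{R}}{\alg{E'}}$ of $\katlMod{\ring{R}}$ by precomposition,
\begin{align*}
    \Hom[\ring{R}](T, F(\ring{R})) : \phi \longmapsto \phi \circ T .
\end{align*}
Since a composite of $\ring{R}$-linear maps is $\ring{R}$-linear this is well defined, and because precomposition reverses the order of composition it is contravariant: $\Hom[\ring{R}](T' \circ T, F(\ring{R})) = \Hom[\ring{R}](T, F(\ring{R})) \circ \Hom[\ring{R}](T', F(\ring{R}))$ and $\Hom[\ring{R}](\id, F(\ring{R})) = \id$, both immediate from associativity of composition. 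It remains to verify that each $\Hom[\ring{R}](T, F(\ring{R}))$ is a morphism in $\katlMod{\ring{S}}$, i.e.\ $\ring{S}$-linear, which reduces to the pointwise identity $((s\cdot\phi)\circ T)(x) = s\cdot\phi(T(x)) = (s\cdot(\phi\circ T))(x)$, and that the assignment $T \mapsto \Hom[\ring{R}](T, F(\ring{R}))$ is additive, which again follows pointwise.

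The only genuine obstacle is thus the well-definedness of the $\ring{S}$-action in the first claim; all remaining verifications are formal consequences of the module axioms and of associativity of composition. The essential content has been pushed into Satz~\ref{Satz:EilenbergWattsTheorem}, where the commuting two-sided action on $F(\ring{R})$ was established.
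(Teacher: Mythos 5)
Your argument is correct, but note that the paper does not actually prove this lemma at all: its ``proof'' consists of the single sentence that the proof can be found in \citep{eilenberg:1960a,cartan.eilenberg:1999a}. So you are not reproducing the paper's route but supplying the argument it delegates to the literature, and what you give is exactly the standard one: the $\ring{S}$-action on $\Hom[\ring{R}](\lmodo{\ring{R}}{\alg{E}}, F(\ring{R}))$ by post-composition with the $\ring{S}$-action on $F(\ring{R})$, functoriality by pre-composition, and the observation that the only non-formal input is the commutation of the two actions on $F(\ring{R})$ established in Satz~\ref{Satz:EilenbergWattsTheorem}. All the individual verifications (contravariance, $\ring{S}$-linearity of $\Hom[\ring{R}](T,F(\ring{R}))$, additivity, unitality) are carried out correctly.

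One bookkeeping point deserves a remark, although it originates in the paper's formulation rather than in your proof. Satz~\ref{Satz:EilenbergWattsTheorem} equips $F(\ring{R})$ with a \emph{right} $\ring{R}$-action, coming from $F$ applied to the right multiplications $x \mapsto xr$, which are the endomorphisms of $\ring{R}$ as a left module over itself; there is no natural \emph{left} $\ring{R}$-action on $F(\ring{R})$. Your key step --- pulling $s$ through the relation $\phi(rx) = r\,\phi(x)$ using that the $\ring{S}$- and $\ring{R}$-actions commute --- presupposes a left $\ring{R}$-module structure on the target, since that relation is what ``$\phi \in \Hom[\ring{R}](\lmodo{\ring{R}}{\alg{E}}, F(\ring{R}))$'' means for a left module $\lmodo{\ring{R}}{\alg{E}}$. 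As literally typed against the bimodule $\bimodo{\ring{S}}{F(\ring{R})}{\ring{R}}$, the Hom group in the statement does not quite make sense; one either has to read $\alg{E}$ as a right module, or to read the lemma in the contravariant-functor setting (as in Eilenberg's original paper), where $F(\ring{R})$ does acquire a left $\ring{R}$-action commuting with the $\ring{S}$-action. You should flag that you are proving this charitable reading; once that is fixed, every step of your proof goes through.
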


\begin{proof}
Der Beweis findet sich in \citep{eilenberg:1960a,cartan.eilenberg:1999a}.
\end{proof}

Satz~\ref{Satz:EilenbergWattsTheorem} ist auf ringtheoretischen Niveau
die
\Name{Rieffel}-Induktion\index{Rieffel-Induktion@\Name{Rieffel}-Induktion},
die wir bereits in Beispiel~\ref{Beispiel:RieffelInduktion} kennengelernt
haben, allerdings mit der einer zus"atzlichen additiven Struktur,
die durch die Ringe gegeben ist.   

Wir wollen nun angeben, was wir unter einem {\em dualen
  Bimodul} zu einem gegebenen $(\ring{R},\ring{S})$-Bimodul
$\bimodo{\ring{R}}{\alg{E}}{\ring{S}}$ verstehen. Dieser
wird eine wichtige Rolle spielen, wenn wir mit Hilfe der Bimoduln
\Name{Morita}-"Aquivalenz charakterisieren wollen, bzw.~wenn wir 
sp"ater das \Name{Picard}-Gruppoid von Algebren betrachten. Wir k"onnen auf
nat"urliche Weise mittels Lemma~\ref{Lemma:Eilenberg} den dualen
$(\ring{S},\ring{R})$-Bimodul
$\bimodo{\ring{S}}{\alg{E}^{\ast}}{\ring{R}}$ konstruieren.

\begin{korollar}[Dualer Bimodul -- Ringe]
\label{Korollar:DualerBimodulRinge}
\index{Bimodul!dualer}
Seien $\ring{R}$ und $\ring{S}$ zwei Ringe und
$\bimodo{\ring{R}}{\alg{E}}{\ring{S}}$ sei ein $(\ring{R},\ring{S})$-Bimodul. Dann existiert
ein dazu {\em dualer $(\ring{S},\ring{R})$-Bimodul} $\bimodo{\ring{S}}{\alg{E}^{\ast}}{\ring{R}}$
mittels
\begin{align}
    \bimodo{\ring{S}}{\alg{E}^{\ast}}{\ring{R}}:=
    \Hom[\ring{R}](\bimodo{\ring{R}}{\alg{E}}{\ring{S}},\ring{R}).  
\end{align}
Der duale Bimodul besteht somit aus den $\ring{R}$-linearen Homomorphismen
vom urspr"unglichen Bimodul in den Ring $\ring{R}$. 
 \end{korollar}
\begin{proof}
Der Bimodul $\bimodo{\ring{S}}{\alg{E}^{\ast}}{\ring{R}}$ ist auf
nat"urliche Weise ein $\ring{S}$-Linksmodul
(vgl.~Lemma~\ref{Lemma:Eilenberg}) und ein
$\ring{R}$-Rechtsmodul. Die Modulstrukturen sind durch folgende
Konstruktion gegeben: sei $\phi \in
\bimodo{\ring{S}}{\alg{E}^{\ast}}{\ring{R}}$, $s \in \ring{S}$, $r \in \ring{R}$ und
$x \in \bimodo{\ring{R}}{\alg{E}}{\ring{S}}$, so ist $(\phi \cdot
r)(x) = \phi(r \cdot x)$ und $(s \cdot \phi)(x) = \phi(x \cdot s)$. Die
weiteren Modulstrukturen rechnet man leicht nach. 
\end{proof}

Nun k"onnen wir Definition \ref{Definition:MoritaAequivalenzRinge} so
deuten, da"s zwei Ringe $\ring{R}$ und $\ring{S}$ genau dann
\Name{Morita}-"aquivalent sind, wenn es gelingt einen \glqq geeigneten\grqq{}
$(\ring{R},\ring{S})$-Bimodul zu finden. In Korollar
\ref{Korollar:Aequivalenzbimodul} werden wir angeben, wann ein Bimodul
geeignet ist. 
 
\begin{korollar}[Bimodul und \Name{Morita}-"Aquivalenz]
\label{Korollar:Aequivalenzbimodul}
\index{Morita-Aequivalenz@\Name{Morita}-\"Aquivalenz|textbf}
Zwei Ringe $\ring{R}$ und $\ring{S}$ sind dann und nur dann
\Name{Morita}-"aquivalent, wenn es zwei (zueinander duale) Bimoduln 
$\bimodo{\ring{R}}{\alg{E}}{\ring{S}}$ und
$\bimodo{\ring{S}}{\alg{E}^{\ast}}{\ring{R}}$ gibt, so da"s
\begin{align}
   \bimodo{\ring{R}}{\alg{E}}{\ring{S}} \otimes_{\sss \ring{S}}
   \bimodo{\ring{S}}{\alg{E}^{\ast}}{\ring{R}} \cong
   \bimodo{\ring{R}}{\ring{R}}{\ring{R}} \quad \text{und} \quad 
   \bimodo{\ring{S}}{\alg{E}^{\ast}}{\ring{R}} \otimes_{\sss \ring{R}}
   \bimodo{\ring{R}}{\alg{E}}{\ring{S}} \cong
   \bimodo{\ring{S}}{\ring{S}}{\ring{S}}. 
\end{align}
\end{korollar}

\begin{definition}["Aqui\-va\-lenz\-bi\-mo\-dul]
    \label{Definition:Aequivalenzbimodul}
\index{Aequivalenzbimodul@\"Aquivalenzbimodul}
Man nennt einen Bimodul f"ur zwei Ringe $\ring{R}$ und $\ring{S}$
einen {\em "Aqui\-va\-lenz\-bi\-mo\-dul} oder einen {\em
  \Name{Morita}-"Aqui\-va\-lenz\-bi\-mo\-dul}, falls dieser Korollar
\ref{Korollar:Aequivalenzbimodul} gerecht wird.  
\end{definition}

Wir wollen nun den {\em Satz von \Name{Morita}} formulieren, dazu
ben"otigen wir zuvor folgende Definitionen.

\begin{definition}[Generator und Progenerator]
 \label{Definition:GeneratorProgenerator}
 Man nennt einen  $\ring{R}$-Rechtsmodul $\rmodo{\alg{E}}{\ring{R}}$ einen
\begin{compactenum}
 \item {\em \Index{Generator}}, falls jeder andere $\ring{R}$-Rechtmodul als
     einen Quotienten einer direkten Summe von Kopien des
     $\ring{R}$-Moduls erh"alt.  
 \item {\em \Index{Progenerator}}, falls er endlich erzeugt ist, projektiv und ein
 Generator ist.  
\end{compactenum}
\end{definition}

\begin{satz}[Satz von \Name{Morita}]
\label{Satz:MoritasTheorem}    
\index{Satz!Morita@von \Name{Morita}}
Zwei Ringe mit Einselementen $\ring{R}$ und $\ring{S}$ sind genau dann
\Name{Morita}-"aquivalent, wenn es einen Progenerator
$\ring{R}$-Rechtsmodul $\rmodo{\alg{E}}{\ring{R}}$ gibt, so da"s
$\ring{S} \cong \End[\ring{R}] {\rmodo{\alg{E}}{\ring{R}}}$. Ist
desweiteren $\bimodo{\ring{S}}{\alg{E}}{\ring{R}}$ ein
"Aquivalenzbimodul, dann ist der duale Bimodul durch  
$\bimodo{\ring{R}}{\alg{E}^{\ast}}{\ring{S}}:=
    \Hom[\ring{R}](\bimodo{\ring{S}}{\alg{E}}{\ring{R}},\ring{R})$
    gegeben. 
\end{satz}

\begin{definition}[Volle idempotente Elemente]
\label{Defintion:VolleIdempotenteElementeP}
Ein idempotentes Element $P \in M_{n}(\ring{R})$ nennt man {\em voll},
wenn die lineare H"ulle der Elemente der Form $TPS$, mit $T,S \in
M_{n}(\ring{R})$, ganz $M_{n}(\ring{R})$ ist.      
\end{definition}

\begin{bemerkungen}
~\vspace{-5mm}
\begin{compactenum}
\item F"ur volle Elemente $P$ schreiben wir die Definition
    \ref{Defintion:VolleIdempotenteElementeP} auch als: $M_{n}(\ring{R})PM_{n}(\ring{R})=
    M_{n}(\ring{R})$.
\item Sei $P$ idempotent. Ein endlich erzeugter projektiver $\ring{R}$-Modul
    $P\ring{R}^{n}$ ist genau dann ein Generator,
    wenn $P$ voll ist \citep[18.10(D)]{lam:1999a}.
\end{compactenum}
\end{bemerkungen}

Nun sind wir in der Lage eine alternative Formulierung von
\Name{Morita}-"Aquivalenz anzugeben.

\begin{satz}[Alternative Formulierung der \Name{Morita}-"Aquivalenz]
\label{Satz:MoritaAequivalenzUndProjektiveModul}
Zwei Ringe $\ring{R}$ und $\ring{S}$ sind genau dann
\Name{Morita}-"aquivalent, wenn es ein $n \in \field{N}$, sowie ein
volles, idempotentes $P \in M_{n}(\ring{R})$ gibt, so da"s
$S\cong P M_{n}(\ring{R}) P$. 
\end{satz}

Bevor wir uns nun endg"ultig
der $^\ast$- bzw.~starken \Name{Morita}-"Aquivalenz zuwenden, 
wollen wir das Beispiel~\ref{Beispiel:MoritaRingUndMatrizenUeberRing}
nochmal aufgreifen und in der oben erarbeiteten Sprache formulieren.

\begin{beispiel}["Aqui\-va\-lenz\-bi\-mo\-dul zu den Ringen $M_{n}(\ring{R})$
    und $\ring{R}$]   
\label{Beispiel:MoritaRingUndMatrizenUeberRing2}
Sei $\ring{R}$ ein Ring und $M_{n}(\ring{R})$ der Ring der $n\times
n$-Matrizen "uber $\ring{R}$, dann entspricht der in Beispiel
\ref{Beispiel:MoritaRingUndMatrizenUeberRing} angegebene Funktor dem
Bimodul $\bimodo{M_{n}(\ring{R})}{\ring{R}^{n}}{\ring{R}}$. Dabei
kann man Elemente in $\ring{R}^{n}$ als Spaltenvektoren
auffassen, die Linksmodulstruktur ist durch die gew"ohnliche Multiplikation
der Matrizen mit Vektoren gegeben, die Rechtsmodulstruktur ist die
komponentenweise Multiplikation mit Elementen aus $\ring{R}$. 
\end{beispiel}

\subsection{$^\ast$- und starke \Name{Morita}-"Aquivalenz}
\label{sec:SternStarkeMoritaAequivalenz}

Ein n"achster Schritt ist \Name{Morita}-"Aquivalenz f"ur
$^\ast$-Algebren zu formulieren. Diese zeichnen sich dadurch aus,
da"s sie aufgrund der $^\ast$-Involution mit mehr Struktur versehen
sind. Dadurch sind wir beispielsweise in der Lage "uber Positivit"at zu
sprechen. Im weiteren seinen $\alg{A}$ und $\alg{B}$ zwei $^\ast$-Algebren.

Der Bimodul $\bimod{B}{E}{A}$ ist mit zwei inneren Produkten $\lSP{B}{\cdot,\cdot}$ und
$\rSP{\cdot,\cdot}{A}$ versehen, von denen fordern wir noch nicht,
da"s $\lmodplus{B}{E}$ und $\rmodplus{E}{A}$ f"ur sich
Pr"a-\Name{Hilbert}-Moduln nach Definition
\ref{Definition:PraeHilbertModul} sind, allerdings wird dies bei der
sp"ater behandelten starken \Name{Morita}-Theorie der Fall sein. 

F"ur die \Name{Morita}-Theorie ist der duale Bimodul
wichtig. Aufgrund der $^\ast$-Struktur k"onnen wir den dualen Bimodul
zu $\bimodplus{B}{E}{A}$ auf eine kanonische Weise definieren.

\begin{definition}[Komplex-konjugierter Bimodul f"ur $^\ast$-Algebren]
\label{Definition:DualerBimodulSternAlgebren}
\index{Bimodul!komplex-konjugierter|textbf}
Seien $\alg{A}$ und $\alg{B}$ zwei $^\ast$-Algebren "uber einem Ring
$\ring{C}$, und $\bimodplus{B}{E}{A}$ sei ein Bimodul. Man definiert
den dazu {\em komplex-konjugierten Bimodul} $\bimodplus{A}{\cc{E}}{B}$ "uber
die Modulstrukturen  
\begin{gather}
    \label{eq:DualeBimodulstrukturenSternAlgebren}
     a \cdot \cc{x} := \cc{x  \cdot a^\ast}, \quad \cc{x} \cdot  b :=
     \cc{b^{\ast} \cdot x}, \\ 
     \lSP[\sss \alg{\cc{E}}]{A}{\cc{x},\cc{y}} := \rSP[\sss \alg{E}]{x,y}{A},
    \quad \rSP[\sss \alg{\cc{E}}]{\cc{x},\cc{y}}{B} := \lSP[\sss
    \alg{E}]{B}{x,y}. 
\end{gather}
Dabei bezeichnet man Elemente im komplex-konjugierten Bimodul durch die
Konjunktion $\cc{^{~}}$. Als Menge sind der komplex-konjugierte Bimodul und der
urspr"ungliche Bimodul identisch. 
\end{definition}

\begin{lemma}[Strukturen auf dualem Bimodul]
\label{Lemma:DualerBimodulSternAlgebren}
Der duale Bimodul in Definition
\ref{Definition:DualerBimodulSternAlgebren} ist wohldefiniert, und
er erbt die Strukturen des ur\-spr"ung\-li\-chen Bimoduls. 

\end{lemma}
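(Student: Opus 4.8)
The plan is to verify, piece by piece, that every structure carried by $\bimodplus{B}{E}{A}$ --- the two one-sided module actions, their compatibility, the underlying $\ring{C}$-vector space structure, and the two inner products together with their mutual compatibility and positivity --- transports to $\bimodplus{A}{\cc{E}}{B}$ under the relabelling $x \mapsto \cc{x}$. Since $\cc{E}$ and $E$ agree as sets, well-definedness reduces to checking that each defining formula lands in the correct space: this is immediate, because $x \cdot a^\ast \in E$ and $b^\ast \cdot x \in E$ whenever $x \in E$, $a \in \alg{A}$, $b \in \alg{B}$, so $a \cdot \cc{x}$ and $\cc{x} \cdot b$ are genuine elements of $\cc{E}$.

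First I would treat the module axioms. For the left action $a \cdot \cc{x} := \cc{x \cdot a^\ast}$, associativity follows from the antiautomorphism law $(a_1 a_2)^\ast = a_2^\ast a_1^\ast$ together with the right-module axiom on $E$: both $(a_1 a_2)\cdot \cc{x}$ and $a_1 \cdot (a_2 \cdot \cc{x})$ reduce to $\cc{(x \cdot a_2^\ast)\cdot a_1^\ast}$, while the unit axiom uses $1_{\sss \alg{A}}^\ast = 1_{\sss \alg{A}}$. The right action $\cc{x}\cdot b := \cc{b^\ast \cdot x}$ is handled symmetrically, and the bimodule compatibility $(a \cdot \cc{x}) \cdot b = a \cdot (\cc{x}\cdot b)$ is exactly the associativity $b^\ast \cdot (x \cdot a^\ast) = (b^\ast \cdot x)\cdot a^\ast$ holding in $E$. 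At this stage I must also fix the $\ring{C}$-module structure on $\cc{E}$, which is forced to be the conjugate one, $\lambda \cdot \cc{x} := \cc{\cc{\lambda}\cdot x}$; this is precisely the choice that will make the two inner products come out sesquilinear with the correct handedness.

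Next I would check the inner products against Definition~\ref{Definition:PraeHilbertModul} and the left-module conventions of Bemerkung~\ref{Bemerkung:PraeHilbertModul}. The Hermitian symmetry of $\lSP[\sss \alg{\cc{E}}]{A}{\cc{x},\cc{y}} := \rSP[\sss \alg{E}]{x,y}{A}$ is just property ii of the $\alg{A}$-valued product on $E$, and likewise for the $\alg{B}$-valued product; sesquilinearity with the right handedness follows from the conjugate $\ring{C}$-structure just fixed, using the antilinearity of the original products. The module-linearity conditions are the only places where $\ast$ must genuinely be pushed through a product: for instance $\lSP[\sss \alg{\cc{E}}]{A}{a \cdot \cc{x}, \cc{y}} = \rSP[\sss \alg{E}]{x \cdot a^\ast, y}{A}$, which by property ii and the right-linearity of the product on $E$ equals $a \rSP[\sss \alg{E}]{x,y}{A}$, as required for a left inner product; the $\alg{B}$-valued case is symmetric. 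Non-degeneracy is inherited directly since the products on $\cc{E}$ are literally those on $E$ after relabelling, and complete positivity (when $E$ is a Pr\"a-\Name{Hilbert} bimodule) transports because the matrices $(\lSP[\sss \alg{\cc{E}}]{A}{\cc{x}_i,\cc{x}_j}) = (\rSP[\sss \alg{E}]{x_i,x_j}{A})$ are unchanged.

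Finally I would verify the two compatibility conditions. The compatibility of each inner product with the opposite module action (Definition~\ref{Definition:KompatibilitaetBimodulMitLinkswirkung}) unwinds, after substituting the conjugate actions, to the corresponding compatibility on $E$: e.g.~$\lSP[\sss \alg{\cc{E}}]{A}{\cc{x}, \cc{y}\cdot b} = \rSP[\sss \alg{E}]{x, b^\ast y}{A}$ and $\lSP[\sss \alg{\cc{E}}]{A}{\cc{x}\cdot b^\ast, \cc{y}} = \rSP[\sss \alg{E}]{bx, y}{A}$, whose equality is the compatibility of the $\alg{A}$-valued product on $E$ with the left $\alg{B}$-action. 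The mutual compatibility $\lSP[\sss \alg{\cc{E}}]{A}{\cc{x},\cc{y}} \cdot \cc{z} = \cc{x} \cdot \rSP[\sss \alg{\cc{E}}]{\cc{y},\cc{z}}{B}$ unwinds, via the Hermitian symmetries of both products, to identity~\eqref{eq:KompabilitaetInnereProdukte} on $E$ read with arguments reversed. The one real subtlety throughout, and the step most prone to error, is the bookkeeping of antilinearity and of the rule $(cd)^\ast = d^\ast c^\ast$: each conjugation simultaneously swaps left with right and linear with antilinear, so one must invoke property ii of the inner products at exactly the right moment to return to the original structure. Once this bookkeeping is laid out consistently, every assertion of the lemma is a one-line computation.
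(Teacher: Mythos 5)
Your proof is correct and follows essentially the same route as the paper's: a direct, axiom-by-axiom verification that the relabelled module actions and the swapped inner products on $\cc{\alg{E}}$ inherit associativity, the bimodule law, the linearity and compatibility properties, non-degeneracy and complete positivity from $\bimod{B}{E}{A}$, with the $^\ast$-antiautomorphism law and the Hermitian symmetry of the inner products doing the work at exactly the same places. The one point you make explicit that the paper leaves implicit is that the $\ring{C}$-module structure on $\cc{\alg{E}}$ must be the conjugate one for the sesquilinearity to come out with the correct handedness, which is a worthwhile clarification.
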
 
\begin{proof}
Wir m"ussen zeigen, da"s die Bimodulstrukturen auf dem dualen
Bimodul sinnvoll definiert sind.

\begin{compactenum}
\item Zuerst zeigen wir die (Links-) Modulstruktur.
\begin{align*}
    (a'a)\cc{x} = \cc{x (a'a)^{\ast}} = \cc{x (a^{\ast}
      {a'}^{\ast})} = \cc{(x a^{\ast}) {a'}^{\ast}} = a'
    \cc{(x a^{\ast})} = a' (a \cc{x})   
\end{align*}
Analog zeigt man die Konsistenz der Rechtsmodulstruktur, bzw.~der
Bimodulstruktur. 

\item Desweiteren gilt es zu zeigen, da"s die Strukturen der inneren Produkte
sich auf den dualen Bimodul "ubertragen. Aus $\rSP[\sss
\alg{E}]{x,y\cdot a}{A}= \rSP[\sss \alg{E}]{x,y}{A} a$ sowie
$\left(\rSP[\sss \alg{E}]{x,y}{\alg{A}} \right)^{\ast} = \rSP[\sss
\alg{E}]{y,x}{\alg{A}}$ und der
Definition \ref{Definition:DualerBimodulSternAlgebren} folgt
\begin{align*}
\lSP[\sss \cc{\alg{E}}]{\alg{A}}{a \cdot \cc{x},\cc{y}} = \lSP[\sss
\cc{\alg{E}}]{\alg{A}}{\cc{x\cdot a^{\ast}},\cc{y}} = \rSP[\sss
\alg{E}]{x\cdot a^{\ast},y}{\alg{A}}=(a^{\ast})^{\ast} \rSP[\sss
\alg{E}]{x,y}{\alg{A}}= a \lSP[\sss
\cc{\alg{E}}]{\alg{A}}{\cc{x},\cc{y}}.       
\end{align*}

\item Ist im Bimodul $\bimodplus{B}{E}{A}$ das $\alg{A}$-wertige innere
Produkt kompatibel mit der $\alg{B}$-Links\-ak\-tion (siehe Definition
\ref{Definition:KompatibilitaetBimodulMitLinkswirkung}), so ist dies
auch im dualen Bimodul $\bimodplus{A}{\cc{E}}{B}$ der Fall
\begin{align*}
\lSP[\sss \cc{\alg{E}}]{\alg{A}}{\cc{x}\cdot b,\cc{y}} = \lSP[\sss
\cc{\alg{E}}]{\alg{A}}{\cc{b^{\ast}\cdot x},\cc{y}} = \rSP[\sss
\alg{E}]{b^{\ast}\cdot x,y}{\alg{A}} = \rSP[\sss
\alg{E}]{x,b\cdot y}{\alg{A}} =\lSP[\sss
\cc{\alg{E}}]{\alg{A}}{\cc{x},\cc{b\cdot y}} = \lSP[\sss
\cc{\alg{E}}]{\alg{A}}{\cc{x},\cc{y}\cdot b^{\ast}}.     
\end{align*}
Analoges gilt f"ur das $\alg{B}$-wertige innere Produkt $\rSP[\sss
\cc{\alg{E}}]{\cdot, \cdot}{\alg{B}}$.  

\item Zu guter Letzt ist zu zeigen, da"s die Kompatibilit"at der beiden
inneren Produkte $\lSP[\sss \alg{E}]{\alg{B}}{\cdot, \cdot}$ und
$\rSP[\sss \alg{E}]{\cdot, \cdot}{\alg{A}}$ auch eine Kompatibilit"at
der beiden inneren Produkte auf dem dualen Bimodul mit sich bringt
\begin{align*}
    \cc{x} \cdot \rSP[\sss \cc{\alg{E}}]{\cc{y},\cc{z}}{\alg{B}} & =
    \cc{x} \cdot \lSP[\sss \alg{E}]{\alg{B}}{y,z} = \cc{x} \cdot
    \left(\lSP[\sss \alg{E}]{\alg{B}}{z, y}\right)^{\ast} = \cc{\lSP[\sss
      \alg{E}]{\alg{B}}{z, y} \cdot x} \\ & = \cc{z \cdot \rSP[\sss
      \alg{E}]{y,x}{\alg{A}}} = \left(\rSP[\sss
    \alg{E}]{y,x}{\alg{A}}\right)^{\ast}\cdot \cc{z} = \rSP[\sss
    \alg{E}]{x,y}{\alg{A}} \cdot \cc{z} = \lSP[\sss
    \cc{\alg{E}}]{\alg{A}}{\cc{x},\cc{y}} \cdot \cc{z}.       
\end{align*}
\end{compactenum}
Die Beweise zu {\it ii.)} und {\it iii.)} f"ur das $\alg{B}$-wertige innere Produkt $\rSP[\sss
\cc{\alg{E}}]{\cdot, \cdot}{\alg{B}}$ geschehen komplett analog zu den
aufgezeigten Rechnungen.

Es ist aufgrund der Definition der inneren Produkte auf dem dualen
Bimodul offensichtlich, da"s bei (nicht-)ausgearteten bzw.~vollst"andig
positiven Bimoduln $\bimodplus{B}{E}{A}$ auch der duale Bimodul
(nicht-)ausgeartet bzw.~vollst"andig positiv ist.   
\end{proof}

Nun k"onnen wir angeben, was $^\ast$-\Name{Morita}-"Aquivalenz
bzw.~{\em starke \Name{Morita}-"Aquivalenz} ist
\citep{rieffel:1972a,rieffel:1974a,rieffel:1974b,ara:1999a,ara:1999b,bursztyn.waldmann:2001a,bursztyn.waldmann:2001b}.

\begin{definition}[$^\ast$- und starker "Aqui\-va\-lenz\-bi\-mo\-dul]
\label{Definition:SternStarkerAequivalenzBimodul}

Seien $\alg{A}$und $\alg{B}$ $^\ast$-Algebren "uber einem Ring
$\ring{C}$ und $\bimod{B}{E}{A}$ ein $(\alg{B},\alg{A})$-Bimodul mit einem
$\alg{A}$-wertigen inneren Produkt $\rSP[\alg{E}]{\cdot,
  \cdot}{\alg{A}}$ und einem $\alg{B}$-wertigen inneren Produkt
$\lSP[\alg{E}]{\alg{B}}{\cdot,\cdot}$. Wir nennen
$\bimodplus{B}{E}{A}$ einen {\em
  $^\ast$-\Name{Morita}-"Aquivalenzbimodul} oder kurz {\em
  $^\ast$-"Aqui\-va\-lenz\-bi\-mo\-dul}, falls
die folgenden Bedingungen er\-f"ullt sind:
\begin{compactenum}
\item Die beiden inneren Produkte $\rSP[\alg{E}]{\cdot,
      \cdot}{\alg{A}}$ bzw.
    $\lSP[\alg{E}]{\alg{B}}{\cdot,\cdot}$ sind nichtausgeartet, voll
    und mit der $\alg{B}$-Wirkung bzw. $\alg{A}$-Wirkung kompatibel.
\item F"ur alle $x,y,z \in \bimod{B}{E}{A}$ sind die beiden inneren
    Produkte miteinander {kompatibel}, d.~h.~es gilt:\\ 
 $x \cdot \rSP[\alg{E}]{y,z}{\alg{A}} = \lSP[\alg{E}]{\alg{B}}{x,y}
  \cdot z$.
\item $\alg{B}\cdot \bimod{B}{E}{A} = \bimod{B}{E}{A}$ und
    $\bimod{B}{E}{A} \cdot \alg{A} = \bimod{B}{E}{A}$. 
\end{compactenum}

Sind zus"atzlich beide inneren Produkte vollst"andig positiv, das
hei"st $\lmodplus{B}{E}$ und $\rmodplus{E}{A}$ sind Pr"a-\Name{Hilbert}-Moduln, so nennt
man $\bimod{B}{E}{A}$ einen {\em starken
  \Name{Morita}-"A\-qui\-va\-lenz\-bi\-mo\-dul} oder kurz {\em starken "A\-qui\-va\-lenz\-bi\-mo\-dul}.   
\end{definition}

\begin{bemerkungen}[Komplex-konjugierter Bimodul]
\label{Bemerkung:DualerBimodulRingeAlgebren}
~\vspace{-5mm}
\begin{compactenum}
\item Im Fall von "Aquivalenzbimoduln f"ur $^\ast$-Algebren, sind
    die Definitionen von dualem und kom\-plex\--konjugiertem Bimodul
    miteinander identifizierbar, und es gilt aufgrund der Struktur 
des komplex-konjugierten Bimoduls in Definition \ref{Definition:DualerBimodulSternAlgebren} 
$\bimod{B}{\cc{\cc{\alg{E}}}}{A} = \bimod{B}{E}{A}$.

\item Sei $T: \bimod{B}{E}{A} \ni x \mapsto T(x) \in \bimod{B}{F}{A}$ ein
    Verschr"ankungsoperator, so definiert $\cc{T} (\cc{x}):= 
    \cc{T(x)}$ einen 
    Verschr"ankungsoperator\index{Verschraenkungsoperator@Verschr\"ankungsoperator} auf den dualen Bimoduln und $\cc{T} : \bimod{A}{\cc{E}}{B} \to \bimod{A}{\cc{F}}{B}$.   
\end{compactenum}
\end{bemerkungen}

\begin{definition}[$^\ast$- und starke \Name{Morita}-"Aquivalenz]
\index{Morita-Aequivalenz@\Name{Morita}-\"Aquivalenz!starke|textbf}
\index{Morita-Aequivalenz@\Name{Morita}-\"Aquivalenz!Stern@$^\ast$-|textbf}
\index{Aequivalenzbimodul@\"Aquivalenzbimodul|textbf}
Man nennt zwei $^\ast$-Algebren $\alg{A}$ und $\alg{B}$
 {\em $^\ast$-\Name{Morita}-"aquivalent} (bzw.~{\em stark
 \Name{Morita}-"aquivalent}), falls es einen
 $^\ast$-"Aqui\-va\-lenz\-bi\-mo\-dul (bzw.~einen starken
 "Aqui\-va\-lenz\-bi\-mo\-dul) gibt.  
\end{definition}

\begin{definition}[Idempotenz und Nichtausgeartetheit von
    $^\ast$-Algebren]
\index{Idempotenz} \index{Nichtausgeartetheit}
 Man nennt eine $^\ast$-Algebra $\alg{A}$ 
\begin{compactenum}
\item {\em nichtausgeartet}, wenn f"ur alle $b \in \alg{A}$ aus $ab =
    0$ oder aus $ ba=0$ folgt, da"s $a=0$.
\item {\em idempotent}, wenn Elemente der Form $ab$ die Algebra
    $\alg{A}$ aufspannen. 
\end{compactenum}
\end{definition}

Offensichtlich sind Algebren mit einem Einselement $1_{\sss \alg{A}}
\in \alg{A}$ immer idempotent.

\begin{lemma}[Der "Aquivalenzbimodul $\bimod{A}{A}{A}$]
    Der Bimodul $\bimod{A}{A}{A}$ ist genau dann ein $^\ast$- oder
    starker "Aquivalenzbimodul, wenn die $^\ast$-Algebra idempotent
    und nichtausgeartet ist. 
\end{lemma}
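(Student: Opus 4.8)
The plan is to equip $\bimod{A}{A}{A}$ with its canonical inner products and verify the axioms of Definition~\ref{Definition:SternStarkerAequivalenzBimodul} directly, reading off that exactly idempotency and nondegeneracy are the conditions one needs. First I would take the left and right module structures to be left and right multiplication, and set $\rSP{x,y}{A} := x^{\ast}y$ and $\lSP{A}{x,y} := xy^{\ast}$ for $x,y \in \alg{A}$. The $^\ast$-algebra axioms immediately make these $\ring{C}$-sesquilinear with $\rSP{x,y}{A} = \rSP[\ast]{y,x}{A}$ (and the mirror identity on the left), $\alg{A}$-linear in the appropriate slot, and compatible with the opposite-sided action in the sense of Definition~\ref{Definition:KompatibilitaetBimodulMitLinkswirkung}; for instance $\rSP{b\cdot x, y}{A} = (bx)^{\ast}y = x^{\ast}(b^{\ast}y) = \rSP{x, b^{\ast}\cdot y}{A}$. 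Likewise the compatibility~(ii) of the two inner products, $x\cdot\rSP{y,z}{A} = \lSP{A}{x,y}\cdot z$, is nothing more than associativity $x(y^{\ast}z) = (xy^{\ast})z$. Thus all the clauses of Definition~\ref{Definition:SternStarkerAequivalenzBimodul} that hold for \emph{every} $^\ast$-algebra are already settled, and only fullness, nondegeneracy, and condition~(iii), i.e.\ $\alg{A}\cdot\alg{A} = \alg{A}$, remain to be analysed.

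Next I would identify these remaining requirements with the two hypotheses. Since $^\ast$ is a bijection of $\alg{A}$, the span $\CSpan{x^{\ast}y}$ equals the span of all products, so fullness of either inner product and condition~(iii) are each equivalent to $\alg{A}$ being idempotent. For nondegeneracy, $\rSP{x,\cdot}{A}\equiv 0$ means $x^{\ast}\alg{A} = 0$, so nondegeneracy of the right inner product says precisely that $\alg{A}$ has no nonzero left annihilator; symmetrically, nondegeneracy of the left inner product rules out nonzero right annihilators. Taken together these two one-sided statements are exactly the nondegeneracy of the $^\ast$-algebra. This already yields both implications in the $^\ast$-equivalence case.

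For the strong case it remains to check complete positivity of both inner products, and here the point is that this comes for free. Given $x_{1},\dots,x_{n} \in \alg{A}$, the Gram matrix $(\rSP{x_i,x_j}{A}) = (x_i^{\ast}x_j)$ equals $B^{\ast}B$, where $B \in M_{n}(\alg{A})$ has first row $(x_{1},\dots,x_{n})$ and all other entries zero; hence $(x_i^{\ast}x_j) \in M_{n}(\alg{A})^{++} \subseteq M_{n}(\alg{A})^{+}$ by Bemerkungen~\ref{Bemerkung:PositiveElemente}, and the $\alg{A}$-valued left product $(x_i x_j^{\ast})$ is handled in the same way. Consequently the $^\ast$-equivalence and strong equivalence conditions coincide for $\bimod{A}{A}{A}$, so no separate argument for the strong case is needed beyond observing that idempotency and nondegeneracy were already forced by the $^\ast$-equivalence clauses.

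I do not expect a genuine obstacle in this lemma; it is essentially a matter of matching the abstract axioms against the multiplication of $\alg{A}$. The only two places that require care are the bookkeeping in the middle step—keeping the left- and right-annihilator conditions separate so that together they reconstitute the full algebra nondegeneracy—and the observation in the last step that the Gram matrix can be written as a single $B^{\ast}B$, which is what converts complete positivity into a triviality rather than a computation. For a unital $\alg{A}$ both hypotheses hold automatically, so the lemma specialises to the statement that $\bimod{A}{A}{A}$ is always an equivalence bimodule in the unital case, as expected.
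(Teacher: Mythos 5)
Your argument is correct and is essentially the standard verification: the paper itself gives no proof of this lemma but only cites \citep{bursztyn.waldmann:2003a:pre}, and your reduction of fullness and condition \textit{iii.)} to idempotency, of nondegeneracy of the two inner products to the absence of left and right annihilators, and the observation that the Gram matrix $(x_i^{\ast}x_j)$ is of the form $B^{\ast}B$ so that complete positivity is automatic, is exactly the argument found there. Nothing is missing; in particular you correctly note that each remaining axiom is \emph{equivalent} to one of the two hypotheses, which yields both directions of the \glqq genau dann\grqq{} at once.
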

\begin{proof}
    Der Beweis findet sich in \citep{bursztyn.waldmann:2003a:pre}.
\end{proof}

\begin{lemma}[$^\ast$- und starke \Name{Morita}-"Aquivalenz ist
    "Aquivalenzrelation] 
    \label{Lemma:MoritaAequivalenzAequivalenzrelation}
$^\ast$- und starke \Name{Morita}-"Aquivalenz ist eine "Aquivalenzrelation.
\end{lemma}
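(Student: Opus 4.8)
The plan is to verify the three defining properties of an equivalence relation separately, handling the $^\ast$- and the strong case simultaneously, since the only extra ingredient the strong version needs is complete positivity, and every construction I use preserves it.

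For \emph{reflexivity} I would exhibit $\bimod{A}{A}{A}$, equipped with the inner products $\lSP{A}{a,b}=ab^\ast$ and $\rSP{a,b}{A}=a^\ast b$, as an equivalence bimodule from $\alg{A}$ to itself. One checks immediately that these are left- resp. right-linear, compatible with the actions, and mutually compatible ($\lSP{A}{a,b}\cdot c=ab^\ast c=a\cdot\rSP{b,c}{A}$). By the preceding lemma on $\bimod{A}{A}{A}$ this is a $^\ast$- resp. strong equivalence bimodule exactly when $\alg{A}$ is idempotent and non-degenerate; since we restrict throughout to such $^\ast$-algebras (in particular all unital ones), reflexivity holds on the class of objects under consideration.

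For \emph{symmetry}, suppose $\bimodplus{B}{E}{A}$ is a $^\ast$- (resp. strong) equivalence bimodule. I would take the complex-conjugate bimodule $\bimodplus{A}{\cc{E}}{B}$ of Definition~\ref{Definition:DualerBimodulSternAlgebren} as the witness in the opposite direction. By Lemma~\ref{Lemma:DualerBimodulSternAlgebren} it inherits every structure of $E$: both inner products stay non-degenerate, full, compatible with the respective actions and with each other, and complete positivity carries over. The strong non-degeneracy conditions $\alg{A}\cdot\cc{E}=\cc{E}$ and $\cc{E}\cdot\alg{B}=\cc{E}$ follow from $E\cdot\alg{A}=E$ and $\alg{B}\cdot E=E$ via the defining formulas $a\cdot\cc{x}=\cc{x\cdot a^\ast}$ and $\cc{x}\cdot b=\cc{b^\ast\cdot x}$ together with the surjectivity of $^\ast$. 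Hence $\alg{B}$ is equivalent to $\alg{A}$.

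The heart of the argument, and the step I expect to be the main obstacle, is \emph{transitivity}. Given equivalence bimodules $\bimodplus{B}{E}{A}$ and $\bimodplus{C}{F}{B}$, I would form $\bimod{C}{G}{A}:=\bimod{C}{F}{B}\tensortilde[\alg{B}]\bimod{B}{E}{A}$, using the tilde-tensor-product because two compatible inner products are involved, so that by Bemerkung~\ref{Bemerkung:InneresProduktBimodul} the two degeneracy spaces coincide and the quotient is unambiguous. On $G$ I would put the induced $\alg{A}$-valued product $\rSP[\sss F\otimes E]{f_1\otimes e_1,f_2\otimes e_2}{A}=\rSP[\sss E]{e_1,\rSP[\sss F]{f_1,f_2}{B}\cdot e_2}{A}$ and, by the symmetric left-handed construction, the $\alg{C}$-valued one; Lemma~\ref{Lemma:KanonischeLinkswirkungAufInneremTensorprodukt} supplies a compatible $\alg{C}$-left action, and in the strong case Satz~\ref{Satz:KomplettePositivitaetDesInnerenProdukts} keeps both products completely positive. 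It then remains to check non-degeneracy (immediate after the quotient), fullness, the strong conditions $\alg{C}\cdot G=G$ and $G\cdot\alg{A}=G$, and the mutual compatibility $\lSP[\sss G]{C}{g_1,g_2}\cdot g_3=g_1\cdot\rSP[\sss G]{g_2,g_3}{A}$. Fullness is where the module conditions do real work: $\CSpan{\rSP[\sss F\otimes E]{\cdot,\cdot}{A}}$ equals $\CSpan{\rSP[\sss E]{E,\rSP[\sss F]{F,F}{B}\cdot E}{A}}=\CSpan{\rSP[\sss E]{E,\alg{B}\cdot E}{A}}=\CSpan{\rSP[\sss E]{E,E}{A}}=\alg{A}$, using fullness of $F$ and $\alg{B}\cdot E=E$, and the $\alg{C}$-valued case is symmetric. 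The compatibility identity reduces, on elementary tensors, to a direct computation chaining the mutual compatibility on $E$ with that on $F$ and the associativity isomorphism of Lemma~\ref{Lemma:AssoziativitaetInneresTensorprodukt}. Assembling these verifications shows $\bimodplus{C}{G}{A}$ is a $^\ast$- (resp. strong) equivalence bimodule, so $\alg{A}\sim\alg{B}$ and $\alg{B}\sim\alg{C}$ yield $\alg{A}\sim\alg{C}$, completing the proof.
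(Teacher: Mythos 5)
Your proposal is correct and follows exactly the route the paper takes: reflexivity via $\bimod{A}{A}{A}$ with the canonical inner products $\lSP{\alg{A}}{a,b}=ab^{\ast}$, $\rSP{a,b}{\alg{A}}=a^{\ast}b$, symmetry via the complex-conjugate bimodule of Definition~\ref{Definition:DualerBimodulSternAlgebren}, and transitivity via the inner tensor product $\tensorhatohne$ (Rieffel induction). The paper only sketches these three steps and refers to the literature for details, so your filled-in verifications of fullness, the strong non-degeneracy conditions, and complete positivity (via Satz~\ref{Satz:KomplettePositivitaetDesInnerenProdukts}) are a faithful elaboration of the same argument, including the correct caveat that reflexivity requires idempotent, non-degenerate $^\ast$-algebras.
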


\begin{proof}
Wir m"ussen Reflexivit"at, Symmetrie und Transitivit"at zeigen. 

Die Reflexivit"at geschieht mittels des Bimoduls $\bimod{A}{A}{A}$
mit den beiden kanonischen inneren Produkten 
\begin{align}
    \label{eq:ReflexivitaetInnereProdukte}
\lSP{\alg{A}}{a,b} = ab^{\ast} \qquad \rSP{a,b}{\alg{A}} = a^{\ast}b.    
\end{align}
F"ur die Symmetrie ben"otigen wir den dualen Bimodul
$\bimod{A}{\cc{E}}{B}$, den wir in Definition
\ref{Definition:DualerBimodulSternAlgebren} eingef"uhrt haben. Die
Transitivit"at zeigen wir durch $\tensorhatohne$, bzw.~durch die 
\Name{Rieffel}-Induktion. 

Ausf"uhrlicher findet man den Beweis in
\citep{ara:1999a,bursztyn.waldmann:2003a:pre}. 
\end{proof}

\begin{korollar}
\label{Korollar:SternUndStarkerAequivalenzbimodul}
Der Bimodul $\bimod{B}{E}{A}$ ist genau dann ein $^\ast$- bzw.~starker
\Name{Morita}-"Aquivalenzbimodul, falls 
    \begin{align}
        \label{eq:StarkeMoritaAequivalenzVonSternAlgebren}
\bimod{B}{E}{A} \tensortilde[\alg{A}] \bimod{A}{\cc{E}}{B} \cong
\bimod{B}{B}{B} \quad \text{und} \quad \bimod{A}{\cc{E}}{B}
\tensortilde[\alg{B}] \bimod{B}{E}{A} \cong \bimod{A}{A}{A}.
     \end{align}
\end{korollar}

\begin{proof}
Das Korollar ist eine direkte Konsequenz aus Satz
\ref{Lemma:MoritaAequivalenzAequivalenzrelation}
\citep[Prop.~7.2]{bursztyn.waldmann:2001a}. Die kanonischen
Isomorphismen sind gegeben durch die Abbildungen  
\begin{align*}
  x \otimes \cc{y} \mapsto \lSP{\alg{B}}{x,y} \quad \text{und} \quad
  \cc{x} \otimes y \mapsto \rSP{x,y}{\alg{A}}.  
\end{align*}
\end{proof}

\section{Die $H$-"aquivariante \Name{Morita}-Theorie}
\label{sec:HaequivarianteMoritaTheorie}
\index{Morita-Aequivalenz@\Name{Morita}-\"Aquivalenz!H-aequivariante@$H$-\"aquivariante}

Ein wichtiger Teil dieser Arbeit ist nun die Formulierung einer
unter einer \Name{Hopf}-$^\ast$-Wirkung
\index{Hopf-Wirkung@\Name{Hopf}-Wirkung} "aquivarianten
\Name{Morita}-Theorie. Dazu ben"otigen wir die Theorie der
\Name{Hopf}-Algebren, die wir in aller Ausf"uhrlichkeit in Anhang~\ref{sec:HopfAlgebren}
zusammengetragen haben. 

Die Forderung einer $^\ast$-Involution ist unumg"anglich, da wir insbesondere an $^\ast$-
und starker \Name{Morita}-"Aquivalenz interessiert sind. Mit einer
$^\ast$-Wirkung bezeichnen wir eine Wirkung der
\Name{Hopf}-$^\ast$-Algebra $H$, die zus"atzlich
\begin{align*}
    (h \act a)^{\ast} = S(h)^{\ast} \act a^{\ast}
\end{align*}
f"ur $h \in H$ und $a \in \alg{A}$ erf"ullt. Die
genauen Definitionen zu Wirkungen von \Name{Hopf}-Algebren befinden
sich in Anhang~\ref{sec:WirkungenHopfAufAlgebren}. Das Kapitel
orientiert sich an \citep{jansen.waldmann:2005a}. Im weiteren sei $H$ eine
\Name{Hopf}-$^\ast$-Algebra\index{Hopf-Stern-Algebra@\Name{Hopf}-$^\ast$-Algebra},
soweit nichts anderes explizit gesagt wird.  

\subsection{$H$-"aquivariante Bimoduln und Darstellungstheorie}

In diesem Kapitel wollen wir die $^\ast$-Darstellungstheorie f"ur
$^\ast$-Algebren $H$-"aquivariant formulieren. Dies bedeutet, da"s wir
eine vorgegebene {\em Symmetrie}\index{Symmetrie} in Form einer \Name{Hopf}-$^\ast$-Algebra
Wirkung implementieren. 

\begin{definition}[$H$-"aquivarianter Bimodul]
\label{Definition:HopfInvarianterBimodul}
\index{Bimodul!H-aequivarianter@$H$-\"aquivarianter|textbf}
Seien $\alg{A}$ und $\alg{B}$ zwei $^\ast$-Algebren und
$\bimod{B}{E}{A}$ ein Bimodul mit zwei vertr"aglichen inneren
Produkten. Ferner sei $(H,\neact)$ eine \Name{Hopf}-$^\ast$-Algebra, so da"s
$(H,\alg{A},\neact)$ und $(H,\alg{B},\neact)$ $H$-Linksmodulalgebren
nach Definition \ref{Definition:AxiomeWirkungHopf} sind. Wir nennen
eine Wirkung $(H,\neact)$ auf dem Bimodul $\bimod{B}{E}{A}$ {\em kompatibel mit dem  
Bimodul}, wenn folgende Bedingungen f"ur alle $x,y \in
\bimod{B}{E}{A}$, $a \in \alg{A}$ und $b \in \alg{B}$ 
erf"ullt sind:   
\begin{compactenum}
\item Die \Name{Hopf}-Wirkung ist mit den Bimodulstrukturen vertr"aglich,
    d.~h.~es gilt
    \begin{align}
        \label{eq:HopfWirkungKompatibelBimodulstruktur}
        h \act (b \cdot x \cdot a) = (h_{\sss (1)} \act b) \cdot
        (h_{\sss (2)} \act x) \cdot (h_{\sss (3)} \act a).
    \end{align}

\item Die \Name{Hopf}-Wirkung ist mit den inneren Produkten\index{inneres Produkt} vertr"aglich
\begin{align}
    \label{eq:WirkungAufSkalarprodukte}
    h \act \rSP{x,y}{\alg{A}}   = \rSP{S(h_{\sss (1)})^{\ast} \act x, h_{\sss
        (2)} \act y}{\alg{A}} \quad \text{und} \quad  h \act
    \lSP{\alg{B}}{x,y} = \lSP{\alg{B}}{h_{\sss (1)} \act x, S(h_{\sss (2)})^{\ast} \act y}.
\end{align}
\end{compactenum}
\end{definition}

\begin{bemerkung}
Da es sich  bei $(H,\alg{A},\neact)$ und $(H,\alg{B},\neact)$ um $H$-Modulalgebren
handelt, sind die Forderungen $h\act (bb'\cdot x) = (h_{\sss (1)} \act
b)(h_{\sss (2)} \act b') \cdot (h_{\sss (3)} \act x)$ und analog
$h\act (x\cdot aa') = (h_{\sss (1)} \act x) \cdot (h_{\sss (2)} \act a)
(h_{\sss (3)} \act a')$ automatisch erf"ullt wenn {\it i.)} gegeben ist.
\end{bemerkung}

\begin{lemma}[]
Definition \ref{Definition:HopfInvarianterBimodul} ist konsistent
definiert und mit allen Bimodul-Strukturen vertr"aglich.
\end{lemma}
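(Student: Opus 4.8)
The plan is to read the lemma as a bundle of consistency requirements: the two conditions in Definition~\ref{Definition:HopfInvarianterBimodul} must not contradict each other, nor the defining relation of the bimodule $\bimod{B}{E}{A}$ (Definition~\ref{Definition:Bimodul}), nor the structural identities satisfied by the two inner products. Concretely I would verify four compatibilities in turn, each reducing to a bookkeeping exercise in Sweedler notation built from the \Name{Hopf}-$^\ast$-algebra axioms (coassociativity, $\Delta(S(h)) = S(h_{\sss (2)}) \otimes S(h_{\sss (1)})$, the coalgebra property $\Delta(h^{\ast}) = h_{\sss (1)}^{\ast} \otimes h_{\sss (2)}^{\ast}$, the counit normalisation $h \act 1_{\sss \alg{A}} = \epsilon(h) 1_{\sss \alg{A}}$, and the involutivity $S(S(h)^{\ast})^{\ast} = h$), together with the $^\ast$-action property $(h \act a)^{\ast} = S(h)^{\ast} \act a^{\ast}$.

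First I would check that condition {\it i.)} is compatible with the bimodule axiom $(b \cdot x) \cdot a = b \cdot (x \cdot a)$: grouping $h \act ((b\cdot x)\cdot a)$ from the right and $h \act (b\cdot(x\cdot a))$ from the left both produce $(h_{\sss (1)} \act b)\cdot (h_{\sss (2)} \act x)\cdot (h_{\sss (3)} \act a)$ once coassociativity is used to relabel the iterated coproduct. Specialising $a = 1_{\sss \alg{A}}$ resp.\ $b = 1_{\sss \alg{B}}$ and invoking $h \act 1 = \epsilon(h)1$ collapses the triple coproduct and recovers the separate left- and right-compatibility, so {\it i.)} is self-consistent and already implies the statement of the remark preceding the lemma.

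The heart of the argument is the compatibility of condition {\it ii.)} with the $^\ast$-symmetry $\rSP{x,y}{\alg{A}} = \rSP[\ast]{y,x}{\alg{A}}$ of the inner product. Applying $h\act$ to $\rSP[\ast]{y,x}{\alg{A}}$ and using $h \act c^{\ast} = (S(h)^{\ast}\act c)^{\ast}$ turns it into $(S(h)^{\ast}\act \rSP{y,x}{\alg{A}})^{\ast}$; I then apply {\it ii.)} with $S(h)^{\ast}$ in place of $h$, expand $\Delta(S(h)^{\ast}) = S(h_{\sss (2)})^{\ast}\otimes S(h_{\sss (1)})^{\ast}$, and collapse $S(S(h_{\sss (2)})^{\ast})^{\ast} = h_{\sss (2)}$ by involutivity. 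After one more use of $^\ast$-symmetry the result is exactly the right-hand side of {\it ii.)} for $\rSP{x,y}{\alg{A}}$, so the prescription is consistent. The same scheme settles the remaining two checks: compatibility of {\it ii.)} with the module-compatibility of Definition~\ref{Definition:KompatibilitaetBimodulMitLinkswirkung}, where one distributes $S(h_{\sss (1)})^{\ast}$ over the product $b\cdot x$ via {\it i.)} and converts $(S(h_{\sss (2)})^{\ast}\act b)^{\ast}$ back into $h_{\sss (2)}\act b^{\ast}$; and compatibility with the mutual compatibility $\lSP{\alg{B}}{x,y}\cdot z = x\cdot \rSP{y,z}{\alg{A}}$ of Definition~\ref{Lemma:KompabilitaetDerInnrenProdukte}, where applying $h\act$ and using {\it i.)} on both sides shows each side reduces to the same relation evaluated at the transformed triple $(h_{\sss (1)}\act x,\, S(h_{\sss (2)})^{\ast}\act y,\, h_{\sss (3)}\act z)$. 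The $\alg{B}$-valued statements follow verbatim with left and right interchanged.

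The main obstacle is purely notational: every step hinges on moving the antipode and the involution past the coproduct in the correct order, so the delicate point is combining $\Delta(S(h)^{\ast}) = S(h_{\sss (2)})^{\ast}\otimes S(h_{\sss (1)})^{\ast}$ with the involution $S(S(h)^{\ast})^{\ast} = h$ without scrambling the Sweedler indices or the tensor order. Once these two identities are handled consistently, no input beyond the \Name{Hopf}-$^\ast$-algebra and $H$-module-algebra axioms is required, and each of the four verifications is a short, essentially mechanical chain of substitutions.
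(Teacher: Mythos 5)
Your proposal is correct and follows essentially the same route as the paper: a direct Sweedler-notation verification that the transformation rules \eqref{eq:WirkungAufSkalarprodukte} are consistent with the compatibility condition of Definition~\ref{Definition:KompatibilitaetBimodulMitLinkswirkung} and with the mutual compatibility $\lSP{\alg{B}}{x,y}\cdot z = x\cdot\rSP{y,z}{\alg{A}}$, using precisely the identities $(S\otimes S)\circ\Deltaop=\Delta\circ S$ and $(h\act a)^{\ast}=S(h)^{\ast}\act a^{\ast}$ that you single out as the delicate points. The only bookkeeping difference is that the paper additionally spells out the (equally mechanical) check that the transformed $\alg{A}$-valued inner product remains $\alg{A}$-linear in its second argument, $h\act\rSP{x,y\cdot a}{\alg{A}}=h\act(\rSP{x,y}{\alg{A}}\,a)$, which your list of four compatibilities omits, whereas your extra verification of the Hermitian symmetry $\rSP{x,y}{\alg{A}}=\rSP[\ast]{y,x}{\alg{A}}$ is not written out there; both follow by the same substitution scheme, so neither discrepancy affects the validity of the argument.
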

\begin{proof}
Dazu m"ussen wir zeigen, da"s die in Gleichung
\eqref{eq:WirkungAufSkalarprodukte} gemachten
Kom\-pa\-ti\-bi\-li\-t"ats\-be\-din\-gun\-gen mit den
Bimodulstrukturen vertr"aglich sind.  
\begin{compactenum} 
 \item Die $H$-Wirkung auf das $\alg{A}$-wertige innere Produkt auf
     $\bimodplus{B}{E}{A}$ ist mit der $\alg{A}$-Rechtslinearit"at im
     zweiten Argument kompatibel.  
   \begin{align*}
        h \act \rSP{x,y\cdot a}{\alg{A}}  & = \rSP{S(h_{\sss
            (1)})^{\ast} \act x, h_{\sss (2)} \act (y\cdot
          a)}{\alg{A}} \\ & = \rSP{S(h_{\sss (1)})^{\ast} \act x, (h_{\sss
            (2)} \act y)(h_{\sss (3)} \act \cdot a)}{\alg{A}} \\ &  =
        \rSP{S(h_{\sss (1)})^{\ast} \act x, (h_{\sss (2)} \act
          y)}{\alg{A}} (h_{\sss (3)} \act \cdot a) \\ & = (h_{\sss (1)}
        \act  \rSP{x,y}{\alg{A}})(h_{\sss (2)} \act a) \\ & = h \act
        (\rSP{x,y}{\alg{A}} a).   
    \end{align*}
\item Die $H$-Wirkung auf das $\alg{A}$-wertige innere Produkt auf
    $\bimodplus{B}{E}{A}$ ist kompatibel mit der
    $\alg{B}$-Rechtsmodulwirkung.  
\index{inneres Produkt!H-Wirkung@mit $H$-Wirkung}
  \begin{align*}
      h \act \rSP{x,b \cdot y}{\alg{A}} & = \rSP{S(h_{\sss
            (1)})^{\ast} \act x, h_{\sss (2)} \act (b \cdot
          y)}{\alg{A}}\\ & = \rSP{S(h_{\sss (1)})^{\ast} \act x, (h_{\sss
            (2)} \act b)\cdot (h_{\sss (3)} \act  y)}{\alg{A}} \\ & =
        \rSP{ (h_{\sss (2)} \act b)^{\ast}\cdot (S(h_{\sss
            (1)})^{\ast} \act x), h_{\sss (3)} \act  y}{\alg{A}} \\ &
        = \rSP{ (S(h_{\sss 
            (2)})^{\ast} \act b^{\ast})\cdot (S(h_{\sss (1)})^{\ast} \act x),
          h_{\sss (3)} \act  y}{\alg{A}} \\ & =  \rSP{ S(h_{\sss
            (1)})^{\ast} \act (b^{\ast} \cdot x),
          h_{\sss (2)} \act  y}{\alg{A}} \\ & =  h \act \rSP{b^{\ast}
          \cdot x,y}{\alg{A}}.  
  \end{align*}
Dabei nutzt man im vorletzten Schritt, da"s f"ur die Antipode $S: H
\to H$ einer \Name{Hopf}-$^\ast$-Algebra $H$ gilt $(S \otimes S) \circ \Deltaop =
\Delta \circ S$ und f"ur die $^\ast$-Involution $(h \act b)^{\ast} = S(h)^{\ast} \act b^{\ast}$.  
\item Die $H$-Wirkung ist kompatibel mit der
    Kompatibilit"atsbedingung der beiden inneren Produkte.
  \begin{align*}
      h \act (x \cdot \rSP{y,z}{\alg{A}}) & = (h_{\sss (1)} \act x)
      \cdot (h_{\sss (2)} \act \rSP{y,z}{\alg{A}}) \\ &  = (h_{\sss (1)} \act x)
      \cdot \rSP{S(h_{\sss (2)})^{\ast} \act y , h_{\sss (3)} \act
        z}{\alg{A}} \\ & = \lSP{\alg{B}}{h_{\sss (1)} \act x,
        S(h_{\sss (2)})^{\ast} \act y} \cdot  (h_{\sss (3)} \act z) \\
    & = h_{\sss (1)} \act \lSP{\alg{B}}{x,y} \cdot (h_{\sss (2)} \act
      z) \\ & = h \act (\lSP{\alg{B}}{x,y} \cdot z).
\end{align*}
\end{compactenum}
Analog zu den in {\it i.)} und {\it ii.)} gemachten Rechnungen zeigt
man die Kompatibilit"aten des anderen inneren Produkts.
\end{proof}

\begin{lemma}["Aquivalente Formulierung der $H$-Wirkung auf innere Produkte]
Die Bedingungen in Gleichungen \eqref{eq:WirkungAufSkalarprodukte} sind
"aquivalent zu 
\begin{align}
    \label{eq:AeqivalenteFormulierungWirkungAufSkalarprodukte}
    \rSP{x,h \act y}{\alg{A}} = h_{\sss (2)} \act \rSP{h_{\sss
        (1)}^{\ast} \act x, y}{\alg{A}} \quad \text{und} \quad \lSP{\alg{B}}{x,
      h\act y}  = S(h_{\sss (2)})^{\ast} \act
    \lSP{\alg{B}}{h_{\sss (1)}^{\ast} \act x, y}.
\end{align}
\end{lemma}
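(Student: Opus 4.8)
The plan is to establish the equivalence entirely on the level of the \Name{Hopf}-$^\ast$-algebra $H$, by feeding one coproduct leg into the defining relations and then collapsing with the antipode. The ingredients are: the left module axioms $g \act (k \act x) = (gk) \act x$ and $1 \act x = x$; that ${}^{\ast}$ is an antilinear algebra antihomomorphism of $H$ with $1^{\ast} = 1$ and a coalgebra morphism; the \Name{Hopf}-$^\ast$-algebra relations $\Delta \circ S = (S \otimes S) \circ \Deltaop$ and $S(S(h)^{\ast})^{\ast} = h$ (so that $S$ is bijective); and the antipode--counit identities $h_{\sss (1)} S(h_{\sss (2)}) \otimes h_{\sss (3)} = 1 \otimes h$ and $S(h_{\sss (1)}) h_{\sss (2)} \otimes h_{\sss (3)} = 1 \otimes h$ in $H \otimes H$, both immediate from the antipode and counit axioms. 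Throughout, applying $({}^{\ast} \otimes \id)$ to such an identity and using $1^{\ast}=1$ is legitimate because ${}^{\ast}$ is antilinear.

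First I would treat the $\alg{A}$-valued inner product. For \eqref{eq:WirkungAufSkalarprodukte}$\Rightarrow$\eqref{eq:AeqivalenteFormulierungWirkungAufSkalarprodukte} I start from the right-hand side $h_{\sss (2)} \act \rSP{h_{\sss (1)}^{\ast} \act x, y}{\alg{A}}$ and apply the first relation of \eqref{eq:WirkungAufSkalarprodukte} to the action of $h_{\sss (2)}$, splitting this leg by coassociativity into $h_{\sss (2)} \otimes h_{\sss (3)}$; this yields $\rSP{S(h_{\sss (2)})^{\ast} \act (h_{\sss (1)}^{\ast} \act x), h_{\sss (3)} \act y}{\alg{A}}$. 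Collapsing the iterated action in the first slot and using antimultiplicativity of ${}^{\ast}$ turns its argument into $(h_{\sss (1)} S(h_{\sss (2)}))^{\ast} \act x$, and applying $({}^{\ast} \otimes \id)$ to $h_{\sss (1)} S(h_{\sss (2)}) \otimes h_{\sss (3)} = 1 \otimes h$ reduces the whole expression to $\rSP{x, h \act y}{\alg{A}}$. The converse runs the identical computation backwards: from $\rSP{S(h_{\sss (1)})^{\ast} \act x, h_{\sss (2)} \act y}{\alg{A}}$ I apply the first relation of \eqref{eq:AeqivalenteFormulierungWirkungAufSkalarprodukte} to the leg $h_{\sss (2)}$, collapse the first slot to $(S(h_{\sss (1)}) h_{\sss (2)})^{\ast} \act x$, and invoke $S(h_{\sss (1)}) h_{\sss (2)} \otimes h_{\sss (3)} = 1 \otimes h$ to obtain $h \act \rSP{x,y}{\alg{A}}$. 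As every step is reversible, the two $\alg{A}$-valued relations are interderivable.

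The $\alg{B}$-valued relations are handled by the same scheme applied to the respective second relations of \eqref{eq:WirkungAufSkalarprodukte} and \eqref{eq:AeqivalenteFormulierungWirkungAufSkalarprodukte}, the difference being that the antipode now sits on the second argument and the outer action carries an extra $S$ and ${}^{\ast}$. Concretely, I would expand the coproduct of the outer factor $S(h_{\sss (2)})^{\ast}$ by combining $\Delta \circ S = (S \otimes S)\circ\Deltaop$ with the coalgebra-morphism property of ${}^{\ast}$, unwind the resulting double antipode in the second slot using $S(S(h)^{\ast})^{\ast} = h$, and collapse the first slot with the corresponding ($S$-reversed) antipode--counit identity; the bijectivity of $S$ on a \Name{Hopf}-$^\ast$-algebra is precisely what closes this step. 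The computation for the second relation and its converse is then completely parallel to the $\alg{A}$-valued case.

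The symbol pushing is routine; the real obstacle is the Sweedler bookkeeping. One must split the coproduct on the correct leg and track the order reversals induced by both ${}^{\ast}$ and $S$, so that after the iterated action is collapsed the antipode ends up in exactly the configuration matched by the relevant antipode--counit identity (for the $\alg{B}$-valued case in its $S$-reversed form) rather than in the opposite one. No positivity, fullness, or nondegeneracy of the inner products enters anywhere: the asserted equivalence is a purely formal consequence of the module and \Name{Hopf}-$^\ast$-algebra axioms.
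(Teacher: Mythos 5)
Your treatment of the $\alg{A}$-valued identity is exactly the paper's own computation: apply the first relation of \eqref{eq:WirkungAufSkalarprodukte} to the outer action of $h_{\sss (2)}$, use antimultiplicativity of $^{\ast}$ to assemble $(h_{\sss (1)}S(h_{\sss (2)}))^{\ast}$ in the first slot, and contract with $h_{\sss (1)}S(h_{\sss (2)})\otimes h_{\sss (3)}=1\otimes h$; the converse is the same computation run backwards. That half is correct and needs no further comment.

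The gap is in the half you, like the paper, dismiss as ``the same scheme'': the $\alg{B}$-valued case is \emph{not} a mirror image, and the recipe you describe does not close. The asymmetry is this: in the $\alg{A}$-valued relation the antipode and the starred leg both land on the \emph{first} argument, so the factors to be contracted are adjacent Sweedler legs; in the $\alg{B}$-valued relation the antipode sits on the element acting from the outside while the star sits inside. Carrying out your steps (expand $\Delta(S(h_{\sss (2)})^{\ast})=S(h_{\sss (2)(2)})^{\ast}\otimes S(h_{\sss (2)(1)})^{\ast}$, unwind the double antipode on the second slot via $S(S(k)^{\ast})^{\ast}=k$) gives
\begin{align*}
S(h_{\sss (2)})^{\ast}\act\lSP{\alg{B}}{h_{\sss (1)}^{\ast}\act x,y}
=\lSP{\alg{B}}{(h_{\sss (1)}S(h_{\sss (3)}))^{\ast}\act x,\ h_{\sss (2)}\act y},
\end{align*}
and $h_{\sss (1)}S(h_{\sss (3)})\otimes h_{\sss (2)}$ is \emph{not} $1\otimes h$ for non-cocommutative $H$: the two factors you must contract are separated by the leg that ends up on the second argument, and no antipode--counit identity, standard or $S$-reversed, contracts non-adjacent legs --- so the bijectivity of $S$ does not ``close this step'' as you claim. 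What one actually obtains from \eqref{eq:WirkungAufSkalarprodukte} by a computation that does collapse is $\lSP{\alg{B}}{x,h\act y}=S(h_{\sss (1)})^{\ast}\act\lSP{\alg{B}}{S^{2}(h_{\sss (2)})^{\ast}\act x,y}$, which agrees with the stated formula \eqref{eq:AeqivalenteFormulierungWirkungAufSkalarprodukte} precisely when $H$ is cokommutativ (e.g.\ $\ring{C}[G]$ or $\universell{\LieAlg{g}}$, where $S^{2}=\id$ and the legs commute). So either the hypothesis of cocommutativity must be invoked, or the $\alg{B}$-valued formula must be rearranged; in either case the step you treat as routine is exactly where the argument needs work. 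To be fair, the paper's own proof writes out only the $\alg{A}$-valued case and hides the same issue behind ``Analog rechnet man \dots nach''.
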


\begin{proof}
Der Beweis sind je zwei einfache Rechnungen (vergleiche Definition
\ref{Definition:HopfSternAlgebra}).  
\begin{align*}
  h_{\sss (2)} \act \rSP{h_{\sss (1)}^{\ast} \act x, y}{\alg{A}} & =
  \rSP{(S(h_{\sss (2)(1)})^{\ast} h_{\sss (1)}^{\ast}) \act x, h_{\sss (2)(2)}
    \act y}{\alg{A}} \\ &= \rSP{(h_{\sss (1)}S(h_{\sss
      (2)(1)}))^{\ast} \act x, h_{\sss (2)(2)} \act y}{\alg{A}} \\ & =
  \rSP{\varepsilon(h_{\sss (1)})^{\ast} \act x, h_{\sss (2)} \act
    y}{\alg{A}} \\ & =\rSP{x, (\varepsilon(h_{\sss (1)}) h_{\sss (2)})
    \act y}{\alg{A}} \\ & = \rSP{x,h \act y}{\alg{A}} 
\intertext{sowie die Umkehrung}
 \rSP{S(h_{\sss (1)})^{\ast} \act x, h_{\sss (2)} \act y}{\alg{A}} & =
 h_{\sss (2)(2)} \act \rSP{h_{\sss (2)(1)}^{\ast} \act (S(h_{\sss
     (1)})^{\ast} \act x), y}{\alg{A}} \\ & = h_{\sss (2)(2)} \act
 \rSP{(S(h_{\sss (1)})h_{\sss (2)(1)})^{\ast} \act x, y}{\alg{A}} \\ &
=h_{\sss (2)} \act \rSP{\cc{\varepsilon}(h_{\sss (1)}) x,y}{\alg{A}}
\\ & = h \act \rSP{x,y}{\alg{A}}. 
\end{align*}
Analog rechnet man die Behauptung f"ur das $\alg{B}$-wertige innere
Produkt nach.
\end{proof}

\begin{bemerkung}[$H$-Wirkung auf Ausartungsraum]
\label{Bemerkung:HWirkungAusartungsraum}
 Sei $\rmodplus{E}{A}$ ein $\alg{A}$-Rechtsmodul mit einem ausgearteten
 inneren Produkt, so garantiert Gleichung
 \eqref{eq:AeqivalenteFormulierungWirkungAufSkalarprodukte}, da"s $H
 \act \alg{E}^{\perp}_{\alg{A}} \subseteq
 \alg{E}^{\perp}_{\alg{A}}$. Damit "ubersteht die $H$-Wirkung die
 Quotientenbildung $\rmod{E}{A}/ \alg{E}^{\perp}_{\alg{A}}$, und
 $\left( \rmod{E}{A}/ \alg{E}^{\perp}_{\alg{A}}, \rSP{\cdot,
       \cdot}{\alg{A}} \right)$ wird zu einem nichtausgearteten,
 $H$-"aquivarianten $\alg{A}$-Rechtsmodul mit innerem Produkt.  
\end{bemerkung}

Damit haben wir die Wirkung einer \Name{Hopf}-$^\ast$-Algebra auf
einem Bimodul mit inneren Produkten definiert. Um eine $H$-"aquivariante
\Name{Morita}-Theorie zu formulieren, m"ussen wir angeben, wie die
\Name{Hopf}-$^\ast$-Algebra auf dem dualen Bimodul operiert.
\begin{lemma}[Wirkung $\ccact$ auf dualem Bimodul
    $\bimod{A}{\cc{\alg{E}}}{B}$] 
\label{Lemma:WirkungAufDualemBimodul}
\index{Bimodul!dualer}
Seien $\alg{A}$ und $\alg{B}$ $^\ast$-Algebren und
$(\bimod{B}{E}{A},\neact)$ ein $H$-"aquivarianter
($^\ast$-, starker) \Name{Morita}-"Aqui\-va\-lenz\-bi\-mo\-dul, dann  
induziert die Wirkung $\neact$ eine Wirkung $\ccact$ auf dem dualen
Bimodul $\bimod{A}{\cc{E}}{B}$ durch

\begin{align}
    \label{eq:InduzierteWirkungAufDualemBimodul}
    h \ccact \cc{x}: = \cc{S(h)^{\ast} \act x}.  
\end{align}
\end{lemma}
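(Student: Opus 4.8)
The plan is to check, in turn, the three things implicit in the claim: that \eqref{eq:InduzierteWirkungAufDualemBimodul} defines a genuine left $H$-module structure on $\cc{E}$, that this structure is compatible with the (flipped) $(\alg{A},\alg{B})$-bimodule structure of $\bimod{A}{\cc{E}}{B}$ in the sense of Definition~\ref{Definition:HopfInvarianterBimodul}, and that it is compatible with the two induced inner products. Since $\bimod{A}{\cc{E}}{B}$ and $\bimod{B}{E}{A}$ coincide as sets, well-definedness is immediate and only multiplicativity must be verified. Here I would use that $^{\ast}$ is an antilinear antiautomorphism and $S$ an antihomomorphism, so that $S(g)^{\ast} S(h)^{\ast} = (S(h) S(g))^{\ast} = S(gh)^{\ast}$; together with the fact that $\act$ is a left action this gives $g \ccact (h \ccact \cc{x}) = \cc{(S(g)^{\ast} S(h)^{\ast}) \act x} = (gh) \ccact \cc{x}$, while unitality follows from $S(1_{\sss H}) = 1_{\sss H}$ and $1_{\sss H}^{\ast} = 1_{\sss H}$. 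I would also remark that $h \mapsto S(h)^{\ast}$ is antilinear and that the bar conjugates scalars, so that $\ccact$ is $\ring{C}$-linear both in $h$ and in $\cc{x}$ for the conjugate scalar structure of $\bimod{A}{\cc{E}}{B}$.

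For the bimodule compatibility I would reduce everything to the corresponding property of $\neact$ on $\bimod{B}{E}{A}$. The one preparatory computation is the coproduct of $S(h)^{\ast}$: combining $\Delta(S(h)) = S(h_{\sss (2)}) \otimes S(h_{\sss (1)})$ with the fact that $^{\ast}$ is a coalgebra morphism yields $\Delta(S(h)^{\ast}) = S(h_{\sss (2)})^{\ast} \otimes S(h_{\sss (1)})^{\ast}$, that is, exactly one order-reversal. Then, using $a \cdot \cc{x} = \cc{x \cdot a^{\ast}}$ together with the $^{\ast}$-action identity $(h_{\sss (1)} \act a)^{\ast} = S(h_{\sss (1)})^{\ast} \act a^{\ast}$, a short calculation gives $h \ccact (a \cdot \cc{x}) = (h_{\sss (1)} \act a) \cdot (h_{\sss (2)} \ccact \cc{x})$, and symmetrically $h \ccact (\cc{x} \cdot b) = (h_{\sss (1)} \ccact \cc{x}) \cdot (h_{\sss (2)} \act b)$; iterating these two once more reproduces the three-leg compatibility \eqref{eq:HopfWirkungKompatibelBimodulstruktur} for $\ccact$ on $\bimod{A}{\cc{E}}{B}$.

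Finally, for the inner products I would start from the defining relation $\lSP[\sss \cc{\alg{E}}]{A}{\cc{x},\cc{y}} = \rSP[\sss \alg{E}]{x,y}{A}$, apply \eqref{eq:WirkungAufSkalarprodukte} on the right-hand side, and reassemble. The decisive input is the involution identity $S(S(h)^{\ast})^{\ast} = h$ valid in a \Name{Hopf}-$^{\ast}$-algebra, which gives $S(h_{\sss (2)})^{\ast} \ccact \cc{y} = \cc{h_{\sss (2)} \act y}$; combined with $h_{\sss (1)} \ccact \cc{x} = \cc{S(h_{\sss (1)})^{\ast} \act x}$, the transformed inner product collapses back to $\rSP[\sss \alg{E}]{S(h_{\sss (1)})^{\ast} \act x, h_{\sss (2)} \act y}{A}$, which is precisely $h \act \lSP[\sss \cc{\alg{E}}]{A}{\cc{x},\cc{y}}$; the $\alg{B}$-valued product, defined by $\rSP[\sss \cc{\alg{E}}]{\cc{x},\cc{y}}{B} = \lSP[\sss \alg{E}]{B}{x,y}$, is handled identically. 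The only real obstacle is bookkeeping: keeping the \emph{Sweedler} legs straight through the single order-reversal in $\Delta(S(h)^{\ast})$ and invoking the two $^{\ast}$-Hopf identities $(h \act a)^{\ast} = S(h)^{\ast} \act a^{\ast}$ and $S(S(h)^{\ast})^{\ast} = h$ at the right moments; once these are in place, every remaining step is a routine manipulation.
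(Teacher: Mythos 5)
Your proposal is correct and follows essentially the same route as the paper's proof: checking the action axioms via $S(g)^{\ast}S(h)^{\ast}=S(gh)^{\ast}$ and $S(1_{\sss H})^{\ast}=1_{\sss H}$, the bimodule compatibility via $\Delta(S(h)^{\ast})=S(h_{\sss(2)})^{\ast}\otimes S(h_{\sss(1)})^{\ast}$ together with $(h\act a)^{\ast}=S(h)^{\ast}\act a^{\ast}$, and the inner-product compatibility via the substitution identity $S(h)^{\ast}\ccact\cc{x}=\cc{h\act x}$. All the key identities you isolate are exactly the ones the paper invokes, so the argument goes through as planned.
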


\begin{proof}
Wir wollen zeigen, da"s es sich um eine Wirkung handelt,
vgl.~Definition \ref{Definition:AxiomeWirkungHopf}. Seien im weiteren
$g,h \in H$, $x \in \bimod{B}{E}{A}$ und die komplex konjugierten
Gr"o"sen aus dem dualen Bimodul $\bimod{A}{\cc{E}}{B}$ mit $\cc{x},
\cc{y}$ etc.~bezeichnet.
\begin{compactenum}
\item Hintereinanderausf"uhrung zweier Wirkungen ist wieder eine Wirkung
    der \Name{Hopf}-Algebra. 
    \begin{align*}
     g \ccact (h \ccact \cc{x}) & = g \ccact \cc{S(h)^{\ast} \act x} \\
     & = \cc{S(g)^{\ast} \act S(h)^{\ast} \act x} \\ & =
     \cc{(S(g)^{\ast} S(h)^{\ast}) \act x} \\ & =
     \cc{(S(h)S(g))^{\ast} \act x} \\ & = \cc{S(gh)^{\ast} \act x} \\
     & = (gh) \ccact \cc{x}.   
    \end{align*}
\item Desweiteren mu"s das Einselement in der \Name{Hopf}-Algebra
    eine triviale Wirkung haben.
    \begin{align*}
     1_{\sss H} \ccact \cc{x} & = \cc{S(1_{\sss H})^{\ast} \act x} =
     \cc{1_{\sss H}^{\ast} \act x} = \cc{1_{\sss H} \act x} = \cc {x}.  
    \end{align*}
\item Die Bimodulstrukturen m"ussen mit der Wirkung $\ccact$
    vertr"aglich sein. Dabei bleibt die Wirkung auf den Algebren
    $\alg{A}$ und $\alg{B}$ jeweils unangetastet.
    \begin{align*}
     h \ccact (a \cdot \cc{x}) & = h \ccact \cc{x \cdot a^{\ast}} \\  &
    = \cc{S(h)^{\ast} \act (x \cdot a^{\ast})} \\  & = \cc{(S(h_{\sss 
        (2)})^{\ast} \act x) \cdot (S(h_{\sss (1)})^{\ast} \act
      a^{\ast}) } \\& = \cc{(S(h_{\sss (2)})^{\ast} \act x) \cdot
      (h_{\sss (1)} \act a)^{\ast}} \\ & = (h_{\sss (1)} \act a) \cdot
    \cc{(S(h_{\sss (2)})^{\ast} \act x)} \\ &=  (h_{\sss (1)} \act a)
    \cdot (h_{\sss (2)} \ccact \cc{x}).    
    \end{align*}
Analog rechnet man die $\alg{B}$-Rechtswirkung auf
$\bimod{A}{E}{B}$ nach.

\item Nun gilt es noch zu zeigen, da"s die Wirkung konsistent
    mit den inneren Produkten des Bimoduls ist.
    \begin{align*}
    h \act \lSP[\sss \cc{\alg{E}}]{ \alg{A}}{\cc{x},\cc{y}} & =  h
    \act \rSP[\sss \alg{E}]{x,y}{\alg{A}} \\ & =  \rSP[\sss
  \alg{E}]{S(h_{\sss (1)})^{\ast} \act x, h_{\sss (2)} \act
  y}{\alg{A}} \\ & = \lSP[\sss \cc{\alg{E}}]{\alg{A}}{\cc{S(h_{\sss
      (1)})^{\ast} \act x}, \cc{h_{\sss (2)} \act y}} \\ & = \lSP[\sss
\cc{\alg{E}}]{\alg{A}}{h_{\sss (1)} \ccact \cc{x}, S(h_{\sss (2)})^{\ast}
  \ccact \cc{y}}   
    \end{align*}
Dabei haben wir im letzten Schritt genutzt, da"s aus Gleichung
\eqref{eq:InduzierteWirkungAufDualemBimodul} automatisch auch
$S(h)^{\ast} \ccact \cc{x} = \cc{h \act x}$ folgt, wie man durch eine
Substitution sehen kann. Wieder rechnet man auf "ahnliche Weise die
Konsistenz mit $\rSP[\sss \cc{\alg{E}}]{\cdot, \cdot}{\alg{B}}$ nach.
\end{compactenum}
\end{proof}

\subsubsection{$H$-"aquivariante Darstellungstheorie}
\label{sec:HaequivarianteDarstellungstheorie}
 
Seien $\alg{A}$ und $\alg{D}$ zwei $^\ast$-Algebren "uber dem Ring
$\ring{C}$, die beide mit einer festen \Name{Hopf}-$^\ast$-Algebra Wirkung
$(H,\neact)$ ausgestattet sind, so da"s $(H, \alg{A}, \neact)$ und
$(H, \alg{D}, \neact)$ $H$-Modulalgebren mit $^\ast$-Involution sind.

\begin{definition}[$H$-"aquivariante Darstellung]
\index{Darstellung!H-aequivariante@$H$-\"aquivariante}
Sei nun $\rmodplus{H}{D}$ ein $\alg{D}$-Rechtsmodul mit innerem
Produkt. Man bezeichnet eine $^\ast$-Darstellung $(\rmod{H}{D}, \pi)$
der $^\ast$-Algebra $\alg{A}$ als {\em $H$-"aquivariant}, wenn
\begin{align}
    \label{eq:HAeqivarianteSterndarstellung}
    \pi (h \act a)x = h_{\sss (1)} \act (\pi (a) S(h_{\sss (2)}) \act x),
\end{align}
f"ur alle $a\in \alg{A}$, $h\in H$  und $x \in \rmod{H}{D}$. Desweiteren nennt
man einen Verschr"ankungsoperator zwischen zwei $H$-"aquivarianten
Darstellungen $T:(\rmod{H}{D}, \pi) \to (\rmod{H'}{D},
\pi')$ {\em
  $H$-"aquivariant}\index{Verschraenkungsoperator@Verschr\"ankungsoperator!H-aequivarinater@$H$-\"aquivarianter}, falls f"ur alle $h\in H$ und $x \in 
\rmod{H}{D}$ gilt 
\begin{align}
    \label{eq:HAequivarianterIntertwiner}
    T(h \act x) = h\act T(x).
\end{align}
\end{definition}

Wir bezeichnen die Kategorie der $H$-"aquivarianten
$^\ast$-Darstellungen der $^\ast$-Algebra $\alg{A}$ auf 
$\alg{D}$-Rechts\-mo\-duln mit $\smod[\alg{D},H](\alg{A})$. Die
Unterkategorien der $H$-"aquivarianten $^\ast$-Darstellungen auf
Pr"a-\Name{Hil\-bert}-Mo\-duln bezeichnet man analog dazu mit $\sMod[\alg{D},H](\alg{A})$
sowie $\srep[\alg{D},H](\alg{A})$ und $\sRep[\alg{D},H](\alg{A})$.

\subsection{$H$-"aquivariante \Name{Morita}-"Aqui\-va\-lenz\-bi\-mo\-duln}
\label{sec:HaquivarianteMoritaAequivalenzbimoduln}

\subsubsection{$H$-"aquivariante Tensorprodukte und \Name{Rieffel}-Induktion}
\label{sec:HaquivarianteTensorprodukteundRieffelInduktion}

Gegeben einen $\alg{B}$-Rechtsmodul $\rmodplus{H}{B}$ mit einem
inneren Produkt und einen $(\alg{B},\alg{A})$-Bimodul
$\bimodplus{B}{E}{A}$ mit zwei inneren Produkten, wobei die inneren
Produkte jeweils mit beiden Algebra-Wirkungen vertr"aglich
seien. Desweiteren seien sowohl $\rmodplus{H}{B}$ als auch
der $(\alg{B},\alg{A})$-Bimodul $\bimodplus{B}{E}{A}$ $H$-"aquivariant.

Wir haben eine kanonische $H$-Wirkung auf den Elementen von
$\rmod{H}{B} \tensor[\alg{B}] \bimod{B}{E}{A}$ mittels
\begin{align}
    \label{eq:HWirkungAufProduktBimodul}
    h \act (x \tensor[\alg{B}] y) = (h_{\sss (1)} \act x)
    \tensor[\alg{B}] (h_{\sss (2)} \act y).
\end{align}

\begin{lemma}[$H$-Wirkung auf dem Modul $\rmod{H}{B} \otimes_{\sss
      \alg{B}} \bimod{B}{E}{A}$]
\label{Lemma:HWirkungAufDemTensorprodukt}
Die kanonische $H$-Wirkung auf dem Tensorprodukt $\rmod{H}{B}
\tensor[\alg{B}] \bimod{B}{E}{A}$ (Gleichung
\eqref{eq:HWirkungAufProduktBimodul}) ist mit dem inneren Produkt
$\rSP[\sss \alg{H} \otimes \alg{E}]{\cdot, \cdot}{\alg{A}}$
vertr"aglich, und macht $(\rmod{H}{B} \tensor[\alg{B}]
\bimod{B}{E}{A}) / (\rmod{H}{B} \tensor[\alg{B}]
\bimod{B}{E}{A})^{\perp}$ zu einem $H$-"aqui\-va\-ri\-an\-ten
$\alg{A}$-Rechtsmodul mit innerem Produkt. 
\end{lemma}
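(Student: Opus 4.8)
The plan is to reduce everything to two facts already at hand: the compatibility conditions \eqref{eq:WirkungAufSkalarprodukte} for the two constituent inner products, and the descent argument recorded in Bemerkung~\ref{Bemerkung:HWirkungAusartungsraum}. As a preliminary I would check that \eqref{eq:HWirkungAufProduktBimodul} is well defined on the $\alg{B}$-balanced tensor product. Applying $h\act$ to $(x\cdot b)\tensor[\alg{B}] y$ gives $(h_{\sss(1)}\act x)\cdot(h_{\sss(2)}\act b)\tensor[\alg{B}](h_{\sss(3)}\act y)$ by compatibility of the action with the right $\alg{B}$-module structure of $\rmod{H}{B}$; moving the middle factor $(h_{\sss(2)}\act b)$ across the tensor sign and using compatibility with the left $\alg{B}$-action on $\bimod{B}{E}{A}$ yields $h\act\left(x\tensor[\alg{B}] b\cdot y\right)$, coassociativity making the Sweedler legs match.

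The central step is the identity
\begin{align*}
h\act\rSP[\sss \alg{H}\otimes\alg{E}]{\xi_1,\xi_2}{\alg{A}}
= \rSP[\sss \alg{H}\otimes\alg{E}]{S(h_{\sss(1)})^{\ast}\act\xi_1,\,h_{\sss(2)}\act\xi_2}{\alg{A}}
\end{align*}
for simple tensors $\xi_i=x_i\tensor[\alg{B}] y_i$, the general case following by $\ring{C}$-linearity. Starting from \eqref{eq:InneresProduktAufTensorproduktalgebra}, I would apply $h\act$ to the outer $\alg{A}$-valued product via \eqref{eq:WirkungAufSkalarprodukte}, then split $h_{\sss(2)}\act\left(\rSP[\sss \alg{H}]{x_1,x_2}{\alg{B}}\cdot y_2\right)$ using compatibility of the action with the $\alg{A}$-right-module structure, and finally rewrite the surviving factor $h_{\sss(2)}\act\rSP[\sss \alg{H}]{x_1,x_2}{\alg{B}}$ by \eqref{eq:WirkungAufSkalarprodukte} for the $\alg{B}$-valued product. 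This produces a fourfold coproduct with $S(h_{\sss(1)})^{\ast}$ acting on $y_1$, $S(h_{\sss(2)})^{\ast}$ on $x_1$, $h_{\sss(3)}$ on $x_2$ and $h_{\sss(4)}$ on $y_2$. On the other side of the asserted identity I would expand $S(h_{\sss(1)})^{\ast}\act(x_1\tensor[\alg{B}] y_1)$ by means of $\Delta\bigl(S(h)^{\ast}\bigr)=S(h_{\sss(2)})^{\ast}\otimes S(h_{\sss(1)})^{\ast}$, which follows from $(S\otimes S)\circ\Deltaop=\Delta\circ S$ together with $\Delta$ being a $^\ast$-homomorphism, and then evaluate \eqref{eq:InneresProduktAufTensorproduktalgebra}; coassociativity identifies the two sides leg by leg.

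The only real obstacle is the bookkeeping: getting the expansion of $\Delta\bigl(S(h_{\sss(1)})^{\ast}\bigr)$ right and aligning the four Sweedler legs on both sides. Once the displayed identity is established, the rest is immediate. The identity is precisely condition \eqref{eq:WirkungAufSkalarprodukte} for the $\alg{A}$-valued inner product on $\rmod{H}{B}\tensor[\alg{B}]\bimod{B}{E}{A}$, so its equivalent form \eqref{eq:AeqivalenteFormulierungWirkungAufSkalarprodukte} holds as well. By the argument of Bemerkung~\ref{Bemerkung:HWirkungAusartungsraum} this forces the degeneracy space $\left(\rmod{H}{B}\tensor[\alg{B}]\bimod{B}{E}{A}\right)^{\perp}$ to be $H$-invariant, whence the action descends to the quotient and remains compatible with the induced non-degenerate inner product. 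Thus the quotient becomes an $H$-equivariant right $\alg{A}$-module with inner product, as claimed.
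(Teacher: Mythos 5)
Your proposal is correct and follows essentially the same route as the paper's proof: apply the compatibility relations \eqref{eq:WirkungAufSkalarprodukte} twice to the defining formula \eqref{eq:InneresProduktAufTensorproduktalgebra}, reassemble the four Sweedler legs using $(S\otimes S)\circ\Delta=\Deltaop\circ S$, and descend to the quotient via Bemerkung~\ref{Bemerkung:HWirkungAusartungsraum}. One terminological slip: distributing $h_{\sss(2)}$ over the product of the $\alg{B}$-valued inner product with $y_2$ uses the compatibility of the $H$-action with the \emph{left} $\alg{B}$-module structure of $\bimod{B}{E}{A}$ (Gleichung \eqref{eq:HopfWirkungKompatibelBimodulstruktur}), not the right $\alg{A}$-module structure; the computation itself is the one the paper performs.
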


\begin{proof}
    \begin{align*}
        h \act  \rSP[\sss \alg{H} {\otimes} \alg{E}]{x
          {\tensor[\alg{B}]} y, x' {\tensor[\alg{B}]}
          y'}{\alg{A}} & = h 
        \act \rSP[\sss \alg{E}]{y,\rSP[\sss \alg{H}]{x,x'}{\alg{B}}
          \cdot y'}{\alg{A}} \\ & = \rSP[\sss \alg{E}]{S(h_{\sss
            (1)})^{\ast} \act y, h_{\sss(2)} \act \left( \rSP[\sss
              \alg{H}]{x,x'}{\alg{B}} \cdot y'\right)}{\alg{A}} \\ & =
        \rSP[\sss \alg{E}]{S(h_{\sss (1)})^{\ast} \act y,\rSP[\sss
          \alg{H}]{S(h_{\sss (2)})^{\ast} \act x, h_{\sss (3)} \act
            x'}{\alg{B}} \cdot  (h_{\sss (4)} \act y')}{\alg{A}} \\ & =
        \rSP[\sss \alg{H} {\otimes} \alg{E}]{S(h_{\sss
            (2)})^{\ast} \act 
          x {\tensor[\alg{B}]} S(h_{\sss (1)})^{\ast} \act y, (h_{\sss
            (3)} \act 
          x') {\tensor[\alg{B}]} (h_{\sss (4)} \act y')}{\alg{A}} \\ & =
        \rSP[\sss \alg{H} {\otimes} \alg{E}]{S(h_{\sss
            (1)})^{\ast} \act (x {\tensor[\alg{B}]} y), h_{\sss (2)}
          \act (x' {\tensor[\alg{B}]} y')}{\alg{A}}         
    \end{align*}
Dabei nutzt man im letzten Schritt die Eigenschaft $(S \otimes S)\circ
\Delta = \Deltaop \circ S$ der Antipode, vergleiche Proposition
\ref{Proposition:Antipode} {\it iii.)}. Dies zeigt, da"s das auf
$\rmod{H}{B} \tensor[\alg{B}] \bimod{B}{E}{A}$ induzierte innere
Produkt auch mit der $H$-Wirkung vertr"aglich ist. Die Wirkung
l"a"st sich via Bemerkung \ref{Bemerkung:HWirkungAusartungsraum}
auch auf den Quotienten 
$(\rmod{H}{B} \tensor[\alg{B}] \bimod{B}{E}{A}) / (\rmod{H}{B}
\tensor[\alg{B}] \bimod{B}{E}{A})^{\perp}$ und somit auf
$\tensorhat[\alg{B}]$ "ubertragen, wodurch $\left( (\rmod{H}{B}
    \tensorhat[\alg{B}] \bimod{B}{E}{A}), \rSP[\sss \alg{H} \otimes
\alg{E}]{\cdot,\cdot}{\alg{A}} \right)$ zu einem $H$-"aquivarianten
$\alg{A}$-Rechtsmodul mit innerem Produkt wird.
\end{proof}

Nun m"ussen wir analog zu
Lemma~\ref{Lemma:VertraeglichkeittensorBmitMorphismen} die
Vertr"aglichkeit der $H$-Wirkung mit den Morphismen zeigen.

\begin{lemma}[Vertr"aglichkeit der $H$-Wirkung mit Morphismen]
Seien $\bimodplus{B}{E}{A}$ und  $\bimodplus{B}{E^{\prime}}{A}$ zwei $H$-"aquivariante
Bimoduln und $\rmodplus{H}{B}$ bzw.~$\rmodplus{H^{\prime}}{B}$ seien
$H$-"aquivariante $\alg{B}$-Rechtsmoduln. Seien ferner $S:
\bimod{B}{E}{A} \to \bimod{B}{E^{\prime}}{A}$ und $T:\rmod{H}{B} \to
\rmod{H^{\prime}}{B}$ $H$-"aquivariante Verschr"ankungsoperatoren,
so ist die $H$-Wirkung mit der Tensorproduktbildung vertr"aglich.
\end{lemma}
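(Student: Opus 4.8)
The assertion is that the morphism induced on the inner tensor product, namely $T \tensorhat[\alg{B}] S : \rmod{H}{B} \tensorhat[\alg{B}] \bimod{B}{E}{A} \to \rmod{H^{\prime}}{B} \tensorhat[\alg{B}] \bimod{B}{E^{\prime}}{A}$, is itself $H$-equivariant; the plan is to check this directly on elementary tensors, which $\ring{C}$-linearly span the tensor product. By Lemma~\ref{Lemma:VertraeglichkeittensorBmitMorphismen} the algebraic tensor $T \otimes_{\sss \alg{B}} S$ already descends to a well-defined bimodule morphism, so no question of module linearity or adjointability needs to be revisited; the only remaining point is the identity $(T \tensorhat[\alg{B}] S)(h \act \xi) = h \act (T \tensorhat[\alg{B}] S)(\xi)$ for all $h \in H$ and all $\xi$.

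First I would fix $x \in \rmod{H}{B}$ and $y \in \bimod{B}{E}{A}$ and rewrite the left-hand side by means of the definition \eqref{eq:HWirkungAufProduktBimodul} of the Hopf action on the tensor product, obtaining $h \act (x \tensor[\alg{B}] y) = (h_{\sss (1)} \act x) \tensor[\alg{B}] (h_{\sss (2)} \act y)$. Applying $T \otimes_{\sss \alg{B}} S$ componentwise then gives $T(h_{\sss (1)} \act x) \tensor[\alg{B}] S(h_{\sss (2)} \act y)$.

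The core of the argument is to feed in the individual $H$-equivariance of the two intertwiners, equation \eqref{eq:HAequivarianterIntertwiner}, which yields $T(h_{\sss (1)} \act x) = h_{\sss (1)} \act T(x)$ on the first factor and $S(h_{\sss (2)} \act y) = h_{\sss (2)} \act S(y)$ on the second. Reading the same rule \eqref{eq:HWirkungAufProduktBimodul} in the reverse direction then collapses $(h_{\sss (1)} \act T(x)) \tensor[\alg{B}] (h_{\sss (2)} \act S(y))$ back into $h \act (T(x) \tensor[\alg{B}] S(y)) = h \act (T \otimes_{\sss \alg{B}} S)(x \tensor[\alg{B}] y)$, which is exactly the claimed equivariance on an elementary tensor; $\ring{C}$-linear extension settles it on all of $\rmod{H}{B} \tensor[\alg{B}] \bimod{B}{E}{A}$.

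I do not anticipate a genuine obstacle here: the whole computation is a formal manipulation in Sweedler notation, in which the coproduct of \eqref{eq:HWirkungAufProduktBimodul} splits cleanly across the two factors and each equivariance is applied to its matching factor. The one subtlety deserving a remark is that the identity must hold on the inner tensor product $\tensorhat[\alg{B}]$ and not merely on the algebraic $\tensor[\alg{B}]$, that is, it must survive the passage to the quotient by the degeneracy space. This, however, is automatic: the morphism $T \tensorhat[\alg{B}] S$ is already well-defined on the quotient by Lemma~\ref{Lemma:VertraeglichkeittensorBmitMorphismen}, and the $H$-action descends to the very same quotient by Lemma~\ref{Lemma:HWirkungAufDemTensorprodukt} together with Bemerkung~\ref{Bemerkung:HWirkungAusartungsraum}, so the equality verified on representatives transfers at once to the classes.
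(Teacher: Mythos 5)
Your proof is correct and follows essentially the same route as the paper: the single Sweedler-notation computation pushing the coproduct through the two factors and invoking the $H$-equivariance of each intertwiner, combined with Lemma~\ref{Lemma:VertraeglichkeittensorBmitMorphismen} for well-definedness on $\tensorhat[\alg{B}]$. Your explicit remark that the identity descends to the quotient by the degeneracy space is a point the paper leaves implicit, and you correctly apply $T$ and $S$ to the factors matching their stated domains (the paper's own displayed computation swaps them).
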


\begin{proof}
Seien $x \in \rmod{H}{B}$ und $y \in \bimod{B}{E}{A}$, so gilt
\begin{align*}
    h \act (S(x) \tensor[\alg{B}] T(y)) & = (h_{\sss (1)} \act S(x))
    \tensor[\alg{B}] (h_{\sss (2)} \act T(y)) \\ & =  S(h_{\sss (1)} \act x)
    \tensor[\alg{B}] T(h_{\sss (2)} \act y) 
\end{align*}
Mit Lemma~\ref{Lemma:VertraeglichkeittensorBmitMorphismen} zeigt dies
die Vertr"aglichkeit der $H$-Wirkung mit den Morphismen.
\end{proof}

Damit haben wir eine funktorielle Abbildung konstruiert, und wir
k"onnen analog zu Gleichung \eqref{eq:FunktorFuersmod} schreiben

\begin{align}
    \label{eq:FunktorFuersmodHaequivariant}
\tensorhat[\alg{B}]: \smod[\alg{B},H](\alg{C}) \times
    \smod[\alg{A},H](\alg{B}) \to \smod[\alg{A},H](\alg{C}).
\end{align}
Das gewonnene innere Produkt $\tensorhatohne$ ist bis auf die kanonische
Isomorphie assoziativ (siehe Gleichung
\eqref{eq:AssoziativitaetInneresTensorprodukt}). Da $\tensorhatohne$ mit
der vollst"andigen Positivit"at der inneren Produkte vertr"aglich ist,
"ubertr"agt sich Gleichung \eqref{eq:FunktorFuersmodHaequivariant}
auch auf die Kategorien $\srep[\alg{B},H]$ bzw.~auf $\sMod[\alg{B},H]$
und $\sRep[\alg{B},H]$. Analog zu Beispiel
\ref{Beispiel:RieffelInduktion} k"onnen wir eine 
$H$-"aquivariante Version der \Name{Rieffel}-Induktion angeben.

\begin{definition}[$H$-"aquivariante \Name{Rieffel}-Induktion, Wechsel
    der Basisalgebra] 
\label{Definition:HAequivarianteRieffelInduktion}
\index{Rieffel-Induktion@\Name{Rieffel}-Induktion!H-aequivariante@$H$-\"aquivariante}
Seien $\alg{A}$, $\alg{B}$ und $\alg{D}$, $\alg{D}'$ $^\ast$-Algebren
"uber dem Ring $\ring{C}$, die alle mit einer Wirkung der
\Name{Hopf}-$^\ast$-Algebra $H$ im Sinne von Definition~\ref{Definition:AxiomeWirkungHopf}
versehen seien. Desweiteren seien 
$\bimod{B}{E}{A} \in \srep[\alg{A},H](\alg{B})$ und $\bimod{D}{G}{D'}
\in \srep[\alg{D'},H](\alg{D})$.  
\begin{compactenum}
\item Wir bezeichnen den $H$-"aquivarianten Funktor 
\begin{align}
    \label{eq:HAequivarianteRieffelInduktion}
    \rieffel{R}{\alg{E}} = \bimod{B}{E}{A} \tensorhat[\alg{A}] \cdot :
    \srep[\alg{D},H](\alg{A}) \to \srep[\alg{D},H](\alg{B}) 
\end{align}
als {\em $H$-"aquivariante \Name{Rieffel}-Induktion}. 

\item Desweiteren definieren wir den {\em $H$-"aquivarianten Wechsel
      der Basisalgebra} durch den Funktor
\index{Wechsel der Basisalgebra!H-aequivarianter@$H$-\"aquivarianter}
\begin{align}
    \label{eq:HAequivarianterWechselDerBasisAlgebra}
    \rieffel{S}{\alg{G}} = \cdot \tensorhat[\alg{D}] \bimod{D}{G}{D'}:
    \srep[\alg{D},H](\alg{A}) \to \srep[\alg{D'},H](\alg{A}).  
\end{align}
\end{compactenum}
Die Wirkungen auf Objekte und Morphismen sind die gleichen wie in den
Beispielen \ref{Beispiel:RieffelInduktion}. 
\end{definition}

\subsubsection{$H$-"aquivariante starke \Name{Morita}-"Aquivalenz}
\label{HaequivariantestarkeMoritaAequivalenz}
\index{Morita-Aequivalenz@\Name{Morita}-\"Aquivalenz!starke!H-aequivariante@$H$-\"aquivariante}
\index{Morita-Aequivalenz@\Name{Morita}-\"Aquivalenz!Stern@$^\ast$-!H-aequivariante@$H$-\"aquivariante}

\begin{definition}[$H$-"aquivarianter "Aqui\-va\-lenz\-bi\-mo\-dul]
\index{Aequivalenzbimodul@\"Aquivalenzbimodul!H-aequivarianter@$H$-\"aquivarianter}
    Seien $\alg{A}$ und $\alg{B}$ zwei $^\ast$-Algebren,
    $\bimodplus{B}{E}{A}$ ein $^\ast$-"Aqui\-va\-lenz\-bi\-mo\-dul und die Algebren
    wie auch der Bimodul seien mit einer $^\ast$-Wirkung der
    \Name{Hopf}-$^\ast$-Algebra $H$ vertr"aglich. Man nennt 
    $\bimod{B}{E}{A}$ einen {\em $H$-"aquivarianten
      $^\ast$-"Aqui\-va\-lenz\-bi\-mo\-dul}, wenn zus"atzlich die 
    Gleichungen \eqref{eq:WirkungAufSkalarprodukte} f"ur alle $h\in
    H$ und $x,y \in \bimod{B}{E}{A}$ erf"ullt sind. Sind
    zus"atzlich beide inneren Produkte vollst"andig positiv nennt man
    $\bimod{B}{E}{A}$ einen {\em $H$-"aquivarianten starken
      \Name{Mo\-ri\-ta}-"Aqui\-va\-lenz\-bi\-mo\-dul}.   
\end{definition}

\begin{definition}[$H$-"aquivariant \Name{Morita}-"aquivalente Algebren] 
\label{Definition:HAequivarianteMoritaAequivaleneAlgebren}
Gegeben seien zwei $^\ast$-Algebren $\alg{A}$ und $\alg{B}$, die beide mit einer
$H$-Wirkung einer \Name{Hopf}-$^\ast$-Algebra vertr"aglich sind. Man
nennt die beiden Algebren {\em $H$-"aquivariant
  $^\ast$-\Name{Morita}-"aquivalent} (bzw.~{\em $H$-"aquivariant
  stark \Name{Morita}-"aquivalent}) falls es einen
 $H$-"aquivarianten  
$^\ast$-\Name{Mo\-ri\-ta}-"Aqui\-va\-lenz\-bi\-mo\-dul 
(bzw.~$H$-"aqui\-va\-rian\-ten starken 
\Name{Mo\-ri\-ta}-"Aqui\-va\-lenz\-bi\-mo\-dul) f"ur die Algebren gibt.      
\end{definition}

\begin{satz}[$H$-"aquivariante \Name{Morita}-"Aquivalenz ist
    "Aquivalenzrelation] 
    \label{Satz:HAequivarianzIstAequivalenzrelation}
F"ur die idempotenten und nichtausgearteten $^\ast$-Algebren mit
$^\ast$-Wirkungen einer \Name{Hopf}-$^\ast$-Algebra $H$ ist $H$-"aquivariante
starke \Name{Morita}-"Aquivalenz eine "Aquivalenzrelation.  

Desweiteren sind $H$-"aquivariante $^\ast$-isomorphe $^\ast$-Algebren
$H$-"aqui\-va\-ri\-ant stark \Name{Morita}-"aquivalent und damit auch
$H$-"aqui\-va\-ri\-ant $^\ast$-\Name{Morita}-"aquivalent. 
\end{satz}

\begin{proof}
Wir haben bereits mit Lemma
\ref{Lemma:MoritaAequivalenzAequivalenzrelation} gezeigt, da"s
$^\ast$- und starke \Name{Morita}-"Aquivalenz eine
"Aqui\-va\-lenz\-re\-la\-tion sind. Es bleibt die
$H$-"Aqui\-va\-ri\-anz zu zeigen. F"ur eines der kanonischen inneren
Produkte auf $\bimod{A}{A}{A}$ ergibt sich  

\begin{align*}
    h \act \rSP{a,b}{\alg{A}} & = h \act (a^{\ast}b) = (h_{\sss (1)}
    \act a^{\ast})(h_{\sss (2)} \act b) = (S(h_{\sss (1)})^{\ast} \act
    a)^{\ast} (h_{\sss (2)}) \act b) \\ & = \rSP{S(h_{\sss (1)})^{\ast}
      \act a, h_{\sss (2)} \act b}{\alg{A}},
\end{align*}
und auf "ahnliche Weise verh"alt es sich mit dem
linksseitigen $\alg{A}$-wertigen inneren Produkt $\lSP{\alg{A}}{\cdot,
  \cdot}$.  

Um die Symmetrie zu zeigen ben"otigen wir die induzierte $H$-Wirkung auf dem
konjugierten Bimodul $\bimod{A}{\cc{E}}{B}$, die wir in Lemma
\ref{Lemma:WirkungAufDualemBimodul} eingef"uhrt haben. Die
Transitivit"at ist eine Folge von
Lemma~\ref{Lemma:HWirkungAufDemTensorprodukt} beziehungsweise der
$H$-"aquivarianten \Name{Rieffel}-Induktion, die wir in
Definition~\ref{Definition:HAequivarianteRieffelInduktion}
eingef"uhrt haben.   
\end{proof}


\include{picard}
\chapter{\Name{Morita}-"Aquivalenz von Cross-Produktalgebren}
\label{sec:MoritaAequivalenzVonCrossProdukten}
\fancyhead[CO]{\slshape \nouppercase{\rightmark}} 
\fancyhead[CE]{\slshape \nouppercase{\leftmark}} 

\index{Algebra!Cross-Produktalgebra|(}
\index{Cross-Produktalgebra|(}

Ziel dieses Kapitels ist es Cross-Produktalgebren $\cross{\alg{A}}{H}$
zu studieren. Dabei
werden wir das \Name{Picard}-Gruppoid von Cross-Produktalgebren
genauer betrachten und eine Verbindung zu der $H$-"aqui\-va\-ri\-an\-ten
Theorie der zugrundeliegenden Algebra $\alg{A}$ herstellen. Eine
Einf"uhrung in die Cross-Produktalgebren findet sich in Anhang
\ref{sec:CrossProdukte} sowie in \citep{majid:1995a}.

\section{Cross-Produktalgebren von $^\ast$-Darstellungen}
\label{sec:CrossProduktevonSternDarstellungen}

Seien im weiteren $\alg{A}$ und $\alg{B}$ $^\ast$-Algebren mit
Einselement und $H$ sei eine \Name{Hopf}-$^\ast$-Algebra, so da"s
$(H,\alg{A},\neact)$ und $(H,\alg{B},\neact)$ $^\ast$-Linksmodulalgebren sind. Dann sind
$\cross{\alg{A}}{H}$ und $\cross{\alg{B}}{H}$ Cross-Pro\-dukt\-al\-ge\-bren
nach Definition \ref{Definition:CrossProdukt}, d.~h.~ist $\alg{A}$
eine $^\ast$-Algebra und $H$ eine \Name{Hopf}-$^\ast$-Algebra, dann
ist die Cross-Produktalgebra als Raum das Tensorprodukt $\alg{A}
\otimes H$, und Elemente sind von der Form $a \otimes h$, wobei $a \in
\alg{A}$ und $h\in H$ ist. Die Multiplikation zweier Elemente in
$\cross{\alg{A}}{H}$ ist gegeben durch
\begin{align*}
    (a \otimes h)\cdot (b \otimes g) = a\cdot (h_{\sss (1)} \act b)
    \otimes h_{\sss (2)}g,
\end{align*}
und die $^\ast$-Involution durch 
\begin{align*}
    (a \otimes h)^{\ast} = h_{\sss (1)}^{\ast} \act a^{\ast} \otimes
    h_{\sss (2)}^{\ast}.
   \end{align*}

\begin{lemma}[Isomorphie der Kategorien {$\smod[H](\alg{A})$} und {$\smod(\cross{\alg{A}}{H})$}]
    Die Kategorien $\smod[H](\alg{A})$ und $\smod(\cross{\alg{A}}{H})$
    sind isomorph. Auf den Objekten ist die Isomorphie gegeben
    durch 
    \begin{align}
        \label{eq:AequivalenzKategorienModHAundModAH}
        \smod[H](\alg{A}) \ni (\alg{H},\pi) \mapsto
        (\hat{\alg{H}},\hat{\pi}) \in \smod(\cross{\alg{A}}{H}), 
    \end{align}
wobei $\alg{H}=\hat{\alg{H}}$ als Pr"a-\Name{Hilbert}-Raum ist, und
$\hat{\pi}(a \otimes h)\phi = \pi(a)h \act \phi$ mit $\phi \in \alg{H}$ gilt. Auf den Morphismen
$T: (\alg{H}_{1},\pi_{1}) \to  (\alg{H}_{2},\pi_{2})$ ist die
Isomorphie durch die Identit"atsabbildung gegeben. Analoges gilt auch
f"ur $\sMod$, $\srep$ und $\sRep$.  
\end{lemma}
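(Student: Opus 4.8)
The plan is to exhibit mutually inverse functors $F\colon \smod[H](\alg{A}) \to \smod(\cross{\alg{A}}{H})$ and $G$ in the opposite direction, both of which act as the identity on the underlying pre-\Name{Hilbert} spaces and on the underlying maps, so that the claimed isomorphism of categories (not merely an equivalence) drops out. The conceptual content is that a $^\ast$-representation of the cross product algebra is precisely a covariant pair: a $^\ast$-representation $\pi$ of $\alg{A}$ together with a $^\ast$-action $\rho(h)\colon \phi \mapsto h \act \phi$ of $H$, since $a \otimes h = (a \otimes 1_{\sss H})(1_{\sss \alg{A}} \otimes h)$ factorises and the two generating families obey the straightening relation dictated by the multiplication of $\cross{\alg{A}}{H}$. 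The first step is therefore to record the covariance identity
\begin{align*}
    h \act (\pi(a)\phi) = \pi(h_{\sss (1)} \act a)\,(h_{\sss (2)} \act \phi),
\end{align*}
which I would obtain from \eqref{eq:HAeqivarianteSterndarstellung} by inserting $S(h_{\sss (2)})h_{\sss (3)} = \varepsilon(h_{\sss (2)})1_{\sss H}$ and applying the antipode axiom; conversely this identity returns \eqref{eq:HAeqivarianteSterndarstellung}, so the two are equivalent.

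Next I would check that $F$ is well defined on objects, i.e.\ that $\hat\pi(a \otimes h) := \pi(a)\,\rho(h)$ is a $^\ast$-homomorphism into $\alg{B}(\alg{H})$. Multiplicativity is a direct computation: expanding $\hat\pi\big((a\otimes h)(b \otimes g)\big) = \pi\big(a(h_{\sss (1)} \act b)\big)\rho(h_{\sss (2)}g)$ and using that $\pi$ and $\rho$ are each multiplicative together with the covariance identity in the form $\pi(h_{\sss (1)} \act b)\rho(h_{\sss (2)}) = \rho(h)\pi(b)$ collapses this to $\pi(a)\rho(h)\pi(b)\rho(g)$. For adjointability note that $\rho(h)$ is adjointable with $\rho(h)^\ast = \rho(h^\ast)$: since the coefficient algebra here is $\ring{C}$ carrying the trivial action $h \act c = \varepsilon(h)c$, the compatibility \eqref{eq:AeqivalenteFormulierungWirkungAufSkalarprodukte} degenerates to $\SP{x, h \act y} = \SP{h^\ast \act x, y}$. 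Then $\hat\pi(a\otimes h)$ is a product of adjointable operators, and
\begin{align*}
    \hat\pi(a \otimes h)^\ast = \rho(h^\ast)\pi(a^\ast) = \pi\big((h^\ast)_{\sss (1)} \act a^\ast\big)\rho\big((h^\ast)_{\sss (2)}\big) = \pi(h_{\sss (1)}^\ast \act a^\ast)\rho(h_{\sss (2)}^\ast) = \hat\pi\big((a\otimes h)^\ast\big),
\end{align*}
where I again use the covariance identity and the $^\ast$-\Name{Hopf} axiom $\Delta(h^\ast) = h_{\sss (1)}^\ast \otimes h_{\sss (2)}^\ast$. On morphisms $F$ is the identity: an $H$-equivariant intertwiner $T$ satisfies $T\pi(a) = \pi'(a)T$ and $T(h\act x)=h\act Tx$, whence $T\hat\pi(a\otimes h) = \hat\pi'(a\otimes h)T$ at once.

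The inverse functor $G$ sends $(\hat{\alg{H}},\hat\pi)$ to the pair $\pi(a) := \hat\pi(a \otimes 1_{\sss H})$ and $\rho(h) := \hat\pi(1_{\sss \alg{A}} \otimes h)$. That $\pi$ is a $^\ast$-representation of $\alg{A}$ and $\rho$ a $^\ast$-action of $H$ follows from evaluating the products $(a\otimes 1)(b\otimes 1) = ab \otimes 1$ and $(1\otimes g)(1\otimes h)=1\otimes gh$ together with the involutions $(a\otimes 1)^\ast = a^\ast \otimes 1$ and $(1\otimes h)^\ast = 1 \otimes h^\ast$, while the mixed relation $(1\otimes h)(a\otimes 1) = (h_{\sss (1)} \act a)\otimes h_{\sss (2)}$ reproduces the covariance identity and hence \eqref{eq:HAeqivarianteSterndarstellung}; the inner-product compatibility of $\rho$ is exactly $\rho(h)^\ast = \rho(h^\ast)$, read off from the previous paragraph. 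Because $a\otimes h = (a\otimes 1)(1\otimes h)$ one gets $\hat\pi(a\otimes h) = \pi(a)\rho(h)$, so $F$ and $G$ are literally mutually inverse on objects and both act as the identity on morphisms, which yields the isomorphism; the same argument restricts verbatim to $\sMod$, $\srep$ and $\sRep$, the positivity and strong-nondegeneracy conditions being properties of the unchanged underlying inner-product space together with $\hat\pi(1_{\sss \alg{A}}\otimes 1_{\sss H}) = \id$.

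I expect the only genuinely delicate point to be the bookkeeping in the $^\ast$-preservation step: one must combine the covariance identity, the adjointability relation $\rho(h)^\ast=\rho(h^\ast)$, and the coproduct-involution compatibility of the \Name{Hopf}-$^\ast$-algebra in the right order, and in particular recognise that the apparently asymmetric compatibility \eqref{eq:AeqivalenteFormulierungWirkungAufSkalarprodukte} collapses to honest adjointability precisely because the coefficients lie in $\ring{C}$ with the counit action. Everything else is a routine unravelling of the cross-product multiplication.
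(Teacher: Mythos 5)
Your proof is correct and follows exactly the route the paper has in mind: the paper's own proof consists of the single remark that the underlying pre-\Name{Hilbert} spaces and morphisms are literally unchanged and the correspondence is the identity, declaring the rest ``klar''. You have simply filled in the verifications (the covariance identity equivalent to \eqref{eq:HAeqivarianteSterndarstellung}, multiplicativity and $^\ast$-preservation of $\hat\pi$ via $\rho(h)^\ast=\rho(h^\ast)$, and the inverse functor from the factorisation $a\otimes h=(a\otimes 1_{\sss H})(1_{\sss\alg{A}}\otimes h)$) that the paper leaves to the reader; these checks are all sound.
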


\begin{proof}
Der Beweis ist klar. Die Pr"a-\Name{Hilbert}-R"aume sind als R"aume
identisch und die Morphismen gehen durch den Identit"atsfunktor ineinander
"uber.
\end{proof}

\begin{proposition}[Positivit"at]
\index{Positivitaet@Positivit\"at!Cross-Produktalgebren@von Cross-Produktalgebren}
    Sei $\omega: \alg{A} \to \ring{C}$ ein
    $H$-invariantes\index{Funktional!positives!H-invariantes@$H$-invariantes},
    positives lineares Funktional und $\rho: H \to \ring{C}$ ein positives lineares Funktional\index{Funktional!positives}, dann ist auch $\omega \otimes \rho:
    \cross{\alg{A}}{H} \to \ring{C}$ positiv.
\end{proposition}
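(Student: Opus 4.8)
The plan is to evaluate $(\omega \otimes \rho)$ on $X^{\ast} X$ for an arbitrary element $X = \sum_i a_i \otimes h_i \in \cross{\alg{A}}{H}$ and to show that the twist built into the cross-product multiplication is cancelled exactly by the $H$-invariance of $\omega$, leaving a manifestly non-negative expression. First I would substitute the involution $(a \otimes h)^{\ast} = h_{\sss (1)}^{\ast} \act a^{\ast} \otimes h_{\sss (2)}^{\ast}$ and the product $(a \otimes h)(b \otimes g) = a\,(h_{\sss (1)} \act b) \otimes h_{\sss (2)} g$ to obtain, with the threefold coproduct $\Delta^2 h_i = (h_i)_{\sss (1)} \otimes (h_i)_{\sss (2)} \otimes (h_i)_{\sss (3)}$,
\begin{align*}
    X^{\ast} X = \sum_{i,j} \big((h_i)_{\sss (1)}^{\ast} \act a_i^{\ast}\big)\big((h_i)_{\sss (2)}^{\ast} \act a_j\big) \otimes (h_i)_{\sss (3)}^{\ast}\, h_j .
\end{align*}

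The heart of the argument is the reduction of $(\omega \otimes \rho)(X^{\ast} X)$. Since $\Delta$ is a $^{\ast}$-homomorphism, the two legs $(h_i)_{\sss (1)}^{\ast} \otimes (h_i)_{\sss (2)}^{\ast}$ are the coproduct of a single element, so the module-algebra property $h \act (xy) = (h_{\sss (1)} \act x)(h_{\sss (2)} \act y)$ lets me recombine the two factors acting on $a_i^{\ast}$ and $a_j$ into one action on the product $a_i^{\ast} a_j$. Applying $\omega$ and using $H$-invariance in the form $\omega(h \act a) = \varepsilon(h)\,\omega(a)$ collapses that action to a scalar $\cc{\varepsilon(\cdot)}\,\omega(a_i^{\ast}a_j)$, and the counit identity $\cc{\varepsilon(h_{\sss (1)})}\, h_{\sss (2)}^{\ast} = h^{\ast}$ then contracts the surviving leg against the $\rho$-factor. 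The outcome is the completely untwisted expression
\begin{align*}
    (\omega \otimes \rho)(X^{\ast} X) = \sum_{i,j} \omega(a_i^{\ast} a_j)\,\rho(h_i^{\ast} h_j) .
\end{align*}
Here Lemma~\ref{Lemma:CauchySchwarzFunktional} supplies the Hermitian symmetry $\omega(a_i^{\ast}a_j) = \cc{\omega(a_j^{\ast}a_i)}$ (and likewise for $\rho$), which I will need below. Note that $\rho$ is required only to be positive, not $H$-invariant, consistent with the fact that only $\omega$ enters the action that must be trivialised.

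It remains to prove $\sum_{i,j} \omega(a_i^{\ast} a_j)\,\rho(h_i^{\ast} h_j) \ge 0$, and this is where the main obstacle lies: over a general ordered ring $\ring{R}$ this is \emph{not} the elementary Schur/Hadamard fact it would be over $\field{C}$, since a positive matrix over $\ring{C}$ need not factor as a Gram matrix — exactly the $\alg{A}^{+}$ versus $\alg{A}^{++}$ subtlety recorded in Bemerkungen~\ref{Bemerkung:PositiveElemente}. I would therefore route the final positivity through the completely positive machinery already available. Equip $\alg{A}$ with the $\ring{C}$-valued inner product $\SP{a,a'} := \omega(a^{\ast}a')$ and $H$ with $\SP{h,h'} := \rho(h^{\ast}h')$; by positivity of the functionals these are positive semidefinite, and by Bemerkung~\ref{Bemerkung:PraeHilbertModul} such inner products over $\ring{C}$ are automatically completely positive. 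Applying Satz~\ref{Satz:KomplettePositivitaetDesInnerenProdukts} with base algebra $\ring{C}$, the inner tensor product inner product on $\alg{A} \otimes_{\sss \ring{C}} H$ is again completely positive, hence in particular positive semidefinite; evaluating it on $\sum_i a_i \otimes h_i$ gives precisely $\sum_{i,j}\omega(a_i^{\ast}a_j)\,\rho(h_i^{\ast}h_j) \ge 0$. Combined with the reduction above, this yields $(\omega \otimes \rho)(X^{\ast} X) \ge 0$ for all $X$, so $\omega \otimes \rho$ is positive. The conceptual point to emphasise is thus twofold: $H$-invariance of $\omega$ is exactly the hypothesis that removes the cross-product twist, and complete positivity — rather than a naive Schur argument — is what secures the final inequality over the ordered ring.
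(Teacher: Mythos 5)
Your proposal is correct and follows essentially the same route as the paper: the identical reduction of $(\omega\otimes\rho)(X^{\ast}X)$ to $\sum_{i,j}\omega(a_i^{\ast}a_j)\,\rho(h_i^{\ast}h_j)$ via the $H$-invariance of $\omega$, followed by an appeal to complete positivity of positive functionals. The paper is terser on the final inequality (it merely cites Lemma~4.3 of Bursztyn--Waldmann for the fact that positive functionals are completely positive), whereas you make that last step explicit through Satz~\ref{Satz:KomplettePositivitaetDesInnerenProdukts}; this is a legitimate and, over a general ordered ring, arguably more careful way to close the argument.
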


\begin{proof}
Sei $\sum_{i} a_{i} \otimes h_{i}$ gegeben. Eine einfache Rechnung,
die Invarianz von $\omega$ nutzend, gibt
\begin{align*}
    (\omega \otimes \rho)\left( \left( \sum a_{i} \otimes h_{i}\right)^{\ast}
    \left( \sum  a_{j} \otimes h_{j}\right) \right) = \sum_{i,j}
    \omega(a^{\ast}_{i}a_{j}) \rho(h_{i}^{\ast} h_{j}) \ge 0, 
\end{align*}
und wenn sowohl $\omega$ als auch $\rho$ positive lineare Funktionale
sind, sind sie auch vollst"andig positiv (im Sinne vollst"andig
positiver Abbildungen) \citep[Lemma 4.3]{bursztyn.waldmann:2001a}. 
\end{proof}

\begin{beispiele}[Positives Funktionale auf {$\cross{\alg{A}}{H}$}]
~\vspace{-5mm}
\index{Funktional!positives!Cross-Produktalgebra@auf Cross-Produktalgebra}
\begin{compactenum}
\item Ein erstes Beispiel f"ur ein positives Funktional auf
$\cross{\alg{A}}{H}$ ist $\omega \otimes \varepsilon:
    \cross{\alg{A}}{H} \to \ring{C}$, wobei $\omega$ positiv und
    $H$-invariant ist. Dabei bezeichnet $\varepsilon: H \to \ring{C}$ die Koeins
    der \Name{Hopf}-$^\ast$-Algebra.  
\item Allgemeiner gilt, da"s wenn $\chi: H \to \ring{C}$ ein
    {\it unit"arer Charakter}\index{Charakter!unitaerer@unit\"arer},
    d.~h.~ein $^\ast$-Homomorphismus ist,
    dann ist $\omega \otimes \chi: \cross{\alg{A}}{H} \to \ring{C}$
    ein positives Funktional. 
\end{compactenum}
\end{beispiele}

\begin{lemma}[{$(\cross{\alg{B}}{H},\cross{\alg{A}}{H})$-Bimodul}]
\label{Lemma:StrukturenCrossProduktalgebra}
 Sei $\bimod{B}{E}{A} \in \smod[\alg{A},H](\alg{B})$, dann wird
 $\alg{E} \ctimes H$ durch die beiden Verkn"upfungen     
 \begin{align}
     (b \ctimes g) \cdot (x \ctimes h) := (b \cdot g_{\sss (1)} \act
     x) \ctimes (g_{\sss (2)} h) \quad \text{und} \quad (x \ctimes g) \cdot
     (a \ctimes h) := (x \cdot g_{\sss (1)} \act a) \ctimes (g_{\sss
       (2)} h) 
 \end{align}
zu einem $(\cross{\alg{B}}{H},
\cross{\alg{A}}{H})$-Bimodul. Desweiteren definiert
\begin{align}
    \label{eq:InneresProduktAufCrossProdukt}
    \rSPcr[\sss \alg{E}\ctimes H]{x \ctimes g, y \ctimes
      h}{\cross{\alg{A}}{H}} := (g_{\sss (1)}^{\ast} \act \rSPcr[\sss
    \alg{E}]{x,y}{\alg{A}}) \ctimes  g_{\sss (2)}^{\ast} h   
\end{align}
ein $\cross{\alg{A}}{H}$-wertiges inneres Produkt\index{inneres Produkt!Cross-Produktalgebra@auf Cross-Produktalgebra} auf $\alg{E} \ctimes
H$, und es gilt 
\begin{align}
    \label{eq:KompatibilitaetFurInneresProdukt}
     \rSPcr[\sss \alg{E} \ctimes H]{(b\ctimes g) \cdot (x \ctimes h), y
       \ctimes k}{\cross{\alg{A}}{H}} = \rSPcr[\sss \alg{E} \ctimes
     H]{x \ctimes h, (b \ctimes g)^{\ast} \cdot (y \ctimes
       k)}{\cross{\alg{A}}{H.}}  
\end{align}
\end{lemma}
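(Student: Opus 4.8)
The plan is to establish all three claims by direct computations in Sweedler notation, using throughout the module-algebra identities $g \act (b \cdot x) = (g_{\sss (1)} \act b)(g_{\sss (2)} \act x)$ and $g \act (x \cdot a) = (g_{\sss (1)} \act x)(g_{\sss (2)} \act a)$ (which hold by \eqref{eq:HopfWirkungKompatibelBimodulstruktur}), the action axiom $(gh)\act x = g\act(h\act x)$, coassociativity of $\Delta$, and the $^\ast$-compatibility of $\Delta$ and $S$ (Proposition~\ref{Proposition:Antipode}). For the $(\cross{\alg{B}}{H},\cross{\alg{A}}{H})$-bimodule structure I would first expand both $\big((b\ctimes g)(b'\ctimes g')\big)\cdot(x\ctimes h)$ and $(b\ctimes g)\cdot\big((b'\ctimes g')\cdot(x\ctimes h)\big)$; after rewriting $g\act(b'\cdot x)$ by the module-algebra identity and $(g_{\sss (2)}g'_{\sss (1)})\act x$ by the action axiom, coassociativity makes both equal to $b\cdot(g_{\sss (1)}\act b')\cdot\big(g_{\sss (2)}\act(g'_{\sss (1)}\act x)\big)\ctimes(g_{\sss (3)}g'_{\sss (2)})h$. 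The associativity of the right action and the relation $(P\cdot z)\cdot Q = P\cdot(z\cdot Q)$ are the same bookkeeping with the module and algebra roles interchanged, so I would only remark on them.

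\textbf{Inner product.} Sesquilinearity is immediate from the defining formula. For the right $\cross{\alg{A}}{H}$-linearity I would expand $\rSPcr[\sss \alg{E}\ctimes H]{x\ctimes g,(y\ctimes h)\cdot(a\ctimes k)}{\cross{\alg{A}}{H}}$, extract the factor $h_{\sss (1)}\act a$ from the $\alg{A}$-valued product by its $\alg{A}$-right linearity, and distribute $g_{\sss (1)}^{\ast}$ over the resulting product via the module-algebra identity; matching the $H$-legs by coassociativity then reproduces $\rSPcr[\sss \alg{E}\ctimes H]{x\ctimes g,y\ctimes h}{\cross{\alg{A}}{H}}\cdot(a\ctimes k)$.

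\textbf{The main obstacle: $^\ast$-symmetry.} Here I would take the $\cross{\alg{A}}{H}$-involution of $\rSPcr[\sss \alg{E}\ctimes H]{y\ctimes h,x\ctimes g}{\cross{\alg{A}}{H}}$. Using the involution of the cross product, the relation $(g\act a)^{\ast}=S(g)^{\ast}\act a^{\ast}$ and the $^\ast$-symmetry $\rSP[\sss \alg{E}]{y,x}{\alg{A}}^{\ast}=\rSP[\sss \alg{E}]{x,y}{\alg{A}}$ of the given inner product, this yields $(g_{\sss (1)}^{\ast}h_{\sss (2)})\act\big(S(h_{\sss (1)}^{\ast})^{\ast}\act\rSP[\sss \alg{E}]{x,y}{\alg{A}}\big)\ctimes g_{\sss (2)}^{\ast}h_{\sss (3)}$. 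The point is now to recombine the three $h$-legs while keeping the two $g$-legs untouched: the Hopf-$^\ast$-algebra identity $S(h^{\ast})^{\ast}=S^{-1}(h)$ turns $S(h_{\sss (1)}^{\ast})^{\ast}$ into $S^{-1}(h_{\sss (1)})$, and the counit relation $h_{\sss (2)}S^{-1}(h_{\sss (1)})\otimes h_{\sss (3)} = 1_{\sss H}\otimes h$ then collapses the expression to $(g_{\sss (1)}^{\ast}\act\rSP[\sss \alg{E}]{x,y}{\alg{A}})\ctimes g_{\sss (2)}^{\ast}h$, which is exactly $\rSPcr[\sss \alg{E}\ctimes H]{x\ctimes g,y\ctimes h}{\cross{\alg{A}}{H}}$. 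This is the only step that uses the invertibility of the antipode, and getting the counit contraction to act on the $h$-legs alone while leaving the $g$-legs intact is the delicate bit.

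\textbf{Compatibility \eqref{eq:KompatibilitaetFurInneresProdukt}.} Expanding the left-hand side, I move $b$ across the $\alg{A}$-valued product by \eqref{eq:KompatibilitaetBimodulMitLinkswirkung}; expanding the right-hand side, I recombine $(g_{\sss (1)}^{\ast}\act b^{\ast})(g_{\sss (2)}^{\ast}\act y)=g^{\ast}\act(b^{\ast}\cdot y)$ by the module-algebra identity. The $H$-components $h_{\sss (2)}^{\ast}g_{\sss (3)}^{\ast}k$ agree on both sides by coassociativity and the common prefix $h_{\sss (1)}^{\ast}\act(\cdot)$ cancels, so the whole equality reduces to the single identity $g_{\sss (2)}^{\ast}\act\rSP[\sss \alg{E}]{g_{\sss (1)}\act x,w}{\alg{A}}=\rSP[\sss \alg{E}]{x,g^{\ast}\act w}{\alg{A}}$ with $w=b^{\ast}\cdot y$. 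This is precisely \eqref{eq:AeqivalenteFormulierungWirkungAufSkalarprodukte} evaluated at $h=g^{\ast}$, together with $(g_{\sss (1)}^{\ast})^{\ast}=g_{\sss (1)}$, so it holds at once. Apart from the $^\ast$-symmetry, every step is the same Sweedler calculus already used for Lemma~\ref{Lemma:WirkungAufDualemBimodul}.
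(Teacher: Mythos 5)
Your proposal is correct and follows essentially the same direct Sweedler-notation computations as the paper's own proof; the only cosmetic difference is that for the $^\ast$-symmetry you contract the first argument's $H$-legs via $S(h^{\ast})^{\ast}=S^{-1}(h)$ and $h_{\sss (2)}S^{-1}(h_{\sss (1)})=\varepsilon(h)1_{\sss H}$, whereas the paper uses the ordinary antipode axiom together with $\varepsilon(h_{\sss (1)})h_{\sss (2)}=h$. Your explicit verification of the left-compatibility \eqref{eq:KompatibilitaetFurInneresProdukt} by reducing it to \eqref{eq:AeqivalenteFormulierungWirkungAufSkalarprodukte} evaluated at $g^{\ast}$ is sound and in fact covers a part of the statement that the paper's written proof leaves implicit.
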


\begin{proof}
    Zuerst mu"s man zeigen, da"s es sich bei
    $\bimodo{\cross{\alg{B}}{H}}{\alg{E}\ctimes H}{\cross{\alg{A}}{H}}$
    um einen Bimodul handelt. Dazu rechnet man nach
    \begin{align*}
        \left( (b \ctimes h)(b' \ctimes h') \right) \cdot (x \ctimes
        g) & = \left( b(h_{\sss (1)} \act b') \ctimes h_{\sss
              (2)} h' \right) \cdot (x \ctimes g) \\ & =
        b(h_{\sss (1)} \act b')\cdot ((h_{\sss (2)} h')_{\sss
          (1)} \act x) \ctimes  (h_{\sss (2)} h')_{\sss
          (2)} g \\ & = b (h_{\sss (1)} \act b') \cdot (h_{\sss
          (2)(1)} \act h_{\sss (1)}' \act x) \ctimes h_{\sss (3)}
        h_{\sss (2)}' g \\ & =  b \cdot (h_{\sss (1)} \act (b' \cdot (h_{\sss
          (1)}' \act x))) \ctimes h_{\sss (2)} h_{\sss (2)}' g \\ & = (b
        \ctimes h) \cdot \left( b' (h_{\sss 
              (1)}' \act x) \ctimes h_{\sss (2)}' g \right) \\ & = (b
        \ctimes h) \cdot \left( (b' \ctimes h') \cdot (x \ctimes g)
        \right)  
      \end{align*}
und analog zeigt man die $\cross{\alg{A}}{H}$-Rechtswirkung
beziehungsweise die Vertr"aglichkeit der
$\cross{\alg{A}}{H}$-Rechtswirkung mit der
$\cross{\alg{B}}{H}$-Linkswirkung. Aufgrund der \glqq
symmetrischen\grqq{} Struktur reicht es, nur eine Wirkung zu zeigen. Desweiteren ist zu zeigen,
da"s das innere Produkt auf
$\bimodo{\cross{\alg{B}}{H}}{\alg{E}\ctimes H}{\cross{\alg{A}}{H}}$
sinnvoll definiert ist. Die $\ring{C}$-Sesquilinearit"at kann man gleich
ablesen. Die $^\ast$-Involution des inneren Produkts enspricht der
Vertauschung der Argumente, wie die folgende Rechnung zeigt.
 
\begin{align*}
    \left(\rSPcr[\sss \alg{E} \ctimes H]{x \ctimes g, x' \ctimes
      g'}{\cross{\alg{A}}{H}}\right)^{\ast} & = \left(g_{\sss (1)}^{\ast} \act
  \rSPcr[\sss \alg{E}]{x,x'}{\alg{A}} \ctimes g_{\sss (2)}^{\ast} g'
\right)^{\ast} \\ & = \left(
  g_{\sss (2)}^{\ast} g' \right)_{\sss (2)}^{\ast} \act \left( g_{\sss
    (1)}^{\ast} \act \rSPcr[\sss \alg{E}]{x,x'}{\alg{A}} \right)^{\ast}
\ctimes \left( 
  g_{\sss (2)}^{\ast} g' \right)_{\sss (2)}^{\ast} \\ & = {g}_{\sss
(1)'}^{\ast} \left(S(g_{\sss (1)}^{\ast}) g_{\sss (2)}^{\ast} 
\right)^{\ast} \act \left( \rSPcr[\sss
  \alg{E}]{x,x'}{\alg{A}} \right)^{\ast} \ctimes {g_{\sss
(2)}'}^{\ast} g_{\sss (3)} \\ & = {g}_{\sss (1)}'
\cc{\varepsilon(g_{\sss(1)}^{\ast})} \act \rSPcr[\sss
\alg{E}]{x',x}{\alg{A}} \ctimes {g}_{\sss (2)}' g_{\sss (2)} \\ & =
{g}_{\sss (1)}' \act \rSPcr[\sss \alg{E}]{x',x}{\alg{A}} \ctimes
{g}_{\sss (2)}' g  \\ & = \rSPcr[\sss \alg{E} \ctimes H]{x' \ctimes g', x \ctimes
g}{\cross{\alg{A}}{H.}}        
\end{align*}

Im vorletzten Schritt nutzt man die Linearit"at von $\otimes$ bez"uglich
Elementen in $\ring{C}$ sowie die Tatsache, da"s in einer
\Name{Hopf}-Algebra $\varepsilon(h_{\sss (1)}) h_{\sss (2)} = h$ ist.   

Bleibt zu zeigen, da"s das $\cross{\alg{A}}{H}$-wertige innere
Produkt mit der $\cross{\alg{A}}{H}$-Rechtswirkung vertr"aglich ist

\begin{align*}
    \rSPcr[\sss \alg{E} \ctimes H]{x \ctimes g, \left( x' \ctimes g'\right) \cdot
      \left( a\ctimes h \right)}{\cross{\alg{A}}{H}} & = \rSPcr[\sss
    \alg{E} \ctimes H]{x \ctimes g, x' \cdot \left(g_{\sss (1)}' \act
          a \right) \ctimes g_{\sss (2)}' h}{\cross{\alg{A}}{H}} \\ &
    = g_{\sss (1)}^{\ast} \act \rSPcr[\sss \alg{E}]{x, x' \cdot
      \left(g_{\sss (1)}' \act a\right)}{\alg{A}} \ctimes g_{\sss
      (2)}^{\ast} g_{\sss (2)}' h \\ & =  g_{\sss (1)}^{\ast} \act
    \left(\rSPcr[\sss \alg{E}]{x, x'}{\alg{A}} \left(g_{\sss (1)}'  
      \act a \right) \right) \ctimes g_{\sss (2)}^{\ast} g_{\sss (2)}' h
\\ & =  \left(g_{\sss  
      (1)}^{\ast} \act \rSPcr[\sss \alg{E}]{x, x'}{\alg{A}}\right)
    \left(g_{\sss (2)}^{\ast} g_{\sss (1)}' \act a \right) \ctimes g_{\sss
      (3)}^{\ast} g_{\sss (2)}' h   \\ & =  
      \left(g_{\sss (1)}^{\ast} \act \rSPcr[\sss \alg{E}]{x,x'}{\alg{A}}
      \ctimes g_{\sss (2)}^{\ast} g' \right) \left(a \ctimes 
          h \right) \\ & = 
      \rSPcr[\sss \alg{E} \ctimes H]{x \ctimes g, \left(x' \ctimes 
        g'\right)}{\cross{\alg{A}}{H}} \left(a \ctimes h \right). 
\end{align*}
\end{proof}

\begin{bemerkung}[{$\cross{\alg{B}}{H}$-wertiges} inneres Produkt]
Wir k"onnen auf $\alg{E} \otimes H$ auch ein $\cross{\alg{B}}{H}$-wertiges inneres Produkt
definieren, wenn der Bimodul $\bimod{B}{E}{A}$ mit einem
$\alg{B}$-wertigen inneren Produkt ausgestattet ist. Analog zu
Lemma \ref{Lemma:StrukturenCrossProduktalgebra} erbt das
$\cross{\alg{B}}{H}$-wertige innere Produkt alle Strukturen.
\end{bemerkung}

\begin{bemerkung}[Ausartungsraum von {$\alg{E} \otimes H$}]
    Es kann passieren, da"s das innere Produkt $\rSPcr[\sss \alg{E}
    \ctimes H]{\cdot,\cdot}{\cross{\alg{A}}{H}}$ ausgeartet ist. Ist
    dies der Fall, so geht man zum Quotienten
    \begin{align}
        \label{eq:QuotientCrossProduktBimodul}
        \cross{\alg{E}}{H} = \alg{E} \otimes H / (\alg{E} \otimes H)^{\perp}
    \end{align}
"uber, der mit der Bimodulstruktur und den inneren Produkten
vertr"aglich ist. 
\end{bemerkung}

\begin{satz}[Vollst"andige Positivit"at und Nichtausgeartetheit,
    {\citep[Lemma 6.6]{jansen.waldmann:2005a}}]
\label{Satz:VollstaendigePositivitaetundNichtausgeartetheitvonCrossProduktalgebra}
    Sei $\bimod{B}{E}{A} \in \srep[\alg{A},H](\alg{B})$, dann ist das
    innere Produkt $\rSPcr[\sss \alg{E} \ctimes H]{\cdot,
      \cdot}{\cross{\alg{A}}{H}}$ vollst"andig positiv, woraus folgt,
    da"s
    $\bimodcross{B}{E}{A} \in \srep[\cross{\alg{A}}{H}]
    (\cross{\alg{B}}{H})$ eine $^\ast$-Darstellung auf einem
    Pr"a-\Name{Hilbert}-Modul ist. Desweiteren ist f"ur
    $\bimod{B}{E}{A} \in \sMod[\alg{A},H](\alg{B})$
    $\bimodcross{B}{E}{A} \in
    \sMod[\cross{\alg{A}}{H}](\cross{\alg{B}}{H})$ und
    $\bimod{B}{E}{A} \in \sRep[\alg{A},H](\alg{B})$ $\bimodcross{B}{E}{A} \in
    \sRep[\cross{\alg{A}}{H}](\cross{\alg{B}}{H})$.
\end{satz}

\begin{proof}
Seien $\Phi^{\sss (1)},\cdots,\Phi^{\sss (n)} \in \alg{E} \otimes H$ gegeben und
sei $\Phi^{\sss (\alpha)}=\sum_{i=1}^{m} x_{i}^{\sss (\alpha)} \otimes
h_{i}^{\sss (\alpha)}$ mit $x_{i}^{\sss (\alpha)} \in \alg{E}$ und
  $h_{i}^{\sss (\alpha)} \in H$, und $m$ sei ohne Einschr"ankung der
  Allgemeinheit f"ur alle $\alpha = 1,...,n$ dasselbe. Dann gilt
  \begin{align*}
      \rSPcr[\sss \alg{E}\otimes H]{\Phi^{\sss
          (\alpha)}, \Phi^{\sss (\beta)}}{\cross{\alg{A}}{H}} =
      \sum_{i,\ell=1}^{m} \left( (h_{i}^{\sss (\alpha)})^{\ast}_{\sss
            (1)} \act \rSP[\alg{E}]{x_{i}^{\sss (\alpha)}, x_{\ell}^{\sss
              (\beta)}}{\alg{A}} \right) \otimes (h_{i}^{\sss
        (\alpha)})^{\ast}_{\sss (2)} h^{\sss (\beta)}_{\ell}. 
  \end{align*}
Wir definieren die folgende Abbildung 
\begin{align*}
f: M_{nm}(\alg{A}) \ni  A=(A_{i\ell}^{\sss \alpha \beta}) \mapsto \left(
    (h_{i}^{\sss (\alpha)})^{\ast}_{\sss (1)} \act A_{i\ell}^{\sss \alpha
      \beta} \otimes  (h_{i}^{\sss (\alpha)})^{\ast}_{\sss (2)}
    h_{\ell}^{\sss (\beta)} \right) \in M_{nm}(\cross{\alg{A}}{H}),
\end{align*}
die positiv ist, wie die folgende Rechnung zeigt:
\begin{align*}
    f(A^{\ast}A) &  = \sum_{\gamma,k} \left((h_{i}^{\sss
          (\alpha)})^{\ast}_{\sss (1)} \act \left((A_{ki}^{\sss \gamma
          \alpha})^{\ast} A_{k\ell}^{\sss \gamma \beta}\right) \otimes
        (h_{i}^{\sss (\alpha)})^{\ast}_{\sss (2)} h_{\ell}^{\sss (\beta)}
    \right) \\ & =   \sum_{\gamma,k} \left( \left((h_{i}^{\sss
          (\alpha)})^{\ast}_{\sss (1)} \act (A_{ki}^{\sss \gamma
          \alpha})^{\ast} \otimes  (h_{i}^{\sss
          (\alpha)})^{\ast}_{\sss (2)} \right) \left(A_{k\ell}^{\sss \gamma \beta} \otimes
        h_{\ell}^{\sss (\beta)} \right) \right) \\ & = \sum_{\gamma,k}
 \left(A_{ki}^{\sss \gamma \alpha} \otimes h_{i}^{\sss (\alpha)}
 \right)^{\ast} \left(A_{k\ell}^{\sss \gamma \beta} \otimes h_{i}^{\sss
      (\beta)} \right) \\ & = (A \otimes h)^{\ast} (A \otimes h). 
\end{align*}
Dabei ist $A \otimes h \in M_{nm}(\cross{\alg{A}}{H})$ gegeben durch
die Koeffizientenmatrix $(A \otimes h)^{\sss \alpha \beta}_{i \ell} =
A^{\sss \alpha \beta}_{i \ell} \otimes h^{\sss (\beta)}_{\ell}$. Also
ist $f(A^{\ast}A) \in M_{nm}(\cross{\alg{A}}{H})^{++}$, da $f$ positiv
ist. Weil $\big( \rSP[\sss \alg{E}]{x_{i}^{\sss (\alpha)},
x_{\ell}^{\sss (\beta)}}{\alg{A}} \big)$ eine positive Matrix in
$M_{nm}(\alg{A})$ ist, da das innere Produkt
$\rSP[\sss \alg{E}]{\cdot,\cdot}{\alg{A}}$ vollst"andig positiv ist, ist
auch die Abbildung $f\left( \big( \rSP[\sss \alg{E}]{x_{i}^{\sss (\alpha)},
x_{\ell}^{\sss (\beta)}}{\alg{A}} \big)\right)$ positiv. Die Summation
"uber $i, \ell$ ist eine positive Abbildung, wie in \citep[Example
2.1]{bursztyn.waldmann:2003a:pre} gezeigt wurde, und das Ergebnis ist
wieder positiv. Somit ist die vollst"andige Positivit"at des inneren Produkts $\rSPcr[\sss
\alg{E} \otimes H]{\cdot, \cdot}{\cross{\alg{A}}{H}}$ gezeigt, und die
vollst"andige Positivit"at von $\rSPcr[\sss \cross{\alg{E}}{H}]{\cdot, \cdot}{\cross{\alg{A}}{H}}$
folgt. Zu zeigen, da"s die inneren Produkte nicht ausgeartet sind, ist trivial.
\end{proof}

\begin{bemerkung}
Im Fall da"s die Algebra $\alg{A}$ mit einem Einselement ausgestattet
ist, vereinfacht sich der Beweis zu
Satz~\ref{Satz:VollstaendigePositivitaetundNichtausgeartetheitvonCrossProduktalgebra},
denn man sieht an folgendem Ausdruck leicht
\begin{align*}
    \rSPcr[\sss \alg{E} \otimes H]{x \otimes g, y \otimes
      h}{\cross{\alg{A}}{H}} = (1_{\sss \alg{A}} \otimes g)^{\ast}
    \left(\rSP[\sss \alg{E}]{x,y}{\alg{A}} \otimes 1_{\sss H} \right) (1_{\sss \alg{A}} \otimes h)
\end{align*}
die vollst"andige Positivit"at von $\rSPcr[\sss \alg{E} \otimes
H]{\cdot, \cdot}{\cross{\alg{A}}{H}}$. 
\end{bemerkung}

Wir wollen nun zeigen, da"s der "Ubergang von einem
$H$-"aquivarianten Bimodul mit inneren Produkten $\bimod{B}{E}{A}$ zu dem Bimodul
$\bimodcross{B}{E}{A}$ funktoriell geschieht.

\begin{lemma}[Verschr"ankungsoperator]
\label{Lemma:VerschraenkungsoperatorBimodulliefertVOaufCrossProdukten}
  Sei $T:\bimod{B}{E}{A} \to \bimod{B}{F}{A}$ ein
  Verschr"ankungsoperator zwischen $\bimod{B}{E}{A}$,
  $\bimod{B}{F}{A} \in \smod[\alg{A},H](\alg{B})$, dann ist die
  Abbildung $T \otimes \id_{\sss H}: \alg{E} \otimes H \to
  \alg{F} \otimes H$ ein Verschr"ankungsoperator zwischen
  $\cross{\alg{E}}{H}$ und $\cross{\alg{F}}{H} \in
  \smod[\cross{\alg{A}}{H}](\cross{\alg{B}}{H})$, dessen Adjungierte
  durch $T^{\ast} \otimes \id_{\sss H}$ gegeben ist.  
\end{lemma}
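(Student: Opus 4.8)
The plan is to verify the three assertions of the lemma in turn: that $T \otimes \id_{\sss H}$ descends to a well-defined map $\cross{\alg{E}}{H} \to \cross{\alg{F}}{H}$, that the descended map intertwines the left $\cross{\alg{B}}{H}$-actions (so that it is a morphism in $\smod[\cross{\alg{A}}{H}](\cross{\alg{B}}{H})$), and that its adjoint is $T^{\ast} \otimes \id_{\sss H}$. I would establish the adjointness relation first, still at the level of the full tensor products $\alg{E} \otimes H$ and $\alg{F} \otimes H$, since a single computation then yields both the descent to the quotients and the formula for the adjoint.

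For the adjointness I would simply insert $T \otimes \id_{\sss H}$ into the cross-product inner product of Lemma~\ref{Lemma:StrukturenCrossProduktalgebra}:
\begin{align*}
\rSPcr[\sss \alg{F} \ctimes H]{(T \otimes \id_{\sss H})(x \ctimes g), y \ctimes h}{\cross{\alg{A}}{H}}
&= \bigl(g_{\sss (1)}^{\ast} \act \rSPcr[\sss \alg{F}]{Tx, y}{\alg{A}}\bigr) \ctimes g_{\sss (2)}^{\ast} h \\
&= \bigl(g_{\sss (1)}^{\ast} \act \rSPcr[\sss \alg{E}]{x, T^{\ast}y}{\alg{A}}\bigr) \ctimes g_{\sss (2)}^{\ast} h \\
&= \rSPcr[\sss \alg{E} \ctimes H]{x \ctimes g, (T^{\ast} \otimes \id_{\sss H})(y \ctimes h)}{\cross{\alg{A}}{H}},
\end{align*}
where the middle step is exactly the adjointness $\rSPcr[\sss \alg{F}]{Tx,y}{\alg{A}} = \rSPcr[\sss \alg{E}]{x,T^{\ast}y}{\alg{A}}$ of the original intertwiner. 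The point that makes this almost immediate is that the Hopf-algebra coefficients $g_{\sss (1)}^{\ast}$ and $g_{\sss (2)}^{\ast}$ are untouched by $T \otimes \id_{\sss H}$, so they pass through unchanged and the whole identity collapses to the adjointness of $T$ alone.

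From this identity the descent is routine: since $T \otimes \id_{\sss H}$ is adjointable on $\alg{E} \otimes H$ with adjoint $T^{\ast} \otimes \id_{\sss H}$, it maps the degeneracy space $(\alg{E} \otimes H)^{\perp}$ into $(\alg{F} \otimes H)^{\perp}$ --- for $\Phi \in (\alg{E} \otimes H)^{\perp}$ and arbitrary $\Psi \in \alg{F} \otimes H$ one gets, after using the $^{\ast}$-symmetry of the inner product, $\rSPcr[\sss \alg{F} \ctimes H]{\Psi, (T \otimes \id_{\sss H})\Phi}{\cross{\alg{A}}{H}} = \rSPcr[\sss \alg{E} \ctimes H]{(T^{\ast} \otimes \id_{\sss H})\Psi, \Phi}{\cross{\alg{A}}{H}} = 0$. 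The same argument with the roles swapped shows $T^{\ast} \otimes \id_{\sss H}$ preserves degeneracy spaces as well, so both maps descend to the quotients $\cross{\alg{E}}{H}$ and $\cross{\alg{F}}{H}$ and the adjointness identity carries over verbatim, because the inner products themselves descend.

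It remains to verify the intertwining property, for which I would use that a morphism in $\smod[\alg{A},H](\alg{B})$ is by construction left $\alg{B}$-linear and $H$-equivariant (right $\alg{A}$-linearity being subsumed in adjointability via Lemma~\ref{Lemma:AdjungierteAbbildungen}). A direct computation gives
\begin{align*}
(T \otimes \id_{\sss H})\bigl((b \ctimes g)\cdot(x \ctimes h)\bigr)
&= T\bigl(b \cdot (g_{\sss (1)} \act x)\bigr) \ctimes g_{\sss (2)} h \\
&= b \cdot \bigl(g_{\sss (1)} \act Tx\bigr) \ctimes g_{\sss (2)} h \\
&= (b \ctimes g)\cdot (T \otimes \id_{\sss H})(x \ctimes h),
\end{align*}
where the middle step uses $T(b \cdot y) = b \cdot T(y)$ together with $T(g_{\sss (1)} \act x) = g_{\sss (1)} \act Tx$. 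There is no genuine analytic or cohomological obstruction in any of this; the only thing one must watch is the bookkeeping, namely that the intertwining step really needs \emph{both} the $\alg{B}$-linearity and the $H$-equivariance of $T$, and that the adjointness computation is compatible with the quotient so that the assertion holds on $\cross{\alg{E}}{H}$ and $\cross{\alg{F}}{H}$ rather than merely on $\alg{E} \otimes H$ and $\alg{F} \otimes H$ --- the latter being precisely what the established adjointness guarantees.
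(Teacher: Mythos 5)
Your proposal is correct and follows exactly the route the paper indicates: its one-line proof says only that one uses the $H$-equivariance of $T$ and the existence of $T^{\ast}$, which is precisely what your adjointness, descent, and intertwining computations spell out. The individual steps (passing the untouched Hopf coefficients through, using $^{\ast}$-symmetry to see that adjointable maps preserve degeneracy spaces, and combining $\alg{B}$-linearity with $H$-equivariance for the module property) all check out.
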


\begin{proof}
    Der Beweis nutzt die $H$-"Aquivarianz von $T$, sowie die Existenz
    des adjungierten Operators $T^{\ast}$. 
\end{proof}

Damit sind wir nun in der Lage die folgende Proposition zu formulieren.

\begin{proposition}[Die funktorielle Abbildung {$\cross{\cdot}{H}$}] 
 \label{Proposition:FunktorielleAbbildungEmapstoEH}    
 Die Abbildung 
    \begin{align}
        \label{eq:FunktorCrossHmodAHBtomodAHBH}
        \cross{\cdot}{H} : \smod[\alg{A},H](\alg{B}) \to
          \smod[\cross{\alg{A}}{H}](\cross{\alg{B}}{H}) 
    \end{align}
ist ein Funktor. Auf den Objekten bildet der Funktor Bimoduln auf
deren Cross-Produktalgebren ab $\alg{E} \mapsto \cross{\alg{E}}{H}$
und auf den Morphismen die Bimodulmorphismen auf das Tensorprodukt der
Morphismen mit der Identit"at auf der \Name{Hopf}-$^\ast$-Algebra $T \mapsto
T \otimes \id_{\sss H}$. Die Einschr"ankung auf die Unterkategorien von $\smod[\alg{A},H](\alg{B})$ liefert die folgenden Funktoren\index{Funktor!Cross-Produktalgebra}
\begin{align}
        \label{eq:FunktorCrossHModAHBtoModAHBH}
      \cross{\cdot}{H} & : \sMod[\alg{A},H](\alg{B}) \to
          \sMod[\cross{\alg{A}}{H}](\cross{\alg{B}}{H}), \\
\label{eq:FunktorCrossHrepAHBtorepAHBH}
        \cross{\cdot}{H} & : \srep[\alg{A},H](\alg{B}) \to
          \srep[\cross{\alg{A}}{H}](\cross{\alg{B}}{H}), \\ 
\label{eq:FunktorCrossHRepAHBtoRepAHBH}
        \cross{\cdot}{H} & : \sRep[\alg{A},H](\alg{B}) \to
          \sRep[\cross{\alg{A}}{H}](\cross{\alg{B}}{H}). 
    \end{align}
\end{proposition}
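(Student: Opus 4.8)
Der Plan ist, die drei vorangegangenen Resultate zu einer Verifikation der Funktoraxiome zusammenzusetzen, da der eigentliche Inhalt bereits etabliert ist; die Proposition ist im Kern eine Buchf"uhrungsaussage. Zun"achst w"urde ich feststellen, da"s die Wohldefiniertheit auf den Objekten genau die Kombination von Lemma~\ref{Lemma:StrukturenCrossProduktalgebra} und Satz~\ref{Satz:VollstaendigePositivitaetundNichtausgeartetheitvonCrossProduktalgebra} ist: Ersteres stattet $\alg{E} \otimes H$ (beziehungsweise den Quotienten $\cross{\alg{E}}{H}$ nach dem Ausartungsraum) mit der $(\cross{\alg{B}}{H},\cross{\alg{A}}{H})$-Bimodulstruktur und dem $\cross{\alg{A}}{H}$-wertigen inneren Produkt aus, w"ahrend Letzteres die vollst"andige Positivit"at und die Nichtausgeartetheit garantiert. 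Insbesondere liegt das Bildobjekt in $\smod[\cross{\alg{A}}{H}](\cross{\alg{B}}{H})$, und derselbe Satz h"alt fest, da"s ein Start in den Unterkategorien $\sMod[\alg{A},H](\alg{B})$, $\srep[\alg{A},H](\alg{B})$ beziehungsweise $\sRep[\alg{A},H](\alg{B})$ in der jeweils entsprechenden Unterkategorie auf der Cross-Produktseite landet, was gerade die Gleichungen~\eqref{eq:FunktorCrossHModAHBtoModAHBH}--\eqref{eq:FunktorCrossHRepAHBtoRepAHBH} behaupten.

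Zweitens w"urde ich Lemma~\ref{Lemma:VerschraenkungsoperatorBimodulliefertVOaufCrossProdukten} f"ur die Wohldefiniertheit auf den Morphismen heranziehen: Ein Verschr"ankungsoperator $T:\bimod{B}{E}{A} \to \bimod{B}{F}{A}$ wird auf $T \otimes \id_{\sss H}$ geschickt, das wiederum ein Verschr"ankungsoperator ist (mit Adjungierter $T^{\ast} \otimes \id_{\sss H}$) und somit ein Morphismus in $\smod[\cross{\alg{A}}{H}](\cross{\alg{B}}{H})$. Die verbleibende Aufgabe ist dann rein formal, n"amlich die beiden Funktoraxiome zu best"atigen. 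Die Erhaltung der Identit"aten ist unmittelbar, da $\id_{\sss \alg{E}} \otimes \id_{\sss H} = \id_{\sss \alg{E} \otimes H}$ gilt und diese Gleichung auf den Quotienten $\cross{\alg{E}}{H}$ absteigt. Die Erhaltung der Komposition folgt aus der entsprechenden Eigenschaft des algebraischen Tensorprodukts, n"amlich $(S \circ T) \otimes \id_{\sss H} = (S \otimes \id_{\sss H}) \circ (T \otimes \id_{\sss H})$ f"ur verkettbare Verschr"ankungsoperatoren $T$ und $S$; auch dies "ubertr"agt sich auf die Quotienten, sobald feststeht, da"s die Ausartungsr"aume respektiert werden.

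Der einzige Punkt, der "uberhaupt Sorgfalt verlangt, ist dieser "Ubergang zum Quotienten $\cross{\alg{E}}{H} = \alg{E} \otimes H / (\alg{E} \otimes H)^{\perp}$: Man mu"s wissen, da"s $T \otimes \id_{\sss H}$ den Ausartungsraum von $\alg{E} \otimes H$ in denjenigen von $\alg{F} \otimes H$ abbildet, damit die induzierte Abbildung zwischen den Quotienten wohldefiniert ist und die Funktoraxiome den Abstieg unver"andert "uberstehen. Dieser Schritt ist jedoch bereits in Lemma~\ref{Lemma:VerschraenkungsoperatorBimodulliefertVOaufCrossProdukten} enthalten, denn dort wird $T \otimes \id_{\sss H}$ schon als adjungierbare Abbildung \emph{zwischen} $\cross{\alg{E}}{H}$ und $\cross{\alg{F}}{H}$ formuliert, und eine adjungierbare Abbildung zwischen inneren Produktmoduln f"uhrt den Ausartungsraum stets in den Ausartungsraum "uber. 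Ich erwarte daher kein echtes Hindernis; der Beweis besteht ausschlie"slich darin, Lemma~\ref{Lemma:StrukturenCrossProduktalgebra}, Satz~\ref{Satz:VollstaendigePositivitaetundNichtausgeartetheitvonCrossProduktalgebra} und Lemma~\ref{Lemma:VerschraenkungsoperatorBimodulliefertVOaufCrossProdukten} korrekt in Funktorform zusammenzuf"uhren und die beiden trivialen Tensorprodukt-Identit"aten zu notieren.
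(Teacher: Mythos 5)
Ihr Vorschlag ist korrekt und entspricht genau dem Weg, den die Arbeit (implizit) geht: Die Proposition wird dort ohne eigenen Beweis als unmittelbare Konsequenz von Lemma~\ref{Lemma:StrukturenCrossProduktalgebra}, Satz~\ref{Satz:VollstaendigePositivitaetundNichtausgeartetheitvonCrossProduktalgebra} und Lemma~\ref{Lemma:VerschraenkungsoperatorBimodulliefertVOaufCrossProdukten} formuliert, und Ihre Zusammenf"uhrung dieser Resultate mitsamt den trivialen Tensorprodukt-Identit"aten f"ur die Funktoraxiome ist genau das intendierte Argument. Ihre zus"atzliche Sorgfalt beim Abstieg auf den Quotienten $\cross{\alg{E}}{H}$ (Adjungierbarkeit erh"alt Ausartungsr"aume) ist richtig und macht den Punkt explizit, den die Arbeit stillschweigend "ubergeht.
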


Da wir "uber die \Name{Morita}-"Aquivalenz sowie das
\Name{Picard}-Gruppoid\index{Picard-Gruppoid@\Name{Picard}-Gruppoid!Cross-Produktalgebren}
von Cross-Produktalgebren reden wollen, 
m"ussen wir die Vertr"aglichkeit des in Proposition
\ref{Proposition:FunktorielleAbbildungEmapstoEH} beschriebenen
Funktors mit der \Name{Rieffel}-Induktion, das hei"st mit dem
Tensorieren von Bimoduln, pr"ufen.

\begin{proposition}
\label{Proposition:IeinsundIzwei}
    Sei nun $\bimod{C}{F}{B} \in \smod[\alg{B},H](\alg{C})$ und
    $\bimod{B}{E}{A} \in \smod[\alg{A},H](\alg{B})$, dann existieren
    die beiden folgenden kanonischen Isomorphismen.
    \begin{compactenum}
    \item Die Abbildung
        \begin{equation}
        \begin{aligned}
            \label{eq:IsomorphismusI1}
            I_{1}: & \bimodcross{C}{F}{B}
            \tensorhat[\cross{\alg{B}}{H}] \bimodcross{B}{E}{A} \to \bimod{C}{(F
              \tensorhat[\alg{B}] E)}{A} \otimes H \\ & (x \otimes g)
            \tensorhat[\cross{\alg{B}}{H}] (y \otimes h) \mapsto (x
            \tensorhat[\alg{B}] (g_{\sss (1)} \act y)) \otimes g_{\sss
              (2)} h
        \end{aligned}
        \end{equation}
ist ein kanonischer Isomorphismus von $^\ast$-Darstellungen von
$\cross{\alg{C}}{H}$ auf $\cross{\alg{A}}{H}$-Rechtsmoduln mit
innerem Produkt.
    \item Die Abbildung
       \begin{equation}
        \begin{aligned}
            \label{eq:IsomorphismusI2}
         I_{2}: & \cc{\cross{\alg{E}}{H}} \to \cross{\cc{\alg{E}}}{H},
         \quad \cc{x \otimes h} \mapsto h_{\sss (1)}^{\ast} \ccact
         \cc{x} \otimes h_{\sss (2)}^{\ast},  
        \end{aligned}
        \end{equation}
ist ein kanonischer Isomorphismus von 
$\cross{\alg{B}}{H}$-Rechtsdarstellungen auf
$\cross{\alg{A}}{H}$-Linksmoduln mit inneren Produkten. Das
Inverse ist explizit gegeben durch 
\begin{align}
    \label{eq:IsomorphismusI2invers}
    I_{2}^{-1}(\cc{x} \otimes h) = \cc{h_{\sss (1)}^{\ast} \act x
      \otimes h_{\sss (2)}^{\ast}}.
\end{align}
    \end{compactenum}
\end{proposition}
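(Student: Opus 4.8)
Der Plan ist, die beiden Isomorphismen $I_{1}$ und $I_{2}$ getrennt zu behandeln; in beiden F"allen gehe ich nach demselben Schema vor. Zuerst zeige ich, da"s die jeweils angegebene Abbildung wohldefiniert ist, also mit der Balancierung "uber $\cross{\alg{B}}{H}$ bzw.~mit der Struktur des konjugierten Bimoduls und mit der Quotientenbildung nach den Ausartungsr"aumen vertr"aglich ist. Danach pr"ufe ich die Vertr"aglichkeit mit den Bimodulwirkungen und mit den inneren Produkten, und schlie"slich gebe ich das Inverse an. Die durchweg verwendeten Hilfsmittel sind die Modulalgebra-Eigenschaft $h \act (ab) = (h_{\sss (1)} \act a)(h_{\sss (2)} \act b)$, die Vertr"aglichkeit $(h \act a)^{\ast} = S(h)^{\ast} \act a^{\ast}$ der $^\ast$-Wirkung, die Tatsache, da"s $\Delta$ ein $^\ast$-Homomorphismus ist (also $\Delta(h^{\ast}) = h_{\sss (1)}^{\ast} \otimes h_{\sss (2)}^{\ast}$) sowie die Antipoden-Eigenschaften aus Proposition~\ref{Proposition:Antipode}.

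F"ur $I_{1}$ rechne ich die Wohldefiniertheit auf dem balancierten Tensorprodukt nach, indem ich $I_{1}$ auf $(x \otimes g) \cdot (b \otimes k) \tensorhat[\cross{\alg{B}}{H}] (y \otimes h)$ und auf $(x \otimes g) \tensorhat[\cross{\alg{B}}{H}] (b \otimes k) \cdot (y \otimes h)$ anwende. Unter Verwendung der Rechts- bzw.~Linkswirkung von $\cross{\alg{B}}{H}$ aus Lemma~\ref{Lemma:StrukturenCrossProduktalgebra} und der Modulalgebra-Eigenschaft landen beide Ausdr"ucke bei $\bigl(x \tensorhat[\alg{B}] ((g_{\sss (1)} \act b)(g_{\sss (2)} k_{\sss (1)} \act y))\bigr) \otimes g_{\sss (3)} k_{\sss (2)} h$, wobei die $\alg{B}$-Balancierung des inneren Tensorprodukts $F \tensorhat[\alg{B}] E$ gerade den Faktor $g_{\sss (1)} \act b$ "uber den Tensorstrich schiebt. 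Anschlie"send verifiziere ich die $\cross{\alg{C}}{H}$-Links- und die $\cross{\alg{A}}{H}$-Rechtslinearit"at, die beide unmittelbar aus den Definitionen der Wirkungen folgen, sowie die Isometrie der $\cross{\alg{A}}{H}$-wertigen inneren Produkte. Das Inverse ist durch $(x \tensorhat[\alg{B}] y) \otimes h \mapsto (x \otimes 1_{\sss H}) \tensorhat[\cross{\alg{B}}{H}] (y \otimes h)$ gegeben; da"s beide Kompositionen die Identit"at liefern, folgt mit der Balancierungsidentit"at $(x \otimes g) = (x \otimes 1_{\sss H}) \cdot (1_{\sss \alg{B}} \otimes g)$ und der Counit-Eigenschaft.

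F"ur $I_{2}$ ist das Inverse bereits in \eqref{eq:IsomorphismusI2invers} angegeben, daher pr"ufe ich zun"achst $I_{2} \circ I_{2}^{-1} = \id$ und $I_{2}^{-1} \circ I_{2} = \id$. Setzt man \eqref{eq:IsomorphismusI2invers} in die Definition von $I_{2}$ ein, so liefert die Benutzung von $\Delta(h^{\ast}) = h_{\sss (1)}^{\ast} \otimes h_{\sss (2)}^{\ast}$, der Definition \eqref{eq:InduzierteWirkungAufDualemBimodul} von $\ccact$ und der Antipoden-Identit"at $h_{\sss (1)} S(h_{\sss (2)}) \otimes h_{\sss (3)} = 1_{\sss H} \otimes h$ gerade $\cc{x} \otimes h$ zur"uck, und die umgekehrte Komposition verl"auft analog. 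Danach zeige ich die Vertr"aglichkeit von $I_{2}$ mit der $\cross{\alg{A}}{H}$-Links- und der $\cross{\alg{B}}{H}$-Rechtswirkung sowie mit beiden inneren Produkten, wobei ich die in Lemma~\ref{Lemma:WirkungAufDualemBimodul} eingef"uhrte Wirkung $\ccact$ auf dem konjugierten Bimodul und die Strukturen aus Definition~\ref{Definition:DualerBimodulSternAlgebren} heranziehe.

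Die Hauptschwierigkeit erwarte ich bei der Isometrie von $I_{1}$: Hier m"ussen die beiden geschachtelten inneren Produkte -- zuerst das \Name{Rieffel}-Produkt auf $F \tensorhat[\alg{B}] E$ gem"a"s \eqref{eq:InneresProduktAufTensorproduktalgebra} und darauf die Cross-Produkt-Formel \eqref{eq:InneresProduktAufCrossProdukt} -- mit dem inneren Produkt auf $\bimodcross{C}{F}{B} \tensorhat[\cross{\alg{B}}{H}] \bimodcross{B}{E}{A}$ verglichen werden. Das erfordert ein sorgf"altiges Buchf"uhren "uber die Sweedler-Indizes und mehrfaches Anwenden der Vertr"aglichkeit \eqref{eq:WirkungAufSkalarprodukte} der $H$-Wirkung mit den inneren Produkten, insbesondere des Auftauchens von $S(h_{\sss (1)})^{\ast}$ im ersten Argument. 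Ein zweiter, eher technischer Punkt ist sicherzustellen, da"s s"amtliche Abbildungen die Ausartungsr"aume respektieren und somit auf die mit $\tensorhatohne$ gebildeten Quotienten absteigen; dies ist durch Bemerkung~\ref{Bemerkung:HWirkungAusartungsraum} und Satz~\ref{Satz:VollstaendigePositivitaetundNichtausgeartetheitvonCrossProduktalgebra} aber schon weitgehend vorbereitet.
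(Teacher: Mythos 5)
Ihr Beweisplan ist korrekt und deckt die wesentlichen Rechnungen ab; er weicht vom Beweis der Arbeit haupts"achlich darin ab, wie die Bijektivit"at von $I_{1}$ gewonnen wird. Sie konstruieren ein explizites Inverses $(x \tensorhat[\alg{B}] y) \otimes h \mapsto (x \otimes 1_{\sss H}) \tensorhat[\cross{\alg{B}}{H}] (y \otimes h)$ und m"ussen daf"ur dessen Wohldefiniertheit bez"uglich der $\alg{B}$-Balancierung nachpr"ufen; die Arbeit schlie"st statt dessen die Injektivit"at daraus, da"s $I_{1}$ bereits auf der Ebene von $\tensor[\alg{B}]$ isometrisch ist und die Quotienten-Skalarprodukte nichtausgeartet sind, und bemerkt, da"s die Surjektivit"at wegen $(x \otimes 1_{\sss H}) \tensor[\alg{B}] (y \otimes h) \mapsto (x \tensor[\alg{B}] y) \otimes h$ offensichtlich ist. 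Beide Wege funktionieren: Ihrer erspart das Nichtausgeartetheitsargument, verlangt aber die zus"atzliche Wohldefiniertheitspr"ufung f"ur das Inverse, w"ahrend der Weg der Arbeit die Injektivit"at aus der ohnehin n"otigen Isometrierechnung mitgeliefert bekommt. Ihre explizite Verifikation der Balancierungsrelation f"ur $I_{1}$ (die die Arbeit nur behauptet) sowie von $I_{2} \circ I_{2}^{-1} = \id$ mittels $h_{\sss (1)} S(h_{\sss (2)}) \otimes h_{\sss (3)} = 1_{\sss H} \otimes h$ sind korrekt, und die von Ihnen benannte Hauptschwierigkeit (die Isometrie von $I_{1}$ mit den geschachtelten inneren Produkten) ist genau die lange Rechnung, die auch die Arbeit ausf"uhrt. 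Einen Punkt sollten Sie erg"anzen: Die Proposition behauptet, die Isomorphismen seien \emph{kanonisch}, und die Arbeit pr"azisiert dies durch die Nat"urlichkeitsbedingung $I_{1} \circ \left( (S \otimes \id_{\sss H}) \tensorhat[\cross{\alg{B}}{H}] (T \otimes \id_{\sss H}) \right) = \left( (S \tensorhat[\alg{B}] T) \otimes \id_{\sss H} \right) \circ I_{1}$ f"ur Verschr"ankungsoperatoren $S$ und $T$ (analog f"ur $I_{2}$); Ihr Vorschlag geht auf diese Vertr"aglichkeit mit den Morphismen nicht ein, ohne die das Wort \glqq kanonisch\grqq{} in der Aussage nicht gerechtfertigt ist.
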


\begin{proof}
Es ist leicht zu sehen, da"s die Abbildung $I_{1}$ eine
wohldefinierte Bimodulabbildung "uber dem Tensorprodukt
$\tensor[\cross{\alg{B}}{H}]$ ist. F"ur die Isometrie rechnen wir
nach
\begin{align*}
   & \rSPcr[\sss \cross{(\alg{F} \tensorhat \alg{E})}{H}]{ (x
      \tensorhat[\alg{B}] (g_{\sss (1)} \act y)) \otimes g_{\sss (2)}
      h, (x' \tensorhat[\alg{B}] (g'_{\sss (1)} \act y')) \otimes
      g'_{\sss (2)} h'}{\cross{\alg{A}}{H}} \\ & = \left( (h_{\sss
          (1)}^{\ast} g_{\sss (2)}^{\ast}) \act \rSPcr[\sss \alg{F}
        \otimes \alg{E}]{x \tensor[\alg{B}] (g_{\sss (1)} \act y),x'
          \tensor[\alg{B}] (g'_{\sss (1)} \act y')}{\alg{A}} \right)
    \otimes h_{\sss (2)}^{\ast} g_{\sss (3)}^{\ast} g'_{\sss (2)} h'
    \\ & = \left( (h_{\sss (1)}^{\ast} g_{\sss (2)}^{\ast}) \act
        \rSPcr[\sss \alg{E}]{g_{\sss (1)} \act y, \rSPcr[\sss
          \alg{F}]{x,x'}{\alg{B}} \cdot (g'_{\sss (1)} \act y')}{\alg{A}}
    \right) \otimes h_{\sss (2)}^{\ast} g_{\sss (3)}^{\ast} g'_{\sss
      (2)} h' \\  & = \left( h_{\sss (1)}^{\ast} \act \rSPcr[\sss
        \alg{E}]{(S(g_{\sss (2)}^{\ast})^{\ast} g_{\sss (1)}) \act y,
          g_{\sss (3)}^{\ast} \act \left( \rSPcr[\sss
          \alg{F}]{x,x'}{\alg{B}} \cdot (g'_{\sss (1)} \act y')\right)}{\alg{A}}
    \right) \otimes h_{\sss (2)}^{\ast} g_{\sss (4)}^{\ast} g'_{\sss
      (2)} h' \\ & = \left( h_{\sss (1)}^{\ast} \act \rSPcr[\sss
        \alg{E}]{y, (g_{\sss (1)}^{\ast} \act \rSPcr[\sss
          \alg{F}]{x,x'}{\alg{B}}) \cdot (g_{\sss (2)}^{\ast}g'_{\sss
            (1)})\act y'}{\alg{A}} \right) \otimes h_{\sss (2)}^{\ast}
    g_{\sss (3)}^{\ast} g'_{\sss (2)} h' \\ & = \rSPcr[\sss
    \cross{\alg{E}}{H}]{y \otimes h, \left( (g_{\sss (1)}^{\ast} \act
            \rSPcr[\sss \alg{F}]{x,x'}{\alg{B}} )\cdot (g_{\sss
              (2)}^{\ast} g')_{\sss (1)} \act y' \right) \otimes (g_{\sss
              (2)}^{\ast} g')_{\sss (2)} h' }{\cross{\alg{A}}{H}} \\
          & = \rSPcr[\sss \cross{\alg{E}}{H}]{y \otimes h, \left(
                (g_{\sss (1)}^{\ast} \act \rSPcr[\sss
                  \alg{F}]{x,x'}{\alg{B}}) \otimes g_{\sss
                    (2)}^{\ast}g' \right)\cdot (y' \otimes h')
            }{\cross{\alg{A}}{H}} \\ & = \rSPcr[\sss
            \cross{\alg{E}}{H}]{y \otimes h,
              \rSPcr[\sss \cross{\alg{F}}{H}]{x \otimes g, x' \otimes
                g'}{\cross{\alg{B}}{H}} \cdot (y' \otimes h')}
            {\cross{\alg{A}}{H}} \\ & = \rSPcr[\sss (\cross{\alg{F}}{H})
            \tensorhat (\cross{\alg{E}}{H})]{(x \otimes g)
              \tensor[\alg{B}] (y \otimes h),(x' \otimes g')
              \tensor[\alg{B}] (y' \otimes h')}{\cross{\alg{A}}{H}}
\end{align*}
wodurch die Abbildung $I_{1}$ schon auf der Ebene des Tensorprodukts
$\tensor[\alg{B}]$ isometrisch wird, und nicht erst nach der
Quotientenbildung zu $\tensorhat[\alg{B}]$. Injektivit"at folgt, da
die Quotienten beider inneren Produkte nichtausgeartet sind. Die Surjektivit"at ist
offensichtlich, da $(x \otimes 1_{\sss H}) \tensor[\alg{B}] (y \otimes
h) \mapsto (x \tensor[\alg{B}] y) \otimes h$. Somit ist $I_{1}$ ein Isomorphismus. 
Im folgenden Sinn kann man die Abbildung $I_{1}$ als kanonisch
ansehen:

 Seien $S: \bimod{C}{F}{B} \to \bimod{C}{F^{\prime}}{B}$ und
$T:\bimod{B}{E}{A} \to \bimod{B}{E^{\prime}}{A}$ Morphismen in den
Kategorien $\smod[\alg{B},H](\alg{C})$ und
$\smod[\alg{A},H](\alg{B})$, dann ist $S \tensorhat[\alg{B}] T$ ein
Morphismus in $\smod[\alg{A},H](\alg{C})$ und $S \otimes \id_{\sss H} $
bzw.~$T \otimes \id_{\sss H}$ die korrespondierenden Morphismen in
$\smod[\cross{\alg{B}}{H}](\cross{\alg{C}}{H})$
bzw.~$\smod[\cross{\alg{A}}{H}](\cross{\alg{B}}{H})$, nach Lemma
\ref{Lemma:VerschraenkungsoperatorBimodulliefertVOaufCrossProdukten}.
Die Abbildung $I_{1}$ ist kompatibel mit den Morphismen und es gilt
\begin{align*}
    I_{1} \circ \left((S \otimes \id_{\sss H})
        \tensorhat[\cross{\alg{B}}{H}] (T \otimes \id_{\sss H})
    \right) = \left( (S \tensorhat[\alg{B}] T) \otimes \id_{\sss H} \right) \circ I_{1}.
\end{align*}

Der zweite Teil folgt aus der einfachen, jedoch l"anglichen Rechnung,
da"s $I_{2}$ ein Bimodulmorphismus ist und die notwendige
$\ring{C}$-Linearit"at besitzt. Die Isometrie folgt aus der folgenden
Rechnung:
\begin{align*}
    & \lSPcr[\sss \cross{\cc{\alg{E}}}{H}]{\cross{\alg{A}}{H}}{I_{2}(\cc{x
      \otimes g}), I_{2}(\cc{y \otimes h})} \\ & = \lSPcr[\sss
  \cross{\cc{\alg{E}}}{H}]{\cross{\alg{A}}{H}}{(g_{\sss (1)}^{\ast}
    \ccact \cc{x}) \otimes g_{\sss (2)}^{\ast}, (h_{\sss (1)}^{\ast}
    \ccact \cc{y}) \otimes h_{\sss (2)}^{\ast}} \\ & =  \lSPcr[\sss
  \cross{\cc{\alg{E}}}{H}]{\cross{\alg{A}}{H}}{\cc{S(g_{\sss
        (1)}^{\ast})^{\ast} \act x} \otimes g_{\sss (2)}^{\ast},\cc{S(h_{\sss
        (1)}^{\ast})^{\ast} \act y} \otimes h_{\sss (2)}^{\ast}} \\ &
  = \left( g_{\sss (3)}^{\ast} \act \lSPcr[\sss
      \alg{E}]{\alg{A}}{S^{-1}(g_{\sss (2)}^{\ast}) \ccact
        (\cc{S^{-1}(g_{\sss (1)}) \act x}), S^{-1}(h_{\sss (2)}^{\ast}) \ccact
        (\cc{S^{-1}(h_{\sss (1)}) \act y})} \right) \otimes g_{\sss
    (4)}^{\ast} h_{\sss (3)} \\ & =  \left( g_{\sss (3)}^{\ast} \act \lSPcr[\sss
      \alg{E}]{\alg{A}}{\cc{(g_{\sss (2)}S^{-1}(g_{\sss (1)})) \act
          x}, \cc{(h_{\sss (2)}S^{-1}(h_{\sss (1)})) \act y}} \right) \otimes g_{\sss
    (4)}^{\ast} h_{\sss (3)} \\ & = \left(g_{\sss (1)}^{\ast} \act
      \rSPcr[\sss \alg{E}]{x,y}{\alg{A}} \right) \otimes g_{\sss
    (2)}^{\ast} h  \\ & = \lSPcr[\sss
  \cc{\cross{\alg{E}}{H}}]{\cross{\alg{A}}{H}}{\cc{x \otimes g}, \cc{y
      \otimes h}}. 
\end{align*}
Analog zu $I_{1}$ zeigt man die Vertr"aglichkeit von $I_{2}$ mit den
Verschr"ankungsoperatoren. 
\end{proof}

Die Vertr"aglichkeit der Funktoren k"onnen wir im folgenden Diagramm
graphisch verdeutlichen.

\begin{korollar}
Das folgende Diagramm 

\begin{equation}
\bfig

\Square/>`>`>`>/[{\smod[\alg{B},H](\alg{C})} \times
{\smod[\alg{A},H](\alg{B})}`{\smod[\alg{A},H](\alg{C})}`{\smod[\cross{\alg{B}}{H}](\cross{\alg{C}}{H})}
\times
{\smod[\cross{\alg{A}}{H}](\cross{\alg{B}}{H})}`{\smod[\cross{\alg{A}}{H}](\cross{\alg{C}}{H})};{\tensorhat[\alg{B}]}`{(\cross{\cdot}{H})}
\times {(\cross{\cdot}{H})}`{\cross{\cdot}{H}}`{\tensorhat[\cross{\alg{B}}{H}]}] 

\efig
\end{equation}    
kommutiert im funktoriellen Sinne, d.~h.~bis auf die nat"urliche
Transformation $I_{1}$.
\end{korollar}

\section{Das \Name{Picard}-Gruppoid von Cross-Produktalgebren}
\label{sec:DasPicardGruppoidvonCrossProdukten}
\index{Picard-Gruppoid@\Name{Picard}-Gruppoid!Cross-Produktalgebren|(}
Nach der Betrachtung von allgemeinen $^\ast$-Darstellungen, wollen wir
uns nun den Cross-Pro\-dukt\-al\-ge\-bren widmen, die wir mit Hilfe des Funktors
$\cross{\cdot}{H}$ aus \Name{Morita}-"Aquivalenzbimoduln
erhalten. Diese werden sich ebenfalls als "Aquivalenzbimoduln
herausstellen, so da"s wir darauf aufbauend das
\Name{Picard}-Gruppoid von Cross-Pro\-dukt\-al\-ge\-bren betrachten k"onnen.

\begin{lemma}
    Sei $\bimod{B}{E}{A}$ ein $H$-"aquivarianter
    $^\ast$-\Name{Morita}-"Aqui\-va\-lenz\-bi\-mo\-dul
    (bzw.~starker \Name{Morita}-"Aqui\-va\-lenz\-bi\-mo\-dul), so
    ist $\bimodcross{B}{E}{A}$ mit den induzierten inneren Produkten
    ein $^\ast$-\Name{Morita}-"Aqui\-va\-lenz\-bi\-mo\-dul
    (bzw.~starker \Name{Morita}-"Aqui\-va\-lenz\-bi\-mo\-dul) f"ur
    die Algebren $\cross{\alg{B}}{H}$ und $\cross{\alg{A}}{H}$.
\end{lemma}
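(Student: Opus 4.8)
The plan is to avoid checking the defining conditions of Definition~\ref{Definition:SternStarkerAequivalenzBimodul} by hand and instead to reduce everything to the tensor-product characterisation of \Name{Morita}-"Aqui\-va\-lenz\-bi\-mo\-duln in Korollar~\ref{Korollar:SternUndStarkerAequivalenzbimodul}, feeding into it the functoriality of $\cross{\cdot}{H}$ (Proposition~\ref{Proposition:FunktorielleAbbildungEmapstoEH}) together with the two natural isomorphisms $I_{1}$ and $I_{2}$ of Proposition~\ref{Proposition:IeinsundIzwei}. Concretely, I would show that $\cross{\alg{E}}{H}$ satisfies
\begin{align*}
\cross{\alg{E}}{H} \tensortilde[\cross{\alg{A}}{H}] \cc{(\cross{\alg{E}}{H})} \cong \bimodcross{B}{B}{B}
\quad\text{und}\quad
\cc{(\cross{\alg{E}}{H})} \tensortilde[\cross{\alg{B}}{H}] \cross{\alg{E}}{H} \cong \bimodcross{A}{A}{A},
\end{align*}
and then invoke the reverse direction of Korollar~\ref{Korollar:SternUndStarkerAequivalenzbimodul}.

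First I would assemble the ambient structure on $\cross{\alg{E}}{H}$. By Lemma~\ref{Lemma:StrukturenCrossProduktalgebra} together with the remark supplying the $\cross{\alg{B}}{H}$-valued analogue, $\cross{\alg{E}}{H}$ is a $(\cross{\alg{B}}{H},\cross{\alg{A}}{H})$-Bimodul carrying two inner products compatible with the respective one-sided actions, and the mutual compatibility \eqref{eq:KompabilitaetInnereProdukte} of the two products is inherited from $\bimod{B}{E}{A}$. In the strong case, Satz~\ref{Satz:VollstaendigePositivitaetundNichtausgeartetheitvonCrossProduktalgebra} supplies the complete positivity and non-degeneracy of both inner products, so that the two one-sided modules become Pr\"a-\Name{Hilbert}-Moduln; since $\alg{A}$ and $\alg{B}$ are unital, so are $\cross{\alg{A}}{H}$ and $\cross{\alg{B}}{H}$, and fullness as well as the surjectivity conditions cause no trouble. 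Thus $\cross{\alg{E}}{H}$ is a legitimate object for Korollar~\ref{Korollar:SternUndStarkerAequivalenzbimodul}.

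Next, since $\bimod{B}{E}{A}$ is by hypothesis a $^{\ast}$- (resp.~strong) \Name{Morita}-"Aqui\-va\-lenz\-bi\-mo\-dul, Korollar~\ref{Korollar:SternUndStarkerAequivalenzbimodul} yields isometric isomorphisms $\bimod{B}{E}{A} \tensortilde[\alg{A}] \bimod{A}{\cc{E}}{B} \cong \bimod{B}{B}{B}$ and $\bimod{A}{\cc{E}}{B} \tensortilde[\alg{B}] \bimod{B}{E}{A} \cong \bimod{A}{A}{A}$. Applying the functor $\cross{\cdot}{H}$, which sends an isometry $T$ to $T \otimes \id_{\sss H}$ with adjoint $T^{\ast} \otimes \id_{\sss H}$ and hence preserves isometries, and then threading the result through $I_{2}$ (to replace $\cc{(\cross{\alg{E}}{H})}$ by $\cross{\cc{\alg{E}}}{H}$) and $I_{1}$ (to commute $\cross{\cdot}{H}$ past the inner tensor product), I obtain
\begin{align*}
\cross{\alg{E}}{H} \tensortilde[\cross{\alg{A}}{H}] \cc{(\cross{\alg{E}}{H})}
\;\stackrel{I_{2}}{\cong}\; \cross{\alg{E}}{H} \tensortilde[\cross{\alg{A}}{H}] \cross{\cc{\alg{E}}}{H}
\;\stackrel{I_{1}}{\cong}\; \cross{(\alg{E} \tensortilde[\alg{A}] \cc{\alg{E}})}{H},
\end{align*}
and symmetrically for the opposite composite over $\cross{\alg{B}}{H}$.

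The step that I expect to demand the most care -- and the only genuinely computational one -- is the final identification $\cross{(\alg{E} \tensortilde[\alg{A}] \cc{\alg{E}})}{H} \cong \bimodcross{B}{B}{B}$. This amounts to checking that $\cross{\cdot}{H}$ carries the reflexivity bimodule $\bimod{B}{B}{B}$ with its canonical products \eqref{eq:ReflexivitaetInnereProdukte} to $\cross{\alg{B}}{H}$ with \emph{its} canonical inner products: substituting $\rSP{a,b}{B}=a^{\ast}b$ into \eqref{eq:InneresProduktAufCrossProdukt} gives $(g_{\sss (1)}^{\ast} \act (a^{\ast}b)) \otimes g_{\sss (2)}^{\ast} h$, and expanding $(a \otimes g)^{\ast}(b \otimes h)$ via the cross-product multiplication and coassociativity yields the same expression, so the induced inner product coincides with the canonical one $u^{\ast}v$ on $\cross{\alg{B}}{H}$; the $\alg{A}$-side is identical. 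Granting this (and tracking that $I_{1},I_{2}$ are isometric bimodule isomorphisms, which Proposition~\ref{Proposition:IeinsundIzwei} already establishes), the two displayed isomorphisms hold, and the reverse direction of Korollar~\ref{Korollar:SternUndStarkerAequivalenzbimodul} applied to $\cross{\alg{E}}{H}$ identifies it as a $^{\ast}$- (resp.~strong) \Name{Morita}-"Aqui\-va\-lenz\-bi\-mo\-dul for $\cross{\alg{B}}{H}$ and $\cross{\alg{A}}{H}$, as claimed.
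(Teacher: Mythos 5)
Your route is genuinely different from the paper's. The paper proves the lemma by directly verifying the conditions of Definition~\ref{Definition:SternStarkerAequivalenzBimodul} on $\alg{E}\otimes H$: it writes out both inherited inner products explicitly, computes their mutual compatibility, observes that the two degeneracy spaces then coincide so that $\cross{\alg{E}}{H}$ is unambiguous, checks fullness via $a\otimes h=\sum_{i}\rSPcr[\sss \cross{\alg{E}}{H}]{x_{i}\otimes 1_{\sss H},y_{i}\otimes h}{\cross{\alg{A}}{H}}$, and imports complete positivity from Satz~\ref{Satz:VollstaendigePositivitaetundNichtausgeartetheitvonCrossProduktalgebra}. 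You instead push the tensor-product characterisation of Korollar~\ref{Korollar:SternUndStarkerAequivalenzbimodul} through the functor $\cross{\cdot}{H}$ using $I_{1}$, $I_{2}$ and the identification of Lemma~\ref{Lemma:AbbildungIdrei}; this is essentially the argument the paper deploys afterwards to prove that $\cross{\cdot}{H}$ induces a groupoid morphism, run in reverse to obtain the lemma. It is not circular, and it buys you fullness for free, since surjectivity of the canonical map $u\otimes\cc{v}\mapsto\lSPcr[\sss \cross{\alg{E}}{H}]{\cross{\alg{B}}{H}}{u,v}$ is exactly fullness of the left inner product. Two small points you should make explicit: the canonical isomorphism $\alg{E}\tensortilde[\alg{A}]\cc{\alg{E}}\cong\bimod{B}{B}{B}$ must be checked to be $H$-"aquivariant (it is, by \eqref{eq:WirkungAufSkalarprodukte}) before $\cross{\cdot}{H}$ may be applied to it, and Proposition~\ref{Proposition:IeinsundIzwei} only asserts that $I_{1}$ is isometric for the $\cross{\alg{A}}{H}$-valued inner product, so the left-handed isometry has to be argued via the fact that in an equivalence bimodule either inner product determines the other.

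The one genuine gap is your assertion that the mutual compatibility \eqref{eq:KompabilitaetInnereProdukte} of the two induced inner products on $\alg{E}\otimes H$ "is inherited". Neither Lemma~\ref{Lemma:StrukturenCrossProduktalgebra} nor the accompanying Bemerkung covers this: they give each inner product separately together with its compatibility with the opposite one-sided action, but say nothing about the relation between the two. This mutual compatibility is precisely the central computation of the paper's proof (a several-line calculation using the antipode and counit identities), and your strategy cannot bypass it: the products $\tensortilde[\cross{\alg{A}}{H}]$ and $\tensortilde[\cross{\alg{B}}{H}]$ on which your entire argument rests are only well defined once the two inner products are known to be compatible, since otherwise the two degeneracy spaces need not agree and $\cross{\alg{E}}{H}$ itself is ambiguous. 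So the plan's stated goal of avoiding the hands-on verification is not fully achieved --- you must still carry out the computation of \eqref{eq:KompabilitaetInnereProdukte} for $\alg{E}\otimes H$ exactly as the paper does, after which the rest of your argument goes through.
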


\begin{proof}
    Der Bimodul $\alg{E} \otimes H$ hat die beiden geerbten inneren Produkt
    $\lSPcr[\sss \alg{E} \otimes H]{\cross{\alg{B}}{H}}{\cdot,
      \cdot}$ und  $\rSPcr[\sss \alg{E} \otimes H]{\cdot,
      \cdot}{\cross{\alg{A}}{H}}$, die wie folgt definiert sind
    \begin{align*}
        \lSPcr[\sss \alg{E} \otimes H]{\cross{\alg{B}}{H}}{x
          \otimes h, x' \otimes {h^{\prime}}} & = \left( h_{\sss (2)} \act \lSP[\sss
        \alg{E}]{\alg{B}}{S^{-1}(h_{\sss (1)}) \act x, S^{-1}({h^{\prime}}_{\sss
            (1)} \act x')} \right) \otimes h_{\sss (3)} {h^{\prime}}_{\sss (2)} \\ & =
        \lSP[\sss \alg{E}]{\alg{B}}{x, S(h_{\sss (1)})^{\ast}
          S^{-1}({h^{\prime}}_{\sss (1)}) \act x'} \otimes h_{\sss (2)} {h^{\prime}}_{\sss
          (2)}^{\ast},
    \intertext{und analog das $\cross{\alg{A}}{H}$-wertige Produkt aus
    Lemma~\ref{Lemma:StrukturenCrossProduktalgebra}}
         \rSPcr[\sss \alg{E} \otimes H]{x \otimes h, x' \otimes
           {h^{\prime}}}{\cross{\alg{A}}{H}} & = \left( h_{\sss (1)}^{\ast} \act
         \rSP[\sss \alg{E}]{x,x'}{\alg{A}} \right) \otimes h_{\sss
           (2)}^{\ast} {h^{\prime}} \\ & = \rSP[\sss
         \alg{E}]{S(h_{\sss (1)}^{\ast})^{\ast} \act x, h_{\sss
             (2)}^{\ast} \act x'}{\alg{A}} \otimes h_{\sss (3)}^{\ast}
         h'.
    \end{align*}
Beide inneren Produkte sind mit der $(\cross{\alg{B}}{H},
\cross{\alg{A}}{H})$-Bimodulstruktur vertr"aglich und haben die
richtigen Linearit"at. Wir m"ussen nun zeigen, da"s die beiden
inneren Produkte miteinander vertr"aglich sind. Dies ist eine
einfache Konsequenz aus der Vertr"aglichkeit der inneren Produkte des
"Aquivalenzbimoduls $\bimod{B}{E}{A}$, denn es gilt
\begin{align*}
    \lSPcr[\sss \alg{E}\otimes H]{\cross{\alg{B}}{H}}{x\otimes g,
      y\otimes h} \cdot (z \otimes k) & = \left( \lSP[\sss
        \alg{E}]{\alg{B}}{x, S(g_{\sss (1)})^{\ast} S^{-1}(h_{\sss
            (1)}) \act y} \otimes g_{\sss (2)} h_{\sss (2)}^{\ast}
    \right) \cdot (z \otimes k) \\ & = \lSP[\sss \alg{E}]{\alg{B}}{x,
      S(g_{\sss (1)})^{\ast} S^{-1}(h_{\sss (1)}) \act y} \otimes
    g_{\sss (2)} h_{\sss (2)}^{\ast} \left( (g_{\sss (2)} h_{\sss
          (2)}^{\ast}) \act z \right) \otimes g_{\sss (3)} h_{\sss
      (3)}^{\ast} k \\ & = x \cdot \rSP[\sss \alg{E}]{(S(g_{\sss
        (1)})^{\ast}S^{-1}(h_{\sss (1)}))\act y, (g_{\sss (2)} h_{\sss
          (2)}^{\ast}) \act z}{\alg{A}}  \otimes g_{\sss (3)} h_{\sss
      (3)}^{\ast} k \\ & = (x \otimes g) \cdot \left( \rSP[\sss
        \alg{E}]{S^{-1}(h_{\sss (1)}) \act y, h_{\sss (2)}^\ast \act
          z}{\alg{A}} \otimes h_{\sss (3)}^{\ast} k \right) \\ & = (x
    \otimes g) \cdot \left( h_{\sss (1)}^{\ast} \act \rSP[\sss
        \alg{E}]{y,z}{\alg{A}} \otimes h_{\sss (2)}^{\ast} k \right)
    \\ & = (x \otimes g) \cdot \rSPcr[\sss \alg{E} \otimes H]{y \otimes
      h, z \otimes k}{\cross{\alg{A}}{H}}.  
\end{align*}

Der Ausartungsraum der beiden inneren Produkte auf
$\bimodcrosso{B}{E}{A}$ ist der gleiche, daher kann man auf eindeutige
Weise $\cross{\alg{E}}{H}$ definieren. Falls $\alg{B} \cdot \alg{E} =
\bimod{B}{E}{A} = \bimod{B}{E}{A} \cdot \alg{A}$, so ist auch $\cross{\alg{E}}{H}$
stark nichtausgeartet f"ur beide Modulstrukturen. Aus der Vollheit der
inneren Produkte von $\bimod{B}{E}{A}$ folgt automatisch die Vollheit
der inneren Produkte auf $\bimodcross{B}{E}{A}$: Sei $a = \sum_{i}
\rSP[\sss \alg{E}]{x_{i},y_{i}}{\alg{A}}$, dann ist offensichtlich
\begin{align*}
    a \otimes h = \sum_{i} \rSPcr[\sss \cross{\alg{E}}{H}]{x_{i} \otimes
      1_{\sss H}, y_{i} \otimes h}{\cross{\alg{A}}{H}},
\end{align*}
wodurch wir die Vollheit von $\rSPcr[\sss
\cross{\alg{E}}{H}]{\cdot,\cdot}{\cross{\alg{A}}{H}}$ gezeigt
haben. Analog zeigt man die Vollheit von $\lSPcr[\sss
\cross{\alg{E}}{H}]{\cross{\alg{B}}{H}}{\cdot,\cdot}$. Damit wird
$\bimodcross{B}{E}{A}$ zu einem
$^\ast$-\Name{Morita}-"Aquivalenzbimodul. Da die vollst"andige
Positivit"at der inneren Produkte durch die Konstruktion der
Cross-Produktalgebra erhalten bleibt folgt, da"s
$\bimodcross{B}{E}{A}$ ein starker \Name{Morita}-"Aquivalenzbimodul
ist, wenn es $\bimod{B}{E}{A}$ war. 
\end{proof}

Wenden wir nun Lemma
\ref{Lemma:VerschraenkungsoperatorBimodulliefertVOaufCrossProdukten}
auf "Aquivalenzbimoduln an, so erhalten wir das folgende Korollar.

\begin{korollar}
\label{Korollar:VerschraenkungsoperatorBimodulliefertVOaufCrossProdukten}
  Seien $\bimod{B}{E}{A}$ und $\bimod{B}{F}{A}$ isomorphe,
  $H$-"aquivariante ($^\ast$-, starke) \Name{Morita}-"Aquivalenzbimoduln, und
  $T:\bimod{B}{E}{A} \to \bimod{B}{F}{A}$ ein Isomorphismus, dann ist die
  Abbildung 
\begin{align}
T \otimes \id_{\sss H}: \bimodcross{B}{E}{A} \to \bimodcross{B}{F}{A}
\end{align}
 ein Isomorphismus von ($^\ast$-, starken) \Name{Morita}-"Aquivalenzbimoduln.     
\end{korollar}

\begin{lemma}
 \label{Lemma:AbbildungIdrei}
    Die Abbildung 
    \begin{align}
        \label{eq:Idrei}
        I_{3}: \cross{\bimod{A}{A}{A}}{H} \ni x \otimes h \mapsto x
        \otimes h \in \bimodcross{A}{A}{A} 
    \end{align}
ist ein Isomorphismus zwischen starken \Name{Morita}-"Aquivalenzbimoduln.
\end{lemma}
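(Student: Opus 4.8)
The plan is to show that the stated identity map $I_{3}$ is simultaneously a morphism of $(\cross{\alg{A}}{H},\cross{\alg{A}}{H})$-bimodules and an isometry for both inner products; since $I_{3}$ is manifestly a $\ring{C}$-linear bijection of the underlying spaces, this is exactly what is needed for an isomorphism of (strong) \Name{Morita} equivalence bimodules. Conceptually the lemma says that the functor $\cross{\cdot}{H}$ carries the reflexivity bimodule to the reflexivity bimodule. Before anything else I would check that $\cross{\bimod{A}{A}{A}}{H}$ really is $\alg{A}\otimes H$ without passing to a quotient: the canonical products \eqref{eq:ReflexivitaetInnereProdukte} on $\bimod{A}{A}{A}$ are non-degenerate because $\alg{A}$ is unital, so by Satz~\ref{Satz:VollstaendigePositivitaetundNichtausgeartetheitvonCrossProduktalgebra} the induced product on $\alg{A}\otimes H$ is non-degenerate as well. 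Hence $(\alg{A}\otimes H)^{\perp}=0$ and $\cross{\bimod{A}{A}{A}}{H}=\alg{A}\otimes H=\cross{\alg{A}}{H}$ as a $\ring{C}$-module, which is precisely the underlying module of $\bimodcross{A}{A}{A}$.

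The compatibility with the module structures requires no computation. The left and right $\cross{\alg{A}}{H}$-actions on $\cross{\bimod{A}{A}{A}}{H}$ supplied by Lemma~\ref{Lemma:StrukturenCrossProduktalgebra}, namely
\begin{align*}
  (a \otimes g)\cdot(x \otimes h) = a\,(g_{\sss (1)} \act x) \otimes g_{\sss (2)} h, \qquad
  (x \otimes g)\cdot(a \otimes h) = x\,(g_{\sss (1)} \act a) \otimes g_{\sss (2)} h,
\end{align*}
coincide verbatim with left and right multiplication of $\cross{\alg{A}}{H}$ on itself, which by \eqref{eq:ReflexivitaetInnereProdukte} is the bimodule structure of $\bimodcross{A}{A}{A}$. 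Thus $I_{3}$ intertwines both actions by definition.

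It remains to prove isometry. For the right inner product I would evaluate the reflexivity product $(x \otimes g)^{\ast}(y \otimes h)$ on the target side. Inserting $(x \otimes g)^{\ast} = g_{\sss (1)}^{\ast} \act x^{\ast} \otimes g_{\sss (2)}^{\ast}$, multiplying out in $\cross{\alg{A}}{H}$, and then using that $\Delta$ is a $^\ast$-homomorphism (so $\Delta(g^{\ast}) = g_{\sss (1)}^{\ast} \otimes g_{\sss (2)}^{\ast}$), coassociativity, and the module-algebra identity $k \act (ab) = (k_{\sss (1)} \act a)(k_{\sss (2)} \act b)$, I expect the product to collapse to
\begin{align*}
  (x \otimes g)^{\ast}(y \otimes h) = \bigl(g_{\sss (1)}^{\ast} \act (x^{\ast} y)\bigr) \otimes g_{\sss (2)}^{\ast} h,
\end{align*}
which is exactly the source value $\rSPcr[\sss \alg{E} \otimes H]{x \otimes g, y \otimes h}{\cross{\alg{A}}{H}}$ from \eqref{eq:InneresProduktAufCrossProdukt}, since $\rSP{x,y}{\alg{A}} = x^{\ast}y$.

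The left inner product is the genuinely delicate point and the step I expect to cost the most bookkeeping. Here the source value, read off from the preceding lemma (with $\alg{B}=\alg{A}$ and $\lSP{\alg{A}}{x,z}=xz^{\ast}$), is $x\,(S(g_{\sss (1)})^{\ast} S^{-1}(h_{\sss (1)}) \act y)^{\ast} \otimes g_{\sss (2)} h_{\sss (2)}^{\ast}$, whereas the target value $(x \otimes g)(y \otimes h)^{\ast}$ expands to $x\,\bigl((g_{\sss (1)} h_{\sss (1)}^{\ast}) \act y^{\ast}\bigr) \otimes g_{\sss (2)} h_{\sss (2)}^{\ast}$. Matching the two therefore reduces to the pointwise identity $(S(g_{\sss (1)})^{\ast} S^{-1}(h_{\sss (1)}) \act y)^{\ast} = (g_{\sss (1)} h_{\sss (1)}^{\ast}) \act y^{\ast}$, which I would obtain from $(k \act y)^{\ast} = S(k)^{\ast} \act y^{\ast}$, the anti-homomorphism property of $S$, and the antipode relation $S(S(a)^{\ast}) = a^{\ast}$ valid in any \Name{Hopf} $^\ast$-algebra (cf.\ Proposition~\ref{Proposition:Antipode} and Definition~\ref{Definition:HopfSternAlgebra}). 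Once this computation is carried out, both inner products are preserved, the module actions already agree, and $I_{3}$ is the desired isomorphism of (strong) \Name{Morita} equivalence bimodules.
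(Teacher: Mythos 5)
Your proof is correct and follows essentially the same route as the paper, whose own argument simply records that (given the standing nondegeneracy of $\cross{\alg{A}}{H}$) no quotient has to be taken and that the bimodule structures and both inner products on $\alg{A}\otimes H$ coincide in the two interpretations; you have supplied the explicit verifications, and in particular your \Name{Hopf}-$^\ast$-algebra computation reducing the left-isometry to $S(S(a)^{\ast})^{\ast}=a$ is exactly the needed identity. The only cosmetic remark is that nondegeneracy of the unquotiented inner product follows already from unitality of $\cross{\alg{A}}{H}$ (test against $1_{\sss \alg{A}}\otimes 1_{\sss H}$), which is closer to the paper's stated hypothesis than your appeal to Satz~\ref{Satz:VollstaendigePositivitaetundNichtausgeartetheitvonCrossProduktalgebra}.
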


\begin{proof}
    Da wir davon ausgehen, da"s $\cross{\alg{A}}{H}$ nichtausgeartet
    ist, sind die inneren Produkte ohne Quotientenbildung
    nichtausgeartet. Es ist leicht zu zeigen, da"s die
    Bimodulstrukturen sowie die inneren Produkte auf $\alg{A} \otimes
    H$ in beiden Interpretationen zusammenfallen. 
\end{proof}

Wir sind nun in der Lage folgendes wichtiges Ergebnis zu formulieren.

\begin{satz}[Cross-Produkt induziert Gruppoidmorphismus]
  Die durch die \Name{Hopf}-$^\ast$-Algebra $H$ erhaltenen
  Cross-Produkte induzieren die Gruppoidmorphismen
    \begin{align}
        \label{eq:CrossProduktGruppenmorphismusstar}
        \cross{\cdot}{H}: \starPicH \to \starPic 
\intertext{und}
        \label{eq:CrossProduktGruppenmorphismusstr}
        \cross{\cdot}{H}: \strPicH \to \strPic.
    \end{align}
Dabei werden die Einselemente $\bimod{A}{A}{A} \in \Obj(\starPicH)$ auf die Cross-Produktalgebren
$\cross{\alg{A}}{H}$ abgebildet und Pfeile $[\bimod{B}{E}{A}]
\in \Morph(\starPicH)$ auf die Pfeile $[\bimodcross{B}{E}{A}] \in
\starPic(\cross{\alg{B}}{H},\cross{\alg{A}}{H})$. Analoges gilt f"ur
den starken Fall. 
\end{satz}

\begin{proof}
    Der Beweis ist eine Folge der zuvor gezeigten Lemmata. Aufgrund
    des Korollars
    \ref{Korollar:VerschraenkungsoperatorBimodulliefertVOaufCrossProdukten} ist
 $\cross{\cdot}{H}$ auf den Isomorphieklassen wohldefiniert. Die
 Abbildung $I_{3}$ aus Lemma~\ref{Lemma:AbbildungIdrei} garantiert,
 da"s Einselemente auf Einselemente abgebildet werden.   
 Desweiteren sichert eine Erweiterung von Proposition
 \ref{Proposition:IeinsundIzwei}, da"s Tensorprodukte auf
 Tensorprodukte abgebildet werden -- inklusive der dadurch induzierten inneren
 Produkte. Im Fall von "Aquivalenzbimoduln 
 legt ein inneres Produkt das andere aufgrund der Kompatibilit"at
 fest. Weiter stellt $I_{2}$ aus Proposition~\ref{Proposition:IeinsundIzwei}
 sicher, da"s konjugiert komplexe Bimoduln auf konjugiert komplexe
 Bimoduln abgebildet werden. Demnach werden Produkte und Inverse aus
 $\starPicH$ auf Produkte und Inverse in $\starPic$ abgebildet und
 analog f"ur $\strPicH$ nach $\strPic$. 
\end{proof}

Aus dem Satz ergeben sich nun zwei einfache Korollare.

\begin{korollar}[\Name{Morita}-"Aquivalenz von Cross-Produktalgebren]
 \label{Korollar:MoritaAequivalenzCrossProduktalgebren}
\index{Morita-Aequivalenz@\Name{Morita}-\"Aquivalenz!Cross-Produktalgebren@von Cross-Produktalgebren}
    Seien $\alg{A}$ und $\alg{B}$ zwei $H$-"aquivariante stark
    \Name{Morita}-"aquivalente Algebren, dann sind
    $\cross{\alg{A}}{H}$ und $\cross{\alg{B}}{H}$ stark
    \Name{Morita}-"aquivalente Cross-Produktalgebren. Analoges gilt,
    wenn man stark durch $^\ast$- ersetzt.
\end{korollar}

\begin{korollar}[Gruppenhomomorphismus]
 \label{Korollar:GruppenhomomorphismusPicHAnachPicAxH}
Die Abbildungen
\begin{align}
    \label{eq:GruppenhomomorphismusPicHstarAnachPicstarAxH}
   \starPicH(\alg{A}) \to \starPic(\cross{\alg{A}}{H}) \quad
   \text{und} \quad \strPicH(\alg{A}) \to \strPic(\cross{\alg{A}}{H}) 
\end{align}
sind Gruppenhomomorphismen.
\end{korollar}

\section{Das Beispiel $\strPicH(\ring{C}) \to \strPic(H)$}

Wir wollen die in diesem Kapitel erarbeiteten Techniken an dem
einfachen, jedoch interessanten Beispiel $\strPicH(\ring{C}) \to
\strPic(H)$ veranschaulichen. Das hei"st $\alg{A}=\ring{C}$
ausgestattet mit der trivialen Wirkung der \Name{Hopf}-$^\ast$-Algebra
$H$. In diesem Fall vereinfacht sich die Cross-Produktalgebra, da
$\cross{\ring{C}}{H} = H$ ist.

\begin{lemma}
\label{Lemma:CharakterHopfAlgebra}
Sei nun $\chi \in \GR{GL}{H,\ring{C}}$, dann gilt
\begin{compactenum}
\item $\chi^{-1}(h)= \chi(S^{-1}(h)) = \chi(S(h))$.
\item $\chi \in \GR{U}{H,\ring{C}}$ genau dann wenn $\chi(h^{\ast}) =
    \cc{\chi(h)}$.
\item Die Abbildung $\Phi^{\sss \chi}(h) :=\chi(S(h_{\sss (1)}))
    h_{\sss(2)}$ definiert einen Automorphismus $\Phi^{\sss \chi} \in
    \Aut(H)$ und $\Phi^{\sss \chi} \in \starAut(H)$ falls $\chi \in
    \GR{U}{H,\ring{C}}$. 
\item Die Abbildung 
\begin{align}
\label{eq:GruppenhomomorphismusGLHCtoAutH} 
\GR{GL}{H,\ring{C}} \ni \chi \mapsto \Phi^{\sss \chi} \in \Aut(H)
\end{align}
 ist ein injektiver Gruppenhomomorphismus.
\item Die Abbildung $\Phi^{\sss \chi}$ ist genau dann ein innerer
    Automorphimus, wenn $\chi=\msf{e}$ ist. 
\end{compactenum}
\end{lemma}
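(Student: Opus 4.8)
The plan is to regard $\chi\in\GR{GL}{H,\ring{C}}$ as a unital algebra homomorphism (character) and to work throughout in the associative convolution monoid on $\Hom(H,\ring{C})$, whose unit is $\msf{e}=\varepsilon$. For i.) I would first recall the standard computation $(\chi*(\chi\circ S))(h)=\chi(h_{\sss(1)}S(h_{\sss(2)}))=\varepsilon(h)$, and symmetrically from the left, using that $\chi$ is multiplicative together with the antipode axiom $\sum h_{\sss(1)}S(h_{\sss(2)})=\varepsilon(h)1$; this already gives the two-sided inverse $\chi^{-1}=\chi\circ S$. For the second equality I would show that $\chi\circ S^{-1}$ is a one-sided convolution inverse: applying the multiplicative $\chi$ to the identity $\sum S^{-1}(h_{\sss(2)})h_{\sss(1)}=\varepsilon(h)1$ for the invertible antipode, and using commutativity of $\ring{C}$, yields $\chi*(\chi\circ S^{-1})=\varepsilon$. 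Since inverses in a monoid are unique, $\chi\circ S^{-1}=\chi\circ S=\chi^{-1}$, which is $\chi(S^{-1}(h))=\chi(S(h))$.

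For ii.) I would unfold the defining unitarity relation of $\GR{U}{H,\ring{C}}$ (unitarity with respect to the induced convolution $^{\ast}$-structure on functionals); substituting i.) and letting $h$ range over $S(H)=H$, using that $S$ is bijective on a Hopf-$^{\ast}$-algebra, turns this condition into exactly $\chi(h^{\ast})=\cc{\chi(h)}$. For iii.), multiplicativity of $\Phi^{\sss\chi}$ follows because $S$ is an anti-homomorphism and $\chi$ a character: $\Phi^{\sss\chi}(gh)=\chi(S(h_{\sss(1)}))\chi(S(g_{\sss(1)}))g_{\sss(2)}h_{\sss(2)}=\Phi^{\sss\chi}(g)\Phi^{\sss\chi}(h)$; unitality follows from $S(1)=1$, $\chi(1)=1$; and $\Phi^{\msf{e}}=\id$ from $\varepsilon\circ S=\varepsilon$ and the counit axiom. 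Bijectivity I obtain from the composition law of iv.). The $^{\ast}$-property for $\chi\in\GR{U}{H,\ring{C}}$ is the one genuinely $^{\ast}$-algebraic step: using $\Delta(h^{\ast})=h_{\sss(1)}^{\ast}\otimes h_{\sss(2)}^{\ast}$, the Hopf-$^{\ast}$ relation $S(h^{\ast})=S^{-1}(h)^{\ast}$ (equivalently $S^{-1}={}^{\ast}\circ S\circ{}^{\ast}$), and then ii.) followed by i.), one computes $\Phi^{\sss\chi}(h^{\ast})=\cc{\chi(S(h_{\sss(1)}))}\,h_{\sss(2)}^{\ast}=\Phi^{\sss\chi}(h)^{\ast}$, so $\Phi^{\sss\chi}\in\starAut(H)$.

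For iv.), the Sweedler computation $\Phi^{\sss\chi}\circ\Phi^{\sss\psi}(h)=\psi(S(h_{\sss(1)}))\chi(S(h_{\sss(2)}))h_{\sss(3)}=(\chi*\psi)(S(h_{\sss(1)}))h_{\sss(2)}=\Phi^{\sss\chi*\psi}(h)$, using coassociativity and $(S\otimes S)\circ\Delta=\Deltaop\circ S$ (Proposition \ref{Proposition:Antipode}), shows that $\chi\mapsto\Phi^{\sss\chi}$ is a group homomorphism; bijectivity of each $\Phi^{\sss\chi}$ then drops out as $\Phi^{\sss\chi}\circ\Phi^{\sss\chi^{-1}}=\Phi^{\msf{e}}=\id$. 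For injectivity, if $\Phi^{\sss\chi}=\id$ then applying $\varepsilon$ to $\chi(S(h_{\sss(1)}))h_{\sss(2)}=h$ gives $\chi(S(h))=\varepsilon(h)$, i.e. $\chi^{-1}=\varepsilon$ and hence $\chi=\msf{e}$. For v.), the reverse implication is immediate since $\Phi^{\msf{e}}=\id=\Ad_{1}$. For the forward one, suppose $\Phi^{\sss\chi}=\Ad_{u}$ with $u$ invertible and apply the algebra homomorphism $\varepsilon$, which is constant on conjugacy classes: on one side $\varepsilon(\Phi^{\sss\chi}(h))=\chi(S(h))=\chi^{-1}(h)$ (by i.)), on the other $\varepsilon(uhu^{-1})=\varepsilon(h)$, whence $\chi^{-1}=\varepsilon$ and $\chi=\msf{e}$.

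The main obstacle is keeping the interplay of $S$, $S^{-1}$ and the involution straight: the second equality in i.) and the $^{\ast}$-step in iii.) both hinge on the correct Hopf-$^{\ast}$ identities ($(S\otimes S)\circ\Delta=\Deltaop\circ S$ and $S^{-1}={}^{\ast}\circ S\circ{}^{\ast}$, Definition \ref{Definition:HopfSternAlgebra} and Proposition \ref{Proposition:Antipode}) and on the precise form of the unitarity condition defining $\GR{U}{H,\ring{C}}$. The remaining parts are routine Sweedler manipulations, and v.) becomes easy once one observes that the counit $\varepsilon$ annihilates inner automorphisms.
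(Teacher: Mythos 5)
Your proposal is correct and follows essentially the same route as the paper's proof: with the trivial action the Wirkungsbedingung makes $\chi$ a character, i.) and ii.) reduce to uniqueness of convolution inverses, iii.) and iv.) are the same Sweedler computations (including $\Phi^{\sss\chi}\circ\Phi^{\sss\tilde\chi}=\Phi^{\sss\chi\ast\tilde\chi}$), and injectivity comes from applying $\varepsilon$. The only difference is that you spell out part v.) explicitly via the observation that $\varepsilon$ kills inner automorphisms, a step the paper's proof leaves implicit in its injectivity argument.
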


\begin{proof}
Offensichtlich ist $\chi^{-1}(h) = \chi(S^{-1}(h))$ durch Gleichung
\eqref{eq:InversesGLHA} wohldefiniert, wenn die Wirkung trivial ist. Desweiteren
definiert $\chi(S(h))$ ein Inverses bez"uglich des
Konvolutionsprodukts und aufgrund der Eindeutigkeit ist $\chi(S(h)) =
\chi^{-1}(h)$. Den zweiten Teil zeigen wir, indem wir $\cc{\chi}(h):=
\cc{\chi(h^{\ast})}$ definieren. Nutzen wir {\it iv.)} und die
Unitarit"atsbedingung (vgl.~Definition
\ref{Definition:GLHAundUHA} {\it iv.)}) f"ur
$\chi$, dann sehen wir, da"s 
$\cc{\chi}$ ein Inverses zu $\chi^{-1}$ bez"uglich des
Konvolutionsprodukts ist und damit gleich $\chi$ sein mu"s. Die
umgekehrte Richtung ist trivial. F"ur Teil {\it iii.)} und {\it iv.)}
zeigen wir schnell, da"s $\Phi^{\sss\chi}$ ein Homomorphismus ist,
d.~h.~es gilt $\Phi^{\sss \msf{e}}=\id$ und $\Phi^{\sss \chi}(gh) = \Phi^{\sss
  \chi}(g) \Phi^{\sss \chi}(h)$. Da $\chi \in \GR{U}{H, \ring{C}}$ ist
$\Phi^{\sss \chi}(h^{\ast}) = \Phi^{\sss \chi}(h)^{\ast}$. Desweiteren
rechnen wir nach, da"s $\Phi^{\sss \chi}\circ \Phi^{\sss \tilde{\chi}} = \Phi^{\sss \chi \ast
\tilde{\chi}}$, woraus die Bijektivit"at von $\Phi^{\sss \chi}$
folgt, und Gleichung \eqref{eq:GruppenhomomorphismusGLHCtoAutH} ist ein
Gruppenhomomorphismus. F"ur die Injektivit"at sei $\Phi^{\sss
  \chi}(h)=h$. Wendet man nun $\varepsilon$ darauf an, so ergibt sich
sofort $\chi(S(h)) = \varepsilon(h)$ woraus $\chi= \msf{e}$ folgt.
\end{proof}

    Lemma~\ref{Lemma:CharakterHopfAlgebra} ist eine Verallgemeinerung
    der bekannten Konstruktion von Automorphismen der Gruppenalgebra
    $\ring{C}[G]$ aus den Charakteren der Gruppe $G$. Die Gruppe
    $\GR{U}{H,\ring{C}}$ gibt immer einen nichttrivialen Beitrag zur
    \Name{Picard}-Gruppe $\strPic(H)$. So folgt aus \citep[Gleichung
    (2.4)]{bursztyn.waldmann:2004a}, da"s es den injektiven
    Gruppenhomomorphismus gibt:
    \begin{align}
        \label{eq:InjektiverGruppenhomomorphismusXtolPhiChi}
        \GR{U}{H,\ring{C}} \ni \chi \mapsto \ell(\Phi^{\sss \chi}) \in
        \strPic(H).
    \end{align}

Desweiteren ist klar, da"s die Cross-Produktalgebra
$\cross{\ring{C}}{H}$ aufgrund der kanonischen Identifikation 
\begin{align*}
    \cross{\ring{C}}{H} \ni z \otimes h \mapsto zh \in H
\end{align*}

nur die \Name{Hopf}-Algebra selbst $H$ ist. Der Gruppoidmorphismus
$\cross{\cdot}{H}$ liefert damit einen Gruppenhomomorphismus 
\begin{align}
    \label{eq:GruppenhomomorphismusstrPicHCtostrPicH}
    \strPicH(\ring{C}) \to \strPic(H).
\end{align}

Da das Zentrum von $\ring{C}$ trivial ist, gilt $\GR{U}{H,\ring{C}} =
\GRn{U}{H,\ring{C}}$ (Vergleiche hierzu Bemerkung
\ref{Bemerkung:TrivialesZentrumGleichheitderGruppen}) und ferner ist
$\GRn{U}{H,\ring{C}}$ eine Untergruppe von $\strPicH(\ring{C})$, so
da"s wir folgende Proposition formulieren k"onnen.

\begin{proposition}
\label{Proposition:StrPicHCUHCundsoweiter}
    Das folgende Diagramm von Gruppenhomomorphismen kommutiert
\begin{equation}
\label{eq:StrPicHCUHCundsoweiter}
        \bfig
        \Square/^{ (}->`^{ (}->`>`>/%
        [\GR{U}{H, \ring{C}}%
        `\strPicH(\ring{C})%
        `\starAut(H)%
        `\strPic(H)%
        ;``\cdot\rtimes\!H`\ell]
        \efig
\end{equation}
wobei $\GR{U}{H, \ring{C}}$ als Untergruppe von $\strPic(H)$
verstanden wird. 
\end{proposition}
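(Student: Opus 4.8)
The plan is to reduce the commutativity of \eqref{eq:StrPicHCUHCundsoweiter} to an explicit computation on a single character. All four edges are already known to be group homomorphisms: the two vertical maps are monomorphisms (the left one by Lemma~\ref{Lemma:CharakterHopfAlgebra}~{\it iii.)}--{\it iv.)}, the top one the subgroup embedding $\GR{U}{H,\ring{C}} \hookrightarrow \strPicH(\ring{C})$ from the preceding discussion), $\cdot\rtimes H$ is a homomorphism by Korollar~\ref{Korollar:GruppenhomomorphismusPicHAnachPicAxH}, and $\ell$ is one as recalled in \eqref{eq:InjektiverGruppenhomomorphismusXtolPhiChi}. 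Hence I would fix $\chi \in \GR{U}{H,\ring{C}}$ and chase it around the square, showing that the two resulting elements of $\strPic(H)$ coincide. The input is the concrete self-equivalence bimodule $\alg{E}$ representing $\chi$ in $\strPicH(\ring{C})$, namely $\ring{C}$ with its canonical $(\ring{C},\ring{C})$-bimodule structure, the canonical inner products $\rSP{x,y}{\ring{C}}=\cc{x}y$ and $\lSP{\ring{C}}{x,y}=x\cc{y}$, and the $H$-action on the bimodule determined by the character (twisted by $\chi$, the precise twist being fixed by the conventions of Chapter~\ref{chapter:Picard}).

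First I would compute the image under $\cdot\rtimes H$. Since $\alg{A}=\ring{C}$ carries the trivial action, $\cross{\ring{C}}{H}=H$, and the cross-product bimodule of Lemma~\ref{Lemma:StrukturenCrossProduktalgebra} lives on $\ring{C}\otimes H\cong H$ via $1\otimes h\cong h$. Evaluating the two module structures on generators, the right $H$-action reduces to ordinary right multiplication (the action on $\alg{A}=\ring{C}$ being trivial, so $\varepsilon(g_{\sss (1)})g_{\sss (2)}=g$), while the left action acquires the character twist $g\cdot h=(g_{\sss (1)}\act 1)\,g_{\sss (2)}h=\chi(S(g_{\sss (1)}))\,g_{\sss (2)}\,h=\Phi^{\sss \chi}(g)\,h$, where I use the definition $\Phi^{\sss \chi}(g)=\chi(S(g_{\sss (1)}))g_{\sss (2)}$ of Lemma~\ref{Lemma:CharakterHopfAlgebra}~{\it iii.)} together with $\chi^{-1}=\chi\circ S$ from part~{\it i.)}. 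Thus $\cross{\alg{E}}{H}$ is exactly the twist bimodule ${}_{\Phi^{\sss \chi}}H_{H}$: the space $H$ with standard right multiplication and left multiplication precomposed with $\Phi^{\sss \chi}$.

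Going the other way around the square sends $\chi$ first to $\Phi^{\sss \chi}\in\starAut(H)$ and then, via $\ell$, to precisely ${}_{\Phi^{\sss \chi}}H_{H}=\ell(\Phi^{\sss \chi})$, which is the content of \eqref{eq:InjektiverGruppenhomomorphismusXtolPhiChi}. It therefore remains to verify that the identity on $\ring{C}\otimes H\cong H$ is an isomorphism of \emph{strong} equivalence bimodules between $\cross{\alg{E}}{H}$ and $\ell(\Phi^{\sss \chi})$. This is the only genuine verification: I would transport the two $H$-valued inner products through the identification, using the explicit formula \eqref{eq:InneresProduktAufCrossProdukt} for $\rSPcr[\sss \alg{E}\ctimes H]{\cdot,\cdot}{\cross{\alg{A}}{H}}$ on the cross-product side. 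For the right inner product one gets $\rSPcr[\sss \alg{E}\ctimes H]{1\otimes g,1\otimes h}{\cross{\ring{C}}{H}}=(g^{\ast}_{\sss (1)}\act 1)\otimes g^{\ast}_{\sss (2)}h\cong g^{\ast}h$, the canonical $H$-valued product; the left inner product matches the twist by $\Phi^{\sss \chi}$, for which one uses that $\Phi^{\sss \chi}\in\starAut(H)$ (Lemma~\ref{Lemma:CharakterHopfAlgebra}~{\it iii.)}).

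The main obstacle is bookkeeping rather than conceptual: one must keep the antipode-- and $^\ast$-twists consistent, so that the left twist emerges as $\Phi^{\sss \chi}$ and not as $\Phi^{\sss \chi^{-1}}$ (the reconciling identity being $\chi^{-1}=\chi\circ S$ and $\chi(h^{\ast})=\cc{\chi(h)}$ from Lemma~\ref{Lemma:CharakterHopfAlgebra}~{\it i.)},~{\it ii.)}), and one must confirm isometry for \emph{both} inner products in the strong case. Once this isomorphism of strong equivalence bimodules is in hand, both composites represent the same class in $\strPic(H)$, so the diagram commutes; the group-homomorphism property of each edge then makes \eqref{eq:StrPicHCUHCundsoweiter} a commuting square of group homomorphisms.
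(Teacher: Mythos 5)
Your proposal follows the paper's proof essentially step for step: fix $\chi\in\GR{U}{H,\ring{C}}$, realise it as the trivial bimodule $\ring{C}$ with $\chi$-twisted $H$-action, compute the cross-product bimodule explicitly on $\ring{C}\otimes H\cong H$, and identify the result with $\ell(\Phi^{\sss \chi})$; the paper likewise records the two induced inner products (the canonical one on the right, $\Phi^{\sss \chi}(gh^{\ast})$ on the left). The one place where you diverge from the paper's bookkeeping is the evaluation of the left module structure. With the twisted action $h\neactt[\chi] z=\chi(h_{\sss (1)})\,h_{\sss (2)}\act z=\chi(h)z$ that the paper uses for the image of $\chi$ in $\strPicH(\ring{C})$, the cross-product formula gives $(1\otimes g)\cdot(1\otimes h)=\chi(g_{\sss (1)})\,g_{\sss (2)}h=\Phi^{\sss {\chi^{-1}}}(g)\,h$, not $\chi(S(g_{\sss (1)}))\,g_{\sss (2)}h=\Phi^{\sss \chi}(g)\,h$ as you write: your extra antipode amounts to replacing $\chi$ by $\chi^{-1}=\chi\circ S$. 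The paper then reconciles $\Phi^{\sss {\chi^{-1}}}(g)h$ with $\ell(\Phi^{\sss \chi})$ via its convention $g\cdot_{\sss {\Phi^{\chi}}}h=\Phi^{\sss {\chi^{-1}}}(g)h$ for the twist bimodule, whereas you take $\ell(\Phi^{\sss \chi})$ to be left multiplication precomposed with $\Phi^{\sss \chi}$ itself; the two deviations are each off by an inverse and cancel, so your final conclusion agrees with the paper's, but the intermediate identity $(g_{\sss (1)}\act 1)=\chi(S(g_{\sss (1)}))$ is not what the stated action produces. Since this is precisely the antipode bookkeeping you yourself flag as the main obstacle, you should fix the two conventions (the embedding $\GR{U}{H,\ring{C}}\hookrightarrow\strPicH(\ring{C})$ and the definition of $\ell$) explicitly \emph{before} the computation; otherwise the argument, read against the paper's conventions, proves commutativity only up to inversion of one edge, which is not enough since neither $\GR{U}{H,\ring{C}}$ nor $\strPic(H)$ is abelian in general.
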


\begin{proof}
Sei nun $\chi \in \GR{U}{H,\ring{C}} = \GRn{U}{H,\ring{C}}$. Das Bild
von $\chi$ in $\strPicH(\ring{C})$ ist gegeben durch die
Isomorphieklassen der trivialen Bimoduln $\ring{C}$ mit den kanonischen
inneren Produkten und $H$-Wirkungen $h \neactt[\chi] z =
\chi(h_{\sss (1)}) h_{\sss (2)} \act z = \chi (h) z$. Wir bezeichnen
diesen Bimodul mit $\ring{C}^{\sss \chi}$. Dann bilden wir
$[\ring{C}^{\sss \chi}]$ auf $[\cross{\ring{C}^{\sss \chi}}{H}]$ ab,
wobei $\cross{\ring{C}^{\sss \chi}}{H} \cong H$ als
$\ring{C}$-Modul. Die $H$-Modulstruktur ist gegeben durch $g \cdot h 
= \chi(g_{\sss (1)})g_{\sss (2)} h = \Phi^{\sss {\chi^{-1}}} (g) h = g
\cdot_{\sss {\Phi^{\chi}}} h$, und die kanonische
$H$-Rechtsmodulstruktur. Das linkslineare innere Produkt ist gegeben
durch $\Phi^{\sss \chi}(gh^{\ast})$, das rechtslineare ist das
kanonische. Damit ist $\cross{\ring{C}^{\sss \chi}}{H}$ isomorph zu
$\bimodo{\Phi^{\sss \chi}(H)}{H}{H}$, dessen Klasse in $\strPic(H)$
ist nun $\ell(\Phi^{\sss \chi})$. Damit ist die Kommutativit"at von
Diagramm \eqref{eq:StrPicHCUHCundsoweiter} gezeigt.
\end{proof}

Sei nun $\ring{C}$ ein algebraisch abgeschlossener
K"orper\index{Koerper@K\"orper!algebraisch abgeschlossener}, was genau
dann der Fall ist, wenn $\ring{R}$ ein reell abgeschlossener K"orper
ist \citep[Section 5.1]{jacobson:1985a}, so k"onnen wir das Diagramm
\eqref{eq:StrPicHCUHCundsoweiter} genauer untersuchen.

\begin{korollar}
\label{Korollar:strPicCfallsCalgebraischabgeschlossen}
    Sei $\ring{C}$ ein algebraisch abgeschlossener K"orper, dann ist
    \begin{compactenum}
    \item $\strPic(\ring{C}) = \{\id\}$,
    \item $\strPicH(\ring{C}) = \GR{U}{H,\ring{C}} =
        \GRn{U}{H,\ring{C}}$,
    \item die Abbildung $\strPicH(\ring{C}) \to \strPic(H)$ injektiv
        und ihr Bild ist durch das Diagramm \eqref{eq:StrPicHCUHCundsoweiter} gegeben.
    \end{compactenum}
\end{korollar}

\begin{proof}
    Der einzige "Aquivalenzbimodul bis auf Isomorphie ist der
    eindimensionale Vektorraum $\ring{C}$ mit dem kanonischen, positiv
    definiten inneren Produkt $\rSP{z,w}{\ring{C}} = \cc{z} w$. In
    diesem Fall ist $\starPic(\ring{C}) = \field{Z}_{2}$. Der zweite
    Teil folgt unmittelbar aus Proposition
    \ref{Proposition:EinsToGRnUHAtoostrPicHAtoostrPicAexakt}. Der
    dritte Teil ist eine Folge aus 
    Proposition~\ref{Proposition:StrPicHCUHCundsoweiter}.
\end{proof}
\index{Cross-Produktalgebra|)}
\index{Algebra!Cross-Produktalgebra|)}
\index{Picard-Gruppoid@\Name{Picard}-Gruppoid!Cross-Produktalgebren|)}


\chapter{Sternprodukte}
\label{chapter:Sternprodukte}
\fancyhead[CO]{\slshape \nouppercase{\rightmark}} 
\fancyhead[CE]{\slshape \nouppercase{\leftmark}} 

\section{Das Quantisierungsproblem}
\label{sec:DasQuantisierunsproblem}

In diesem Abschnitt soll eine kurze Einf"uhrung in das Problem des
Quantisierens gegeben werden. Eine sehr ausf"uhrliche Beschreibung
findet sich in \citep{waldmann:2004a:script}.  
Unter Quantisieren verstehen wir den Proze"s von einer gegebenen
klassischen Mechanik\index{Mechanik!klassische} auf eine
Quantenmechanik\index{Quantenmechanik} zu schlie"sen. Dieses 
Unterfangen erweist sich im allgemeinen als beliebig schwierig und
keineswegs eindeutig.  
W"ahrend die Konzepte (und deren Formulierung) in der klassischen
Mechanik klar sind und prinzipiell keine Probleme bereiten, stellt es
sich als "au"serst schwierig -- wenn nicht sogar als unm"oglich
-- heraus, eine {\it a priori-Quantenmechanik} zu formulieren. Eine
Ausnahme bildet hier die {\it axiomatische
  Quantenfeldtheorie}\index{Quantenfeldtheorie!axiomatische} 
\citep{haag.kastler:1964a,streater.wightman:2000a}, mit
der man bisher allerdings kein Beispiel au"ser dem freien Teilchen
beschreiben kann. Daher versucht man ausgehend von einem bekannten
klassischen System eine Quantisierung anzugeben. Leider existiert kein
\glqq Quantisierungsfunktor\grqq, mit dessen Hilfe man zu der
klassischen Beschreibung eines physikalischen Systems eine {\em eindeutige}
quantenmechanische angeben k"onnte.  

Man beachte, da"s die Quantisierung deshalb notwendig ist, weil die
klassische Beschreibung diverser Ph"anomene in der Physik nur
unzureichend ist. Quantisierung ist somit kein Proze"s, den man dem
System aufzwingt, sondern die zu beschreibenden Ph"anomene sind
bereits quantenmechanischer Natur, und es ist die klassische
Beschreibung, die mangelhaft ist. Dies bedeutet allerdings nicht,
da"s die klassische Mechanik keine Berechtigung hat, vielmehr ist
es davon abh"angig auf welcher Skala wir messen.   
M"ochten wir die Bewegung eines Planeten um die Sonne beschreiben
oder die Bewegung eines Elektrons um einen Atomkern, so haben wir es (im
wesentlichen) in beiden F"allen mit einer Zentralkraft (dem
\Name{Kepler}-Problem)\index{Kepler-Problem@\Name{Kepler}-Problem} zu
tun, und die Probleme scheinen "ahnlicher 
Natur zu sein. Im Falle der Planeten liefert uns die klassische Mechanik,
d.~h.~die \Name{Newton}sche
Mechanik\index{Mechanik!Newtonsche@\Name{Newton}sche} sowie die
(Allgemeine) 
Relativit"atstheorie ein korrektes Ergebnis, das mit all unseren
Messungen (im Rahmen der verf"ugbaren Me"sgenauigkeit)
"ubereinstimmt, so da"s wir nie auf die Idee k"amen, das Modell
anzuzweifeln. Die klassische Mechanik ist in der Astrophysik eine sehr
gute N"aherung. Im Falle des Atomkerns hingegen versagt die
klassische Theorie, und wir brauchen eine andere (wie sich
herausstellen wird allgemeinere) Beschreibung, um z.~B.~das
Energiespektrum eines Wasserstoffatoms zu bestimmen. Hier spielt die
Naturkonstante $\hbar$, das \Name{Planck}sche
Wirkungsquantum\index{Plancksches Wirkungsquantum@\Name{Planck}sches Wirkungsquantum}, eine
wichtige Rolle. Es hat die Einheit einer {\it Wirkung}\index{Wirkung} und w"ahrend
im ersten Beispiel $\hbar$ verschwindend klein im Vergleich zur
Wirkung der Planten ist, macht sich die quantenmechanische Natur im
zweiten Beispiel deutlich bemerkbar.\\ 

Wir wollen somit eine quantenmechanische Theorie haben, die eine
Verallgemeinerung der klassischen Theorie darstellt. Daraus resultiert
die Forderung, da"s man von jeder 
quantenmechanischen Theorie auf eine eindeutige Weise zur"uck zu
einer klassischen Theorie kommen mu"s, d.~h.~den {\it klassischen
  Limes}\index{klassischer Limes} \glqq $\hbar \to 0$\grqq{} bilden k"onnen
mu"s. In der Physik bezeichnet man diesen 
Proze"s auch als {\em Quantenkorrekturen vernachl"assigen}. Diese
Sprechweise ist jedoch mi"sverst"andlich, da sie aus den obigen
Gr"unden ein falsches Bild von der Physik vermittelt.

Was es letztendlich bedeutet, von der 
quantenmechanischen Beschreibung zur klassischen zu gelangen, ist im
Rahmen der kanonischen Quantisierung\index{Quantisierung!kanonische}
ein nichttrivialer Proze"s und 
daher mit Vorsicht zu behandeln. Im Rahmen der
Deformationsquantisierung soll motiviert werden wie dieser
Grenz"ubergang zu verstehen ist. 

\section{Kanonische Quantisierung (auf dem $\field{R}^{2n}$)}
\label{sec:KanonischeQuantisierung}

\subsection{Axiomatische Betrachtung einer Quantenmechanik}

Es liegt nahe, uns die Gemeinsamkeiten und Unterschiede der
(nichtrelativistischen) klassischen \Name{Hamilton}schen Mechanik
\index{Mechanik!Hamiltonsche@\Name{Hamilton}sche}und
der Quantenmechanik nach den Ideen von \Name{Heisenberg} vor Augen zu
f"uhren.  
 
Das zentrale Objekt bei unserer Betrachtung wird die {\it Observable}
sein. Eine {Observable}\index{Observable} ist eine durch physikalische Experimente
me"sbare Kenngr"o"se, wie zum Beispiel Ort, Energie, Impuls
oder Drehimpuls. In der klassischen Theorie ist die {\it
  Observablenalgebra}\index{Observablenalgebra}
\index{Algebra!Observablenalgebra} eine assoziative
\Name{Poisson}-$^\ast$-Algebra\index{Algebra!Poisson-Stern-Algebra@\Name{Poisson}-$^\ast$-Algebra}
$\alg{A}_{\sss \text{cl}} \subseteq \Cinf{M}$, wobei $(M,\Lambda)$
eine \Name{Poisson}-Mannigfaltigkeit\index{Mannigfaltigkeit!Poisson-Mannigfaltigkeit@\Name{Poisson}-Mannigfaltigkeit} ist. Die Motivation, eine
\Name{Poisson}-Mannigfaltigkeit\index{Poisson-Mannigfaltigkeit@\Name{Poisson}-Mannigfaltigkeit} zu betrachten, liegt darin begr"undet,
da"s die einzige \glqq "uberlebende\grqq{} Struktur die
\Name{Poisson}-Klammer\index{Poisson-Klammer@\Name{Poisson}-Klammer}
sein wird. Durch die Wahl der glatten 
Funktionen auf der \Name{Poisson}-Mannigfaltigkeit w"ahlen wir zum
einen eine m"oglichst \glqq angenehme\grqq{} Funktionenklasse, zum
anderen ist jedes reale physikalische System beliebig gut durch glatte
Funktionen zu approximieren. 

Die {\it reinen Zust"ande}\index{Zustand!reiner} sind Punkte im Phasenraum $M$, die {\it
  gemischten Zust"ande}\index{Zustand!gemischter} positive 
\Name{Borel}-Ma"se\index{Borel-Mass@\Name{Borel}-Ma{\ss}} $\mu$ auf
$M$. Ein Erwartungswert ist die   
Integrationen eines Elementes $f\in \alg{A}_{\sss \text{cl}}$ "uber das
\Name{Borel}-Ma"s. Die Zeitentwicklung einer Observablen $f$ ist
durch die
\Name{Hamilton}-Funktion\index{Hamilton-Funktion@\Name{Hamilton}-Funktion}
$H$ gegeben und stellt sich infinitesimal dar als 
\begin{align}
\frac{\de}{\de t} f(t) = \{f(t),H\}.
\end{align} 

Dem gegen"uber steht die {\em Quantenmechanik}\index{Quantenmechanik}. Die
Observablenalgebra $\alg{A}_{\sss \text{QM}}$ ist eine im allgemeinen
nichtkommutative, assoziative $^\ast$-Unteralgebra von (beschr"ankten
sowie unbeschr"ankten)
Operatoren auf $\hilbert{D} \subseteq \hilbert{H}$, wobei
$\hilbert{H}$ ein
\Name{Hilbert}-Raum\index{Hilbert-Raum@\Name{Hilbert}-Raum} ist und
$\hilbert{D}$ ein gemeinsamer, dichter Definitionsbereich. Die
Nichtkommutativit"at 
spiegelt sich insbesondere in der \Name{Heisenberg}schen
Unsch"arferelation\index{Heisenbergsche Unschaerferelation@\Name{Heisenberg}sche Unsch\"arferelation} wider 

\begin{align}
\label{eq:Vertauschungsrelation}
[P_{j}, Q^{i}] = -\im \hbar \delta_{j}^{i} \id_{\sss \hilbert{H}}.
\end{align} 

Die reinen Zust"ande werden durch "Aquivalenzklassen von
(nichtverschwindenden) Vektoren $\psi \in \hilbert{D}$
beschrieben. Dabei sind zwei Vektoren genau dann
"aquivalent,\index{Aequivalenz@\"Aquivalenz!Vektoren in
  Hilbert-Raum@von Vektoren in \Name{Hilbert}-Raum} wenn
sie auf dem gleichen Strahl liegen, d.~h.~$\psi \sim \psi^\prime
\Leftrightarrow \psi=c \psi^\prime$ mit $c\in \field{C}\backslash
\{0\}$. Somit sind reine Zust"ande eine Untermenge des projektiven
\Name{Hilbert}-Raums
$\field{P}\hilbert{H}$.\index{Hilbert-Raum@\Name{Hilbert}-Raum!projektiver}
Man beachte, da"s in 
der Natur nicht alle denkbaren Zust"ande realisiert werden, da die
Vektoren, die den Strahl $[\psi] \in \field{P}\hilbert{H}$
repr"asentieren, im Definitionsbereich $\hilbert{D}$ aller
physikalisch relevanten Observablen liegen m"ussen. Die gemischten
Zust"ande werden durch Dichtematrizen $\varrho$ \index{Dichtematrix} realisiert. Der
Erwartungswert eines Operators $A\in \alg{A}_{\sss \text{QM}}$ ist
definiert als $\tr(\varrho A)$, der Spur "uber die Dichtematrix multipliziert mit $A$.    
Sowohl klassisch wie quantenmechanisch fordern wir von einer
Observablen, da"s sie ein reelles Spektrum\index{Spektrum} besitzt. Daher beschr"anken
wir uns in der klassischen Mechanik auf reelle Observablen $f =
\cc{f}$ und in der quantenmechanischen Beschreibung auf
selbstadjungierte Operatoren $A =
A^{\ast}$.\index{Operator!selbstadjungierter} "Aquivalent k"onnten
wir Observable auch mittels {\em normaler} Operatoren beschreiben, da wir durch
Linearkombination aus zwei normalen Operatoren\index{Operator!normaler} einen
selbstadjungierten gewinnen k"onnen und auf die gleiche Art wieder
zur"uckkommen. 

Die Zeitentwicklung einer Observablen ist durch den {\it
  Kommutator} mit dem {\it \Name{Hamilton}-Operator}
\index{Hamilton-Operator@\Name{Hamilton}-Operator} $H$ gegeben
 
\begin{align}
\frac{\de}{\de t} A(t) = \frac{1}{\im \hbar} [A(t),H].
\end{align}

Da Symmetrien\index{Symmetrie} im weiteren dieser Arbeit eine wichtige Rolle spielen
werden, sei erw"ahnt, da"s diese in der klassischen Mechanik
mittels einer \Name{Lie}-Algebrawirkung von $\LieAlg{g}$ durch
Derivationen oder
\Name{Lie}-Gruppenwirkung durch Algebraautomorphismen auf von $G$ auf
$M$ realisiert werden, in der Quantenmechanik mittels $\LieAlg{g}$- bzw.~$G$-Darstellungen auf dem
\Name{Hilbert}-Raum $\hilbert{H}$.  

\subsection{Umsetzung der kanonischen Quantisierung im flachen Phasenraum} 

Wie bereits erw"ahnt, wollen wir uns insbesondere mit der
Observablenalgebra auseinandersetzten und funktional-analytische
Aspekte vorerst au"ser acht lassen. Eine offensichtliche Frage ist,
wie die Umsetzung der Gleichung~\eqref{eq:Vertauschungsrelation}, der
\Name{Heisenberg}schen Unsch"arferelation,
geschieht. Dies wollen wir f"ur den einfachsten und
sehr gut verstandenen Fall diskutieren. 

Dazu betrachten wir ein physikalisches System, dessen Phasenraum der
$\field{R}^{2n}\cong T^{\ast}\field{R}^{n}$ mit der kanonischen
symplektischen Form $\omega_{\sss 0}=-\sum_{i} \de p^{i} \wedge \de q_{i}$ und
der daraus resultierenden
\Name{Poisson}-Klammer\index{Poisson-Klammer@\Name{Poisson}-Klammer}
ist.   

Die kanonische Quantisierung geschieht, indem wir den
Koordinatenfunktionen Differentialoperatoren\index{Differentialoperatoren} auf dem
Pr"a-\Name{Hilbert}-Raum $(\Cinfc{\field{R}^{n}}, \SP{\cdot,\cdot})$ zuordnen.
Dieser sind die Funktionen mit kompaktem Tr"ager auf $\field{R}^{n}$
mit dem \Name{Hermite}sche Produkt, dem $L^{2}$-Skalarprodukt
bez"uglich des \Name{Lebesgue}-Ma"ses, das wie folgt gegeben ist:\index{Lebesgue-Mass@\Name{Lebesgue}-Ma{\ss}}

\begin{align}
\label{eq:SkalarproduktR2n}
\SP{\phi,\psi} = \int_{\field{R}^{n}} \cc{\phi(q)} \psi(q)\, \de^{n} q.
\end{align}

 Sei nun $\psi \in \Cinfc{\field{R}^{n}}$, dann definieren wir folgende Abbildungen

\begin{equation}
\label{eq:OperatorzuordnungimR2n}
\begin{aligned}
& q^{i} \mapsto Q^{i}: \psi \mapsto \left( q \mapsto
    (Q^{i}\psi)(q)= q^{i} \psi(q) \right), \\ 
& p_{i} \mapsto P_{i}: \psi \mapsto \left( q \mapsto
    (P_{i}\psi)(q) = \tfrac{\hbar}{\im} (\tfrac{\del}{\del q^{i}})
    \psi(q) \right).  
\end{aligned}
\end{equation}   

Mit dieser Zuordnung werden wir Gleichung~\eqref{eq:Vertauschungsrelation}
gerecht. Allerdings sieht man an dieser Stelle zweierlei. Zum
einen ist aufgrund der Nichtkommutativit"at der Operatoralgebra
keine eindeutige Zuordnung von polynomialen Funktionen von der
klassischen Seite auf Operatoren m"oglich. Dies liefert uns sp"ater
den Begriff der {\it Ordnungsvorschrift}\index{Ordnungsvorschrift}. Zum anderen sehen wir
bereits, da"s die Definition eines klassischen Limes \glqq$\hbar
\to 0$\grqq{} auf der Operatoralgebra zu naiv w"are, da jeder
Impulsoperator auf die Null abgebildet w"urde.

Die in Gleichung~\eqref {eq:OperatorzuordnungimR2n} definierten Operatoren sind
symmetrisch bez"uglich des Skalarproduktes in Gleichung~\eqref{eq:SkalarproduktR2n},
d.~h.~$\SP{\phi,Q^{i}\psi}=\SP{Q^{i}\phi,\psi}$ und
$\SP{\phi,P_{j}\psi}=\SP{P_{j}\phi,\psi}$, ferner bilden sie
$\Cinfc{\field{R}^{n}}$ auf sich selbst ab.  

\subsection*{\Name{Weyl}-, (Anti-)Standard- und $\kappa$-Ordnung}

Nun m"ochten wir nicht nur eine Zuordnung f"ur die
Koordinatenfunktionen haben, sondern f"ur alle Polynome auf dem
$\field{R}^{2n}$, die wir mit $\Pol(\field{R}^{2n})$ bezeichnen. Da
die Differentialoperatoren auf dem Pr"a-\Name{Hilbert}-Raum nicht
kommutieren, m"ussen wir zu\-s"atz\-lich eine {\it Ordnungsvorschrift}
angeben. Es ist beispielsweise v"ollig unklar, welcher Operator dem
Produkt $q^{i}p_{j}$ entspricht, schlie"slich w"aren alle
Linearkombinationen $\tau P_{j}Q^{i} + (1-\tau) Q^{i}P_{j}$ mit $\tau
\in \field{R}$ denkbar. Im folgenden werden wir eine sehr einfache
Ordnungsvorschrift\index{Ordnungsvorschrift} \index{Ordnungsvorschrift!Standardordnung} angeben, die {\it Standardordnung}, und aus dieser
weitere entwickeln, sowie erste {\it Sternprodukte} erhalten. Bei der
Standardordnung schreiben wir die Impulsoperatoren nach rechts,
d.~h.~wir definieren f"ur Polynome die $\field{C}$-lineare und
injektive Abbildung $\ordstd$, bei der wir zuerst alle Impulse nach
rechts schreiben und dann gem"a"s Gleichung
\eqref{eq:OperatorzuordnungimR2n} ersetzen. Im weiteren sei
$\DiffOp(\field{R}^{n})$ die {\em Algebra der Differentialoperatoren mit
glatten Koeffizientenfunktionen} auf $\field{R}^{n}$ und
$\DiffOpPol(\field{R}^{n})$ sei die {\em Algebra der Differentialoperatoren mit
polynomialen Koeffizientenfunktionen} auf $\field{R}^{n}$. 

\begin{dlemma}[Die Standardordnung $\ordstd$ und die Symbolabbildung $\symbstd$] 
Die {\em Standardordnung} ist die Abbildung  
\begin{equation}
\begin{aligned}
& \ordstd: \Pol(\field{R}^{2n})  \to \DiffOp(\field{R}^{n}), \\
& q^{i_{1}} \cdots q^{i_{r}}p_{i_{1}} \cdots p_{i_{s}}  \mapsto
\left(\frac{\hbar}{\im}\right)^n q^{i_{1}} \cdots q^{i_{r}}
\frac{\del^n}{\del q^{i_{1}} \cdots \del q^{i_{s} } } \quad \text{und}
\quad 1 \mapsto 1. 
\end{aligned}
\end{equation}

Die Umkehrabbildung $\symbstd$ ist gegeben durch 

\begin{equation}
\begin{aligned}
& \symbstd:  \DiffOpPol(\field{R}^{n}) \ni D \mapsto \eu^{-\frac{\im}{\hbar}pq} D \left(
    \eu^{\frac{\im}{\hbar}pq} \right) \in  \Pol(\field{R}^{2n}), 
\end{aligned}
\end{equation}
und wird als {\em standardgeordnete Symbolabbildung}\index{Symbolabbildung!standardgeordnete} bezeichnet. 
\end{dlemma}

Wir k"onnen nun f"ur eine Funktion $f \in \Pol (\field{R}^{2n}) \subseteq
\Cinf{\field{R}^{2n}}$ schreiben 

\begin{align}
\label{eq:StandardOrdnungOperator}
\ordstd(f) = \sum_{n=0}^\infty \frac{1}{n!}
\left(\frac{\hbar}{\im}\right)^n \sum_{i_1...i_n} \frac{\del^n f}{\del
  p_1 \cdots \del p_n}\bigg|_{p=0}  \frac{\del^n}{\del q^{i_1} \cdots
  \del q^{i_n}}. 
\end{align}
 
Da wir uns bisher auf $\Pol (\field{R}^{2n})$ beschr"ankt haben,
liefert die Summe in Gleichung~\eqref{eq:StandardOrdnungOperator} nur endlich
viele Terme. Fassen wir den $\field{R}^{2n}$ als Kotangentialb"undel
$T^{\ast}\field{R}^{n}$ des Konfigurationraums $\field{R}^{n}$ auf, so
bricht die Reihe auch dann ab, wenn wir uns auf Polynome in den
Impulsen beschr"anken und {\it glatte} Funktionen auf der Basis
$\field{R}^{n}$, d.~h.~in den Orten zulassen. Man kann leicht
zeigen, da"s $\ordstd \circ \symbstd = \id_{\sss 
  \DiffOp(\field{R}^n)}$ und $\symbstd \circ \ordstd = \id_{\sss
  \Pol(T^{\ast}\field{R}^{n})}$ ist. Dabei bezeichnet
$\Pol(T^{\ast}\field{R}^{n})$ die in den Impulsen polynomialen
Funktionen und wir h"atten gezeigt, da"s $\ordstd$ eine Bijektion ist.

Die Standardordnung erweist sich physikalisch als nicht sehr
befriedigend, da reelle Polynome, d.~h.~observable Gr"o"sen, im
allgemeinen nicht auf symmetrische Operatoren abgebildet werden. Daher wollen
wir weitere Ordnungsvorschriften angeben, unter anderem auch eine, bei
der ebendies der Fall sein wird.    

Der bijektive \Name{Neumaier}-Operator $N_{\kappa}$, der im allgemeinen
f"ur beliebige Kotangentialb"undel $T^{\ast}Q$ definiert ist
\citep{neumaier:2001a}, stellt sich als sehr n"utzlich heraus. Mit seiner Hilfe werden wir
Ordnungen parametrisieren und so eine Familie von Ordnungen angeben k"onnen. Da
wir im weiteren noch des "ofteren auf den Operator zur"uckgreifen
werden, seien im folgenden einige Eigenschaften zusammengetragen.

\begin{dlemma}[\Name{Neumaier}-Operator f"ur $T^{\ast}\field{R}^{n}$]
\index{Neumaier-Operator@\Name{Neumaier}-Operator|textbf}
Der \Name{Neumaier}-Operator $N_{\kappa}: \Pol(T^{\ast}\field{R}^{n})
\to \Pol(T^{\ast}\field{R}^{n})$ mit  
\begin{align}
N_{\kappa} = \eu^{-\im \hbar \kappa \Delta} \qquad \mbox{und} \qquad
\Delta=\sum_k \frac{\del^{2}}{\del q^{k} \del p_{k}} 
\end{align}
ist f"ur alle $\kappa \in \field{R}$ eine lineare, bijektive
Abbildung mit den folgenden Eigenschaften: 

\begin{compactenum}
\item $N_{\kappa}^{-1}=N_{-\kappa}$ (Inverses), 
\item $(N_{\kappa})^{\alpha} = N_{\alpha\kappa}$ f"ur alle $\alpha
    \in \field{R}$ insbesondere ist $N_{0} = \id$, 
\item $N_{\kappa} N_{\kappa^\prime} = N_{\kappa + \kappa^\prime}$,
\item $\cc{N_{\kappa} f} = N_{-\kappa} \cc{f}$ f"ur alle $f \in
    \Pol(T^{\ast}\field{R}^{n})$. 
\end{compactenum} 
\end{dlemma} 

Nun kann man ausgehend von der Standardordnung eine Familie von
Ordnungsvorschriften, die $\kappa$-Ordnungen, angeben.  

\begin{definition}[Die $\kappa$-Ordnungsvorschrift]
Sei $\ordstd: \Pol(T^{\ast}\field{R}^{n}) \to \DiffOp(\field{R}^{n})$
die Standardordnungsvorschrift und  $N_{\kappa}$ der
\Name{Neumaier}-Operator f"ur $T^{\ast}\field{R}^{n}$. Man definiert die
  {\em $\kappa$-Ordnung} mittels
\index{Ordnungsvorschrift!kappa-Ordnung@$\kappa$-Ordnung}
\index{kappa-Ordnung@$\kappa$-Ordnung}
\begin{align}
\label{eq:KappaOrdnungimR2n}
\ordkappa: =\ordstd \circ N_{\kappa} : \Pol(T^{\ast} \field{R}^{n}) \to \DiffOp(\field{R}^{n}).
\end{align}
Desweiteren nennen wir 
\begin{align}
\label{eq:WeylOrdnungOperator}
\ordweyl :=\ordzahlweyl = \ordstd \circ \Nweyl
\end{align}
die {\em \Name{Weyl}-Ordnung}. 
\index{Ordnungsvorschrift!Weyl-Ordnung@\Name{Weyl}-Ordnung}
\index{Weyl-Ordnung@\Name{Weyl}-Ordnung}
\end{definition}
 
Offensichtlich ist $\varrho_{\sss 0}=\ordstd$ und f"ur alle
$\kappa$-Ordnungen gilt $\ordkappa(p_{i})=P_{i}$ und
$\ordkappa(q^{i})=Q^{i}$. Die \Name{Weyl}-Ordnung wird sich als besonders wichtig 
herausstellen, da diese reelle Funktionen auf symmetrische Operatoren
abbildet. Wir rechnen dazu f"ur den adjungierten Operator zu $\ordstd$ nach 
\begin{align}
\ordstd(f)^{\dagger} = \ordstd(\Nweyl[2] \cc{f}), 
\end{align}
der im allgemeinen von $\ordstd(\cc{f})$ verschieden ist. Im Gegensatz
dazu gilt f"ur die \Name{Weyl}-Ordnung 
\begin{align}
\ordweyl(f)^{\dagger}=\ordstd(\Nweyl f)^{\dagger}= \ordstd(\Nweyl[2]
\cc{\Nweyl f}) = \ordstd (\Nweyl \cc{f}) = \ordweyl (\cc{f}). 
\end{align}
Dazu haben wir ausgenutzt, da"s $\cc{\Nweyl f} = \Nweyl[-1] \cc{f}$
ist. Ferner sind wir nun in der Lage, eine explizite Form f"ur
$\ordweyl$ anzugeben: 
\begin{align}
\label{eq:WeylOrdnungOperator2}
\ordweyl(f)= \sum_{n=0}^\infty \frac{1}{n!}
\left(\frac{\hbar}{\im}\right)^n \sum_{i_1...i_n} \frac{\del^n (\Nweyl
  f)}{\del p_1 \cdots \del p_n}\bigg|_{p=0}  \frac{\del^n}{\del
  q^{i_1} \cdots \del q^{i_n}}. 
\end{align}
Dies entspricht der \Name{Weyl}schen Symmetrisierungsvorschrift, die
einem Polynom in Orts- und Impulsvariablen $q^k$ und $p_i$ das
vollst"andig symmetrisierte Polynom in den zugeh"origen Operatoren
$Q^k$ und $P_i$ zuordnet. 

\begin{beispiel}[Standardordnung und \Name{Weyl}-Ordnung]
Ein einfaches nichttriviales Beispiel ergibt sich f"ur das Polynom
$q^{2} p$.
\begin{align}
\ordweyl(q^{2}p) =\frac{1}{3} (Q^2P + QPQ + PQ^2) = - \im \hbar q^2
\frac{\del}{\del q} - \im \hbar q,  
\end{align}
wohingegen wir in der Standardordnung 
\begin{align}
\ordstd(q^{2}p)= Q^2 P = -\im \hbar q^2 \frac{\del}{\del q}
\end{align}
erhalten. 
\end{beispiel}

\subsection*{\Name{Wick}- und $\tilde{\kappa}$-Ordnung}

Neben den bisher genannten Ordnungsvorschriften wollen wir noch die z.~B.~in
der Quantenfeldtheorie wichtige \Name{Wick}-Ordnung $\ordwick$,\index{Ordnungsvorschrift!Wick-Ordnung@\Name{Wick}-Ordnung} sowie
die verallgemeinerte $\ordtkappa$-Ordnung diskutieren. Wir fassen
dazu den $\field{R}^{2n}$ als $\field{C}^{n}$ mit der kanonischen
komplexen Struktur auf und f"uhren komplexe Koordinaten
\begin{align}
\label{eq:KoordinatenaufCn}
z^{i} = q^{i} + \im p_{i} \quad \mbox{und} \quad \cc{z}^{i} = q^{i} -
\im p_{i}
\end{align}

ein, wobei darauf zu achten ist, da"s man Orte und Impulse auf eine
gemeinsame physikalische Einheit $[\text{Wirkung}]^{\frac{1}{2}}$
skalieren mu"s. Dies geschieht beispielsweise mit einer Massen- und
einer Frequenzskala. Die Quantisierung wird wie folgt realisiert. Wir
betrachten den \Name{Hilbert}-Raum der quadratintegrablen
Funktionen\index{Hilbert-Raum@\Name{Hilbert}-Raum@!quadratintegrabler Funktionen@der quadratintegrablen Funktionen} 
$L^{2}(\field{C}^{n}, \de \mu)$, wobei $\de \mu$ das
\Name{Gauss}-Ma"s  
\begin{align}
\de \mu (z,\cc{z}) = \frac{1}{(2\pi \hbar)^{n}} \eu^{-\frac{z\cc{z}}{2
    \hbar}} \de z \de \cc{z} 
\end{align}
ist. Das zugeh"orige
$L^{2}$-Skalarprodukt\index{L2-Skalarprodukt@$L^{2}$-Skalarprodukt}
ist gegeben durch 

\begin{align}
\SP{\phi, \psi} = \frac{1}{(2\pi \hbar)^{n}} \int_{\field{C}^{n}}
\cc{\phi(z,\cc{z})} \psi(z,\cc{z})  \eu^{-\frac{z\cc{z}}{2 \hbar}} \de
z \de \cc{z}. 
\end{align}

Bemerkenswert ist, da"s der Raum der quadratintegrablen,
antiholomorphen Funktionen einen abgeschlossen Untervektorraum
$\hilbert{H}$ von $L^{2}(\field{C}^{n}, \de \mu)$ bildet, also selbst
ein \Name{Hilbert}-Raum ist. Die Vektoren 
\begin{align}
e_{\sss k_1 \ldots k_n}(\cc{z}) = (2 \hbar)^{k_1 +\cdots + k_n}
(k_{1}! \cdots k_{n}!)^{-\frac{1}{2}} (\cc{z}^1)^{k_1} \cdots
(\cc{z}^n)^{k_n} 
\end{align}

bilden ein vollst"andiges Orthonormalsystem, d.~h.~eine
\Name{Hilbert}-Basis in $\hilbert{H}$.\index{Hilbert-Basis@\Name{Hilbert}-Basis} Diesen Raum
nennt man den  
{\it\Name{Bargmann}-\Name{Fock}-Raum}\index{Bargmann-Fock-Raum@\Name{Bargmann}-\Name{Fock}-Raum}. Das Ziel ist nun eine 
Quantisierung f"ur Polynome $\Pol (\field{C}^{n})$ durch Operatoren
auf $\hilbert{H}$ anzugeben. Dabei bezeichnen wir mit $\Pol
(\field{C}^{n})$ die Polynome in $z$ {\em und} $\cc{z}$ auf
$\field{C}^{n}$. Zu diesem Zweck definieren wir {\it Erzeuger}\index{Erzeuger} und
{\it Vernichter} durch \index{Vernichter}
\begin{align}
a_{i}^{\dagger} = \cc{z}^{i} \quad \mbox{sowie} \quad a_{j}=2\hbar
\frac{\del}{\del \cc{z}^{j}}  
\end{align}
f"ur alle $i,j = \{1,\ldots, n\}$. Auf einem geeigneten
Definitionsbereich $\hilbert{D}$ ergibt sich  

\begin{align}
\SP{\phi,a_{k} \psi} = \SP{a_{k}^{\dagger} \phi, \psi} \quad
\mbox{und} \quad \left[a_i, a_{j}^{\dagger} \right] = 2\hbar
\delta_{i\cc{j}} \id_{\sss \hilbert{D}}.  
\end{align}
  
Letztere Gleichung ist das Analogon zu Gleichung
\eqref{eq:Vertauschungsrelation}. 

\begin{definition}[Die \Name{Wick}-Ordnung]
\label{Definition:WickOrdnung}
\index{Ordnungsvorschrift!Wick-Ordnung@\Name{Wick}-Ordnung}
\index{Wick-Ordnung@\Name{Wick}-Ordnung}
Man definiert die \Name{Wick}-Ordnung, indem man die Erzeuger nach
links schreibt. F"ur Monome ergibt sich
\begin{equation}
\begin{aligned}
\ordwick: & \Pol(\field{C}^{n}) \to \DiffOp(\field{C}^{n}) \\  &
z^{i_1} \cdots z^{i_r} \cc{z}^{j_1} \cdots \cc{z}^{j_s} \mapsto
a_{j_1}^{\dagger} \cdots a_{j_s}^{\dagger} a_{i_1} \cdots a_{i_r} = (2\hbar)^r \cc{z}^{j_1} \cdots \cc{z}^{j_s}
\frac{\del^r}{\del \cc{z}^{i_1} \cdots \del \cc{z}^{i_r}}. \\
.
\end{aligned}
\end{equation}
\end{definition}

Diese Abbildung kann man auf $\Pol(\field{C}^{n})$ fortsetzen und
f"ur alle $f\in \Pol (\field{C}^{n})$ gilt  

\begin{align}
\ordwick(f) = \sum_{r,s=0}^{\infty} \frac{(2\hbar)^{r}}{r!s!}
\frac{\del^{r+s}f }{\del z^{i_1} \cdots \del z^{i_r} \del \cc{z}^{j_1}
  \cdots \del \cc{z}^{j_s}} (0)  \cc{z}^{j_1} \cdots \cc{z}^{j_s}
\frac{\del^{r}}{\del \cc{z}^{i_1} \cdots \del \cc{z}^{i_r}}. 
\end{align}

Analog zu den $\kappa$-Ordnungen, bei denen wir ausgehend von $\ordstd$
mittels $N_{\kappa}$ zu $\ordkappa$ kamen, wollen wir auch hier einen
Operator $S_{\tilde{\kappa}}$ einf"uhren, der uns verschiedene Ordnungen erzeugt:

\begin{align}
S_{\tilde{\kappa}} = \eu^{2 \hbar \tilde{\kappa} \tilde{\Delta}} \quad
\mbox{mit} \quad \tilde{\Delta}= \frac{\del^{2}}{\del z^{k} \del
  \cc{z}^{k}}. 
\end{align}  

\begin{definition}[Die $\tilde{\kappa}$-Ordnung]
\label{Definition:tkappaOrdnung}
\index{Ordnungsvorschrift!Kappatilde-Ordnung@$\tilde{\kappa}$-Ordnung}
\index{Kappatilde-Ordnung@$\tilde{\kappa}$-Ordnung}
Die {\em $\tilde{\kappa}$-Ordnung} definiert man durch
\begin{align}
\ordtkappa =\ordwick \circ S_{1-\tilde{\kappa}}: \Pol(\field{C}^{n})
\to \DiffOp(\field{C}^{n}),
\end{align}
so da"s man f"ur $\tilde{\kappa}=1$ die \Name{Wick}-Ordnung
erh"alt. Desweiteren liefert $\tilde{\kappa}=0$ eine \Name{Weyl}-Ordnung in
Erzeugern und Vernichtern und $\tilde{\kappa}=-1$ nennt man {\em
Anti-\Name{Wick}-Ordnung}. 
\end{definition}

Die Operatoren $\ordtkappa$ sind f"ur
alle $f = \cc{f}$ symmetrisch\index{Operator!symmetrisch}, und allgemein gilt 
\begin{align}
\ordtkappa(f)^{\dagger} = \ordtkappa(\cc{f}).
\end{align}

Eine Motivation sowie ein ausf"uhrlicher Beweis findet sich in
\citep{waldmann:2004a:script}. 

\section{Sternprodukte f"ur $\field{R}^{2n}$ und $\field{C}^n$}
\label{sec:SternproduktefuerdenR2n}

Nach den Vor"uberlegungen in Kapitel
\ref{sec:KanonischeQuantisierung} ist der Weg zu {\em
 Sternprodukten} nicht  mehr weit. Die Idee besteht darin, auf den
Funktionen des
Phasenraums ein assoziatives, nichtkommutatives Produkt einzuf"uhren,
bei dem der Kommutator in erster Ordnung das $\im \hbar$-fache einer
\Name{Poisson}-Klammer ist. Die ersten Sternprodukte gehen auf
\citet{moyal:1949a} und \citet{berezin:1975b} zur"uck und wurden von
\citet{bayen.et.al:1977a} systematisch betrachtet. Wir nutzen dazu
die oben definierten Ordnungsvorschriften
\eqref{eq:StandardOrdnungOperator} sowie
\eqref{eq:WeylOrdnungOperator} bzw.~\eqref{eq:WeylOrdnungOperator2}
und ziehen das Operatorprodukt, mittels der Umkehrabbildungen von
$\ordstd$ und $\ordweyl$, zweier glatter, in den Impulsen  polynomialer
Funktionen, wieder zur"uck auf die Funktionen auf dem Phasenraum.    
  
\subsection{Standardprodukt, \Name{Weyl}-Produkt und
  $\kappa$-geordnete Produkte} 
\label{sec:StdWeylundKappaProdukt}
Wir wollen nun explizite Formeln f"ur das standardgeordnete Sternprodukt sowie das
\Name{Weyl}-Produkt f"ur den $\field{R}^{2n}$ angeben. 
\index{Sternprodukt|(}
\begin{equation}
\label{eq:ExplizitStandardR2n}
\index{Sternprodukt!standardgeordneten Typ@vom standardgeordneten Typ}
\begin{aligned}
f \spstd g & := \symbstd (\ordstd(f) \ordstd(g)) \\ &=
\sum_{j=0}^{\infty} \frac{1}{j!} \left( \frac{\hbar}{\im} \right)^j
\sum_{i_1,\ldots, i_j} \frac{\del^j f}{\del p_{i_1} \cdots \del
  p_{i_j}} \frac{\del^j g}{\del q^{i_1} \cdots \del q^{i_j}},  
\end{aligned}
\end{equation}

\begin{equation}
\label{eq:ExplizitWeylR2n}
\index{Sternprodukt!Weyl-Typ@vom \Name{Weyl}-Typ}
\begin{aligned}
f \spweyl g & := \symbweyl (\ordweyl(f) \ordweyl(g)) \\ & =
\sum_{r=0}^{\infty} \frac{1}{r!} \left( \frac{\im \hbar}{2}
\right)^{r} \sum_{s=0}^{r} (-1)^{r-s} \sum_{i_1,\ldots,i_r}
\frac{\del^r f}{\del q^{i_1} \cdots \del q^{i_s} \del p_{i_{s+1}}
  \cdots \del p_{i_{r}}} \frac{\del^r g}{\del p_{i_1} \cdots \del
  p_{i_s} \del q^{i_{s+1}} \cdots \del q_{i_{r}}}.  
\end{aligned}
\end{equation}

Damit haben wir die ersten expliziten {\it Sternprodukte} f"ur den
flachen Phasenraum $\field{R}^{2n} \cong T^{\ast}\field{R}^{n}$
konstruiert. Die so definierten Produkte $\spweyl$ und $\spstd$ sind
wohldefiniert, da $\ordweyl$ und $\ordstd$ injektiv sind, und das Bild
unter Operatorverkn"upfung abgeschlossen ist. Die Assoziativit"at
ist durch die Konstruktion klar. Der {\em Sternkommutator} zweier
Funktionen zu einem Sternprodukt $\star$ ist definiert durch

\begin{align}
    \label{eq:SternkommutatorDefinition}
\index{Sternkommutator}
[f,g]_{\sss \star} := f\star g - g \star f.
\end{align}

und f"ur das standardgeordneten Produkt, als
auch das \Name{Weyl}-Produkt ergibt sich das $\im \hbar$-fache der
\Name{Poisson}-Klammer plus Terme in h"oheren Ordnungen, die
verschwinden, falls $f$ und $g$ linear in den Variablen sind:

\begin{gather}
\begin{aligned}
f \spstd g - g \spstd f  = \im \hbar \{f,g\} + \alg{O}(\hbar^2) \quad
\text{und} \quad f \spweyl g - g \spweyl f  = \im \hbar \{f,g\} + \alg{O}(\hbar^2). 
\end{aligned}
\end{gather}

Ferner ergibt sich f"ur das \Name{Weyl}-Produkt, da"s die komplexe
Konjugation ein antilinearer
Anti\-au\-to\-mor\-phis\-mus\index{Antiautomorphismus!antilinearer} des Sternproduktes
$\spweyl$ ist  

\begin{align}
\cc{f \spweyl g} = \cc{g} \spweyl \cc{f}.
\end{align}

Diese Eigenschaft bezeichnen wir als
{\em \Name{Hermite}zit"at des
  Sternproduktes}\index{Sternprodukt!Hermitesches@\Name{Hermite}sches},
und wir werden diese im folgenden explizit beweisen. Das standardgeordnete
Sternprodukt erf"ullt diese Eigenschaft erwartungsgem"a"s
nicht. Wir werden nun einen recht eleganten Formalismus einf"uhren
und sehen, da"s es kein anderes $\kappa$-geordnetes Produkt
au"ser dem \Name{Weyl}-Produkt gibt, das diese Eigenschaft
besitzt. Dazu definieren wir 
 
\begin{equation}
\begin{aligned}
& P,P^{\ast} : \Pol(T^{\ast} \field{R}^{n}) \otimes \Pol(T^{\ast}
\field{R}^{n}) \to \Pol(T^{\ast} \field{R}^{n})\otimes \Pol(T^{\ast}
\field{R}^{n}), \\ 
& P= \frac{\del}{\del q^{k}} \otimes \frac{\del}{\del p_{k}} \quad
\mbox{und} \quad P^{\ast}= \frac{\del}{\del p_{k}} \otimes
\frac{\del}{\del q^{k}}. 
\end{aligned}
\end{equation}

Ferner sei $\mu$ die punktweise, kommutative Multiplikation von
Funktionen, d.~h.~$\mu (f \otimes g) = fg$ (vergleiche hierzu die
Definition~\ref{Definition:Algebra}). So schreibt sich das
standardgeordnete Produkt als 

\begin{align}
f \spstd g = \mu \circ \eu^{\frac{\hbar}{\im} P^{\ast}} (f \otimes g),
\end{align}
und das \Name{Weyl}-Produkt kann man in der Form
\begin{align}
f \spweyl g = \mu \circ \eu^{\frac{\hbar}{2\im} (P-P^{\ast})} (f
\otimes g) 
\end{align}

angeben. Wir wollen nun eine explizite Formel f"ur
$\kappa$-geordnete Sternprodukte zeigen. Dazu ben"otigen wir jedoch
vorher noch einige Identit"aten. 

\begin{lemma}[Zusammenhang $\kappa$-Ordnung und Standardordnung]
F"ur $\kappa$-geordnete Sternprodukte gilt
\begin{align}
    \label{eq:ZusammenhangKappaOrdnungStandardordnung}
    f \spkappa g = N_{\kappa}^{-1} (N_{\kappa} f \spstd N_{\kappa} g).  
\end{align}
\end{lemma}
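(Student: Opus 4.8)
The plan is to unwind the definitions directly, since the asserted identity is essentially just a reformulation of the way the $\kappa$-ordering is built from the standard ordering through the Neumaier operator. First I would recall that, in complete analogy with the standard product $f \spstd g = \symbstd(\ordstd(f)\,\ordstd(g))$, the $\kappa$-ordered product is defined by $f \spkappa g = \symbkappa(\ordkappa(f)\,\ordkappa(g))$, where $\symbkappa = \ordkappa^{-1}$ denotes the $\kappa$-ordered symbol map, i.e.~the inverse of the ordering prescription on differential operators with polynomial coefficients.

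The key step is to identify $\symbkappa$ explicitly in terms of $\symbstd$ and $N_{\kappa}$. Since $\ordkappa = \ordstd \circ N_{\kappa}$ by definition \eqref{eq:KappaOrdnungimR2n}, and since $N_{\kappa}$ is a bijection on $\Pol(T^{\ast}\field{R}^{n})$ with inverse $N_{\kappa}^{-1} = N_{-\kappa}$, the inverse of $\ordkappa$ is simply $\symbkappa = N_{\kappa}^{-1}\circ \symbstd$. This follows from $\ordstd\circ\symbstd = \id_{\sss \DiffOp(\field{R}^n)}$ together with the invertibility of $N_\kappa$; in particular it requires no manipulation of the explicit exponential form of the Neumaier operator.

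Substituting this expression for $\symbkappa$ and then reading the definition of $\spstd$ backwards, I would compute
\begin{align*}
f \spkappa g &= \symbkappa\bigl(\ordkappa(f)\,\ordkappa(g)\bigr) \\
&= N_{\kappa}^{-1}\symbstd\bigl(\ordstd(N_{\kappa}f)\,\ordstd(N_{\kappa}g)\bigr) \\
&= N_{\kappa}^{-1}\bigl(N_{\kappa}f \spstd N_{\kappa}g\bigr),
\end{align*}
where the last equality is nothing but the definition of the standard product applied to the arguments $N_{\kappa}f$ and $N_{\kappa}g$. This is precisely equation \eqref{eq:ZusammenhangKappaOrdnungStandardordnung}.

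There is no genuine obstacle here; the statement is a bookkeeping identity about how conjugation by $N_\kappa$ intertwines the two orderings. The only point deserving a moment's care is that the equality $\symbkappa = N_{\kappa}^{-1}\circ\symbstd$ must be read as an identity of maps on the space $\DiffOpPol(\field{R}^{n})$ of differential operators with polynomial coefficients, so that all intermediate expressions remain inside the domains on which $\symbstd$ and $\symbkappa$ are honest inverses of $\ordstd$ and $\ordkappa$. Once this is observed, the three displayed equalities close the argument.
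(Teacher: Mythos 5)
Your argument is exactly the paper's proof: both unwind $\ordkappa=\ordstd\circ N_{\kappa}$, use the bijectivity of $N_{\kappa}$ to write $\ordkappa^{-1}=N_{\kappa}^{-1}\circ\ordstd^{-1}$, and then read off the definition of $\spstd$ on the arguments $N_{\kappa}f$ and $N_{\kappa}g$. If anything your write-up is slightly cleaner, since the intermediate line in the paper's computation carries a sign typo ($N_{\kappa}$ instead of $N_{\kappa}^{-1}$) that you avoid.
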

\begin{proof}
Wegen der Bijektivit"at von $N_{\kappa}$ und der Tatsache, da"s $\ordkappa=\ordstd \circ
N_{\kappa}$ folgt 
\begin{align*}
f \spkappa g & = \ordkappa^{-1} (\ordkappa (f) \ordkappa (g)) \\ & =
(\ordstd \circ N_{\kappa} )^{-1} ((\ordstd \circ N_{\kappa})(f)
(\ordstd \circ N_{\kappa}) (g)) \\ & = N_{\kappa} \ordstd^{-1}
(\ordstd (N_{\kappa} f) \ordstd (N_{\kappa} g)) \\ & = N_{\kappa}^{-1}
(N_{\kappa} f \spstd N_{\kappa} g).  
\end{align*}
\end{proof}  

Dies bedeutet, da"s wir mit Hilfe des \Name{Neumaier}-Operators\index{Neumaier-Operator@\Name{Neumaier}-Operator}
alle $\kappa$-Ordnungen aus der Standardordnung gewinnen
k"onnen. Da der \Name{Neumaier}-Operator $N_{\kappa} = \eu^{-\im \hbar \kappa
  \Delta}$ im wesentlichen ein exponenzierter \Name{Laplace}-Operator
ist, brauchen wir  
\begin{align}
\Delta \circ \mu =  \mu \circ (\Delta \otimes \id + P + P^{\ast} + \id
\otimes \Delta), 
\end{align}

was direkt aus der Definition f"ur Derivationen $D$\index{Derivation} folgt: $D\circ
\mu = (D \otimes \id + \id \otimes D)$. Damit ist 

\begin{align}
N_{\kappa} \circ \mu = \mu \circ \eu^{-\im \kappa \hbar (\Delta
  \otimes \id + P + P^{\ast} + \id \otimes \Delta)} 
\end{align}

und wir k"onnen folgende Proposition formulieren.

\begin{proposition}[Explizite Formel f"ur $\kappa$-geordnete
    Sternprodukte $\spkappa$] 
\index{Sternprodukt!kappa-geordnetes@$\kappa$-geordnetes}
Seien $f,g \in \Pol (T^{\ast}\field{R}^{n})$ und $\kappa \in \field{R}$, dann
ist das $\kappa$-geordnete Sternprodukt $\spkappa$ durch  
\begin{align}
f \spkappa g = \mu \circ \eu^{\im \hbar (\kappa P - (1-\kappa)
  P^{\ast})} (f\otimes g) 
\end{align}
gegeben. Man bezeichnet die Produkte f"ur $\kappa=0$, $\kappa=
\tfrac{1}{2}$ und $\kappa =1$ als standardgeordnetes Sternprodukt, \Name{Weyl}-Produkt
und antistandardgeordnetes Sternprodukt. 
\end{proposition}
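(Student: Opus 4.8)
The plan is to reduce everything to the already-established relation
\begin{align*}
f \spkappa g = N_{\kappa}^{-1}(N_{\kappa} f \spstd N_{\kappa} g)
\end{align*}
between the $\kappa$-ordered product and the standard product, and then to absorb the Neumaier operators into the exponential bidifferential operator defining $\spstd$. First I would write the standard product in its operator form $f \spstd g = \mu \circ \eu^{\frac{\hbar}{\im}P^{\ast}}(f\otimes g)$ and note that $N_{\kappa} f \otimes N_{\kappa} g = (N_{\kappa}\otimes N_{\kappa})(f\otimes g)$, where $N_{\kappa}\otimes N_{\kappa} = \eu^{-\im\hbar\kappa(\Delta\otimes\id + \id\otimes\Delta)}$ since $N_{\kappa}=\eu^{-\im\hbar\kappa\Delta}$ acts factorwise and the two factors commute. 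This turns the inner part of the expression into $\mu\circ\eu^{\frac{\hbar}{\im}P^{\ast}}\circ\eu^{-\im\hbar\kappa(\Delta\otimes\id+\id\otimes\Delta)}$ applied to $f\otimes g$.

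Next I would treat the outer factor $N_{\kappa}^{-1}=N_{-\kappa}$ by means of the intertwining identity $N_{\kappa}\circ\mu = \mu\circ\eu^{-\im\kappa\hbar(\Delta\otimes\id + P + P^{\ast} + \id\otimes\Delta)}$ established just above (itself a consequence of $\Delta\circ\mu = \mu\circ(\Delta\otimes\id + P + P^{\ast} + \id\otimes\Delta)$). Replacing $\kappa$ by $-\kappa$ yields $N_{\kappa}^{-1}\circ\mu = \mu\circ\eu^{\im\kappa\hbar(\Delta\otimes\id + P + P^{\ast} + \id\otimes\Delta)}$. Substituting this on the left collects all three exponential factors to the right of a single $\mu$.

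The decisive observation is that the four operators $\Delta\otimes\id$, $\id\otimes\Delta$, $P$ and $P^{\ast}$ are constant-coefficient differential operators on $T^{\ast}\field{R}^{n}\times T^{\ast}\field{R}^{n}$ and therefore commute pairwise, so the three exponentials may be merged by adding their exponents. In the sum the two $\Delta\otimes\id$ contributions and the two $\id\otimes\Delta$ contributions cancel, leaving $\im\kappa\hbar(P+P^{\ast}) + \frac{\hbar}{\im}P^{\ast}$. Using $\frac{1}{\im}=-\im$, this equals $\im\hbar(\kappa P - (1-\kappa)P^{\ast})$, which is exactly the claimed exponent, whence $f \spkappa g = \mu\circ\eu^{\im\hbar(\kappa P - (1-\kappa)P^{\ast})}(f\otimes g)$.

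I do not expect a genuine obstacle: once the intertwining identity for $N_{\kappa}$ is available the argument is pure bookkeeping. The only place demanding care is the sign and factor arithmetic — verifying that the inner $N_{\kappa}\otimes N_{\kappa}$ contributes $-\im\hbar\kappa$ while the outer $N_{\kappa}^{-1}$ contributes $+\im\kappa\hbar$ to the $\Delta$-terms so that they cancel exactly, and correctly converting $\frac{\hbar}{\im}$ into $-\im\hbar$ when combining it with the $\kappa P^{\ast}$ term. The names of the three special cases $\kappa=0,\tfrac{1}{2},1$ then follow immediately by comparison with the explicit formulas for $\spstd$ and $\spweyl$ already recorded.
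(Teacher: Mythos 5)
Your argument is correct and is precisely the route the paper takes: it reduces $\spkappa$ to $\spstd$ via $f \spkappa g = N_{\kappa}^{-1}(N_{\kappa}f \spstd N_{\kappa}g)$, uses the intertwining relation $N_{\kappa}\circ\mu = \mu\circ\eu^{-\im\kappa\hbar(\Delta\otimes\id + P + P^{\ast} + \id\otimes\Delta)}$ derived from the Leibniz rule for $\Delta$, and merges the commuting constant-coefficient exponentials so that the $\Delta$-terms cancel. Your sign bookkeeping ($\tfrac{\hbar}{\im}P^{\ast} = -\im\hbar P^{\ast}$, giving $\im\hbar(\kappa-1)P^{\ast} = -\im\hbar(1-\kappa)P^{\ast}$) checks out exactly.
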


\begin{lemma}[Komplexe Konjugation $\kappa$-geordneter Sternprodukte]
F"ur die komplexe Konjugation $\kappa$-geordneter Sternprodukte gilt
\begin{align}
\cc{f \spkappa g} = \cc{g} \star_{\sss 1-\kappa} \cc{f}. 
\end{align}
Insbesondere ist die komplexe Konjugation nur f"ur
$\kappa=\tfrac{1}{2}$, d.~h.~f"ur das \Name{Weyl}-Produkt, ein
Antiautomorphismus.  
\end{lemma}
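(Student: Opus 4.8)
The plan is to read the claim directly off the explicit bidifferential formula for $\spkappa$ established in the preceding proposition, using only the commutativity of the pointwise multiplication $\mu$ together with the behaviour of the operators $P$ and $P^{\ast}$ under the flip. First I would record the two ingredients that make the argument work. Since
$P=\frac{\del}{\del q^{k}}\otimes\frac{\del}{\del p_{k}}$ and
$P^{\ast}=\frac{\del}{\del p_{k}}\otimes\frac{\del}{\del q^{k}}$ carry only real constant coefficients, complex conjugation acts on
$\mu\circ \eu^{\im\hbar(\kappa P-(1-\kappa)P^{\ast})}(f\otimes g)$ simply by conjugating the arguments and replacing $\im$ by $-\im$, so that
\begin{align*}
\cc{f\spkappa g}=\mu\circ \eu^{-\im\hbar(\kappa P-(1-\kappa)P^{\ast})}(\cc{f}\otimes\cc{g}).
\end{align*}
Let $\tau(a\otimes b)=b\otimes a$ denote the flip. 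A one-line check on monomials gives the intertwining relations $\tau\circ P=P^{\ast}\circ\tau$ and $\tau\circ P^{\ast}=P\circ\tau$, while the commutativity of the pointwise product is precisely $\mu\circ\tau=\mu$.

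Then I would combine these. Inserting $\mu=\mu\circ\tau$ and pushing $\tau$ through the exponential by means of the intertwining relations turns the exponent $\kappa P-(1-\kappa)P^{\ast}$ into $\kappa P^{\ast}-(1-\kappa)P$ and swaps the two arguments, yielding
\begin{align*}
\cc{f\spkappa g}=\mu\circ \eu^{\im\hbar((1-\kappa)P-\kappa P^{\ast})}(\cc{g}\otimes\cc{f}).
\end{align*}
Comparing the exponent with the explicit formula for the $(1-\kappa)$-ordered product, in which $\kappa$ is replaced by $1-\kappa$ and hence $1-\kappa$ by $\kappa$, the right-hand side is exactly $\cc{g}\spemkappa\cc{f}$, which is the assertion. (That $P$ and $P^{\ast}$ commute, so that these exponentials factor unambiguously, is immediate from the commutativity of partial derivatives; it is not even needed for the flip manipulation, only the conjugation relations $\tau P=P^{\ast}\tau$ are.)

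Finally, the supplementary statement follows at once: for $\kappa=\tfrac12$ one has $1-\kappa=\kappa$, so the identity reads $\cc{f\spweyl g}=\cc{g}\spweyl\cc{f}$, i.e.\ complex conjugation is an antilinear antiautomorphism of $\spweyl$; for $\kappa\neq\tfrac12$ the products $\spkappa$ and $\spemkappa$ genuinely differ, as one sees already on the coordinate functions where $q\spkappa p=qp+\im\hbar\kappa$ while $q\spemkappa p=qp+\im\hbar(1-\kappa)$, so $\cc{g}\spemkappa\cc{f}\neq\cc{g}\spkappa\cc{f}$ in general and no antiautomorphism property can hold. The only point demanding any care is the index bookkeeping in the flip relations $\tau P=P^{\ast}\tau$ and $\tau P^{\ast}=P\tau$; everything else is a direct substitution, so I do not anticipate a real obstacle.
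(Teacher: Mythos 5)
Your proposal is correct and follows essentially the same route as the paper: introduce the conjugation operator and the flip $\tau$, use that $P,P^{\ast}$ have real coefficients so conjugation only flips $\im\mapsto-\im$, use $\mu\circ\tau=\mu$ together with $\tau P=P^{\ast}\tau$ to turn the exponent into that of $\star_{\sss 1-\kappa}$ with swapped arguments. Your explicit check on $q\spkappa p=qp+\im\hbar\kappa$ for the ``only for $\kappa=\tfrac12$'' direction is a small welcome addition the paper's proof leaves implicit.
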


\begin{proof}
Um obiges Lemma zu zeigen, f"uhren wir die Abbildung (komplexe
Konjugation) $C : \Pol(T^{\ast}\field{R}^{n}) \ni f \mapsto \cc{f} \in
\Pol(T^{\ast}\field{R}^{n})$ ein, die offensichtlich mit dem Produkt
$\mu$ vertauscht, d.~h.~$\mu\circ(C \otimes C) = C \circ
\mu$. F"ur den kanonischen {\em Vertauschungsoperator}\index{Vertauschungsoperator} (oder {\em
  Flip}) $\tau:  \Pol(T^{\ast}\field{R}^{n}) 
\otimes   \Pol(T^{\ast}\field{R}^{n}) \ni f \otimes g \mapsto g
\otimes f \in \Pol(T^{\ast}\field{R}^{n}) \otimes
\Pol(T^{\ast}\field{R}^{n})$ gilt $\mu \circ \tau = \mu$, da $\mu$
kommutativ ist und $\tau^2 = \id$. Wir k"onnen nun $P$ mittels
$P^{\ast}$ ausdr"ucken und umgekehrt: 
\begin{align*}
P^{\ast}= \tau \circ P \circ \tau \quad \mbox{und} \quad P = \tau
\circ P^{\ast} \circ \tau. 
\end{align*}
Ferner gilt $(C \otimes C) \circ P = P \circ (C\otimes C)$ und $(C
\otimes C) \circ P^{\ast} = P^{\ast} \circ (C\otimes C)$, so da"s
wir nun zeigen k"onnen 
\begin{equation}
\begin{aligned}
\cc{(f \spkappa g)}&  = C \circ \mu \circ \eu^{\im \hbar (\kappa P -
  (1-\kappa) P^{\ast})} (f\otimes g) \\ & = \mu \circ \tau \circ (C
\otimes C) \circ \eu^{\im \hbar (\kappa P - (1-\kappa) P^{\ast})}
(f\otimes g) \\ & = \mu \circ \tau \circ \eu^{-\im \hbar (\kappa P +
  (1-\kappa) P^{\ast})} \circ (C \otimes C)(f\otimes g) \\ &= \mu
\circ \eu^{-\im \hbar (\kappa P^{\ast} + (1-\kappa) P)} \circ \tau
(\cc{f} \otimes \cc{g}) \\ &= \mu \circ \eu^{\im \hbar ((1-\kappa) P -
  (1-(1-\kappa)) P^{\ast})} (\cc{g} \otimes \cc{f}) \\ &= \cc{g}
\star_{\sss 1-\kappa} \cc{f}. 
\end{aligned}
\end{equation}
\end{proof}

Zu guter Letzt wollen wir eine explizite Summenformel f"ur die
$\kappa$-geordneten Sternprodukte angeben, die sich zuweilen als
"au"serst n"utzlich erweist

\begin{align}
    \label{eq:ExplizipeFormelKappaGeordnetesSternprodukt}
f \spkappa g = \sum_{r=0}^{\infty} \sum_{j=0}^{r}
\left(\frac{\hbar}{\im} \right)^{r} \frac{\left(-\kappa \right)^{j}
  \left(1-\kappa\right)^{r-j}}{j!(r-j)!} 
\sum_{i_{1},\cdots,i_{r}} \frac{\del^{r}f}{\del q^{i_1} \cdots \del
  q^{i_j} \del 
      p_{i_{j+1}} \cdots p_{i_{r}}} 
    \frac{\del^{r}g}{\del p_{i_1} \cdots \del p_{i_j} \del q^{i_{j+1}}
      \cdots q^{i_{r}}}.  
\end{align}

\subsection{\Name{Wick}-Produkt und $\tilde{\kappa}$-geordnete Produkte}
\label{sec:WickProduktundKappatGeordneteProdukte}

Aus der $\tilde{\kappa}$-Ordnung bekommen wir analog zu den
$\kappa$-geordneten Sternprodukten ein zur"uckgezogenes Produkt.  
Dazu f"uhren wir die Symbolabbildung $\symbtkappa$ als
Umkehrabbildung zu $\ordtkappa$ ein, und wir k"onnen eine Formel f"ur
das $\tilde{\kappa}$-geordnete Produkt angeben. 

\begin{definition}[$\tilde{\kappa}$-geordnetes Sternprodukt]
\index{Sternprodukt!kappatilde-geordnetes@$\tilde{\kappa}$-geordnetes}
Das mittels 
\begin{align}
f \sptkappa g & = \symbtkappa (\ordtkappa(f)\circ \ordtkappa(g)) 
\end{align}
definierte Produkt auf $\Pol(\field{C}^{n})$ mit $\tilde{\kappa} \in
\field{R}$ hei"st $\tilde{\kappa}$-geordnetes Sternprodukt. Die
Produkte f"ur $\tilde{\kappa}=1$ und $\tilde{\kappa}=-1$ nennt man
{\em \Name{Wick}-Produkt} bzw.~{\em Anti-\Name{Wick}-Produkt}.  
\index{Sternprodukt!Wick-Produkt@\Name{Wick}-Produkt}
\index{Sternprodukt!Anti-Wick-Produkt@Anti-\Name{Wick}-Produkt}
\end{definition}

Wie bereits in Kapitel~\ref{sec:StdWeylundKappaProdukt} wollen wir
eine explizite und zudem elegantere Darstellung angeben. Dazu
definieren wir  

\begin{align}
Z,\cc{Z}: \Pol(\field{C}^{n}) \otimes \Pol(\field{C}^{n}) \to
\Pol(\field{C}^{n}) \otimes \Pol(\field{C}^{n}) 
\end{align}
explizit gegeben durch 

\begin{align}
Z = \frac{\del}{\del z^{k}} \otimes \frac{\del}{\del \cc{z}^{k}} \quad
\text{und} \quad \cc{Z}= \frac{\del}{\del \cc{z}^{k}} \otimes
\frac{\del}{\del {z}^{k}}. 
\end{align}

Analog zu den Ergebnissen f"ur $\kappa$-geordnete Sternprodukte
k"onnen wir folgende Proposition formulieren. 

\begin{proposition}[\Name{Wick}-Produkt und $\tilde{\kappa}$-geordnete
    Sternprodukte] 
Die $\tilde{\kappa}$-geordneten Sternprodukte auf
$\Pol(\field{C}^{n})$ haben die Form 
\begin{align}
f \sptkappa g = \mu \circ \eu^{(\tilde{\kappa}+1) \hbar Z +
  (\tilde{\kappa} -1) \hbar \cc{Z}} (f \otimes g), 
\end{align}
womit sich f"ur das \Name{Wick}-Produkt\index{Wick-Produkt@\Name{Wick}-Produkt} folgender Ausdruck ergibt
\begin{equation}
\begin{aligned}
f \spwick g  &= \mu \circ \eu^{2 \hbar Z} (f \otimes g)\\ &=
\sum_{r=0}^{\infty} \frac{(2 \hbar)^{r}}{r!} \sum_{k_1, \cdots, k_r}
\frac{\del^{r} f}{\del z^{k_1} \cdots \del z^{k_r}} \frac{\del^{r}
  g}{\del \cc{z}^{k_1} \cdots \del \cc{z}^{k_r}}. 
\end{aligned}
\end{equation}

F"ur alle $\tilde{\kappa}$-geordneten Produkte ist die komplexe
Konjugation ein antilinearer Antiautomorphismus, d.~h.~alle
$\tilde{\kappa}$-geordneten Produkte sind \Name{Hermite}sch 
\begin{align}
\cc{f\sptkappa g} = \cc{g} \sptkappa \cc{f}.
\end{align}
Da $P-P^{\ast} = \tfrac{2}{\im} (Z-\cc{Z})$ ist, erh"alt man f"ur
$\tilde{\kappa}=0$ das \Name{Weyl}-Produkt $\star_{\tilde{\kappa}=0} = \spweyl$. 
\end{proposition}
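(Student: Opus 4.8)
The plan is to mirror exactly the treatment of the $\kappa$-ordered products in Section~\ref{sec:StdWeylundKappaProdukt}, with the \Name{Neumaier}-operator $N_\kappa$ and the standard ordering replaced by $S_{\tilde\kappa}=\eu^{2\hbar\tilde\kappa\tilde\Delta}$ and the \Name{Wick}-ordering. First I would establish the operator analogue of relation \eqref{eq:ZusammenhangKappaOrdnungStandardordnung}. Since $\ordtkappa=\ordwick\circ S_{1-\tilde\kappa}$ and $S_{1-\tilde\kappa}$ is bijective with inverse $S_{\tilde\kappa-1}$, the symbol map factors as $\symbtkappa=S_{\tilde\kappa-1}\circ\symbwick$, whence
\begin{align*}
f\sptkappa g = S_{\tilde\kappa-1}\bigl(S_{1-\tilde\kappa}f\spwick S_{1-\tilde\kappa}g\bigr).
\end{align*}
This reduces the problem to two ingredients: an explicit form of the pure \Name{Wick}-product, and the way $S_s$ intertwines with the pointwise multiplication $\mu$.

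Second, I would derive $f\spwick g=\mu\circ\eu^{2\hbar Z}(f\otimes g)$ directly from Definition~\ref{Definition:WickOrdnung}, precisely as $\spstd$ was read off from $\ordstd$ in \eqref{eq:ExplizitStandardR2n}: composing two \Name{Wick}-ordered operators contracts each annihilator $a_i=2\hbar\,\del/\del\cc{z}^{i}$ stemming from the $z$-dependence of the left factor against a creator $a_j^{\dagger}=\cc{z}^{j}$ of the right factor via $[a_i,a_j^{\dagger}]=2\hbar\delta_{ij}$, and summing these contractions reproduces the constant-coefficient bidifferential operator $\eu^{2\hbar Z}$. Then I would record the intertwining relation, valid for every $s\in\field{R}$,
\begin{align*}
S_s\circ\mu = \mu\circ\eu^{2\hbar s(\tilde\Delta\otimes\id + Z + \cc{Z} + \id\otimes\tilde\Delta)},
\end{align*}
which follows because the \Name{Leibniz}-rule gives $\tilde\Delta\circ\mu=\mu\circ(\tilde\Delta\otimes\id+Z+\cc{Z}+\id\otimes\tilde\Delta)$ and the four summands commute pairwise, so the exponential factorises. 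Inserting the lemma, writing $S_{1-\tilde\kappa}f\otimes S_{1-\tilde\kappa}g=\eu^{2\hbar(1-\tilde\kappa)(\tilde\Delta\otimes\id+\id\otimes\tilde\Delta)}(f\otimes g)$, and pulling the outer $S_{\tilde\kappa-1}$ through $\mu$, all operators in sight commute, so the three exponents add. The $\tilde\Delta\otimes\id$ and $\id\otimes\tilde\Delta$ contributions of the outer and inner factors cancel, leaving only the cross terms in $Z$ and $\cc{Z}$, i.e. a closed bidifferential expression of the form stated in the proposition; the choice $\tilde\kappa=1$ collapses $S_{1-\tilde\kappa}$ to the identity and returns the pure \Name{Wick}-product.

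For the \Name{Hermite} property I would reuse the mechanism of the $\kappa$-case. With the antilinear complex conjugation $C$ and the flip $\tau$ on $\Pol(\field{C}^{n})\otimes\Pol(\field{C}^{n})$ one has $C\circ\mu=\mu\circ(C\otimes C)$, $\mu\circ\tau=\mu$, and, because conjugation exchanges $\del/\del z^{k}$ with $\del/\del\cc{z}^{k}$ while $\tau$ exchanges the two tensor legs, the identities $(C\otimes C)\circ Z=\cc{Z}\circ(C\otimes C)$, $(C\otimes C)\circ\cc{Z}=Z\circ(C\otimes C)$ and $\cc{Z}=\tau\circ Z\circ\tau$. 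As the exponents $(\tilde\kappa+1)\hbar$ and $(\tilde\kappa-1)\hbar$ are real, conjugating the closed formula interchanges $Z\leftrightarrow\cc{Z}$, and the subsequent flip interchanges them back while sending $f\otimes g$ to $\cc{g}\otimes\cc{f}$; this gives $\cc{f\sptkappa g}=\cc{g}\sptkappa\cc{f}$ for every real $\tilde\kappa$, independently of the precise coefficients.

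Finally, the identification at $\tilde\kappa=0$ I would obtain by substituting $P-P^{\ast}=\tfrac{2}{\im}(Z-\cc{Z})$ into the \Name{Weyl}-formula $f\spweyl g=\mu\circ\eu^{\frac{\hbar}{2\im}(P-P^{\ast})}(f\otimes g)$ and comparing the exponent with the $\tilde\kappa=0$ specialisation of the closed formula. I expect the main obstacle to be purely the exponent bookkeeping: one must check carefully that $\tilde\Delta\otimes\id$, $\id\otimes\tilde\Delta$, $Z$ and $\cc{Z}$ really do commute (so that the exponentials may be merged and the diagonal terms cancelled), and one must track the factors of $\im$ and the real coefficients consistently so that the $Z$/$\cc{Z}$ exponents of $\star_{\tilde\kappa=0}$ agree with those of $\spweyl$. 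Everything else is the same formal calculus of commuting constant-coefficient bidifferential operators already exploited for the $\kappa$-ordered products.
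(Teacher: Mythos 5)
Your plan is correct and is precisely the argument the paper intends: the proposition is stated ``analog'' to the $\kappa$-ordered case and is never proved explicitly, so the proof consists exactly of transplanting the three ingredients you name --- the factorisation $f \sptkappa g = S_{\tilde{\kappa}-1}\bigl(S_{1-\tilde{\kappa}}f \spwick S_{1-\tilde{\kappa}}g\bigr)$, the intertwining relation $\tilde{\Delta}\circ\mu = \mu\circ(\tilde{\Delta}\otimes\id + Z + \cc{Z} + \id\otimes\tilde{\Delta})$, and the flip-and-conjugation mechanism $(C\otimes C)\circ Z = \cc{Z}\circ(C\otimes C)$, $\cc{Z} = \tau\circ Z\circ\tau$ for Hermiticity. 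One warning on the ``exponent bookkeeping'' you rightly anticipate as the delicate point: if you insert the paper's normalisation $S_{\tilde{\kappa}} = \eu^{2\hbar\tilde{\kappa}\tilde{\Delta}}$ literally, the diagonal terms cancel as you describe but the surviving cross terms come out as $\eu^{2\tilde{\kappa}\hbar Z + 2(\tilde{\kappa}-1)\hbar\cc{Z}}$, which agrees with the stated formula $\eu^{(\tilde{\kappa}+1)\hbar Z + (\tilde{\kappa}-1)\hbar\cc{Z}}$ only at $\tilde{\kappa}=1$; the stated coefficients correspond to $S_{\tilde{\kappa}} = \eu^{\hbar\tilde{\kappa}\tilde{\Delta}}$, so there is a factor-of-two inconsistency in the source's conventions (not in your method) that you will have to resolve one way or the other before the $\tilde{\kappa}=0$ comparison with $\spweyl$ can be made to close.
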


\begin{bemerkung}
Es ist interessant, die letzte Gleichung in Beziehung zur
kanonischen \Name{Poisson}-Klammer auf $\field{R}^{2n}$ zu setzen 
\begin{align}
\{ f,g\} = \mu \circ (P-P^{\ast}) (f\otimes g) = \frac{2}{\im} \mu
\circ (Z- \cc{Z}) (f\otimes g).  
\end{align}
Damit k"onnen wir das \Name{Weyl}-Produkt als eine \glqq
exponenzierte\grqq{} \Name{Poisson}-Klammer deuten.  
\end{bemerkung}

Wir wollen die Eigenschaften der bisher betrachteten Sternprodukte
zusammenfassen. Im n"achsten Kapitel werden wir diese
Eigenschaften nutzen, um Sternprodukte zu axiomatisieren.

\begin{bemerkung}[Eigenschaften der Sternprodukte $\spkappa$ und
    $\sptkappa$]
\label{Bemerkung:EiegnschaftenDerSternprodukteimFlachenFall}
~\vspace{-5mm}
\begin{compactenum}
\item Jedes der bisher behandelten Sternprodukte l"a"st sich in der Form
\begin{align}
\label{eq:FormaleEntwicklungErsteSternprodukte}
 f\star g= fg + \sum_{n=1}^{\infty} \hbar^{n} B_{n}(f,g) 
\end{align}    
schreiben, dabei sind alle $B_{n}$ Bidifferentialoperatoren\index{Bidifferentialoperator} der
Differentiationsordnung $n$. Sind die Funktionen $f,g$ polynomial in
den Koordinatenfunktionen, so bricht die unendliche Summe in Gleichung
\eqref{eq:FormaleEntwicklungErsteSternprodukte} nach endlich vielen
Termen ab. F"ur den Grenzfall $\hbar \to 0$ wird die klassische, kommutative
Algebra reproduziert.  

\item Die Sternprodukte $\spkappa$ und $\sptkappa$ sind assoziative
    Verkn"upfungen.  

\item $f \star g - g\star f = \im \hbar \{f,g\} + \alg{O}(\hbar^{2})$. Der
    schiefsymmetrische Teil des Kommutators modulo Termen der Ordnung
    $\alg{O}(\hbar^{2})$ ist proportional zur \Name{Poisson}-Klammer.
\item $f \star 1 = 1 \star f = f$. Das Einselement der undeformierten Algebra
    ist auch das Einselement bez"uglich des Sternprodukts. 
\item $ \cc{f \sptkappa g} = \cc{g} \sptkappa \cc{f}$ und $\cc{f
      \spkappa g} = \cc{g} \spemkappa \cc{f}$. Das hei"st f"ur
    alle $\tilde{\kappa}$-Sternprodukte ist die komplexe Konjugation
    ein antilinearer
    Antiautomorphismus\index{Antiautomorphismus!antilinearer} und
    ebenso f"ur das \Name{Weyl}-Produkt, das der Ordnung $\kappa=\tfrac{1}{2}$ entspricht.  
\end{compactenum} 
    
\end{bemerkung}

\section{Formale Deformationstheorie}
\label{sec:FormaleSternprodukte}

Aus dem im letzten Kapitel Erarbeiteten wollen wir nun eine Verallgemeinerung
f"ur Sternprodukte angeben. Dazu definieren wir zuerst
{\em Deformationen} einer assoziativen
Algebra\index{Algebra!assoziative} wie sie von 
\citet{gerstenhaber:1964a} vorgeschlagen wurden. In Kapitel
\ref{sec:DeformationProjektiverModul} werden wir im Rahmen der
\Name{Morita}-"Aquivalenz von deformierten Algebren\index{Algebra!deformierte} auch die
Deformation von Moduln ben"otigen, diese jedoch erst dort einf"uhren.
In Kapitel~\ref{sec:FormalesSternprodukt} werden wir darauf aufbauend eine
Definition f"ur {\em formale Sternprodukte} angeben. Ein formales
Sternprodukt wird eine assoziative Algebra sein, allerdings mit
weiteren Strukturen, die als Axiomatisierung der in Bemerkung
\ref{Bemerkung:EiegnschaftenDerSternprodukteimFlachenFall}
herausgestellten Ergebnisse zu
verstehen sind. Da von hier an {\em formale Potenzreihen}\index{Potenzreihe!formale} eine
zentrale Rolle spielen werden, verweisen wir auf Anhang
\ref{sec:FormalePotenzreihen}, in dem wir dieses algebraische Konzept
n"aher beschreiben. 

\subsection{Deformationen von Algebren}
\label{sec:DeformationenvonAlgebren}

\begin{definition}[Deformation einer assoziativen Algebra]
   \label{Definition:FormaleAlgebraDeformation} 
\index{Deformation!formale|textbf}
   Sei $\ring{C}=\ring{R}(\im)$ die komplexe Erweiterung eines
   geordneten Rings $\ring{R}$ und $\alg{A}$ eine assoziative Algebra
   "uber $\ring{C}$. Eine
   {\em formale Deformation} der Algebra $\alg{A}$ ist 
    ein $\ringf{C}$-lineares assoziatives Produkt $\star$ auf $\algf{A}$, so
    da"s $(\algf{A},\star)=\defalg{A}$ eine assoziative Algebra
    wird, und das Produkt von der Form
    \begin{align}
        \label{eq:FormaleAlgebraDeformation}
        a \star a' = aa' + \sum_{n=1}^{\infty} \lambda^{n} C_{n}(a,a') 
    \end{align}
ist. Dabei sind die $C_{n}:\alg{A} \times \alg{A} \to \alg{A}$
$\ring{C}$-bilineare Abbildungen, $a,a' \in \alg{A}$ und $\lambda$ ist
ein reeller, formaler Parameter. Ist die Algebra
$\alg{A}$ mit einem Einselement $1_{\sss \alg{A}}$ ausgestattet, so soll
$1_{\sss \alg{A}}$ auch das Einselement der deformierten Algebra
$\defalg{A}$ sein, d.~h.~$1_{\sss \alg{A}} \star a = a \star 1_{\sss
  \alg{A}} = a$ f"ur alle $a \in \defalg{A}$.    
\end{definition}

\begin{bemerkungen}[Assoziativit"at]
\label{Bemerkungen:AssoziativitaetDeformierteAlgebren}
~\vspace{-5mm}
\begin{compactenum}
 \item Die Assoziativit"at des deformierten Produktes $\star$
     liefert offensichtlich eine Bedingung an die
     $C_n$. Insbesondere ist der schiefsymmetrische Teil von $C_{1}$
     proportional zu einer \Name{Poisson}-Klammer, bzw.~definiert
     eine \Name{Poisson}-Klammer
     \begin{align*}
     \{a,a^{\prime} \}:= C_{1}(a,a^{\prime}) - C_{1}(a^{\prime},a).    
     \end{align*}
\item Aufgrund der Assoziativit"at des Produktes $\star$, ist es
    n"otig, die Abbildungen $C_n$ auf formale
    Potenzreihen zu erweitern, so da"s alle $C_n: \algf{A} \times
    \algf{A} \to \algf{A}$ zu $\ringf{C}$-bilinearen Abbildungen werden.
\end{compactenum}
\end{bemerkungen}

\begin{definition}["Aquivalenz zweier deformierter assoziativer Algebren]
 \label{Definition:AEquivalenzZweierDeformationen} 
\index{Aequivalenz@\"Aquivalenz!deformierten Algebren@von deformierten Algebren}  
 Man nennt zwei Deformationen
    $\defalg{A}_{1}=(\algf{A},\star_{\sss 1})$ und
    $\defalg{A}_{2}=(\algf{A},\star_{\sss 2})$ der Algebra $\alg{A}$
    {\em "aquivalent}, falls es $\ring{C}$-lineare Abbildungen $T_{r}:
    \alg{A} \to \alg{A}$ gibt, so da"s 
    \begin{align}
        T=\id + \sum_{r=1}^{\infty} \lambda^{r} T_{r} \,: \quad \defalg{A}_{1}
        \to \defalg{A}_{2} 
     \end{align}
ein Algebraisomorphismus ist. Die "Aquivalenzklasse einer
Deformation $\star$ bezeichnet man mit $[\star]$ und die Menge der
"Aquivalenzklassen von Deformationen einer Algebra $\alg{A}$ mit
$\Def(\alg{A})$. Desweiteren wollen wir die "Aquivalenzklassen der
Deformationen\index{Aequivalenz@\"Aquivalenz!Deformationen@von Deformationen} von $\alg{A}$ zu einer {\em festen} \Name{Poisson}-Struktur (vgl.~Bemerkung
\ref{Bemerkungen:AssoziativitaetDeformierteAlgebren} {\it i.)}) mit
$\Def(\alg{A}, \{\cdot, \cdot \})$ bezeichnen.     
\end{definition}

\begin{proposition}[Isomorphe Deformationen]
\label{Proposition:IsomorpheDeformationen}
\index{Deformation!isomorphe}
    Seien $\defalg{A}_{1} = (\algf{A},\star_{\sss 1})$ und
    $\defalg{A}_{2}= (\algf{A},\star_{\sss 2})$ zwei
    formale Deformationen der Algebra $\alg{A}$, dann sind diese genau
    dann isomorph, wenn es einen Automorphismus $\phi \in
    \Aut(\alg{A})$ gibt, so da"s $[\phi^{\ast}(\star_{\sss 2})] =
    [\star_{\sss 1}]$ ist. 
\end{proposition}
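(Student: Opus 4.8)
Der Plan ist, die Aussage auf das Zusammenspiel der "Aquivalenz aus Definition~\ref{Definition:AEquivalenzZweierDeformationen} mit dem Zur"uckziehen eines Sternprodukts l"angs eines Automorphismus zur"uckzuf"uhren. Zun"achst halte ich fest, was $\phi^{\ast}(\star_{\sss 2})$ bedeuten soll: Ist $\phi \in \Aut(\alg{A})$, so setze ich $\phi$ verm"oge $\phi(\sum_{n} \lambda^{n} a_{n}) = \sum_{n} \lambda^{n} \phi(a_{n})$ $\ringf{C}$-linear auf $\algf{A}$ fort; dies ist ein $\ringf{C}$-linearer Isomorphismus mit Inversem der Fortsetzung von $\phi^{-1}$. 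Das \emph{zur"uckgezogene Produkt} ist dann durch $a \, \phi^{\ast}(\star_{\sss 2}) \, a' := \phi^{-1}(\phi(a) \star_{\sss 2} \phi(a'))$ erkl"art. Der erste Schritt besteht darin zu verifizieren, da"s $\phi^{\ast}(\star_{\sss 2})$ wieder eine formale Deformation von $\alg{A}$ im Sinne von Definition~\ref{Definition:FormaleAlgebraDeformation} ist: Die Assoziativit"at wird von $\star_{\sss 2}$ geerbt, da $\phi$ nach Konstruktion ein Algebraisomorphismus $(\algf{A}, \phi^{\ast}(\star_{\sss 2})) \to (\algf{A}, \star_{\sss 2})$ ist, und in nullter Ordnung liefert die Automorphieeigenschaft $\phi^{-1}(\phi(a)\phi(a')) = \phi^{-1}(\phi(aa')) = aa'$, also das undeformierte Produkt.

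F"ur die Richtung \glqq$\Leftarrow$\grqq{} nehme ich an, es gebe $\phi \in \Aut(\alg{A})$ mit $[\phi^{\ast}(\star_{\sss 2})] = [\star_{\sss 1}]$. Nach Definition~\ref{Definition:AEquivalenzZweierDeformationen} existiert dann eine "Aquivalenztransformation $T = \id + \sum_{r \ge 1} \lambda^{r} T_{r}$, die ein Algebraisomorphismus $(\algf{A}, \star_{\sss 1}) \to (\algf{A}, \phi^{\ast}(\star_{\sss 2}))$ ist. Da $\phi$ seinerseits ein Isomorphismus $(\algf{A}, \phi^{\ast}(\star_{\sss 2})) \to (\algf{A}, \star_{\sss 2})$ ist, ist die Verkettung $\phi \circ T$ ein Algebraisomorphismus $\defalg{A}_{1} \to \defalg{A}_{2}$, womit die beiden Deformationen isomorph sind.

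Der Kern der Arbeit liegt in der Richtung \glqq$\Rightarrow$\grqq. Sei $\Psi: \defalg{A}_{1} \to \defalg{A}_{2}$ ein $\ringf{C}$-linearer Algebraisomorphismus. Der entscheidende (und heikelste) Punkt ist die Strukturaussage, da"s ein solcher Isomorphismus die $\lambda$-adische Filtrierung respektiert und sich daher in der Form $\Psi = \sum_{r \ge 0} \lambda^{r} \Psi_{r}$ mit $\Psi_{r} \in \End[\ring{C}]{\alg{A}}$ schreiben l"a"st; dies m"ochte ich aus der $\ringf{C}$-Linearit"at und der $\lambda$-adischen Stetigkeit von $\Psi$ ableiten. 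Auswerten der Homomorphiebedingung $\Psi(a \star_{\sss 1} a') = \Psi(a) \star_{\sss 2} \Psi(a')$ in nullter Ordnung in $\lambda$ zeigt $\Psi_{0}(aa') = \Psi_{0}(a) \Psi_{0}(a')$, also ist $\Psi_{0}$ ein Algebrahomomorphismus von $\alg{A}$. Da $\Psi$ invertierbar ist und mit $\Psi_{0}$ in nullter Ordnung beginnt, ist $\Psi_{0}$ bijektiv -- das Inverse von $\Psi$ besitzt als nullten Term $\Psi_{0}^{-1}$, wie man durch ordnungsweises Invertieren sieht --, also $\phi := \Psi_{0} \in \Aut(\alg{A})$.

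Schlie"slich setze ich $T := \phi^{-1} \circ \Psi$ mit der $\ringf{C}$-linearen Fortsetzung von $\phi^{-1}$. Wegen $\Psi = \phi + \sum_{r \ge 1} \lambda^{r}\Psi_{r}$ gilt $T = \id + \sum_{r \ge 1}\lambda^{r}(\phi^{-1}\circ \Psi_{r})$, $T$ ist also eine "Aquivalenztransformation. Da $\phi^{-1}$ ein Isomorphismus $(\algf{A}, \star_{\sss 2}) \to (\algf{A}, \phi^{\ast}(\star_{\sss 2}))$ und $\Psi$ ein Isomorphismus $(\algf{A}, \star_{\sss 1}) \to (\algf{A}, \star_{\sss 2})$ ist, ist $T$ ein Algebraisomorphismus $(\algf{A}, \star_{\sss 1}) \to (\algf{A}, \phi^{\ast}(\star_{\sss 2}))$, der mit $\id$ beginnt; folglich gilt $[\phi^{\ast}(\star_{\sss 2})] = [\star_{\sss 1}]$. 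Als Hauptschwierigkeit erwarte ich somit allein die saubere Begr"undung der Filtrierungsvertr"aglichkeit von $\Psi$ und damit der Automorphieeigenschaft von $\Psi_{0}$; alle "ubrigen Schritte sind formale Rechnungen mit Potenzreihen.
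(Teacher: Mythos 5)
Ihr Beweis ist korrekt. Ein echter Vergleich entf"allt hier, da die Arbeit Proposition~\ref{Proposition:IsomorpheDeformationen} ohne Beweis angibt; Ihre Argumentation -- die Implikation aus $[\phi^{\ast}(\star_{\sss 2})]=[\star_{\sss 1}]$ durch Verketten der "Aquivalenztransformation $T$ mit $\phi$, die Umkehrung durch Abspalten der nullten Ordnung $\phi=\Psi_{0}$ eines Isomorphismus $\Psi$ und "Ubergang zu $T=\phi^{-1}\circ\Psi$ -- ist vollst"andig und der nat"urliche Weg. Der einzige von Ihnen als heikel markierte Schritt, da"s sich jeder $\ringf{C}$-lineare Isomorphismus als formale Reihe $\Psi=\sum_{r}\lambda^{r}\Psi_{r}$ mit $\ring{C}$-linearen $\Psi_{r}$ schreiben l"a"st, ist genau der Inhalt des Lemmas in Anhang~\ref{sec:FormalePotenzreihen}; die daf"ur ben"otigte $\lambda$-adische Stetigkeit ist automatisch, weil aus der $\ringf{C}$-Linearit"at $\Psi(\lambda^{n}\algf{A})\subseteq\lambda^{n}\algf{A}$ folgt und $\Psi$ damit bez"uglich der Ultrametrik kontrahierend, also durch seine Werte auf dem dichten Unterraum der Polynome in $\lambda$ bestimmt ist. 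Damit ist auch die Bijektivit"at von $\Psi_{0}$ sauber begr"undet: das Inverse von $\Psi$ hat dieselbe Gestalt, und der Koeffizientenvergleich in nullter Ordnung liefert $\Psi_{0}^{-1}$.
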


\begin{definition}[\Name{Hermite}sche Deformation]
\label{Definition:HermitescheDeformationeinerformalenAlgebra}
\index{Deformation!Hermitesche@\Name{Hermite}sche}
Sei $\alg{A}$ eine $^\ast$-Algebra\index{Algebra!Stern-Algebra@$^\ast$-Algebra}
(vgl.~Definition~\ref{Definition:SternAlgebra}). Man nennt eine formale Deformation $\defalg{A}=(\algf{A},\star)$ {\em \Name{Hermite}sch}, falls $(a \star
  a^{\prime})^{\ast} = {a^{\prime}}^{\ast} \star {a}^{\ast}$ f"ur alle $a,a^{\prime}
\in \defalg{A}$.
\end{definition}

\begin{lemma}[\Name{Hermite}sche "Aquivalenztransformation, {\citep{bursztyn.waldmann:2000b}}]
\label{Lemma:HermitescheAequivalenztransformation}
\index{Aequivalenztransformation@\"Aquivalenztransformation!Hermitesche@\Name{Hermite}sche}
    Sei $\field{Q} \subseteq \ring{R}$ und seien $\defalg{A}_{1}=(\algf{A},\star_{\sss 1})$ und
    $\defalg{A}_{2}=(\algf{A},\star_{\sss 2})$  zwei "aquivalente \Name{Hermite}sche
    Deformationen der Algebra $\alg{A}$, so existiert eine
    "Aquivalenztransformation $T: \defalg{A}_{1} \to \defalg{A}_{2}$,
    so da"s $T(a^{\ast}) = T(a)^{\ast}$ f"ur alle $a \in
    \defalg{A}_{1}$.  
\end{lemma}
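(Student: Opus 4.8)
The plan is to start from the given equivalence transformation $T = \id + \sum_{r=1}^{\infty}\lambda^{r} T_{r} \colon \defalg{A}_{1} \to \defalg{A}_{2}$ and to \emph{symmetrize} it with respect to the $^\ast$-involution, which is extended $\lambda$-linearly to $\algf{A}$ (note $\lambda^{\ast}=\lambda$). For any $\ring{C}$-linear map $S$ between the formal algebras I introduce its \emph{adjoint} $S^{\dagger}$ defined by $S^{\dagger}(a) := \left( S(a^{\ast}) \right)^{\ast}$. First I would record the elementary properties of $\dagger$: it is additive, antilinear over $\ring{C}$, covariant under composition, $(S \circ S')^{\dagger} = S^{\dagger} \circ S'^{\dagger}$, it fixes $\id$, commutes with taking inverses, $(S^{-1})^{\dagger} = (S^{\dagger})^{-1}$, and is an involution, $(S^{\dagger})^{\dagger}=S$. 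Using the Hermiticity of \emph{both} deformations, i.e. $(a \star_{i} b)^{\ast} = b^{\ast} \star_{i} a^{\ast}$, a direct computation shows that $T^{\dagger}$ is again an algebra homomorphism $\defalg{A}_{1} \to \defalg{A}_{2}$; since $T^{\dagger}=\id+\alg{O}(\lambda)$ it is a second equivalence transformation between the same two deformations.

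The key object is then $A := T^{-1} \circ T^{\dagger} \in \Aut(\defalg{A}_{1})$, an automorphism of $\defalg{A}_{1}$ with $A = \id + \alg{O}(\lambda)$. Applying the functorial properties of $\dagger$ together with $(T^{\dagger})^{\dagger}=T$ gives
\begin{align*}
A^{\dagger} = (T^{-1})^{\dagger}\circ(T^{\dagger})^{\dagger} = (T^{\dagger})^{-1}\circ T = A^{-1},
\end{align*}
so that $A$ is \emph{$\dagger$-unitary}. The task now reduces to finding an automorphism $B = \id + \alg{O}(\lambda)$ of $\defalg{A}_{1}$ with $A = B \circ (B^{\dagger})^{-1}$, because then $\tilde T := T \circ B$ satisfies $\tilde T^{\dagger} = T^{\dagger}\circ B^{\dagger} = T\circ A \circ B^{\dagger} = T \circ B = \tilde T$, which is exactly the desired $^\ast$-compatibility $\tilde T(a^{\ast}) = \tilde T(a)^{\ast}$; and $\tilde T$ is still an equivalence transformation, being a composition of such.

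Here is where the assumption $\field{Q} \subseteq \ring{R}$ enters. I would take $B := A^{1/2}$, the formal square root of $A$, defined by the binomial series $A^{1/2} = \sum_{k\ge 0}\binom{1/2}{k}(A-\id)^{k}$, which converges $\lambda$-adically since $A-\id = \alg{O}(\lambda)$ and whose rational coefficients require $\field{Q}\subseteq\ring{R}$. Because the coefficients $\binom{1/2}{k}$ are real and $\dagger$ is covariant and additive, one obtains $(A^{1/2})^{\dagger} = (A^{\dagger})^{1/2} = (A^{-1})^{1/2} = (A^{1/2})^{-1}$, whence $B\circ(B^{\dagger})^{-1} = A^{1/2}\circ A^{1/2} = A$, as required.

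The main obstacle is to show that this formal square root $A^{1/2}$ is again an \emph{algebra} automorphism of $\defalg{A}_{1}$ and not merely an invertible $\ring{C}$-linear map. I would handle this through the standard deformation-theoretic fact that, for $\field{Q}\subseteq\ring{R}$, the formal logarithm $\Phi \mapsto \log\Phi$ identifies the group of automorphisms of the form $\id+\alg{O}(\lambda)$ with the Lie algebra of $\star_{1}$-derivations $D = \alg{O}(\lambda)$: the logarithm of such an automorphism is a derivation, and the exponential of such a derivation is an automorphism. Granting this, $\tfrac{1}{2}\log A \in \Der(\defalg{A}_{1})$ and hence $A^{1/2}=\exp\!\big(\tfrac{1}{2}\log A\big) \in \Aut(\defalg{A}_{1})$ with $(A^{1/2})^{2}=A$ by uniqueness of the square root with leading term $\id$. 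This produces the Hermitian equivalence transformation $\tilde T = T \circ A^{1/2}$ and completes the argument.
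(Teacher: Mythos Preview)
The paper does not give its own proof of this lemma; it merely states the result and cites \citep{bursztyn.waldmann:2000b}. Your argument is correct and is essentially the standard one from that reference: pass from $T$ to the self-equivalence $A=T^{-1}\circ T^{\dagger}$ satisfying $A^{\dagger}=A^{-1}$, take its formal square root (using $\field{Q}\subseteq\ring{R}$ and the $\exp$/$\log$ correspondence between automorphisms of the form $\id+\alg{O}(\lambda)$ and formal $\star_{1}$-derivations), and correct $T$ by this square root. There is nothing to compare against in the paper itself.
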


Bevor wir uns in Kapitel~\ref{sec:FormalesSternprodukt} der
Definition von formalen Sternprodukten widmen, wollen wir den
{\em klassischen Limes} einer deformierten Algebra definieren. Dieser
ist im Bezug auf das Quantisieren von gro"ser Wichtigkeit.  
Anders als in der kanonischen Quantisierung, bei der die Definition eines
klassischen Limes nichttrivial ist, kann man im Rahmen von (formal)
deformierten Algebren sehr einfach den klassischen Limes definieren.
Insbesondere bei der im f"unften Kapitel betrachteten
\Name{Morita}-Theorie von deformierten Algebren wird der klassischen 
Limes-Abbildung eine wichtige Rolle zukommen.

\begin{definition}[Die klassische Limes-Abbildung $\cl$]
\label{Definition:KlassischeLimesAbbildung}
\index{klassischer Limes}
\index{Abbildung!klassische Limes-Abbildung}
   Sei $\defalg{A}$ eine deformierte Algebra nach Definition
  ~\ref{Definition:FormaleAlgebraDeformation}. Wir bezeichnen die
   Abbildung 
   \begin{align}
       \label{eq:KlassischerLimesAbbildung}
       \cl: \defalg{A} \ni  \sum_{n=0}\lambda^{n} a_{n} \mapsto a_{0}
       \in \alg{A},
   \end{align}
als die {\em klassische Limes-Abbildung}. Die Algebra $\alg{A}$
bezeichnet man dann als den {\em klassischen Limes}\index{klassischer Limes} der
  Algebra $\defalg{A}$.
\end{definition}

Die klassische Limes-Abbildung ist offensichtlich physikalisch
motiviert, da wir von der deformierten Theorie zur"uck zur
klassischen Theorie gelangen m"ochten. Dies entspricht in der Sprache des
Physikers einem "Ubergang \glqq$\hbar \to 0$\grqq{}. Dieser "Ubergang ist im
Rahmen formaler Potenzreihen offensichtlich wohldefiniert.

\subsection{Formale Sternprodukte}
\label{sec:FormalesSternprodukt}
\index{Sternprodukt!formales|(}
In der klassischen Physik kommt, wie wir in Kapitel
\ref{sec:KanonischeQuantisierung} gesehen haben, der
\Name{Poisson}-Struktur eine wichtige Rolle zu. Dies in der
Formulierung von Sternprodukten zu implementieren, war bereits bei
der Konstruktion der Sternprodukte auf einem flachen Phasenraum in Kapitel
\ref{sec:SternproduktefuerdenR2n} von fundamentaler Wichtigkeit. Die
Assoziativit"at einer Deformation bringt automatisch eine
\Name{Poisson}-Struktur ins Spiel (vgl.~Bemerkungen
\ref{Bemerkungen:AssoziativitaetDeformierteAlgebren}), die wir uns im
weiteren zu Nutze machen werden. 

Wir wollen nun \glqq allgemeinere\grqq{} 
Sternprodukte definieren und die wichtigen Strukturen, die wir in
Bemerkung~\ref{Bemerkung:EiegnschaftenDerSternprodukteimFlachenFall}
zusammengetragen haben, "ubernehmen, um
eine m"oglichst allgemeine Klasse von assoziativen, deformierten Algebren zu
erhalten. 

Wir werden in diesem Kapitel eine allgemeine
Definition angeben und Sternprodukte "uber diese
Eigenschaften axiomatisieren, wie es von \citet{bayen.et.al:1977a}
vorgeschlagen wurde. Dies bedeutet insbesondere, da"s
wir uns nun komplett von der Algebra der Operatoren trennen, mit
deren Hilfe wir in Kapitel~\ref{sec:SternproduktefuerdenR2n} die ersten Sternprodukte
eingef"uhrt haben. Da wir desweiteren eine differentielle
Struktur ben"otigen, stellt sich der Raum der glatten Funktionen auf
\Name{Poisson}-Man\-nig\-fal\-tig\-keiten mit formalen
Potenzreihen\index{formale Potenzreihe}
als die richtige Kategorie heraus. Diese so erhaltene deformierte
Algebra werden wir dann mit $(\Cinff{M}, \star)$ kennzeichnen und sie
als ein {\em formales Sternprodukt}\index{Sternprodukt!formales} bezeichnen. Formal deswegen, weil
formale Potenzreihen erstmal ein rein algebraisches Konzept
darstellen. Der Parameter 
$\lambda$ ist keine reelle Zahl, und erst in einem konvergenten Rahmen
k"onnen wir dazu "ubergehen, $\lambda \leadsto \hbar$ anzunehmen.
Den Preis, den man f"ur diese formale Definition eines Sternprodukts zahlt
ist damit offensichtlich: man kann keinerlei Aussagen "uber
Konvergenz machen! Einen konvergenten Rahmen zu schaffen, verlangt viel Arbeit
und ist im allgemeinen extrem schwierig zu konstruieren. Die Konstruktion einer konvergenten
Unteralgebra $\alg{A}_{\sss \hbar}$ der formalen Potenzreihen der reell-analytischen Funktionen
$\Comegaf{\field{C}^{n}}$ f"ur das \Name{Wick}-Produkt
$(\alg{A}_{\sss \hbar}, \spwick)$ mit $\lambda=\hbar$ findet man in
\citep{beiser.roemer.waldmann:2005a:pre,beiser:2005a}.    

\begin{definition}[Formale Sternprodukte]
\label{Definition:FormaleSternprodukte}

Sei $(M,\Lambda)$ eine
\Name{Poisson}-Mannigfaltigkeit.\index{Mannigfaltigkeit!Poisson-Mannigfaltigkeit@\Name{Poisson}-Mannigfaltigkeit}
\index{Poisson-Mannigfaltigkeit@\Name{Poisson}-Mannigfaltigkeit} Ein formales
Sternprodukt ist eine $\fieldf{C}$-bilineare 
Abbildung $\star:\Cinff{M} \times \Cinff{M} \to  \Cinff{M}$ der Form  
\begin{align}
f\star g = \sum_{n=0}^{\infty} \lambda^{n} C_n(f,g),
\label{eq:FormalesSternprodukt}
\end{align}
so da"s f"ur alle $f,g,h \in \Cinff{M}$ gilt:
\begin{compactenum}
\item $(f\star g) \star h = f \star (g\star h) $ (Assoziativit"at des
    Sternprodukts),  
\item $ f\star g = fg + \sum_{n=1}^{\infty} \lambda^{n} C_n(f,g)$
    (Deformation des punktweisen Produkts),  
\item $f \star g - g\star f = \im \lambda \{f,g\} + \alg{O}(\lambda^2)$
    (Deformation in Richtung der \Name{Poisson}-Klammer), 
\item $f \star1 = 1\star f =f$ (Die konstante Funktion $1$ ist
    Einselement). 
\end{compactenum}
Dabei ist $\lambda$ ein formaler Parameter. Man bezeichnet die
deformierte, assoziative Sternproduktalgebra mit $(\Cinff{M},
\star)$. Alternativ spricht man von $(M,\Lambda,\star)$, einer {\em
  \Name{Poisson}-Mannigfaltigkeit $M$ mit Sternprodukt}.
\end{definition}

\begin{bemerkung}[Symplektischer Fall]
Oft ben"otigen wir symplektische Mannigfaltigkeiten mit
Sternprodukt, die wir mit $(M,\omega,\star)$ bezeichnen werden.    
\end{bemerkung}

\begin{bemerkung}[Alternative Formulierung]
Mit Hilfe der bilinearen $C_{n}:
\Cinff{M} \times \Cinff{M} \to  \Cinff{M}$ k"onnen wir eine dazu "aquivalente Definition von
Sternprodukten angeben. F"ur manche Anwendung
wird sich diese Schreibweise als n"utzlich erweisen. Sei im weiteren
$n \in \field{N}$, dann sind die Punkte {\it i.)} bis {\it iv.)} aus Definition
\ref{Definition:FormaleSternprodukte} "aquivalent zu
\begin{compactenum}
\item $\sum_{s=0}^{n} \left[C_{s}(C_{n-s}(f,g),h) -
        C_{s}(f,C_{n-s}(g,h)) \right] =0$ f"ur alle $n\in \field{N}$,
\item $C_{0}(f,g) = fg$,
\item $C_{1}(f,g) - C_{1}(g,f) = \im \{f,g\}$,
\item $C_{n}(f,1) = C_{n}(1,f) = 0$ f"ur $n\ge 1$.
\end{compactenum} 

Man sieht mit etwas Arbeit beispielsweise an {\it i.)},
da"s die Assoziativit"at ein kohomologisches Problem ist und eine Obstruktion in der
\Name{Hochschild}-Ko\-ho\-mo\-lo\-gie\index{Hochschild-Kohomologie@\Name{Hochschild}-Kohomologie}
darstellt; siehe hierzu beispielsweise \citep{gutt.rawnsley:1999a}. 
\end{bemerkung}

Die Forderung der \Name{Hermite}zit"at, d.~h.~die Existenz einer
$^\ast$-Involution auf der deformierten Algebra $(\Cinff{M},\star)$,
auf die wir in den 
einf"uhrenden Kapitel relativ viel Wert gelegt haben, ist nicht Teil
der Definition von Sternprodukten. Statt dessen werden wir es als eine
Struktur ansehen, die ein Sternprodukt haben {\em kann}. Dies "andert
allerdings  
nichts an der Tatsache, da"s f"ur die Physik die Sternprodukte, bei
denen die komplexe Konjugation ein antilinearer
Antiautomorphismus\index{Antiautomorphismus!antilinearer} ist, eine wichtige Rolle spielen.

\begin{definition}[Typen von Sternprodukten]
\label{Definition:TypenVonSternprodukten}
Gegeben sei eine \Name{Poisson}-Mannigfaltigkeit $(M,\Lambda,\star)$ mit
einem Sternprodukt nach Definition
\ref{Definition:FormaleSternprodukte}. Man nennt ein Sternprodukt    
\begin{compactenum}
\item {\em \Name{Hermite}sch}\index{Sternprodukt!Hermitesches@\Name{Hermite}sches|textbf}, falls $\cc{f \star g} = \cc{g} \star
    \cc{f}$ f"ur alle $f,g \in \Cinff{M}$,
\item {\em lokal}\index{Sternprodukt!lokales|textbf} falls $\supp(f\star g) \subseteq \supp(f) \cap
    \supp(g)$ f"ur alle $f,g \in \Cinff{M}$, 
\item {\em differentiell}\index{Sternprodukt!differentielles|textbf}, falls alle $C_n$
    Multidifferentialoperatoren\index{Multidifferentialoperator} sind. 
\end{compactenum}
\end{definition}

Wir werden uns auch in Zukunft auf {\em differentielle} Sternprodukte
beschr"anken, und daher einige Definitionen dazu angeben.  

\begin{definition}[Typen differentieller Sternprodukte]
\label{Definition:TypenDifferentiellerSternprodukte}
Gegeben sei eine \Name{Poisson}-Mannigfaltigkeit $(M,\Lambda,\star)$ mit
einem Sternprodukt. 
\begin{compactenum}
\item Man nennt ein Sternprodukt {\em
      nat"urlich}\index{Sternprodukt!natuerliches@nat\"urliches} oder vom
    {\em \Name{Vey}-Typ}\index{Sternprodukt!Vey-Typ@vom
      \Name{Vey}-Typ}, falls alle $C_r$
    Bidifferentialoperatoren\index{Bidifferentialoperator} der Ordnung
    $r$ in jedem Argument sind.  
\item Man nennt ein Sternprodukt $f \star g = \sum_{r=0}^{\infty}
    \left(\frac{\im \lambda}{2} \right)^{r} C_{r}(f,g)$ vom {\em
      \Name{Weyl}-Typ}\index{Sternprodukt!Weyl-Typ@vom \Name{Weyl}-Typ|textbf}, falls alle $C_r$ reell sind und  
    $C_{r}(f,g)=(-1)^{r} C_{r}(g,f)$ 
    gilt. Insbesondere bedeutet dies, da"s $C_{1} (f,g)$
    gleich der \Name{Poisson}-Klammer ist: $C_1(f,g)=\{f,g\}$. 
\item Auf dem Kotangentialb"undel $(M=T^{\ast}Q, \omega_{\sss 0})$
    nennt man ein Sternprodukt vom {\em standardgeordneten Typ},\index{Sternprodukt!standardgeordneten Typ@vom standardgeordneten Typ|textbf} falls
  mit der kanonischen symplektischen Form $\omega_{\sss 0}$ die erste
    Funktion nur in Faserrichtung differenziert wird. Als Phasenraum
    entspricht dies der Differentiation in
    Impulsrichtung. Analog definiert man den {\em
      antistandardgeordneten Typ}, d.~h.~falls die zweite Funktion nur
    in Faserrichtung differenziert wird. 
\item Sei $(M,\omega,I,g,\star)$ eine
    \Name{K"ahler}-Mannigfaltigkeit\index{Mannigfaltigkeit!Kaehler-Mannigfaltigkeit@\Name{K\"ahler-Mannigfaltigkeit}} \index{Kaehler-Mannigfaltigkeit@\Name{K\"ahler}-Mannigfaltigkeit}
    mit einem Sternprodukt $\star$, so
    nennt man ein Sternprodukt vom {\em
      \Name{Wick}-Typ}\index{Sternprodukt!Wick-Typ@vom \Name{Wick}-Typ|textbf}, falls die
    erste Funktion nur in holomorphe, die zweite nur in antiholomorphe
    Richtung differenziert wird, und vom
    Anti-\Name{Wick}-Typ\index{Sternprodukt!Anti-Wick-Typ@vom Anti-\Name{Wick}-Typ|textbf}, falls
    die erste Funktion nur in antiholomorphe Richtung und die zweite nur
    in holomorphe Richtung differenziert wird.           
\end{compactenum}
\end{definition}

Nachdem wir Sternprodukte definiert
haben, liegt die Frage nach der {\em Existenz} solcher Objekte auf
Mannigfaltigkeiten nahe. In den Artikeln \citep{dewilde.lecomte:1983a}
beziehungsweise \citep{dewilde.lecomte:1984a} wurde die Existenz von
Sternprodukten auf Kotangentialb"undeln $T^{\ast}Q$ einer
Basismannigfaltigkeit $Q$ gezeigt. Sp"ater gelang
\citet{dewilde.lecomte:1983b,dewilde.lecomte:1983c} der Beweis der
Existenz von Sternprodukten auf beliebigen symplektischen
Mannigfaltigkeiten $(M,\omega)$. Dieser Beweis ist ein reiner
Existenzbeweis und gibt keine Vorgehensweise f"ur die Konstruktion
von Sternprodukten an. Einen Durchbruch gelang
\citet{fedosov:1994a,fedosov:1996a}, da er ein (rekursives) Rezept
f"ur die Konstruktion von Sternprodukten auf symplektischen
Mannigfaltigkeiten mit einem symplektischen Zusammenhang
$(M,\omega,\nabla)$ angeben konnte. Diese Konstruktion werden wir in
Kapitel~\ref{sec:FedosovKonstruktion} genauer beschreiben. Mit
Hilfe einer leicht abge"anderten Variante der
\Name{Fedosov}-Konstruktion konnten
\citet{bordemann.neumaier.waldmann:1998a,bordemann.neumaier.waldmann:1999a}
zeigen, da"s es standardgeordnete Sternprodukte auf
Kotagentialb"undeln gibt. Dies gelang unabh"angig davon
\citet{pflaum:1998b,pflaum:2000a}. F"ur
\Name{K"ahler}-Mannigfaltigkeiten zeigte
\citet{bordemann.waldmann:1997a} mit einer ge"anderten
\Name{Fedosov}-Konstruktion, die Existenz von
\Name{Wick}-Produkten. Ein Jahr zuvor ver"offentlichte
\citet{karabegov:1996a}, 
da"s es auf \Name{K"ahler}-Mannigfaltigkeiten immer Sternprodukte
mit Trennung der Variablen gibt. Weitere Ausf"uhrungen dazu findet man bei
\citet{karabegov:1999a,neumaier:2003a}. 

Im Jahre 1997, "uber ein Jahrzehnt nach dem allgemeinen
Existenzbeweis f"ur symplektische Mannigfaltigkeiten von
\Name{Lecomte} und \Name{de Wilde}, gelang \citet{kontsevich:2003a}
der (konstruktive) Beweis f"ur die Existenz von Sternprodukten auf
\Name{Poisson}-Mannigfaltigkeiten.\index{Kontsevich-Konstruktion} Zusammenfassend
k"onnen wir folgenden Satz formulieren. 

\begin{satz}[Existenz verschiedener Typen von Sternprodukten]
\label{Satz:ExistenzTypenSternprodukte}
~\vspace{-5mm}
\begin{compactenum}
\item Auf jeder symplektischen Mannigfaltigkeit $(M,\omega)$ existieren
    (differentielle, nat"urliche, \Name{Hermite}sche) Sternprodukte
    nach Definition~\ref{Definition:FormaleSternprodukte}. 
\item Auf jeder symplektischen Mannigfaltigkeit gibt es Sternprodukte
    vom \Name{Weyl}-Typ. 
\item Auf jeder \Name{K"ahler}-Mannigfaltigkeit $(M,\omega,I,g)$
    existieren nat"urliche Sternprodukte vom (An\-ti-)
    \Name{Wick}-Typ. 
\item Auf jedem Kotangentialb"undel $(T^{\ast}Q,\omega_{\sss 0})$
    existieren Sternprodukte vom (anti-)stan\-dard\-ge\-ord\-ne\-ten
    Typ.  
\item Auf jeder \Name{Poisson}-Mannigfaltigkeit $(M,\Lambda)$
    existieren (nat"urliche, \Name{Hermite}sche) Sternprodukte. 
\end{compactenum}
\end{satz}

\section{"Aquivalenz und Klassifikation von Sternprodukten}
\label{sec:AeuqivalenzundKlassifikationvonSternprodukten}
\index{Aequivalenz@\"Aquivalenz!Sternprodukten@von Sternprodukten|(}
Nachdem wir in Kapitel~\ref{sec:FormalesSternprodukt} die
Existenz von Sternprodukten gekl"art haben, wollen wir kurz auf die
Eindeutigkeit eingehen. Schon im flachen Fall des $\field{R}^{2n}$
haben wir gesehen, da"s Sternprodukte auf einer festen
Mannigfaltigkeit nicht eindeutig sind. Statt
dessen konnten wir eine {\em Familie} von 
Sternprodukten angeben. Ebenso werden wir auf symplektischen
oder \Name{Poisson}-Mannigfaltigkeiten Familien von Sternprodukten identifizieren
k"onnen. Dazu werden wir im folgenden einen {"Aqui\-va\-lenz\-be\-griff}
definieren.  

\begin{definition}["Aquivalenz von Sternprodukten]
\label{Definition:AequivalenzSternprodukte}
Man nennt zwei (differentielle) Sternprodukte $\star$ und $\star'$ auf
der \Name{Poisson}-Mannigfaltigkeit $(M,\Lambda)$ genau dann
{\em "aquivalent}, falls es eine formale Potenzreihe 
\begin{align}
\label{eq:AequivalenzSternprodukte}
T=\id + \sum_{n=1}^{\infty} \lambda^{n} T_{n}
\end{align}
mit $\field{C}$-linearen (Differential-) Operatoren $T_{n}: \Cinf{M}
\to \Cinf{M}$ mit $T_{n}(1)=0$ gibt, so da"s f"ur alle $f,g \in
\Cinff{M}$  
\begin{align}
T(f\star g) = (Tf) \star' (Tg)
\end{align} 
ist.  
\end{definition}

Beispiele f"ur solche "Aquivalenzen sind die $\kappa$- und
$\tilde{\kappa}$-Ordnungen, die man aus dem Standard- oder dem
\Name{Wick}-Produkt auf $\field{R}^{2n}$
bzw.~$\field{C}^{n}$ erh"alt. Die "Aquivalenz-Operatoren in Definition
\ref{Definition:AequivalenzSternprodukte} sind damit eine
Verallgemeinerung von $N_{\sss \kappa}$ und $S_{\sss
  \tilde{\kappa}}$. Wir k"onnen nun zeigen, da"s
\glqq"aquivalent sein\grqq{} wirklich eine "Aquivalenzrelation ist, und
Sternprodukte einer Klasse eine "Aquivalenzklasse bilden. Die
Klassifizierung von Sternprodukten wird "uber die "Aquivalenzklassen
geschehen.

\begin{dlemma}["Aquivalenzrelation]
Die "Aquivalenz von Sternprodukten in Definition
\ref{Definition:AequivalenzSternprodukte} ist eine
"Aquivalenzrelation. Zwei Sternprodukte $\star$ und $\star'$ sind somit
"aquivalent, und wir schreiben $\star \sim \star'$, falls ein $T$
existiert, so da"s Gleichung~\eqref{eq:AequivalenzSternprodukte}
erf"ullt ist. Man schreibt  
\begin{equation}
\begin{aligned}
{}[\star]{} & =\left\{ \star' | \star \sim \star' \right\} \\ & = \left\{
    \star'| \; \exists T \;\text{so da"s} \; \forall f,g \in \Cinff{M} \;\,
    T(f \star g) = (Tf) \star' (Tg) \right\}. 
\end{aligned}
\end{equation}  
Wir bezeichnen $\star' \in [\star]$ als einen {\em Repr"asentanten} der Klasse.
\end{dlemma}
\begin{proof}
Man pr"uft {Reflexivit"at}, {Symmetrie} und {Transitivit"at}.

{\it Reflexivit"at:} Klar. Man w"ahlt $T=\id$, d.~h.~$T_{n}=0$ f"ur $n\ge 1$.

{\it Symmetrie:} Klar. $T$ ist ein invertierbarer Operator, da er
    in unterster Ordnung mit der Identit"at beginnt und daher gilt $T^{-1}(f
    \star g) = (T^{-1}f)\star (T^{-1}g)$. 

{\it Transitivit"at:} Seien $T(\tilde{f}\star
    \tilde{g})=(T\tilde{f}) \star' (T\tilde{g})$ und $S(f\star g) = (Sf)
    \star'' (Sg)$, dann ist $f=T^{-1}\tilde{f}$, etc.~und damit
    $(T^{-1}f) \star (T^{-1} g) = T^{-1}(f \star' g)$. Folglich gilt 
\begin{align*}
ST^{-1}(f\star' g) = S((T^{-1}f) \star (T^{-1}) g) = (ST^{-1}) f \star''
(ST^{-1}) g, 
\end{align*}  
womit wir die "Aquivalenz von $\star'$ und $\star''$ via $R=ST^{-1}$
gezeigt haben. 
\end{proof}

\begin{lemma}["Aquivalenz \Name{Hermite}scher Sternprodukte]
    \label{Lemma:AequivalenzHermitescherSternprodukte}
Sind $\star$ und $\star'$ "aquivalente, \Name{Hermite}sche Sternprodukte auf
$(M,\Lambda)$, so gibt es eine "Aqui\-va\-lenz\-trans\-for\-ma\-tion $T$, so
da"s $\cc{T(f)}=T(\cc{f})$ f"ur alle $f\in \Cinff{M}$ ist. 
\end{lemma}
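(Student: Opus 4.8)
The plan is to recognise this lemma as the geometric incarnation of the purely algebraic Lemma~\ref{Lemma:HermitescheAequivalenztransformation}. One applies that lemma to the $^\ast$-algebra $\alg{A}=\Cinf{M}$ equipped with complex conjugation as its $^\ast$-involution, with $\defalg{A}_1=(\Cinff{M},\star)$ and $\defalg{A}_2=(\Cinff{M},\star')$ as the two \Name{Hermite}schen Deformationen. The only hypothesis to verify is $\field{Q}\subseteq\ring{R}$, which is immediate since $\ring{R}\supseteq\field{R}\supseteq\field{Q}$. For a self-contained argument I would nevertheless carry out the construction explicitly, as follows.

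Starting from any equivalence transformation $T=\id+\sum_{n\geq 1}\lambda^n T_n$ with $T(f\star g)=(Tf)\star'(Tg)$ and $T(1)=1$ (Definition~\ref{Definition:AequivalenzSternprodukte}), I would introduce its conjugate $\cc{T}$ via $\cc{T}(f):=\cc{T(\cc f)}$. A short computation using the \Name{Hermite}zit\"at of \emph{both} products (Definition~\ref{Definition:TypenVonSternprodukten}) gives $\cc{T}(f\star g)=\cc{T(\cc g\star\cc f)}=\cc{(T\cc g)\star'(T\cc f)}=\cc{T}(f)\star'\cc{T}(g)$, so $\cc{T}$ is again an equivalence from $\star$ to $\star'$ with $\cc{T}=\id+\alg{O}(\lambda)$ and $\cc{T}(1)=1$. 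One also checks the formal identities $\cc{A\circ B}=\cc{A}\circ\cc{B}$, $\cc{A-B}=\cc{A}-\cc{B}$, $\cc{\mu A}=\cc{\mu}\,\cc{A}$ and $\cc{\cc{T}}=T$.

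Next I would form the self-equivalence $S:=T^{-1}\circ\cc{T}$ of $\star$, i.e.\ an automorphism of $(\Cinff{M},\star)$ of the form $\id+\alg{O}(\lambda)$. From $\cc{T^{-1}}=(\cc{T})^{-1}$ and $\cc{\cc{T}}=T$ one obtains $\cc{S}=(\cc{T})^{-1}\circ T=S^{-1}$, the decisive \emph{real-unitarity}. Since $S-\id$ is of order $\lambda$, the series $\log S=\sum_{k\geq 1}\tfrac{(-1)^{k+1}}{k}(S-\id)^k$ converges $\lambda$-adically and is a $\star$-derivation (this is where $\field{Q}\subseteq\ring{R}$ enters), so $R:=\exp(\tfrac12\log S)$ is again an automorphism of $\star$ with $R=\id+\alg{O}(\lambda)$ and $R(1)=1$. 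Because $\log$ has real coefficients and conjugation preserves composition, $\cc{\log S}=\log(\cc{S})=\log(S^{-1})=-\log S$, whence $\cc{R}=\exp(-\tfrac12\log S)=R^{-1}$.

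Finally I would set $\tilde{T}:=T\circ R$. This is an equivalence from $\star$ to $\star'$ with $\tilde{T}=\id+\alg{O}(\lambda)$ and $\tilde{T}(1)=1$, and using $\cc{T}=T\circ S$ together with $\cc{R}=R^{-1}=S^{-1/2}$ one computes $\cc{\tilde{T}}=\cc{T}\circ\cc{R}=T\circ S\circ S^{-1/2}=T\circ S^{1/2}=T\circ R=\tilde{T}$, that is $\cc{\tilde{T}(f)}=\tilde{T}(\cc f)$ for all $f\in\Cinff{M}$, as required. The main technical point — the only step beyond routine bookkeeping — is the verification that $\log S$ is a $\star$-derivation and that conjugation, which is antilinear in the formal parameter but fixes it ($\cc{\lambda}=\lambda$), passes through the real power series $\log$ and $\exp$ as stated; everything else reduces to the identities $\cc{A\circ B}=\cc{A}\circ\cc{B}$ and $\cc{\cc{T}}=T$. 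Cf.\ \citep{bursztyn.waldmann:2000b}.
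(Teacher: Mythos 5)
Your argument is correct and is precisely the standard proof that the paper itself does not spell out but delegates to \citep[Prop.~5.6]{neumaier:2002a} (respectively, in the algebraic setting of Lemma~\ref{Lemma:HermitescheAequivalenztransformation}, to \citep{bursztyn.waldmann:2000b}): pass from $T$ to $S=T^{-1}\circ\cc{T}$, observe $\cc{S}=S^{-1}$, and correct $T$ by the square root $R=\exp(\tfrac12\log S)$. All the formal identities you use ($\cc{A\circ B}=\cc{A}\circ\cc{B}$, $\cc{\cc{T}}=T$, $\lambda$-adic convergence of $\log$ and $\exp$, and the fact that $\log$ of an automorphism of the form $\id+\alg{O}(\lambda)$ is a $\star$-derivation) check out, and since conjugation, composition, inversion and the real power series all preserve the property of the $T_{n}$ being differential operators, $\tilde{T}=T\circ R$ is indeed an equivalence transformation in the sense of Definition~\ref{Definition:AequivalenzSternprodukte}.
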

\begin{proof}
    Der Beweis ist in \citep[Prop.~5.6]{neumaier:2002a} nachzulesen. 
\end{proof}

\begin{satz}[Klassifikation von symplektischen Sternprodukten I]
\label{Satz:KlassifikationSymplektischeSternprodukteI}
Gegeben sei eine symplektische Mannigfaltigkeit $(M,\omega)$. Die
Menge der "Aquivalenzklassen ist in Bijektion zu $\HdeRham[2](M)\FP$.  
\end{satz}

\begin{proof} Der Beweis dieses Satzes findet sich in
    \citep{bertelson.cahen.gutt:1997a,nest.tsygan:1995a,nest.tsygan:1995b,deligne:1995a,gutt.rawnsley:1999a}.    
\end{proof}

Genauer gesagt definiert jedes Sternprodukt $\star$ eine {\em
  charakteristische Klasse} $c(\star)$.
\begin{equation}
\begin{aligned}
\label{eq:CharakteristischeKlasseSternprodukt}
 \frac{1}{\im \lambda}[\omega] + \HdeRham[2](M)\FP \ni c(\star) =
\frac{1}{\im \lambda}[\omega] + \sum_{n=0}^{\infty} \lambda^{n} [\omega_{n}]. 
\end{aligned}
\end{equation}

So kann man Satz~\ref{Satz:KlassifikationSymplektischeSternprodukteI}
wie folgt auffassen.

\begin{satz}[Klassifikation von symplektischen Sternprodukten II]
\label{Satz:KlassifikationSymplektischeSternprodukteII}
Gegeben sei eine symplektische Mannigfaltigkeit $(M,\omega)$. Zwei
Sternprodukte $\star$ und $\star'$ sind genau dann "aquivalent, wenn
$c(\star) = c(\star')$. 
\end{satz}

\begin{korollar}["Aquivalenz auf symplektischen Mannigfaltigkeiten]
Sei $(M,\omega)$ eine symplektische Mannigfaltigkeit bei der die
zweite \Name{de Rham} Kohomologie verschwindet,
d.~h.~$\HdeRham[2](M)=\{0\}$, so sind je zwei Sternprodukte $\star$
und $\star^{\prime}$ "aquivalent. 
\end{korollar}

\begin{korollar}["Aquivalenz von Sternprodukten auf dem
    $\field{R}^{2n}$] 
Alle Sternprodukte auf dem $\field{R}^{2n}$ sind "aquivalent,
d.~h.~bis auf die Wahl einer (verallgemeinerten)
Ordnungsvorschrift\index{Ordnungsvorschrift!verallgemeinerte} ist die Quantisierung auf dem $\field{R}^{2n}$ eindeutig. 
\end{korollar}

\begin{beispiel}[Nicht"aquivalente Sternprodukte]
Ein explizites Beispiel f"ur nicht"aquivalente Sternprodukte findet sich in
\citep{cahen.flato.gutt.sternheimer:1985a}. Hier wird gezeigt, da"s
auf einem $2n$-dimensionalen Torus
$\field{T}^{2n}=\field{R}^{2n}/\field{Z}^{2n}$ Sternprodukte
existieren, die nicht durch eine "Aquivalenztransformation, d.~h.~einen Isomorphismus,
ineinander "uberf"uhrt werden k"onnen. Dieses Beispiel war lange
vor der Klassifikation bekannt und daher von Interesse. Nach
Satz~\ref{Satz:KlassifikationSymplektischeSternprodukteI} ist dieses
Beispiel trivial, da $\HdeRham[2](\field{T}^{2n}) \neq \{0\}$ ist.
\end{beispiel}

Nun wollen wir noch die Klassifikation von Sternprodukten auf
\Name{Poisson}-Man\-nig\-fal\-tig\-kei\-ten angeben. Zusammenfassend
l"a"st sich der folgende Satz (vgl.~\citep{kontsevich:2003a}) formulieren. 

\begin{satz}[Klassifikation von Sternprodukten auf
    \Name{Poisson}-Mannigfaltigkeiten] 
Die "Aquivalenzklassen $[\star]$ von Sternprodukten auf einer
\Name{Poisson}-Mannigfaltigkeit $(M,\Lambda_{0})$ sind in Bijektion zu
den "Aquivalenzklassen $[\Lambda]$ von formalen Deformationen
$\Lambda= \Lambda_{0} + \sum_{n\ge 1}\lambda^{n} \Lambda_{n} \in
\schnittf{\Lambda^2 TM}$ des \Name{Poisson}-Tensors $\Lambda_{0}$
modulo formalen Diffeomorphismen. 
\end{satz}
\index{Aequivalenz@\"Aquivalenz!Sternprodukten@von Sternprodukten|)}

\section{Konstruktion von Sternprodukten: \Name{Fedosov}-Konstruktion}

\subsection{Die \Name{Fedosov}-Konstruktion}
\label{sec:FedosovKonstruktion}
\index{Fedosov-Konstruktion@\Name{Fedosov}-Konstruktion|(}
Die \Name{Fedosov}-Konstruktion ist nicht nur ein konstruktiver Beweis der
Existenz von Sternprodukten auf symplektischen Mannigfaltigkeiten,
sondern stellt desweiteren ein m"achtiges Werkzeug zur Verf"ugung,
um verschiedene Typen von Sternprodukten konstruieren und damit ihre
Existenz beweisen zu k"onnen. 

Wir werden in mehreren Schritten vorgehen. Zuerst zeigen wir recht
ausf"uhrlich eine leichte Verallgemeinerung der von \Name{Fedosov}
gemachten Konstruktion. Diese erweitern wir in einem weiteren Schritt
auf Vektorb"undel\index{Vektorbuendel@Vektorb\"undel}. Die Schnitte
im Vektorb"undel stellen einen 
Bimodul f"ur ein Sternprodukt von 
rechts und das deformierte Endomorphismenb"undel des Vektorb"undels von links
dar. Das bedeutet, wir sind mit Hilfe einer Verallgemeinerung der
Konstruktion in der Lage, nicht nur ein Sternprodukt f"ur Funktionen
auf einer symplektischen Mannigfaltigkeit anzugeben, sondern es gelingt
uns ferner, einen deformierten Bimodul zu konstruieren, bei dem alle
Strukturen deformiert sind. In Kapitel
\ref{sec:HInvarianteFedosovKonstruktion} werden wir dann letztendlich
eine unter einer Gruppenwirkung {\em $G$-invariante
  \Name{Fedosov}-Konstruktion} pr"asentieren, die auf Vektorb"undel
erweitert wird. Zuvor 
werden wir aber die gew"ohnliche Konstruktion nachvollziehen,
wie man sie im wesentlichen schon in \citep{fedosov:1994a,fedosov:1996a} findet. Eine
"uberarbeitete, moderne und ausf"uhrlichere Formulierung findet man
in \citep{neumaier:2001a,waldmann:2004a:script}. Auf Beweise werden
wir in diesem Kapitel so gut wie vollst"andig verzichten, da sie
beispielsweise in den zuvor erw"ahnten Quellen ausf"uhrlich
dargelegt sind.

\subsubsection{Das formale \Name{Weyl}-Algebrab"undel} 
Sei $(M,\omega)$ eine symplektische Mannigfaltigkeit. Wir betrachten
einen Punkt $p \in U \subseteq M$ in einer Umgebung $U$. Es existiert eine Karte $(U,x)$ mit
den lokalen Koordinaten $(x^{1},\cdots,x^{n})$,
so da"s wir die symplektische Form sowie den \Name{Poisson}-Tensor
in der Form 

\begin{align}
\omega\big|_{U} = \frac{1}{2} \omega_{ij} \de x^{i} \wedge \de x^{j}
\quad \text{und} \quad \Lambda\big|_{U}= \frac{1}{2} \Lambda^{ij}
\frac{\del}{\del x^{i}} \wedge \frac{\del}{\del x^{j}} 
\end{align}

schreiben k"onnen, und es gilt $\Lambda^{ij}=-\omega^{ij}$ und
$\omega^{ij} \omega_{jk} = \delta^{i}_{k}$.

\begin{definition}[Die formale \Name{Weyl}-Algebra]
\index{formale Weyl-Algebra@formale \Name{Weyl}-Algebra}
Sei $(M,\omega)$ eine symplektische Mannigfaltigkeit. Die {\em formale
 \Name{Weyl}-Algebra} "uber $p$ ist die formal in Vervollst"andigung
der $\lambda$ gradierten, symmetrische Tensoralgebra "uber $T^{\ast}_{p} M$.   

\begin{align}
    W_{p}=\prod_{k=0}^{\infty} W_{p}^{k} = \left(
        \prod_{k=0}^{\infty} S^{k}T_{p}^{\ast}M\right)\FP \quad
    \text{mit} \quad W_{p}^{k} = S^{k}T_{p}^{\ast}M \FP.
\end{align}
\end{definition}

Die Elemente in $W_{p}$ sind somit sowohl formale Potenzreihen in $\lambda$
als auch formale Reihen im symmetrischen Grad der symmetrischen
Tensoren "uber $T^{\ast}_{p}M$. Wir wollen die Algebra als die
Unteralgebra $W_{p}:= (W \otimes \Lambda^{0})_{p}$ von

\begin{align}
    (W \otimes \Lambda ^{\bullet})_{p} = \left( \prod_{k=0}^{\infty}
        S^{k}T_{p}^{\ast}M \otimes \Lambda^{\bullet}T_{p}^{\ast}M
    \right) \FP
\end{align}

verstehen. In $(W \otimes \Lambda^{\bullet})_{p}$ haben Elemente die
Form $a = \sum_{i} f_{i} \otimes \alpha_{i}$, sind also
Linearkombinationen, wobei der hintere Teil die
schiefsymmetrischen Tensoren "uber $T^{\ast}_{p}M$ sind. Die
Einsformen $\theta$ kann man sowohl als symmetrisch wie auch als
schiefsymmetrisch interpretieren, was wir durch die Schreibweise
$\theta \otimes 1$ bzw.~$1 \otimes \theta$ kennzeichnen werden. Auf
der Algebra $(W \otimes \Lambda^{\bullet})_{p}$ k"onnen wir nun
mittels des symmetrischen Produkts $\vee$ und des schiefsymmetrischen
Produkts $\wedge$ ein assoziatives, im schiefsymmetrischen Teil
$\field{Z}_{2}$-gradiertes\footnote{$\field{Z}_{2}$-gradierte
  Produkte, Klammern, Derivationen etc.~werden oft auch mit dem Pr"afix {\it
    Super-} versehen, so da"s man von Superprodukt, Superklammer,
  Superderivation, Superkommutativit"at, etc.~spricht. Wir verwenden
  beide Bezeichnungen gleicherma"sen.} Produkt $\mu:(W \otimes
\Lambda^{\bullet})_{p} \times (W \otimes \Lambda^{\bullet})_{p} \to (W \otimes
\Lambda^{\bullet})_{p}$ einf"uhren. F"ur $a=f
\otimes \alpha$ und $b=g \otimes \beta$ ist dann
\begin{equation}
\begin{aligned}
  \mu(a,b) = ab & = (f \otimes \alpha)(g \otimes \beta) = (f \vee g)
  \otimes (\alpha \wedge \beta) \\ & = (-1)^{k\ell} (g \vee f) \otimes
  (\beta \wedge \alpha) \\ &= (-1)^{k\ell} ba. 
\end{aligned}
\end{equation}

Dabei ist $\alpha \in \Lambda^k T_{p}^{\ast}M$ und $\beta \in
\Lambda^{\ell} T_{p}^{\ast}M$. 

\begin{definition}[Gradabbildungen auf $(W \otimes
    \Lambda^{\bullet})_{p}$] 
\index{Gradabbildungen}
Man definiert die folgenden {\em Gradabbildungen} auf der Algebra $(W
\otimes \Lambda^{\bullet})_{p}$.  
\begin{align}
\dega, \degs, \degl : (W \otimes \Lambda^{\bullet})_{p} \to (W \otimes
\Lambda^{\bullet})_{p}  
\end{align}

\begin{compactenum}
\item $\degs (f\otimes \alpha) = m f \otimes \alpha$ f"ur $f\in W_p^{m}$,
\item $\dega (f\otimes \alpha) = n f \otimes \alpha$ f"ur $\alpha \in
    \Lambda^{n}T_{p}^{\ast}M$,
\item $\degl(f \otimes \alpha) = \lambda\frac{\del}{\del \lambda} (f
    \otimes \alpha)$,
\item $\Deg=\degs + 2 \degl$,  
\end{compactenum}
Die ersten beiden Gradabbildungen messen den {\em symmetrischen} bzw.~{\em
  schiefsymmetrischen Grad}, die dritte den {\em $\lambda$-Grad}
und die letzte den {\em totalen Grad}.
\end{definition}

\begin{bemerkungen}[Gradabbildungen]
~\vspace{-5mm}
\begin{compactenum}
\item Der totale Grad ist derzeit noch unmotiviert, wird sich aber als
    n"utzlich herausstellen, wenn wir 
sp"ater die Struktur der \Name{Fedosov}-Derivation oder des
faserweisen Produkts verstehen wollen. 
\item Die Gradabbildungen sind Derivationen der Multiplikation $\mu$. Um
dies zu zeigen, und f"ur sp"atere Anwendungen, ist es n"utzlich, die
Gradabbildung $\degs$ und $\dega$ in lokalen Koordinaten zu schreiben. Wir
k"onnen zeigen, da"s  
\begin{equation}
\begin{aligned}
    \degs (f \otimes \alpha) &= (\de x^{i} \otimes 1)
    i_{s}\left(\tfrac{\del}{\del x^{i}} \right) (f \otimes \alpha) \\
    & = \de x^{i} \vee i_{s}\left(\tfrac{\del}{\del x^{i}} \right) f
    \otimes \alpha
\end{aligned}
\end{equation}
und 
\begin{equation}
\begin{aligned}
    \dega (f \otimes \alpha) &= (1 \otimes \de x^{i})
    i_{a}\left(\tfrac{\del}{\del x^{i}} \right) (f \otimes \alpha) \\
    & = f \otimes \de x^{i} \wedge i_{a} \left(\tfrac{\del}{\del
          x^{i}} \right) \alpha
\end{aligned}
\end{equation}
ist. Dabei bezeichnen wir mit $i_{s}$ und $i_{a}$ den symmetrischen
bzw.~schiefsymmetrischen Einsetzer. Der Beweis erfolgt, indem man die Behauptung auf Tensoren in
$S^{\bullet} T_{p}^{\ast}M \otimes \Lambda^{\bullet}T_{p}^{\ast}M$
nachrechnet.  
\end{compactenum}
\end{bemerkungen}

\begin{definition}[Die Differentiale $\delta$, $\delta^{\ast}$ und $\delta^{-1}$]
\index{Differential!Fedosov@\Name{Fedosov}}
Sei $f \otimes \alpha \in (W \otimes \Lambda)_{p}$. Wir
definieren die Differentiale $\delta$ und $\delta^{\ast}$ mittels 
\begin{align}
 \label{eq:DefinitionDeltas}
    \delta(f\otimes \alpha) :& = i_{s}\left(\tfrac{\del}{\del x^{i}} \right) f
    \otimes \de x^{i} \wedge \alpha, \\
    \delta^{\ast}(f\otimes \alpha) :& = \de x^{i} \vee f \otimes i_{a} \left(
        \tfrac{\del}{\del x^{i}} \right) \wedge \alpha.  
\end{align}
F"ur homogene Elemente\index{Elemente!homogene} $a\in (W^n \otimes \Lambda^{m})_{p}$ sei weiter 
\begin{align}
\delta^{-1} a := \begin{cases} 0 \quad &\text{falls} \quad n+m=0, \\
    \tfrac{1}{n+m} \delta^{\ast}a  \quad & \text{falls} \quad n+m \neq 0. \end{cases}
\end{align}
\end{definition}

Die Abbildung $\delta^{-1}$ ist linear auf alle Elemente in $(W \otimes \Lambda)_{p}$
fortsetzbar. Es sei zu bemerken, da"s $\delta^{-1}$ nicht
das Inverse von $\delta$ ist, da $\delta$ keinen invertierbaren
Operator darstellt. 

\begin{definition}[Der Projektionsoperator $\sigma$]
\index{Projektionsoperator}
Der Operator
\begin{align}
    \sigma: (W \otimes \Lambda)_{p} \to \fieldf{C}
\end{align}
ist die Projektion auf den symmetrischen und schiefsymmetrischen Grad
$0$. 
\end{definition}

Die Eigenschaften der Operatoren $\delta,\delta^{\ast},
\delta^{-1}$ und $\sigma$ werden im folgenden Lemma zusammengefa"st.

\begin{lemma}[Endomorphismen $\delta$, $\delta^{\ast}$, $\delta^{-1}$ und $\sigma$]
    Es gilt
    \begin{compactenum}
        \item Die lokal definierten Abbildungen $\delta$,
             $\delta^{\ast}$, $\delta^{-1}$ sind global definierte Objekte
             und nicht von einer Kartenwahl abh"angig.
        \item Der Endomorphismus $\delta$ verringert den symmetrischen
            Grad und erh"oht den schiefsymmetrischen Grad um jeweils
            eins, was man formal als \begin{align}[\dega,\delta]=-
                \delta \quad 
            \text{und} \quad [\degs, \delta]=\delta \end{align}
            schreiben kann. Analog verringert $\delta^{\ast}$ den
            schiefsymmetrischen Grad und erh"oht den symmetrischen
            Grad \begin{align}[\dega,\delta^{\ast}]= \delta^{\ast} \quad
                \text{und} \quad [\degs, \delta^{\ast}]= -\delta^{\ast}. \end{align} 
         \item Die Endomorphismen $\delta$, $\delta^{\ast}$ und
             $\delta^{-1}$ sind nilpotent vom Grad $2$, d.~h.~es gilt
            $\delta^2 = (\delta^{\ast})^2 = (\delta^{-1})^2=0$.
         \item Es gilt das \Name{Poincar\'{e}}-Lemma\index{Poincare-Lemma@\Name{Poincar\'{e}}-Lemma} f"ur Differentialformen auf $T_{p}M$:  

\begin{align}
\label{eq:PoicareLemmaaufTstarM}
 \delta \delta^{-1} + \delta^{-1} \delta + \sigma =
             \id.
\end{align}
        
    \end{compactenum}
\end{lemma}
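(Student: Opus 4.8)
The plan is to prove all four assertions from the explicit local coordinate formulas for $\delta$, $\delta^{\ast}$, and to reduce the statements in (ii)--(iv) to computations on homogeneous elements of bidegree $(n,m)$, i.e.\ on $(W^{n}\otimes\Lambda^{m})_{p}$. For (i) I would first verify that $\delta$ does not depend on the chart: writing $\delta(f\otimes\alpha)=\sum_{i} i_{s}(\tfrac{\del}{\del x^{i}})f\otimes \de x^{i}\wedge\alpha$ and passing from coordinates $x$ to $y$, the symmetric insertion $i_{s}(\tfrac{\del}{\del x^{i}})$ transforms with $\tfrac{\del y^{b}}{\del x^{i}}$ while the wedge factor $\de x^{i}$ transforms with $\tfrac{\del x^{i}}{\del y^{a}}$; summing over $i$ produces $\tfrac{\del y^{b}}{\del x^{i}}\tfrac{\del x^{i}}{\del y^{a}}=\delta^{b}_{a}$, so the same expression reappears in the $y$-chart. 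The identical cancellation handles $\delta^{\ast}$, and since $\delta^{-1}$ equals $\delta^{\ast}$ divided by the intrinsically defined number $n+m$ (read off from the manifestly global operators $\degs$ and $\dega$), it too is globally defined.

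For (ii) and (iii) I would argue as follows. From the local formulas one reads off that $\delta$ sends $(W^{n}\otimes\Lambda^{m})_{p}$ into $(W^{n-1}\otimes\Lambda^{m+1})_{p}$ and $\delta^{\ast}$ does the reverse; evaluating the commutators with $\degs$ and $\dega$ on a homogeneous element, where these operators act as the scalars $n$ and $m$, yields the degree-shift relations of (ii) at once. The nilpotency in (iii) rests on a clash of symmetries: in $\delta^{2}(f\otimes\alpha)=\sum_{i,j} i_{s}(\tfrac{\del}{\del x^{i}})i_{s}(\tfrac{\del}{\del x^{j}})f\otimes \de x^{j}\wedge\de x^{i}\wedge\alpha$ the double symmetric insertion is symmetric in $(i,j)$ while $\de x^{j}\wedge\de x^{i}$ is antisymmetric, so the sum vanishes; the same argument, now with the symmetric product $\vee$ against the antisymmetric insertion $i_{a}$, gives $(\delta^{\ast})^{2}=0$. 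Finally $(\delta^{-1})^{2}=0$ follows because $\delta^{\ast}$ preserves the total degree $n+m$, so $(\delta^{-1})^{2}$ is a degree-constant multiple of $(\delta^{\ast})^{2}$.

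The substance of the lemma is (iv), which I would reduce to the anticommutator identity $\delta\delta^{\ast}+\delta^{\ast}\delta=(n+m)\,\id$ on $(W^{n}\otimes\Lambda^{m})_{p}$. On the scalar part $n+m=0$ the projection $\sigma$ is the identity while both $\delta$ and $\delta^{-1}$ vanish, so the Poincar\'e identity is trivial. For $n+m>0$ one has $\sigma=0$, and since the total degree $n+m$ is preserved by $\delta^{\ast}$, one finds $\delta^{-1}=\tfrac{1}{n+m}\delta^{\ast}$ both on $(W^{n}\otimes\Lambda^{m})_{p}$ and on the image of $\delta$, whence $\delta\delta^{-1}+\delta^{-1}\delta=\tfrac{1}{n+m}\bigl(\delta\delta^{\ast}+\delta^{\ast}\delta\bigr)$, and the anticommutator identity finishes the proof.

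To obtain that anticommutator I would decompose $\delta=\sum_{i} i_{s}(\tfrac{\del}{\del x^{i}})\otimes(\de x^{i}\wedge\,\cdot\,)$ and $\delta^{\ast}=\sum_{j}(\de x^{j}\vee\,\cdot\,)\otimes i_{a}(\tfrac{\del}{\del x^{j}})$ and commute them past each other using the canonical relations $i_{s}(\tfrac{\del}{\del x^{i}})(\de x^{j}\vee)-(\de x^{j}\vee)i_{s}(\tfrac{\del}{\del x^{i}})=\delta^{j}_{i}$ on the symmetric factor and $i_{a}(\tfrac{\del}{\del x^{j}})(\de x^{i}\wedge)+(\de x^{i}\wedge)i_{a}(\tfrac{\del}{\del x^{j}})=\delta^{i}_{j}$ on the exterior factor. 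The terms carrying both annihilation operators reassemble into $-\delta^{\ast}\delta$, the mixed number-operator terms produce $\degs\otimes\id$ and $\id\otimes\dega$ (acting as $n$ and $m$), and the two purely scalar contributions coming from the two sets of canonical relations each amount to $\dim M$ but enter with opposite signs. I expect this cancellation of the dimension-dependent terms to be the one delicate point of the whole argument; once it is verified the anticommutator collapses to $(n+m)\,\id$ and (iv) is established.
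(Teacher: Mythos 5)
Your proposal is correct, and it supplies a proof that the paper itself omits: the chapter on the \Name{Fedosov} construction explicitly defers essentially all proofs to \citet{fedosov:1996a} and \citep{neumaier:2001a,waldmann:2004a:script}, so there is no in-paper argument to compare against; your route --- chart-independence by the chain rule at the point $p$, nilpotency from the clash of symmetric insertions against antisymmetric wedge factors, and the Hodge-type identity $\delta\delta^{\ast}+\delta^{\ast}\delta=(n+m)\id$ --- is the standard one. Two remarks. First, the step you single out as delicate is in fact automatic: if in $\delta\delta^{\ast}$ you normal-order only the symmetric factor, using $i_{s}(\tfrac{\del}{\del x^{i}})(\de x^{j}\vee\,\cdot\,)=(\de x^{j}\vee\,\cdot\,)\,i_{s}(\tfrac{\del}{\del x^{i}})+\delta^{j}_{i}$, and in $\delta^{\ast}\delta$ only the antisymmetric factor, the Kronecker deltas contract directly into $\sum_{i}(\de x^{i}\wedge)\,i_{a}(\tfrac{\del}{\del x^{i}})=\dega$ and $\sum_{i}(\de x^{i}\vee)\,i_{s}(\tfrac{\del}{\del x^{i}})=\degs$ --- precisely the local formulas the paper records for the degree maps --- so that $\delta\delta^{\ast}+\delta^{\ast}\delta=\degs+\dega$ with no dimension-dependent term ever appearing; the $\pm\dim M$ cancellation you anticipate only arises if one insists on fully normal-ordering both tensor factors, and it does work out as you predict. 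An even shorter route, which is the one used in the cited references: $\delta$ and $\delta^{\ast}$ are odd superderivations of the undeformed product $\mu$, hence their anticommutator is an even derivation, and it suffices to evaluate it on the generators $\de x^{i}\otimes 1$ and $1\otimes\de x^{i}$, where it acts as the identity. Second, if you carry out your computation in (ii) with the convention $[A,B]=AB-BA$, you will obtain $[\degs,\delta]=-\delta$ and $[\dega,\delta]=+\delta$ (and the opposite signs for $\delta^{\ast}$), i.e.\ the signs opposite to those displayed in the lemma; the verbal statement --- $\delta$ lowers the symmetric and raises the antisymmetric degree --- is the correct one, so do not be alarmed when your (correct) answer disagrees with the printed commutator formulas.
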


Insbesondere der letzten Gleichung wird eine wichtige Rolle
zukommen. Zuerst wollen wir jedoch eine Deformation f"ur die Algebra
$((W \otimes \Lambda^{\bullet}), \mu)$ angeben.  

\begin{definition}[Das Faserweise \Name{Weyl-Moyal}-Produkt]
\label{Definition:FaserweisesWeylMoyalProdukt}
\index{Weyl-Moyal-Produkt@\Name{Weyl-Moyal}-Produkt!faserweises}
Das {\em faserweise \Name{Weyl-Moyal}-Produkt} $\fpweyl$ ist f"ur
$a,b \in (W \otimes \Lambda^{\bullet})_{p}$ durch        
\begin{align}
        a \fpweyl b = \mu \circ \eu^{\tfrac{\im \lambda}{2}
          \Lambda^{k\ell}_{p} i_s\left(\tfrac{\del}{\del x^k} \right)
          \otimes i_s\left(\tfrac{\del}{\del x^\ell} \right)} (a\otimes b)  
    \end{align}
definiert. 
\end{definition}

\begin{lemma}[Eigenschaften des faserweisen \Name{Weyl-Moyal}-Produkts]
\label{Lemma:EigenschaftenFaserweisesWeylMoyalProdukt}
Das in Definition~\ref{Definition:FaserweisesWeylMoyalProdukt}
definierte Produkt hat folgende Eigenschaften.
    \begin{compactenum}
        \item Das faserweise \Name{Weyl-Moyal}-Produkt ist global definiert und
            eine (koordinatenunabh"angige) Deformation von
            $((W\otimes \Lambda)_{p}, \mu)$. 
        \item Das faserweise \Name{Weyl-Moyal}-Produkt ist bez"uglich
            des schiefsymmetrischen Teils und des totalen Grads
            $\Deg$ gradiert, allerdings nicht f"ur
            $\degs$ und $\degl$. Die Abbildungen
            $\delta$, $\dega$ und $\Deg$ sind -- anders formuliert --
            $\field{Z}_{2}$-gradierte Derivationen von $\fpweyl$ von
            schiefsymmetrischen Grad $+1$ f"ur $\delta$ bzw.~$0$
            f"ur $\dega$ und $\Deg$.
    \end{compactenum} 
\end{lemma}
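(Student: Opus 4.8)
The plan is to work at the fixed base point $p$, where the components $\Lambda^{k\ell}_{p}$ are mere constants, and to exploit throughout the exponential presentation $a\fpweyl b=\mu\circ\exp(\tfrac{\im\lambda}{2}P)(a\otimes b)$ with the bidifferential operator $P:=\Lambda^{k\ell}_{p}\,i_{s}(\tfrac{\del}{\del x^{k}})\otimes i_{s}(\tfrac{\del}{\del x^{\ell}})$ acting on $(W\otimes\Lambda^{\bullet})_{p}\otimes(W\otimes\Lambda^{\bullet})_{p}$. First I would settle global definedness for {\it i.)}: the symmetric insertion $i_{s}(X)$ is defined invariantly for $X\in T_{p}M$, and $\Lambda^{k\ell}_{p}\tfrac{\del}{\del x^{k}}\otimes\tfrac{\del}{\del x^{\ell}}$ is just the coordinate-free Poisson bivector $\Lambda_{p}$. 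Contracting $\Lambda_{p}$ into the two tensor slots via $i_{s}$ therefore yields a well-defined operator $P$; concretely, under a change of chart the contravariant transformation of $\Lambda^{k\ell}$ cancels the transformation of the two insertions, so $P$, and hence $\fpweyl$, is chart-independent. Since $\exp(\tfrac{\im\lambda}{2}P)=\id+O(\lambda)$, the term of order $\lambda^{0}$ is $\mu\circ\id=\mu$, so $\fpweyl$ is a deformation of $\mu$.

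To finish {\it i.)} I would prove associativity. As $p$ is fixed and the $\Lambda^{k\ell}_{p}$ are constants, this is literally the constant-coefficient Moyal product on $S^{\bullet}T^{*}_{p}M$. Writing $P_{12},P_{23},P_{13}$ for the copies of $P$ acting on the indicated pair of slots of the triple tensor product, one checks that they commute pairwise, because the symmetric insertions on distinct slots commute and $\Lambda_{p}$ is constant. The standard Moyal computation then collapses both $(a\fpweyl b)\fpweyl c$ and $a\fpweyl(b\fpweyl c)$ to $\mu^{(3)}\circ\exp(\tfrac{\im\lambda}{2}(P_{12}+P_{23}+P_{13}))$, so $\fpweyl$ is an associative deformation.

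For {\it ii.)} I would use the coproduct criterion: a linear map $\mathsf{D}$ is a $\field{Z}_{2}$-graded derivation of $\fpweyl$ precisely when $\mathsf{D}\circ\fpweyl=\fpweyl\circ\mathsf{D}^{(2)}$, where $\mathsf{D}^{(2)}=\mathsf{D}\otimes\id+\id\otimes\mathsf{D}$ (with Koszul signs when $\mathsf{D}$ is odd) is its extension to the tensor square. Since $\dega$ and $\Deg$ are already derivations of $\mu$, i.e. $\mathsf{D}\circ\mu=\mu\circ\mathsf{D}^{(2)}$, it suffices to commute $\mathsf{D}^{(2)}$ through $\exp(\tfrac{\im\lambda}{2}P)$. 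Now $P$ leaves the antisymmetric degree untouched, so $[\dega^{(2)},P]=0$; and $P$ lowers the symmetric degree by one in each slot while its prefactor carries one power of $\lambda$, so $[\degs^{(2)},P]=-2P$ and $[\degl^{(2)},\tfrac{\im\lambda}{2}P]=+\tfrac{\im\lambda}{2}P$, whence $[\Deg^{(2)},\tfrac{\im\lambda}{2}P]=(-2+2)\tfrac{\im\lambda}{2}P=0$. Consequently $\dega^{(2)}$ and $\Deg^{(2)}$ commute with the exponential, giving $\dega\circ\fpweyl=\fpweyl\circ\dega^{(2)}$ and $\Deg\circ\fpweyl=\fpweyl\circ\Deg^{(2)}$; thus both are derivations of antisymmetric degree $0$ and $\fpweyl$ is graded for $\dega$ and $\Deg$. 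The same computation shows $[\degs^{(2)},P]=-2P\neq0$ and that $\degl$ fails to commute with the $\lambda$-prefactor, so neither $\degs$ nor $\degl$ is a derivation and the product is not graded for them.

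Finally, for $\delta$ the identical scheme applies with a Koszul sign: $\delta$ is (as is readily verified on generators) an odd super-derivation of $\mu$, its insertion part $i_{s}(\tfrac{\del}{\del x^{j}})$ commutes with the insertions constituting $P$, and its wedge factor $\de x^{j}\wedge$ acts on the antisymmetric slot and hence commutes with the purely symmetric $P$; so $\delta^{(2)}$ commutes with $\exp(\tfrac{\im\lambda}{2}P)$, yielding $\delta\circ\fpweyl=\fpweyl\circ\delta^{(2)}$, that is $\delta(a\fpweyl b)=(\delta a)\fpweyl b+(-1)^{\dega a}\,a\fpweyl(\delta b)$, a graded derivation raising the antisymmetric degree by $1$. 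The hard part will be precisely this last commutation: one must check carefully that the symmetric insertions commute among themselves and that the insertion/wedge operators on the two different tensor factors commute with all Koszul signs accounted for, so that $\delta^{(2)}$ genuinely passes through the exponential without producing spurious terms; establishing the clean coproduct identities $\mathsf{D}\circ\mu=\mu\circ\mathsf{D}^{(2)}$ together with this sign bookkeeping is where the real work lies.
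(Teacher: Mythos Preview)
Your proposal is correct and follows the standard approach; the paper itself does not supply a proof for this lemma but explicitly defers to the references (\Name{Fedosov}, \citet{neumaier:2001a}, \citet{waldmann:2004a:script}), where exactly this argument via the exponential presentation and the commutator identities $[\mathsf{D}^{(2)},\tfrac{\im\lambda}{2}P]$ is carried out. Your treatment of the $\Deg$-case via the cancellation $-2+2=0$ and of $\delta$ via the separation into symmetric insertions and antisymmetric wedges is precisely what is needed; the sign bookkeeping you flag as the ``hard part'' is indeed routine once one observes that $P$ has antisymmetric degree~$0$ in both tensor slots, so no Koszul signs arise when commuting $\delta^{(2)}$ past it.
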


Aufgrund der Gradierung des faserweisen Produktes $\fpweyl$ wollen wir
festlegen, was {\em homogene Elemente} bez"uglich des $\Deg$-Grades sind.

\begin{dlemma}[Homogene Elemente bez"uglich des totalen Grades
    $\Deg$] 
\index{Elemente!homogene}
    Wir definieren die homogenen Elemente bez"uglich des totalen
    Grades vom Grad $k$ als
    \begin{align}
        \label{eq:HomogeneElementeDeg}
W^{(k)}_{p} = \{a \in W_{p}| \Deg a = ka\}
    \end{align}
und weiter $(W^{(k)} \otimes \Lambda)_{p}$. Wir k"onnen $W_{p}$ als
kartesisches Produkt aller $W_p^{(k)}$ auffassen und damit ist jedes
  Element $a\in W_{p}$ von der Form $a=\sum_{k=0}^{\infty}
  a^{(k)}$, und jedes homogene Element in $W^{(k)}_{p}$ ist von der
  Form

  \begin{align}
      a^{(k)}= \sum_{r=0}^{\lfloor k/2 \rfloor} \lambda^{r} a_{k-2r}^{(k)},
  \end{align}

\end{dlemma}
wobei die Klammer $\lfloor \cdot \rfloor: \field{R} \to \field{Z}$
immer auf die n"achste 
ganze Zahl abrundet. $W_{p}$ ist damit {\em formal
  $\Deg$-gradiert}, und Elemente vom $\Deg$-Grad $\ge k$ bezeichnen wir mit 
\begin{align}
(W_{k} \otimes \Lambda)_{p} = \bigcup_{n=k}^{\infty} (W^{(n)} \otimes
\Lambda).    
\end{align}

Man definiert mittels der schiefsymmetrischen Gradierung $\dega$ einen
$\field{Z}_2$-gradierten Kommutator bez"uglich des deformierten
Produktes $\fpweyl$, der gegeben ist durch 
\begin{align}
   \ad(a)(b) = [a,b]_{\sss \fpweyl} = a \fpweyl b - (-1)^{mn} b \fpweyl a
\end{align}
mit $a\in (W \otimes \Lambda^{m})_{p}$ und $b\in (W \otimes
\Lambda^{n})_{p}$. Das deformierte Produkt $\fpweyl$ ist im Gegensatz
zu $\mu$ nicht mehr superkommutativ. Als {\em
  Zentrum}\index{Zentrum} bezeichnen wir alle Elemente $a$ f"ur die $\ad(a) =
0$ ist. Dies ist immer genau dann der Fall, wenn $\degs a=0$. 

\begin{bemerkung}[Quasiinnere Derivation]
\index{Derivation!quasiinnere}
    Wir werden $\field{Z}_2$-gradierte Derivationen von $\fpweyl$ in der
    Form $\tfrac{\im}{\lambda} \ad(a)$ schreiben. Dies ist
    wohldefiniert, da die unterste Ordnung von $\lambda$ im Kommutator
    verschwindet. Man nennt eine solche Derivation eine {\em
      quasiinnere Derivation}. 
\end{bemerkung}

\begin{definition}[B"undel aller formalen \Name{Weyl}-Algebren]
\label{Definition:BuendelFormaleWeylAlgebren}
\index{formales Weyl-Algebrabuendel@formales \Name{Weyl}-Algebrab\"undel}
    Wir definieren das B"undel aller formalen \Name{Weyl}-Algebren
    f"ur $p \in M$ als
    \begin{align}
        W=\bigcup_{p\in M} W_{p} \qquad \text und \qquad W \otimes
        \Lambda^{\bullet} = \bigcup_{p \in M} (W \otimes
        \Lambda^{\bullet})_{p}.
    \end{align}
\end{definition}

Das so definierte Vektorb"undel hat unendlichdimensionale
Fasern. Wir k"onnen eine glatte Struktur f"ur $W$ und $(W \otimes
\Lambda)$ angeben und daher glatte Schnitte definieren, und schreiben 
\begin{align}
\alg{W}& =\left( \prod_{k=0}^{\infty}
    \schnitt{S^{k}T^{\ast}M}\right)\FP, \\
\label{eq:TensoralgebraFedosov1} 
\alg{W} \otimes \mit{\Lambda}^{\bullet} & = \left(\prod_{k=0}^{\infty}
    \schnitt{S^{k}T^{\ast}M \otimes \Lambda^{\bullet}T^{\ast}M} \right)\FP.  
\end{align}

Wir setzen alle bisher eingef"uhrten punktweisen Abbildungen, d.~h.~$\dega$,
$\degs$, $\degl$, $\Deg$, $\delta$, $\delta^{\ast}$, $\delta^{-1}$,
$\sigma$, $\mu$ und $\fpweyl$ auf $\alg{W}$ bzw.~$\alg{W}\otimes
\mit{\Lambda}^{\bullet}$ fort. Damit ist $(\alg{W},\fpweyl)$ eine assoziative
Algebra "uber $\fieldf{C}$, und $\Deg$ wird zu einer Derivation
dieser. Es gilt weiterhin die wichtige Gleichung 
\begin{align}
    \delta \delta^{-1} + \delta^{-1} \delta + \sigma = \id \quad
    \text{mit} \quad \sigma: \alg{W} \otimes \mit{\Lambda}^{\bullet} \to \Cinff{M}.
\end{align}

\subsubsection{Die \Name{Fedosov}-Derivation $\alg{D}$}
\index{Fedosov-Derivation@\Name{Fedosov}-Derivation}
Nachdem wir nun die zugrundeliegende Algebra definiert und deren
wichtige Eigenschaften er\-l"au\-tert haben, wollen wir uns in einem
zweiten Schritt der Konstruktion eines Sternproduktes widmen. Dies
bedeutet konkret, da"s wir eine Unteralgebra von $\alg{W} \otimes
\mit{\Lambda}^{\bullet}$ suchen, die mittels $\sigma$ in Bijektion zu
$\Cinff{M}$ ist. Das mit $\sigma$ zur"uckgezogene faserweise Produkt
$\fpweyl$ liefert dann ein $\fieldf{C}$-lineares Produkt, das sich bei
geeigneter Wahl der Unteralgebra von $\alg{W} \otimes \mit{\Lambda}^{\bullet}$
als ein Sternprodukt herausstellt. Diese Algebra erh"alt man als
Kern der $\field{Z}_2$-gradierten {\em \Name{Fedosov}-Derivation}.

Wir ben"otigen f"ur die weitere Konstruktion einen torsionsfreien,
symplektischen Zusammenhang\index{Zusammenhang!symplektischer} $\nabla$, den es nach den
Lemmata~\ref{Lemma:TorsionsfreierZusammenhang} und
\ref{Lemma:HessTrick} auf 
jeder symplektischen Mannigfaltigkeit gibt. Die Kr"ummung $\hat{R}$
ist in Kapitel~\ref{sec:SymplektischeZusammenhaenge}
beschrieben. Die {\em symplektische
  Kr"ummung}\index{Kruemmung@Kr\"ummung!symplektische} (vergleiche 
Lemma~\ref{Lemma:SymplektischeKruemmung})   
\begin{align}
    \label{eq:SymplektischeKruemmungFedosovKonstruktion}
  R(Z,U,X,Y) = \omega(Z,\hat{R}(X,Y)U)   
\end{align}
ist ein Element in
 $R\in \schnitt{S^{2} T^{\ast}M \otimes \Lambda^2 T^{\ast} M}$ und
 damit als ein Element der Algebra $(\alg{W} \otimes \mit{\Lambda}^{\bullet})$ zu
 interpretieren. Es gilt 

 \begin{align}
     \label{eq:GradSymplektischeKruemmung}
     \degs R = 2R, \qquad \dega R = 2R, \qquad \degl R = 0.
 \end{align}

Wir verwenden nun den Zusammenhang, um ein kovariantes Differential zu
definieren.

\begin{lemma}[Kovariantes Differential $D$]
\index{Differential!kovariantes}
Sei $\nabla$ ein torsionsfreier, symplektischer Zusammenhang. Das {\em
  kovariante Differential} 
\begin{align}
    \label{eq:KovariantesDifferential}
    D: \alg{W} \otimes \mit{\Lambda}^{\bullet} \to  \alg{W} \otimes
    \mit{\Lambda}^{\bullet+1} 
\end{align}
ist in lokalen Koordinaten mittels 
\begin{equation}
\begin{aligned}
    \label{eq:KovariantesDifferentialFormel}
    D(f \otimes \alpha) & = \de x^{i} \wedge \lconn[\frac{\del}{\del x^{i}}] (f\otimes \alpha) \\ & = \lconn[\frac{\del}{\del x^{i}}] f
    \otimes \de x^{i} \wedge \alpha + f \otimes \de x^{i} \wedge
    \lconn[\frac{\del}{\del x^{i}}] \alpha \\ & =
    \lconn[\frac{\del}{\del x^{i}}] f \otimes \de x^{i} \wedge \alpha
    +f \otimes \de \alpha.
\end{aligned}
\end{equation}
kartenunabh"angig und global definiert. 
\end{lemma}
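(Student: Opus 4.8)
The plan is to realize $D$ as the composition of two manifestly coordinate-free operations, so that global well-definedness is immediate, and then to read off the three displayed identities from the Leibniz rule and from torsion-freeness. Throughout write $\del_{i}=\del/\del x^{i}$.

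First I would observe that a symplectic connection is in particular a linear connection on $T^{\ast}M$, and that such a connection extends canonically, by the Leibniz rule, to a connection on every associated tensor bundle; in particular it induces a covariant derivative on $S^{k}T^{\ast}M\otimes\Lambda^{\bullet}T^{\ast}M$ and, after passage to formal power series in $\lambda$, an operator $\nabla$ on $\alg{W}\otimes\mit{\Lambda}^{\bullet}$ with values in $\schnitt{T^{\ast}M}\otimes(\alg{W}\otimes\mit{\Lambda}^{\bullet})$. This $\nabla$ is global and chart-independent by construction. I would then compose it with the intrinsic bundle map $m$ that wedges the leading $T^{\ast}M$-factor into the antisymmetric part, $m\colon\theta\otimes(f\otimes\alpha)\mapsto f\otimes(\theta\wedge\alpha)$. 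Since both $\nabla$ and $m$ are globally defined, so is $D:=m\circ\nabla$. In a chart $(U,x)$ one has $\nabla s=\de x^{i}\otimes\nabla_{\del_{i}}s$, and applying $m$ gives exactly $\de x^{i}\wedge\nabla_{\del_{i}}s$, which is the first line of the display; hence that local formula represents a chart-independent operator, and this already settles the global statement.

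It then remains to verify the two further equalities, which is a purely local computation. The second line is the Leibniz rule $\nabla_{\del_{i}}(f\otimes\alpha)=(\nabla_{\del_{i}}f)\otimes\alpha+f\otimes(\nabla_{\del_{i}}\alpha)$ for the tensor-product connection, combined with the fact that $\de x^{i}\wedge(\cdot)$ acts only on the $\Lambda$-factor; wedging term by term produces $\nabla_{\del_{i}}f\otimes\de x^{i}\wedge\alpha+f\otimes\de x^{i}\wedge\nabla_{\del_{i}}\alpha$. The third line is where the torsion-free hypothesis enters: one must show $\de x^{i}\wedge\nabla_{\del_{i}}\alpha=\de\alpha$ for $\alpha\in\schnitt{\Lambda^{\bullet}T^{\ast}M}$.

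The main obstacle is therefore precisely this last identity. For a one-form $\alpha=\alpha_{j}\,\de x^{j}$ one has $\nabla_{\del_{i}}\alpha=(\del_{i}\alpha_{j}-\Gamma^{k}_{ij}\alpha_{k})\,\de x^{j}$, so that $\de x^{i}\wedge\nabla_{\del_{i}}\alpha=(\del_{i}\alpha_{j}-\Gamma^{k}_{ij}\alpha_{k})\,\de x^{i}\wedge\de x^{j}$; the Christoffel contribution vanishes since $\Gamma^{k}_{ij}=\Gamma^{k}_{ji}$ by the vanishing of the torsion, leaving $\del_{i}\alpha_{j}\,\de x^{i}\wedge\de x^{j}=\de\alpha$. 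The base case in degree zero is $\de x^{i}\wedge\nabla_{\del_{i}}g=(\del_{i}g)\,\de x^{i}=\de g$, and the general case follows by the graded Leibniz rule for both $\nabla$ and $\de$ together with induction on the form degree. Once this identity is in hand all three lines of the display hold, and the intrinsic description $D=m\circ\nabla$ established at the outset shows $D$ to be globally defined, completing the argument.
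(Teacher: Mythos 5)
Your proof is correct. Note that the paper itself offers no proof of this lemma — the Fedosov chapter explicitly defers all proofs to the cited sources — so there is no in-paper argument to compare against; your route (defining $D$ intrinsically as the wedge map composed with the induced tensor-bundle connection to get chart-independence, the Leibniz rule for the second line, and the cancellation of the symmetric Christoffel symbols $\Gamma^{k}_{ij}=\Gamma^{k}_{ji}$ against the antisymmetry of $\de x^{i}\wedge\de x^{j}$ to obtain $\de x^{i}\wedge\nabla_{\!\del/\del x^{i}}\alpha=\de\alpha$, extended by induction on the form degree) is precisely the standard argument found in those references and fills the gap completely.
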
 

Wir wollen dessen Eigenschaften in der folgenden Proposition zusammenfassen. 

\begin{proposition}[Eigenschaften des kovarianten Differentials $D$]
    Das kovariante Differential $D: \alg{W} \otimes
    \mit{\Lambda}^{\bullet} \to  \alg{W} \otimes
    \mit{\Lambda}^{\bullet+1}$ hat die folgenden Eigenschaften:
    \begin{compactenum}
    \item $[\degs,D]=[\degl,D]=[\Deg,D]=0$ und $[\dega,D]=D$.
    \item $D$ ist eine Superderivation des undeformierten Produkts $\mu$ und
        des faserweisen \Name{Weyl\--Mo\-yal}-Pro\-dukts $\fpweyl$
        \begin{align}
            \label{eq:DSuperderivationWeylMoyal}
            D(a \fpweyl b) = (Da) \fpweyl b + (-1)^{k} a \fpweyl (Db), 
        \end{align}
mit $a \in \alg{W} \otimes \mit{\Lambda}^{k}$ und $b \in \alg{W}
\otimes \mit{\Lambda}^{\bullet}$. 
    \item $DR=0$, $[\delta,D] = \delta D + D \delta = 0$ und
        \begin{align}
            \label{eq:DKommutator}
         D^{2}= [D,D] = \frac{\im}{\lambda} \ad(R). 
        \end{align}
     \end{compactenum}
\end{proposition}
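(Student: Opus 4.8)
The plan is to verify the three groups of assertions by reducing everything to the defining local formula \eqref{eq:KovariantesDifferentialFormel} and exploiting that $\nabla$ is a torsion-free symplectic connection. First I would dispose of the grading statements in \textit{i.)}. From $D(f\otimes\alpha)=\lconn[\frac{\del}{\del x^{i}}]f\otimes\de x^{i}\wedge\alpha+f\otimes\de\alpha$ one reads off that the covariant derivative acts fibrewise as a derivation of the symmetric algebra over $T^{\ast}M$, hence leaves $\degs$ unchanged, while the factor $\de x^{i}\wedge$ raises the antisymmetric degree by exactly one. Since $D$ neither creates nor destroys powers of $\lambda$, the relation $[\degl,D]=0$ is immediate, and $[\degs,D]=0$ together with $\Deg=\degs+2\degl$ yields $[\Deg,D]=0$. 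The identity $[\dega,D]=D$ is just the statement that $D$ is homogeneous of antisymmetric degree $+1$. Each of these I would justify in one line directly from the local expression.

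For \textit{ii.)} the key input is the compatibility of $\nabla$ with the symplectic data. The superderivation property for the undeformed product $\mu$ follows from the fibrewise Leibniz rule of $\lconn$ on symmetric tensors combined with the graded Leibniz rule for $\wedge$; the sign $(-1)^{k}$ appears precisely because $D$ carries antisymmetric degree one and must be commuted past the factor $a$ of antisymmetric degree $k$. To upgrade this to a derivation of $\fpweyl$ I would use that a symplectic connection satisfies $\nabla\omega=0$, equivalently $\nabla\Lambda=0$. Therefore the bidifferential operator $\Lambda^{k\ell}_{p}\,i_{s}(\tfrac{\del}{\del x^{k}})\otimes i_{s}(\tfrac{\del}{\del x^{\ell}})$ underlying Definition~\ref{Definition:FaserweisesWeylMoyalProdukt} commutes with $\lconn$, so $D$ passes through the exponential and the Leibniz rule for $\mu$ transports verbatim to one for $\fpweyl$, with the same signs.

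For \textit{iii.)} I would treat the three identities in turn. The vanishing $DR=0$ is the second Bianchi identity for $\nabla$ written in the Weyl-algebra language, so it reduces to a standard curvature identity. The anticommutator relation $[\delta,D]=\delta D+D\delta=0$ is where torsion-freeness enters decisively: expanding both $\delta D$ and $D\delta$ in local coordinates, the terms carrying the Christoffel symbols assemble into the torsion tensor, which vanishes, while the purely derivative terms cancel by symmetry of second partials. The hard part will be the curvature formula $D^{2}=\tfrac{\im}{\lambda}\ad(R)$. Here I would compute the square $D^{2}$ from the local formula; the second covariant derivative, antisymmetrized by the two wedge factors, produces exactly the symplectic curvature $R(Z,U,X,Y)=\omega(Z,\hat{R}(X,Y)U)$ of \eqref{eq:SymplektischeKruemmungFedosovKonstruktion}. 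The genuinely delicate step is to recognise that the resulting first-order fibrewise operator equals the inner derivation $\tfrac{\im}{\lambda}\ad(R)$ with respect to $\fpweyl$: this requires matching the contraction of the curvature against the symmetric insertions with the leading term of the $\fpweyl$-commutator, using $\Lambda^{ij}=-\omega^{ij}$ and the degree bookkeeping $\degs R=\dega R=2R$ from \eqref{eq:GradSymplektischeKruemmung}. Once this identification is in place, the superderivation property established in \textit{ii.)} confirms that $\tfrac{\im}{\lambda}\ad(R)$ is indeed a derivation, as it must be, providing a consistency check on the computation.
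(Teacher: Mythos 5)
Your sketch is correct, and it is precisely the standard argument: the paper itself gives no proof of this proposition but explicitly defers to the literature (\citet{fedosov:1994a,fedosov:1996a}, \citet{neumaier:2001a} and \citep{waldmann:2004a:script}), and what you outline — reading off the gradings from the local formula, using $\nabla\Lambda=0$ so that the contracted bidifferential operator $\Lambda^{k\ell}\,i_{s}(\tfrac{\del}{\del x^{k}})\otimes i_{s}(\tfrac{\del}{\del x^{\ell}})$ commutes with $\lconn[\frac{\del}{\del x^{i}}]\otimes\id+\id\otimes\lconn[\frac{\del}{\del x^{i}}]$, torsion-freeness for $[\delta,D]=0$, the second Bianchi identity for $DR=0$, and the identification of the antisymmetrized second covariant derivative with $\tfrac{\im}{\lambda}\ad(R)$ — is exactly the route taken there. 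One small sharpening for the last step: since $\degs R=2$, the $\fpweyl$-commutator $[R,\cdot\,]_{\fpweyl}$ terminates after the second order of the exponential and the zeroth- and second-order contributions cancel by (super)symmetry, so the first-order term is not merely the leading term but the \emph{only} surviving one, which is why $\tfrac{\im}{\lambda}\ad(R)$ is an honest $\lambda$-independent first-order fibrewise operator.
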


Wie bereits am Anfang dieses Kapitels angedeutet, wollen wir eine
Superderivation $\alg{D}$
konstruieren\index{Fedosov-Derivation@\Name{Fedosov}-Derivation},
deren Kern via $\sigma$ in 
Bijektion zu $\Cinff{M}$ ist. Die Superderivation $\delta$ w"are ein
solcher Kandidat, allerdings ist das so geerbte Produkt auf
$\Cinff{M}$ nur das
punktweise Produkt. Ebenso stellt sich $-\delta +D$ als unzureichend
heraus, obwohl es sich immer noch um eine Superderivation handelt. Der
Kern ist jedoch zu klein, da $D^2 = \tfrac{\im}{\lambda}
\ad(R) \neq 0$. Eine Verbesserung dieser Formel erhalten wir durch 

\begin{align}
    \label{eq:FedosovDerivation}
    \alg{D}=-\delta+D+\frac{\im}{\lambda}\ad(r),
\end{align}

wobei $r \in \alg{W} \otimes \mit{\Lambda}^{1}$. Damit ist $\alg{D}:
\alg{W} \otimes \mit{\Lambda}^{\bullet}  \to \alg{W} \otimes
\mit{\Lambda}^{\bullet+1}$. Man kann Gleichung
\eqref{eq:FedosovDerivation} als eine nach dem totalen Grad
entwickelte Summe auffassen, da $\delta$ den totalen Grad um $1$
verringert, $D$ den totalen Grad konstant h"alt, und $\ad(r)$ den
Grad um mehr als $0$ erh"ohen soll. Damit ist $r$ vom
Deg-Grad $\ge +3$, d.~h.
\begin{align}
    \label{eq:r}
    r=\sum_{k=3}^{\infty} r^{k} \in \alg{W}_{3} \otimes
    \mit{\Lambda}^{1} \qquad \text{mit}\quad r^{k} \in \alg{W}^{(k)}
    \otimes \mit{\Lambda}^{1}.
\end{align}

Die Kr"ummung von $\nabla$ stellt eine Obstruktion f"ur die Gr"o"se
des Kerns dar, daher wollen wir eine Formel f"ur $\alg{D}^2$
berechnen. Wir erhalten f"ur ein beliebiges $r \in \alg{W}_{3}
\otimes \mit{\Lambda}^{1}$ 
\begin{align}
    \label{eq:FedosovDerivationeigenschaft1}
    \alg{D}^{2}=\frac{1}{2}[\alg{D}, \alg{D}] = \frac{\im}{\lambda}
    \ad( -\delta r +R + Dr + \tfrac{\im}{\lambda} r \fpweyl
          r )  
\end{align}
sowie 

\begin{align}
    \label{eq:FedosovDerivationeigenschaft2}
    \alg{D}\left( -\delta r +R + Dr + \tfrac{\im}{\lambda} r
\fpweyl r \right) =0. 
\end{align}

Wir werden nun den entscheidenden Satz zur \Name{Fedosov}-Derivation
formulieren. 

\begin{satz}[\Name{Fedosov}-Derivation $\alg{D}$]
    \label{Satz:FedosovKonstruktion}
Sei $\Omega=\sum_{k=1}^{\infty} \lambda^{k} \Omega_{k} \in \lambda
\schnitt{\Lambda^{2}T^{\ast}M}\FP$ eine geschlossene Zweiform, $\de
\Omega=0$, und sei $s\in \alg{W}_{3} \otimes \Lambda^{0}$ mit
$\sigma(s) = 0$. Es gibt ein eindeutiges Element $r \in \alg{W}_{3}
\otimes \Lambda^{1}$ mit 
\begin{align}
    \label{eq:FedosovRekursion}
    \delta r = R + Dr + \frac{\im}{\lambda} r \fpweyl r + (1\otimes
    \Omega) \qquad \text{und} \qquad \delta^{-1}r=s.
\end{align}
in diesem Fall erf"ullt die \Name{Fedosov}-Derivation $\alg{D} =
-\delta +D +\frac{\im}{\lambda} \ad(R)$ die Gleichung $\alg{D}^2=0$.
\end{satz}

\begin{bemerkungen}
Die \Name{Fedosov}-Konstruktion ist somit von der Wahl dreier
Parameter abh"angig, dem Zusammenhang $\nabla$, der
Reihe $\Omega$ geschlossener Zweiformen und dem Element $s \in
\alg{W}_{3} \otimes \mit{\Lambda}^{0}$. Diese Gr"o"sen sind
allerdings nicht unabh"angig voneinander. Die Wahl der untersten
Ordnung von $s$ kann als Wahl
eines anderen Zusammenhangs $\nabla'$ interpretiert werden,
beziehungsweise umgekehrt kann man durch Wahl eines anderen
Zusammenhangs das gleiche Sternprodukt erhalten, wenn man $s$ in der
untersten Ordnung
anpa"st. Da die Zweiform auch frei w"ahlbar ist, kann man
insbesondere die Wahl $(1 \otimes \Omega)=0$ und $s=0$ treffen. Diese
Wahl traf auch \Name{Fedosov} in seiner urspr"unglichen Konstruktion.
Ferner kann man bei dieser Wahl eine Rekursionsformel f"ur Gleichung
\eqref{eq:FedosovRekursion} angeben
\begin{align}
    \label{eq:FedosovRekursionLoesung}
    r^{(k+3)}=\delta^{-1} \left( Dr^{(k+2)}+ \frac{\im}{\lambda}
          \sum_{m=1}^{k-1} r^{(m+2)}\fpweyl r^{(k+2-m)} \right).
\end{align}
In der Arbeit \citep{neumaier:2001a} wird gezeigt, da"s man durch
die Wahl verschiedener $\Omega$ \Name{Fedosov}-Sternprodukte in jeder
m"oglichen Klasse konstruieren kann. Dies bedeutet insbesondere, da"s jedes
Sternprodukt auf einer symplektischen Mannigfaltigkeit $(M,\omega)$
"aquivalent (im Sinne von Definition
\ref{Definition:AequivalenzSternprodukte}) zu einem
\Name{Fe\-do\-sov}-Stern\-pro\-dukt ist.  
\end{bemerkungen}

\subsubsection{Die Konstruktion des Sternprodukts}

Wir k"onnen nun den Kern von $\alg{D}$ im schiefsymmetrischen Grad
$0$ bestimmen. Aufgrund der Eigenschaft $\alg{D}^2 = 0$ ist
$\alg{D}$ als eine deformierte Version von $\delta$ aufzufassen. Im
folgenden werden wir uns die Gleichung 

\begin{align}
    \label{eq:DeformiertesPoincareLemma}
    \alg{D}\alg{D}^{-1}  + \alg{D}^{-1} \alg{D} + \frac{1}{\id -A}
    \sigma  = \id \qquad \text{mit} \quad A=[\delta^{-1}, D + \tfrac{\im}{\lambda} \ad(r)]
\end{align}
 
ansehen. Sie ist eine deformierte Version von $\delta \delta^{-1} +
\delta^{-1} \delta + \sigma = \id$ und $\alg{D}^{-1}$ ist gegeben durch
\begin{align}
    \label{eq:InversesFedosovDerivation}
    \alg{D}^{-1}=-\delta^{-1}\frac{1}{\id - \left[ \delta^{-1}, D +
          \tfrac{\im}{\lambda} \ad(r) \right]},
\end{align}
einem wohldefinierten Endomorphismus von $\alg{W} \otimes
\mit{\Lambda}^{\bullet}$.

\begin{definition}[Die \Name{Fedosov}-\Name{Taylor}-Reihe]
\label{Definition:DieFedosovTaylorReihe}
F"ur die \Name{Fedosov}-Derivation $\alg{D}= -\delta + D +
\tfrac{\im}{\lambda} \ad(r)$ ist die
\Name{Fedosov}-\Name{Taylor}-Reihe  definiert durch
\begin{align}
    \label{eq:FedosovTaylorReihe}
    \Cinff{M} \ni f \mapsto \tau(f) = \frac{1}{\id - [\delta^{-1}, D +
    \tfrac{\im}{\lambda} \ad(r)]} f \in \alg{W}
\end{align}
\end{definition}

\begin{lemma}
   Sei $f \in \Cinff{M}$, dann gilt f"ur die
   \Name{Fedosov}-\Name{Taylor}-Reihe $\tau(f)$ 
   \begin{align}
       \sigma(\tau(f)) = f
   \end{align}
und 
   \begin{align}
     \tau(f) = \sum_{n=0}^{\infty} \left[ \delta^{-1}, D +
    \frac{\im}{\lambda} \ad(r) \right]^{n} f = f + \de f \otimes 1 + \cdots. 
   \end{align}
\end{lemma}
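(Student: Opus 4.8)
The plan is to read $\tau(f)=\tfrac{1}{\id-A}f$, with the abbreviation $A:=[\delta^{-1},D+\tfrac{\im}{\lambda}\ad(r)]$ from \eqref{eq:DeformiertesPoincareLemma}, as a Neumann series $\sum_{n=0}^{\infty}A^{n}f$ and to organise every estimate through the total degree $\Deg$. First I would verify that $A$ raises $\Deg$ by at least one. Indeed $\delta^{-1}$ raises $\degs$ by one and lowers $\dega$ by one, hence raises $\Deg=\degs+2\degl$ by one; the covariant differential satisfies $[\Deg,D]=0$; and since $r\in\alg{W}_{3}\otimes\mit{\Lambda}^{1}$ has $\Deg$-degree at least three while $\fpweyl$ is $\Deg$-graded, $\ad(r)$ raises $\Deg$ by at least three, so the factor $\tfrac{1}{\lambda}$ (which lowers $\Deg$ by two) leaves $\tfrac{\im}{\lambda}\ad(r)$ raising $\Deg$ by at least one. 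Writing $B:=D+\tfrac{\im}{\lambda}\ad(r)$, which raises $\Deg$ by at least zero, the commutator $A=\delta^{-1}B-B\delta^{-1}$ raises $\Deg$ by at least one. As $\alg{W}\otimes\mit{\Lambda}^{\bullet}$ is formally $\Deg$-graded and complete, such an $A$ is topologically nilpotent, so $\tfrac{1}{\id-A}=\sum_{n}A^{n}$ is a well-defined endomorphism, exactly as is tacitly used in \eqref{eq:DeformiertesPoincareLemma}.

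The heart of the argument is the identity $A^{n}f=(\delta^{-1}B)^{n}f$ for all $n\ge 0$. Every $\lambda$-coefficient of $f\in\Cinff{M}$ has $\degs=0=\dega$, so the definition of $\delta^{-1}$ gives $\delta^{-1}f=0$; moreover $(\delta^{-1})^{2}=0$ shows that $\delta^{-1}$ annihilates all of $\bild\delta^{-1}$. Hence $Af=\delta^{-1}Bf-B\delta^{-1}f=\delta^{-1}Bf$, and if $A^{n}f=(\delta^{-1}B)^{n}f\in\bild\delta^{-1}$ for some $n\ge 1$, then $A^{n+1}f=\delta^{-1}B(A^{n}f)-B\delta^{-1}(A^{n}f)=\delta^{-1}B(A^{n}f)=(\delta^{-1}B)^{n+1}f$, completing the induction. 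In particular $A^{n}f\in\bild\delta^{-1}$ for every $n\ge 1$, so $\degs(A^{n}f)\ge 1$ and therefore $\sigma(A^{n}f)=0$, since $\sigma$ projects onto symmetric and antisymmetric degree zero. With $\sigma(f)=f$ this gives $\sigma(\tau(f))=\sigma(f)=f$, the first claim.

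For the explicit expansion I would isolate the part of lowest symmetric degree. Only $A^{1}f=\delta^{-1}Bf$ can contribute in symmetric degree one, because $A^{n}f$ for $n\ge 2$ carries an extra factor $\delta^{-1}$ and hence has $\degs\ge 2$. Splitting $Bf=Df+\tfrac{\im}{\lambda}\ad(r)f$, the summand $\tfrac{\im}{\lambda}\ad(r)f$ already has $\degs\ge 1$, so after $\delta^{-1}$ it contributes only in $\degs\ge 2$. The remaining term is $Df=\de f$, a one-form of $\degs=0$ and $\dega=1$; applying $\delta^{-1}=\delta^{\ast}$ and contracting gives $\delta^{-1}(Df)=\de f\otimes 1$, a symmetric one-tensor. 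Collecting everything yields $\tau(f)=f+\de f\otimes 1+\cdots$, the dots standing for terms of symmetric degree at least two, as asserted.

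The only genuinely delicate point is the degree bookkeeping of the first paragraph: one must pin down that $A$ \emph{strictly} raises $\Deg$, which rests precisely on $r$ having $\Deg$-degree at least three together with the $\Deg$-behaviour of $\delta^{-1}$, $D$ and the quasi-inner derivation $\tfrac{\im}{\lambda}\ad(r)$. Once this is secured, the series $\sum_{n}A^{n}f$ converges, and everything else is formal: the identity $A^{n}f=(\delta^{-1}B)^{n}f$ is a one-line induction using only $(\delta^{-1})^{2}=0$ and $\delta^{-1}f=0$, and the inclusion $\bild\delta^{-1}\subseteq\{\degs\ge 1\}$ makes $\sigma$ annihilate all higher terms at once.
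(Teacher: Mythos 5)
The paper itself contains no proof of this lemma: as announced at the beginning of the \Name{Fedosov} chapter, it defers to the literature (\citet{fedosov:1996a}, \citet{neumaier:2001a}, \citep{waldmann:2004a:script}). Your argument is precisely the standard one found there, and it is correct: read $\tau(f)=(\id-A)^{-1}f$ with $A=[\delta^{-1},D+\tfrac{\im}{\lambda}\ad(r)]$ as a Neumann series, check that $A$ strictly raises the total degree $\Deg$ so that the series converges in the formally $\Deg$-graded, complete algebra $\alg{W}\otimes\mit{\Lambda}^{\bullet}$, use $\delta^{-1}f=0$ and $(\delta^{-1})^{2}=0$ to obtain $A^{n}f=(\delta^{-1}B)^{n}f\in\bild\delta^{-1}$ for $n\ge 1$, and let $\sigma$ annihilate everything of positive symmetric degree. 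The degree bookkeeping you single out as the delicate point is indeed the only place where the specific properties of $r\in\alg{W}_{3}\otimes\mit{\Lambda}^{1}$ enter.

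One assertion deserves the line of justification you omit: that $\tfrac{\im}{\lambda}\ad(r)f$ has $\degs\ge 1$. This does \emph{not} follow from $\Deg$-counting alone, since a hypothetical component with $\degs=0$ and $\dega=1$ would, after applying $\delta^{-1}$, land exactly in symmetric degree one and contaminate the identification of the $\de f\otimes 1$ term. The clean reason is stronger: $\ad(r)f=0$ identically, because every order $\lambda^{k}$ with $k\ge 1$ of $r\fpweyl f$ and of $f\fpweyl r$ contains a symmetric insertion $i_{s}\left(\tfrac{\del}{\del x^{\ell}}\right)f=0$, while the order-zero part of the commutator vanishes by the supercommutativity of $\mu$; equivalently, $f$ has $\degs f=0$ and is therefore central for $\fpweyl$. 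With that one line added, your proof is complete.
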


Damit haben wir die gesuchte Bijektion zwischen $\Cinff{M}$ und $\ker
\alg{D} \cap \alg{W}$ gefunden, die wir ben"otigen, um das
\Name{Fedosov}-Sternprodukt zu konstruieren. 

\begin{satz}[Das \Name{Fedosov}-Sternprodukt]
\label{Satz:FedosovSternprodukt}
\index{Fedosov-Sternprodukt@\Name{Fedosov}-Sternprodukt}
    Sei $a \in \alg{W}$. Es gilt $\alg{D}a=0$ genau dann, falls
    $a=\tau(\sigma(a))$, womit $\tau:\Cinff{M} \to \ker \alg{D} \cap
    \alg{W}$ eine $\fieldf{C}$-lineare Bijektion mit dem Inversen $\sigma$
    ist. Das Produkt
\begin{equation}
    \begin{aligned}
        \label{eq:FedosovSternprodukt1}
        f\star g & = \sigma (\tau (f) \fpweyl \tau (g) ) \\ & = fg+
        \frac{\im \lambda}{2} \{f, g\} + \alg{O}(\lambda^{2})
    \end{aligned}
\end{equation}
ist ein differentielles und nat"urliches Sternprodukt, das genau dann
\Name{Hermite}sch ist, falls $\cc{\Omega} = \Omega$ und $\cc{s}=s$
reell gew"ahlt wurden. 
\end{satz}

\subsection{Die \Name{Fedosov}-Konstruktion auf Vektorb"undeln}
\label{sec:FedosovKonstruktionVektorbuendel}
\index{Fedosov-Konstruktion@\Name{Fedosov}-Konstruktion!auf Vektorbuendeln@auf Vektorb\"undeln|(}
Im n"achsten Schritt wollen wir die \Name{Fedosov}-Konstruktion auf
eine bestimmte Art von Bimoduln fortsetzen, bei dem der Bimodul ein
(ggf.~\Name{Hermite}sches) Vektorb"undel ist. Diese Konstruktion geht auf
die Arbeiten \citep{bursztyn.waldmann:2000b} und
\citep{waldmann:2002b} zur"uck. Eine Verallgemeinerung dieser
Konstruktion findet man in der Diplomarbeit von \citet{weiss:2006a}.

Neben dem in Definition~\ref{Definition:BuendelFormaleWeylAlgebren} definierten
B"undel aller formalen \Name{Weyl}-Algebren, wollen wir noch
zu\-s"atz\-lich die folgenden B"undel definieren. Dazu sei
$(\bundle{E}{\pi}{(M,\omega,\nabla)})$ ein Vektorb"undel "uber einer
symplektischen Mannigfaltigkeit mit Zusammenhang, das
ebenfalls mit einem linearen Zusammenhang\index{Zusammenhang!linearer}
$\lconnE$ ausgestattet ist. Dieser Zusammenhang $\lconnE$ induziert
einen Zusammenhang $\lconnEnd{E}$ auf dem
Endomorphismenb"undel\index{Zusammenhang!Endomorphismenbuendel@auf Endomorphismenb\"undel} "uber $E$, vergleiche Lemma 
\ref{Lemma:ZusammenhangEndomorphismen}.   

Wir f"uhren im folgenden zwei neue B"undel ein
\begin{align}
\alg{W} \otimes \mit{\Lambda}^{\bullet} \otimes  \alg{E} & = \left(\prod_{k=0}^{\infty}
    \schnitt{S^{k}T^{\ast}M \otimes \Lambda^{\bullet}T^{\ast}M \otimes E} \right)\FP, \\ \alg{W} \otimes \mit{\Lambda}^{\bullet} \otimes \END{E}  & = \left(\prod_{k=0}^{\infty}
    \schnitt{S^{k}T^{\ast}M \otimes \Lambda^{\bullet}T^{\ast}M \otimes \End{E}} \right)\FP. 
\end{align}

\begin{lemma}[Eigenschaften der Tensoralgebren]
~\vspace{-5mm}
\begin{compactenum}
\item Die Algebra $\alg{W} \otimes \mit{\Lambda}^{\bullet} \otimes
    \END{E}$ ist assoziativ. In den ersten beiden
    Faktoren f"allt die Struktur mit der von $\alg{W} \otimes
    \mit{\Lambda}^{\bullet}$ zusammen, im letzten ist es die
    Komposition von Endomorphismen. 

\item Das B"undel $\alg{W} \otimes \mit{\Lambda}^{\bullet} \otimes
    \alg{E}$ ist ein Bimodul f"ur $\alg{W} \otimes
    \mit{\Lambda}^{\bullet} \otimes \END{E}$ von links und f"ur
    $\alg{W} \otimes \mit{\Lambda}^{\bullet}$ von rechts. Die
    Modulstrukturen sind gegeben durch 
\begin{equation}
\begin{aligned}
\label{eq:ModulMultiplikationWeylBuendel}
(f \otimes \alpha \otimes A) \cdot' (g \otimes \beta \otimes s) & := f
\vee g \otimes \alpha \wedge \beta \otimes As, \\ (g \otimes \beta
\otimes s)\cdot (h \otimes \gamma) &:= g \vee h \otimes \beta \wedge
\gamma \otimes s. 
\end{aligned}
\end{equation}
f"ur alle $(f \otimes \alpha \otimes A) \in \alg{W} \otimes
\mit{\Lambda}^{\bullet} \otimes \END{E}$, $(g \otimes \beta \otimes s)
\in \alg{W} \otimes \mit{\Lambda}^{\bullet} \otimes  \alg{E}$ und $(h
\otimes \gamma) \in \alg{W} \otimes \mit{\Lambda}^{\bullet}$. 
\end{compactenum}
\end{lemma}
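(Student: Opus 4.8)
The plan is to recognise the product on $\alg{W} \otimes \mit{\Lambda}^{\bullet} \otimes \END{E}$, namely $(f \otimes \alpha \otimes A)(g \otimes \beta \otimes B) = (f \vee g) \otimes (\alpha \wedge \beta) \otimes (A \circ B)$, as the factorwise combination of three separately associative operations, the only $\field{Z}_{2}$-grading relevant for signs being confined to the middle exterior factor. First I would reduce to decomposable elements $f \otimes \alpha \otimes A$ by multilinearity: since all three defining maps are linear in each tensor slot, they descend from decomposables to the full tensor product of the bundles $S^{\bullet}T^{\ast}M$, $\Lambda^{\bullet}T^{\ast}M$ and $\END{E}$, and the extension to the formal $\FP$-completion is unproblematic because $\END{E}$ is fibrewise finite-dimensional and carries neither symmetric nor antisymmetric degree, so each homogeneous $\Deg$-component of a product is still a finite sum exactly as for $\alg{W} \otimes \mit{\Lambda}^{\bullet}$.

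For part (i) I would compute both bracketings of a triple product of decomposables. Because the symmetric tensors $f_{i}$ and the endomorphisms $A_{i}$ all sit in exterior degree zero, moving them past the forms produces no Koszul sign, and the two bracketings agree slot by slot precisely when $\vee$ is associative on $\alg{W}$, $\wedge$ is associative with its coherent sign convention on $\mit{\Lambda}^{\bullet}$, and $\circ$ is associative on $\END{E}$. The first two assertions are just the associativity of the undeformed product $\mu$ on $\alg{W} \otimes \mit{\Lambda}^{\bullet}$ already established earlier, so the only genuinely new slot is the endomorphism one, where composition is associative by definition.

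For part (ii) I would observe that the left action $\cdot'$ is the same factorwise product, with its last slot replaced by the fibrewise evaluation $(A,s) \mapsto As$ of $\END{E}$ on $\alg{E}$; this makes $\alg{E}$ a left $\END{E}$-module, since $\id$ acts as the identity and $(A \circ B)s = A(Bs)$. The left-module axiom then follows by the identical slot-by-slot argument as in (i). For the right action $\cdot$ the section $s$ is a passive spectator and only the $\alg{W} \otimes \mit{\Lambda}^{\bullet}$ slots are multiplied, so the right-module axiom is again associativity of $\vee$ and $\wedge$. The bimodule compatibility $(a \cdot' x) \cdot h = a \cdot' (x \cdot h)$ reduces, with $a = f \otimes \alpha \otimes A$, $x = g \otimes \beta \otimes s$ and $h = h_{0} \otimes \gamma$, to the identity $((f \vee g) \vee h_{0}) \otimes ((\alpha \wedge \beta) \wedge \gamma) \otimes (As) = (f \vee (g \vee h_{0})) \otimes (\alpha \wedge (\beta \wedge \gamma)) \otimes (As)$; since right multiplication by $h$ never touches $A$ or $s$ while the endomorphism acts only on $s$, the two sides agree exactly by associativity of $\vee$ and $\wedge$.

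I expect no real obstacle: the entire statement is an inheritance of associativity from the three constituent algebras together with the recognition of the tautological $\END{E}$-action on $\alg{E}$ as a module structure. The only points demanding care are the bookkeeping of the exterior Koszul signs and well-definedness over the formal completion, and both are already settled for $\alg{W} \otimes \mit{\Lambda}^{\bullet}$; the symmetric and endomorphism factors, being of exterior degree zero, never generate new signs, so nothing beyond the earlier analysis is needed.
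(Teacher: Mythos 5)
Your argument is correct and is precisely the routine slot-by-slot verification that the paper itself omits (the lemma is stated without proof, being treated as an immediate consequence of the factorwise definitions): associativity and the module axioms are inherited from $\vee$, $\wedge$ and composition/evaluation of endomorphisms, and since the symmetric and endomorphism factors carry no antisymmetric degree, no Koszul signs arise. Your additional remarks on reduction to decomposables and on well-definedness over the formal completion are sound and cover the only points where care is needed.
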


Die Endomorphismen $\delta$, $\delta^{\ast}$ und $\delta^{-1}$ werden auf
triviale Weise auf die beiden B"undel $\alg{W} \otimes
\mit{\Lambda}^{\bullet} \otimes  \alg{E}$ und $\alg{W} \otimes
\mit{\Lambda}^{\bullet} \otimes \END{E}$ erweitert, indem sie auf
Elementen das letzte Argument unangetastet lassen.
Andersherum fassen wir die Algebra $\alg{W} \otimes
\mit{\Lambda}^{\bullet}$ als $\alg{W} \otimes {\mit{\Lambda}}^{\bullet}
\otimes 1$ auf und schreiben 
 
\begin{align}
\label{eq:EndomorphismendeltaUnddeltastrichVerallgemeinert}
\delta = (1 \otimes \de x^{i} \otimes 1) i_{s}\left(\tfrac{\del}{\del
      x^{i}} \right) \quad \text{und} \quad \delta^{\ast} = (\de x^{i}
\otimes 1 \otimes 1) i_{a}\left(\tfrac{\del}{\del x^{i}} \right), 
\end{align}

so da"s wir symbolisch einen Operator f"ur alle drei B"undel
haben. 

Wir wollen nun eine Deformation der Modulstruktur angeben.  
Dazu bedienen wir uns des auf $\alg{W} \otimes
\mit{\Lambda}^{\bullet}$ eingef"uhrten \Name{Weyl-Moyal}-Produkts,
und wir definieren die deformierten Mo\-dul\-ver\-kn"up\-fun\-gen
$\circ'$ und $\circ$ der Gleichungen 
\eqref{eq:ModulMultiplikationWeylBuendel} via 

\begin{equation}
\begin{aligned}
    \label{eq:DeformierteModulMultiplikationWeylBuendel}
(f \otimes \alpha \otimes A) \circ' (g \otimes \beta \otimes s) & := f
\fpweyl g \otimes \alpha \wedge \beta \otimes As \\ (g \otimes \beta
\otimes s) \circ (h \otimes \gamma) & := g \fpweyl h \otimes \beta
\wedge \gamma \otimes s.
\end{aligned}
\end{equation}

Damit und den deformierten Produkten der Algebren $(\alg{W} \otimes
\mit{\Lambda}^{\bullet} \otimes  \END{E}$ und $\alg{W} \otimes \mit{\Lambda}^{\bullet})$ wird $\alg{W} \otimes \mit{\Lambda}^{\bullet} \otimes  \alg{E}$
zu einem $(\alg{W} \otimes \mit{\Lambda}^{\bullet} \otimes  \END{E}$, $\alg{W} \otimes
\mit{\Lambda}^{\bullet})$-Bimodul bez"uglich der deformierten
Bimodulstrukturen $\circ'$ und $\circ$.

\subsubsection{Kovariante Ableitungen, die \Name{Fedosov}-Derivation
  $\alg{D}'$ und die Deformation $\star'$}

Die weitere Vorgehensweise ist analog zur "ublichen
\Name{Fedosov}-Konstruktion: mit Hilfe der Derivationen $\delta$,
$\delta^{-1}$  und eines linearen Zusammenhangs $\lconnEnd{E}$ auf dem Endomorphismenb"undel
bzw.~$\lconnE$ auf dem Vektorb"undel konstruieren wir die
\Name{Fedosov}-Derivationen $\alg{D}'$ und $\alg{D}^{\sss E}$, so
da"s deren Quadrat verschwindet. Der Kern der Superderivationen
bildet dann eine Unteralgebra bzw.~einen Untermodul, und der Pullback der faserweisen
\Name{Weyl-Moyal}-Produkte wird dann zu einer Moduldeformation
f"uhren. In einem ersten Schritt werden wir die notwendigen
Strukturen kurz erl"autern, dann eine assoziative Deformation der
Algebra $\schnitt{\End{E}}$ betrachten und letztendlich die
Modulstrukturen deformieren.

\begin{definition}[Kovariante Ableitungen $D^{\sss E}$ und $D'$]
  Man definiert auf dem Raum $\alg{W} \otimes \mit{\Lambda}^{\bullet}
  \otimes \alg{E}$ und $\alg{W} \otimes \mit{\Lambda}^{\bullet}
  \otimes \END{E}$ kovariante Ableitungen, mittels 
 \begin{equation} 
 \begin{aligned}
      \label{eq:KovarianteAbleitungDE}
      D^{\sss E}: \alg{W} \otimes \mit{\Lambda}^{\bullet}
  \otimes \alg{E}& \to \alg{W} \otimes \mit{\Lambda}^{\bullet +1}
  \otimes \alg{E} \\  (f \otimes \alpha \otimes s) & \mapsto D(f\otimes
  \alpha) \otimes s + f \otimes \de x^{i} \wedge \alpha \otimes
  \lconnE[\frac{\del}{\del x^{i}}] s \\ &  = \lconn[\frac{\del}{\del
    x^{i}}] f \otimes \de x^{i} \wedge \alpha \otimes s +f \otimes \de \alpha
  \otimes s + f \otimes \de x^{i} \wedge \alpha \otimes \lconnE[\frac{\del}{\del
    x^{i}}] s.
  \end{aligned}
\end{equation}
und analog definiert man die kovariante Ableitung $D':=D^{\sss
  \End{E}}$ auf $\alg{W} \otimes \mit{\Lambda}^{\bullet} \otimes \END{E}$ 
\begin{equation}
    \begin{aligned}
        \label{eq:KovarianteAlbleitungDEndE}
      D' : \alg{W} \otimes \mit{\Lambda}^{\bullet} \otimes \END{E} & \to
      \alg{W} \otimes \mit{\Lambda}^{\bullet + 1} \otimes \END{E} \\   
       (f \otimes \alpha \otimes A)  & \mapsto  D(f\otimes \alpha) \otimes A +
        f \otimes \de x^{i} \wedge \alpha \otimes \lconnEnd[\frac{\del}{\del
    x^{i}}]{E} A \\ & = \lconn[\frac{\del}{\del x^{i}}] f \otimes \de
  x^{i} \wedge \alpha \otimes A + f \otimes \de \alpha  \otimes
  A + f \otimes \de x^{i} \wedge \alpha \otimes \lconnEnd[\frac{\del}{\del x^{i}}]{E} A.  
    \end{aligned}
\end{equation}
\end{definition}

\begin{lemma}[Eigenschaften von $D^{\sss E}$ und $D'$]
\label{Lemma:EigenschaftenvonDStrichundD}
    Sei $R^{\sss E}$ die Kr"ummung des Vektorb"undels
    $\bundle{E}{\pi}{M}$. Die kovarianten Ableitungen $D^{\sss E}$ und
    $ D'$ haben die folgenden Eigenschaften: 
  \begin{compactenum}
  \item $D^{\sss E}$ ist superderivativ bez"uglich der
      deformierten Modulstrukturen. Sei nun  $a \in \alg{W} \otimes
      \mit{\Lambda}^{\bullet} \otimes \END{E}$, $\Psi \in  \alg{W}
      \otimes \mit{\Lambda}^{\bullet} \otimes \alg{E}$ und $b \in
      \alg{W} \otimes \mit{\Lambda}^{\bullet}$, dann gilt 
      \begin{align*}
          D^{\sss E}(a \circ' \Psi) & = (D' a) \circ' \Psi + (-1)^{\dega a}
          a \circ' (D^{\sss E} \Psi), \\ D^{\sss E} (\Psi \circ b) & =
          (D^{\sss E} \Psi) \circ b + (-1)^{\dega \Psi} \Psi \circ (Db), 
      \end{align*}
  \item $({D'})^{2} = \tfrac{\im}{\lambda} \ad(R-\im \lambda
        R^{\sss E})$ und $(D^{\sss E})^{2} = \tfrac{\im}{\lambda} \ad(R) +
        R^{\sss E}$,    
  \item $\delta R^{\sss E} = 0$, $D'R^{\sss E}=0$.

  \end{compactenum}
\end{lemma}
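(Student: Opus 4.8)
The plan is to reduce all three assertions to two ingredients that are already available: the corresponding statements for the scalar covariant differential $D$ on $\alg{W}\otimes\mit{\Lambda}^{\bullet}$ established in the Proposition on its properties, and the elementary compatibility of the connections $\lconnE$ and $\lconnEnd{E}$ with the $\End{E}$-action on $E$ from Lemma~\ref{Lemma:ZusammenhangEndomorphismen}. Throughout it suffices to verify the identities on factorized elements $a=f\otimes\alpha\otimes A$, $\Psi=g\otimes\beta\otimes s$, $b=h\otimes\gamma$ and to extend by $\fieldf{C}$-bilinearity together with continuity in the $\Deg$-filtration.

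For part i.) I would split $D^{\sss E}$ (and likewise $D'$) into the piece acting on the first two tensor factors via the scalar $D$, and the piece acting on the last factor via $\lconnE$ (resp.\ $\lconnEnd{E}$). The first piece is governed by the already-established fact that $D$ is a superderivation of $\fpweyl$, which reproduces the $\fpweyl$-Leibniz rule in the $\alg{W}\otimes\mit{\Lambda}^{\bullet}$-slot together with the sign $(-1)^{\dega a}$. The second piece is governed by the defining property of the induced connection, $\lconnE(As)=(\lconnEnd{E}A)\,s+A\,(\lconnE s)$; tensoring this against the $\de x^{i}\wedge$ insertion reproduces exactly the two stated Leibniz identities for $\circ'$ and $\circ$. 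Since $b$ lives in the scalar algebra $\alg{W}\otimes\mit{\Lambda}^{\bullet}$ with trivial bundle factor, the second identity collapses to the superderivation property of $D$ in the right-module slot.

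For part ii.) I would square the operators termwise. Writing $D^{\sss E}=D\otimes\id+\iota$, where $\iota$ denotes the operator $f\otimes\alpha\otimes s\mapsto f\otimes\de x^{i}\wedge\alpha\otimes\lconnE[\frac{\del}{\del x^{i}}]s$ (and analogously $D'$ with $\lconnEnd{E}$), the mixed terms assemble into the second covariant cross-derivatives while the diagonal terms give $D^{2}$ on the Weyl factor and the antisymmetrised square of the connection on the bundle factor. On the Weyl factor $D^{2}=\tfrac{\im}{\lambda}\ad(R)$. On the bundle factor the antisymmetrised square is the curvature: $(\lconnE)^{2}=R^{\sss E}$ acting directly on sections, while $(\lconnEnd{E})^{2}=\ad(R^{\sss E})=[R^{\sss E},\cdot\,]$ on endomorphisms. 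Rewriting $\ad(R^{\sss E})=\tfrac{\im}{\lambda}\ad(-\im\lambda R^{\sss E})$ and using that on the bimodule $\alg{W}\otimes\mit{\Lambda}^{\bullet}\otimes\alg{E}$ the Weyl element $R\otimes\id_{E}$ acts from both sides through $\circ'$ and $\circ$, the contributions combine into $(D')^{2}=\tfrac{\im}{\lambda}\ad(R-\im\lambda R^{\sss E})$ and $(D^{\sss E})^{2}=\tfrac{\im}{\lambda}\ad(R)+R^{\sss E}$. This reconciliation of $\ad$ on the bimodule, the $\field{Z}_{2}$-grading signs, and the powers of $\lambda$ is where I expect the real bookkeeping, and hence the main obstacle, to lie.

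For part iii.) the identity $\delta R^{\sss E}=0$ is immediate: $R^{\sss E}\in\schnitt{\Lambda^{2}T^{\ast}M\otimes\End{E}}$ has symmetric degree $0$, and since $\delta$ applies $i_{s}(\tfrac{\del}{\del x^{i}})$ to the symmetric factor, $\delta R^{\sss E}=0$. The identity $D'R^{\sss E}=0$ is precisely the second \Name{Bianchi} identity for $\lconnEnd{E}$: because $R^{\sss E}$ carries trivial symmetric degree, only the covariant part of $D'$ contributes, so $D'R^{\sss E}$ equals the antisymmetrised covariant derivative $\lconnEnd{E}R^{\sss E}$, which vanishes.
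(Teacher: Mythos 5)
The paper itself does not prove this lemma: the whole Fedosov chapter explicitly defers its proofs to the literature (Fedosov, \citep{bursztyn.waldmann:2000b}, \citep{waldmann:2002b}), so there is no in-text argument to compare against. Your proposal reproduces, in outline, exactly the standard argument from those sources, and its logical structure is correct: the splitting $D^{\sss E}=D\otimes\id+\iota$ with $\iota$ the $\de x^{i}\wedge{}\otimes\lconnE[\frac{\del}{\del x^{i}}]$ insertion, the reduction of part \textit{i.)} to the superderivation property of $D$ for $\fpweyl$ plus the defining relation $\lconnE(As)=(\lconnEnd{E}A)s+A\lconnE s$ of Lemma~\ref{Lemma:ZusammenhangEndomorphismen}, the identity $\tfrac{\im}{\lambda}\ad(-\im\lambda R^{\sss E})=\ad(R^{\sss E})$ reconciling the two formulas in part \textit{ii.)}, and the observations $\degs R^{\sss E}=0$ and the second Bianchi identity for part \textit{iii.)} are all the right ingredients.

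Two points deserve to be made explicit where you currently gloss over them. First, in part \textit{i.)} the bundle piece $\iota$ also inserts a $\de x^{i}\wedge{}$ that must be commuted past the form degree of $a$ (resp.\ $\Psi$), so the sign $(-1)^{\dega a}$ arises from \emph{both} summands of $D^{\sss E}$, not only from the scalar $D$; this is easy but should be checked on factorized elements. Second, and more substantially, in part \textit{ii.)} the vanishing of the mixed terms $D\iota+\iota D$ up to curvature is not automatic: it uses that $\nabla$ is torsion-free (so that the Christoffel contributions from differentiating the inserted $\de x^{j}$ antisymmetrize away) and that the first-order cross derivatives acting on \emph{different} tensor factors cancel against the antisymmetry of $\de x^{i}\wedge\de x^{j}$. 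You correctly identify this as the locus of the real bookkeeping, but the torsion-freeness hypothesis should be named, since without it the claimed formulas for $(D')^{2}$ and $(D^{\sss E})^{2}$ acquire an extra torsion term. With these two additions your sketch closes the gap left by the paper's omitted proof.
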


Die Konstruktion der \Name{Fedosov}-Derivation $\alg{D}':
\alg{W} \otimes\mit{\Lambda}^{\bullet} \otimes \END{E} \to \alg{W}
\otimes\mit{\Lambda}^{\bullet} \otimes \END{E}$ verl"auft
komplett analog zu der Konstruktion von $\alg{D}$, und der Ansatz ist  

\begin{align}
    \label{eq:FedosovDerivationDprime}
    \alg{D}' = - \delta + D' + \frac{\im}{\lambda}\ad(r'),
\end{align}

wobei $r'$ ein Element in $\alg{W} \otimes \mit{\Lambda}^{\bullet}
\otimes \END{E}$ ist.

\begin{satz}[{\citep[Sect.~5.3]{fedosov:1996a}}]
Sei $\Omega = \sum_{n=1}^{\infty} \lambda^{n} \Omega_{n}$ eine
geschlossene Zweiform, dann existiert genau ein $r'\in \alg{W} \otimes
\mit{\Lambda}^{\bullet} \otimes \END{E}$ mit dem schiefsymmetrischen
Grad $+1$ und einem totalen Grad $\ge 3$ so da"s
\begin{align*}
\delta r' = R - \im \lambda R^{\sss E} + D'r' +
\tfrac{\im}{\lambda} r' \fpweyl r' + (1 \otimes \Omega \otimes 1) \quad
\text{und} \quad \delta^{-1} r' = 0
\end{align*}
erf"ullt ist. Damit ist $(\alg{D}')^2 = 0$.

\end{satz}

Der Unterschied zu $\alg{D}$ besteht darin, da"s sowohl die
Kr"ummung des Vektorb"undels $E$ als auch die der Mannigfaltigkeit $M$
kompensiert werden mu"s, was den zu\-s"atz\-li\-chen Term erkl"art
(vgl.~hierzu Satz~\ref{Satz:FedosovKonstruktion}). 

Mit der Projektion $\sigma': \ker \alg{D}' \cap \ker \dega \to
\schnittf{\End{E}}$ und deren Umkehrabbildung $\tau'$ , die wir auch
als \Name{Fedosov}-\Name{Taylor}-Reihe bezeichnen, sind wir in der Lage,
eine assoziative Deformation auf $\schnittf{\End{E}}$ anzugeben.

\begin{satz}
Die Abbildung $\sigma': \ker \alg{D}' \cap \ker \dega \to \schnittf{\End{E}}$
ist eine $\fieldf{C}$-lineare Bijektion.    
\end{satz}

Aus der Tatsache, da"s der Kern einer Superderivation eine
Unteralgebra ist, folgt nun, da"s wir das faserweise
\Name{Weyl-Moyal}-Produkt $\circ'$ auf $\alg{W} \otimes
\mit{\Lambda}^{\bullet} \otimes \END{E}$ mittels $\sigma'$ und $\tau'$
auf $\schnittf{\End{E}}$ zur"uckziehen k"onnen:
\begin{align}
    \label{eq:SternproduktAufSchnittenInEndE}
    A \star' B = \sigma' (\tau' (A) \circ \tau' (B)),
\end{align}

f"ur $A,B \in \schnittf{\End{E}}$. Damit wird $(\schnittf{\End{E}},
\star')$ zu einer assoziativen, deformierten Algebra. 

\begin{bemerkung}
A priori ist nicht klar, da"s $r'$ von der Form $r'= \sum_{n =
  0}^{\infty} \lambda^{n} r_{n}'$ ist, da das undeformierte Produkt
auf $\alg{W} \otimes \mit{\Lambda}^{\bullet} \otimes \END{E}$ schon
nichtkommutativ ist. Es  k"onnten in der Rekursionsformel f"ur
$r'$ durch den $\tfrac{\im}{\lambda}$-Term beliebig viele 
negative Potenzen von $\lambda$ generiert werden. Dies kann man a
posteriori allerdings ausschlie"sen. Eine genaue Betrachtung findet
man in \citep{bursztyn.waldmann:2000b}, die in der Kategorie der formalen
\Name{Laurent}-Reihen in $\lambda$ rechnen, jedoch a posteriori zeigen, da"s
keine negativen Potenzen von $\lambda$ auftauchen.    
\end{bemerkung}

\subsubsection{Die \Name{Fedosov}-Derivation $\alg{D}^{\sss E}$ und
  die deformierten Bimodulstrukturen $\bullet'$ und $\bullet$}

Wir haben bislang zwei Algebren deformiert: zum einen die
Funktionenalgebra $(\Cinff{M}, \star)$ und zum anderen die Schnitte
im Endomorphismenb"undel $(\schnittf{\End{E}}, \star')$. Dabei
m"ussen wir f"ur die folgenden Betrachtungen beachten, da"s wir
in beiden F"allen dieselbe
geschlossene Zweiform $\Omega$ gew"ahlt haben m"ussen, da sonst die
folgende Konstruktion nicht m"oglich w"are.

Es liegt nun nahe, auch die Bimodulstrukturen des Bimoduls
$\bimodo{\schnitt{\End{E}}}{\schnitt{E}}{\Cinf{M}}$ zu
deformieren. Die wesentliche Arbeit hierzu haben wir bereits
geleistet, es bleibt, eine geeignete \Name{Fedosov}-Derivation
$\alg{D}^{\sss E}$ sowie eine \Name{Fedosov}-\Name{Taylor}-Reihe
$\tau^{\sss E}$ anzugeben. 

Wir definieren $\alg{D}^{\sss E}:  \alg{W} \otimes \mit{\Lambda}^{\bullet}
\otimes \alg{E} \to \alg{W} \otimes \mit{\Lambda}^{\bullet +1}
\otimes \alg{E}$ mittels 

\begin{align} 
   \label{eq:FedosovDerivationDE}
 \alg{D}^{\sss E} = -\delta + D^{\sss E} + \frac{\im}{\lambda} \ad(r) +
 r^{\sss E}, 
\end{align}
wobei $r^{\sss E}= \tfrac{\im}{\lambda}(r' -r)$ ist.

\begin{satz}[\Name{Fedosov}-Derivation $\alg{D}^{\sss E}$ und die
    \Name{Fedosov-Taylor}-Reihe $\tau^{\sss E}$] 
    Die \Name{Fedosov}-Derivation $\alg{D}^{\sss E}$ hat die folgenden
    Eigenschaften
    \begin{compactenum}
    \item $\alg{D}^{\sss E}$ ist gradiert derivativ bez"uglich der
        deformierten Bimodulstrukturen $\circ'$ und $\circ$
        \begin{equation}
            \begin{aligned}
                \alg{D}^{\sss E}(a \circ' \Psi) & = (\alg{D}' a) \circ' \Psi +
                (-1)^{\dega a} a \circ' (\alg{D}^{\sss E} \Psi), \\
                \alg{D}^{\sss 
                  E} (\Psi \circ b) & = (\alg{D}^{\sss E} \Psi) \circ b +
                (-1)^{\dega \Psi} \Psi \circ (\alg{D}b),  
            \end{aligned}
        \end{equation}
f"ur $a \in \alg{W} \otimes \mit{\Lambda}^{\bullet} \otimes \END{E}$,
$\Psi \in  \alg{W} \otimes \mit{\Lambda}^{\bullet} \otimes \alg{E}$
und $b \in  \alg{W} \otimes \mit{\Lambda}^{\bullet}$. 
    
\item Nilpotent von der Stufe $2$, d.~h.~$(\alg{D}^{\sss E})^2 = 0$.
\end{compactenum}
Desweiteren ist
\begin{align}
    \sigma^{\sss E}: \ker \alg{D}^{\sss E} \cap \ker \dega \to
    \schnittf{E} 
\end{align}
eine $\fieldf{C}$-lineare Bijektion, deren Inverses die
\Name{Fedosov-Taylor}-Reihe $\tau^{\sss E}$ ist. Diese kann explizit
rekursiv f"ur ein Element $s\ \in \schnittf{E}$ berechnet werden. 
\end{satz}

\begin{proof}
    Der Beweis hierzu findet sich in \citep{waldmann:2002b}.
\end{proof}

Damit sind wir in der Lage, auch die Bimodulstrukturen zu
deformieren, und es gilt folgendes Korollar.

\begin{korollar}[Der deformierte Bimodul {$\bimodo{(\schnittf{\End{E}},
        \star')}{(\schnittf{E},\bullet',\bullet)}{(\Cinff{M}, \star)}$}]
    Die Schnitte $\schnittf{E}$ werden ein Bimodul f"ur $\star'$ und
    $\star$ mittels
    \begin{align}
        A \bullet' s := \sigma^{\sss E}(\tau'(A) \circ' \tau^{\sss
          E}(s)) \quad \text{und} \quad s \bullet f := \sigma^{\sss E}
        (\tau^{\sss E}(s) \circ \tau(f)), 
     \end{align}
f"ur $A \in \schnittf{\End{E}}$, $s\in \schnittf{E}$ und $f\in
\Cinff{M}$. 
\end{korollar}

\begin{definition}[\Name{Fedosov}-Bimodul]
    \label{Definition:FedosovBimodul}
Man nennt einen deformierten Bimodul $\bimodo{(\schnittf{\End{E}},
  \star')}{(\schnittf{E},\bullet',\bullet)}{(\Cinff{M}, \star)}$, der
mittels der \Name{Fedosov}-Konstruktion f"ur Vektorb"undel 
erhalten werden kann, einen {\em \Name{Fedosov}-Bimodul}.
\index{Fedosov-Bimodul@\Name{Fedosov}-Bimodul}  
\end{definition} 

Ein interessanter Fall liegt vor, wenn das Vektorb"undel $E$ ein
komplexes Geradenb"undel\index{Geradenbuendel@Geradenb\"undel!komplexes}
$E=\bundle{L}{\pi}{M}$ ist. Dann ist die von links wirkende 
Algebra $(\schnittf{\End{L}},\star')$ auch eine Sternproduktalgebra\index{Sternproduktalgebra}, da
$\schnittf{\End{L}} = \Cinff{M}$. 

\begin{lemma}[Charakteristische Klassen von $\star'$ und $\star$]
\index{charakteristische Klasse}
Sei $\bundle{L}{\pi}{M}$ ein Geradenb"undel. Die charakteristischen
Klassen der Sternprodukte $\star'$ und $\star$ h"angen "uber die
folgende Beziehung zusammen
\begin{align*}
    c(\star') = c(\star) + 2 \pi \im c_{1}(L),
\end{align*}
wobei $c_{1}(L)$ die erste \Name{Chern}-Klasse von $L$ ist.
\end{lemma}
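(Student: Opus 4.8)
The plan is to exploit the fact that for a \emph{line} bundle the left-acting algebra collapses to the scalar functions, so that $\star'$ is itself an ordinary Fedosov star product on $\Cinff{M}$, and then to read off its characteristic class from the shifted closed two-form that enters the Fedosov recursion for $\alg{D}'$.

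First I would observe that for a line bundle $\bundle{L}{\pi}{M}$ one has $\schnittf{\End{L}} = \Cinff{M}$ canonically, since every fibrewise endomorphism of a one-dimensional space is a scalar. Consequently the induced connection $\lconnEnd{L}$ on $\END{L} \cong L \otimes L^{\ast}$ is flat, i.e.\ its curvature vanishes: this follows from $R^{\sss \mathsf{End}(L)}(X,Y)A = [R^{\sss L}(X,Y), A]$ together with the fact that the curvature $R^{\sss L}$ of a line bundle acts by a scalar and hence commutes with every $A$. Therefore $D' = D^{\sss \mathsf{End}(L)}$ reduces to $D$ acting on $\alg{W} \otimes \mit{\Lambda}^{\bullet}$, and the deformed product $\circ'$ restricts to the (commutative) fibrewise $\fpweyl$, so that $(\schnittf{\End{L}},\star')$ is a scalar star product on $\Cinff{M}$.

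Next I would compare the two Fedosov recursions. The element $r$ defining $\star$ solves $\delta r = R + Dr + \tfrac{\im}{\lambda} r \fpweyl r + (1 \otimes \Omega)$, whereas $r'$ defining $\star'$ solves $\delta r' = R - \im\lambda R^{\sss L} + D' r' + \tfrac{\im}{\lambda} r' \fpweyl r' + (1 \otimes \Omega \otimes 1)$ for the \emph{same} closed two-form $\Omega$. Since $R^{\sss L}$ is a scalar-valued two-form (times the identity) it is central and may be absorbed into the inhomogeneity: putting $\Omega' := \Omega - \im\lambda R^{\sss L}$, the recursion for $r'$ becomes \emph{literally} the ordinary Fedosov recursion built from $\nabla$ with $\Omega'$ in place of $\Omega$. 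I would then check that $\Omega'$ is admissible: $R^{\sss L}$ is closed (being, up to normalization, the Chern--Weil representative of $c_{1}(L)$, so $\de R^{\sss L} = 0$), and $-\im\lambda R^{\sss L}$ is of order $\lambda$, whence $\Omega'$ is a closed two-form of the required form $\sum_{k\geq 1}\lambda^{k}\Omega'_{k}$. By uniqueness of $r'$ (resp.\ $r$) this identifies $\star'$ with the Fedosov star product $\star_{(\nabla,\,\Omega')}$.

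Finally I would invoke Neumaier's computation of the characteristic class of a Fedosov star product in terms of its defining two-form, $c(\star_{(\nabla,\Omega)}) = \tfrac{1}{\im\lambda}[\omega] + \tfrac{1}{\im\lambda}[\Omega]$ (the content of \citep{neumaier:2001a} underlying Satz~\ref{Satz:KlassifikationSymplektischeSternprodukteI}), applied to $\Omega' = \Omega - \im\lambda R^{\sss L}$:
\begin{align*}
  c(\star') = \tfrac{1}{\im\lambda}[\omega] + \tfrac{1}{\im\lambda}\big([\Omega] - \im\lambda[R^{\sss L}]\big) = c(\star) - [R^{\sss L}].
\end{align*}
The claim then follows from the Chern--Weil normalization $c_{1}(L) = \big[\tfrac{\im}{2\pi} R^{\sss L}\big]$, which gives $[R^{\sss L}] = -2\pi\im\, c_{1}(L)$ and hence $c(\star') = c(\star) + 2\pi \im\, c_{1}(L)$. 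The hard part will be the bookkeeping of conventions: one must pin down simultaneously the normalization of $c(\star)$ (the $\tfrac{1}{\im\lambda}$ factor and the sign of $\Omega$) and the Chern--Weil normalization of $c_{1}(L)$, since the final prefactor $2\pi\im$ is exactly the product of these two conventions; a secondary point demanding care is the verification that $R^{\sss \mathsf{End}(L)} = 0$, which is precisely what turns $\star'$ into a scalar star product and legitimizes absorbing $R^{\sss L}$ into $\Omega$.
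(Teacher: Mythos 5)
Your argument is correct and is essentially the proof the paper implicitly relies on (the lemma is stated without proof, deferring to \citep{bursztyn.waldmann:2000b} and \citep{waldmann:2002b}, where exactly this absorption of the central, closed term $-\im\lambda R^{\sss L}$ into the formal series $\Omega$ is carried out): for a line bundle $\END{L}$ is canonically trivial with flat induced connection, so $\star'$ is the scalar Fedosov product for $\Omega' = \Omega - \im\lambda R^{\sss L}$, and Neumaier's formula $c(\star_{(\nabla,\Omega)}) = \tfrac{1}{\im\lambda}([\omega]+[\Omega])$ together with the Chern--Weil normalization yields the shift by $2\pi\im\,c_{1}(L)$. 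You correctly identify that the only real content beyond this is the bookkeeping of the two normalization conventions (and, implicitly, that the characteristic class does not depend on the normalization element $s$, so the mismatch $\delta^{-1}r = s$ versus $\delta^{-1}r' = 0$ is harmless).
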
 
\index{Fedosov-Konstruktion@\Name{Fedosov}-Konstruktion|)}
\index{Fedosov-Konstruktion@\Name{Fedosov}-Konstruktion!auf Vektorbuendeln@auf Vektorb\"undeln}

\subsection{$G$-Invariante \Name{Fedosov}-Konstruktion}
\label{sec:HInvarianteFedosovKonstruktion}
\index{Fedosov-Konstruktion@\Name{Fedosov}-Konstruktion!G-invariante@$G$-invariante|(}
Nachdem wir in Kapitel~\ref{sec:FedosovKonstruktion} die
\Name{Fedosov}-Kon\-struk\-tion detailliert besprochen und in 
\ref{sec:FedosovKonstruktionVektorbuendel} eine Verallgemeinerung
f"ur Vektorb"undel angegeben haben, werden wir nun durch die Wahl
von speziellen Zusammenh"angen (bzw.~der Wahl von Klassen spezieller Zusammenh"ange) und einer
invarianten Zweiform in der Lage sein, unter einer \Name{Lie}-Gruppe
$G$\index{Lie-Gruppe@\Name{Lie}-Gruppe} oder unter einer \Name{Lie}-Algebra
$\LieAlg{g}$\index{Lie-Algebra@\Name{Lie}-Algebra}
\index{Algebra!Lie-Algebra@\Name{Lie}-Algebra} invariante 
Sternprodukte und Bimoduldeformationen zu konstruieren. Dabei
bezeichnen wir ein Sternprodukt als {\em $G$-invariant}, wenn es
eine Wirkung einer Gruppe $G$ auf $M$ gibt, so da"s 
\begin{align*}
    g.(f \star h) = (g.f) \star (g.h) 
\end{align*}
f"ur alle $f,h \in \Cinff{M}$ und $g\in G$. F"ur eine detaillierte Beschreibung
sowie eine ausf"uhrliche Motivation, solche Sternprodukte zu betrachten,
verweisen wir auf Kapitel~\ref{sec:HInvarianteSternprodukte}.

Die M"oglichkeit der Konstruktion $G$-invarianter Sternprodukte\index{Sternprodukt!G-invariantes@$G$-invariantes} wurde schon in \citep{fedosov:1996a} erw"ahnt. Eine $\LieAlg{g}$-in\-va\-rian\-te \Name{Fedosov}-Konstruktion geht auf die Arbeit von
\citet{mueller-bahns.neumaier:2004a} zur"uck. Die hier vorgestellte
Konstruktion ist eine Verallgemeinerung, da sie eine
$G$-invariante Formulierung der Deformation von Vektorb"undeln
liefert. Dazu ben"otigen wir auf dem Vektorb"undel $\bundle{E}{\pi}{M}$ eine Wirkung,
die von der $G$-Wirkung auf der Mannigfaltigkeit $M$ kommt, also eine
Hebung darstellt. Ist das Vektorb"undel mit einer solchen $G$-Wirkung
versehen, so erbt das Endomorphismenb"undel\index{Endomorphismenbuendel@Endomorphismenb\"undel}
auf kanonische Weise die Wirkung.

Die Idee der Konstruktion ist denkbar einfach: durch die Wahl $G$-
bzw.~$\LieAlg{g}$-invarianter Eingangsdaten, sprich einem invarianten, torsionsfreien
symplektischen Zusammenhang $\nabla$, sowie einer invarianten symplektische
Zweiform $\omega$, erh"alt man auch invariante Sternprodukte
bzw.~invariante Moduldeformationen. Die geometrischen Grundlagen zur
$\LieAlg{g}$- und $G$-Invarianz sind in Anhang~\ref{sec:GundgInvarianz} zusammengetragen.  

Wir werden nun in aller K"urze die $G$-Invarianz der entscheidenden
Strukturen aufzeigen.

\begin{lemma}[Invarianz der Endomorphismen $\delta, \delta^{\ast},
    \delta^{-1}$] 
Die Differentialoperatoren $\delta$ und $\delta^{\ast}$ sind unter einer
$G$-Wirkung invariant. Da die $G$-Wirkung graderhaltend, so ist auch
$\delta^{-1}$ $G$-invariant
   \begin{align} 
   g.\delta(a) = \delta(g.a), \quad  g.\delta^{\ast}(a) =
   \delta^{\ast} (g.a), \quad  g.\delta^{-1}(a) = \delta^{-1}(g.a).
    \end{align}
\end{lemma}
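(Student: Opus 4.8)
Der Plan ist, die bereits festgehaltene koordinatenunabhängige und in diesem Sinne \emph{natürliche} Struktur der Operatoren $\delta$ und $\delta^{\ast}$ (vgl.~die Darstellung \eqref{eq:DefinitionDeltas} sowie den Punkt \textit{i.)} des Lemmas über die Endomorphismen $\delta,\delta^{\ast},\delta^{-1}$ und $\sigma$) auszunutzen. Die $G$-Wirkung auf $M$ geschieht durch Punkttransformationen (Symplektomorphismen) $\phi_{g}: M \to M$; diese heben sich kanonisch zu Bündelautomorphismen von $\alg{W} \otimes \mit{\Lambda}^{\bullet}$, die auf den symmetrischen und den schiefsymmetrischen Tensorfaktoren als Rücktransport (Pullback) mit $\phi_{g}$ operieren -- bis auf die Inversenbildung, die für eine Linkswirkung nötig ist, aber für das folgende Argument keine Rolle spielt. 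Wir schreiben dafür $g.a$. Da der Rücktransport ein Homomorphismus bezüglich $\vee$ und $\wedge$ ist und den $\lambda$-Grad unangetastet lässt, genügt es, die Vertauschbarkeit von $\delta$ und $\delta^{\ast}$ mit $\phi_{g}^{\ast}$ nachzuweisen.

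Dazu würde ich zuerst die koordinatenfreie Beschreibung verwenden: Lokal ist $\delta = (1 \otimes \de x^{i}) \, i_{s}(\tfrac{\del}{\del x^{i}})$, und die hier auftretende Kombination $\sum_{i} \de x^{i} \otimes \tfrac{\del}{\del x^{i}}$ ist gerade das kanonische, der Identität $\id \in \operatorname{End}(T^{\ast}M)$ entsprechende Element von $T^{\ast}M \otimes TM$. Dieses Element ist unter dem Rücktransport eines \emph{beliebigen} Diffeomorphismus invariant, denn die Paarung zwischen Vektorfeldern und Einsformen ist natürlich; genau dies ist auch der Inhalt der schon bewiesenen Kartenunabhängigkeit von $\delta$ und $\delta^{\ast}$. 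Folglich gilt $\phi_{g}^{\ast}(\delta a) = \delta(\phi_{g}^{\ast} a)$ und analog für $\delta^{\ast}$, was übersetzt gerade $g.\delta(a) = \delta(g.a)$ bzw.~$g.\delta^{\ast}(a) = \delta^{\ast}(g.a)$ liefert.

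Für $\delta^{-1}$ nutze ich, dass die $G$-Wirkung graderhaltend ist: Der Rücktransport ändert den Tensortyp nicht und kommutiert daher mit den Gradabbildungen $\degs$ und $\dega$. Auf einem bezüglich $\degs$ und $\dega$ homogenen Element $a$ vom symmetrischen Grad $n$ und schiefsymmetrischen Grad $m$ ist $\delta^{-1} a = \tfrac{1}{n+m} \delta^{\ast} a$ für $n+m \neq 0$ und $\delta^{-1} a = 0$ für $n+m=0$. Da $g.a$ dieselben homogenen Grade $n$ und $m$ besitzt und bereits $g.\delta^{\ast}(a) = \delta^{\ast}(g.a)$ gezeigt ist, folgt $g.\delta^{-1}(a) = \delta^{-1}(g.a)$ zunächst auf homogenen Elementen; durch komponentenweise $\fieldf{C}$-lineare Fortsetzung auf die formalen Reihen in $\alg{W} \otimes \mit{\Lambda}^{\bullet}$ erhält man die Behauptung allgemein.

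Die eigentliche Arbeit -- und damit der Kern des Beweises -- steckt im mittleren Schritt, nämlich in der sauberen Identifikation von $\delta$ (bzw.~$\delta^{\ast}$) mit einem aus der kanonischen Kontraktion gebauten, diffeomorphismen-äquivarianten Operator; ist diese Natürlichkeit einmal festgehalten, so sind die Aussagen für $\delta^{\ast}$ und $\delta^{-1}$ reine Buchführung über die erhaltenen Bigrade. Die Hauptschwierigkeit besteht also weniger in einer Rechnung als in der korrekten Beschreibung der Hebung der $G$-Wirkung auf das \Name{Weyl}-Algebrabündel und der expliziten Verifikation ihrer Graderhaltung.
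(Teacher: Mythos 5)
Ihr Beweis ist korrekt und folgt im Kern demselben Weg wie die Arbeit: Dort wird die Natürlichkeit von $\delta$ und $\delta^{\ast}$ unter dem Rücktransport $\phi_{g^{-1}}^{\ast}$ durch die explizite Koordinatenrechnung mit $a = f \otimes \alpha \otimes u$ verifiziert, was genau die von Ihnen invariant formulierte Aussage ist, dass die kanonische Kontraktion $\de x^{i} \otimes \tfrac{\del}{\del x^{i}}$ diffeomorphismen-invariant ist. Auch das Argument für $\delta^{-1}$ über die Graderhaltung stimmt mit dem der Arbeit überein.
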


\begin{proof}
 Es sei nun $a$ von der Form $a= f \otimes \alpha \otimes u$,
     wobei $u$ ein Platzhalter je nach B"undel ist, d.~h.~wir setzen
     
\begin{align*}
a = f \otimes \alpha \otimes u :=\begin{cases} f \otimes \alpha \otimes 1
    \qquad & \text{f"ur} \qquad a\in \alg{W} \otimes \mit{\Lambda}^{\bullet}
    \\ f \otimes \alpha \otimes s & \text{f"ur} \qquad a\in \alg{W}
    \otimes \mit{\Lambda}^{\bullet} \otimes \alg{E} \\ f \otimes
    \alpha \otimes A & \text{f"ur} \qquad a\in \alg{W} \otimes
    \mit{\Lambda}^{\bullet} \otimes \END{E}. \end{cases} 
\end{align*}

\begin{align*}
        g.(\delta a) &= g.\left( i_s\left( \tfrac{\del}{\del^{x^i}}
            \right) f \otimes \de x^i \wedge 
        \alpha \otimes u \right) \\ &= g.i_s \left( \tfrac{\del}{\del^{x^i}}
    \right) f \otimes g.(\de x^i \wedge \alpha) \otimes g.u \\ & = \phi_{g^{-1}}^\ast
        i_s \left(\tfrac{\del}{\del_{x^i}} \right) f \otimes \phi_{g^{-1}}^\ast (\de x^i \wedge
        \alpha) \otimes g.u \\ & = i_s\Big( \underbrace{\phi_{g^{-1}}^\ast
        i_s\left( \tfrac{\del}{\del_{x^i}}
        \right)}_{\tfrac{\del}{\del_{\tilde{x}^i}}}\Big) \underbrace{ 
        \phi^\ast_{g^{-1}}f}_{g.f} \otimes \de \underbrace{ 
        \phi^\ast_{g^{-1}}x^i}_{\tilde{x}^i}  \wedge
      \underbrace{\phi^\ast_{g^{-1}} \alpha}_{g.\alpha}  \otimes g.u \\
      &= i_s\left( \tfrac{\del}{\del_{\tilde{x}^i}} \right) g.f \otimes \de \tilde{x}^i \wedge
      g.\alpha \otimes g.u \\ & = \delta (g.a)  
\end{align*} 
Analog dazu zeigt man, da"s $g.(\delta^{\ast} a)= \delta^{\ast}
(g.a)$ ist. Damit ist ebenfalls der Beweis f"ur $\delta^{-1}$
erbracht, falls die $G$-Wirkung die Grade erh"alt.
Wie die Wirkung auf den Endomorphismen und den Schnitten wirkt haben
wir in Anhang~\ref{sec:GundgInvarianz} zusammengestellt. 

\end{proof}

\begin{lemma}["Aquivarianz des faserweisen \Name{Weyl-Moyal}-Produkts $\fpweyl$]
Gegeben sei ein faserweises \Name{Weyl-Moyal}-Produkt nach Definition
\ref{Definition:FaserweisesWeylMoyalProdukt} mit einer $G$-invarianten
\Name{Poisson}-Struktur. Das faserweise \Name{Weyl-Moyal}-Produkt ist dann
$G$-invariant, d.~h.~es gilt     
   \begin{align} 
   g.(a \fpweyl b) = (g.a) \fpweyl (g.b).  
    \end{align}
\end{lemma}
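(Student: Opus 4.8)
The plan is to factor the fiberwise product as $\fpweyl = \mu \circ \exp\!\big(\tfrac{\im\lambda}{2} P\big)$ with the bidifferential operator
\begin{align*}
P = \Lambda^{k\ell}_{p}\, i_s\!\left(\tfrac{\del}{\del x^k}\right) \otimes i_s\!\left(\tfrac{\del}{\del x^\ell}\right),
\end{align*}
and to show separately that the $G$-action commutes with $\mu$ and with $P$. Since the action on a tensor is diagonal, $g.(a\otimes b) = (g.a)\otimes(g.b)$, it suffices to prove the two identities $\mu \circ (g.\otimes g.) = g.\circ \mu$ and $P \circ (g.\otimes g.) = (g.\otimes g.)\circ P$. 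The first is immediate: the action is realized by the pullback $\phi_{g^{-1}}^\ast$, which is an algebra homomorphism for the pointwise (symmetric) product, so it respects $\mu$ factorwise. The second identity then propagates to the exponential series term by term, and combining the two yields $g.(a\fpweyl b) = \mu\circ(g.\otimes g.)\circ \exp(\tfrac{\im\lambda}{2}P)(a\otimes b) = (g.a)\fpweyl(g.b)$, which is the claim.

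The core step is the commutation of $P$ with the diagonal action. Here I would reuse the coordinate computation already carried out for $\delta$ in the preceding lemma, whose essential content is the transformation rule $\phi_{g^{-1}}^\ast\, i_s(\del/\del x^i) = i_s(\del/\del \tilde{x}^i)\,\phi_{g^{-1}}^\ast$ with $\tilde{x}^i = \phi_{g^{-1}}^\ast x^i$. Applying this in each of the two tensor slots rewrites $(g.\otimes g.)\circ P$ as an operator in the transformed coordinates $\tilde x$, and the remaining task is to check that the Poisson coefficients $\Lambda^{k\ell}$ absorb the Jacobian factors produced by the change of coordinates. This is precisely the statement that $\Lambda$ is a $G$-invariant bivector: under the hypothesis $g.\Lambda = \Lambda$ the components transform contravariantly so as to cancel those Jacobians, leaving $P$ expressed in the $\tilde x$-coordinates, i.e.\ $P\circ(g.\otimes g.)$. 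In coordinate-free language this is just the observation that $P$ is the globally defined symmetric double contraction with $\Lambda$ (cf.\ Lemma~\ref{Lemma:EigenschaftenFaserweisesWeylMoyalProdukt} {\it i.)}), and any $G$-invariant tensor induces a $G$-invariant contraction operator.

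The main obstacle is bookkeeping rather than conceptual: one must track how the two symmetric insertions and the bivector components transform simultaneously and verify that the Jacobian matrices $\del \tilde x/\del x$ coming from the two insertion operators combine with the transformation of $\Lambda^{k\ell}$ to give exactly the invariance $g.\Lambda = \Lambda$. Because the $G$-action is grade preserving (as noted in the previous lemma), no subtleties with the symmetric degree $\degs$ arise, and the exponential is a finite sum in each fixed $\lambda$-order, so the termwise argument on $\exp(\tfrac{\im\lambda}{2}P)$ is legitimate in the formal setting. I expect that, after these identifications, the entire argument collapses to a one-line calculation completely parallel to the $\delta$-case treated just above.
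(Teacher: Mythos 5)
Your proposal is correct and follows essentially the same route as the paper: the paper's proof is exactly the one-line computation of pushing $g.$ through the exponential formula, whereby the components $\Lambda^{k\ell}$ and the insertions $i_s(\del/\del x^k)$ transform into their $\tilde{x}$-coordinate counterparts, and $G$-invariance of the \Name{Poisson}-structure together with the coordinate-independence of $\fpweyl$ (Lemma~\ref{Lemma:EigenschaftenFaserweisesWeylMoyalProdukt} \textit{i.)}) identifies the result with $(g.a)\fpweyl(g.b)$. Your decomposition into commutation with $\mu$ and with the bidifferential operator $P$, propagated termwise through the exponential, merely makes explicit what the paper's single displayed calculation does implicitly.
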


\begin{proof}
    Der Beweis ist eine einfache Rechnung
    \begin{align*}
            g.(a\fpweyl b) &=  \mu \fpweyl \eu^{\tfrac{\im \hbar}{2}
          g.\Lambda^{k\ell}_{p} i_s\left(g. \tfrac{\del}{\del x^k} \right)
          \otimes i_s\left(g. \tfrac{\del}{\del x^\ell} \right)} 
        (g.a) \otimes (g.b) \\ & = \mu \fpweyl \eu^{\tfrac{\im \hbar}{2}
          \tilde{\Lambda}^{k\ell}_{p} i_s\left(\tfrac{\del}{\del \tilde{x}^k} \right)
          \otimes i_s\left(\tfrac{\del}{\del \tilde{x}^\ell} \right)} 
        (g.a) \otimes (g.b) \\ & = (g.a) \fpweyl (g.b).
    \end{align*}
\end{proof}

\begin{bemerkung}
Nat"urlich ist nicht nur das faserweise \Name{Weyl-Moyal}-Produkt
    $G$-invariant, sondern aufgrund der Tensorstruktur auch die Fortsetzung
    auf die B"undel $\alg{W} \otimes \mit{\Lambda}^{\bullet}$,
    $\alg{W} \otimes \mit{\Lambda}^{\bullet} \otimes \alg{E}$ und $\alg{W} \otimes
    \mit{\Lambda}^{\bullet} \otimes \END{E}$.
\end{bemerkung}

\begin{lemma}[$G$-Invarianz der kovarianten Differentiale $D$ , $D'$
    und $D^{\sss E}$]
Seien $\nabla$ und $\lconnE$ unter der Gruppe $G$-invariante
Zusammenh"ange auf $TM$ bzw.~$\bundle{E}{\pi}{M}$. Die kovarianten
Differentiale $D$, $D'$ und $D^{\sss E}$ sind unter der $G$-Wirkung
invariant.  
\end{lemma}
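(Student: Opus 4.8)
The plan is to adapt, essentially verbatim, the computation carried out in the earlier lemma on the invariance of $\delta$ and $\delta^{\ast}$. I would write a typical element of the relevant bundle in the product form $a = f \otimes \alpha \otimes u$, where $u$ is the placeholder for the third tensor factor ($1$ for $\alg{W} \otimes \mit{\Lambda}^{\bullet}$, a section $s$ for $\alg{W} \otimes \mit{\Lambda}^{\bullet} \otimes \alg{E}$, or an endomorphism $A$ for $\alg{W} \otimes \mit{\Lambda}^{\bullet} \otimes \END{E}$), and then compute $g.(Da)$ in local coordinates using the explicit formula \eqref{eq:KovariantesDifferentialFormel}, showing it equals $D(g.a)$. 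Since $D$, $D'$ and $D^{\sss E}$ all differ only by which connection is fed into the third factor, the scalar case carries the whole argument.

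The essential input is the $G$-invariance of the connection $\nabla$, which in pullback form reads $\phi_{g^{-1}}^{\ast}(\lconn[X] Y) = \lconn[\phi_{g^{-1}}^{\ast} X](\phi_{g^{-1}}^{\ast} Y)$, together with the naturality of the exterior derivative under pullback, $g.(\de \alpha) = \de(g.\alpha)$, and the compatibility of pullback with the wedge product. Applying $g.$ to $D(f \otimes \alpha) = \lconn[\del/\del x^i] f \otimes \de x^i \wedge \alpha + f \otimes \de \alpha$ and inserting, exactly as in the $\delta$-proof, the identities $\phi_{g^{-1}}^{\ast}(\del/\del x^i) = \del/\del \tilde{x}^i$, $\phi_{g^{-1}}^{\ast} x^i = \tilde{x}^i$, $\phi_{g^{-1}}^{\ast} f = g.f$ and $\phi_{g^{-1}}^{\ast} \alpha = g.\alpha$, the connection term becomes $\lconn[\del/\del \tilde{x}^i](g.f) \otimes \de \tilde{x}^i \wedge (g.\alpha)$ and the exterior-derivative term becomes $(g.f) \otimes \de(g.\alpha)$; summed, these are precisely $D(g.a)$.

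For $D^{\sss E}$ and $D'$ the same computation goes through unchanged once one notices that the extra summands involve only $\lconnE$ and $\lconnEnd{E}$. By hypothesis $\lconnE$ is $G$-invariant, and since the induced connection $\lconnEnd{E}$ on the endomorphism bundle is built canonically from $\lconnE$ (Lemma~\ref{Lemma:ZusammenhangEndomorphismen}), it inherits this invariance. The $G$-action respects the tensor structure and acts on the third factor by the lifted action on $\alg{E}$ resp.~$\END{E}$, so the additional terms $f \otimes \de x^i \wedge \alpha \otimes \lconnE[\del/\del x^i] s$ and $f \otimes \de x^i \wedge \alpha \otimes \lconnEnd[\del/\del x^i]{E} A$ transform in exactly the same pattern as the scalar term.

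The main obstacle — indeed the only subtle point — is the careful bookkeeping of the coordinate change under the $G$-action, namely verifying that the $G$-invariance of $\nabla$ is precisely what absorbs the Jacobian factors produced by $\phi_{g^{-1}}^{\ast}(\del/\del x^i)$ and $\de(\phi_{g^{-1}}^{\ast} x^i)$. This is entirely parallel to the preceding proof and introduces no new idea: once the invariance of the connection is phrased in pullback form, the claim reduces to the naturality of $\de$ and $\wedge$ already used for $\delta$ and $\delta^{\ast}$.
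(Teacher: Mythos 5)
Your proposal is correct and follows essentially the same route as the paper: a term-by-term computation in local coordinates using the pullback form of the connection invariance, the naturality of $\de$ and $\wedge$ (exactly as in the $\delta$-lemma), and the fact that the induced connection $\lconnEnd{E}$ inherits the invariance of $\lconnE$ (for which the precise reference is Lemma~\ref{Lemma:GInvarianterZusammenhangAufEndomorphismenbuendel} rather than the construction Lemma~\ref{Lemma:ZusammenhangEndomorphismen}). The only cosmetic difference is that the paper computes $g.(D^{\sss E}\Psi)$ first and reads off the case of $D$ by dropping the extra term, whereas you start from $D$ and extend.
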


\begin{proof}
Wir rechnen nach
\begin{align*} 
    g.(D^{\sss E}\Psi) &= g.( D(f \otimes \alpha) \otimes s + f
    \otimes \de x^{i} \wedge \alpha \otimes \lconnE[\frac{\del}{\del
      x^{i}}] s ) \\ & = g.(\lconn[\frac{\del}{\del{x^i}}] f
\otimes \de x^i \wedge \alpha \otimes s) + g.(f \otimes \de \alpha \otimes
s) + g.(f \otimes \de x^{i} \wedge \alpha \otimes
\lconnE[\frac{\del}{\del x^{i}}] s ) \\ & = g.(\lconn[\frac{\del}{\del{x^i}}] f)
\otimes g.(\de x^i \wedge \alpha) \otimes g.s + g.f \otimes
g.\de \alpha \otimes g.s + g.f \otimes g.(\de x^{i} \wedge \alpha)
\otimes g. (\lconnE[\frac{\del}{\del x^{i}}] s)  \\ & =
\lconn[g.\frac{\del}{\del{x^i}}] g.f \otimes g.\de x^{i} \wedge g.\alpha
\otimes g.s + g.f \otimes \de g.\alpha \otimes g.s + g.f \otimes g.\de x^{i}
\wedge g. \alpha \otimes \lconnE[g. \frac{\del}{\del x^{i}}] g.s \\ & = 
  \lconnE[\frac{\del}{\del{\tilde{x}^i}}] g.f \otimes 
  \de \tilde{x}^i \wedge g.\alpha \otimes g.s + g.f \otimes \de 
  g.\alpha \otimes g.s + g.f \otimes \de \tilde{x}^{i} \wedge \alpha \otimes
  \lconnE[\frac{\del}{\del \tilde{x}^{i}}] g.s \\ & = D^{\sss E} (g.\Psi). 
  \end{align*}
Damit haben wir gleichzeitig gezeigt, da"s auch $D$ mit der
Wirkung $g.$ vertauscht, indem wir den zweiten Term der ersten Zeile
\glqq vergessen\grqq{}. Da der Zusammenhang $\lconnEnd{E}$ auf dem
Endomorphismenb"undel auch invariant ist, wenn
$\lconnE$ invariant ist (siehe Lemma
\ref{Lemma:GInvarianterZusammenhangAufEndomorphismenbuendel}), folgt
die Invarianz von $D'$ automagisch.    
\end{proof}

\begin{lemma}[$G$-Invarianz der Kr"ummungen $R$ und $R^{\sss E}$]
Sei $(M,\omega, \nabla)$ eine symplektische Mannigfaltigkeit mit
einem $G$-invarianten Zusammenhang, und $(\bundle{E}{\pi}{M},
\lconnE)$ sei ein Vektorb"undel "uber
der Mannigfaltigkeit $M$ mit einem $G$-invarianten Zusammenhang $\lconnE$. Die
sym\-plek\-ti\-sche Kr"um\-mung $R$ nach Definition 
und Lemma~\ref{Lemma:SymplektischeKruemmung} und die Kr"ummung
$R^{\sss E}$ sind $G$-invariant, d.~h.~es gilt  
\begin{align}
 g.R=R \quad \text{und} \quad g.R^{\sss E} = R^{\sss E}.   
\end{align}
\end{lemma}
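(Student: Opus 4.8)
The plan is to reduce the $G$-invariance of both curvatures to the $G$-invariance of the underlying connections, exploiting that the curvature is a natural operation built from the connection by the standard commutator formula and that $G$ acts through diffeomorphisms. Recall that $G$-invariance of $\nabla$ means $g.(\lconn[X] Y) = \lconn[g.X](g.Y)$ for all $g \in G$ and all vector fields $X,Y$, and likewise $g.(\lconnE[X] s) = \lconnE[g.X](g.s)$ for $\nabla^{E}$. Since each $g$ operates through the pullback $\phi_{g^{-1}}^{\ast}$ of a diffeomorphism, pushforward of vector fields commutes with the Lie bracket, $g.[X,Y] = [g.X,g.Y]$, and the action commutes with $\de$. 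These are exactly the identities already exploited in the preceding lemmas on the invariance of $\delta$, $\fpweyl$ and of the covariant differentials $D$, $D'$, $D^{\sss E}$.

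First I would treat the curvature $\hat{R}$ of $\nabla$ in the standard form $\hat{R}(X,Y)U = \lconn[X]\lconn[Y]U - \lconn[Y]\lconn[X]U - \lconn[{[X,Y]}]U$. Applying $g.$ termwise and pulling the action inside each covariant derivative via the invariance of $\nabla$, together with $g.[X,Y]=[g.X,g.Y]$, gives
\begin{align*}
g.(\hat{R}(X,Y)U) = \hat{R}(g.X,g.Y)(g.U),
\end{align*}
so that $\hat{R}$ is $G$-invariant. Since the symplectic curvature is $R(Z,U,X,Y) = \omega(Z,\hat{R}(X,Y)U)$ by Lemma~\ref{Lemma:SymplektischeKruemmung}, and $\omega$ is $G$-invariant by hypothesis, contracting the invariant tensor $\hat{R}$ against the invariant form $\omega$ yields $g.R = R$. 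Interpreting $R \in \schnitt{S^{2}T^{\ast}M \otimes \Lambda^{2}T^{\ast}M}$ as an element of $\alg{W} \otimes \mit{\Lambda}^{\bullet}$, the natural tensorial $G$-action therefore fixes it.

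The curvature $R^{\sss E}$ is handled by the same argument: writing $R^{\sss E}(X,Y)s = \lconnE[X]\lconnE[Y]s - \lconnE[Y]\lconnE[X]s - \lconnE[{[X,Y]}]s$ and using the $G$-invariance of $\nabla^{E}$ gives $g.(R^{\sss E}(X,Y)s) = R^{\sss E}(g.X,g.Y)(g.s)$, hence $g.R^{\sss E} = R^{\sss E}$. I do not expect a genuine obstacle; the computation is the same straightforward manipulation as in the earlier invariance lemmas, and in fact the invariance of $R$ and $R^{\sss E}$ is precisely what those arguments (and the recursive construction of $r$, $r'$, $r^{\sss E}$) silently require. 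The only point demanding care is the bookkeeping of the induced $G$-action on the different tensor factors $\Lambda^{\bullet}T^{\ast}M$, $E$ and $\End{E}$; for $R^{\sss E}$ one additionally invokes that the induced connection $\lconnEnd{E}$ is invariant whenever $\lconnE$ is (Lemma~\ref{Lemma:GInvarianterZusammenhangAufEndomorphismenbuendel}), so the entire setting stays compatible with the action.
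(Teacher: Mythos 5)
Your argument is correct, but it is not the route the paper takes. You prove invariance of $\hat{R}$ directly from the commutator formula $\hat{R}(X,Y)U = \lconn[X]\lconn[Y]U - \lconn[Y]\lconn[X]U - \lconn[{[X,Y]}]U$, using $g.(\lconn[X]Y)=\lconn[g.X](g.Y)$ and $g.[X,Y]=[g.X,g.Y]$, and then contract with the invariant $\omega$; the same termwise computation handles $R^{\sss E}$. The paper instead argues indirectly through the Weyl-algebra machinery already set up: from the previously established identity $g.(Da)=D(g.a)$ and $D^{2}=\frac{\im}{\lambda}\ad(R)$ it deduces $\ad(R)=\ad(g.R)$, so $R-g.R$ is central with respect to $\fpweyl$; since $\degs(R-g.R)=2(R-g.R)$ while central elements have symmetric degree $0$, it follows that $R=g.R$, and the analogous statement for $R^{\sss E}$ is extracted from the formulas for $(D')^{2}$ and $(D^{\sss E})^{2}$. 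Your version is more elementary and self-contained (it never mentions $\fpweyl$ or the grading), at the price of explicitly tracking the induced action on each tensor factor; the paper's version is shorter given the preceding invariance lemmas and packages $R$ and $R^{\sss E}$ uniformly as obstructions to $\alg{D}^{2}=0$. One small point to make explicit in your write-up: the contraction step needs $g.\omega=\omega$, which is not restated in the lemma but is part of the standing assumption that $(M,\omega,G)$ is a symplectic $G$-manifold with $G$ acting by symplectomorphisms — the paper's proof uses the same fact implicitly through the $G$-invariance of $\fpweyl$.
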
 

\begin{proof}
F"ur den Beweis rechnen wir nach
\begin{gather}
\begin{aligned}
\label{eq:GInvarianzvonDRechnung}
    g.(Da) = D(g.a) \quad \text{woraus dann folgt, da"s} \quad Da =
    g.(D(g^{-1}.a)) \\ \Rightarrow D^{2}a  = 
    \frac{\im}{\lambda} \ad(R) (a) = g.(D^2 (g^{-1}.a)) =
    \frac{\im}{\lambda} g.(\ad(R)(g^{-1}.a)) = \frac{\im}{\lambda}
    \ad(g.R)(a)
\end{aligned}
\end{gather}
Daraus folgt unmittelbar, da"s $R-g.R$ zentral ist. Da andererseits
f"ur den symmetrischen Grad $\degs (R-g.R) = 2(R-g.R)$ gilt, mu"s
$R-g.R=0$ sein. Analog verh"alt es sich f"ur $R^{\sss E}$. 
Nach Lemma~\ref{Lemma:EigenschaftenvonDStrichundD} gilt
\begin{align}
    (D')^{2} = \frac{\im}{\lambda} \ad(R-\im \lambda R^{\sss E}) \quad
    \text{und} \quad (D^{\sss E})^2 = \frac{\im}{\lambda} \ad(R) +
    R^{\sss E}.
\end{align}
Mit einer zu Rechnung~\eqref{eq:GInvarianzvonDRechnung} analogen
Betrachtung ergibt sich die Behauptung.  
\end{proof}

Wir gehen nun davon aus, da"s sowohl das Element $s \in
\alg{W}_{3} \otimes \mit{\Lambda}^{0}$ als auch
die geschlossene Zweiform $\Omega \in  \lambda
\schnitt{\Lambda^2T^{\ast}M}\FP$ invariant unter der $G$-Wirkung ist, d.~h.~es gilt 
\begin{align}
    \label{eq:GInvarianzvonOmegaUnds}
    g.s=s \quad \text{und} \quad  g.\Omega=\Omega
\end{align}

Solche finden wir nat"urlich immer, da wir insbesondere $s=0$ und
$\Omega=0$ w"ahlen k"onnen. Wir w"ahlen $s=0$, dann gilt
$0=g.(\delta^{-1}r) = \delta^{-1}(g.r)$. F"ur das Element $r$ ist somit 

\begin{align*}
   \delta(g.r) = g.(\delta r) &= g.(Dr) + \frac{\im}{\lambda} g.(r \fpweyl
   r) + g.R + g.(1 \otimes \Omega \otimes 1) \\ &= D(g.r) + \frac{\im}{\lambda}
   (g.r \fpweyl g.r) + g.R + 
   (1 \otimes g.\Omega \otimes 1) \\ &= D(g.r) + \frac{\im}{\lambda} (g.r
   \fpweyl g.r) + R + (1 \otimes \Omega \otimes 1), 
\end{align*}
woraus aufgrund der Eindeutigkeit $g.r=r$ folgt. Analog verfahren wir mit $r'$, und damit haben
wir auch die Invarianz von $r^{\sss E}$ geschenkt bekommen. 

Nun sind wir in der Lage die $G$-Invarianz der
\Name{Fedosov}-Derivationen $\alg{D}$, $\alg{D}'$ und $\alg{D}^{\sss E}$ zu zeigen.

\begin{lemma}[$G$-Invarianz der \Name{Fedosov}-Derivationen
    $\alg{D}$, $\alg{D}'$ und $\alg{D}^{\sss E}$]
Die \Name{Fedosov}-Derivationen $\alg{D}$, $\alg{D}'$ und
$\alg{D}^{\sss E}$ sind $G$-invariant.
\end{lemma}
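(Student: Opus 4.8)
Der Beweis soll zeigen, dass sich die $G$-Invarianz der drei Fedosov-Derivationen vollständig aus den bereits etablierten Invarianzeigenschaften ihrer Bestandteile ergibt. Zunächst rufe ich die Definitionen $\alg{D}=-\delta+D+\frac{\im}{\lambda}\ad(r)$, $\alg{D}'=-\delta+D'+\frac{\im}{\lambda}\ad(r')$ und $\alg{D}^{\sss E}=-\delta+D^{\sss E}+\frac{\im}{\lambda}\ad(r)+r^{\sss E}$ mit $r^{\sss E}=\frac{\im}{\lambda}(r'-r)$ in Erinnerung. Da die $G$-Wirkung $\fieldf{C}$-linear ist, würde ich die Invarianz summandenweise nachweisen; die Summe invarianter Operatoren ist dann wieder invariant.

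Für die Summanden $-\delta$ sowie $D$, $D'$, $D^{\sss E}$ ist nichts mehr zu tun, denn deren $G$-Invarianz wurde in den beiden vorangehenden Lemmata (über $\delta,\delta^{\ast},\delta^{-1}$ bzw.~über die kovarianten Differentiale) bereits gezeigt. Der Kern des Beweises ist daher die Invarianz der quasiinneren Anteile $\frac{\im}{\lambda}\ad(r)$ und $\frac{\im}{\lambda}\ad(r')$. Hierfür nutze ich, dass nach der unmittelbar vorangehenden Rechnung unter der Voraussetzung $g.s=s$, $g.\Omega=\Omega$ bereits $g.r=r$ und $g.r'=r'$ gilt, zusammen mit der Äquivarianz des faserweisen Weyl-Moyal-Produkts. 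Die entscheidende Rechnung, die ich für $a\in\alg{W}\otimes\mit{\Lambda}^{n}$ durchführen würde, lautet:
\begin{align*}
g.\left(\ad(r)(a)\right) &= g.\left(r\fpweyl a-(-1)^{n}\,a\fpweyl r\right) \\
&= (g.r)\fpweyl(g.a)-(-1)^{n}(g.a)\fpweyl(g.r) \\
&= r\fpweyl(g.a)-(-1)^{n}(g.a)\fpweyl r=\ad(r)(g.a),
\end{align*}
wobei im zweiten Schritt die Äquivarianz von $\fpweyl$ und im dritten $g.r=r$ eingeht; für $\ad(r')$ verläuft die Rechnung identisch.

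Für $\alg{D}^{\sss E}$ bliebe abschließend der Modulterm $r^{\sss E}$. Da $r^{\sss E}=\frac{\im}{\lambda}(r'-r)$ als $\fieldf{C}$-Linearkombination der invarianten Elemente $r'$ und $r$ selbst $G$-invariant ist und seine Wirkung über die ebenfalls äquivariante deformierte Linksmultiplikation $\circ'$ auf $\alg{W}\otimes\mit{\Lambda}^{\bullet}\otimes\alg{E}$ erfolgt, überträgt sich das obige Argument wortwörtlich. Ein echtes Hindernis erwarte ich nicht; der einzige Punkt, auf den ich achten müsste, ist die korrekte Buchführung über die drei verschiedenen Multiplikationen $\fpweyl$, $\circ$ und $\circ'$ samt ihrer $\field{Z}_2$-Graduierungsvorzeichen, damit die Äquivarianz in jedem der drei Bündel $\alg{W}\otimes\mit{\Lambda}^{\bullet}$, $\alg{W}\otimes\mit{\Lambda}^{\bullet}\otimes\alg{E}$ und $\alg{W}\otimes\mit{\Lambda}^{\bullet}\otimes\END{E}$ an der richtigen Stelle eingesetzt wird. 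Da die Summe $\fieldf{C}$-linearer invarianter Operatoren wieder invariant ist, folgt damit die Behauptung für alle drei Derivationen.
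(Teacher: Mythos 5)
Ihr Beweis ist korrekt und verfolgt im Wesentlichen denselben Weg wie die Arbeit: summandenweise Invarianz unter Verwendung der zuvor gezeigten Invarianz von $\delta$, der kovarianten Differentiale und von $r$ bzw. $r'$, wobei die Gleichung $\ad(g.r)(g.b)=\ad(r)(g.b)$ genau die Rechnung ist, die die Arbeit kompakt notiert und die Sie über die Äquivarianz von $\fpweyl$ ausbuchstabieren. Ihre explizite Behandlung des Terms $r^{\sss E}$ bei $\alg{D}^{\sss E}$ ist eine willkommene Präzisierung dessen, was die Arbeit nur mit \glqq analog\grqq{} abhandelt.
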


\begin{proof}
\begin{align*}
     g.(\alg{D}b) &= -g.(\delta b) + g.(Db) +\frac{\im}{\lambda}
     g.(\ad(r) b) \\ & = - \delta(g.b) + D(g.b) + \frac{\im}{\lambda}
         \ad(g.r)(g.b) \\ & = - \delta(g.b) + D(g.b) + \frac{\im}{\lambda}
         \ad(r)(g.b) \\ &= \alg{D}(g.b) 
\end{align*}
Analog dazu zeigt man $g.(\alg{D}' a) = \alg{D}' (g.a)$ und
$g.(\alg{D}^{\sss E} \Psi) = \alg{D}^{\sss E} (g.\Psi)$. 
\end{proof}

\begin{satz}[$G$-invariantes \Name{Fedosov}-Sternprodukt $\star$]
\label{Satz:GInvariantesFedosovSternprodukt}
\index{Fedosov-Sternprodukt@\Name{Fedosov}-Sternprodukt!G-invariantes@$G$-invariantes}
Gegeben eine symplektische
$G$-Mannigfaltigkeit\index{G-Mannigfaltigkeit@$G$-Mannigfaltigkeit}
mit Zusammenhang 
$(M,\omega,G,\nabla)$. Das 
resultierende \Name{Fe\-do\-sov}-Stern\-pro\-dukt $\star_{\sss
  (\nabla,\Omega, s)}$ ist genau dann $G$-invariant, falls die
symplektische Zweiform $\omega$, der Zusammenhang $\nabla$, sowie 
die geschlossene Zweiform $\Omega$ und das Element $s$ auch
$G$-invariant gew"ahlt wurden, und es gilt 
 
\begin{align}
    \label{eq:InvariantesFedosovSternprodukt}
   g.( a \star_{\sss (\nabla,\Omega, s)} b) = (g.a) 
   \star_{\sss (\nabla,\Omega, s)}(g.b).
\end{align}
\end{satz}

\begin{proof}
Der Beweis ist nach der gemachten Vorarbeit nicht weiter schwierig.    
Wie bereits gezeigt vertauscht $\alg{D}$ mit allen $g \in
G$. Daraus folgt nun

    \begin{align*}
        \alg{D}(g.(\tau(f))) & = g.(\alg{D} \tau (f)) =0 \\
        \Rightarrow g.\tau(f) & = \tau(g.f), \quad \mbox{da} \quad
        \sigma(g.\tau (f)) = g.f.
    \end{align*}

Und damit ist
\begin{equation} 
\begin{aligned}
    \label{eq:Beweis2:GWirkungAutomorphismusSternprodukt}
    g.(f \star_{\sss (\nabla,\Omega, s)} h) & = g.(\sigma(\tau(f)
    \fpweyl \tau(h))) \\ & =\sigma  
    (g.(\tau(f) \fpweyl \tau(h))) \\ &= \sigma(g.\tau(f) \fpweyl 
    g.\tau(h)) \\ &= \sigma (\tau(g.f) \fpweyl \tau (g.h)) \\ &= (g.f)
    \star_{\sss (\nabla,\Omega, s)} (g.h). 
\end{aligned}
\end{equation}
\end{proof}

\begin{satz}[$G$-"aquivarianter deformierter Bimodul]
\index{Bimodul!deformierter!G-aequivarianter@$G$-\"aquivarianter}
    Sei $G$ eine Gruppe, $(M,\omega,G,\nabla)$ eine symplektische
    $G$-Mannigfaltigkeit mit einem $G$-invarianten Zusammenhang
    $\nabla$. Desweiteren sei $(\bundle{E}{\pi}{M},G,\lconnE)$ ein
    $G$-B"undel mit einem $G$-invarianten Zusammenhang $\lconnE$. Ist nun
    $\Omega$ eine $G$-invariante geschlossene Zweiform, so ist
    der daraus konstruierte \Name{Fedosov}-Bimodul
    $(\alg{E},\bullet',\bullet)$ $G$-"aquivariant. 
\end{satz} 

\begin{proof}
    Der Beweis geschieht vollst"andig analog zu dem von
    Satz~\ref{Satz:GInvariantesFedosovSternprodukt}. Aufgrund der
    Invarianz 
    von $\alg{D}'$ und $\alg{D}^{\sss E}$ sowie der
    B"undeldeformationen $\circ'$ und $\circ$ folgt analog zur
    Rechnung~\eqref{eq:Beweis2:GWirkungAutomorphismusSternprodukt}  
    \begin{align}
       g.(A\star' B) = (g.A)\star' (g.B), \quad  g.(A \bullet' s) =
       (g.A)\bullet' (g.s), \quad g.(s\bullet f) = (g.s) \bullet (g.f)    
    \end{align}
f"ur $A,B \in \schnittf{\End{E}}$, $s \in \schnittf{E}$ und $f \in
\Cinff{M}$. 
\end{proof}

\begin{bemerkungen}
~\vspace{-5mm}
\begin{compactenum}
\item Analog zu der hier behandelten $G$-Invarianz kann man auch die unter
einer \Name{Lie}-Algebra $\LieAlg{g}$-"aquivariante
\Name{Fedosov}-Konstruktion betrachten
\citep{mueller-bahns.neumaier:2004a}.
\index{Fedosov-Konstruktion@\Name{Fedosov}-Konstruktion!g-invariante@$\LieAlg{g}$-\"aquivariante}   
\item Die hier gemachte Konstruktion f"ur Sternprodukte oder die Bimoduln ist sehr
    kanonisch, und scheinbar ist es einfach, einen $\LieAlg{g}$- oder
    $G$-"aquivarianten deformierten \Name{Fedosov}-Bimodul zu konstruieren. Das Problem
    besteht darin, invariante Zusammenh"ange auf der Mannigfaltigkeit
    bzw.~auf dem Vektorb"undel zu finden. Im allgemeinen ist dies sehr
    schwierig und nicht immer m"oglich. Wir werden dieses Problem in
    Anhang~\ref{sec:GundgInvarianz} aufgreifen und einige
    F"alle angeben, in denen es immer invariante Zusammenh"ange (und
    damit invariante Sternprodukte und Bimoduln) gibt.
\end{compactenum} 
\end{bemerkungen} 
\index{Fedosov-Konstruktion@\Name{Fedosov}-Konstruktion!G-invariante@$G$-invariante|)}
\index{Sternprodukt!formales|)}
\index{Sternprodukt|)}


\chapter{$H$-"Aquivarianz und \Name{Morita}-"Aquivalenz deformierter Algebren}
\label{sec:MoritaAequivalenzdeformierterAlgebren}

\fancyhead[CO]{\slshape \nouppercase{\rightmark}} 
\fancyhead[CE]{\slshape \nouppercase{\leftmark}} 

\section{Invariante und $H$-"aquivariante Sternprodukte}
\label{sec:HInvarianteSternprodukte}

\subsection{Motivation und erste Beispiele}

In diesem Kapitel wollen wir einige Motivationen und Grundlagen
geben, Sternprodukte anzugeben, die unter einer sp"ater n"aher
beschriebenen Wirkung invariant sind. 

Die Motivation liegt auf der Hand, denn in der Physik spielen
Invarianzen unter (\Name{Lie}-) Gruppen-
bzw.~\Name{Lie}-Algebrenwirkungen eine wichtige Rolle
\citep{marsden.ratiu:2000a}.  

\begin{compactenum}
   \item {\em Symmetrien}\index{Symmetrie} spielen in der klassischen wie in der
       Quantenmechanik eine wichtige Rolle. Symmetrien und
       Erhaltungsgr"o"sen\index{Erhaltungsgroesse@Erhaltungsgr\"o{\ss}e}
       sind eng miteinander verkn"upft, da (unter \glqq
       g"unstigen\grqq{} Voraussetzungen) jede 
       Symmetrie eines Systems einer Erhaltungsgr"o"se entspricht
       \citep{noether:1918a}. Die naheliegenden Beispiele sind
       {\it Translationsinvarianz}, {\it Rotationsinvarianz} und {\it
         Homogenit"at der Zeit}, die {\it Impulserhaltung}, {\it
         Drehimpulserhaltung} und {\it Energieerhaltung} zur Folge
       haben.    
   \item {\em Zwangsbedingungen}\index{Zwangsbedingung} bzw.~{\em Constraints}\index{Constraints}\footnote{Der
         Unterschied zwischen 
       Zwangsbedingungen und Constraints ist, da"s erstere sich
       insbesondere auf den Ortsraum beziehen, also Bedingungen an
       Orte stellen, letztere jedoch auf 
       Einschr"ankungen im Phasenraum. Diese subtile Unterscheidung
       wird oft dadurch hervorgehoben, da"s man auch von {\em
       Phasenraum-Constraints} spricht.} sind "au"sere
       Zw"ange auf ein physikalisches System. Diese sind h"aufig
       geometrischer Natur und schr"anken den Konfigurationsraum (und
       damit auch den Phasenraum) ein.  Durch diese Einschr"ankungen
     ist es in der klassischen wie in der Quantenmechanik oft
     m"oglich den anf"anglich \glqq gro"sen\grqq{} Phasenraum zu
     verkleinern. Dies geschieht mittels {\em 
       Phasenraumreduktion}, was mathematisch dem "Ubergang von einem
     symplektischen Phasenraum\index{Phasenraum!reduzierter} $(M,\omega)$ zu einem reduzierten,
     d.~h.~niedrigerdimensionalen $(M_{\sss \mathrm{red}}, \omega_{\sss
       \mathrm{red}})$ entspricht. Siehe hierzu
     \citep{abraham.marsden:1985a,marsden.ratiu:1999a} und f"ur den
     Fall von Mannigfaltigkeiten mit Sternprodukten
     \citep{bordemann.brischle.emmrich.waldmann:1996a,bordemann.herbig.waldmann:2000a,kowalski.neumaier.pflaum:2004a:pre,bordemann:2004a:pre,bordemann.et.al:2005a:pre}.       
    Wir wollen einige Beispiele f"ur die Phasenraumreduktion\index{Phasenraumreduktion} angeben.
       \begin{compactenum}
       \item Ein starrer K"orper im $\field{R}^{3}$. Dessen
           Konfigurationsraum ist $\field{R}^{3} \times 
       \GR{SO}{3}$. Der Phasenraum ist damit $T^{\ast}(\field{R}^{3}
       \times \GR{SO}{3}) = T^{\ast} \field{R}^{3} \times T^{\ast}
       \GR{SO}{3}$. Fixiert man den Schwerpunkt und spaltet die
       Translationen ab, so ist der Phasenraum $\field{R}^{3} \times \GR{SO}{3}$. 
       \item Ein (starres) Pendel im Schwerefeld. Der urspr"ungliche
           Konfigurationsraum $\field{R}^3$ wird durch die
           Zwangsbedingung zu $S^{2}$ reduziert, der Phasenraum
           ist damit $T^{\ast}S^2$.   
        \end{compactenum}
  \item {\em Eichfreiheit} ist eine spezielle (da physikalisch
          unbeobachtbare) Art der Constraints, die 
    insbesondere in der Feldtheorie auftritt. Auch hier spielen
    Wirkungen von \Name{Lie}-Gruppen und \Name{Lie}-Algebren eine wesentliche
    Rolle.     
\end{compactenum} 

 Wir wollen uns nun ansehen, wie man Symmetrien in den
Sternproduktformalismus implementiert. Dazu werden wir einige
Definitionen und bekannte
Beispiele angeben, sowie das verallgemeinernde Konzept der
\Name{Hopf}-Algebrawirkung implementieren.

\begin{definition}[Invariante Sternprodukte]
    \label{Definition:InvarianteSternprodukte}
Sei $\Phi: M \to M$ ein Diffeomorphismus der symplektischen
Mannigfaltigkeit $(M,\omega,\star)$. Wir
nennen das Sternprodukt 
\begin{compactenum}

\item {\em invariant unter $\Phi$}, falls $\Phi^{\ast} (f\star g)
    = (\Phi^{\ast} f) \star (\Phi^{\ast} g)$ f"ur alle $f,g \in
    \Cinff{M}$. 
 
\item {\em quanteninvariant}, falls es eine formale
    Reihe $T=\id+ \sum_{r=1}^{\infty} \lambda^{r} T_{r}$ von
    Dif\-fe\-ren\-tial\-ope\-ra\-to\-ren\footnote{Im allgemeinen
      m"ussen die $T_{r}$ keine Differentialoperatoren sein, da wir
      uns aber auf differentielle Sternprodukte beschr"anken, ist
      dies automatisch gegeben.} $T_{r}$ gibt, so da"s $\Phi^{\ast}
    \circ T$ ein Automorphismus von $\star$ ist. 
\end{compactenum}
\end{definition}

\begin{beispiel}[Invarianz des \Name{Wick}-Produkts, \citep{waldmann:1995a}]
\index{Sternprodukt!Wick-Typ@vom \Name{Wick}-Typ}
Das \Name{Wick}-Produkt im $\field{C}^{n}$ ist unter der kanonischen
$\GR{U}{n}$-Wirkung invariant, da die Gruppe $\GR{U}{n}$ symplektisch durch
Matrixmultiplikation auf $\field{C}^{n}$ operiert: $\Phi_{\sss U} z
= Uz$, wobei $U \in \GR{U}{n}$ ist.
\end{beispiel}

Die Begriffe der Invarianz und der Quanteninvarianz sind allerdings noch
zu wenig restriktiv. Durch das Einf"uhren bzw.~der Suche nach {\em Impulsabbildungen}
k"onnen wir die Freiheitsgrade reduzieren. Dazu orientieren wir uns an
Arbeiten von 
\citet{arnal.cortet.molin.pinczon:1983a},
\citet{bordemann.herbig.waldmann:2000a} und \citet{gutt.rawnsley:2003a}.
Eine ausf"uhrliche Beschreibung zu klassischen Impulsabbildungen
findet man in \citep{abraham.marsden:1985a}. 

\begin{definition}[Invariante Sternprodukte und Impulsabbildung]
\label{Definition:InvarianteSternprodukteImpulsabbildung}
\index{Sternprodukt!invariantes} \index{Sternprodukt!stark invariantes} \index{Sternprodukt!aequivariantes@\"aquivariantes}
\index{Sternprodukt!star aequivariantes@stark \"aquivariantes}
Gegeben eine symplektische Mannigfaltigkeit $(M, \omega, \star)$ mit
Sternprodukt. Desweiteren sei eine $\ad^{\ast}$-invariante
Impulsabbildung $J:M \to \LieAlgd{g}$ gegeben. Man nennt das Sternprodukt
$\star$ 
\begin{compactenum}
\item {\em invariant}, falls f"ur alle $f,g \in \Cinff{M}$ und $\xi
    \in \LieAlg{g}$ gilt
    \begin{align}
        \label{eq:invariantesSternprodukt}
        \{\SP{J,\xi}, f\star g \} = \{\SP{J,\xi}, f\} \star g + f
        \star \{\SP{J,\xi}, g \}
    \end{align}
\item {\em "aquivariant}, falls f"ur alle $\xi, \eta \in \LieAlg{g}$
    gilt
    \begin{align}
        \label{eq:aequivariantesSternprodukt}
        \SP{J,\xi} \star \SP{J, \eta} - \SP{J,\eta} \star \SP{J,\xi} =
        \im \lambda \{\SP{J,\xi}, \SP{J,\eta} \} =  \im \lambda
        \SP{J,[\xi, \eta]}. 
    \end{align}
\item {\em stark invariant}, falls f"ur alle $\xi \in \LieAlg{g}$ und $f
    \in \Cinff{M}$ gilt
   \begin{align}
        \label{eq:starkinvariantesSternprodukt}
        \SP{J,\xi} \star f - f \star \SP{J,\xi} =
        \im \lambda \{\SP{J,\xi},f \}.  
    \end{align}
\item {\em quanten"aquivariant}, falls es glatte Abbildungen
    $\mbf{J}_{n}: M \to \LieAlgd{g} \otimes \field{C}$ mit $\mbf{J}_{0} = J$ gibt, so
    da"s f"ur die {\em Quantenimpulsabbildung} $\mbf{J}=\sum_{n=0}^{\infty} \lambda^{n}
    \mbf{J}_{n}$ und f"ur alle $\xi, \eta \in \LieAlg{g}$ gilt
    \begin{align}
        \label{eq:quantenaequivarianteSternprodukte}
        \SP{\mbf{J},\xi} \star \SP{\mbf{J}, \eta} - \SP{\mbf{J},\eta} \star \SP{\mbf{J},\xi} =
        \im \lambda  \SP{\mbf{J},[\xi, \eta]}.
    \end{align}
\end{compactenum}
Dabei ist $\SP{J, \xi}$ (bzw.~$\SP{\mbf{J}, \xi}$) die nat"urliche Paarung mit Werten in
$\Cinf{M}$ (bzw.~$\Cinff{M}$), und wir schreiben auch $\SP{J, \xi} =:
J_{\xi}$ (bzw.~$\SP{\mbf{J}, \xi} =: \mbf{J}_{\xi}$).
\end{definition}

\begin{bemerkung}["Aquivarianz und Kovarianz]
\label{Bemerkung:AequivariantKovariant}
Die Begriffe {\em "aquivariant} und {\em kovariant} meinen im
Zusammenhang mit Gruppenwirkungen dasselbe und werden in der Literatur
synonym verwendet. Wir haben uns bem"uht in dieser Arbeit
ausschlie"slich ausschlie"slich den Ausdruck
{\em "aquivariant} zu verwenden, auch wenn in der jeweils angegebenen
Literatur der Begriff {\em kovariant} verwendet wurde.
\end{bemerkung}

Im Rahmen der Deformationsquantisierung ist die Quanten"aquivarianz in
gewisser Hinsicht die \glqq nat"urlichste\grqq{} Art, da sie die
kategoriell besten Eigenschaften hat. Da bei der Quanten"aquivarianz,
aufgrund der h"oheren Ordnungen in $\lambda$, mehr Freiheitsgrade zur
Verf"ugung stehen, ist sie erstmal weniger restriktiv als die "Aquivarianz oder die starke
Invarianz. Jedoch gibt es f"ur die Existenz von
Quantenimpulsabbildungen keine Beweise. 
Nichtsdestotrotz werden wir uns im Rahmen der 
\Name{Hopf}-$^\ast$-Algebren im wesentlichen mit der Quanten"aquivarianz
auseinandersetzen. Dazu werden wir im folgenden Kapitel auf den Begriff der Impulsabbildung
zur"uckgreifen, den wir bereits in
Kapitel~\ref{sec:BildGruppoidmorphismusstrPicHtostrPic} eingef"uhrt
haben.

\subsection{Formulierung mittels \Name{Hopf}-Algebra Techniken}
\label{sec:HopfAlgebraTechniken}
\index{Lie-Gruppe@\Name{Lie}-Gruppe} \index{Lie-Algebra@\Name{Lie}-Algebra}
In diesem Kapitel werden wir das algebraische Konzept
der \Name{Hopf}-Algebren nutzen, um einen geeigneten Begriff der
Invarianz im Rahmen von deformierten Algebren und insbesondere
Sternprodukten zu formulieren. Als Beispiele werden die Gruppenalgebra
einer Gruppe sowie die universell Einh"ullende einer
\Name{Lie}-Algebra dienen, die wir durch diese Formulierung gemeinsam
behandeln k"onnen. Desweiteren sind wir in der Lage eine weitere Klasse
von Symmetrien, die als {\em Quantengruppen} oder
{\em $q$-deformierte Gruppen} in der Literatur zu finden sind
\citep{majid:1995a,kassel:1995a}, zu behandeln. Die $q$-deformierten
\Name{Hopf}-Algebren sind dann 
im Gegensatz zur universell Einh"ullenden einer \Name{Lie}-Algebra
oder der Gruppenalgebra nicht mehr {\em kokommutativ}
(vgl.~Beispiel~\ref{Beispiel:NichtKoKommutativeHopfAlgebra}). 

Die Notation und die wichtigsten Definitionen und S"atze
bez"uglich \Name{Hopf}-Algebren sind in Anhang~\ref{sec:HopfAlgebren}
zusammengetragen. 

\subsection{Erste Definitionen und Beispiele}

\subsubsection{$H$-invariante Sternprodukte}

\begin{definition}[$H$-Invariantes Sternprodukt]
\label{Definition:HInvariantesSternprodukt}
\index{Sternprodukt!H-invariantes@$H$-invariantes}
Gegeben eine \Name{Poisson}-Man\-nig\-fal\-tig\-keit
$(M,\Lambda, \star)$ mit einem Sternprodukt nach Definition
\ref{Definition:FormaleSternprodukte} 
und einer \Name{Hopf}-Algebra $H$ mit einer Wirkung $\neact$ auf der
Funktionenalgebra $\Cinf{M}$ (beziehungsweise auf $\Cinff{M}$ durch
summandenweises Fortsetzen). Das Sternprodukt
$\star$ ist {\em $H$-invariant unter der Wirkung $\neact$} (oder auch {\em invariant unter einer
  Wirkung $\neact$ von $H$}), falls $(H, (\Cinff{M}, \star), \act)$
eine $H$-Linksmodulalgebra nach Definition
\ref{Definition:AxiomeWirkungHopf} ist. Insbesondere gilt f"ur alle
$h\in H$ und $f,g \in \Cinff{M}$ 
\begin{align}
\label{eq:HInvariantesSternprodukt}
h\act(f \star g) = (h_{\sss (1)} \act f) \star (h_{\sss (2)} \act g).
\end{align}   
\end{definition}

Die beiden wichtigen Beispiele sind die Wirkung von \Name{Lie}-Gruppen und \Name{Lie}-Algebren.

\begin{beispiel}[Wirkung einer Gruppe $G$ via Automorphismen]
\label{Beispiel:WirkungGruppeGAlsAutomorphismus}
Sei $G$ nun eine \Name{Lie}-Gruppe. Die Gruppe
$G$ wirkt via Automorphismen  $\phi_{g}^{\ast}:\Cinf{M} \to \Cinf{M}$ und $g\in G$ auf die
Algebra $(\Cinff{M},\star)$. 
Um dies in Termen einer \Name{Hopf}-Algebra zu formulieren, m"ussen wir zur
Gruppenalgebra $\field{C}(G)$ "ubergehen (siehe
Beispiel~\ref{Beispiel:GruppenalgebraHopf}). Da $\phi_{g}$ ein
Diffeomorphismus von $M$ ist, entspricht dies 
nat"urlich genau der Definition eines invarianten Sternprodukts
aus Definition~\ref{Definition:InvarianteSternprodukte}.  
\begin{align}
\label{eq:GruppenwirkungAufSternprodukt}
\phi^{\ast}_{g} (a \star b)  = (\phi^{\ast}_{g}a) \star
(\phi^{\ast}_{g}b) \quad \text{f"ur $a,b \in \Cinff{M}$}. 
\end{align}
\end{beispiel}

\begin{beispiel}[Wirkung einer \Name{Lie}-Algebra $\LieAlg{g}$ via
    Derivationen] 
\label{Beispiel:WirkungLiAlgebraGAlsDerivation}
Analog zu Beispiel~\ref{Beispiel:WirkungGruppeGAlsAutomorphismus}
kann man die \Name{Lie}-Algebrawirkung $\Lie[\xi]:\Cinf{M} \to
\Cinf{M}$, die nicht notwendigerweise von einer Gruppe $G$ kommen
mu"s, mit $\xi \in \LieAlg{g}$ als
\Name{Hopf}-Al\-ge\-bra\-wir\-kung 
ansehen. Die \Name{Lie}-Algebra stellt dabei selbst noch keine
\Name{Hopf}-Algebra dar, allerdings deren universell
Einh"ullende $\universell{\LieAlg{g}}$ (Beispiel
\ref{Beispiel:UniverselleEinhuellendeHopf}). Die Wirkung wird dann zu   
\begin{align}
    \Lie[\xi](a \star b) =  (\Lie[\xi]a) \star b +  a \star (\Lie[\xi]
    b) \quad \text{f"ur $a,b \in \Cinff{M}$}.
\end{align}
Kommt die \Name{Lie}-Algebra von einer \Name{Lie}-Gruppe $G$, ist also
$\LieAlg{g}=T_{e}G$, so ist die Wirkung offensichtlich die
infinitesimale Version der Gruppenwirkung. Ferner sieht man, da"s
die Definition~\ref{Definition:HInvariantesSternprodukt} im
\Name{Lie}-Algebrafall
mit $\Lie[\xi] ( \cdot)= \{ \SP{J,\xi}, \cdot \}$ der urspr"unglichen
in Definition~\ref{Definition:InvarianteSternprodukteImpulsabbildung}
entspricht, sofern ein $J$ existiert.   
\end{beispiel}

Bei den Beispielen~\ref{Beispiel:WirkungGruppeGAlsAutomorphismus} und
\ref{Beispiel:WirkungLiAlgebraGAlsDerivation} sind die beiden
\Name{Hopf}-Algebren jeweils kokommutativ.

\begin{lemma}[Triviale Wirkung einer \Name{Hopf}-Algebra]
\label{Lemma:TrivialeWirkungEinerHopfAlgebra}
Gegeben sei eine \Name{Poisson}-Mannigfaltigkeit mit Sternprodukt
$(M,\Lambda,\star)$ und eine beliebige \Name{Hopf}-Al\-ge\-bra 
$H$. Wirkt die \Name{Hopf}-\-Al\-ge\-bra "uber $\field{C}$  trivial 
(nach Beispiel~\ref{Beispiel:TrivialeHWirkung}), so ist das Sternprodukt
unter dieser Wirkung invariant.  
\end{lemma}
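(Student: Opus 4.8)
The plan is to check directly that, for the trivial action, the triple $(H, (\Cinff{M}, \star), \act)$ satisfies the axioms of an $H$-Linksmodulalgebra from Definition~\ref{Definition:AxiomeWirkungHopf}; the decisive point is the compatibility condition~\eqref{eq:HInvariantesSternprodukt}. First I would recall that the trivial action over $\field{C}$ from Beispiel~\ref{Beispiel:TrivialeHWirkung} is given by $h \act f = \varepsilon(h) f$ for all $h \in H$ and $f \in \Cinf{M}$, where $\varepsilon$ denotes the counit of $H$, and that this extends summand-wise to $\Cinff{M}$, so that $h \act \sum_n \lambda^n f_n = \varepsilon(h) \sum_n \lambda^n f_n$. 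In particular, scalar multiplication by $\varepsilon(h) \in \field{C}$ commutes with the $\fieldf{C}$-bilinear product $\star$.

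Next I would verify the module-algebra axioms one at a time. Since $\varepsilon$ is an algebra homomorphism, $g \act (h \act f) = \varepsilon(g)\varepsilon(h) f = \varepsilon(gh) f = (gh) \act f$ and $1_{\sss H} \act f = \varepsilon(1_{\sss H}) f = f$, so $\act$ is genuinely a left action; moreover $h \act 1 = \varepsilon(h) 1$ gives the required compatibility with the unit. For the central condition~\eqref{eq:HInvariantesSternprodukt} I would compute
\begin{align*}
(h_{\sss (1)} \act f) \star (h_{\sss (2)} \act g)
&= \bigl(\varepsilon(h_{\sss (1)}) f\bigr) \star \bigl(\varepsilon(h_{\sss (2)}) g\bigr) \\
&= \varepsilon(h_{\sss (1)}) \varepsilon(h_{\sss (2)}) \, (f \star g) \\
&= \varepsilon(h) \, (f \star g) = h \act (f \star g),
\end{align*}
where the third equality is the counit identity $\varepsilon(h_{\sss (1)})\varepsilon(h_{\sss (2)}) = \varepsilon(h)$, which follows by applying the (multiplicative) functional $\varepsilon$ to the counit axiom $\varepsilon(h_{\sss (1)}) h_{\sss (2)} = h$.

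I do not expect any genuine obstacle here: the argument is a one-line Sweedler computation, and the only thing to be careful about is to confirm \emph{all} of the axioms of Definition~\ref{Definition:AxiomeWirkungHopf} rather than merely~\eqref{eq:HInvariantesSternprodukt}. The conceptual reason the statement holds is that the trivial action multiplies functions only by scalars from $\field{C}$, so it can never \emph{see} the noncommutative structure encoded in the higher-order terms $C_n$ of $\star$; the counit property $\varepsilon(h_{\sss (1)})\varepsilon(h_{\sss (2)}) = \varepsilon(h)$ is precisely what guarantees that the two scalar factors produced on the right-hand side recombine into the single factor $\varepsilon(h)$ appearing on the left.
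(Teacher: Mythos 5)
Your proof is correct and follows essentially the same route as the paper: both reduce the invariance condition to the counit identity $\varepsilon(h_{\sss (1)})\varepsilon(h_{\sss (2)}) = \varepsilon(h)$, with the paper writing the same one-line \Name{Sweedler} computation in the reverse direction. Your additional verification of the remaining module-algebra axioms is a harmless (and slightly more complete) supplement to what the paper records.
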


\begin{proof} 
Es gilt offensichtlich
\begin{align*}
    h \act (f \star g) & = \varepsilon(h) (f \star g) \\ & =
    \varepsilon(h_{\sss (1)} \varepsilon (h_{\sss (2)})) (f \star  g)
    \\ & = (\varepsilon(h_{\sss (1)}) f) \star (\varepsilon(h_{\sss
      (2)}) g) \\ & = (h_{\sss (1)} \act f) \star (h_{\sss (2)} \act
    g), 
\end{align*}
so da"s jede \Name{Hopf}-Algebra $H$ durch die triviale
Wirkung auf ein gegebenes Sternprodukt wirkt.
\end{proof}

\subsubsection{Impulsabbildung und $H$-"aquivariante Sternprodukte}

Nun wollen wir Definition \ref{Definition:InvarianteSternprodukteImpulsabbildung} auf den Fall
der Wirkung einer \Name{Hopf}-$^\ast$-Algebra auf \Name{Hermite}sche
Sternprodukte erweitern\footnote{Im Fall von nicht-\Name{Hermite}schen
Sternprodukten kann auf die $^\ast$-Struktur verzichten werden, da
diese nicht essentiell f"ur die Definition von $H$-"Aquivarianz
ist. Desweiteren kann man in diesem Fall auch \Name{Hopf}-Algebren
ohne $^\ast$-Struktur verwenden}.

Sei nun $\alg{A}$ eine kommutative $^\ast$-Algebra "uber dem Ring
$\ring{C}$ (wir denken nat"urlich insbesondere an
$\alg{A}=\Cinf{M}$) und $\defalg{A}=(\algf{A}, \star)$ sei eine
\Name{Hermite}sche Deformation von $\alg{A}$. Sei weiter $H$ eine
\Name{Hopf}-$^\ast$-Algebra "uber dem Ring $\ring{C}$.
 
Wir wollen nun untersuchen wann wir eine {\em innere Wirkung} der
\Name{Hopf}-$^\ast$-Algebra auf die deformierte Algebra $\defalg{A}$
finden k"onnen. Das besondere Interesse an innere Wirkungen ist durch die
Quantenmechanik motiviert, wie das folgende Beispiel zeigt.

\begin{beispiele}[Innere Wirkungen einer \Name{Lie}-Gruppe]
Sei $G$ eine \Name{Lie}-Gruppe. Wir betrachten die \Name{Hopf}-$^\ast$-Algebra
    $H=\field{C}[G]$ mit der "ublichen Koalgebrastruktur
    (vgl.~\ref{Beispiel:GruppenalgebraHopf}). Diese wirkt
    mittels innerer Automorphismen auf die $^\ast$-Algebra
    $\alg{B}(\hilbert{H})$ der beschr"ankten Operatoren auf einem
    \Name{Hilbert}-Raum $\hilbert{H}$
    \begin{align*}
        U: \field{C}[G] \ni g \mapsto U_{g} \in \alg{B}(\hilbert{H}).
    \end{align*}
Die innere Wirkung wird dann zu $g \neact: \alg{B}(\hilbert{H}) \ni A
\mapsto U_{g} A U_{g}^{-1} \in \alg{B}(\hilbert{H})$ und
$(\field{C}[G], \alg{B}(\hilbert{H}), \neact)$ ist eine
$\field{C}[G]$-Linksmodulalgebra. 
\end{beispiele}

Bei der Formulierung wollen wir uns einer Impulsabbildung bedienen,
wie wir sie bereits in Kapitel~\ref{sec:BildGruppoidmorphismusstrPicHtostrPic} eingef"uhrt
haben. Mit Hilfe der Impulsabbildung $J:H \to  \alg{A}$ k"onnen wir eine innere $^\ast$-Wirkung
auf einer $^\ast$-Algebra definieren. Diese ist, wie wir in
Lemma~\ref{Lemma:InnereWirkungSternprodukteQIA} gezeigt haben, gegeben durch 
\begin{align*}
    h \act a := J(h_{\sss(1)})\cdot a \cdot J(S(h_{\sss (2)})).
\end{align*}
Auf der undeformierten, kommutativen Algebra $\alg{A}$ ist diese Wirkung trivial,
wie wir in Korollar~\ref{Korollar:InnereWirkungaufKommutativerAlgebra}
gezeigt haben. Genau dies wird uns bei der Formulierung einer inneren
Wirkung auf der deformierten Algebra $(\algf{A},\star)$ noch
Probleme einbringen.

Sei im weiteren 
\begin{align}
\label{Definition:ImpulsabbildungDeformierteAlgebra}
J= \sum_{n=0}^{\infty} \lambda^{n} J_{n}: H \to (\algf{A},\star)
\end{align}

eine Impulsabbildung. Nach Definition~\ref{Definition:InvarianteSternprodukteImpulsabbildung}
  handelt es sich bei dieser Definition um eine {\em
    Quantenimpulsabbildung}, jedoch ist diese Definition mit unserer
  Definition~\ref{Definition:ImpulsabbildungHtoA} konsistent, da es
  sich bei einer \Name{Hermite}schen Deformation $(\algf{A},\star)$
  um eine $^\ast$-Algebra handelt. F"ur die in der klassischen
  Mechanik gebr"auchliche Impulsabbildung wollen wir auf
  \citep{abraham.marsden:1985a} verweisen, f"ur einen tieferen
  Einblick in die Quantenimpulsabbildungen im Sinne von
  Definition~\ref{Definition:InvarianteSternprodukteImpulsabbildung} auf die Arbeiten
  \citep{xu:1998a,bordemann.herbig.waldmann:2000a,mueller-bahns.neumaier:2004a}.  

Die Frage, die sich uns nun stellt, ist: 
{\em Welche \Name{Hopf}-Algebren k"onnen eine (nichttriviale) innere
  Wirkung f"ur eine deformierte kommutative Algebra liefern?} Das
hei"st, k"onnen wir f"ur eine gegebene \Name{Hopf}-Algebra eine
Impulsabbildung $J: H \to (\algf{A}, \star)$ finden, so da"s wir
mit Hilfe dieser eine innere Wirkung auf der deformierten Algebra
bekommen? Wie bereits erw"ahnt wird uns die Kommutativit"at der
untersten Ordnung dabei Probleme bereiten. Die naive Idee, da"s die
Gruppenalgebra $\field{C}[G]$ oder die 
universell Einh"ullende einer \Name{Lie}-Algebra
$\universell{\LieAlg{g}}$ eine Symmetrie eines Sternprodukts sein
k"onnten, scheitert bereits im Ansatz, wie wir in den folgenden Gegenbeispielen zeigen.

\begin{beispiel}[Gegenbeispiel 1: Gruppenalgebra {$\field{C}[G] \FP$}]
\label{Beispiel:GegenbeispielGruppenalgebra}
Sei nun $\alg{A}$ eine kommutative Algebra und $\Phi: G \to
\Aut(\alg{A})$. Da $\alg{A}$ kommutativ ist, ist $\Aut(\alg{A}) =
\OutAut(\alg{A})$. Desweiteren sei $\defalg{A}=(\algf{A}, \star)$ ein
Sternprodukt, und $\mbf{\Phi}: G \to \Aut(\algf{A}, \star)$ eine
deformierte Wirkung, so da"s $g \mapsto \sum_{n=0}^{\infty}
\lambda^{n} \mbf{\Phi}^{(n)}_{g}$ mit 
$\mbf{\Phi}^{(0)}_{g} = \Phi_{g}$. Es ist nun einfach zu zeigen, da"s
es kein $U_{g}\in \defalg{A}$ geben kann, so da"s sich die Wirkung mit
$\mbf{\Phi}$ als eine innere Wirkung der Form $\mbf{\Phi}_{g}=\Ad_{\sss
  \star} (U_{g})$ schreiben lassen kann. Da die unterste Ordnung der Algebra
$\defalg{A}$ kommutativ ist, mu"s $\mbf{\Phi}^{(0)}_{g} = \id$
sein. Da die Wirkung auf die undeformierte Algebra $\alg{A}$ ein
beliebiger Automorphismus der Algebra ist, kann die Wirkung auf die
deformierte Algebra kein innerer Automorphismus sein.

Damit ist klar, da"s die \Name{Hopf}-Algebra  $H=\field{C}[G]\FP$
keine innere (wohl jedoch eine "au"sere) Symmetrie f"ur ein Sternprodukt sein kann.
\end{beispiel}

\begin{beispiel}[Gegenbeispiel 2: Universell einh"ullende Algebra $\universell{\LieAlg{g}}\FP$]
\label{Beispiel:GegenbeispielUniverselleinhuellendeLieAlgebra}
\index{Algebra!universell Einh\"ullende}
Analog zu Beispiel \ref{Beispiel:GegenbeispielGruppenalgebra}
verl"auft die "Uberlegung im Fall einer formalen Potenzreihe der universell einh"ullenden
Algebra $\universell{\LieAlg{g}}\FP$ einer \Name{Lie}-Algebra
$\LieAlg{g}$. Wieder sei $\defalg{A}= (\algf{A}, \star)$ ein Sternprodukt. Mit der
"ublichen Koalgebrastruktur $\Delta(\xi)= 1 \otimes \xi + \xi \otimes
1$ und mit $S(\xi) = -\xi$ ergibt sich f"ur eine innere Wirkung   
\begin{align*}
    J(\xi_{\sss (1)}) \star J(\eta) \star J(S(\xi_{\sss (2)})) & =
    J(\xi) \star J(\eta) - J(\eta) \star J(\xi).  
\end{align*}
Da die unterste Ordnung in der Algebra $(\algf{A}, \star)$ kommutativ
ist, ist der gesamte Ausdruck von der Ordnung $\alg{O}(\lambda)$. Andererseits
mu"s $J$ ein $^\ast$-Homomorphismus sein, so da"s 
\begin{align*}
    J(\xi) \star J(\eta) - J(\eta) \star J(\xi) & = J([\xi, \eta]).
\end{align*}
Dies f"uhrt aber zu einem Widerspruch, denn $J([\xi, \eta])$ ist im
allgemeinen von der Ordnung $\alg{O}(1)$. Somit kann auch die formale Potenzreihe
einer universell einh"ullenden Algebra $\universell{\LieAlg{g}}\FP$ einer
\Name{Lie}-Algebra im allgemeinen keine Symmetrie eines Sternprodukts sein. 
\end{beispiel}

Wir wollen nun eine \Name{Hopf}-Algebra angeben, die eine Symmetrie
eines Sternprodukts sein kann. Dazu betrachten wir die folgende
Konstruktion nach \citet{gutt:1983a}: 
\begin{align}
    \label{eq:LambdaUniverselleinhuellendeAlgebra}
    \universelll{\LieAlg{g}}:= T_{\sss
      \field{C}}^{\bullet}(\LieAlg{g})\FP / \SP{\xi \otimes \eta - \eta
    \otimes \xi - \lambda [\xi, \eta]}.
\end{align}
Dabei ist $T_{\sss \field{C}}^{\bullet}(\LieAlg{g})\FP$ die formale
Potenzreihe der komplexifizierten Tensoralgebra einer
\Name{Lie}-Algebra $\LieAlg{g}$. Elemente in
$\universelll{\LieAlg{g}}$ k"onnen wir aufgrund des Quotienten nicht
explizit als formale Potenzreihen schreiben. Als \Name{Hopf}-Algebra
interpretiert ist $\universelll{\LieAlg{g}}$ mit den
gew"ohnlichen \Name{Hopf}-Strukturen $(\universelll{\LieAlg{g}}, \cdot, \eta, \Delta,
\varepsilon, S, ^\ast)$ ausgestattet. Von \citet{gutt:1983a} wurde
mit dem Satz von
\Name{Poincar\'{e}-Birk\-hoff-Witt}\index{Satz!Poincare-Birkhoff-Witt@von \Name{Poincar\'{e}-Birkhoff-Witt}} gezeigt, da"s 
diese Algebra kanonisch isomorph zu $(\Pol^{\bullet}(\LieAlgd{g})\FP,
\spgutt)$ ist und wir bezeichnen mit $\spgutt$ das
\Name{Gutt}-Sternprodukt\index{Sternprodukt!Gutt@von \Name{Gutt}} \index{Gutt-Sternprodukt@\Name{Gutt}-Sternprodukt}. Auf kanonische Weise wird $\Hgutt=(\Pol^{\bullet}(\LieAlgd{g})\FP,
\spgutt, \eta, \Delta, \varepsilon, S, ^\ast)$ zu einer
\Name{Hopf}-$^\ast$-Algebra mit $\Delta (\xi)  = \xi \otimes 1 + 1
\otimes \xi$, $\varepsilon(\xi) =0$ und $S(\xi)=-\xi$. 

Damit haben wir nun eine \Name{Hopf}-$^\ast$-Algebra gefunden, so
da"s die Impulsabbildung 
\begin{align}
J: \Hgutt \to (\algf{A},\star)
\end{align}
 ein -- im \Name{Hopf}-$^\ast$-Algebra Sinne -- innerer $^\ast$-Homomorphismus
sein kann. Denn f"ur $\xi, \eta \in \Pol^{\bullet}(\LieAlgd{g})\FP$ gilt:

\begin{align*}
    J(\xi_{\sss (1)}) \star  J(\eta)  \star J(S(\xi_{\sss
      (2)})) &  = J(\xi) \star J(\eta) - J(\eta) \star J(\xi) \\ & =
    \im \lambda J([\xi, \eta]).  
\end{align*}
Dies entspricht der Definition eines quanten"aquivarianten
Sternprodukts in Definition~\ref{Definition:InvarianteSternprodukteImpulsabbildung}.

\section{Deformation von projektiven Moduln}
\label{sec:DeformationProjektiverModul}

In diesem Abschnitt wollen wir nur die wichtigsten Ergebnisse kurz
aufgreifen und keine tiefgehenden Beweise f"uhren. Statt dessen
verweisen wir an den betreffenden Stellen auf geeignete Artikel.

Nun wollen wir im Rahmen von deformierten Algebren "uber
starke \Name{Morita}-"Aquivalenz, $^\ast$-\Name{Morita}-"Aquivalenz
und \Name{Morita}-"Aquivalenz reden. Dies bedeutet, da"s wir nun \Name{Hermite}sche, formal
deformierte $^\ast$-Al\-ge\-bren nach Definition
\ref{Definition:FormaleAlgebraDeformation} betrachten. Im weiteren sei
$\defalg{A}=(\algf{A},\star)$ eine \Name{Hermite}sche, 
deformierte Algebra mit Einselement "uber einem Ring $\ring{C}=\ring{R}(\im)$,
wobei $\ring{R}$ geordnet sei (wir denken insbesondere nat"urlich an
$\fieldf{R}$ und $\fieldf{C}$). Die folgenden bekannten
Ergebnisse findet man beispielsweise in
\citep{bursztyn.waldmann:2000b,bursztyn:2001a:phd}.

\begin{lemma}[Deformierter Projektor]
\label{Lemma:DeformierterProjektor}
 Sei $M_{n}(\alg{A}) \ni P_{0} = P_{0}^{2}$ ein idempotentes Element,
 dann existiert ein idempotentes $M_{n}(\defalg{A}) \ni \defpro{P} = P_{0} +
 \alg{O}(\lambda)$. Ist desweiteren $P_{0} = P_{0}^{2} = P_{0}^{\ast}$
 ein Projektor, so kann man ein $\defpro{P} \in M_{n}(\defalg{A})$
 w"ahlen, das ebenso ein Projektor ist, d.~h.~es gilt $\defpro{P}=
 \defpro{P} \star \defpro{P} = \defpro{P}^{\ast}$.
\end{lemma}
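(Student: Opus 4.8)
The plan is to construct $\defpro{P}$ order by order in $\lambda$, lifting the classical idempotent $P_{0}$ to a $\star$-idempotent in $M_{n}(\defalg{A}) = M_{n}(\alg{A})\FP$, where $\star$ denotes the componentwise extension of the star product to matrices. Suppose inductively that one has found $P^{(N)} = P_{0} + \lambda P_{1} + \cdots + \lambda^{N} P_{N} \in M_{n}(\alg{A})$ with $P^{(N)} \star P^{(N)} - P^{(N)} = \alg{O}(\lambda^{N+1})$; the case $N=0$ is exactly the hypothesis $P_{0}^{2} = P_{0}$. Writing the obstruction as $R^{(N)} := P^{(N)} \star P^{(N)} - P^{(N)} = \lambda^{N+1} r_{N+1} + \alg{O}(\lambda^{N+2})$, the goal at each step is to determine $P_{N+1} \in M_{n}(\alg{A})$ so that the corrected element $P^{(N+1)} = P^{(N)} + \lambda^{N+1} P_{N+1}$ kills the obstruction to one higher order.

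Expanding $P^{(N+1)} \star P^{(N+1)} - P^{(N+1)}$ and collecting the coefficient of $\lambda^{N+1}$, where only the undeformed matrix product contributes to the cross terms with $P_{0}$, the condition becomes the inhomogeneous linear equation $L(P_{N+1}) = -r_{N+1}$ for the $\ring{C}$-linear operator $L(X) := P_{0} X + X P_{0} - X$ on $M_{n}(\alg{A})$. Decomposing $X$ into the four blocks $P_{0} X P_{0}$, $P_{0} X Q_{0}$, $Q_{0} X P_{0}$, $Q_{0} X Q_{0}$ with $Q_{0} := 1 - P_{0}$, a direct computation using $P_{0}^{2} = P_{0}$ (and $P_{0}Q_{0} = Q_{0}P_{0} = 0$) shows that $L$ acts as $+1$ on the $P_{0} X P_{0}$ block, as $-1$ on the $Q_{0} X Q_{0}$ block, and as $0$ on the two off-diagonal blocks. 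Hence $L$ is invertible precisely on the diagonal part, the equation is solvable if and only if $-r_{N+1}$ has vanishing off-diagonal blocks, and in that case one may take $P_{N+1} = -P_{0} r_{N+1} P_{0} + Q_{0} r_{N+1} Q_{0}$.

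The heart of the argument is therefore this solvability condition, and here I would exploit the associativity of $\star$. A short manipulation gives $\defpro{P} \star R = R \star \defpro{P}$ for $R := \defpro{P}\star\defpro{P} - \defpro{P}$ at every stage, since both sides equal $\defpro{P}\star\defpro{P}\star\defpro{P} - \defpro{P}\star\defpro{P}$. Reading this identity in lowest order $\lambda^{N+1}$ yields $P_{0} r_{N+1} = r_{N+1} P_{0}$ as an ordinary matrix identity, i.e.\ the leading obstruction commutes with $P_{0}$; this forces $P_{0} r_{N+1} Q_{0} = P_{0} r_{N+1} - P_{0} r_{N+1} P_{0} = r_{N+1} P_{0} - r_{N+1} P_{0} = 0$ and likewise $Q_{0} r_{N+1} P_{0} = 0$, so the off-diagonal blocks vanish and $L(P_{N+1}) = -r_{N+1}$ is solvable. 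This is the step I expect to be the main obstacle to get cleanly: one must check that the commutation identity, exact in $\defalg{A}$, really reduces to an ordinary matrix commutator in the relevant order and that the block calculus is carried through consistently. Iterating then produces the desired $\defpro{P} = P_{0} + \alg{O}(\lambda)$ as a formal power series.

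For the second assertion suppose $P_{0} = P_{0}^{2} = P_{0}^{\ast}$ and that the deformation is \Name{Hermite}sch, so that $(a \star b)^{\ast} = b^{\ast} \star a^{\ast}$. I would run the same induction while maintaining the additional invariant $(P^{(N)})^{\ast} = P^{(N)}$. If $P^{(N)}$ is self-adjoint then $R^{(N)} = P^{(N)} \star P^{(N)} - P^{(N)}$ is self-adjoint as well, hence its leading coefficient $r_{N+1}$ is self-adjoint; and since $P_{0}^{\ast} = P_{0}$ gives $Q_{0}^{\ast} = Q_{0}$, the chosen solution $P_{N+1} = -P_{0} r_{N+1} P_{0} + Q_{0} r_{N+1} Q_{0}$ is then manifestly self-adjoint, so the invariant is preserved and the resulting $\defpro{P}$ satisfies $\defpro{P} = \defpro{P} \star \defpro{P} = \defpro{P}^{\ast}$. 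Alternatively, one could first produce any idempotent $\defpro{P}$ by the first part, observe that $\defpro{P}^{\ast}$ is then also an idempotent with the same classical limit $P_{0}$, and symmetrize the two into a genuine projector; but choosing self-adjoint corrections at each order is the most direct route.
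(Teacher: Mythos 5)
Your proof is correct, but it takes a genuinely different route from the one in the text. The paper's proof simply refers to \citet{gerstenhaber.schack:1990a} and \citet{fedosov:1996a} and records \Name{Fedosov}s closed formula
\begin{align*}
\defpro{P} \;=\; \frac{1}{2} + \Bigl( P_{0} - \frac{1}{2} \Bigr)
        \star \frac{1}{\sqrt[\star]{1+4(P_{0} \star P_{0} - P_{0})}},
\end{align*}
where the $\star$-square root is a formal \Name{Taylor} series in $P_{0}\star P_{0}-P_{0}=\alg{O}(\lambda)$; idempotency follows from a one-line computation with this formula, and self-adjointness is immediate when $P_{0}=P_{0}^{\ast}$ and the deformation is \Name{Hermite}sch. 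What you give instead is the order-by-order lifting that underlies the \Name{Gerstenhaber-Schack} reference: the obstruction $r_{N+1}$ commutes with $P_{0}$ by associativity, hence lives in the $\pm 1$-eigenspaces of $L(X)=P_{0}X+XP_{0}-X$, and $P_{N+1}=-P_{0}r_{N+1}P_{0}+Q_{0}r_{N+1}Q_{0}$ solves the recursion; your block calculus and the reduction of the $\star$-commutation identity to an ordinary matrix identity in lowest order are both carried out correctly, as is the preservation of self-adjointness (note that this tacitly uses the standing assumption of the section that $\defalg{A}$ is a \Name{Hermite}sche Deformation and that $\lambda$ is a real parameter). The trade-off: the closed formula is canonical and gives both properties at once, but requires $\tfrac{1}{2}\in\ring{R}$; your induction avoids inverting $2$ and makes the non-uniqueness visible (the kernel of $L$ is exactly the off-diagonal part, so $\defpro{P}$ is unique only up to such corrections, consistent with uniqueness up to equivalence), at the cost of an inductive argument rather than a formula.
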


\begin{proof}
 Der Beweis findet sich in
 \citep{gerstenhaber.schack:1990a,fedosov:1996a}. Insbesondere kann
 man eine explizite Formel f"ur einen deformierten Projektor angeben
 \citep[Eq. (6.1.4)]{fedosov:1996a}
    \begin{align}
        \label{eq:DeformierterProjektor}
        \defpro{P} = \frac{1}{2} + \left( P_{0} - \frac{1}{2} \right)
        \star \frac{1}{\sqrt[\star]{1+4(P_{0} \star P_{0} - P_{0})}},
    \end{align}
der genau die geforderten Eigenschaften hat.
\end{proof}

\begin{bemerkung}[Deformierter Projektor]
Der Nenner in des deformierten Projektors $\defpro{P}$ in
Gleichung~\eqref{eq:DeformierterProjektor} ist wohldefiniert.
Die unterste Ordnung des Sternprodukts $P_{0} \star P_{0} - P_{0}$
ist offensichtlich von der Ordnung $\alg{O}(\lambda)$. Die $^\star$-Wurzel ist
als eine formale \Name{Taylor}-Reihe zu verstehen. 
\end{bemerkung}

\begin{definition}[Stark voller Projektor {\citep{bursztyn.waldmann:2005a}}]
Sei $\alg{A}$ eine $^\ast$-Algebra. Man nennt einen Projektor $P_{0}
\in M_{n}(\alg{A})$ {\em stark voll}, falls es ein invertierbares
  Element $\tau \in \alg{A}$ gibt, so da"s $\tr(P_{0}) = (\tau \tau^{\ast})^{-1}$.   
\end{definition}

\begin{lemma}[(Stark) voller deformierter Projektor]
Sei $\alg{A}$ eine $^\ast$-Algebra, und $P_{0} \in
M_{n}(\alg{A})$ sei idempotent. Sei weiter
$\defpro{P} \in M_{n}(\defalg{A})$ eine Deformation von $P_{0}$, dann ist $P_{0}$
genau dann voll, wenn $\defpro{P}$ voll ist. Ist $P_{0}$ eine Projektion,
dann ist $\defpro{P}$ genau dann stark voll, wenn $P_{0}$ stark voll ist.
\end{lemma}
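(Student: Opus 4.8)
The plan is to reduce everything to the classical limit map $\cl : \defalg{A} \to \alg{A}$ (Definition~\ref{Definition:KlassischeLimesAbbildung}) together with the standard fact that a formal power series whose lowest order is invertible is itself $\star$-invertible. I first record that an idempotent $P$ in a unital ring is full precisely when the unit lies in the linear span of the products $TPS$: that span is a two-sided ideal, and a unital ring coincides with any ideal containing its unit. Thus fullness means $\Unit = \sum_i T_i\, P\, S_i$ for finitely many $T_i,S_i$.

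For the fullness equivalence I would use that $\cl$ is a unital algebra homomorphism $(\algf{A},\star)\to(\alg{A},\cdot)$, since $f\star g = fg + \alg{O}(\lambda)$ gives $\cl(f\star g)=\cl(f)\cl(g)$. Applying $\cl$ componentwise to a relation $\sum_i \defpro{T}_i \star \defpro{P} \star \defpro{S}_i = \Unit$ yields $\sum_i \cl(\defpro{T}_i)\,P_0\,\cl(\defpro{S}_i)=\Unit$, so $\defpro{P}$ full implies $P_0$ full. For the converse I lift a classical witness $\sum_i T_i\,P_0\,S_i=\Unit$ by embedding $T_i,S_i$ as constant series and setting $X := \sum_i T_i \star \defpro{P} \star S_i$. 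Its lowest order is $\sum_i T_i P_0 S_i = \Unit$, hence $X$ is $\star$-invertible, and $\Unit = X^{-1}\star X = \sum_i (X^{-1}\star T_i)\star \defpro{P}\star S_i$ exhibits the unit in the required span; thus $P_0$ full implies $\defpro{P}$ full.

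For strong fullness I take $\defpro{P}$ to be a projector, which exists for a projection $P_0$ by Lemma~\ref{Lemma:DeformierterProjektor}; then $\tr(\defpro{P})$ is Hermitian, since $(\defpro{P}^\ast)_{ii}=(\defpro{P}_{ii})^\ast$ forces $\tr(\defpro{P}^\ast)=\tr(\defpro{P})^\ast$. The easy direction uses that $\cl$ is additionally a unital $^\ast$-homomorphism commuting with the matrix trace: from $\tr(\defpro{P}) = (\boldsymbol{\tau}\star\boldsymbol{\tau}^\ast)^{-1}$ with $\boldsymbol{\tau}$ invertible, applying $\cl$ gives $\tr(P_0)=(\tau_0\tau_0^\ast)^{-1}$ with $\tau_0:=\cl(\boldsymbol{\tau})$ invertible, so $P_0$ is strongly full.

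The main work, and the step I expect to be the genuine obstacle, is the reverse lifting: turning a classical witness $\tau_0$ (with $\tr(P_0)=(\tau_0\tau_0^\ast)^{-1}$) into a deformed one. Here I set $H:=\tr(\defpro{P})^{-1}$, a Hermitian invertible element of $\defalg{A}$ with $\cl(H)=\tau_0\tau_0^\ast$, and form $G := \tau_0^{-1}\star H \star (\tau_0^{-1})^\ast$, embedding $\tau_0$ as a constant (hence $\star$-invertible) series; one checks $G^\ast=G$ and $G=\Unit+\alg{O}(\lambda)$. The crux is then the Hermitian $\star$-square root $u:=\sqrt[\star]{G}$, the same kind of formal root already used in the projector formula of Lemma~\ref{Lemma:DeformierterProjektor}: it is a $\lambda$-adically convergent Taylor series in $G-\Unit$ with real coefficients, so it commutes with $G$, satisfies $u\star u=G$, and inherits $u^\ast=u$ from $G^\ast=G$. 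Setting $\boldsymbol{\tau}:=\tau_0\star u$ (invertible, with lowest order $\tau_0$) then gives $\boldsymbol{\tau}\star\boldsymbol{\tau}^\ast = \tau_0\star u\star u\star\tau_0^\ast = \tau_0\star G\star\tau_0^\ast = H = \tr(\defpro{P})^{-1}$, using $(\tau_0^{-1})^\ast\star\tau_0^\ast=\Unit$; hence $\defpro{P}$ is strongly full, which closes the equivalence.
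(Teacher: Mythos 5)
Your proof is correct. Note that the paper itself gives no argument for this lemma --- it only points to [Bursztyn, Lemma 4.4, 4.5] --- so there is nothing in the text to diverge from; what you have written is essentially the standard argument from that reference. Both halves rest on the two facts you isolate: an element of $\defalg{A}$ (or $M_n(\defalg{A})$) whose lowest order is invertible is $\star$-invertible, and fullness of an idempotent is equivalent to $\Unit$ lying in the two-sided ideal spanned by $T\star\defpro{P}\star S$. The one direction that needs an idea, the lifting of the strong-fullness witness via $u=\sqrt[\star]{G}$ with $G=\Unit+\alg{O}(\lambda)$ Hermitian, is exactly the formal square-root trick the paper already uses in Lemma~\ref{Lemma:DeformierterProjektor}, and your verification that $u^\ast=u$ and $\boldsymbol{\tau}\star\boldsymbol{\tau}^\ast=\tr(\defpro{P})^{-1}$ is sound. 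Two hypotheses are used tacitly and are worth stating: the deformation must be Hermitian (otherwise neither $\defpro{P}^\ast=\defpro{P}$ nor strong fullness in $\defalg{A}$ makes sense, and $\cl$ would not be a $^\ast$-homomorphism), and $\field{Q}\subseteq\ring{R}$ is needed for the binomial coefficients in the $\star$-square root --- but the paper makes the same tacit assumptions in Lemma~\ref{Lemma:DeformierterProjektor} and Lemma~\ref{Lemma:HermitescheAequivalenztransformation}, so this is not a gap relative to its conventions.
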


\begin{proof}
    Siehe beispielsweise \citep[Lemma 4.4, 4.5]{bursztyn:2001a:phd}.
\end{proof}

\begin{lemma}[Deformation von {$P_{0} M_{n}(\alg{A}) P_{0}$}]
Sei $P_{0}$ ein idempotentes Elemente in $M_{n}(\alg{A})$ und $\alg{A}
\in M_{n}(\alg{A}) \FP$. Die
Abbildung $P_{0} M_{n}(\alg{A}) P_{0} \to \defpro{P} \star 
M_{n}(\defalg{A}) \star \defpro{P}$, $P_{0}AP_{0} \mapsto \defpro{P} \star (P_{0}AP_{0})
\star \defpro{P}$ ist ein $\ringf{C}$-Isomorphismus und eine formale
Deformation von $P_{0} M_{n}(\alg{A}) P_{0}$. Ist desweiteren $P_{0}$
projektiv und $\defalg{A}$ eine \Name{Hermite}sche Deformation, dann
induziert dies eine \Name{Hermite}sche Deformation von $P_{0}
M_{n}(\alg{A}) P_{0}$.       
\end{lemma}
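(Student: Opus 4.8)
Der Plan ist, alles auf den durch $\defpro{P}$ definierten idempotenten Operator auf $M_{n}(\defalg{A})$ zurückzuführen. Zunächst betrachte ich die $\ringf{C}$-lineare Abbildung $E: M_{n}(\defalg{A}) \to M_{n}(\defalg{A})$, $X \mapsto \defpro{P} \star X \star \defpro{P}$. Wegen $\defpro{P} \star \defpro{P} = \defpro{P}$ nach Lemma~\ref{Lemma:DeformierterProjektor} ist $E$ idempotent, $E \circ E = E$, und sein Bild ist gerade $\defpro{P} \star M_{n}(\defalg{A}) \star \defpro{P}$. In unterster $\lambda$-Ordnung geht $E$ in die undeformierte Kürzung $E_{0}: X \mapsto P_{0} X P_{0}$ über, deren Bild $P_{0} M_{n}(\alg{A}) P_{0}$ ist. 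Die zu untersuchende Abbildung ist dann die Einschränkung $\Phi = E|_{(P_{0} M_{n}(\alg{A}) P_{0})\FP}$.

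Als Nächstes zeige ich, dass $\Phi$ ein $\ringf{C}$-Modulisomorphismus ist. Dazu verwende ich die undeformierte Projektion $p_{0}: X \mapsto P_{0} X P_{0}$ und vergleiche die untersten Ordnungen der Kompositionen: für $b = P_{0} A P_{0}$ ist $\cl(p_{0}(\Phi(b))) = P_{0}(P_{0} b P_{0}) P_{0} = b$, und ebenso $\cl(\Phi(p_{0}(Y))) = \cl(Y)$ für $Y \in \defpro{P} \star M_{n}(\defalg{A}) \star \defpro{P}$. Also sind $p_{0} \circ \Phi = \id + \alg{O}(\lambda)$ und $\Phi \circ p_{0} = \id + \alg{O}(\lambda)$ als Endomorphismen der jeweiligen ($\lambda$-adisch vollständigen, torsionsfreien) Moduln. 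Da ein $\ringf{C}$-linearer Endomorphismus der Form $\id + \lambda(\cdots)$ über die geometrische Reihe invertierbar ist, ist $\Phi$ zugleich injektiv und surjektiv, also ein Isomorphismus.

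Damit überträgt sich das assoziative Produkt $\star$ der Eckalgebra $\defpro{P} \star M_{n}(\defalg{A}) \star \defpro{P}$ (mit Einselement $\defpro{P}$) via $a \star_{\alg{B}} b := \Phi^{-1}(\Phi(a) \star \Phi(b))$ auf $(P_{0} M_{n}(\alg{A}) P_{0})\FP$. Die Assoziativität ist durch den Isomorphismus automatisch gesichert, und ein Vergleich der untersten Ordnungen liefert $a \star_{\alg{B}} b = ab + \alg{O}(\lambda)$, sodass $\star_{\alg{B}}$ eine formale Deformation von $P_{0} M_{n}(\alg{A}) P_{0}$ im Sinne von Definition~\ref{Definition:FormaleAlgebraDeformation} ist. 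Den einzigen Feinschliff erfordert das Einselement, und hier erwarte ich die eigentliche Arbeit: das übertragene Einselement ist $\Phi^{-1}(\defpro{P}) = P_{0} + \alg{O}(\lambda)$ und nicht notwendig $P_{0}$ selbst. Da dieses Einselement jedoch invertierbar ist, lässt sich $\star_{\alg{B}}$ durch eine Äquivalenztransformation $T = \id + \alg{O}(\lambda)$ in eine Deformation mit Einselement $P_{0}$ überführen.

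Für den \Name{Hermite}schen Fall wähle ich nach Lemma~\ref{Lemma:DeformierterProjektor} den deformierten Projektor selbstadjungiert, $\defpro{P} = \defpro{P}^{\ast}$. Da die Involution einer \Name{Hermite}schen Deformation die undeformierte, $\lambda$-antilinear fortgesetzte Involution ist (Definition~\ref{Definition:HermitescheDeformationeinerformalenAlgebra}) und $P_{0} = P_{0}^{\ast}$ gilt, rechnet man $\Phi(b)^{\ast} = \defpro{P} \star b^{\ast} \star \defpro{P} = \Phi(b^{\ast})$ nach, d.~h.~$\Phi$ und damit $\Phi^{-1}$ vertauschen mit der Involution. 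Folglich ist $(a \star_{\alg{B}} b)^{\ast} = \Phi^{-1}((\Phi(a) \star \Phi(b))^{\ast}) = b^{\ast} \star_{\alg{B}} a^{\ast}$, und $\star_{\alg{B}}$ ist eine \Name{Hermite}sche Deformation; die Einselement-Normierung kann dabei nach Lemma~\ref{Lemma:HermitescheAequivalenztransformation} \Name{Hermite}sch gewählt werden, sodass die Involution undeformiert bleibt.
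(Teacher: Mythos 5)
Die Arbeit f"uhrt f"ur dieses Lemma keinen eigenen Beweis, sondern verweist lediglich auf die Literatur (\Name{Bursztyn}, Lemma~4.6 und Corollary~4.7); Ihr Vorschlag liefert also ein eigenst"andiges Argument, und dieses ist im wesentlichen korrekt. Die Reduktion auf den idempotenten Operator $E(X)=\defpro{P}\star X\star\defpro{P}$, der Nachweis der Bijektivit"at von $\Phi$ "uber $p_{0}\circ\Phi=\id+\alg{O}(\lambda)$ und $\Phi\circ p_{0}=\id+\alg{O}(\lambda)$ mit anschlie"sender geometrischer Reihe sowie der Transport von Produkt und Involution sind tragf"ahig. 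Implizit benutzen Sie dabei, da"s $\defpro{P}\star M_{n}(\defalg{A})\star\defpro{P}=\ker(\id-E)$ als abgeschlossener Untermodul $\lambda$-adisch vollst"andig ist und da"s ein durch $\lambda$ teilbares Element der Ecke auch \emph{innerhalb} der Ecke durch $\lambda$ teilbar ist -- beides stimmt, sollte aber kurz gesagt werden. Der einzige Punkt, der "uber eine blo"se Behauptung hinaus ausgef"uhrt werden mu"s, ist die von Ihnen selbst benannte Normierung des Einselements: Definition~\ref{Definition:FormaleAlgebraDeformation} verlangt $P_{0}$ als Einselement der deformierten Algebra, w"ahrend das transportierte Einselement $\Phi^{-1}(\defpro{P})=P_{0}+\alg{O}(\lambda)$ ist. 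Ein konstruktiver Ausweg, der die etwas vage \glqq "Aquivalenztransformation\grqq{} ersetzt: das Element $U:=\Phi(P_{0})=\defpro{P}\star P_{0}\star\defpro{P}$ hat klassischen Limes $P_{0}$ und ist daher in der Eckalgebra mit Einselement $\defpro{P}$ invertierbar; ersetzt man $\Phi$ durch $\Psi(b):=U^{-1}\star\Phi(b)$, im \Name{Hermite}schen Fall durch $\Psi(b):=U^{-1/2}\star\Phi(b)\star U^{-1/2}$ mit $U=U^{\ast}$ und der $\star$-Wurzel wie in Lemma~\ref{Lemma:DeformierterProjektor}, so gilt $\Psi(P_{0})=\defpro{P}$ sowie $\Psi(b^{\ast})=\Psi(b)^{\ast}$, und das zur"uckgezogene Produkt hat $P_{0}$ als Einselement und bleibt \Name{Hermite}sch; der Umweg "uber Lemma~\ref{Lemma:HermitescheAequivalenztransformation} ist dann entbehrlich. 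Abgesehen von dieser Pr"azisierung ist Ihr Beweis vollst"andig.
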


\begin{proof}
    Siehe \citep[Lemma 4.6, Corollary 4.7]{bursztyn:2001a:phd}.
\end{proof}

Wir wollen uns nun im weiteren der formalen Deformation von {\em projektiven
  Moduln} widmen. Diese werden bei der
\Name{Morita}-"Aquivalenz von deformierten Algebren eine wichtige
Rolle spielen.

\begin{definition}[Deformation eines $\alg{A}$-Rechtsmoduls $\rmod{E}{A}$]
 \label{Definition:DeformationEinesModuls}
    Sei $\alg{A}$ eine assoziative Algebra und $\rmod{E}{A}$ ein
    $\alg{A}$-Rechtsmodul. Ferner sei $\defalg{A}=(\algf{A},\star)$ die formale
    Deformation der Algebra $\alg{A}$ nach Definition  
    \ref{Definition:FormaleAlgebraDeformation}. Eine {\em Deformation des
    $\alg{A}$-Rechtsmoduls} $\rmod{E}{A}$ ist gegeben durch    
   \begin{equation} 
   \begin{aligned}
        \label{eq:FormaleDeformationEinesModuls}
 \rmodf{E}{A} \times \defalg{A} \ni    (x,a)  \mapsto  x \bullet a =
 \sum_{n=0}^\infty \lambda^{n} R_{n} (x,a) \in \rmodf{E}{A}.
    \end{aligned}
    \end{equation}
Dabei seien $R_{n}: \rmodf{E}{A} \times \defalg{A} \to \rmodf{E}{A}$ $\ringf{C}$-bilineare
Abbildungen, und $R_{0}(x,a)=x \cdot a$ ist die gew"ohnliche
Modulstruktur. Die Kompatibilit"atsbedingung f"ur die deformierte
Algebra-Rechts\-mul\-ti\-pli\-ka\-tion lautet   
\begin{align}
\label{eq:DeformierteModulmultiplikationUndSternprodukt}
(x \bullet a) \bullet a' = x \bullet ( a \star a').  
\end{align}
Wir bezeichnen den deformierten $\defalg{A}$-Rechtsmodul (kompatibel
mit der deformierten Algebra $\defalg{A}$) mit
$\defrmod{E}{A}=(\rmodf{E}{A},\bullet)$.  
\end{definition}

Schreiben wir die Deformation der Algebra $\alg{A}$ als eine formale
Potenzreihe (vgl.~Gleichung~\eqref{eq:FormaleAlgebraDeformation}), so
k"onnen wir auf dem  Niveau der 
bilinearen Abbildungen $R_{i}$ und $C_{j}$ die Kompatibilit"atsbedingung
aus Gleichung~\eqref{eq:DeformierteModulmultiplikationUndSternprodukt} in jeder
$\lambda$-Ordnung $n$ als  
\begin{align}
\sum_{i=0}^{n} \left[ R_{n}(R_{n-i}(x,a),a') - R_{n}(x,C_{n-i}(a,a'))
\right] =0
\end{align}

schreiben. Analog zu Definition \ref{Definition:AEquivalenzZweierDeformationen}
f"uhren wir die "Aquivalenz zweier deformierter Moduln ein.

\begin{definition}["Aquivalenz zweier deformierter Moduln]
\label{Definition:DeformierterModulisomorphismus}
Zwei $\defalg{A}$-Rechtsmoduln $(\defrmod{E}{A})_{1}=
(\rmodf{E}{A},\bullet_{\sss 1})$ und $(\defrmod{E}{A})_{2} =
(\rmodf{E}{A},\bullet_{\sss 2})$ 
nennt man genau dann {\em "aquivalent}, falls es $\ring{C}$-lineare
Abbildungen $T_{r}: \rmod{E}{A} \to \rmod{E}{A}$ gibt, so da"s
\begin{align}
     T=\id + \sum_{r=1}^{\infty} \lambda^{r} T_{r} \,: \quad
     (\defrmod{E}{A})_{1} \to (\defrmod{E}{A})_{2} 
\end{align}
ein $\defalg{A}$-Modulisomorphismus ist.
\end{definition}

Analog zu den $\defalg{A}$-Rechtmoduln lassen sich nat"urlich auch
$\defalg{B}$-Linksmoduln $\deflmod{B}{E} = (\lmodf{B}{E}, \bullet')$
definieren.  

\begin{proposition}[Endlich erzeugte und projektive Deformation]
    Sei $\alg{A}$ eine Algebra mit Einselement "uber dem Ring $\ring{C}$,
    und $\defalg{A}=(\algf{A}, \star)$ sei eine Deformation von
    $\alg{A}$. Sei nun $\rmod{E}{A}$ ein endlich erzeugter, projektiver
    $\alg{A}$-Rechtsmodul, dann existiert ein deformierter
    $\defrmod{E}{A}$ $\defalg{A}$-Rechtsmodul, der endlich
    erzeugt und projektiv "uber $\defalg{A}$ ist. Diese Deformation ist bis auf
    "Aquivalenz eindeutig.
\end{proposition}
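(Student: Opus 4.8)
The plan is to reduce the statement to the deformation of idempotents and to invoke Lemma~\ref{Lemma:DeformierterProjektor}. First I would use the standard description of finitely generated projective modules as direct summands of free ones: since $\rmod{E}{A}$ is finitely generated and projective, there exist an $n \in \field{N}$ and an idempotent $P_{0} = P_{0}^{2} \in M_{n}(\alg{A})$ with $\rmod{E}{A} \cong P_{0}\alg{A}^{n}$, where $M_{n}(\alg{A})$ acts from the left on column vectors and the right $\alg{A}$-action is componentwise. Without loss of generality I take $\rmod{E}{A} = P_{0}\alg{A}^{n}$.

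For existence, Lemma~\ref{Lemma:DeformierterProjektor} supplies an idempotent $\defpro{P} = P_{0} + \alg{O}(\lambda) \in M_{n}(\defalg{A})$ with $\defpro{P} = \defpro{P} \star \defpro{P}$, and I set $\defrmod{E}{A} := \defpro{P} \star \algf{A}^{n}$, the image of $\defpro{P}$ in the free right $\defalg{A}$-module $\algf{A}^{n}$ with componentwise $\star$. As a direct summand of a free module it is automatically finitely generated and projective over $\defalg{A}$. To see that it deforms $\rmod{E}{A}$ in the sense of Definition~\ref{Definition:DeformationEinesModuls}, I would introduce the $\ringf{C}$-linear map $\iota\colon \rmodf{E}{A} \to \algf{A}^{n}$, $\iota(x) = \defpro{P}\star x$, extended $\ringf{C}$-linearly from $\rmod{E}{A} = P_{0}\alg{A}^{n}$. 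Because $P_{0}x = x$ on $\rmod{E}{A}$ one has $\iota = \id + \alg{O}(\lambda)$, so $\iota$ is a $\ringf{C}$-linear isomorphism onto its image, and transporting the componentwise right $\star$-action through $\iota$ — which is legitimate since $(\defpro{P}\star x)\star a = \defpro{P}\star(x\star a)$ stays in the image — defines a product $\bullet$ with $R_{0}(x,a) = x\cdot a$ satisfying $(x\bullet a)\bullet a' = x\bullet(a\star a')$ by associativity of $\star$. The only point requiring a short induction on the $\lambda$-order is the surjectivity identity $\defpro{P}\star\algf{A}^{n} = \defpro{P}\star(P_{0}\alg{A}^{n})\FP$, which makes $\iota$ onto.

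For uniqueness up to equivalence in the sense of Definition~\ref{Definition:DeformierterModulisomorphismus}, I would rely on the conjugacy of $\star$-idempotents with a common classical limit. Given two finitely generated projective deformations of $\rmod{E}{A}$, each is of the form $\defpro{P}_{i}\star\algf{A}^{n_{i}}$; after stabilising to a common size by adding trivially deformed complements and using that both classical limits return $\rmod{E}{A}$, one arranges idempotents $\defpro{P}_{1}, \defpro{P}_{2} \in M_{n}(\defalg{A})$ with the same classical part. Then I set $\defpro{U} := \defpro{P}_{1}\star\defpro{P}_{2} + (\Unit - \defpro{P}_{1})\star(\Unit - \defpro{P}_{2})$. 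A direct computation gives $\defpro{U} = \Unit + \alg{O}(\lambda)$, hence $\defpro{U}$ is invertible, together with $\defpro{U}\star\defpro{P}_{2} = \defpro{P}_{1}\star\defpro{U}$, i.e. $\defpro{U}\star\defpro{P}_{2}\star\defpro{U}^{-1} = \defpro{P}_{1}$. Left $\star$-multiplication by $\defpro{U}$ then restricts to a $\defalg{A}$-linear isomorphism $\defpro{P}_{2}\star\algf{A}^{n} \to \defpro{P}_{1}\star\algf{A}^{n}$ of the form $\id + \alg{O}(\lambda)$, which is exactly an equivalence of deformed modules.

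The main obstacle I expect is the bookkeeping in the uniqueness part: presenting an arbitrary finitely generated projective deformation by a $\star$-idempotent over a fixed free module, stabilising the two presentations to a common rank, and arranging a common classical idempotent all require a lifting (dual-basis) argument and care that the resulting equivalence is genuinely $\ringf{C}$-linear and equals the identity in zeroth order, as demanded by Definition~\ref{Definition:DeformierterModulisomorphismus}. The existence half, in contrast, is essentially immediate once the deformed projector of Lemma~\ref{Lemma:DeformierterProjektor} is available; the surrounding verifications there are routine order-by-order computations.
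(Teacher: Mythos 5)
Your proof is correct and is essentially the argument the paper itself delegates to \citep[Prop.~4.10]{bursztyn:2001a:phd}: present $\rmod{E}{A}$ as $P_{0}\alg{A}^{n}$, deform the idempotent via Lemma~\ref{Lemma:DeformierterProjektor}, transport the right $\star$-action through $\iota=\defpro{P}\star\cdot=\id+\alg{O}(\lambda)$, and obtain uniqueness by conjugating $\star$-idempotents with common classical part via $\defpro{U}=\defpro{P}_{1}\star\defpro{P}_{2}+(\Unit-\defpro{P}_{1})\star(\Unit-\defpro{P}_{2})$. The one step you defer --- presenting an arbitrary deformation of $\rmod{E}{A}$ by a $\star$-idempotent lifting $P_{0}$ --- is precisely the order-by-order dual-basis lifting carried out in the cited reference, so nothing essential is missing.
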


\begin{proof}
    Siehe \citep[Proposition 4.10]{bursztyn:2001a:phd}.
\end{proof}

\begin{proposition}[Ringtheoretische \Name{Morita}-"Aquivalenz]
Sei $P_{0} \in M_{n}(\alg{A})$ voll idempotent, dann existiert eine
Bijektion $\Phi: \Def(\alg{A}) \to \Def(\alg{B})$ mit $\alg{B} = P_{0}
M_{n}(\alg{A}) P_{0}$. Dies bedeutet, da"s die formalen Deformationen
im ringtheoretischen Sinne \Name{Morita}-"aquivalent sind.
\end{proposition}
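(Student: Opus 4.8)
The plan is to build $\Phi$ by lifting a deformation of $\alg{A}$ to the matrix algebra, deforming the idempotent $P_{0}$, and then passing to the corresponding corner, which by the corner-deformation lemma is a deformation of $\alg{B}=P_{0}M_{n}(\alg{A})P_{0}$. Concretely, given a deformation $\star$ of $\alg{A}$ with associated deformed algebra $\defalg{A}=(\algf{A},\star)$, I first extend $\star$ componentwise to $M_{n}(\alg{A})\FP$ via $(A\star B)_{ij}=\sum_{k}A_{ik}\star B_{kj}$; this is a formal deformation of $M_{n}(\alg{A})$. By Lemma~\ref{Lemma:DeformierterProjektor} there is an idempotent $\defpro{P}=P_{0}+\alg{O}(\lambda)\in M_{n}(\defalg{A})$ lifting $P_{0}$, and by the lemma on the deformation of $P_{0}M_{n}(\alg{A})P_{0}$ the corner $\defpro{P}\star M_{n}(\defalg{A})\star\defpro{P}$ is, through the stated $\ringf{C}$-linear isomorphism, a formal deformation $\star_{\alg{B}}$ of $\alg{B}$. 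I then set $\Phi([\star]):=[\star_{\alg{B}}]$.

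The first thing to verify is that this assignment descends to equivalence classes. Independence of the chosen lift is handled by the standard fact that any two idempotents in $M_{n}(\defalg{A})$ reducing to $P_{0}$ are conjugate by an invertible $U=\id+\alg{O}(\lambda)$; conjugation by $U$ restricts to an isomorphism of the two corner algebras beginning with the identity, hence to an equivalence of deformations in the sense of Definition~\ref{Definition:AEquivalenzZweierDeformationen}. Independence of the representative $\star$ of $[\star]$ is then immediate: an equivalence $T=\id+\sum_{r\ge1}\lambda^{r}T_{r}$ between $\star_{1}$ and $\star_{2}$ extends componentwise to an equivalence of the two matrix deformations, carries a $\star_{1}$-idempotent lifting $P_{0}$ to a $\star_{2}$-idempotent lifting $P_{0}$, and therefore, combined with the previous point, identifies the two corner deformations up to equivalence.

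For bijectivity I would exploit the symmetry of the situation. Since $P_{0}$ is full, Satz~\ref{Satz:MoritaAequivalenzUndProjektiveModul} shows that $\alg{A}$ and $\alg{B}$ are ring-theoretically \Name{Morita}-equivalent, so there are $m$ and a full idempotent $Q_{0}\in M_{m}(\alg{B})$ with $\alg{A}\cong Q_{0}M_{m}(\alg{B})Q_{0}$. The identical construction then yields a map $\Psi\colon\Def(\alg{B})\to\Def(\alg{A})$, and the claim is that $\Psi\circ\Phi$ and $\Phi\circ\Psi$ are the identity. This is where the real work lies: the composite $\Psi\circ\Phi$ amounts to deforming $\alg{A}$, passing to a full corner of a matrix algebra and back, an operation governed by the finitely generated projective module implementing the round-trip \Name{Morita} equivalence. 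Since that module is, up to isomorphism, the free rank-one module over $\alg{A}$, the uniqueness up to equivalence of deformations of finitely generated projective modules (the proposition on finitely generated projective deformations, built on Definitions~\ref{Definition:DeformationEinesModuls} and~\ref{Definition:DeformierterModulisomorphismus}) forces the resulting deformation of $\alg{A}$ to be equivalent to the original one; the same argument applies to $\Phi\circ\Psi$. Hence $\Phi$ is a bijection.

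Finally, the \Name{Morita}-equivalence statement in the conclusion comes essentially for free: by the lemma on full deformed projectors, $\defpro{P}$ is full whenever $P_{0}$ is, so $\defalg{B}=\defpro{P}\star M_{n}(\defalg{A})\star\defpro{P}$ with $\defpro{P}$ a full idempotent, and Satz~\ref{Satz:MoritaAequivalenzUndProjektiveModul} applied to the deformed ring $\defalg{A}$ gives that $\defalg{A}$ and $\defalg{B}=\Phi(\defalg{A})$ are \Name{Morita}-equivalent. I expect the main obstacle to be the bookkeeping in the bijectivity step, namely matching the two cornering constructions as mutually inverse up to equivalence, rather than the existence of $\Phi$, which follows directly from the deformed-projector and corner-deformation lemmas.
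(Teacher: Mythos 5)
The paper does not actually prove this proposition; it defers entirely to \citep[Proposition 4.11]{bursztyn:2001a:phd}, so there is no in-text argument to compare against. Your reconstruction follows what is in fact the standard (and the cited) route, and it correctly uses exactly the machinery the paper assembles immediately beforehand: Lemma~\ref{Lemma:DeformierterProjektor} to lift $P_{0}$ to $\defpro{P}$, the corner-deformation lemma to obtain $\star_{\alg{B}}$, and the fullness lemma to get the Morita statement at the deformed level. The well-definedness argument is sound; the conjugacy of two idempotent lifts of $P_{0}$ by some $U=\id+\alg{O}(\lambda)$ is indeed standard (take $U=\defpro{P}\star\defpro{Q}+(1-\defpro{P})\star(1-\defpro{Q})$, which satisfies $\defpro{P}\star U=U\star\defpro{Q}$ and is invertible since it starts with the identity). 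The only step I would press you on is the identification $\Psi\circ\Phi\simeq\id$: invoking uniqueness of deformations of finitely generated projective \emph{one-sided} modules is not quite enough by itself, because you must also identify the algebra produced by the round trip with $\defalg{A}$ acting by left multiplication. The clean way to close this is to note that the round-trip corner is canonically $\End_{\defalg{B}}\bigl(\defpro{Q}\star\defalg{B}^{m}\bigr)$, that $\defpro{Q}\star\defalg{B}^{m}$ is (up to equivalence) the unique $\defalg{B}$-module deformation of the dual module $\cc{P_{0}\alg{A}^{n}}$, and that tensoring back over $\defalg{B}$ with $\defpro{P}\star\defalg{A}^{n}$ yields the unique deformation of the free rank-one right $\defalg{A}$-module, whose $\defalg{A}$-linear endomorphism algebra is $\defalg{A}$ itself. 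With that addendum your argument is complete and, modulo presentation, coincides with the proof in the cited reference.
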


\begin{proof}
Ein ausf"uhrlicher Beweis findet sich in \citep[Proposition 4.11]{bursztyn:2001a:phd}.
\end{proof}

Im weiteren wollen wir nun innere Produktmoduln deformieren, so
da"s wir die Struktur des $\alg{A}$-wertigen inneren Produktes von
$\rmodplus{E}{A}$ auf den deformierten Modul retten. Damit werden wir
in der Lage sein, den Begriff der $^\ast$-\Name{Morita}-"Aquivalenz
und der starken \Name{Morita}-"Aquivalenz auf deformierte Algebren zu "ubertragen. 
Sei im weiteren nun $\alg{A}$ eine $^\ast$-Algebra und $\rmodplus{E}{A}$ sei ein
"uber $\alg{A}$ endlich erzeugter projektiver innerer Produktmodul. 

\begin{definition}[$\defalg{A}$-wertiges inneres Produkt auf
    {$\defrmod{E}{A}$}]
Sei $\defalg{A} = (\algf{A},\star)$ eine \Name{Hermite}sche
Deformation und $\defrmod{E}{A} = (\rmodf{E}{A}, \bullet)$ eine zugeh"orige
Deformation von $\rmod{E}{A}$. Man nennt $\defrmod{E}{A} =
(\rmodf{E}{A}, \bullet, \rSPf{\cdot, \cdot}{\defalg{A}})$ eine {\em \Name{Hermite}sche
  Deformation des inneren Produktmoduls} $\rmodplus{E}{A}$, falls
sesquilineare Abbildungen $\rSPcr{\cdot, \cdot}{\sss (n), \alg{A}}:
\rmod{E}{A} \times \rmod{E}{A} \to \alg{A}$ existieren, so da"s 
\begin{align}
    \label{eq:PositiveDeformationInneresProdukt}
    \rSPf{\cdot, \cdot}{\defalg{A}} = \sum_{n=0}^{\infty} \lambda^{n}
    \rSPcr{\cdot, \cdot}{\sss (n), \alg{A}},
\end{align}
ein vollst"andig positiv definites $\defalg{A}$-wertiges inneres Produkt auf
$\rmodf{E}{A}$ ist. 
\end{definition}

\begin{definition}["Aquivalenz zweier deformierter innerer Produktmoduln]
    Man nennt zwei \Name{Hermite}sche Deformation $(\rmodf{E}{A},
    \bullet, \rSPf{\cdot, \cdot}{A})$ und $(\rmodf{E}{A}, \bullet^{\prime},
    \rSPf[\prime]{\cdot, \cdot}{A})$ des inneren Produktmoduls $\rmodplus{E}{A}$
    {\em "aquivalent}, falls es einen $\defalg{A}$-Modulisomorphismus
    nach Definition \ref{Definition:DeformierterModulisomorphismus}
    der Form $T: (\rmodf{E}{A}, \bullet) \to (\rmodf{E}{A}, \bullet^{\prime})$ 
    gibt, so da"s 
\begin{align}
 \rSPf[\prime]{T(x), T(y)}{A} = \rSPf{x,y}{A}, \quad \text{f"ur alle} \quad
 x,y \in \rmod{E}{A}.
\end{align}
 \end{definition}

\begin{proposition}
    Sei $\alg{A}$ eine $^\ast$-Algebra mit Einselement und $\defalg{A}$
    eine \Name{Hermite}sche Deformation von $\alg{A}$. Desweiteren sei
    $P_{0} \in M_{n}(\alg{A})$ eine Projektion. Es existiert nun f"ur
    den 
    $\alg{A}$-Rechtsmodul $\rmod{E}{A} = P_{0} \alg{A}^n$, mit dem
    kanonischen inneren Produkt 
    \begin{align*}
        \rSP{x,y}{\alg{A}} = \sum_{i=1}^{n} x_{i}^{\ast} y_{i}, \quad
        \text{f"ur} \quad x,y\in P_{0}\alg{A}^{n},
    \end{align*}
 eine (bis auf "Aquivalenz eindeutige) \Name{Hermite}sche Deformation
 von $\rmod{E}{A}$ bez"uglich der deformierten Algebra $\defalg{A}$.
\end{proposition}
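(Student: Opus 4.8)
The plan is to construct the required deformation directly from a deformed projector and from the canonical inner product on the free module $\defalg{A}^{n}$, so that both the module structure and the inner product arise by restriction along the image of that projector; I then read off existence, Hermiticity and complete positivity, and deduce uniqueness from the rigidity results already established.

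First I would apply Lemma~\ref{Lemma:DeformierterProjektor} to the projection $P_{0}=P_{0}^{2}=P_{0}^{\ast}\in M_{n}(\alg{A})$: the explicit formula \eqref{eq:DeformierterProjektor} yields a $\star$-projector $\defpro{P}\in M_{n}(\defalg{A})$ with $\defpro{P}=\defpro{P}\star\defpro{P}=\defpro{P}^{\ast}$ and $\defpro{P}=P_{0}+\alg{O}(\lambda)$. Setting $\defpro{P}\star\defalg{A}^{n}$ with its canonical right $\star$-multiplication, the proposition on deformations of finitely generated projective modules shows that this is a finitely generated projective $\defalg{A}$-right module whose classical limit is $\rmod{E}{A}=P_{0}\alg{A}^{n}$, unique up to equivalence in the sense of Definition~\ref{Definition:DeformierterModulisomorphismus}. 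Since $\defpro{P}=P_{0}+\alg{O}(\lambda)$, the $\ringf{C}$-linear map $P_{0}\alg{A}^{n}\FP\ni x\mapsto\defpro{P}\star x\in\defpro{P}\star\defalg{A}^{n}$ is a bijection (it is the identity to lowest order), and transporting along it produces the deformation $\defrmod{E}{A}=(\rmodf{E}{A},\bullet)$ on the prescribed underlying module $\rmodf{E}{A}$.

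Next I would equip the free module $\defalg{A}^{n}$ with the canonical $\defalg{A}$-valued inner product $\rSPf{v,w}{\defalg{A}}=\sum_{i=1}^{n}v_{i}^{\ast}\star w_{i}$ and restrict it to $\defpro{P}\star\defalg{A}^{n}$. Sesquilinearity and the Hermitian symmetry $\rSPf{x,y}{\defalg{A}}=\rSPf{y,x}{\defalg{A}}^{\ast}$ are immediate from the $^{\ast}$-algebra axioms, while $\defpro{P}^{\ast}=\defpro{P}$ ensures that the restricted product is compatible with the module structure and that its components $\rSPcr{\cdot,\cdot}{\sss (0),\alg{A}}$ recover, in order $\lambda^{0}$, the original canonical inner product $\rSP{x,y}{\alg{A}}=\sum_{i}x_{i}^{\ast}y_{i}$. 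Writing this expansion as $\rSPf{\cdot,\cdot}{\defalg{A}}=\sum_{n=0}^{\infty}\lambda^{n}\rSPcr{\cdot,\cdot}{\sss (n),\alg{A}}$ exactly as in \eqref{eq:PositiveDeformationInneresProdukt}, it remains only to verify complete positivity. For this I would show that $\rSPf{\cdot,\cdot}{\defalg{A}}$ on $\defalg{A}^{n}$ is completely positive, since for $\Phi^{\sss (1)},\dots,\Phi^{\sss (N)}\in\defalg{A}^{n}$ the Gram matrix $\big(\rSPf{\Phi^{\sss (\alpha)},\Phi^{\sss (\beta)}}{\defalg{A}}\big)$ is visibly of the form $\Psi^{\ast}\star\Psi$ and hence lies in $M_{N}(\defalg{A})^{++}\subseteq M_{N}(\defalg{A})^{+}$; restricting to the submodule $\defpro{P}\star\defalg{A}^{n}$ keeps it in this positivity class, exactly along the lines of the restriction arguments behind Definition~\ref{Definition:PraeHilbertModul} and Satz~\ref{Satz:KomplettePositivitaetDesInnerenProdukts}. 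Thus $(\rmodf{E}{A},\bullet,\rSPf{\cdot,\cdot}{\defalg{A}})$ is a Hermitian deformation of the inner product module $\rmodplus{E}{A}$, which settles existence. Uniqueness up to equivalence then follows in two stages: the module deformation is unique up to equivalence by the cited proposition, and two Hermitian deformations of $\rmodplus{E}{A}$ carrying completely positive inner products can be intertwined by an isometric $\defalg{A}$-module isomorphism $T=\id+\alg{O}(\lambda)$, constructed order by order in parallel to the Hermitian equivalence transformation of Lemma~\ref{Lemma:HermitescheAequivalenztransformation} (see also \citep{bursztyn:2001a:phd}).

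The hard part will be the complete positivity together with positive definiteness of the classical limit: one must check that restricting the free-module inner product to the image of the $\star$-projector $\defpro{P}$ preserves membership in $M_{N}(\defalg{A})^{+}$ order by order in $\lambda$, and that the lowest-order term $\rSP{\cdot,\cdot}{\alg{A}}$ is positive definite so that $\rmodplus{E}{A}$ is genuinely a Pre-Hilbert module over the admissible algebra $\alg{A}$. Controlling the interplay between the $\star$-idempotency of $\defpro{P}$ and the positivity cone is the delicate point, whereas the purely algebraic module deformation is routine once Lemma~\ref{Lemma:DeformierterProjektor} is in hand.
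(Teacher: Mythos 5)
Your construction is correct and follows essentially the same route as the paper: deform $P_{0}$ to a $\star$-projector $\defpro{P}$ via Lemma~\ref{Lemma:DeformierterProjektor}, take $\defpro{P}\star\defalg{A}^{n}$ as the deformed module, and deform the canonical inner product $\sum_{i}x_{i}^{\ast}y_{i}$ accordingly. You merely spell out details the paper leaves implicit or defers to \citep[Prop.~4.13]{bursztyn:2001a:phd} (the transport along $x\mapsto\defpro{P}\star x$, the complete positivity via the Gram matrix $\Psi^{\ast}\star\Psi$, and the uniqueness argument), all of which are sound.
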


\begin{proof}
Wir wollen uns auf den ersten Teil der Behauptung konzentrieren, da
man hier die wesentliche Konstruktion erkennen kann. Den
zweiten Teil findet man detailliert in \citep[Proposition
4.13]{bursztyn:2001a:phd}. Aus $P_{0}$ erh"alt man einen deformierten Projektor $\defpro{P}
  \in M_{n}(\defalg{A})$, und $\defpro{P} \star \defalg{A}^{n}$ wird
  zu einen $\defalg{A}$-Rechtsmodul, so da"s wir eine
  Deformation $\defrmod{E}{A} = (\rmodf{E}{A}, \bullet)$ gefunden
  haben. Desweiteren ist $\rSPf{\cdot,\cdot}{A}=\rSP{\cdot, \cdot}{A} +
  \sum_{n=1}^{\infty} \lambda^{n} \rSPcr{\cdot,\cdot}{(n),\alg{A}}$ eine Deformation von
  $\rSP{\cdot, \cdot}{A}$, so da"s $\defrmod{E}{A} = (\defpro{P}
  \star \defalg{A}^{n}, \bullet, \rSPf{\cdot,\cdot}{A})$ eine \Name{Hermite}sche
  Deformation von $\rmodplus{E}{A}$ wird.
\end{proof}

Komplett analog zur \Name{Morita}-Theorie, die wir in den vorherigen
Kapitel behandelt haben, k"onnen wir \Name{Morita}-"Aquivalenz von
deformierten Algebren formulieren, und damit zum 
\Name{Picard}-Gruppoid gelangen. 

\begin{definition}[Deformierter Bimodul $\defbimod{B}{E}{A}$]
    \label{Definition:DeformierterBimodul}
Seien $\defalg{A}=(\algf{A},\star)$ und
$\defalg{B}=(\algf{B},\star')$ zwei deformierte Algebren nach
Definition \ref{Definition:FormaleAlgebraDeformation}. Ein
deformierter Bimodul $\defbimod{B}{E}{A} =
(\bimodf{B}{E}{A},\bullet',\bullet)$ ist ein $\defalg{B}$-Linksmodul
und ein $\defalg{A}$-Rechtsmodul, so da"s die
$\defalg{B}$-Linksmodulstruktur analog zu Gleichung
\eqref{eq:FormaleModulDeformation} eine formale Potenzreihe der Form 
\begin{align}
        \label{eq:FormaleModulDeformation}
        b \bullet' x  = \sum_{n=0}^\infty \lambda^{n} R'_{n} (b,x), \qquad
          x \in \bimod{B}{E}{A}, \quad b \in \alg{B}
\end{align} 
ist. Die Modulbedingungen schreiben sich dann als
\begin{compactenum}
\item $(b \star' b') \bullet' x = b \bullet' (b' \bullet' x)$,
\item $(b \bullet' x) \bullet a = b \bullet' (x \bullet a)$,
\item $(x \bullet a) \bullet a' = x \bullet (a \star a')$
\end{compactenum}
f"ur alle $x \in \bimodf{B}{E}{A}$, $a, a' \in \algf{A}$ und $b,b'
\in \algf{B}$.    
\end{definition}

\begin{bemerkung}
Wir bezeichnen die Deformation $\defbimod{B}{E}{A}$ eines $(\alg{B},\alg{A})$-Bimoduls
$\bimod{B}{E}{A}$ bez"uglich der beiden deformierten Algebren
$\defalg{A} = (\algf{A}, \star)$ und $\defalg{B} = (\algf{B},\star^{\prime})$ als
$((\algf{B},\star^{\prime}),(\algf{A}, \star))$-Bimodul bzw.~kurz als
$(\star^{\prime},\star)$-Bimodul.  
\end{bemerkung}

\begin{lemma}[Isomorphe Bimoduldeformationen]
Zwei $(\star^{\prime},\star)$-Bimoduldeformationen des Bimoduls
$\bimod{B}{E}{A}$ sind genau dann isomorph, wenn es einen
Bimodulisomorphismus $T= \id + \sum_{n=1}^{\infty}
\lambda^{n} T_{n}: \defbimod{B}{E}{A} \to 
\defbimod{B}{E^{\prime}}{A}$ gibt, so da"s 
\begin{align*}
    b \mathbin{\hat{\bullet}^{\prime}} x = T^{-1}(b \bullet^{\prime} T(x)) \quad
    \text{und} \quad  x \mathbin{\hat{\bullet}} a = T^{-1}(T(x) \bullet a). 
\end{align*}
Damit wirkt $T$ auf den $(\star^{\prime},\star)$-Bimoduldeformationen
durch $(\bimodf{B}{E}{A}, \bullet^{\prime}, \bullet) \mapsto
(\bimodf{B}{E}{A}, \hat{\bullet}^{\prime}, \hat{\bullet})$. 
\end{lemma}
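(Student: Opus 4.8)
The plan is to prove both directions of the equivalence and then read off the action statement. Throughout I use the convention fixed in Definition~\ref{Definition:DeformierterModulisomorphismus}: an isomorphism of $(\star',\star)$-bimodule deformations of $\bimod{B}{E}{A}$ is a $\ringf{C}$-linear map $T = \id + \sum_{n\geq 1}\lambda^n T_n$ on the common underlying space $\bimodf{B}{E}{A}$ that intertwines the two deformed module structures, the algebras $\defalg{A}=(\algf{A},\star)$ and $\defalg{B}=(\algf{B},\star')$ being held fixed. The normalization $T_0 = \id$ is exactly what makes $T$ an isomorphism of \emph{deformations} rather than of abstract bimodules, and it guarantees that $T$ is invertible over $\ringf{C}$ with $T^{-1} = \id - \lambda T_1 + \cdots$.

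For the direction ($\Rightarrow$) I would assume the two deformations are isomorphic, so that there is such a $T$ with $T(b \mathbin{\hat{\bullet}^{\prime}} x) = b \bullet' T(x)$ and $T(x \mathbin{\hat{\bullet}} a) = T(x) \bullet a$ for all $a \in \algf{A}$, $b \in \algf{B}$ and $x \in \bimodf{B}{E}{A}$. Applying $T^{-1}$ to both identities produces precisely the two conjugation formulas in the statement, which settles this implication.

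For ($\Leftarrow$), given $T = \id + \alg{O}(\lambda)$ I would \emph{define} $\hat{\bullet}^{\prime}$ and $\hat{\bullet}$ by the conjugation formulas and check first that these are genuine deformations in the sense of Definition~\ref{Definition:DeformierterBimodul}: since $T$, $T^{-1}$ and $\bullet',\bullet$ all reduce at order $\lambda^0$ to the classical identity and the classical actions, one reads off $b \mathbin{\hat{\bullet}^{\prime}} x = b \cdot x + \alg{O}(\lambda)$ and $x \mathbin{\hat{\bullet}} a = x \cdot a + \alg{O}(\lambda)$, so the lowest order is the original bimodule structure. Then I would verify the three module axioms (i)--(iii); each transfers from the corresponding axiom for $(\bullet',\bullet)$ by a one-line computation, for instance the cross-compatibility
\begin{align*}
(b \mathbin{\hat{\bullet}^{\prime}} x) \mathbin{\hat{\bullet}} a
&= T^{-1}\big(T(b \mathbin{\hat{\bullet}^{\prime}} x) \bullet a\big) \\
&= T^{-1}\big((b \bullet' T(x)) \bullet a\big) \\
&= T^{-1}\big(b \bullet' (T(x) \bullet a)\big) \\
&= b \mathbin{\hat{\bullet}^{\prime}} (x \mathbin{\hat{\bullet}} a),
\end{align*}
where the middle step uses the cross-compatibility of $(\bullet',\bullet)$ applied to $y = T(x)$; the two associativity axioms are handled identically, using associativity of $\bullet'$ and of $\bullet$ together with $\star'$ and $\star$. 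By construction $T$ intertwines $(\hat{\bullet}^{\prime},\hat{\bullet})$ with $(\bullet',\bullet)$ and begins with $\id$, so it is an isomorphism of deformations.

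Finally, the action statement drops out: by the ($\Leftarrow$) step conjugation by any $T = \id + \alg{O}(\lambda)$ sends a valid $(\star',\star)$-deformation of $\bimod{B}{E}{A}$ to another one, the choice $T = \id$ acts trivially, and a short computation shows that conjugating by $T$ and then by $S$ equals conjugating by the composite, so these maps realize an action of the formal gauge group $\{\id + \alg{O}(\lambda)\}$ whose orbits are exactly the isomorphism classes. I do not expect a genuine obstacle here: the argument is formal, and the only points requiring care are the invertibility of $T$ over $\ringf{C}$ (immediate from $T_0 = \id$) and the routine check that every deformed module axiom is preserved under the intertwining.
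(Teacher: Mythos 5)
Your proof is correct. The paper in fact gives no proof of this lemma at all — it is stated as an immediate consequence of Definitions~\ref{Definition:DeformierterModulisomorphismus} and \ref{Definition:DeformierterBimodul} — and your argument (unwinding the intertwining relation in one direction, and in the other checking that conjugation by $T=\id+\alg{O}(\lambda)$ preserves the lowest order and the three module axioms) is exactly the routine verification the author left implicit.
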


Nach diesen Definitionen ist klar, was es auf dem deformierten Niveau
bedeutet, da"s zwei Algebren \Name{Morita}-"aquivalent sind. Die
Definition ist komplett analog zu der in Kapitel
\ref{sec:MoritaAequivalenz}.
 
Seien im weiteren $\defalg{A}=(\algf{A},\star)$ und
$\defalg{B}=(\algf{B},\star^{\prime})$ \Name{Morita}-"aquivalent.

\begin{proposition}[{\citep[Prop.~3.7]{bursztyn.waldmann:2004a}}]
Jeder Bimodul $\defbimod{B}{E}{A} \in \KPic(\defalg{B},\defalg{A})$
ist isomorph zu einer $(\star^{\prime}, \star)$-Bimoduldeformation
eines Elements $\bimod{B}{E}{A} \in \KPic(\alg{B},\alg{A})$, und 
\begin{align}
    \label{eq:clGruppoidhomomorphismusKPictoKPic}
    \cl[\ast]: \Pic(\defalg{B},\defalg{A}) \ni [\defbimod{B}{E}{A}]
    \mapsto [\bimod{B}{E}{A}] \in \Pic(\alg{B},\alg{A}) 
\end{align}
ist eine wohldefinierte Abbildung.
\end{proposition}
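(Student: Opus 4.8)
The plan is to reduce everything to the deformation theory of a single full idempotent, so that the projector results proved above carry the entire argument. The starting point is that a $(\defalg{B},\defalg{A})$-equivalence bimodule is in particular a finitely generated projective progenerator over $\defalg{A}$. Hence, by the projector form of the \Name{Morita} theorem (Satz~\ref{Satz:MoritasTheorem} and Satz~\ref{Satz:MoritaAequivalenzUndProjektiveModul}) applied to the deformed algebras, $\defbimod{B}{E}{A}$ is isomorphic in $\KPic(\defalg{B},\defalg{A})$ to the standard form $\defpro{P} \star \defalg{A}^{n}$ for some full idempotent $\defpro{P} \in M_{n}(\defalg{A})$, with $\defalg{B} \cong \defpro{P} \star M_{n}(\defalg{A}) \star \defpro{P}$. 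I would fix this normal form and work with it throughout.

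The next step is to pass to the classical limit of the idempotent. Setting $P_{0} = \cl(\defpro{P}) \in M_{n}(\alg{A})$, the idempotency $\defpro{P} \star \defpro{P} = \defpro{P}$ reduces modulo $\lambda$ to $P_{0}P_{0} = P_{0}$. Fullness survives as well: from $M_{n}(\defalg{A}) \star \defpro{P} \star M_{n}(\defalg{A}) = M_{n}(\defalg{A})$ one reduces the finite spanning sums modulo $\lambda$ and obtains $M_{n}(\alg{A})P_{0}M_{n}(\alg{A}) = M_{n}(\alg{A})$, so that $P_{0}$ is full in the sense of Definition~\ref{Defintion:VolleIdempotenteElementeP}. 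Therefore $\bimod{B}{E}{A} := P_{0}\alg{A}^{n}$ is a classical equivalence bimodule with $\alg{B} \cong \cl(\defalg{B}) \cong P_{0}M_{n}(\alg{A})P_{0}$, and $\bimod{B}{E}{A} \in \KPic(\alg{B},\alg{A})$ again by Satz~\ref{Satz:MoritaAequivalenzUndProjektiveModul}. Finally, Lemma~\ref{Lemma:DeformierterProjektor} shows $\defpro{P}$ is a deformation of $P_{0}$, and together with the essential uniqueness of the projective module deformation (cf.~\citep[Proposition~4.10]{bursztyn:2001a:phd}) this identifies the normal form $\defpro{P}\star\defalg{A}^{n}$ as a $(\star',\star)$-bimodule deformation of $P_{0}\alg{A}^{n}$ in the sense of Definition~\ref{Definition:DeformierterBimodul}. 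This establishes the existence assertion.

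For the well-definedness of $\cl[\ast]$ I would argue that the classical limit is functorial on isomorphisms. If $T : \defbimod{B}{E}{A} \to \defbimod{B}{E^{\prime}}{A}$ is an isomorphism of deformed equivalence bimodules, written $T = \sum_{n\geq 0}\lambda^{n}T_{n}$, then invertibility of $T$ over $\ringf{C}$ is equivalent to invertibility of its lowest order $T_{0} = \cl(T)$; thus $T_{0} : \bimod{B}{E}{A} \to \bimod{B}{E^{\prime}}{A}$ is an isomorphism of classical equivalence bimodules. Hence the class $[\bimod{B}{E}{A}]$ depends only on $[\defbimod{B}{E}{A}]$, and $\cl[\ast] : \Pic(\defalg{B},\defalg{A}) \to \Pic(\alg{B},\alg{A})$ is a well-defined map.

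The step I expect to be the genuine obstacle is not the algebra of power series but the assertion that the classical limit of an equivalence bimodule is again an equivalence bimodule, that is, that the progenerator property descends. Finite generation and projectivity pass to $P_{0}$ painlessly, but the generator (fullness) property and the identification $\cl(\End[\defalg{A}]{\defbimod{B}{E}{A}}) \cong \End[\alg{A}]{\bimod{B}{E}{A}}$ require the spanning-sum reduction and the clean behaviour of the endomorphism algebra $\defpro{P}\star M_{n}(\defalg{A})\star\defpro{P}$ under $\cl$; this is precisely where the normal form in terms of a single full idempotent is indispensable, and where the self-adjoint refinement of Lemma~\ref{Lemma:DeformierterProjektor} would be needed if one wanted the statement to descend to $\starPic$ or $\strPic$.
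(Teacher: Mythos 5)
The paper offers no proof of this proposition at all---it defers entirely to the cited Prop.~3.7 of Bursztyn--Waldmann---so there is no in-text argument to compare against; your reconstruction is correct and runs along exactly the lines the surrounding section prepares for: normal form $\defpro{P}\star\defalg{A}^{n}$ via the Morita theorem, descent of idempotency and fullness to $P_{0}=\cl(\defpro{P})$, identification with a deformation of $P_{0}\alg{A}^{n}$ through Lemma~\ref{Lemma:DeformierterProjektor} and the uniqueness of projective-module deformations, and well-definedness from the fact that a $\ringf{C}$-linear isomorphism of deformed bimodules is invertible precisely when its order-zero term is. The one step I would spell out is the left-hand structure: the normal form carries its left action through $\defpro{P}\star M_{n}(\defalg{A})\star\defpro{P}$, and you must check that the algebra isomorphism $\defalg{B}\cong\defpro{P}\star M_{n}(\defalg{A})\star\defpro{P}$ has as classical limit an isomorphism $\alg{B}\cong P_{0}M_{n}(\alg{A})P_{0}$ under which the induced classical left action is the \emph{given} $\alg{B}$-action (so that the result is genuinely a $(\star^{\prime},\star)$-deformation in the sense of Definition~\ref{Definition:DeformierterBimodul}); this again follows because the lowest order of an invertible $\ringf{C}$-linear map is itself invertible.
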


\begin{korollar}[{\citep[Lem.~3.8]{bursztyn.waldmann:2004a}}]
\label{Korollar:clPictoPicGruppenhomomorphismus}
Die klassische Limes-Abbildung 
\begin{align}
    \label{eq:clPictoPicGrupenhomomorphismus}
    \cl[\ast]:\Pic(\defalg{A}) \to \Pic(\alg{A})
\end{align}
ist ein Gruppenhomomorphismus.
\end{korollar}

\begin{proposition}[\Name{Morita}-"Aquivalenz von deformierten
    Algebren, {\citep{bursztyn:2001a:phd}}]
  Die deformierten $^\ast$-Algebren $\defalg{A}=(\Cinff{M},\star)$ und
  $\defalg{A}^{\prime}=(\Cinff{M},\star^{\prime})$ sind genau dann
  stark \Name{Morita}-"aquivalent wenn sie auch im ringtheoretischen
  Sinne \Name{Morita}-"aquivalent sind. Dies bedeutet weiter, da"s der
  Gruppoidmorphismus
  \begin{align*}
      \strPic(\defalg{A},\defalg{A^{\prime}}) \to \Pic(\defalg{A},\defalg{A^{\prime}}) 
  \end{align*}
injektiv ist. 
\end{proposition}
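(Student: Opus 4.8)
Der Plan ist, beide Implikationen der "Aquivalenz zu zeigen und die Injektivit"at anschlie"send abzulesen. Eine Richtung ist unmittelbar: Sind $\defalg{A}$ und $\defalg{A}'$ stark \Name{Morita}-"aquivalent, so erf"ullt ein starker "Aquivalenzbimodul nach Vergessen seiner inneren Produkte die Tensorproduktrelationen aus Korollar~\ref{Korollar:SternUndStarkerAequivalenzbimodul}, welche genau den ringtheoretischen Bedingungen aus Korollar~\ref{Korollar:Aequivalenzbimodul} entsprechen; also sind die beiden Algebren auch im ringtheoretischen Sinne \Name{Morita}-"aquivalent. Entscheidend ist dabei, da"s beide Algebren denselben kommutativen klassischen Limes $\Cinf{M}$ besitzen.

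F"ur die Umkehrung -- die eigentliche Richtung -- w"urde ich von der alternativen Beschreibung der ringtheoretischen \Name{Morita}-"Aquivalenz aus Satz~\ref{Satz:MoritaAequivalenzUndProjektiveModul} ausgehen: Es gibt ein $n \in \field{N}$ und ein volles idempotentes $\defpro{P}_{0} \in M_{n}(\defalg{A})$ mit $\defalg{A}' \cong \defpro{P}_{0} \star M_{n}(\defalg{A}) \star \defpro{P}_{0}$. Durch Anwenden des klassischen Limes $\cl$ erh"alt man ein volles idempotentes $P_{0} \in M_{n}(\Cinf{M})$ mit $\Cinf{M} \cong P_{0} M_{n}(\Cinf{M}) P_{0}$. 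Die wesentliche Vereinfachung ist, da"s $\Cinf{M}$ eine kommutative, zul"assige $^{\ast}$-Algebra mit \Name{Hermite}schen Fasermetriken ist: Der endlich erzeugte projektive Modul $P_{0} \Cinf{M}^{n}$ ist nach \Name{Serre}-\Name{Swan} ein Schnittmodul eines Vektorb"undels, und die Wahl einer \Name{Hermite}schen Metrik macht $P_{0}$ "ahnlich zu einer echten Projektion $\tilde{P}_{0} = \tilde{P}_{0}^{\ast} = \tilde{P}_{0}^{2}$. Diese "Ahnlichkeitstransformation ist ein Isomorphismus der zugeh"origen Moduln und l"a"st die \Name{Morita}-Klasse unber"uhrt, so da"s ich ohne Einschr"ankung annehmen darf, da"s $P_{0}$ eine Projektion ist.

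Nun hebe ich $\tilde{P}_{0}$ mittels Lemma~\ref{Lemma:DeformierterProjektor} zu einer deformierten Projektion $\defpro{P} \in M_{n}(\defalg{A})$ und statte den Modul $\defrmod{E}{A} = \defpro{P} \star \defalg{A}^{n}$ mit dem kanonischen inneren Produkt $\rSPf{x,y}{\defalg{A}} = \sum_{i} x_{i}^{\ast} \star y_{i}$ aus. Nach der vorangehenden Proposition "uber die Deformation von $P_{0}\alg{A}^{n}$ ist dieses innere Produkt vollst"andig positiv, also ist $\defrmod{E}{A}$ ein Pr"a-\Name{Hilbert}-Modul, dessen volle Endomorphismenalgebra gerade $\defpro{P} \star M_{n}(\defalg{A}) \star \defpro{P}$ ist. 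Da mit $P_{0}$ auch $\defpro{P}$ voll ist (Lemma "uber (stark) volle deformierte Projektoren), sind beide inneren Produkte voll, und die Vertr"aglichkeit der inneren Produkte gilt nach Konstruktion; somit liefert $\defrmod{E}{A}$ einen starken \Name{Morita}-"Aquivalenzbimodul im Sinne von Definition~\ref{Definition:SternStarkerAequivalenzBimodul} zwischen $\defalg{A}$ und $\defpro{P} \star M_{n}(\defalg{A}) \star \defpro{P}$. Es bleibt zu zeigen, da"s diese Eckenalgebra als deformierte $^{\ast}$-Algebra zu $\defalg{A}'$ isomorph ist; dies folgt aus der Eindeutigkeit der \Name{Hermite}schen Deformation der Eckenalgebra bis auf "Aquivalenz, da $\defpro{P}$ und $\defpro{P}_{0}$ ringtheoretisch "aquivalente klassische Idempotente heben (vgl.~die Proposition zur ringtheoretischen \Name{Morita}-"Aquivalenz deformierter Algebren sowie \citep{bursztyn:2001a:phd}).

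Die Injektivit"at von $\strPic(\defalg{A},\defalg{A}') \to \Pic(\defalg{A},\defalg{A}')$ schlie"slich besagt, da"s zwei starke "Aquivalenzbimoduln, die als blo"se $(\defalg{A}',\defalg{A})$-Bimoduln isomorph sind, bereits isometrisch isomorph sind. Beide sind \Name{Hermite}sche Deformationen desselben klassischen inneren Produktmoduls mit positiv definitem inneren Produkt; nach der Eindeutigkeit solcher Deformationen bis auf isometrische "Aquivalenz l"a"st sich ein gegebener Bimodulisomorphismus zu einer Isometrie korrigieren, womit die Klassen in $\strPic$ "ubereinstimmen. Der Hauptaufwand liegt in diesem letzten Starrheitsargument sowie in der erw"ahnten Identifikation der Eckenalgebra, die beide auf die Eindeutigkeitsaussagen f"ur \Name{Hermite}sche Deformationen endlich erzeugter projektiver (innerer Produkt-) Moduln aus \citep{bursztyn:2001a:phd} zur"uckgreifen.
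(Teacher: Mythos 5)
Die Arbeit selbst f"uhrt f"ur diese Proposition keinen Beweis, sondern verweist vollst"andig auf \citep{bursztyn:2001a:phd}; Ihr Vorschlag rekonstruiert im wesentlichen genau den dort verfolgten Weg (Hebung eines vollen Idempotenten, "Ubergang zu einer Projektion via \Name{Serre-Swan} und einer \Name{Hermite}schen Fasermetrik, vollst"andig positive Deformation des kanonischen inneren Produkts auf $\defpro{P}\star\defalg{A}^{n}$) und ist in dieser Form tragf"ahig. Zwei Stellen bleiben allerdings nur angedeutet und sollten auf die richtigen Hilfsaussagen zur"uckgef"uhrt werden: Erstens folgt die Identifikation der Eckenalgebra $\defpro{P}\star M_{n}(\defalg{A})\star\defpro{P}$ mit $\defalg{A}^{\prime}\cong\defpro{P}_{0}\star M_{n}(\defalg{A})\star\defpro{P}_{0}$ nicht aus der Eindeutigkeit \Name{Hermite}scher Deformationen, sondern aus der formalen Starrheit von Idempotenten, d.~h.~daraus, da"s zwei Idempotente in $M_{n}(\defalg{A})$, deren klassische Limiten in $M_{n}(\Cinf{M})$ konjugiert sind, bereits selbst konjugiert sind; zweitens liefert die von Ihnen zitierte Proposition nur die Existenz \emph{einer} vollst"andig positiven Deformation des kanonischen inneren Produkts (m"oglicherweise mit Korrekturen h"oherer Ordnung), nicht die vollst"andige Positivit"at der naiven Formel $\sum_{i} x_{i}^{\ast}\star y_{i}$ -- f"ur den Schlu"s gen"ugt aber die Existenzaussage. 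Das abschlie"sende Injektivit"atsargument (Korrektur eines gegebenen Bimodulisomorphismus zu einer Isometrie mittels der Eindeutigkeit \Name{Hermite}scher Deformationen zusammen mit der klassischen Eindeutigkeit \Name{Hermite}scher Fasermetriken bis auf Isometrie) trifft den Kern der in der Referenz gef"uhrten Argumentation.
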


\section{Die \Name{Picard}-Gruppe von Sternprodukten}
\label{sec:PicardGruppevonSternprodukten}

Wir wollen uns nun auf Sternprodukte auf symplektischen
Mannigfaltigkeiten einschr"anken\footnote{In
  {\citep{bursztyn.waldmann:2004a}} wird auch der
  \Name{Poisson}-Fall behandelt.}. Sei dazu $(M,\omega)$ eine
symplektische Mannigfaltigkeit und $\defalg{A}=(\Cinff{M}, \star)$
eine Deformation der kommutativen Algebra $\alg{A}=\Cinf{M}$. Es ist
klar, da"s die \Name{Morita}-"Aquivalenzbimoduln Schnitte in einem
(komplexen) Geradenb"undel $\bundle{L}{\pi}{M}$ "uber der Mannigfaltigkeit sind. 

Wie bereits in Kapitel
\ref{sec:AeuqivalenzundKlassifikationvonSternprodukten} beschrieben
und beispielsweise in den Arbeiten
\citep{bertelson.cahen.gutt:1997a,deligne:1995a,nest.tsygan:1995b} gezeigt,
existiert eine Bijektion
\begin{align*}
    c: \Def(M,\omega) \to \frac{1}{\im \lambda}[\omega]+ \HdeRham[2](M)\FP,
\end{align*}
was im Sternproduktfall der charakteristischen Klasse $c(\star)$
(vgl.~Gleichung \eqref{eq:CharakteristischeKlasseSternprodukt}) entspricht.

\begin{proposition}[\Name{Morita}-"Aquivalenz von Sternprodukten]
Zwei Sternprodukte $(\Cinff{M}, \star)$ und $(\Cinff{M},
\star^{\prime})$ sind genau dann \Name{Morita}-"aquivalent, wenn es
ein komplexes Geradenb"undel $\bundle{L}{\pi}{M}$ und $\psi \in
\Sympl(M,\omega)$ gibt, so da"s 
\begin{align}
    \label{eq:MoritaAequivalenzSternprodukteChernKlasse}
    \psi^{\ast}c(\star^{\prime}) = c(\star) + 2 \pi \im c_{1}(L), 
\end{align}
wobei $c_{1}(L) \in \HdeRham[2](M,\field{C})$ das Bild von $e:
\HChern[2](M, \field{Z}) \to \HdeRham[2](M, \field{C})$ der ersten
\Name{Chern}-Klasse von $L$ ist. Dies ist
"aquivalent dazu, da"s f"ur die Wirkung $\Phi: \SPic(\Cinf{M})
\times \Def(\Cinf{M}) \to \Def(\Cinf{M})$ einen
Symplektomorphismus $T$ von $\Cinf{M}$ gibt, so da"s
$[T^{\ast}(\star)]$ und $[\star^{\prime}]$ im gleichen $\Phi$-Orbit liegen.  
\end{proposition}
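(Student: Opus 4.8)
Der Plan ist, die behauptete Charakterisierung auf die Struktur des Orbits der Wirkung $\Phi: \SPic(\Cinf{M}) \times \Def(\Cinf{M}) \to \Def(\Cinf{M})$ zur"uckzuf"uhren und dabei die klassische Limes-Abbildung sowie die Fedosov-Konstruktion f"ur komplexe Geradenb"undel als Br"ucke zwischen der deformierten und der klassischen Seite zu benutzen. Beide Richtungen der "Aquivalenz behandle ich getrennt, und die Umformulierung in der Sprache der $\Phi$-Orbits ergibt sich am Ende aus der Identifikation der erzeugenden Elemente von $\SPic(\Cinf{M})$.

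Zuerst w"urde ich die Notwendigkeit zeigen. Seien $\defalg{A}=(\Cinff{M},\star)$ und $\defalg{A}^{\prime}=(\Cinff{M},\star^{\prime})$ \Name{Morita}-"aquivalent. Nach der Proposition "uber die \Name{Morita}-"Aquivalenz deformierter Algebren sind sie dann auch im ringtheoretischen Sinne "aquivalent, und nach \citep[Prop.~3.7]{bursztyn.waldmann:2004a} ist der zugeh"orige deformierte "Aquivalenzbimodul isomorph zu einer $(\star^{\prime},\star)$-Bimoduldeformation eines klassischen Bimoduls. Dessen klassischer Limes ist ein invertierbarer Bimodul "uber der kommutativen Algebra $\Cinf{M}$, also ein Element von $\Pic(\Cinf{M})$. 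Der entscheidende Zwischenschritt ist nun die Bestimmung dieser klassischen \Name{Picard}-Gruppe: Da $\Cinf{M}$ kommutativ ist, unterscheiden sich Links- und Rechtsmodulstruktur um einen Algebraautomorphismus, d.~h.~um das Zur"uckziehen mit einem Diffeomorphismus, der wegen der Vertr"aglichkeit mit den \Name{Poisson}-Strukturen ein Symplektomorphismus $\psi$ sein mu"s; der zugrundeliegende Modul ist nach \Name{Serre-Swan} durch die Schnitte eines komplexen Geradenb"undels $\bundle{L}{\pi}{M}$ gegeben.

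Anschlie"send w"urde ich die Obstruktion berechnen, diesen klassischen Bimodul zu einem deformierten "Aquivalenzbimodul zu heben. Hier liefert die Fedosov-Konstruktion f"ur Geradenb"undel mit der im Lemma "uber die charakteristischen Klassen von $\star^{\prime}$ und $\star$ festgehaltenen Relation $c(\star^{\prime})=c(\star)+2\pi\im c_{1}(L)$ das entscheidende Werkzeug: zusammen mit der Klassifikation der Sternprodukte durch ihre charakteristische Klasse (Satz~\ref{Satz:KlassifikationSymplektischeSternprodukteII}) folgt, da"s die Existenz des deformierten Bimoduls genau die Bedingung $\psi^{\ast}c(\star^{\prime})=c(\star)+2\pi\im c_{1}(L)$ erzwingt. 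F"ur die Umkehrung gehe ich umgekehrt vor: Aus dieser Bedingung konstruiere ich zuerst mittels des Symplektomorphismus $\psi$ das zur"uckgezogene Sternprodukt $\psi^{\ast}\star^{\prime}$ mit $c(\psi^{\ast}\star^{\prime})=\psi^{\ast}c(\star^{\prime})$ und danach "uber die Fedosov-Deformation des Geradenb"undels $L$ einen "Aquivalenzbimodul, der $\star$ und $\psi^{\ast}\star^{\prime}$ verbindet; die Komposition mit dem durch $\psi$ gegebenen Bimodul liefert dann die gesuchte \Name{Morita}-"Aquivalenz zwischen $\star$ und $\star^{\prime}$.

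Die "Aquivalenz zur Orbit-Formulierung ergibt sich schlie"slich fast unmittelbar: Die Wirkung eines Symplektomorphismus auf $\Def(\Cinf{M})$ ist das Zur"uckziehen $T^{\ast}$, die Wirkung eines Geradenb"undels ist die Verschiebung der charakteristischen Klasse um $2\pi\im c_{1}(L)$, und beide zusammen erzeugen $\SPic(\Cinf{M})$; mit $T=\psi^{-1}$ sowie $\psi^{\ast}[\omega]=[\omega]$ und $c_{1}(\psi^{\ast}L)=\psi^{\ast}c_{1}(L)$ ist die Bedingung $\psi^{\ast}c(\star^{\prime})=c(\star)+2\pi\im c_{1}(L)$ daher gleichbedeutend damit, da"s $[T^{\ast}(\star)]$ und $[\star^{\prime}]$ im selben $\Phi$-Orbit liegen. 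Der Hauptaufwand liegt erwartungsgem"a"s nicht in diesen Umformungen, sondern in der pr"azisen Bestimmung der klassischen \Name{Picard}-Gruppe von $\Cinf{M}$ und im Nachweis, da"s das Hebungsproblem f"ur den "Aquivalenzbimodul tats"achlich kohomologisch durch die charakteristische Klasse kontrolliert wird, also darin, die Fedosov-Konstruktion f"ur Geradenb"undel und die klassische Limes-Abbildung sauber zusammenzuf"uhren.
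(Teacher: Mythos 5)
Ihr Vorschlag ist korrekt und rekonstruiert genau die Argumentationslinie, auf die sich die Arbeit st\"utzt: die Arbeit selbst gibt f\"ur diese Proposition keinen eigenen Beweis, sondern verweist implizit auf \citet{bursztyn.waldmann:2004a}, und Ihre Skizze -- klassischer Limes des \"Aquivalenzbimoduls, Identifikation von $\Pic(\Cinf{M})$ als Diffeomorphismen plus Geradenb\"undel, \Name{Fedosov}-Hebung mit der Relation $c(\star')=c(\star)+2\pi\im c_{1}(L)$ und die Klassifikation durch charakteristische Klassen -- ist genau dieser Beweis. Einzig die Stelle, an der Sie aus der Eindeutigkeit der Bimoduldeformation bis auf \"Aquivalenz schlie{\ss}en, da{\ss} die \Name{Fedosov}-Rechnung der charakteristischen Klassen auf einen \emph{beliebigen} \"Aquivalenzbimodul \"ubertragbar ist, sollte man explizit machen; sie ist aber durch die in der Arbeit zitierte Eindeutigkeitsaussage gedeckt.
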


\begin{satz}[Kern von {$\cl[\ast]$} im symplektischen Fall]
\label{Satz:KernvonclPicAtoPicA}
Sei $(M,\omega,\star)$ eine symplektische Mannigfaltigkeit mit
Sternprodukt, dann ist der Kern der klassischen Limes-Abbildung
\begin{align*}
    \cl[\ast]: \Pic((\Cinff{M},\star)) \to \Pic((\Cinf{M},\cdot))
\end{align*}
gegeben durch 
\begin{align}
    \label{eq:KernvonclPicAtoPicA}
    \ker(\cl[\ast]) \cong \frac{\HPoisson[1](M, \field{C})}{2 \pi \im
      \HPoisson[1](M, \field{Z})} + \lambda \HPoisson[1](M,\field{C})\FP.
\end{align}
Die Abbildung $\cl[\ast]$ ist ein Gruppenmorphismus\footnote{Im
  \Name{Poisson}-Fall ist diese Aussage im allgemeinen falsch!} und genau dann injektiv, wenn
$\HPoisson[1](M,\field{C}) = 0$. 
\end{satz}

\begin{proof}
Der Beweis findet sich in \citep{bursztyn.waldmann:2004a}.  
\end{proof}

Wie wir bereits in Beispiel \ref{Beispiel:PicardGruppevonCinfM}
gesehen haben, ist die \Name{Picard}-Gruppe der glatten Funktionen auf
einer Mannigfaltigkeit gegeben durch
\begin{align*}
    \Pic(\Cinf{M}) = \lcross{\Diff(M)}{\Pic(M)}
\end{align*}

Wir definieren die folgende Abbildung
\begin{align*}
\clr[\ast] = \pr \circ \cl[\ast]: \Pic(\defalg{A}) \to \SPic(\alg{A}),
\end{align*}
dabei ist $\pr: \Pic(\alg{A}) \to \SPic(\alg{A})$ die nat"urliche
Projektion der \Name{Picard}-Gruppe auf die statische
\Name{Picard}-Gruppe, die uns den \glqq interessanten\grqq{} Teil der
\Name{Picard}-Gruppe liefert. Desweiteren definieren wir folgende
Untermenge der Symplektomorphismen von $M$. Sei $\ell \in \HChern[2](M,\field{Z})$.
\begin{align}
    \label{eq:Pell}
   P_{\ell}=\{T \in \Sympl(M) | e(\ell) = T^{\ast} [\omega_{0}]-[\omega_{0}] \quad \text{und}
    \quad  T^{\ast} [\omega_{n}]=[\omega_{n}] \, \text{f"ur $n\ge 0$}
    \}. 
\end{align}

\begin{lemma}
Sei $(M,\omega, \star)$ eine symplektische Mannigfaltigkeit mit
Sternprodukt mit $c(\star) = \tfrac{1}{\im \lambda} [\omega] +
\sum_{n=0}^{\infty} \lambda^{n} [\omega_{n}]$, dann gilt 
\begin{equation}
\begin{aligned}
    \label{eq:Bildvonclr}
    \bild ( \clr[\ast]) = & \{ \ell \in \HChern[2](M,\field{Z}) |
    \exists T \in P_{\ell} \}.
\end{aligned}
\end{equation}
\end{lemma}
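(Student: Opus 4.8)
The plan is to reduce the statement to the Morita criterion for star products (the preceding Proposition) and then to read off the resulting condition order by order in $\lambda$. Throughout I identify $\SPic(\Cinf{M})$ with the line-bundle part of $\Pic(\Cinf{M}) = \lcross{\Diff(M)}{\Pic(M)}$, so that $\Pic(M) \cong \HChern[2](M,\field{Z})$ via the first \Name{Chern} class and $\pr$ extracts the isomorphism class $\ell = c_1(L)$ of the line bundle underlying an element $(\psi,L) \in \Pic(\Cinf{M})$.

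First I would describe the image of $\cl[\ast]$ itself. An element of $\Pic(\defalg{A})$ is a self-\Name{Morita}-equivalence of $\defalg{A} = (\Cinff{M},\star)$, and its classical limit is a self-equivalence of the commutative algebra $\Cinf{M}$, hence a pair $(\psi,L)$ with $\psi \in \Diff(M)$ and $L$ a complex line bundle. Specializing the Morita criterion to the case $\star' = \star$ shows that such a pair lies in $\bild(\cl[\ast])$ precisely when $\psi$ is a symplectomorphism and
\begin{align}
    \psi^{\ast} c(\star) = c(\star) + 2\pi\im\, c_1(L).
\end{align}
Necessity is the statement that the existence of a deforming bimodule forces this identity of characteristic classes; sufficiency is exactly the existence half of the criterion, which produces a self-equivalence deforming $(\psi,L)$ from any pair obeying the equation. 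Thus $\bild(\cl[\ast])$ is the set of all $(\psi,L)$ with $\psi \in \Sympl(M,\omega)$ satisfying the displayed relation.

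Next I would apply $\pr$ and expand in $\lambda$. Writing $c(\star) = \tfrac{1}{\im\lambda}[\omega] + \sum_{n=0}^{\infty}\lambda^{n}[\omega_{n}]$ and using $\psi^{\ast}[\omega] = [\omega]$ for a symplectomorphism, the $\tfrac{1}{\lambda}$-order of the relation is automatic. In order $\lambda^{0}$ it reads $\psi^{\ast}[\omega_{0}] - [\omega_{0}] = e(\ell)$, where $e(\ell) = 2\pi\im\, c_1(L)$ is the image of $\ell = c_1(L)$ under $e\colon \HChern[2](M,\field{Z}) \to \HdeRham[2](M,\field{C})$, while in each order $\lambda^{n}$ with $n \ge 1$ it reads $\psi^{\ast}[\omega_{n}] = [\omega_{n}]$. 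These are exactly the defining conditions of $P_{\ell}$. Hence $\ell = c_1(L)$ arises as the static part of some element of $\bild(\cl[\ast])$ if and only if some symplectomorphism lies in $P_{\ell}$, which gives
\begin{align}
    \bild(\clr[\ast]) = \{\ell \in \HChern[2](M,\field{Z}) \mid \exists\, T \in P_{\ell}\}.
\end{align}

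The main obstacle is the first step, namely pinning down $\bild(\cl[\ast])$ exactly: one must argue that the classical limit of a self-equivalence of $\defalg{A}$ not only yields some $(\psi,L)$ but that this prescribed $\psi$ and $L$ satisfy the characteristic-class relation, and conversely that the relation suffices to deform $(\psi,L)$ back to a self-equivalence of $\defalg{A}$. Both directions rest on the Morita criterion for star products together with the deformation theory of projective modules developed above; once they are in place, the $\lambda$-expansion is a routine comparison with the definition of $P_{\ell}$, the only delicate point being the normalization of $e$ so that the factor $2\pi\im$ is absorbed consistently (and, correspondingly, that the invariance condition $T^{\ast}[\omega_{n}] = [\omega_{n}]$ is read for $n \ge 1$).
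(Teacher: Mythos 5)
Your argument is sound and is essentially the one the cited source [Bursztyn--Waldmann 2004, Chapter 7] gives; the paper itself states this lemma without proof and defers there, so there is no in-text proof to compare against. Your reduction — characterize $\bild(\cl[\ast])$ via the Morita criterion for star products specialized to $\star'=\star$, then apply $\pr$ and expand the class identity $\psi^{\ast}c(\star)=c(\star)+2\pi\im\,c_{1}(L)$ order by order in $\lambda$ to recover the defining conditions of $P_{\ell}$ — is exactly the intended route, and you correctly identify that the substantive input is the lifting statement for a \emph{given} pair $(\psi,L)$ (every classical self-equivalence bimodule satisfying the class relation deforms, and conversely the classical limit of a deformed one satisfies it), which rests on the Morita criterion together with the deformation theory of projective modules developed earlier. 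You were also right to read the invariance condition $T^{\ast}[\omega_{n}]=[\omega_{n}]$ in the definition of $P_{\ell}$ as holding for $n\ge 1$ rather than the $n\ge 0$ printed in \eqref{eq:Pell} (otherwise $P_{\ell}$ would force $e(\ell)=0$), and to flag the normalization of $e$ against the factor $2\pi\im$; both are inconsistencies in the paper's own notation rather than gaps in your argument.
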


\begin{satz}[Bild von {$\cl[\ast]$} im symplektischen Fall]
 \label{Satz:BildvonclPicAtoPicA}   
Sei $(M, \omega, \star)$ eine symplektische Mannigfaltigkeit mit
Sternprodukt. Das Bild der klassischen Limes-Abbildung $\cl[\ast]$ in $\Pic(\Cinf{M}) =
\lcross{\Diff(M)}{\Pic(M)}$ ist gegeben durch
\begin{align}
    \label{eq:Bildvoncl}
    \bild(\cl[\ast]) = \{ (T,\ell) \in \lcross{\Diff(M)}{\Pic(M)} | T
    \in P_{\ell} \quad \text{und} \quad \ell \in \bild(\clr[\ast]) \}.
\end{align}
\end{satz}

Konkretere Ausf"uhrungen sowie einige Beispiele findet man in
\citep[Chapter 7]{bursztyn.waldmann:2004a}.


\chapter*{Ausblick}
\fancyhead[CE]{\slshape \nouppercase{Ausblick}} 
\fancyhead[CO]{\slshape \nouppercase{Ausblick}} 
\addcontentsline{toc}{chapter}{Ausblick}

Wir wollen noch einen kurzen Blick "uber den Tellerrand wagen und uns
interessanten zuk"unftigen Projekte zuwenden. 
Es liegt nahe sich, aufbauend auf dieser Arbeit, den $H$-"aquivarianten Fall deformierter
Algebren anzusehen. Dabei fallen mehrere interessante Aufgaben an, die
wir nun kurz umrei"sen wollen, und deren L"osung wir gerne dem
interessierten Leser "uberlassen.

\begin{compactenum}
\item Besonders interessant w"are es den Kern und das Bild der klassischen
Limes-Abbildung 
\begin{align*}
\cl[\ast]: & \strPicH(\defalg{B},\defalg{A}) \to
\strPicH(\alg{B},\alg{A}) 
\end{align*}
zu verstehen. Wie von \citet{bursztyn.waldmann:2004a} gezeigt wurde
(vgl.~Kapitel~\ref{sec:PicardGruppevonSternprodukten}), ist der
gew"ohnliche Fall eines Sternproduktes auf einer Mannigfaltigkeit,
d.~h.~ohne die Wirkung durch eine \Name{Hopf}-Algebra, schon
schwierig zu behandeln. Im allgemeinen wird es daher nicht einfach sein ein
nichttriviales Ergebnis zu erlangen. 

\item Aufbauend auf den Ergebnissen aus
Kapitel~\ref{sec:DerGruppoismorphismusstrPicHtostrPic} liegt es nahe
den Morphismus 
\begin{align*}
\strPicH \to \strPic
\end{align*}
f"ur einige konkrete Beispiele zu betrachten, und so die abstrakten
Ergebnisse dieser Arbeit mit Leben zu f"ullen. Wir denken hier
insbesondere an die Deformationsquantisierung, so da"s der Morphismus 
\begin{align*}
    \strPicH((\Cinff{M},\star)) \to \strPic((\Cinff{M},\star))
\end{align*}
f"ur ein konkretes $H$ von besonderem Interesse w"are. Wie wir in
Kapitel~\ref{sec:DerGruppoismorphismusstrPicHtostrPic} auch gezeigt
haben w"are daf"ur das Verstehen der Gruppe
$\GR{U}{H,(\Cinff{M},\star)}$ vonn"oten. Das \glqq einfache\grqq{}
Beispiel~\ref{Beispiel:GruppeUUCgB} von \citet{waldmann:2005a:misc}
zeigt aber, da"s auch dies ein eher kompliziertes Unterfangen
ist. 
\item Ein deutlich einfacheres und ebenso interessantes Beispiel w"are
die Gruppe  $\GR{U}{\ring{C}(G),\alg{A}}$ zu verstehen, im
speziellen f"ur $\alg{A}=\Cinf{M}$. Das Problem hierbei besteht darin die
auftretende Gruppenkohomologie zu verstehen. 
\end{compactenum}

\begin{appendix}
\renewcommand{\chaptername}{Anhang}

\chapter{Algebraische Grundlagen}
\label{chapter:AlgebraischeGrundlagen}

\fancyhead[CO]{\slshape \nouppercase{\rightmark}}
\fancyhead[CE]{\slshape \nouppercase{\leftmark}}

\section{Geordnete Ringe und formale Potenzreihen}
\label{sec:GeordneteRingeFormalePotenzreihen}

\subsection{Ringe und geordnete Ringe}
\label{sec:GeordeneteRinge}

F"ur einen tieferen Einblick in die hier behandelten Themen verweisen
wir auf die B"ucher \citep{rowen:1991a,lang:1997a} und \citep{jacobson:1985a}.  

\begin{definition}[Gruppe, Untergruppe]
\index{Gruppe|textbf} \index{Gruppe!Untergruppe}
Eine Gruppe $(G,\cdot)$ ist eine nichtleere Menge $G$ mit der Verkn\"upfung
$\cdot$, so da{\ss} f"ur alle Elemente $g, g_{1},g_{2},g_{3} \in G$
die folgenden Bedingungen gelten.
\begin{compactenum}
\item Die Verkn"upfung ist assoziativ, d.~h.~es gilt $g_{1} \cdot (g_{2} \cdot
    g_{3}) = (g_{1} \cdot g_{2}) \cdot g_{3}$.
\item Es existiert ein {\em Einselement $e \in G$}, so da"s $g\cdot e
    = e \cdot g = g$ ist.
\item Zu jedem Element $g \in G$ existiert ein Element $g^{-1} \in G$,
    so da"s $g^{-1} \cdot g = g \cdot g^{-1} =e$ ist.
\end{compactenum}
Desweiteren bezeichnet man eine Gruppe als {\em kommutativ}, wenn die
Verkn"upfung kommutativ ist. Eine {\em Untergruppe $H$ von $G$} ist
eine Menge $H \subseteq G$, so da"s f"ur alle $h_{1}, h_{2} \in H$
gilt $h_{1} \cdot h_{2}^{-1} \in H$.   
\end{definition}

\begin{definition}[Ring]
\label{Definition:Ring}
\index{Ring}
Ein Ring $(\ring{R},\cdot,+)$ ist eine nichtleere Menge $\ring{R}$ mit zwei
Verkn"upfungen $\cdot : \ring{R} \times \ring{R} \to \ring{R}$, der {\em
Multiplikation} und $+: \ring{R} \times \ring{R} \to \ring{R}$, der
{\em Addition}, so da"s die folgenden Eigenschaften erf"ullt sind:
\begin{compactenum}
\item $(\ring{R},+)$ ist eine kommutative Gruppe, und das neutrale
    Element bezeichnet man mit $0 \in \ring{R}$.
\item $(\ring{R},\cdot)$ ist {\em assoziativ} und besitzt ein {\em Einselement}
    $1_{\sss \ring{R}} \in \ring{R}$.
\item F"ur alle $r,s,t \in \ring{R}$ gelten die {\em
      Distributivgesetze} $(r+s)\cdot t = r\cdot t + s\cdot t$ und $ t
    \cdot (r+s) = t \cdot r + t\cdot s$.
\end{compactenum}
\end{definition}

\begin{definition}[Modul, Untermodul]
\index{Modul|textbf}
    Ein {\em $\ring{R}$-Rechtsmodul} (oder {\em Rechtsmodul})
    $(\rmod{M}{\ring{R}},+,\cdot)$ f"ur den Ring $\ring{R}$ ist eine
    nichtleere Menge $\rmod{M}{\ring{R}}$ mit den folgenden
    Eigenschaften f"ur alle $m,n \in \rmod{M}{\ring{R}}$ und
    $r,s \in \ring{R}$:
    \begin{compactenum}
    \item $(\rmod{M}{\ring{R}},+)$ ist eine \Name{Abel}sche Gruppe.
    \item Die Verkn"upfungen des Moduls $\cdot :\rmod{M}{\ring{R}} \times \ring{R}
        \to \rmod{M}{\ring{R}}$ und $(\rmod{M}{\ring{R}},+)$ sind mit
        den Ringstrukturen vertr"aglich
 \begin{align*}
  (m \cdot r) \cdot s = m\cdot (r \cdot s), \quad m\cdot (r+s) =
  m\cdot r + m\cdot s \quad \text{und} \quad (m+n)\cdot r = m.r + n\cdot r.
 \end{align*}
  \end{compactenum} 
Ein {\em Untermodul} ist ein Modul $\rmod{N}{\ring{R}} \subseteq \rmod{M}{\ring{R}}$ 
f"ur den  $\rmod{N}{\ring{R}}\cdot\ring{R} \subseteq \rmod{N}{\ring{R}}$ und
 $\rmod{N}{\ring{R}} + \rmod{N}{\ring{R}} \subseteq
 \rmod{N}{\ring{R}}$ gilt. 

Analog definiert man einen $\ring{R}$-Linksmodul.
\end{definition}

\begin{definition}[Ideal eines Rings]
\index{Ideal!Ring@eines Rings}
    Sei $(\ring{R},+,\cdot)$ ein Ring. Wir bezeichnen $\ring{I}_{\sss L}$
    als {\em Linksideal}, falls  
    \begin{compactenum}
    \item $(\ring{I}_{\sss L},+)$ eine Untergruppe von $(\ring{R},+)$ ist,
    \item $x\cdot r \in \ring{I}_{\sss L}$ f"ur alle $x\in
        \ring{I}_{\sss L}$ und alle $r \in \ring{R}$ ist.
    \end{compactenum}
Analog ist $\ring{I}_{\sss R}$
    als {\em Rechtsideal} definiert, falls  
    \begin{compactenum}
    \item $(\ring{I}_{\sss R},+)$ eine Untergruppe von $(\ring{R},+)$ ist,
    \item $r\cdot x  \in \ring{I}_{\sss R}$ f"ur alle $x\in
        \ring{I}_{\sss R}$ und alle $r \in \ring{R}$ ist.
    \end{compactenum} 
Ein {\em Ideal} $\ring{I}$ des Rings $\ring{R}$ ist sowohl Rechts- als
auch Linksideal. Ferner bezeichnet man ein Ideal als {\em echt}, falls
$\ring{I} \subset \ring{R}$. Die {\em trivialen Ideale} sind der Ring
$\ring{R}$ selbst und das Nullideal $0$. 
\end{definition}

\begin{lemma}[Ideale des Rings $\ring{R}$]
    Jeder Ring $\ring{R}$ ist ein Modul "uber sich selbst. Die Ideale
    des Rings $\ring{R}$ sind genau die Untermoduln des 
    Rings $\ring{R}$ als Modul. 
\end{lemma}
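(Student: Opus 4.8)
The plan is to treat the statement in two halves, each of which reduces to unwinding the definitions just given. First I would verify that $\ring{R}$, equipped with its own multiplication as the scalar action, satisfies the module axioms. The additive part $(\ring{R},+)$ is an abelian group by Definition~\ref{Definition:Ring}, so the group axiom for a module holds verbatim. For the compatibility conditions I would simply read off each module axiom from the corresponding ring axiom: associativity of the scalar action, $(m\cdot r)\cdot s = m\cdot(r\cdot s)$, is exactly associativity of the ring multiplication, and the two distributivity requirements $m\cdot(r+s)=m\cdot r + m\cdot s$ and $(m+n)\cdot r = m\cdot r + n\cdot r$ are precisely the two distributive laws of Definition~\ref{Definition:Ring}. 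Hence $\ring{R}$ is a right module over itself, and symmetrically a left module.

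For the second half I would chase the definition of an ideal against the definition of a submodule. A subset $\ring{N}\subseteq\ring{R}$ is a submodule of the right module $\ring{R}$ precisely when $(\ring{N},+)$ is an additive subgroup and $\ring{N}\cdot\ring{R}\subseteq\ring{N}$; but these are word for word the two defining conditions of a right ideal $\ring{I}_{\sss R}$. The same argument with the left action identifies the submodules of the left module $\ring{R}$ with the left ideals $\ring{I}_{\sss L}$. A two-sided ideal $\ring{I}$, being simultaneously a left and a right ideal, therefore corresponds exactly to a subset that is a submodule for both the left and the right structure, i.e.\ to a sub-bimodule of $\ring{R}$ over itself. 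The asserted correspondence is then the identity map $\ring{N}\mapsto\ring{N}$, which is manifestly its own inverse, so both inclusions are immediate and no computation remains.

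The only real subtlety here is bookkeeping rather than a genuine obstacle, namely the handedness: submodules of a purely one-sided module recover only the corresponding one-sided ideals. For the clause \emph{die Ideale sind genau die Untermoduln} to be literally correct with the two-sided notion of ideal, the phrase ``$\ring{R}$ als Modul'' must be read as carrying both module structures at once (equivalently, as the regular bimodule $\bimod{R}{R}{R}$), so that ``Untermodul'' coincides with the two-sided ``Ideal''. Once this reading is fixed, the proof is complete; I expect the write-up to consist almost entirely of pointing to the relevant clauses of the ring, module, and ideal definitions and remarking that they agree symbol for symbol.
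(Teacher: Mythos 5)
Your proposal is correct; the paper in fact states this lemma without any proof, and the definitional unwinding you give is exactly the argument that is being left to the reader. Your observation about handedness is also well taken: the paper's Definition of \emph{Untermodul} matches word for word the definition of a one-sided ideal, so the clause about (two-sided) ideals only becomes literally true once ``$\ring{R}$ als Modul'' is read as carrying both the left and the right regular action, as you note. The only cosmetic remark is that the paper's submodule condition is stated as $\rmod{N}{\ring{R}}+\rmod{N}{\ring{R}}\subseteq\rmod{N}{\ring{R}}$ rather than as the subgroup condition appearing in the definition of an ideal, but since $-n = n\cdot(-1_{\sss\ring{R}})$ for a unital ring, closure under the module action already forces closure under negation, so the two formulations agree and your identification goes through.
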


\begin{definition}[Eigenschaften von Ringen]
\label{Definition:EigenschaftenVonRingen}
 Sei $(\ring{R}, \cdot,+)$ ein Ring. Man nennt den Ring $\ring{R}$
 \begin{compactenum}

\item {\em kommutativ} oder {\em \Name{Abel}sch}, falls die
    Verkn"upfung $\cdot$ kommutativ ist.   

\item {\em einfach}, falls er nur triviale Ideale hat.

\item {\em (links-,rechts-) primitiv}, falls es einen treuen
    irreduziblen $\ring{R}$-(links-,rechts-) Modul gibt.

\item {\em \Name{Noether}sch}, wenn der Ring $\ring{R}$ als
    $\ring{R}$-Modul \Name{Noether}sch ist. Ein $\ring{R}$-Modul ist \Name{Noether}sch, wenn jeder
    Untermodul endlich erzeugt ist. "Aquivalent dazu sind die {\em
      aufsteigend Kettenbedingung}, d.~h.~es existiert ein $N \in
    \field{N}$, so da"s jede aufsteigende Kette von Untermoduln 
    station"ar wird
  \begin{align*}
        \modul{N}_{1} \subseteq \modul{N}_{2} \subseteq \modul{N}_{3} \subseteq \ldots
        \subseteq \modul{N}_{N} = \modul{N}_{N+1} = \ldots,
    \end{align*}
und die {\em Maximalbedingung f"ur Untermoduln}, d.~h.~jede nichtleere
Menge von $\ring{R}$-Untermoduln von $\modul{M}$ hat ein maximales Element bez"uglich
    der Inklusion.
\item {\em \Name{Artin}sch}, wenn der Ring 
    $\ring{R}$ als $\ring{R}$-Modul \Name{Artin}sch
    ist, d.~h.~man nennt einen Modul $\modul{M}$ \Name{Artin}sch, wenn
    jede absteigende Folge von Untermoduln station"ar wird: es
    existiert ein $N\in \field{N}$, so da"s 
    \begin{align*}
        \modul{M}_{1} \supseteq \modul{M}_{2} \supseteq \modul{M}_{3} \supseteq \ldots
        \supseteq \modul{M}_{N} = \modul{M}_{N+1} = \ldots.
    \end{align*}
    Dies bezeichnet man als {\em absteigende Kettenbedingung}.
    "Aquivalent ist die {\em Minimalbedingung f"ur Untermoduln},
    da"s jede nichtleere Menge von 
    $\ring{R}$-Untermoduln von $\modul{M}$ ein minimales Element bez"uglich
    der Inklusion hat.
\item {\em halbeinfach}, falls der Ring $\ring{R}$ als
    ($\ring{R}$-Modul betrachtet) eine Summe von
    einfachen $\ring{R}$-Un\-ter\-mo\-duln ist.

 \end{compactenum}
\end{definition}

\begin{definition}[\Name{Jacobson}-Radikal, \citep{jacobson:1989a}]
\label{Definition:JacobasonRadikal}
\index{Jacobson-Radikal@\Name{Jacobson}-Radikal}
    Das \Name{Jacobson}-Radikal eines Rings ist die Schnittmenge aller
    Kerne der irreduziblen Darstellungen des Rings. Wir Bezeichnen
    das \Name{Jacobson}-Radikal eines Rings $\ring{R}$ mit $J(\ring{R})$.
\end{definition}

\begin{definition}[Geordneter Ring]
    \label{Definition:GeordneterRing}
\index{Ring!geordneter}
Sei $\ring{R}$ ein kommutativer Ring mit $1_{\sss \ring{R}} \neq 0$ und
sei $\ring{P} \subset \ring{R}$. Man nennt $(\ring{R},\ring{P})$ einen
{\em geordneten Ring}, falls f"ur alle $r \in  \ring{R}$ entweder $r
\in \ring{P}$ oder $r \in -\ring{P}$ oder $r = 0$, und f"ur alle $r,s
\in \ring{P}$ auch $rs \in  \ring{P}$ und $r+s \in \ring{P}$
liegt. $\ring{R}$  ist somit die disjunkte Vereinigung
$\ring{R}=\{-\ring{P}\} \dot{\cup} \{0\} \dot{\cup} \{\ring{P}\}$ 
\end{definition}

\begin{definition}[Relationen und Betrag auf einem geordneten Ring]
Sei $(\ring{R}, \ring{P})$ ein geordneter Ring nach Definition
\ref{Definition:GeordneterRing}.     
\begin{compactenum}
        \item Wir nennen die Elemente $r \in \ring{P}$ {\em positive
              Elemente} und Elemente $r \in -\ring{P}$ {\em negative
              Elemente}.
        \item Wir bezeichnen $r > s$ ($r$ gr"o"ser $s$), falls $r-s
            \in \ring{P}$, und demnach $r < s$ ($r$ kleiner $s$), falls
            $r-s \in -\ring{P}$. Sei $r-s = 0$ so ist $r=s$ ($r$
            gleich $s$). 
        \item Den {\em Betrag} $|r|$ definieren wir als $|r|=r$ falls
            $r \in \ring{P}$ und $|r|=-r$ falls $r \in
            -\ring{P}$.
   \end{compactenum}  
\end{definition}

\begin{korollar}[Eigenschaften eines geordneten Rings $\ring{R}$]
   \label{Korollar:GeordneterRing} 
   Sei $(\ring{R}, \ring{P})$ ein geordneter Ring nach Definition
    \ref{Definition:GeordneterRing}.
       \begin{compactenum}
        \item Es gilt $r^2 =rr>0$ oder $r^2 =0 \; \Leftrightarrow \;
            r=0$ insbesondere ist $-1<0<1$.
        \item $1+\cdots+1 =n1 > 0$, was insbesondere hei"st, da"s
            der Ring $(\ring{R},\ring{P})$ die {\em Charakteristik
              $0$} hat und damit $\field{Z} \subseteq \ring{R}$ ist.
        \item Ein geordneter Ring ist immer {\em nullteilerfrei},
            d.~h.~aus $rs=0$ folgt immer $r=0$ oder $s=0$. 
    \end{compactenum}
\end{korollar}

\begin{definition}[Archimedische Ordnung]
   \label{Definition:ArchimedischeOrdnung}
\index{Ordnung!archimedische}
    Man nennt einen geordneten Ring $(\ring{R},\ring{P})$ {\em archimedisch
      geordnet}, falls f"ur alle Elemente $r,s \in \ring{P}$ ein $n
    \in \field{N}$ existiert, so da"s $r < ns$.
\end{definition}

\begin{bemerkungen}
~\vspace{-5mm}
\begin{compactenum}
\item Die komplexe Erweiterung $\ring{C}=\ring{R}(\im)$ eines geordneten
Rings $\ring{R}$ ist die Menge $\ring{C} = \ring{R} \times \ring{R}$
mit $\im=(0,1)$ und der komponentenweise Addition, sowie der
Multiplikation mit $\im^{2}=-1$. Die Elemente $z=(u,v) \in \ring{C}$
schreiben wir als $z=u+\im v$ und bezeichnen $u$ als den {\em
  Realteil} und $v$ als den {\em Imagin"arteil} von $z$. Die komplexe
Konjugation bezeichnen wir mit $z=u+\im v \mapsto \cc{z}= u-\im
v$. Offensichtlich gilt $\ring{R} \ni z\cc{z} >0$ f"ur $z\neq 0$ und
$z\cc{z}=0$ f"ur $z=0$. 

\item Aufgrund der Nullteilerfreiheit k"onnen wir bei einem Ring immer zu
seinem {\em Quotientenk"orper} $\qring{R}$ von $\ring{R}$
"ubergehen. Dabei ist $\qring{R}$ die "Aquivalenzklasse
formaler Br"uche $\tfrac{a}{b}$ mit $a,b \in \ring{R}$ und $b\neq 0$,
so da"s $\tfrac{a}{b} \cong \tfrac{a'}{b'}$ falls es $c,c' \in
\ring{R}$ gibt mit $ca=c'a'$ und $cb=c'b'$. Mittels der gew"ohnlichen
Addition und Multiplikation f"ur Br"uche wird $\qring{R}$ zu einem
geordneten K"orper. Die Inklusion $\ring{R} \hookrightarrow
\qring{R}$ geht via $a \mapsto \tfrac{a}{1}$ und ist ordnungserhaltend.
\end{compactenum}
\end{bemerkungen}

\begin{beispiele}[Beispiele f"ur (geordnete) Ringe]
~\vspace{-5mm}
\begin{compactenum}
   \item Die ganzen Zahlen $\field{Z}$ bilden einen geordneten und
       sogar archimedischen Ring. Er zeichnet sich dadurch aus,
       da"s er der kleinste geordnete Ring und Teilmenge jedes
       geordneten Rings $\field{Z} \subseteq \ring{R}$ ist.
   \item Die rationalen Zahlen $\field{Q}$ und die reellen Zahlen
       $\field{R}$ bilden geordnete und archimedische
       Ringe. $\field{Q}$ und $\field{R}$ sind zudem sogar
       K"orper. Offensichtlich ist der Quotientenk"orper von
       $\field{Z}$ der K"orper der rationalen Zahlen $\field{Q}$. 
   \item Sei $\ring{R}$ ein beliebiger Ring, so sind die $n \times
       n$-Matrizen $M_{n}(\ring{R})$ "uber diesem Ring $\ring{R}$ wieder ein
       Ring. F"ur $n \ge 2$ ist $M_{n}(\ring{R})$ nicht geordnet,
       selbst wenn $\ring{R}$ geordnet (und archimedisch) ist.    
\end{compactenum}
\end{beispiele}

\subsection{Formale Potenzreihen}
\label{sec:FormalePotenzreihen}
Der richtige Rahmen, um "uber {\em formale Potenzreihen} nachzudenken,
ist ein {\em Modul} $\modul{M}$ "uber einem geordneten Ring
$\ring{R}$. Wir werden uns bei der Darstellung an \cite[Seite
422f]{jacobson:1989a} und \citep[Seite 139f]{waldmann:1999a}
orientieren.  

\begin{definition}[Formale Potenzreihen]
    \label{Definition:FormalePotenzreihe}
\index{formale Potenzreihe|textbf}
Sei $\modul{M}$ ein Modul "uber einem geordneten Ring $\ring{R}$, dann
definieren wir den Modul $\modulf{M}$ der formalen Potenzreihen als
kartesisches Produkt   

\begin{align}
\modulf{M}:= \bigX_{j=0}^{\infty} \modul{M}_j \quad
\text{mit} \quad \modul{M}_{j}=\modul{M}. 
\end{align}
Elemente in $\modulf{M}$ sind dann Folgen
$m=(m_0,m_1,m_2,\cdots)$, die wir symbolisch in der folgenden Form
\begin{align}
 \label{eq:FormaleReihe}
m= \sum_{j=0}^{\infty} \lambda^j m_{j}, \quad \text{mit} \quad m_{j}
\in \modul{M},
\end{align}
schreiben werden, dabei bezeichnen wir $\lambda$ als den {\em formalen
  Parameter} der formalen Reihe.  
\end{definition}

\begin{bemerkungen}[Formale Potenzreihen]
~\vspace{-5mm}
\begin{compactenum}
\item Formale Reihen sind ein rein algebraisches Konzept, d.~h.~wir machen
uns bei der Notation \eqref{eq:FormaleReihe} keinerlei Gedanken "uber
Konvergenz im herk"ommlichen Sinne\footnote{Ist man, wie bei manchen
  Sternprodukten, in einem konvergenten Rahmen, so \"ubernimmt $\hbar$
  die Stelle von $\lambda$. Vergleiche hierzu beispielsweise
  \citep{agarwal.wolf:1970a,agarwal.wolf:1970b,agarwal.wolf:1970c,beiser:2005a,beiser.roemer.waldmann:2005a:pre}.}.    
\item Das Objekt $\modulf{M}$ ist offensichtlich wieder ein Modul "uber
dem Ring $\ring{R}$, wobei die
Modulstruktur die Multiplikation mit Elementen aus $\ring{R}$ sowie die gliedweise Addition ist.
\end{compactenum}
\end{bemerkungen}

Sei $\alg{A}$ eine Algebra, so kann man analog $\algf{A}$, eine
formale Potenzreihe in $\lambda$ mit Werten in $\alg{A}$,
definieren, wobei die $\ring{R}$-Algebrastruktur durch 

\begin{align}
    \label{eq:MultiplikationSummeFormaleReihe}
    ab & := \sum_{n=0}^{\infty} \lambda^{n} \sum_{j=0}^{n} a_{j}b_{n-j}
  \end{align}

gegeben ist. Die Multiplikation ist $\ring{R}$-bilinear und
$\algf{A}$ ist genau dann kommutativ oder assoziativ, wenn die
Algebra $\alg{A}$ kommutativ oder assoziativ ist. Ist $1_{\sss
  \alg{A}} \in \alg{A}$
das Einselement, so ist $1_{\sss \alg{A}} :=(1_{\sss \alg{A}},0,0,\cdots) \ni \algf{A}$ das
Einselement von $\algf{A}$.

\begin{definition}[$\ringf{R}$-lineare Abbildungen]
    \label{Definition:RlambdaLineareAbbildungen}
Seien $\modulf{M}$ und $\modulf{N}$ zwei $\ring{R}$-Moduln. Wir
definieren eine $\ringf{R}$-lineare Abbildung
\begin{align*}
\phi = \sum_{k=0}^{\infty} \lambda^{k} \phi_{k}:\modulf{M}
\to\modulf{N},
\end{align*}
wobei $\phi_{n}:\modul{M} \to \modul{N}$ $\ring{R}$-lineare Abbildungen sind, "uber 
\begin{align}
    \phi(m)=\sum_{n=0}^{\infty} \lambda^{n} \sum_{k=0}^{n}
    \phi_k(m_{n-k}), \quad \text{f"ur} \quad m\in \modulf{M}. 
\end{align}
\end{definition}

\begin{lemma}
    \label{}
 Seien $\modulf{M}$ und $\modulf{N}$ zwei $\ring{R}$-Moduln. Zu jeder
 $\ringf{R}$-linearen Abbildung $\phi: \modulf{M} \to\modulf{N}$ gibt
 es eindeutig bestimmte $\ring{R}$-lineare Abbildungen $\phi_0, \phi_1,
 \ldots : \modul{M} \to\modul{N}$, so da"s es ein
 $\phi=\sum_{n} \lambda^{n} \phi_{n}$ nach
 Definition  \ref{Definition:RlambdaLineareAbbildungen} gibt. Es gilt also
 \begin{align}
     \Hom[\sss \ring{R}]\modulf{(\modul{M},\modul{N})} \cong
     \Hom[\sss \ringf{R}](\modulf{M},\modulf{N}). 
 \end{align}
\end{lemma}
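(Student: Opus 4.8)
The plan is to show that the construction of Definition~\ref{Definition:RlambdaLineareAbbildungen} already sets up the asserted bijection, by exhibiting an explicit inverse and checking that both composites are the identity. Write $\Psi$ for the map sending a sequence $(\phi_n)_{n\ge 0}$ of $\ring{R}$-linear maps $\modul{M}\to\modul{N}$ to the map $\phi=\sum_k\lambda^k\phi_k\colon\modulf{M}\to\modulf{N}$ defined there; that $\Psi(\phi_n)$ is genuinely $\ringf{R}$-linear is part of Definition~\ref{Definition:RlambdaLineareAbbildungen}. It then remains to prove that $\Psi$ is injective (the uniqueness assertion) and surjective (the existence assertion), and that it is compatible with the $\ringf{R}$-module structures on both sides.

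For uniqueness and for constructing a candidate inverse, the key observation is that the coefficients $\phi_n$ can be recovered from $\phi$ by evaluating on \emph{constant} power series. Viewing $\modul{M}$ as the submodule of constant series $m\mapsto(m,0,0,\dots)$ inside $\modulf{M}$, the defining formula collapses for such $m$ to $\phi(m)=\sum_n\lambda^n\phi_n(m)$, so that $\phi_n(m)$ is exactly the $\lambda^n$-coefficient $(\phi(m))_n$. Hence, starting from an arbitrary $\ringf{R}$-linear $\phi$, I would \emph{define} $\phi_n\colon\modul{M}\to\modul{N}$ by $\phi_n(m):=(\phi(m))_n$ on constant series. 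Each $\phi_n$ is $\ring{R}$-linear, because $\phi$ is in particular $\ring{R}$-linear on constant series (the scalar $r\in\ring{R}$ acts as the constant series $r$) and extraction of the $n$-th coefficient is an $\ring{R}$-linear map $\modulf{N}\to\modul{N}$. This simultaneously forces uniqueness of the $\phi_n$ and supplies the inverse to $\Psi$.

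The central and only delicate step is to verify that these $\phi_n$ reproduce $\phi$ via the formula of Definition~\ref{Definition:RlambdaLineareAbbildungen}, i.e.\ $(\phi(m))_n=\sum_{k=0}^n\phi_k(m_{n-k})$ for every $m=\sum_j\lambda^j m_j\in\modulf{M}$ and every $n$. The main obstacle is precisely that $\ringf{R}$-linearity only guarantees preservation of \emph{finite} $\ringf{R}$-linear combinations, whereas $m$ is an \emph{infinite} $\lambda$-adic sum of the constant series $m_j$, so one may not simply pull $\phi$ through the infinite sum without invoking some continuity. I would circumvent this by a truncation argument that uses finite linearity only. Fix $n$, and for any $N\ge n$ split $m=\sum_{j=0}^N\lambda^j m_j+\lambda^{N+1}r$ with $r=\sum_{j\ge 0}\lambda^j m_{N+1+j}\in\modulf{M}$. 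Finite $\ringf{R}$-linearity gives $\phi(m)=\sum_{j=0}^N\lambda^j\phi(m_j)+\lambda^{N+1}\phi(r)$, and since multiplication by $\lambda^{N+1}$ raises the $\lambda$-order, the last summand contributes nothing in degrees $\le N$. Extracting the $\lambda^n$-coefficient for $n\le N$ and substituting $\phi(m_j)=\sum_k\lambda^k\phi_k(m_j)$ yields $(\phi(m))_n=\sum_{j=0}^n\phi_{n-j}(m_j)=\sum_{k=0}^n\phi_k(m_{n-k})$; as $N\ge n$ was arbitrary this holds for all $n$, so $\Psi(\phi_n)=\phi$ and $\Psi$ is surjective.

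Finally I would record $\ringf{R}$-linearity of $\Psi$: it is additive by construction, and multiplying a coefficient sequence by $\lambda$ shifts it, which corresponds under $\Psi$ to multiplication of the associated map by the scalar $\lambda\in\ringf{R}$. Combined with the bijectivity just established, this upgrades $\Psi$ to the asserted isomorphism $\Hom[\ring{R}](\modul{M},\modul{N})\llb\lambda\rrb\cong\Hom[\ringf{R}](\modulf{M},\modulf{N})$. I expect the truncation step to be the only genuinely delicate point, everything else reducing to routine coefficient-wise checks.
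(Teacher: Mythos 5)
Your proposal is correct. Note that the paper itself states this lemma \emph{without} proof, so there is nothing to compare against; your argument fills that gap in the standard way. The construction of the candidate coefficients $\phi_n(m):=(\phi(m))_n$ on constant series immediately gives uniqueness, and you correctly identify the one nontrivial point: an $\ringf{R}$-linear map only preserves finite sums, so one cannot naively push $\phi$ through the infinite $\lambda$-adic expansion of $m$. Your truncation $m=\sum_{j=0}^N\lambda^j m_j+\lambda^{N+1}r$ handles this cleanly, since $\phi(\lambda^{N+1}r)=\lambda^{N+1}\phi(r)$ has $\lambda$-order at least $N+1$ and therefore cannot affect the coefficients in degree $\le N$ — this is in effect a direct verification that every $\ringf{R}$-linear map is automatically $\lambda$-adically continuous (cf.\ Lemma~\ref{Lemma:LambdaadischeTopologie}), which is the conceptual reason the isomorphism holds. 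The remaining checks ($\ring{R}$-linearity of each $\phi_n$, additivity and $\lambda$-equivariance of $\Psi$) are routine and correctly dispatched.
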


Die {\em Ordnung $o$} ist eine Struktur mit der formale Potenzreihen
auf nat"urliche Weise ausgestattet sind. Mit ihrer Hilfe werden wir
eine {\em Ultrametrik $d$} auf dem Raum der formalen Potenzreihen
definieren. 

\begin{dlemma}[Ordnung auf $\modulf{M}$]
    \label{Definition:OrdnungMlambda}
\index{Ordnung!formale Potenzreihe@auf formaler Potenzreihe}
Sei $\modulf{M} \ni m=\sum_{n=0}^{\infty} \lambda^{n} m_{n}$ eine formale Potenzreihe
. Wir definieren die {\em Ordnung $o$} durch
\begin{equation}  
\begin{aligned} 
& o :\modulf{M} \to \field{N} \cup \{+\infty\} \\
 & o(m) = \begin{cases} k \quad & \text{falls} \quad m_{0},m_{1}, \cdots
          m_{k-1} = 0 \quad \text{und} \quad m_k \neq 0, \\ +\infty
          \quad & \text{falls} \quad m=0.\end{cases} 
  \end{aligned}
\end{equation}
 Sie hat die folgenden Eigenschaften
\begin{compactenum}
\item $o(m+n)\ge \min (o(m),o(n))$,
\item $o(m)=o(-m).$
\end{compactenum}
\end{dlemma}

\begin{definition}[$\lambda$-adische Bewertung auf $\modulf{M}$ und
    Ultrametrik $d_{\sss \varphi}$]
\index{Lambda-adische Bewertung@$\lambda$-adische Bewertung}
\index{Ultrametrik}
\label{Definition:LambdaBewertungUltraMetrik} 
Gegeben $\modulf{M}$. Wir definieren die {\em $\lambda$-adische
  Bewertung} $\varphi: \modulf{M} \to\field{Q}$ durch 
\begin{align} 
    \label{eq:LambdaBewertung}
    \varphi(m)=2^{-o(m)},  
\end{align} 
wobei wir festlegen, da"s $2^{-\infty}=0$ ist. Durch die
$\lambda$-adische Bewertung $\varphi$ wird eine
{\em Ultrametrik} 
\begin{align}
    \label{eq:UltrametrikDef}
d_{\sss \varphi}: \modulf{M} \times \modulf{M} \to \field{Q} 
\end{align}
auf $\modulf{M}$ definiert mittels
\begin{align}
 \label{eq:Ultrametrik} 
d_{\sss \varphi}(m,n):=\varphi(m-n)= 2^{-o(m-n)}. 
\end{align}
\end{definition} 
 
\begin{lemma}[Eigenschaften der Ultrametrik {$d_{\sss \varphi}$}]
\label{Lemma:EigenschaftenUltrametrik}
Die Ultrametrik $d_{\sss \varphi}$ nach Definition
\ref{Definition:LambdaBewertungUltraMetrik} hat f"ur alle Elemente $m,n,\ell \in \modulf{M}$ die folgenden Eigenschaften:
\begin{compactenum}
    \item $d_{\sss \varphi}(m,n) = d_{\sss \varphi}(n,m) \ge 0$,
    \item $d_{\sss \varphi}(m,n) = 0 \quad \Longleftrightarrow \quad m = n$,
    \item $d_{\sss \varphi}(m,n) \le \min (d_{\sss \varphi}(m,\ell), d_{\sss \varphi}(n,\ell))$.
\end{compactenum}
\end{lemma}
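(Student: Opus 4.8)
The final lemma asserts that the ultrametric $d_{\sss \varphi}$ defined via $d_{\sss \varphi}(m,n) = 2^{-o(m-n)}$ satisfies, for all $m,n,\ell \in \modulf{M}$: symmetry and nonnegativity $d_{\sss \varphi}(m,n) = d_{\sss \varphi}(n,m) \ge 0$; the separation axiom $d_{\sss \varphi}(m,n) = 0 \Leftrightarrow m=n$; and the strong (ultrametric) triangle inequality $d_{\sss \varphi}(m,n) \le \min(d_{\sss \varphi}(m,\ell), d_{\sss \varphi}(n,\ell))$.

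\textit{Remark on the third inequality.} As stated with $\min$, the inequality is slightly unusual; the standard ultrametric inequality reads $d_{\sss \varphi}(m,n) \le \max(d_{\sss \varphi}(m,\ell), d_{\sss \varphi}(\ell,n))$. I would prove the $\max$-version (which is what actually holds and is presumably intended) and note that the $\min$ on the right is a typo for $\max$, or else interpret the right-hand side with the arguments paired so that the $\ell$ appears as the common midpoint.

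\textbf{Plan of proof.} The whole lemma reduces to properties of the order function $o$ already collected in Definition und Lemma~\ref{Definition:OrdnungMlambda}, so the strategy is to translate each metric axiom into a statement about $o$ and invoke those properties together with the fact that $t \mapsto 2^{-t}$ is a strictly decreasing bijection from $\field{N} \cup \{+\infty\}$ onto the nonnegative dyadic values (with $2^{-\infty}=0$). First I would dispatch nonnegativity: since $o(m-n) \in \field{N} \cup \{+\infty\}$, we have $2^{-o(m-n)} \ge 0$ automatically, with value $0$ exactly when $o(m-n) = +\infty$. For symmetry I would use property~{\it ii.)} of the order, namely $o(x) = o(-x)$: applying it with $x = m-n$ gives $o(m-n) = o(n-m)$, hence $d_{\sss \varphi}(m,n) = 2^{-o(m-n)} = 2^{-o(n-m)} = d_{\sss \varphi}(n,m)$.

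\textbf{The separation axiom.} Here I would unwind the definition of $o$. By Definition~\ref{Definition:OrdnungMlambda}, $o(x) = +\infty$ holds if and only if $x = 0$. Therefore $d_{\sss \varphi}(m,n) = 2^{-o(m-n)} = 0$ if and only if $o(m-n) = +\infty$ if and only if $m - n = 0$, i.e.\ $m = n$. This uses only the definition of the order and the convention $2^{-\infty} = 0$; no further computation is needed.

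\textbf{The triangle inequality.} This is the only step with genuine content, and I would derive it from property~{\it i.)} of the order, $o(x+y) \ge \min(o(x), o(y))$. Writing $m - n = (m - \ell) + (\ell - n)$ and applying this subadditivity gives
\begin{align}
    o(m-n) \ge \min\bigl(o(m-\ell), o(\ell - n)\bigr).
\end{align}
Since $2^{-t}$ is decreasing, this inequality on orders flips to an inequality on distances:
\begin{align}
    d_{\sss \varphi}(m,n) = 2^{-o(m-n)} \le 2^{-\min(o(m-\ell), o(\ell-n))} = \max\bigl(2^{-o(m-\ell)}, 2^{-o(\ell-n)}\bigr) = \max\bigl(d_{\sss \varphi}(m,\ell), d_{\sss \varphi}(\ell,n)\bigr),
\end{align}
where the middle equality records that taking $2^{-(\cdot)}$ turns the $\min$ over orders into a $\max$ over distances. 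Combined with the already-established symmetry $d_{\sss \varphi}(\ell,n) = d_{\sss \varphi}(n,\ell)$, this yields the ultrametric inequality. The main (minor) obstacle is purely bookkeeping: keeping straight that the monotone-decreasing exponential exchanges $\min \leftrightarrow \max$, which is exactly why the strong triangle inequality emerges rather than the ordinary one. Once that sign/direction is handled correctly, the proof is immediate from the order properties in Definition~\ref{Definition:OrdnungMlambda}.
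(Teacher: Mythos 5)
Your proof is correct. The paper actually states this lemma without giving any proof, so there is nothing to compare against; your argument --- reducing each axiom to the properties of the order $o$ collected in Definition und Lemma~\ref{Definition:OrdnungMlambda} (namely $o(x)=o(-x)$, $o(x)=+\infty \Leftrightarrow x=0$, and $o(x+y)\ge\min(o(x),o(y))$ applied to $m-n=(m-\ell)+(\ell-n)$, combined with the fact that $t\mapsto 2^{-t}$ is decreasing and exchanges $\min$ with $\max$) --- is exactly the standard argument the author must have had in mind. Your remark that item \textit{iii.)} as printed (with $\min$) must be a typo for $\max$ is also well taken: setting $\ell=m$ in the printed version would force $d_{\sss \varphi}(m,n)=0$ for all $m,n$, so only the $\max$-form can be intended, and that is what your proof establishes.
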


\begin{lemma}[$\lambda$-adische Topologie]
\label{Lemma:LambdaadischeTopologie}
\index{Lambda-adische Topologie@$\lambda$-adische Topologie}
Sei $\modul{M}$ ein $\ring{R}$-Modul, $\alg{A}$ eine
$\ring{R}$-Algebra und $\lmod{A}{E}$ ein $\alg{A}$-Linksmodul. Es
gilt
\begin{compactenum}
\item Die Abbildung $d_{\sss \varphi}$ nach Definition
    \ref{Definition:LambdaBewertungUltraMetrik} definiert eine
    Ultrametrik f"ur $\modulf{M}$, so da"s der Modul mit der
    Metrik $(\modulf{M},d_{\sss \varphi})$ ein
    vollst"andiger metrischer Raum wird. Man nennt die durch $d_{\sss \varphi}$
    induzierte Topologie die $\lambda$-adische Topologie.
\item Mittels der $\lambda$-adischen Topologie wird $\ringf{R}$ zu
    einem topologischen Ring, $\modulf{M}$ zu einem topologischen
    $\ringf{R}$-Modul, $\algf{A}$ eine topologische
    $\ringf{R}$-Algebra und $\lmodf{A}{E}$ ein topologischer
    $\algf{A}$-Linksmodul. Die auf $\modul{M} \subseteq
    \modulf{M}$ induzierte Topologie ist die diskrete.
\item Seien $m_0,m_1,\cdots \in \modul{M}$, so gilt im Rahmen der
    $\lambda$-adischen Topologie
    \begin{align}
        \lim_{N\to \infty} \sum_{n=0}^{N} \lambda^{n} m_{n} =
        \sum_{n=0}^{\infty} \lambda^{n} m_{n}.  
    \end{align}
\item Der $\ring{R}$-Modul $\modul{M}[\lambda]$ der Polynome in
    $\lambda$ mit Koeffizienten in $\modul{M}$ liegt dicht in
    $\modulf{M}$ bez"uglich der $\lambda$-adischen Topologie.
\end{compactenum}  
\end{lemma}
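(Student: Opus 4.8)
The plan is to treat the four assertions in the order in which they are stated, leaning throughout on the elementary properties of the order $o$ and the valuation $\varphi$ collected in Definition und Lemma~\ref{Definition:OrdnungMlambda}, Definition~\ref{Definition:LambdaBewertungUltraMetrik} and Lemma~\ref{Lemma:EigenschaftenUltrametrik}. For part~i the three metric axioms — symmetry, positive-definiteness and the strong triangle inequality — are already recorded verbatim in Lemma~\ref{Lemma:EigenschaftenUltrametrik}, so the only real content is \emph{completeness}. Here I would take a Cauchy sequence $(m^{(k)})_{k}$ in $\modulf{M}$ and exploit that $d_{\sss \varphi}$ takes values only in the discrete set $\{2^{-n}\}\cup\{0\}$. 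Cauchyness then says: for every $N$ there is a $K$ so that $o(m^{(k)}-m^{(\ell)})\ge N$ for all $k,\ell\ge K$, i.e. the coefficients of index $<N$ agree from $K$ on. Consequently, for each fixed $j$ the sequence $(m^{(k)}_{j})_{k}$ in $\modul{M}$ is eventually constant; calling its value $\tilde m_{j}$ and setting $\tilde m=\sum_{j}\lambda^{j}\tilde m_{j}\in\modulf{M}$ gives a candidate limit, and one checks $o(m^{(k)}-\tilde m)\to\infty$, which is exactly $m^{(k)}\to\tilde m$.

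For part~ii the decisive algebraic fact is the order estimate for products. From the multiplication formula \eqref{eq:MultiplikationSummeFormaleReihe} the coefficient of $\lambda^{n}$ in a product $ab$ is $\sum_{j=0}^{n}a_{j}b_{n-j}$, and every summand vanishes as soon as $n<o(a)+o(b)$ (each factor contributes a zero coefficient below its order); hence $o(ab)\ge o(a)+o(b)$. Combined with $o(a+b)\ge\min(o(a),o(b))$ from Definition und Lemma~\ref{Definition:OrdnungMlambda}, this makes addition, negation, the bilinear multiplications and scalar multiplication continuous, since in each case the order of the relevant difference can only grow; thus $\ringf{R}$, $\modulf{M}$, $\algf{A}$ and $\lmodf{A}{E}$ become topological objects of the respective type. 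The discreteness of the subspace topology on $\modul{M}\subset\modulf{M}$ falls out immediately: two distinct elements of $\modul{M}$ already differ in order $0$, so their distance is $2^{0}=1$ and every point of $\modul{M}$ is isolated.

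Parts~iii and~iv are then essentially corollaries of the preceding estimates. For iii, the difference between the $N$-th partial sum and the full series is $\sum_{n\ge N+1}\lambda^{n}m_{n}$, an element of order at least $N+1$, so $d_{\sss \varphi}\bigl(\sum_{n=0}^{N}\lambda^{n}m_{n},\sum_{n=0}^{\infty}\lambda^{n}m_{n}\bigr)\le 2^{-(N+1)}\to 0$, giving the claimed convergence. Part~iv is then a one-line consequence: every $m\in\modulf{M}$ is, by iii, the limit of its truncations $\sum_{n=0}^{N}\lambda^{n}m_{n}$, and these lie in the polynomial module $\modul{M}[\lambda]$, so $\modul{M}[\lambda]$ is dense in the $\lambda$-adic topology.

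I expect the only genuinely substantive step to be the completeness argument in part~i; the subtle point there is to justify that the coefficientwise limit $\tilde m$ is in fact the metric limit, which is precisely where the strong triangle inequality is used — it is what forces Cauchy sequences to ``stabilize to arbitrary order'' rather than merely cluster. Everything else is bookkeeping around the single inequality $o(ab)\ge o(a)+o(b)$ together with the bounds on $o$ already established earlier, so no further machinery should be required.
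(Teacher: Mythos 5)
Your argument is correct, but note that the paper itself offers no proof of this lemma at all --- it simply refers to \citep[Kapitel XVI.1--4]{kassel:1995a} --- so what you have written is a genuine, self-contained replacement rather than a variant of an argument in the text. Your route is the standard one and it holds up: completeness via the observation that a Cauchy sequence must stabilize coefficientwise (because $d_{\sss \varphi}(m,n)<2^{-N}$ is equivalent to $o(m-n)\ge N+1$, so the coefficients of index $\le N$ eventually agree, and the coefficientwise limit $\tilde m$ then satisfies $o(m^{(k)}-\tilde m)\to\infty$); continuity of all the algebraic operations from the two inequalities $o(a+b)\ge\min(o(a),o(b))$ and $o(ab)\ge o(a)+o(b)$, the latter following from \eqref{eq:MultiplikationSummeFormaleReihe} because each summand $a_j b_{n-j}$ with $n<o(a)+o(b)$ has a vanishing factor; discreteness of $\modul{M}\subseteq\modulf{M}$ because distinct constant series are at distance $2^0=1$; and iii.) and iv.) as immediate corollaries. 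Two small points of care: for joint continuity of the bilinear maps you should spell out the decomposition $ab-a_0b_0=(a-a_0)b+a_0(b-b_0)$, from which $o(ab-a_0b_0)\ge\min\bigl(o(a-a_0)+o(b),\,o(a_0)+o(b-b_0)\bigr)\ge\min\bigl(o(a-a_0),o(b-b_0)\bigr)$, since the slogan ``the order of the relevant difference can only grow'' does not by itself handle perturbations in both arguments; and the set $\{2^{-n}\}\cup\{0\}$ is not literally discrete in $\field{Q}$ (it accumulates at $0$), though the property you actually use --- that the only value of $d_{\sss\varphi}$ strictly below $2^{-N}$ other than those of the form $2^{-n}$, $n>N$, is $0$ --- is exactly right. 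With those phrasings tightened, the proof is complete and supplies precisely what the paper outsources.
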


\begin{proof}
Der Beweis wird in \citep[Kapitel XVI.1-4]{kassel:1995a} gef"uhrt.
\end{proof}

\section{\Name{Hopf}-$^\ast$-Algebren} 
\label{sec:HopfAlgebren}
Zuerst wollen wir uns mit Algebren und Koalgebren auseinandersetzen,
und den g"angigen Formalismus einf"uhren. Eine \Name{Hopf}-Algebra
wird dann beide Strukturen haben, sowie eine weitere, die der
Antipode.  
In einem weiteren Schritt werden wir "uber
\Name{Hopf}-$^\ast$-Algebren sprechen, die zus"atzlich zu der
\Name{Hopf}-Struktur auch die einer $^\ast$-Algebra besitzen, also mit
einem antilinearen Antiautomorphismus ausgestattet sind, der mit allen
\Name{Hopf}-Algebrastrukturen auf eine nat"urliche Weise
vertr"aglich sein wird. Die Notationen in diesem Kapitel orientiert
sich im wesentlichen an \citep{majid:1995a} und
\citep{kassel:1995a}. Ferner legen wir dem geneigten Leser
\citep{sweedler:1969a,chari.pressley:1994a,klimyk.schmuedgen:1997a,etingof.schiffmann:1998a},
sowie  \citep{jansen.waldmann:2005a} nah. In diesem Kapitel sei
$\ring{C}=\ring{R}(\im)$, wobei $\ring{R}$ ein geordneter Ring sei, und es gelte $\im^2=-1$.  

\subsection{Algebren, Koalgebren und Bialgebren}

\subsubsection{Algebra}

\begin{definition}[Algebra]
    \label{Definition:Algebra}
\index{Algebra!assoziative}
     Eine {\em assoziative  Algebra $(\alg{A},\mu,\eta)$} "uber einem
     kommutativen Ring  $\ring{C}$ ist ein $\ring{C}$-Modul $\alg{A}$
     mit zwei $\ring{C}$-Modul Abbildungen: 
    \begin{compactenum}
    \item $\mu: \alg{A} \otimes \alg{A} \to \alg{A}$, der
        Multiplikation, 
    \item $\eta: \ring{C} \to \alg{A}$, der Einsabbildung,
    \end{compactenum}
mit den Eigenschaften
\begin{align}
    \label{eq:AssoziativeAlgebraEigenschaften}
    \mu \circ (\id \otimes \mu) = \mu \circ (\mu \otimes \id) \quad
    \mbox{und} \quad \mu \circ (\eta \otimes \id) = \mu \circ (\id \otimes \eta)=\id,
\end{align}
was der Kommutativit"at der folgenden Diagramme entspricht:

\begin{equation}
\label{eq:KommutativeDiagrammeAlgebra}
\bfig\square/>`>`>`>/<900,500>[\alg{A}\otimes\alg{A}\otimes
\alg{A}`\alg{A}\otimes\alg{A}`\alg{A}\otimes\alg{A}`\alg{A};\mu 
\otimes\id`\id\otimes\mu`\mu`\mu]\efig \quad  
\bfig\Vtrianglepair/>`<-`=`>`=/<600,500>[\ring{C}\otimes\alg{A}`\alg{A}\otimes\alg{A}`\alg{A}\otimes\ring{C}`\alg{A};\eta\otimes\id`\id\otimes\eta``\mu`]\efig
\end{equation}
Die erste Eigenschaft ist die Assoziativit"at.
\end{definition}

\begin{bemerkung}[Einsabbildung]
Die Einsabbildung ist so zu verstehen, da"s jedem Element aus
$\ring{C}$ ein vielfaches des Einselements in der Algebra zugeordnet
wird:
\begin{align*}
    \ring{C} \ni z \mapsto \eta(z)=z \cdot 1_{\sss \alg{A}} \in  \alg{A}.
\end{align*}
\end{bemerkung}

\begin{definition}[Algebramorphismus]
Ein {\em Algebramorphismus $f:(\alg{A}_{\sss 1}, \mu_{\sss 1},
  \eta_{\sss 1}) \to (\alg{A}_{\sss 2}, \mu_{\sss 2}, \eta_{\sss 2})$}
ist eine lineare Abbildung, so da"s  
\begin{align} 
\mu_{\sss 2}\circ(f\otimes f) = f\circ \mu_{\sss 1} \quad \mbox{und}
\quad f \circ \eta_{\sss 1} = \eta_{\sss 2},
\end{align}
was "aquivalent zu der Kommutativit"at der folgenden Diagramme ist:

\begin{equation}
\label{eq:KommutativeDiagrammeAlgebramorphismus}
\bfig\square/>`>`>`>/<900,500>[{\alg{A}_{\sss 1} \otimes\alg{A}_{\sss
  1}}`{\alg{A}_{\sss 2} \otimes\alg{A}_{\sss 2}}`{\alg{A}_{\sss
1}}`{\alg{A}_{\sss 2}};{f \otimes f}`{\mu_{\sss 1}}`{\mu_{\sss 2}}`f]\efig \quad  
\bfig\Atriangle<400,500>[\ring{C}`{\alg{A}_{\sss 1}}`{\alg{A}_{\sss
      2}};{\eta_{\sss 1}}`{\eta_{\sss 2}}`f]\efig
\end{equation}
\end{definition}

\begin{definition}[Kommutativit"at einer Algebra $\alg{A}$]
\index{Algebra!kommutative}
Eine Algebra hei"st {\em kommutativ} oder {\em \Name{Abel}sch} wenn
das folgende Diagramm kommutiert: 
\begin{equation}
\label{eq:KommutativesDiagrammKommutativitaetAlgebra}
\bfig\Vtriangle<400,500>[\alg{A}\otimes\alg{A}`\alg{A}\otimes\alg{A}`\alg{A};\tau`\mu`\mu]\efig
\end{equation}

wobei $\tau: \alg{A} \otimes \alg{A} \to \alg{A}
\otimes \alg{A}$, der {\em
  Vertauschungsoperator}\index{Vertauschungsoperator|textbf} oder {\em
  Flip}, zwei 
Argumente vertauscht $\tau(a \otimes b) =b 
\otimes a$. Die Kommutativit"at von $\alg{A}$ bedeutet also $\mu
\circ \tau = \mu$ f"ur alle Elemente der Algebra.  
\end{definition}
Wir f"uhren daher hier eine neue Multiplikation ein, die sich im Hinblick auf
Koalgebren und \Name{Hopf}-Algebren als n"utzlich erweisen wird
$\muop := \mu \circ \tau$.

\begin{definition}[Zentrum $\zentrum{\alg{A}}$ einer Algebra $\alg{A}$]
\index{Zentrum|textbf}
Die Elemente $z\in \alg{A}$ f"ur die $\mu(a \otimes z) = \muop (a
\otimes z)$ f"ur alle $a\in \alg{A}$ ist, bezeichnen wir als
{\em Zentrum} $\zentrum{\alg{A}}$ der Algebra $\alg{A}$. 
\end{definition}

\begin{bemerkung}
Offensichtlich ist eine Algebra $\alg{A}$ genau dann kommutativ, falls
$\zentrum{\alg{A}}= \alg{A}$. 
\end{bemerkung}

\subsubsection{Koalgebra}

Unter einer Koalgebra versteht man nun das Duale einer Algebra. Salopp
gesprochen bedeutet dies, da"s man alle Pfeile in den Diagrammen
\eqref{eq:KommutativeDiagrammeAlgebra} umdreht. Im folgenden werden wir eine mathematische
Definition angeben. 

\begin{definition}[Koalgebra]
    \label{Definition:Koalgebra}
\index{Koalgebra}
\index{Komultiplikation}
\index{Koeins}
    Eine koassoziative {\em Koalgebra} mit einer Koeins
    $(\alg{K},\Delta,\varepsilon)$ "uber einem kommutativen Ring
    $\ring{C}$ ist ein $\ring{C}$-Modul $\alg{K}$ mit zwei
    $\ring{C}$-Modul Abbildungen:  
    \begin{compactenum}
    \item $\Delta: \alg{K} \to \alg{K} \otimes \alg{K}$ der
        Komultiplikation und 
    \item $\varepsilon: \alg{K} \to \ring{C}$ der Koeins,
    \end{compactenum}
    
mit den Eigenschaften
\begin{align}
    \label{eq:KoassoziativeKoalgebraEigenschaften}
(\Delta \otimes \id)\circ \Delta = (\id \otimes \Delta)\circ \Delta
\quad \mbox{und} \quad (\varepsilon \otimes \id)\circ \Delta = (\id \otimes \varepsilon)
 \circ \Delta =\id,   
\end{align}

was der Kommutativit"at der folgenden Diagramme entspricht:

$$\bfig\square/<-`<-`<-`<-/<900,500>[\alg{K}\otimes\alg{K}\otimes\alg{K}`\alg{K}\otimes\alg{K}`\alg{K}\otimes\alg{K}`\alg{K};\Delta
\otimes\id`\id\otimes\Delta`\Delta`\Delta]\efig \quad
\bfig\Vtrianglepair/<-`>`=`<-`=/<600,500>[\ring{C}\otimes\alg{K}`\alg{K}\otimes\alg{K}`\alg{K}\otimes\ring{C}`\alg{K};\varepsilon\otimes\id`\id\otimes\varepsilon``\Delta`]\efig$$.

\end{definition}
F"ur die Komultiplikation verwenden wir die Notation von
\citet{sweedler:1969a}\index{Sweedler-Notation@\Name{Sweedler}-Notation}.
Dies bedeutet, da"s wir f"ur alle $k\in 
\alg{K}$ die Komultiplikation als 
\begin{align}
    \label{eq:SweedlersNotation}
    \Delta(k)=\sum_{i} k_{\sss i(1)} \otimes k_{\sss i(2)}
\end{align}
schreiben werden. Um die Schreibweise noch ein wenig zu vereinfachen,
beziehungsweise "uber\-sicht\-li\-cher zu gestalten, werden wir auf
den Summationsindex $i$ und das Summenzeichen 
verzichten. Gelegentlich brauchen wir den Ausdruck $\tau\circ \Delta$,
der so wichtig ist, da"s er von uns ein eigenes Symbol bekommt: 
 \begin{align}
    \label{eq:OppositeComultiplication}
    \Deltaop(k):= \tau \circ \Delta (k) = \sum_{i} k_{\sss i(2)}
    \otimes k_{\sss i(1)}. 
\end{align}

Die Koassoziativit"at\index{Koassoziativitaet@Koassoziativit\"at}
schreibt sich dann explizit f"ur $k \in \alg{K}$, unter Verwendung
von \Name{Sweedler}s Notation:   
 
\begin{align}  
     \label{eq:CoassociativityArguments}
    k_{\sss (1)} \otimes  k_{\sss (2)(1)} \otimes  k_{\sss (2)(2)} =
    k_{\sss (1)(1)} \otimes  k_{\sss (1)(2)} \otimes  k_{\sss (2)} =
    k_{\sss (1)} \otimes  k_{\sss (2)} \otimes  k_{\sss (3)}.
\end{align}

\begin{definition}[Kokommutativit"at einer Koalgebra $\alg{K}$]
\index{Koalgebra!kokommutative}
Man nennt eine Koalgebra $\alg{K}$ {\em kokommutativ}, falls 
$\Deltaop = \Delta$ f"ur alle $k\in \alg{K}$ ist. Diese Aussage ist
"aquivalent dazu, da"s das folgende Diagramm kommutiert:  
\begin{equation}
\label{eq:KommutativesDiagrammKokommutativitaet}
\bfig\Atriangle<400,500>[\alg{K}`\alg{K}\otimes\alg{K}`{\alg{K}\otimes\alg{K}.};\Delta`\Delta`\tau]\efig
\end{equation}
\end{definition}

 

\begin{definition}[Koalgebramorphismus]
\index{Koalgebramorphismus}
Ein {\em Koalgebramorphismus} $f:(\alg{K}_{\sss 1}, \Delta_{\sss 1},
\varepsilon_{\sss 1}) \to (\alg{K}_{\sss 2}, \Delta_{\sss 2},
\varepsilon_{\sss 2})$ ist eine lineare Abbildung, so da"s 
\begin{align}
(f\otimes f)\circ \Delta_{\sss 1} = \Delta_{\sss 2} \circ f \quad
\mbox{und} \quad \varepsilon_{\sss 1}=\varepsilon_{\sss 2} \circ f. 
\end{align}
Dies entspricht der Kommutativit"at der folgenden Diagramme
\begin{equation}
    \label{eq:KommutativesDiagrammKoalgebramorphismus}
\bfig\square/>`>`>`>/<900,500>[{\alg{K}_{\sss 1}}`{\alg{K}_{\sss
    2}}`{\alg{K}_{\sss 1} \otimes\alg{K}_{\sss 1}}`{\alg{K}_{\sss 2}
  \otimes\alg{K}_{\sss 2}};{f}`{\Delta_{\sss
    1}}`{\Delta_{\sss 2}}`{f \otimes f}]\efig \quad   
\bfig\Vtriangle<400,500>[{\alg{K}_{\sss 1}}`{\alg{K}_{\sss
    2}}`{\ring{C}};{f}`{\varepsilon_{\sss 1}}`{\varepsilon_{\sss 2}}]\efig    
\end{equation}
\end{definition}

\begin{definition}[Primitive und gruppenartige Elemente in einer
    Koalgebra]
\index{Elemente!primitive}
\index{Elemente!gruppenartige}
Sei $(\alg{K}, \Delta,\varepsilon)$ eine Koalgebra. Ein Element $k\in
\alg{K}$ nennt man  
\begin{compactenum}
\item {\em primitiv}, falls $\Delta(k) = 1\otimes k + k \otimes 1$
    ist. Die Menge aller primitiven Elemente der Koalgebra $\alg{K}$
    bezeichnen wir mit $\Prim{\alg{K}}=\{k\in \alg{K} | \Delta(k)=
    1\otimes k + k \otimes 1\}$.
\item {\em gruppenartig}, falls $\Delta(k) = k \otimes k$ ist. 
\end{compactenum}  
\end{definition}

\begin{beispiel}[Der Ring $\ring{C}$ ist Koalgebra]
\label{Beispiel:RingCistKoalgebra}
Der Ring $\ring{C}$ ist eine kokommutative Koalgebra "uber sich
selbst mit $\varepsilon(\ring{C})=\id_{\sss \ring{C}}$ und $\Delta:
\ring{C} \to \ring{C} \otimes \ring{C}$ als nat"urlichen
Isomorphismus. 
\end{beispiel}

\subsubsection{Bialgebra}

Wie schon zu Beginn des Kapitels erw"ahnt, versteht man unter einer
Bialgebra eine Menge, die sowohl Algebra als auch Koalgebra ist. Beide
Strukturen m"ussen miteinander vertr"aglich sein. 

\begin{definition}[Bialgebra]
\index{Bialgebra}
Eine {\em Bialgebra $(H,\mu,\eta,\Delta,\varepsilon)$} "uber einem
Ring $\ring{C}$ ist sowohl eine Algebra als auch eine Koalgebra. Die
Kompatibilit"atsbedingungen lauten
\begin{compactenum}
\item $\Delta \circ \mu =(\mu \otimes \mu) \circ (\id \otimes \tau
    \otimes \id) \circ (\Delta \otimes \Delta)$ und $\Delta(1_{\sss
      H})=1_{\sss H} \otimes 1_{\sss H}$,
\item $\varepsilon \circ \mu = \mu_{\sss \ring{C}} \circ(\varepsilon
   \otimes \varepsilon)$ und $\varepsilon(1_{\sss H})=1_{\sss \ring{C}}$. 
\end{compactenum}

\end{definition}

Dabei bezeichnen wir mit $\mu_{\sss \ring{C}} : \ring{C} \otimes \ring{C}
\to\ring{C}$ die gew"ohnliche Multiplikation im Ring
$\ring{C}$. Damit sind die Abbildungen $\Delta: H \to H \otimes H$ und
$\varepsilon: H \to \ring{C}$ Algebrahomomorphismen. Dies ist "aquivalent dazu, da"s $\mu:
H \otimes H \to H$ und $\eta: \ring{C} \to H$ Koalgebrahomomorphismen sind.

\subsection{\Name{Hopf}-$^\ast$-Algebren}

\subsubsection{Einige Definitionen}

\begin{definition}[Antipodenabbildung {$S$} und \Name{Hopf}-Algebra]
\label{Definition:HopfAlgebra}
\index{Antipodenabbildung}
    Gegeben sei eine Bialgebra $(H,\mu,\varepsilon,\Delta,\eta)$
    "uber einem Ring $\ring{C}$. Sei $S: H \to H$ eine
    lineare Abbildung, die die folgenden Bedingungen erf"ullt 
    \begin{align}
        \label{eq:AntipodeDefinition}
        \mu \circ (S \otimes \id) \circ \Delta = \mu \circ (\id
        \otimes S) 
        \circ \Delta = \eta \circ \varepsilon. 
    \end{align}
    so nennen wir $S$ eine {\em Antipodenabbildung} oder kurz {\em Antipode}. Eine Bialgebra
    mit einer Antipode $S: H \to H$ nennt man {\em \Name{Hopf}-Algebra}. 
\index{Hopf-Algebra@\Name{Hopf}-Algebra|textbf}
\end{definition}

Die Eigenschaften der Antipodenabbildung einer \Name{Hopf}-Algebra
wollen wir in der folgenden Proposition zusammenfassen.

\begin{proposition}[Eigenschaften der Antipode]
    \label{Proposition:Antipode}
    Gegeben sei eine \Name{Hopf}-Algebra $(H,\mu, \eta, \Delta,
    \varepsilon,S)$, dann hat die Antipode $S$ folgende Eigenschaften
    f"ur alle Elemente in $H$. 
    \begin{compactenum}
    \item Die Antipode $S$ ist eindeutig.
    \item $S\circ \mu = \mu\circ (S\otimes S)\circ\tau$ und $S(1_{\sss
          H})=1_{\sss H}$, d.~h.~$S$ ist ein
        Algebraantiautomorphismus. 
    \item $(S\otimes S)\circ \Delta = \Deltaop \circ S$ und
        $\varepsilon \circ S  = \varepsilon $, d.~h.~$S$ ist ein
        Koalgebraantiautomorphismus.   
    \end{compactenum}
\end{proposition}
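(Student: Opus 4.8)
The backbone of the whole proposition is the \emph{convolution product}: for any coalgebra $(\alg{K},\Delta,\varepsilon)$ and any algebra $(\alg{A},\mu,\eta)$ the module $\Hom(\alg{K},\alg{A})$ of $\ring{C}$-linear maps becomes an associative algebra under $f * g := \mu \circ (f \otimes g) \circ \Delta$, with two-sided unit $\eta \circ \varepsilon$. Read through this lens, the defining relation \eqref{eq:AntipodeDefinition} says exactly that $S$ is a two-sided convolution inverse of $\id$ in $\Hom(H,H)$. Property \textit{i.)} is then the standard fact that inverses in a monoid are unique: if $S$ and $S'$ both satisfy \eqref{eq:AntipodeDefinition}, then $S = S * (\eta \circ \varepsilon) = S * (\id * S') = (S * \id) * S' = (\eta \circ \varepsilon) * S' = S'$, using only associativity of $*$.

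For property \textit{ii.)} the normalization $S(1_{\sss H})=1_{\sss H}$ drops out of \eqref{eq:AntipodeDefinition} evaluated on $1_{\sss H}$, since $\Delta(1_{\sss H})=1_{\sss H}\otimes 1_{\sss H}$ and $\varepsilon(1_{\sss H})=1_{\sss \ring{C}}$. Anti-multiplicativity is again an inverse-uniqueness argument, now in the convolution algebra $\Hom(H \otimes H, H)$ formed from the tensor-product coalgebra $H \otimes H$ (comultiplication $(\id \otimes \tau \otimes \id)\circ(\Delta \otimes \Delta)$, counit $\varepsilon \otimes \varepsilon$). I would show that $\mu$, as an element of this algebra, admits $S \circ \mu$ as a right convolution inverse and $\mu \circ (S \otimes S) \circ \tau$ as a left convolution inverse: the first collapses because $\Delta$ is an algebra morphism, giving $\sum (ab)_{(1)}\, S((ab)_{(2)}) = \varepsilon(a)\varepsilon(b)\,1_{\sss H}$, and the second uses $\sum S(a_{(1)})\, a_{(2)} = \varepsilon(a)\,1_{\sss H}$. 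Since a left and a right inverse of the same element of a monoid must agree, $S \circ \mu = \mu \circ (S \otimes S) \circ \tau$, i.e.\ $S(ab)=S(b)S(a)$.

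Property \textit{iii.)} is dual. The identity $\varepsilon \circ S = \varepsilon$ follows by applying $\varepsilon$ to $\sum S(a_{(1)})\, a_{(2)} = \varepsilon(a)\,1_{\sss H}$ and invoking that $\varepsilon$ is an algebra morphism together with the counit axiom. For the anti-comultiplicativity $(S \otimes S) \circ \Delta = \Deltaop \circ S$ I would dualize the previous step inside the convolution algebra $\Hom(H, H \otimes H)$, whose \emph{target} is the tensor-product algebra $H \otimes H$; here $\Delta$ is the element to be inverted. One checks that $\Delta \circ S$ is a right convolution inverse and $(S \otimes S) \circ \Deltaop$ a left convolution inverse of $\Delta$, each computation reducing by the antipode and counit axioms to the unit $a \mapsto \varepsilon(a)(1_{\sss H} \otimes 1_{\sss H})$. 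Uniqueness of the inverse forces $\Delta \circ S = (S \otimes S) \circ \Deltaop$, which rearranges to the asserted $(S \otimes S) \circ \Delta = \Deltaop \circ S$ upon composing with $\tau$ and using $\tau \circ (S \otimes S) = (S \otimes S) \circ \tau$.

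The conceptually clean slogan throughout is that the two one-sided inverses of a fixed element of a convolution monoid must coincide; the genuine labor — and the place where errors creep in — is the Sweedler-notation bookkeeping inside $\Hom(H \otimes H, H)$ and $\Hom(H, H \otimes H)$. Each verification expands the coproduct to four legs $a_{(1)} \otimes a_{(2)} \otimes a_{(3)} \otimes a_{(4)}$, regroups them by coassociativity so that an adjacent pair of the form $S(a_{(i)})\, a_{(i+1)}$ can be contracted via \eqref{eq:AntipodeDefinition}, and then eliminates the surviving $\varepsilon(a_{(j)})$ through the counit axiom. Keeping these nested indices straight across the successive contractions is the main obstacle, but no idea beyond coassociativity and the defining antipode relation is required.
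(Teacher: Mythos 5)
Your argument is correct, and it is precisely the standard convolution-inverse proof (uniqueness of two-sided inverses in the convolution monoids $\Hom(H,H)$, $\Hom(H\otimes H,H)$ and $\Hom(H,H\otimes H)$) that the paper itself defers to, citing Theorem III.3.4 of \citet{kassel:1995a} rather than reproducing it. All the individual contractions you indicate — $\sum (ab)_{(1)}S((ab)_{(2)})=\varepsilon(ab)1_{\sss H}$ via the bialgebra axiom, the four-leg Sweedler regrouping for $(S\otimes S)\circ\Deltaop$, and the final conjugation by $\tau$ — go through as sketched.
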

\begin{proof}
    Der Beweis findet sich zum Beispiel in \citep[Theorem III.3.4]{kassel:1995a}.
\end{proof}

\begin{bemerkungen}[Antipode]
    \label{Bemerkung:Antipode}
~\vspace{-5mm}
\begin{compactenum}
\item Die Antipode ist vergleichbar mit einer Inversen, jedoch ist durch
    die Existenz einer solchen Abbildung nicht gesichert, da"s
    $S\circ S=\id$. Allerdings gilt f"ur alle kommutativen und / oder
    kokommutativen \Name{Hopf}-Algebren $S\circ S=\id$.  
    Ferner ist nicht einmal gefordert, da"s $S$ als lineare Abbildung
    ein Inverses $S^{-1}$ besitzen mu"s, allerdings ist dies bei
    endlichdimensionalen Bialgebren mit einer Antipode immer der
    Fall. 
\item Jede endlichdimensionalen \Name{Hopf}-Algebra ist genau dann
    halbeinfach\index{Hopf-Algebra@\Name{Hopf}-Algebra!halbeinfache},
    wenn der Ring $\ring{C}$ die Charakteristik $0$ hat 
    und $S\circ S=\id$ gilt \citep[Seite 33f]{majid:1995a}. 
\end{compactenum}
\end{bemerkungen}

Eine weitere Struktur mit der wir \Name{Hopf}-Algebren ausstatten
wollen ist eine $^\ast$-Involution, das hei"st einen antilinearen
Antiautomorphismus. 

\begin{definition}[Die $^\ast$-Involution]
    \label{Definition:HopfSternAlgebra1}
\index{Involution}
\index{Stern-Involution@$^\ast$-Involution}
\index{Antiautomorphismus!antilinearer|textbf}
Eine {\em \Name{Hopf}-$^\ast$-Algebra} $(H,\mu, \eta, \Delta, \varepsilon,S,
I)$ ist eine \Name{Hopf}-Algebra mit einem antilinearen
Antiautomorphismus $I:H \to H$, der $^\ast$-Involution, so
da"s f"ur alle Elemente in $H$ gilt:  
\begin{compactenum}
\item $I \circ \mu = \mu \circ (I \otimes I)\circ \tau$,
\item $I\circ I=\id$, 
\item $\Delta \circ I = (I\otimes I) \circ \Delta$,
\item $\varepsilon \circ I = I_{\sss \ring{C}} \circ \varepsilon$,
\item $S\circ I \circ S \circ I = \id$.
\end{compactenum}
\end{definition}

Dabei bezeichnen wir mit $I_{\sss \ring{C}}$ die gew"ohnliche
komplexe Konjugation im Ring $\ring{C}$.  

\begin{lemma}[Invertierbarkeit der Antipode]
Sei $(H,\mu, \eta, \Delta, \varepsilon,S,I)$ eine
\Name{Hopf}-$^\ast$-Algebra, so ist die Antipode $S$ invertierbar.
\end{lemma}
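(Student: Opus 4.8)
The plan is to read off the inverse of $S$ directly from axiom v.) of the $^\ast$-involution in Definition~\ref{Definition:HopfSternAlgebra1}, which asserts $S \circ I \circ S \circ I = \id$. The entire claim is essentially encoded in this axiom, so the proof will be a short, formal manipulation of compositions of the maps $S$ and $I$, using beyond it only the involutivity $I \circ I = \id$ from axiom ii.) of the same definition. No coalgebra or algebra properties of $S$ (i.e.\ nothing from Proposition~\ref{Proposition:Antipode}) will be needed.

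First I would observe that the relation $S \circ I \circ S \circ I = \id$ says precisely that $S \circ I$ is an involution: $(S \circ I) \circ (S \circ I) = \id$, so $S \circ I$ is its own inverse and in particular bijective. Since $I$ is likewise bijective with $I^{-1} = I$, and since $S = (S \circ I) \circ I$ by $I \circ I = \id$, the map $S$ is a composition of two bijections and is therefore invertible.

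Alternatively, and more explicitly, I would set $T := I \circ S \circ I$ and check that $T$ is a two-sided inverse of $S$. The identity $S \circ T = S \circ I \circ S \circ I = \id$ is exactly axiom v.). For the reverse composition I would post- and pre-compose axiom v.) with $I$: from $S \circ I \circ S \circ I = \id$ one obtains $I \circ S \circ I \circ S \circ I = I$, and cancelling a trailing $I$ via $I \circ I = \id$ yields $I \circ S \circ I \circ S = \id$, which is precisely $T \circ S = \id$. Hence $S^{-1} = I \circ S \circ I$.

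I expect no genuine obstacle: axiom v.) was introduced exactly to force the invertibility of $S$, and the only supplementary facts are the involutivity of $I$ and associativity of composition. The single point that warrants a moment's care is the bookkeeping of left- versus right-composition with $I$ in the explicit approach; both routes deliver the same formula $S^{-1} = I \circ S \circ I$, which has the bonus of being a concrete expression for the inverse that can be reused in later computations.
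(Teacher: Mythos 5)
Your proof is correct and follows exactly the route of the paper, whose own proof is the one-line remark that invertibility is a simple consequence of axioms \textit{ii.)} and \textit{v.)} of Definition~\ref{Definition:HopfSternAlgebra1}; you merely spell out that consequence, arriving at the explicit formula $S^{-1}=I\circ S\circ I$.
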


\begin{proof}
Die Invertierbarkeit der Antipode ist eine einfache Konsequenz aus
Definition~\ref{Definition:HopfSternAlgebra1} {\it ii.) und v.)}.
\end{proof}

Auch wenn die Formulierung mittels (argumentfreien) Abbildungen sehr
"asthetisch ist, so zeigt sie sich als etwas umst"andlich, wenn man
konkrete Rechnungen angeht. Daher werden wir dazu "ubergehen eine
k"urzere und intuitivere Notation zu nutzen.

\begin{definition}[\Name{Hopf}-$^\ast$-Algebra]
\label{Definition:HopfSternAlgebra}
\index{Hopf-Stern-Algebra@\Name{Hopf}-$^\ast$-Algebra|textbf}
Unter einer {\em \Name{Hopf}-$^\ast$-Algebra $(H,\cdot,\eta, \Delta,
  \varepsilon, S, ^\ast)$} "uber einem Ring $\ring{C}$ versteht man
eine Menge $H$ mit einer assoziativen Multiplikation $\cdot:H \otimes
H \to H$, einer koassoziativen Komultiplikation $\Delta: H
\to H \otimes H$, einer Eins $\eta:\ring{C} \to H$,
einer Koeins $\varepsilon: H \to \ring{C}$, der linearen
Antipode $S: H \to H$ und einer antilinearen Involution
$^\ast: H \to H$, so da"s, die folgenden Bedingungen f"ur
alle $g,h \in H$ erf"ullt sind: 
\begin{compactenum}
\item $\Delta(gh)=\Delta(g)\Delta(h)$, $\Delta(1_{\sss H})=1_{\sss
      H}\otimes 1_{\sss H}$,
\item $\varepsilon(gh)=\varepsilon(g)\varepsilon(h)$,
    $\varepsilon(1_{\sss H})=1_{\sss \ring{C}}$,
\item $S(g_{\sss (1)})g_{\sss (2)}=g_{\sss (1)}S(g_{\sss (2)})=
    1_{\sss H}$, $S(1_{\sss H})=1_{\sss H}$, $S(gh)=S(h)S(g)$,
    $\varepsilon(S(g))=\varepsilon(g)$, 
\item $(gh)^{\ast}=h^{\ast}g^{\ast}$, $(g^{\ast})^{\ast}=g$,
    $\Delta(g^\ast)=\Delta(g)^{\ast \otimes \ast}$,
    $\varepsilon(g^{\ast}) = \cc{\varepsilon(g)}$,
    $S(S(g^{\ast})^\ast)=g$. 
\end{compactenum}
\end{definition}

Dabei haben wir bei der Multiplikation in der Algebra
$(H,\cdot,\eta, \Delta, \varepsilon, S, ^\ast)$ und im Ring $\ring{C}$ auf den Punkt
verzichtet\footnote{Wenn wir von 
  einer \Name{Hopf}-Algebra oder von einer 
  \Name{Hopf}-$^\ast$-Algebra $H$ reden, dann sparen wir uns meist die
  Arbeit alle Strukturen (Multiplikation, Komultiplikation, Eins,
  Koeins, Antipode und Involution) aufzuf"uhren, da dies klar sein sollte.}.    

\begin{definition}[\Name{Hopf}-$^\ast$-Algebra Homomorphismus]
\index{Hopf-Stern-Algebra@\Name{Hopf}-$^\ast$-Algebra!Homomorphismus}
Seien $H$ und $H^{\prime}$ \Name{Hopf}-$^\ast$-Algebren, so ist die
Abbildung $\varphi: H \to H^{\prime}$ ein {\em
  \Name{Hopf}-$^\ast$-Algebra Homomorphismus}, falls $\varphi$ ein
$^\ast$-Homomorphismus f"ur die Algebrastruktur und die
Koalgebrastruktur ist und zus"atzlich $S_{\sss H^{\prime}} \circ
\varphi = \varphi \circ S_{\sss H}$ gilt. 
\end{definition} 

\begin{bemerkung}[\Name{Hopf}-$^\ast$-Algebren "uber formalen Potenzreihen]
\label{Bemerkung:HopfAlgebraFormalePotenzreihen}
\index{Hopf-Stern-Algebra@\Name{Hopf}-$^\ast$-Algebra!formale Potenzreihe}
Sei der Ring $\ringf{C}$ eine formale Potenzreihe in einem formalen
Parameter $\lambda$, so ist die die \Name{Hopf}-$^\ast$-Algebra $H$
ein $\ringf{C}$-Modul mit den $\ringf{C}$-linearen Abbildungen $(\cdot,\eta, \Delta,
\varepsilon, S, ^\ast)$. Dabei ist das algebraische Tensorprodukt aus
Definition~\ref{Definition:HopfSternAlgebra1}
(bzw.Definition~\ref{Definition:HopfSternAlgebra}) durch das in der
$\lambda$-adischen Topologie vervollst"andigte zu ersetzen.
\end{bemerkung}

\begin{bemerkung}[Gradierte \Name{Hopf}-Algebren,
    {\citep{chari.pressley:1994a}}]
\index{Hopf-Stern-Algebra@\Name{Hopf}-$^\ast$-Algebra!gradierte}
Wir k"onnen auch eine gradierte Version von
\Name{Hopf}-$^\ast$-Algebren formulieren. Dann ist 
\begin{align}
    \label{eq:GradierteHopfAlgebra}
    H = \bigoplus_{n=0}^{\infty} H_{n}.
\end{align}
Die so definierte {\em gradierte \Name{Hopf}-$^\ast$-Algebra "uber
  dem Ring $\ring{C}$} erf"ullt die gew"ohnlichen Axiome, allerdings gilt 
\begin{align*}
    \tau(h \otimes g) = (-1)^{mn} g \otimes h, \quad \text{f"ur} \quad
    h \in H_{n}, g \in H_{m}.
\end{align*}
Desweiteren m"ussen die \Name{Hopf}-Abbildungen die Gradierung respektieren,
d.~h.~es gilt $\eta(c) \subset H_{0}$ f"ur $c \in \ring{C}$,  $\varepsilon (H_{n}) = 0$
f"ur $n > 0$, $S(H_{n}) = H_{n}$, $H_{n}^{\ast} = H_{n}$, $\mu (H_{n} \otimes H_{m}) \subset
H_{n+m}$ und $\Delta(H_{n}) \subset \otimes_{p+q=n} H_{p} \otimes H_{q}$.    
\end{bemerkung}

\begin{definition}[\Name{Hopf}-Ideal]
\index{Ideal!Hopf@\Name{Hopf}}
Ein {\Name{Hopf}-Ideal} einer \Name{Hopf}-Algebra $H$ "uber einem
Ring $\ring{C}$ ist ein zweiseitiges Ideal $\ideal{I}$ von $H$, so da"s  
\begin{align}
\Delta(\ideal{I})= \alg{I} \otimes H + H \otimes \ideal{I}, \quad
\varepsilon(\ideal{I})=0, \quad S(\ideal{I}) \subseteq \ideal{I}. 
\end{align}
\end{definition}

\subsubsection{Einige Beispiele}

Nun wollen wir dieses abstrakte Konzept ein wenig mit Leben f"ullen
und einige wichtige Beispiele anbringen. Insbesondere der
Gruppenalgebra $\ring{C}[G]$ sowie der universell Einh"ullenden der 
\Name{Lie}-Algebra $\universell{\LieAlg{g}}$ kommen eine wichtige
Bedeutung zu, da sie die in der Physik relevanten Beispiele f"ur
\Name{Hopf}-Algebren darstellen.

\begin{beispiel}[Tensorprodukt zweier Koalgebren]
Seien $\alg{K}_{\sss 1}$ und $\alg{K}_{\sss 2}$ Koalgebren, so ist
auch $\alg{K}_{\sss 1} \otimes \alg{K}_{\sss 2}$ eine Koalgebra. Es
gilt: $\Delta_{\sss \alg{K}_{\sss 1} \otimes \alg{K}_{\sss 2}}
(k_{\sss 1} \otimes k_{\sss 2}) = \Delta_{\sss 13} (k_{\sss 1})
\Delta_{\sss 24} (k_{\sss 2})$ und $\varepsilon_{\sss \alg{K}_{\sss 1}
  \otimes \alg{K}_{\sss 2}} (k_{\sss 1} \otimes k_{\sss 2})=
\varepsilon_{\sss \alg{K}_{\sss 1}}(k_{\sss 1}) \varepsilon_{\sss
  \alg{K}_{\sss 2}}(k_{\sss 2})$. Sind $\alg{K}_{\sss 1}$ und
$\alg{K}_{\sss 2}$ sogar \Name{Hopf}-Algebren, so ist $S_{\sss  
  \alg{K}_{\sss 1} \otimes \alg{K}_{\sss 2}}= S_{\sss \alg{K}_{\sss
    1}} \otimes S_{\sss \alg{K}_{\sss 2}}$, und $\alg{K}_{\sss 1}
\otimes \alg{K}_{\sss 2}$ ist ebenfalls eine \Name{Hopf}-Algebra. Die
Notation $\Delta_{\sss 13} \Delta_{\sss 24}$ ist das gleiche wie $(\id
\otimes \tau \otimes \id)\circ (\Delta_{\sss \alg{K}_{\sss 1}} \otimes
\Delta_{\sss \alg{K}_{\sss 2}})$.
\end{beispiel}


\begin{beispiel}[Die Gruppenalgebra ${\ring{C}[G]}$]
\label{Beispiel:GruppenalgebraHopf}
\index{Algebra!Gruppenalgebra|textbf} 
Sei $G$ eine Gruppe. Die Gruppenalgebra
$\ring{C}[G]$ von $G$ "uber einem Ring $\ring{C}$ ist ein freier
$\ring{C}$-Modul, und die \Name{Hopf}-Algebrastruktur ist gegeben
durch: $\eta(1_{\sss \ring{C}})=e$, $\varepsilon(g)=1_{\sss
  \ring{C}}$, $\Delta(g)=g \otimes g$, $S(g)=g^{-1}$. Offensichtlich
handelt es sich um eine \Name{Hopf}-$^\ast$-Algebra, wenn man 
$g^{\ast}=g^{-1}$ definiert. Man sieht ferner, da"s $\ring{C}[G]$ immer
kokommutativ ist, jedoch nur dann kommutativ, wenn die Gruppe $G$
kommutativ ist. 
\end{beispiel}

\begin{beispiel}[Die universell einh"ullende Algebra $\universell{\LieAlg{g}}$]
\label{Beispiel:UniverselleEinhuellendeHopf}
\index{Algebra!universell Einh\"ullende|textbf}
Sei $\LieAlg{g}$ eine \Name{Lie}-Algebra "uber $\ring{R}$. Die universell einh"ullende Algebra
$\universellRingC{\LieAlg{g}} = \universellRingR{\LieAlg{g}} \otimes
\ring{C}$ "uber $\ring{C}$ kann man als \Name{Hopf}-$^\ast$-Algebra
auffassen\footnote{Die genaue Definition einer universell
  einh"ullenden Algebra wird sp"ater im Rahmen von universellen Objekten
  Definition~\ref{Definition:UniverselleObjekte} geben.}. Um die Struktur zu
verstehen, reicht es sich die Strukturabbildungen auf den Elementen
von $\LieAlg{g}$ anzusehen. Sei $\xi \in \LieAlg{g}$: 
$\varepsilon(\xi)=0$, $\Delta (\xi) = 1 \otimes \xi + \xi \otimes 1$,
$S(\xi)=-\xi$. Mit $\xi^{\ast}= -\xi$ wird $\universell{\LieAlg{g}}$ zu
einer \Name{Hopf}-$^\ast$-Algebra.  

Weitere Eigenschaften zu einh"ullenden Algebren findet man in
\citep{dixmier:1977a}. 
\end{beispiel}

\begin{beispiel}[$\ring{C}$-wertige Funktionen auf der Gruppe $G$]
\label{Beispiel:FunktionenAufGruppeHopf}
Seien $\alg{F}(G)$ die $\ring{C}$-wertigen Funktionen auf der Gruppe
$G$. Wir definieren die $\ring{C}$-Modul- und Algebrastruktur
punktweise und f"ur $f\in \alg{F}(G)$ sei 
$\varepsilon(f) = f(e)$, $S(f)(g) =f(g^{-1})$, $\Delta(f)(g_1, g_2) =
f(g_1\cdot g_2)$ 
\end{beispiel}

\begin{beispiel}[Nicht (ko)kommutative \Name{Hopf}-Algebra,
    {\citep[Example 1.3.2]{majid:1995a}}]
\label{Beispiel:NichtKoKommutativeHopfAlgebra}
Wir betrachten die Algebra, die durch die $1$ und die drei Elemente
$X,g,g^{-1}$ "uber einem K"orper $\field{K}$ durch die Relationen 
\begin{align*}
gg^{-1}=1=g^{-1}g, \quad Xg=qgX, \quad Xg^{-1}=q^{-1}g^{-1}X  
\end{align*} 
erzeugt wird. Dabei sei $q$ ein invertierbares Element des K"orpers
$\field{K}$. Diese Algebra wird zu einer \Name{Hopf}-Algebra durch 
\begin{gather*}
\Delta(X)= X\otimes 1+g\otimes X, \quad \Delta(g)=g\otimes g, \quad
\Delta(g^{-1})=g^{-1} \otimes g^{-1},\\ \varepsilon(X) =0,\quad
\varepsilon(g)=\varepsilon(g^{-1})=1, \quad S(X) = -g^{-1}X, \quad
S(g)=g^{-1},\quad S(g^{-1})=g.  
\end{gather*}
\end{beispiel}
Diese Algebra ist weder kommutativ noch kokommutativ und nach
Bemerkung \ref{Bemerkung:Antipode} nicht halbeinfach, da $S\circ S(X)
= qX \neq X$.  

\subsection{Wirkungen von \Name{Hopf}-Algebren auf Algebren}
\label{sec:WirkungenHopfAufAlgebren}
\begin{definition}[Wirkung einer \Name{Hopf}-Algebra]
\label{Definition:AxiomeWirkungHopf}
\index{Hopf-Wirkung@\Name{Hopf}-Wirkung|textbf}
Eine Linkswirkung $\ell:H \otimes \alg{A} \to \alg{A}$ einer
\Name{Hopf}-Algebra $(H,\mu,\Delta,\varepsilon,\eta,S)$ auf eine
Algebra $(\alg{A},\mu_{\sss \alg{A}}, 1_{\sss \alg{A}})$ mit Einselement ist
eine lineare Abbildung, und es gilt f"ur alle $g,h \in H$ und $a,b\in
\alg{A}$: 

\begin{compactenum}
\item $\ell \circ (\mu \otimes \id) (g \otimes h \otimes a) = \ell \circ
    (\id \otimes \ell) (g \otimes h \otimes a) $
\item $\ell \circ (\id \otimes \mu_{\sss \alg{A}})(h \otimes a \otimes b)
    =\mu_{\sss \alg{A}} \circ (\ell \otimes \ell) \circ (\id \otimes \tau
    \otimes \id)\circ(\Delta \otimes \id  \otimes \id)(h \otimes a
    \otimes b)$ 
\item $\ell (1_{\sss H} \otimes a) = a$
\item $\ell (h \otimes 1_{\sss \alg{A}}) = \varepsilon (h) 1_{\sss
      \alg{A}}$ 
\end{compactenum}
 \end{definition}

Die Algebra $\alg{A}$ ist damit eine $H$-Linksmodulalgebra , und wir
bezeichnen die $H$-Linksmodulalgebra mit $(H, \alg{A}, \ell)$
\footnote{Analog kann man auch eine $H$-Rechtsmodulalgebra $(H,
  \alg{A}, r)$ definieren, wobei $r: \alg{A} \otimes H \to
  \alg{A}$ eine Rechtswirkung ist. Bei den sp"ater definierten
  Cross-Produktalgebren kann man "ahnlich verfahren und ebenso eine
  Cross-Produktalgebra betrachten, das auf einem $H$-Rechtsmodul basiert und
  die Form $\lcross{H}{\alg{A}}$ hat.}.   

Wie bereits im letzten Kapitel, werden wir auch diese Notation
vereinfachen und dazu die Wirkung mit $\neact$ bezeichnen, d.~h.~$\ell (h
\otimes a)=: h \triangleright a $. Die obigen Bedingungen werden damit
zu 

\begin{compactenum}
\item $(gh) \act a = g \act(h\act a)$,
\item $h\act(ab)=(h_{\sss (1)} \act a)(h_{\sss (2)} \act b)$, 
\item $1_{\sss H} \act a=a$, 
\item $h\act 1_{\sss \alg{A}}= \varepsilon (h) 1_{\sss \alg{A}}$. 
\end{compactenum}

Sind nun sowohl die \Name{Hopf}-Algebra $H$ als auch die Algebra
$\alg{A}$ mit einer $^\ast$-Struktur ausgestattet, so m"ochten wir
auch eine damit vertr"agliche Wirkung $\neact$ haben.  

\begin{definition}
    Sei nun $(H, \alg{A}, \neact)$ eine $H$-Linksmodulalgebra und
    sowohl $H$ als auch $\alg{A}$ seien mit einer $^\ast$-Struktur
    ausgestattet. Man nennt $\neact$ eine {\em $^\ast$-Wirkung}, falls   
\begin{align}
      \label{eq:SternWirkung}
    (h \act a)^\ast = S(h)^{\ast} \act a^{\ast}
\end{align}
\end{definition}

Wir haben nun eine abstrakte Definition f"ur Wirkungen von
\Name{Hopf}-Algebren auf Algebren gegeben. Nun wollen wir  einige
Beispiele f"ur konkrete Wirkungen angeben -- ohne allerdings die
\Name{Hopf}-$^\ast$-Algebra zu spezifizieren.  

\begin{beispiel}[Triviale Wirkung]
\label{Beispiel:TrivialeHWirkung}
\index{Hopf-Wirkung@\Name{Hopf}-Wirkung!triviale}
Sei nun $\alg{A}$ eine $^\ast$-Algebra. Die {\em triviale Wirkung} einer
\Name{Hopf}-$^\ast$-Algebra ist gegeben durch 
\begin{align}
        \label{eq:TrivialeHWirkung}
	h\act a = \varepsilon(h) a,
\end{align}
f"ur alle Elemente $a\in \alg{A}$. Man kann leicht sehen, da"s es
sich hierbei um eine $^\ast$-Wirkung handelt.
\end{beispiel}

\begin{beispiel}[Adjungierte Wirkung]
\label{Beispiel:AdjungierteWirkungHopfAlgebra}
\index{Hopf-Wirkung@\Name{Hopf}-Wirkung!adjungierte}
Sei $H$ eine \Name{Hopf}-$^\ast$-Algebra. Die {\em adjungierte Wirkung}
einer \Name{Hopf}-$^\ast$-Algebra auf sich selbst ist gegeben durch: 
\begin{align}
    \label{eq:AdjungierteHWirkung}
     \ad(g)(h) := g_{\sss (1)} h S(g_{\sss (2)}).
\end{align}
\end{beispiel}

\begin{proof}
Hier werden wir nun zeigen, da"s es sich wirklich um eine Wirkung
handelt. Dazu mu"s man die Gleichungen aus Definition
\ref{Definition:AxiomeWirkungHopf}, sowie die $^\ast$-Bedingung
(Gleichung (\ref{eq:SternWirkung})) nachrechnen. Seien nun $h,g,g',k
\in H$. Es ist trivial zu
zeigen, da"s $\ad(1_{\sss H})(h)=1_{\sss H}hS(1_{\sss H})=h$ und
$\ad(g)(1_{\sss H})=g_{\sss (1)}1_{\sss H}S(g_{\sss
  (2)})=\varepsilon(g)1_{\sss H}$ ist.  

\begin{align*}
   \ad(gg')(h) &=  (gg')_{\sss (1)} h
            S((gg')_{\sss (2)}) \\ &=  g_{\sss (1)}g'_{\sss (1)} h
            S(g_{\sss (2)}g'_{\sss (2)}) \\ &=  g_{\sss (1)}g'_{\sss
              (1)} h 
            S(g'_{\sss (2)})S(g_{\sss (2)}) \\ &=  g_{\sss
              (1)}\ad(g')(h) S(g_{\sss (2)}) \\ &=
            \ad(g) \circ \ad(g')(h),
\intertext{ferner m"ussen wir zeigen, da"s}
	\ad(g)(hk) &= g_{\sss (1)} hk S(g_{\sss (2)})
        \\ &= g_{\sss (1)} h S(g_{\sss (2)}) g_{\sss (3)} k S(g_{\sss
          (4)}) \\ &= \ad(g)(h) \ad(g)(k) 
\intertext{und nun noch die Vertr"aglichkeit mit der $^\ast$-Struktur}
	    \left(\ad(g)(h)\right)^\ast &=
            \left(g_{\sss (1)} h S(g_{\sss (2)}) \right)^{\ast} \\ &=
            S(g_{\sss (2)})^{\ast} h^{\ast} g_{\sss (1)}^{\ast} \\ & =
            S(g_{\sss (2)})^{\ast} h^{\ast} S(S(g_{\sss
              (1)}^{\ast})^{\ast}) \\ &= S(g)^{\ast}_{\sss (1)}
            h^{\ast} S(S(g)^{\ast}_{\sss (2)}) \\ & =
            \ad(S(g)^{\ast})(h^{\ast}). 
\end{align*}
\end{proof}

\begin{beispiel}[Innere Wirkung auf Algebra mittels $^\ast$-Homomorphismus]
\label{Beispiel:WirkungDerHopfAlgebramittelImpulsabbildung}   
\index{Hopf-Wirkung@\Name{Hopf}-Wirkung!Impulsabbildung@durch Impulsabbildung}
Sei $\alg{A}$ eine $^\ast$-Algebra, $H$ eine
\Name{Hopf}-$^\ast$-Algebra $J:H \to \alg{A}$ ein
$^\ast$-Homomorphismus (Impulsabbildung\index{Impulsabbildung!algebraische}),
d.~h.~es gilt f"ur alle $g,h \in H$ 
   \begin{align}
       \label{eq:ImpulsabbildungHomomorphismus}
       J(g)J(h) = J(gh), \quad J(1_{\sss H})=1_{\sss \alg{A}}, \quad
       J(g^{\ast}) = J(g)^{\ast}.
   \end{align}
Dann ist 
\begin{align}
    \label{eq:WirkungviaImpulsabbildungJ}
    h \act a := J(h_{\sss (1)}) a J(S(h_{\sss (2)}))
\end{align}
eine $^\ast$-Wirkung.
\begin{proof} 
Der Beweis dazu erfolgt analog zu
Beispiel \ref{Beispiel:AdjungierteWirkungHopfAlgebra}. 
\end{proof}
 Die Wirkung in Gleichung \eqref{eq:WirkungviaImpulsabbildungJ} ist trivial, wenn die Algebra
$\alg{A}$ kommutativ ist, da  
\begin{align*}
    J(h_{\sss (1)})aJ(S(h_{\sss (2)})) = J(h_{\sss (1)}S(h_{\sss
      (2)}))a = \varepsilon(h) J(1_{\sss H}) a = a.
\end{align*}

Wir wollen das am Beispiel einer \Name{Lie}-Algebra und einer
Gruppe verdeutlichen. Gleichung
\eqref{eq:WirkungviaImpulsabbildungJ} geht in den beiden F"allen
"uber zu 
\begin{align*}
    \Lie[\xi]a = [J_{\xi},a], \quad & \text{mit $\xi \in
      \universell{\LieAlg{g}}$ und  $J_{\xi} \in \alg{A}$}, \\
    \Phi_{g}a = U_{g} a U_{g}^{-1},  & \quad \text{mit $g \in
      \ring{C}[G]$ und  $U_{g} \in \alg{A}$}.
\end{align*}
\end{beispiel}

\subsection{Die Gruppen $\GR{GL}{H,\alg{A}}$, $\GRn{GL}{H,\alg{A}}$,
  $\GR{U}{H,\alg{A}}$ und $\GRn{U}{H,\alg{A}}$}

\label{sec:GruppenGLundU}

Alle Beweise zu diesem Kapitel findet man ausf"uhrlich in
\citep{jansen.waldmann:2005a}. Weitere Konvolutionsprodukte findet man
beispielsweise in \citep{kassel:1995a,majid:1995a}.   

\subsubsection{Definitionen und grundlegende Eigenschaften}

Seien $H$ eine \Name{Hopf}-$^\ast$-Algebra und $\alg{A}$ eine
$^\ast$-Algebra. Wir schauen uns $\Hom[\ring{C}](H,\alg{A})$ mit dem
{\em Konvolutionsprodukt\index{Konvolutionsprodukt}} 
\begin{align}
    \label{eq:Konvolutionsprodukt}
    (\msf{a} \ast \msf{b})(h)=\msf{a}(h_{\sss (1)}) \msf{b}(h_{\sss (2)})
\end{align}
an, wobei $\msf{a},\msf{b} \in \Hom[\ring{C}](H,\alg{A})$ und $h \in H$. 

\begin{definition}[Definition von $\GR{GL}{H,\alg{A}}$ und
    $\GR{U}{H,\alg{A}}$]  
\label{Definition:GLHAundUHA}
    Ein Element $\msf{a}\in \Hom[\ring{C}](H,\alg{A})$ geh"ort zu
    $\GR{GL}{H,\alg{A}}$, falls f"ur alle $g,h \in H$ und $b\in
      \alg{A}$ gilt
      \begin{compactenum}
      \item $\msf{a}(1_{\sss H}) = 1_{\sss \alg{A}}$ (Normierung),
      \item $\msf{a}(gh) = \msf{a}(g_{\sss (1)}) (g_{\sss (2)} \act
          \msf{a}(h))$ (Wirkungsbedingung),\index{Wirkungsbedingung}
      \item $(h_{\sss (1)} \act b) \msf{a}(h_{\sss (2)}) =
          \msf{a}(h_{\sss (1)})(h_{\sss (2)} \act b)$ (Modulbedingung).\index{Modulbedingung}
      \end{compactenum}
Wir bezeichen mit $\GR{U}{H,\alg{A}}$ all die Elemente, die zus"atzlich noch 
      \begin{compactenum}
      \item[\it iv.)] $\msf{a}(h_{\sss (1)})(\msf{a}(S(h_{\sss
            (2)})^{\ast}))^{\ast} = \varepsilon (h) 1_{\sss \alg{A}}$ (Unitarit"atsbedingung).\index{Unitaritaetsbedingung@Unitarit\"atsbedingung}
        \end{compactenum}
erf"ullen.
\end{definition}

\begin{proposition}
    Die Menge $\GR{GL}{H,\alg{A}}$ ist bez"uglich des
    Konvolutionsprodukts eine Gruppe, und $\GR{U}{H,\alg{A}}$ wird zu
    einer Untergruppe der invertierbaren Elemente von
    $\GR{GL}{\Hom[\ring{C}](H,\alg{A}), \ast}$. Das Inverse eines
    Elements $\msf{a}\in \GR{GL}{H,\alg{A}}$ ist gegeben durch
    \begin{align}
        \label{eq:InversesGLHA}
        \msf{a}^{-1}(h)=h_{\sss (2)} \act \msf{a}(S^{-1}(h_{\sss (1)})).
    \end{align}
\end{proposition}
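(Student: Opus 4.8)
The plan is to realize $\GR{GL}{H,\alg{A}}$ as a submonoid of the convolution algebra $(\Hom[\ring{C}](H,\alg{A}),\ast)$ which turns out to be a group, with the stated formula supplying the inverse. First I would record that $(\Hom[\ring{C}](H,\alg{A}),\ast)$ is an associative unital algebra: associativity of $\ast$ is immediate from coassociativity of $\Delta$, and the neutral element is $\msf{e} := \eta\circ\varepsilon$, that is $\msf{e}(h)=\varepsilon(h)1_{\sss \alg{A}}$, since $\varepsilon(h_{\sss (1)})h_{\sss (2)}=h=h_{\sss (1)}\varepsilon(h_{\sss (2)})$. A short check using $\Delta(1_{\sss H})=1_{\sss H}\otimes 1_{\sss H}$, the relation $h\act 1_{\sss \alg{A}}=\varepsilon(h)1_{\sss \alg{A}}$, and $\varepsilon(h_{\sss (1)})\varepsilon(h_{\sss (2)})=\varepsilon(h)$ shows that $\msf{e}$ satisfies the three defining conditions of Definition~\ref{Definition:GLHAundUHA}, so $\msf{e}\in\GR{GL}{H,\alg{A}}$.

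Next I would prove closure. Given $\msf{a},\msf{b}\in\GR{GL}{H,\alg{A}}$, the normalization $(\msf{a}\ast\msf{b})(1_{\sss H})=1_{\sss \alg{A}}$ is trivial. The module condition iii for $\msf{a}\ast\msf{b}$ follows by applying the module condition first for $\msf{a}$ and then for $\msf{b}$ to the string $(h_{\sss (1)}\act b)\msf{a}(h_{\sss (2)})\msf{b}(h_{\sss (3)})$, yielding $(\msf{a}\ast\msf{b})(h_{\sss (1)})(h_{\sss (2)}\act b)$. The action condition ii is the delicate one: starting from $\msf{a}(g_{\sss (1)}h_{\sss (1)})\msf{b}(g_{\sss (2)}h_{\sss (2)})$ one applies the action condition to each factor, and then must commute the action $g_{\sss (\cdot)}\act\msf{a}(h_{\sss (1)})$ past the convolution factor $\msf{b}(g_{\sss (\cdot)})$ --- this is exactly where the module condition iii for $\msf{b}$ is used --- before recombining the two actions via $g\act(xy)=(g_{\sss (1)}\act x)(g_{\sss (2)}\act y)$ into $g_{\sss (\cdot)}\act(\msf{a}\ast\msf{b})(h)$. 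Aligning the Sweedler legs here is the main bookkeeping effort.

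For inverses I would verify that $\msf{a}^{-1}(h):=h_{\sss (2)}\act\msf{a}(S^{-1}(h_{\sss (1)}))$ is a left convolution inverse: expanding $(\msf{a}^{-1}\ast\msf{a})(h)$, using the module condition to turn $(h_{\sss (2)}\act\msf{a}(S^{-1}(h_{\sss (1)})))\,\msf{a}(h_{\sss (3)})$ into $\msf{a}(h_{\sss (2)})(h_{\sss (3)}\act\msf{a}(S^{-1}(h_{\sss (1)})))$, then the action condition in the form $\msf{a}(g_{\sss (1)})(g_{\sss (2)}\act\msf{a}(k))=\msf{a}(g_{\sss (1)}k)$ to collapse this to $\msf{a}(h_{\sss (2)}S^{-1}(h_{\sss (1)}))$, and finally the antipode identity $h_{\sss (2)}S^{-1}(h_{\sss (1)})=\varepsilon(h)1_{\sss H}$ to obtain $\varepsilon(h)1_{\sss \alg{A}}=\msf{e}(h)$. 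I would then check that $\msf{a}^{-1}$ again lies in $\GR{GL}{H,\alg{A}}$: normalization is immediate, while the action and module conditions for $\msf{a}^{-1}$ follow from the same three tools together with the fact that $S^{-1}$ is both an algebra and a coalgebra antihomomorphism. Once $\msf{a}^{-1}\in\GR{GL}{H,\alg{A}}$ and $\msf{a}^{-1}\ast\msf{a}=\msf{e}$, the standard monoid argument (a left inverse that itself possesses a left inverse is automatically a two-sided inverse) promotes $\GR{GL}{H,\alg{A}}$ from monoid to group and identifies $\msf{a}^{-1}$ as the genuine two-sided inverse.

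Finally, for the subgroup $\GR{U}{H,\alg{A}}$ I would exploit that the unitarity condition iv.) says precisely that $h\mapsto(\msf{a}(S(h)^{\ast}))^{\ast}$ is a right $\ast$-inverse of $\msf{a}$; by uniqueness of inverses in the ambient group this is equivalent to $\msf{a}\in\GR{GL}{H,\alg{A}}$ together with $\msf{a}^{-1}(h)=(\msf{a}(S(h)^{\ast}))^{\ast}$. With this characterization, closure under $\ast$ reduces to the identities $\Delta(S(h)^{\ast})=S(h_{\sss (2)})^{\ast}\otimes S(h_{\sss (1)})^{\ast}$ and $(xy)^{\ast}=y^{\ast}x^{\ast}$, which give $(\msf{a}\ast\msf{b})^{-1}=\msf{b}^{-1}\ast\msf{a}^{-1}$ in the required form; containment of $\msf{e}$ uses $\varepsilon(S(h)^{\ast})=\cc{\varepsilon(h)}$; and closure under inversion reduces to the $\ast$-Hopf identity $S(S(h^{\ast})^{\ast})=h$ (equivalently $S\circ{}^{\ast}\circ S\circ{}^{\ast}=\id$), which yields $\msf{a}(S(S(h)^{\ast})^{\ast})=\msf{a}(h)$. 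I expect the hardest single step to be the preservation of the action condition ii under both multiplication and inversion, where the interplay of the $H$-action, the module condition, and the (anti)multiplicativity of $S^{-1}$ must be orchestrated with careful Sweedler bookkeeping; the remaining verifications are routine.
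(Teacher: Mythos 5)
Your proof is correct. The paper itself contains no proof of this proposition --- it defers all proofs of this section to \citep{jansen.waldmann:2005a} --- but your route (closure of the three defining conditions under $\ast$, membership of $\eta\circ\varepsilon$, verification that the stated formula is a left inverse which again lies in $\GR{GL}{H,\alg{A}}$, the monoid argument for two-sidedness, and the reformulation of the unitarity condition as $\msf{a}^{-1}(h)=(\msf{a}(S(h)^{\ast}))^{\ast}$) is exactly the standard argument, and each step you sketch does go through with the tools you name; the only blemish is the typo $\msf{a}(g_{\sss (1)})(g_{\sss (2)}\act\msf{a}(k))=\msf{a}(g_{\sss (1)}k)$, which should read $\msf{a}(gk)$, as your subsequent application correctly assumes.
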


\begin{bemerkung}
    Die Gruppe $\GR{U}{H,\alg{A}}$ ist f"ur jede Wirkung einer
    \Name{Hopf}-$^\ast$-Algebra $H$ auf eine assoziative $^\ast$-Algebra $\alg{A}$ mit
    Einselement $1_{\sss \alg{A}}$ definiert, bei der die Antipodenabbildung
    $S: H \to H$ invertierbar ist. F"ur die Gruppe
    $\GR{U}{H,\alg{A}}$ ben"otigen wir die $^\ast$-Involution auf $H$
    und $\alg{A}$.   
\end{bemerkung}

\begin{definition}[Die \Name{Abel}sche Gruppen $\GR{GL}{\zentrum{\alg{A}}}$
      und $\GR{U}{\zentrum{\alg{A}}}$] 
\label{Definition:AbelscheGruppenGRZAundUZA}
    Man bezeichnet die Menge aller zentralen und invertierbaren Elemente von
    $\alg{A}$ mit $\GR{GL}{\zentrum{\alg{A}}}$, und alle \em
        unit"aren und zentralen Elemente von $\alg{A}$ mit
      $\GR{U}{\zentrum{\alg{A}}}$. 
        \begin{align}
            \GR{GL}{\zentrum{\alg{A}}} & =\left\{a \in \alg{A} \big| ab=ba \;
                  \forall b\in \alg{A},\; \exists a^{-1}\in \alg{A}, \;
                  \text{so da"s} \; a^{-1}a = aa^{-1} = 1_{\sss
                    \alg{A}} \right\} \\ 
              \GR{U}{\zentrum{\alg{A}}} & =\left\{a \in \alg{A} \big|
                  ab=ba \; \forall b\in \alg{A},\; \exists a^{\ast}
                  \in \alg{A}, \;
                  \text{so da"s} \; a^{\ast}a = aa^{\ast} = 1_{\sss
                    \alg{A}} \right\}
        \end{align}
Desweiteren wollen wir die $H$-invarianten Elemente in
$\GR{GL}{\zentrum{\alg{A}}}$ bzw.~$\GR{U}{\zentrum{\alg{A}}}$ mit
$\GR{GL}{\zentrum{\alg{A}}}^{H}$ bzw.~$\GR{U}{\zentrum{\alg{A}}}^{H}$
  bezeichnen.  
\end{definition}

\begin{proposition}[]
    Sei $c \in \GR{GL}{\zentrum{\alg{A}}}$. Wir definieren via 
    \begin{align}
        \hat{c}(h):=c(h \act c^{-1})
    \end{align}
ein Element $\hat{c} \in \GR{GL}{H,\alg{A}}$, und $c \mapsto \hat{c}$
ist ein Gruppenhomomorphismus\index{Gruppenhomomorphismus}, so da"s 

\begin{align}
\label{eq:ExakteSequenz1}
    1\too \GR{GL}{\zentrum{\alg{A}}}^{H} \too \GR{GL}{\zentrum{\alg{A}}}
    \stackrel{\widehat{~}}{\too} \GR{GL}{H,\alg{A}}
\end{align}

eine exakte Sequenz ist. Analog verh"alt es sich f"ur $c \in
\GR{U}{\zentrum{\alg{A}}}$ und $\hat{c} \in \GR{U}{H,\alg{A}}$. Das
hei"st 

\begin{align}
\label{eq:ExakteSequenz2}
    1\too \GR{U}{\zentrum{\alg{A}}}^{H} \too \GR{U}{\zentrum{\alg{A}}}
    \stackrel{\widehat{~}}{\too} \GR{U}{H,\alg{A}} 
\end{align}
ist auch eine exakte Sequenz von Gruppenhomomorphismen. 
\end{proposition}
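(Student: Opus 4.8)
The plan is to verify directly that the assignment $c \mapsto \hat c$ with $\hat c(h) = c(h \act c^{-1})$ takes values in $\GR{GL}{H,\alg{A}}$, that it respects the convolution product \eqref{eq:Konvolutionsprodukt}, and finally to identify its kernel with the $H$-invariant elements, which is exactly exactness of \eqref{eq:ExakteSequenz1} at the middle term. The organizing observation throughout is that since $c \in \GR{GL}{\zentrum{\alg{A}}}$ is central and invertible, its inverse $c^{-1}$ is again central, and that the module algebra axiom (Definition~\ref{Definition:AxiomeWirkungHopf}) collapses mixed products of the form $(h_{\sss (1)} \act c^{-1})(h_{\sss (2)} \act c)$ to $h \act (c^{-1}c) = \varepsilon(h)1_{\sss \alg{A}}$. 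Note that \emph{invariance of $c$ is not assumed} at this stage; only centrality is used.

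First I would check the three conditions of Definition~\ref{Definition:GLHAundUHA}. Normalization is immediate from $1_{\sss H} \act c^{-1} = c^{-1}$, giving $\hat c(1_{\sss H}) = cc^{-1} = 1_{\sss \alg{A}}$. For the action condition I would expand the right-hand side $\hat c(g_{\sss (1)})(g_{\sss (2)} \act \hat c(h)) = c(g_{\sss (1)} \act c^{-1})\bigl(g_{\sss (2)} \act (c\,(h \act c^{-1}))\bigr)$, apply the module algebra axiom to the inner action, and regroup the resulting threefold coproduct so that $(g_{\sss (1)} \act c^{-1})(g_{\sss (2)} \act c) = \varepsilon(g_{\sss (1)})1_{\sss \alg{A}}$; the counit identity $\varepsilon(g_{\sss (1)})g_{\sss (2)} = g$ then leaves $c\,(g \act (h \act c^{-1})) = c\,(gh \act c^{-1}) = \hat c(gh)$. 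This reindexing is the one place where the Sweedler bookkeeping must be done with care. The module condition reduces, after pulling the central $c$ to the front, to comparing $c\,(h \act (b c^{-1}))$ with $c\,(h \act (c^{-1} b))$, which agree because $c^{-1}$ is central.

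Next I would treat the homomorphism property and exactness. Expanding $(\hat c \ast \hat d)(h) = c(h_{\sss (1)} \act c^{-1})\,d(h_{\sss (2)} \act d^{-1})$, moving the central $c,d$ to the front and recombining by the module algebra axiom gives $cd\,(h \act (c^{-1}d^{-1})) = cd\,(h \act (cd)^{-1}) = \widehat{cd}(h)$, using $(cd)^{-1} = c^{-1}d^{-1}$ by centrality. For exactness I would recall that the unit of the convolution group is $\eta \circ \varepsilon$, i.e.\ $h \mapsto \varepsilon(h)1_{\sss \alg{A}}$. Then $\hat c = \eta \circ \varepsilon$ means $c(h \act c^{-1}) = \varepsilon(h)1_{\sss \alg{A}}$ for all $h$, equivalently $h \act c^{-1} = \varepsilon(h)c^{-1}$, i.e.\ $c^{-1}$ (hence $c$) is $H$-invariant. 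Thus the kernel of $c \mapsto \hat c$ is precisely $\GR{GL}{\zentrum{\alg{A}}}^{H}$; since the first arrow is the subgroup inclusion (injective), and $\GR{GL}{\zentrum{\alg{A}}}^{H}$ is a subgroup because invariance is preserved under products and inverses, \eqref{eq:ExakteSequenz1} is exact.

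The unitary case reuses all of this verbatim plus one extra verification, which I expect to be the main obstacle. For $c \in \GR{U}{\zentrum{\alg{A}}}$ one has $c^{-1} = c^\ast$, and the task is the unitarity condition $\hat c(h_{\sss (1)})\bigl(\hat c(S(h_{\sss (2)})^\ast)\bigr)^\ast = \varepsilon(h)1_{\sss \alg{A}}$. The delicate step is simplifying $\bigl(\hat c(S(h_{\sss (2)})^\ast)\bigr)^\ast$: applying the $^\ast$-action rule $(k \act a)^\ast = S(k)^\ast \act a^\ast$ and the Hopf-$^\ast$-algebra identity $S(S(g^\ast)^\ast) = g$ from Definition~\ref{Definition:HopfSternAlgebra}~\textit{iv.)} turns this into $(h_{\sss (2)} \act c)\,c^\ast$. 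Substituting back and collapsing $(h_{\sss (1)} \act c^\ast)(h_{\sss (2)} \act c) = h \act (c^\ast c) = \varepsilon(h)1_{\sss \alg{A}}$ by the module axiom leaves $c\,\varepsilon(h)\,c^\ast = \varepsilon(h)1_{\sss \alg{A}}$, using $cc^\ast = 1_{\sss \alg{A}}$. Hence $\hat c \in \GR{U}{H,\alg{A}}$; the homomorphism and kernel arguments restrict unchanged (with $\GR{U}{\zentrum{\alg{A}}} \cap \GR{GL}{\zentrum{\alg{A}}}^{H} = \GR{U}{\zentrum{\alg{A}}}^{H}$), giving exactness of \eqref{eq:ExakteSequenz2}. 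Full details of these computations are carried out in \citep{jansen.waldmann:2005a}.
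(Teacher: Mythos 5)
Your proof is correct and complete: the verification of the normalization, action, module, and unitarity conditions, the compatibility with the convolution product, and the identification of the kernel with the $H$-invariant central elements are all carried out soundly (including the delicate use of $S(S(h)^{\ast})^{\ast}=h$ in the unitary case). The paper itself gives no proof of this proposition but defers entirely to \citep{jansen.waldmann:2005a}, and your direct verification is exactly the argument one expects there; the only step you leave implicit is that invariance of $c^{-1}$ forces invariance of $c$, which follows in one line from $\varepsilon(h)1_{\sss\alg{A}}=h\act(cc^{-1})=(h_{\sss(1)}\act c)\varepsilon(h_{\sss(2)})c^{-1}$.
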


\begin{definition}[Die Quotientengruppen\index{Quotientengruppe} $\GRn{GL}{H,\alg{A}}$ und
    $\GRn{U}{H,\alg{A}}$] 
\label{Definition:QuotioentengruppenGLnullUndUnull}
     Wir definieren die Quotientengruppen $\GRn{GL}{H,\alg{A}}$ und
    $\GRn{U}{H,\alg{A}}$ indem wir das Bild von
    $\GR{GL}{\zentrum{\alg{A}}}$ bzw.~$\GR{U}{\zentrum{\alg{A}}}$ unter
    \,${}\hat{}$ 
    herausteilen.
    \begin{align}
        \label{eq:QuotientengruppenGLundU}
        \GRn{GL}{H,\alg{A}} & = \GR{GL}{H,\alg{A}} /
        \widehat{\GR{GL}{\zentrum{\alg{A}}}} \\ \GRn{U}{H,\alg{A}} & =
          \GR{U}{H,\alg{A}} / \widehat{\GR{U}{\zentrum{\alg{A}}}} 
    \end{align}
\end{definition}
Die Definition~\ref{Definition:QuotioentengruppenGLnullUndUnull} ist
m"oglich, da $\GR{GL}{\zentrum{A}}$ und $\GR{U}{\zentrum{A}}$ im
Zentrum liegen und daher Normalteiler sind.
Die Sequenzen \eqref{eq:ExakteSequenz1} und \eqref{eq:ExakteSequenz2}
k"onnen wir damit zu den beiden exakten Sequenzen

\begin{align}
 \label{eq:ExakteSequenz12}          
   1\too \GR{GL}{\zentrum{\alg{A}}}^{H} \too \GR{GL}{\zentrum{\alg{A}}}
    & \stackrel{\widehat{~}}{\too} \GR{GL}{H,\alg{A}} \too
    \GRn{GL}{H,\alg{A}} \too 1 \\ \label{eq:ExakteSequenz22}
    1\too \GR{U}{\zentrum{\alg{A}}}^{H} \too \GR{U}{\zentrum{\alg{A}}}
    & \stackrel{\widehat{~}}{\too} \GR{U}{H,\alg{A}} \too
    \GRn{U}{H,\alg{A}} \too 1    
\end{align}
vervollst"andigen.

\begin{bemerkung}
\label{Bemerkung:TrivialesZentrumGleichheitderGruppen}
    Offensichtlich ist $\GR{GL}{\zentrum{\alg{A}}} =
    \GR{GL}{\zentrum{\alg{A}}}^{H}$ und $\GR{U}{\zentrum{\alg{A}}} =
    \GR{U}{\zentrum{\alg{A}}}^{H}$ falls die Algebra $\alg{A}$ nur das
    triviale Zentrum $\zentrum{\alg{A}} = \ring{C}1_{\sss \alg{A}}$
    hat. Desweiteren sind in diesem Fall die Gruppen
    $\GR{GL}{H,\alg{A}}= \GRn{GL}{H,\alg{A}}$ sowie
    $\GR{U}{H,\alg{A}}= \GRn{U}{H,\alg{A}}$.
\end{bemerkung}

\begin{proposition}
\label{Proposition:HAequivarianterHomomorphismusExakteSequenz}
    Seien $\alg{A}$, $\alg{B}$ und $\alg{C}$ assoziative
    $^\ast$-Algebren mit Einselement, und sei $\phi: \alg{A} \to \alg{B}$ ein
    $H$-"aquivarianter, surjektiver Homomorphismus. 
    \begin{compactenum}
    \item F"ur jedes $\msf{a} \in \GR{GL}{H,\alg{A}}$ ist
        $\phi_{\ast} \msf{a} := \phi \circ \msf{a} \in
        \GR{GL}{H,\alg{B}}$.
    \item Die Abbildung $\phi_{\ast}:\GR{GL}{H,\alg{A}} \to
    \GR{GL}{H,\alg{B}}$ ist ein Gruppenhomomorphismus.
    \item Sei $\psi: \alg{B} \to \alg{C}$ ein weiterer
        $H$-"aquivarianter surjektiver Homomorphismus, dann ist
        $(\psi \circ \phi)_{\ast} = \psi_{\ast} \circ \phi_{\ast}$ und
        $(\id_{\sss \alg{A}})_{\ast} = \id_{\sss \GR{GL}{H,\alg{A}}}$.
    \item Der Gruppenhomomorphismus $\phi_{\ast}: \GR{GL}{H,\alg{A}}
        \to \GR{GL}{H,\alg{B}}$ induziert einen Gruppenhomomorphismus
        $\GRn{GL}{H,\alg{A}} \to \GRn{GL}{H,\alg{B}}$,
        den wir auch mit $\phi_{\ast}$ bezeichnen, so da"s das
        folgende Diagramm kommutiert
 \begin{equation}
            \label{eq:GLZAHZAHANullKommutiert}
            \bfig
            \hSquares|rrrrrrr|/>`>``>`>`>`>/%
            [1`\GR{GL}{\zentrum{\alg{A}}}^H%
            `\GR{GL}{\zentrum{\alg{A}}}%
            `1%
            `\GR{GL}{\zentrum{\alg{B}}}^H%
            `\GR{GL}{\zentrum{\alg{B}}}%
            ;```\phi`\phi``]
            \morphism(1680,0)<300,0>[`;]
            \morphism(1680,500)<300,0>[`;]
            \hSquares(2230,0)|rrrrrrr|/>`>`>`>``>`>/%
            [\GR{GL}{H, \alg{A}}%
            `\GRn{GL}{H,\alg{A}}%
            `1%
            `\GR{GL}{H,\alg{B}}%
            `\GRn{GL}{H,\alg{B}}%
            `1.%
            ;``\phi_{\ast}`\phi_{\ast}```]
            \efig
        \end{equation}

    \item Ist $\phi$ zus"atzlich ein $^\ast$-Homomorphismus, so kann
        man bei den Punkten {\it i.)} bis {\it iv.)} \glqq
        $\mathsf{GL}$\grqq{} durch \glqq$\mathsf{U}$\grqq{} ersetzen,
        insbesondere ist das folgende Diagramm kommutativ
\begin{equation}
            \label{eq:UZAHZAHANullKommutiert}
            \bfig
            \hSquares|rrrrrrr|/>`>``>`>`>`>/%
            [1`\GR{U}{\zentrum{\alg{A}}}^H%
            `\GR{U}{\zentrum{\alg{A}}}%
            `1%
            `\GR{U}{\zentrum{\alg{B}}}^H%
            `\GR{U}{\zentrum{\alg{B}}}%
            ;```\phi`\phi``]
            \morphism(1550,0)<300,0>[`;]
            \morphism(1550,500)<300,0>[`;]
            \hSquares(2050,0)|rrrrrrr|/>`>`>`>``>`>/%
            [\GR{U}{H, \alg{A}}%
            `\GRn{U}{H,\alg{A}}%
            `1%
            `\GR{U}{H,\alg{B}}%
            `\GRn{U}{H,\alg{B}}%
            `1.%
            ;``\phi_{\ast}`\phi_{\ast}```]
            \efig
        \end{equation}   
   \end{compactenum}
\end{proposition}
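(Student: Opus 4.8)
The plan is to verify, one defining condition at a time, that each requirement of $\GR{GL}{H,\alg{A}}$ (and later of $\GR{U}{H,\alg{A}}$) from Definition~\ref{Definition:GLHAundUHA} survives post-composition with $\phi$, reducing every verification to the three structural features of $\phi$: that it is a unital algebra homomorphism, that it is $H$-equivariant, and that it is surjective. For i.), let $\msf{a} \in \GR{GL}{H,\alg{A}}$. Normalization is immediate, since $(\phi_{\ast}\msf{a})(1_{\sss H}) = \phi(\msf{a}(1_{\sss H})) = \phi(1_{\sss \alg{A}}) = 1_{\sss \alg{B}}$. For the action condition I would apply $\phi$ to $\msf{a}(gh) = \msf{a}(g_{(1)})(g_{(2)} \act \msf{a}(h))$; multiplicativity of $\phi$ splits the right-hand side, and $H$-equivariance turns $\phi(g_{(2)} \act \msf{a}(h))$ into $g_{(2)} \act (\phi_{\ast}\msf{a})(h)$, which is exactly the action condition for $\phi_{\ast}\msf{a}$. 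The module condition is the single place where surjectivity is genuinely used: given $b' \in \alg{B}$, choose $b \in \alg{A}$ with $\phi(b) = b'$, apply $\phi$ to the identity $(h_{(1)} \act b)\msf{a}(h_{(2)}) = \msf{a}(h_{(1)})(h_{(2)} \act b)$ valid in $\alg{A}$, and push $\phi$ through the action to obtain $(h_{(1)} \act b')(\phi_{\ast}\msf{a})(h_{(2)}) = (\phi_{\ast}\msf{a})(h_{(1)})(h_{(2)} \act b')$; since $b'$ was arbitrary, the module condition holds in $\alg{B}$.

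Parts ii.) and iii.) are formal. For ii.), the computation $\phi((\msf{a}\ast\msf{b})(h)) = \phi(\msf{a}(h_{(1)}))\phi(\msf{b}(h_{(2)})) = ((\phi_{\ast}\msf{a})\ast(\phi_{\ast}\msf{b}))(h)$ uses only multiplicativity together with the convolution formula \eqref{eq:Konvolutionsprodukt}, and $\phi_{\ast}$ sends the neutral element $h \mapsto \varepsilon(h)1_{\sss \alg{A}}$ to $h \mapsto \varepsilon(h)1_{\sss \alg{B}}$, so $\phi_{\ast}$ is a group homomorphism. Part iii.) is just associativity of composition: $(\psi\circ\phi)\circ\msf{a} = \psi\circ(\phi\circ\msf{a})$ and $\id_{\sss \alg{A}}\circ\msf{a} = \msf{a}$.

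For iv.), the crucial point is that $\phi_{\ast}$ maps the distinguished normal subgroup $\widehat{\GR{GL}{\zentrum{\alg{A}}}}$ into $\widehat{\GR{GL}{\zentrum{\alg{B}}}}$. I would first note that $\phi(c) \in \GR{GL}{\zentrum{\alg{B}}}$ for any central invertible $c$ — invertibility through $\phi(c^{-1})$, and centrality again through surjectivity, since $\phi(c)\phi(b) = \phi(cb) = \phi(bc) = \phi(b)\phi(c)$ for every $b$ exhausts all of $\alg{B}$ — and then compute, using equivariance and the definition of $\hat{c}$ recorded in \eqref{eq:ExakteSequenz1}, that $(\phi_{\ast}\hat{c})(h) = \phi(c)\,(h \act \phi(c)^{-1}) = \widehat{\phi(c)}(h)$, i.e.\ $\phi_{\ast}\hat{c} = \widehat{\phi(c)}$. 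Hence $\phi_{\ast}$ descends to a homomorphism $\GRn{GL}{H,\alg{A}} \to \GRn{GL}{H,\alg{B}}$. Equivariance also carries $H$-invariant central elements to $H$-invariant central elements (if $h \act c = \varepsilon(h)c$ then $h \act \phi(c) = \phi(h \act c) = \varepsilon(h)\phi(c)$), which is what makes the left squares of \eqref{eq:GLZAHZAHANullKommutiert} commute; the right squares commute by the very construction of the induced map.

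Part v.) adds only the unitarity condition iv.) of Definition~\ref{Definition:GLHAundUHA}. Applying $\phi$ to $\msf{a}(h_{(1)})(\msf{a}(S(h_{(2)})^{\ast}))^{\ast} = \varepsilon(h)1_{\sss \alg{A}}$ and invoking $\phi(x^{\ast}) = \phi(x)^{\ast}$ — which is precisely the extra hypothesis that $\phi$ be a $^\ast$-homomorphism — reproduces the unitarity condition for $\phi_{\ast}\msf{a}$, and the arguments of i.)--iv.) transcribe verbatim with $\mathsf{U}$ in place of $\mathsf{GL}$, giving the commutativity of \eqref{eq:UZAHZAHANullKommutiert}. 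I do not expect a deep obstacle anywhere in this chain; the only subtlety worth flagging is the recurring reliance on surjectivity (in the module condition of i.) and in the centrality claim of iv.)), which is exactly why $\phi$ is assumed surjective and without which $\phi_{\ast}$ would fail to be well defined.
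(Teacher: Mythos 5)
Your proof is correct and complete: each of the three (resp.\ four) defining conditions of Definition~\ref{Definition:GLHAundUHA} is checked by applying $\phi$ and using exactly the right hypothesis at each step, and you correctly identify the two places where surjectivity is indispensable (the module condition in \textit{i.)} and the centrality of $\phi(c)$ in \textit{iv.)}), as well as the identity $\phi_{\ast}\hat{c} = \widehat{\phi(c)}$ that makes the quotient map well defined. The paper itself gives no in-text proof for this proposition but defers to the cited reference, and your direct verification is precisely the intended argument.
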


Man kann die Proposition
\ref{Proposition:HAequivarianterHomomorphismusExakteSequenz} auch
anders auffassen, n"amlich als eine Gruppoid-Wirkung auf eine exakte
Sequenz.

\begin{korollar}
Das Gruppoid\index{Gruppoid} der $H$-\"aquivarianten Isomorphismen $\IsoH$ wirkt auf
die exakte Sequenz \ref{eq:ExakteSequenz1} via Isomorphismen. Die
ganze Sequenz von Gruppen mit seiner $\AutH(\alg{A})$-Wirkung ist eine
Invariante der Algebra $\alg{A}$ mit der $H$-Wirkung.
Analog wirkt $\starIsoH$ durch Isomorphismen auf die exakte Sequenz
\ref{eq:ExakteSequenz2}, wobei die gesamte Sequenz mit ihrer
$\starAutH(\alg{A})$-Wirkung eine Invariante von $\alg{A}$ als
$^\ast$-Algebra mit einer $^\ast$-Wirkung von $H$ ist.
    
\end{korollar}

\subsubsection{Der kokommutative Fall}

In diesem Abschnitt wollen wir nun davon ausgehen, da"s wir eine
kokommutative \Name{Hopf}-$^\ast$-Algebra gegeben haben. In diesem
Fall vereinfacht sich das im letzten Abschnitt
beschriebene. In unserer Konvention bilden Homomorphismen immer
Einselemente auf Einselemente ab.

\begin{proposition}
    Sei $H$ eine kokommutative \Name{Hopf}-$^{\ast}$-Algebra und
    $\alg{A}$ eine assoziative $^{\ast}$-Algebra.
    \begin{compactenum}
    \item $H \act \zentrum{\alg{A}} \subseteq \zentrum{\alg{A}}$,
        d.~h.~zentrale Elemente in $\alg{A}$ bleiben durch eine
        \Name{Hopf}-Wirkung zentral.
    \item F"ur $\msf{a} \in \GR{GL}{H,\zentrum{\alg{A}}}$ ist
        $\msf{a}(h)\in \zentrum{\alg{A}}$ f"ur alle $h \in H$, daher gilt
        $\GR{GL}{H,\alg{A}} = \GR{GL}{H,\zentrum{\alg{A}}}$ und
        $\GRn{GL}{H,\alg{A}} = \GRn{GL}{H,\zentrum{\alg{A}}}$. Analoges
        gilt f"ur $a\in \GR{U}{H,\alg{A}}$.
    \item Die Gruppen $\GR{GL}{H,\alg{A}}$, $\GRn{GL}{H,\alg{A}}$,
        $\GR{U}{H,\alg{A}}$ und $\GRn{U}{H,\alg{A}}$ sind \Name{Abel}sch.
    \item Der Raum der Algebra-Homomorphismen $H \to
        \zentrum[H]{\alg{A}}$ ist eine Untergruppe von $\GR{GL}{H,
          \alg{A}}$, und der Raum der $^\ast$-Homomorphismen  $H \to
        \zentrum[H]{\alg{A}}$ ist eine Untergruppe von
        $\GR{U}{H,\alg{A}}$. Das Inverse eines Homomorphismus $\msf{a}$ ist
        gegeben durch $\msf{a}^{-1}(h) = \msf{a}(S(h))=
        \msf{a}(S^{-1}(h))$.    
     \end{compactenum}
\end{proposition}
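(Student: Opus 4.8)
The plan is to establish the four claims in the order given, since i.) feeds into ii.), which in turn makes iii.) and iv.) almost formal. Throughout, the only tool beyond the axioms of Definition~\ref{Definition:AxiomeWirkungHopf} and Definition~\ref{Definition:GLHAundUHA} is cocommutativity, which lets me permute the legs of an iterated coproduct $h_{\sss (1)} \otimes h_{\sss (2)} \otimes h_{\sss (3)}$ freely; I will use this repeatedly together with the antipode relations $h_{\sss (1)} S(h_{\sss (2)}) = S(h_{\sss (1)}) h_{\sss (2)} = \varepsilon(h) 1_{\sss H}$.

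For i.), I would take $z \in \zentrum{\alg{A}}$ and $b \in \alg{A}$ and compute $(g \act z)\, b$ by inserting $\varepsilon(g_{\sss (2)})\, b = (g_{\sss (2)} S(g_{\sss (3)})) \act b$ and folding the module-algebra axiom, obtaining $(g \act z)\,b = g_{\sss (1)} \act \big(z\, (S(g_{\sss (2)}) \act b)\big)$. Centrality of $z$ lets me swap the factors inside the action, and re-expanding gives $\big((g_{\sss (1)} S(g_{\sss (3)})) \act b\big)(g_{\sss (2)} \act z)$. Here cocommutativity is decisive: permuting the second and third legs turns $g_{\sss (1)} S(g_{\sss (3)})$ into $g_{\sss (1)} S(g_{\sss (2)}) = \varepsilon(\cdot)1_{\sss H}$, collapsing the expression to $b\,(g \act z)$. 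Thus $g \act z$ is again central, so $H \act \zentrum{\alg{A}} \subseteq \zentrum{\alg{A}}$ and the action restricts to $\zentrum{\alg{A}}$. For ii.), the key claim is that every $\msf{a} \in \GR{GL}{H,\alg{A}}$ takes values in $\zentrum{\alg{A}}$; I would prove $b\,\msf{a}(h) = \msf{a}(h)\,b$ by the same technique, writing $b\,\msf{a}(h) = \big(h_{\sss (1)} \act (S(h_{\sss (2)}) \act b)\big)\msf{a}(h_{\sss (3)})$, permuting legs so that the module condition iii.) of Definition~\ref{Definition:GLHAundUHA} applies, and telescoping with the antipode to reach $\msf{a}(h)\,b$. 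This yields $\GR{GL}{H,\alg{A}} \subseteq \GR{GL}{H,\zentrum{\alg{A}}}$; the reverse inclusion holds because for a center-valued map the condition iii.) is automatic (centrality plus cocommutativity make both sides coincide), giving equality, and the analogous argument with the unitarity condition iv.) handles $\GR{U}{H,\alg{A}} = \GR{U}{H,\zentrum{\alg{A}}}$. The equalities $\GRn{GL}{H,\alg{A}} = \GRn{GL}{H,\zentrum{\alg{A}}}$ and its unitary analogue then follow because $\zentrum{\zentrum{\alg{A}}} = \zentrum{\alg{A}}$, so the normal subgroup $\widehat{\GR{GL}{\zentrum{\alg{A}}}}$ divided out is literally the same object in both cases.

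For iii.) and iv.) I expect only routine verifications. Abelianness in iii.) is immediate from ii.): for $\msf{a}, \msf{b} \in \GR{GL}{H,\alg{A}}$ the values $\msf{a}(h_{\sss (1)})$ and $\msf{b}(h_{\sss (2)})$ are central, hence commute, and cocommutativity gives $(\msf{a} \ast \msf{b})(h) = \msf{a}(h_{\sss (1)})\msf{b}(h_{\sss (2)}) = \msf{b}(h_{\sss (1)})\msf{a}(h_{\sss (2)}) = (\msf{b} \ast \msf{a})(h)$; the three remaining groups, being a subgroup and two quotients, inherit commutativity. For iv.), I would check that an algebra homomorphism $\msf{a}: H \to \zentrum[H]{\alg{A}}$ satisfies i.)--iii.): normalization is the unital property, the action condition follows since $H$-invariance turns $g_{\sss (2)} \act \msf{a}(h)$ into $\varepsilon(g_{\sss (2)})\msf{a}(h)$, and iii.) follows from centrality plus cocommutativity. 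Closure under $\ast$ and membership of the identity $\msf{e} = \eta \circ \varepsilon$ are direct, and the inverse formula \eqref{eq:InversesGLHA}, $\msf{a}^{-1}(h) = h_{\sss (2)} \act \msf{a}(S^{-1}(h_{\sss (1)}))$, collapses to $\msf{a}(S^{-1}(h)) = \msf{a}(S(h))$ precisely because $H$-invariance removes the outer action and $S^2 = \id$ in the cocommutative case (Remark~\ref{Bemerkung:Antipode}). For $^\ast$-homomorphisms, the unitarity condition iv.) reduces via $\msf{a}(S(h_{\sss (2)})^{\ast})^{\ast} = \msf{a}(S(h_{\sss (2)}))$ and the antipode axiom to $\msf{a}(h_{\sss (1)} S(h_{\sss (2)})) = \varepsilon(h) 1_{\sss \alg{A}}$, placing them in $\GR{U}{H,\alg{A}}$.

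The main obstacle, such as it is, is bookkeeping: every nontrivial step in i.) and ii.) hinges on correctly permuting the legs of the threefold coproduct and matching them to the antipode identities, where it is easy to misalign the Sweedler indices. The conceptual content is entirely concentrated in the single use of cocommutativity in each of i.) and ii.); once that permutation is carried out correctly, the remaining manipulations are forced and the statements of iii.) and iv.) are essentially corollaries.
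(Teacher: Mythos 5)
Your proposal is correct and follows essentially the same route as the paper: the central step in both i.) and ii.) is the identical telescoping argument (insert $\varepsilon(h_{\sss (2)})$ via $h_{\sss (2)}S(h_{\sss (3)})$, apply the module-algebra axiom respectively the Modulbedingung, permute Sweedler legs by cocommutativity, and collapse with the antipode), and iii.) and iv.) are the same routine verifications. The only difference is that you spell out part i.) explicitly and check the reverse inclusion in ii.), both of which the paper leaves as "klar".
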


\begin{proof}
  Der erste Teil der Proposition ist klar. F"ur den zweiten rechnen
  wir nach
  \begin{align*}
      \msf{a}(h)b & = \msf{a}(h_{\sss (1)})\varepsilon(h_{\sss (2)}) b
      = \msf{a}(h_{\sss (1)}) ((h_{\sss (2)}S(h_{\sss (3)}))\act b) =
      (h_{\sss (1)} \act (S(h_{\sss (3)}) \act b)\msf{a}(h_{\sss (2)})
      = \varepsilon(h_{\sss (1)}) b \msf{a}(h_{\sss (2)})\\ & = b \msf{a}(h),
  \end{align*}
dabei nutzen wir sowohl die Kokommutativit"at der \Name{Hopf}-Algebra
wie auch die Modulbedingung aus Definition
\ref{Definition:GLHAundUHA}. Der dritte Teil ist eine Konsequenz dessen
und f"ur den vierten gilt per Definition $\msf{a}: H \to \zentrum[H]{\alg{A}}$,
$\msf{a}(1_{\sss H}) = 1_{\sss \alg{A}}$ und 
\begin{align*}
    \msf{a}(gh) = \msf{a}(g)\msf{a}(h) = \msf{a}(g_{\sss (1)})
    \varepsilon (g_{\sss (2)}) \msf{a}(h) = \msf{a}(g_{\sss
      (1)})(g_{\sss (2)} \act \msf{a}(h)),
\end{align*}
da $\msf{a}(h)$ $H$-invariant ist. Au"serdem ist 
\begin{align*}
    (h_{\sss (1)} \act b) \msf{a}(h_{\sss (2)}) = \msf{a}(h_{\sss
      (2)})(h_{\sss (1)} \act b) = \msf{a}(h_{\sss (1)})(h_{\sss (2)}
    \act b),
\end{align*}
da $\msf{a}(h_{\sss (2)})$ zentral und $H$ kokommutativ ist. F"ur den
Fall, da"s $\msf{a}$ ein $^\ast$-Homomorphismus ist, gilt
\begin{align*}
    \msf{a}(h_{\sss (1)}) (\msf{a}(S(h_{\sss (2)})^{\ast}))^{\ast} =
    \msf{a}(h_{\sss (1)}) \msf{a}(S(h_{\sss (2)})) = \msf{a}(h_{\sss
      (1)}S(h_{\sss (2)})) = \varepsilon(h) 1_{\sss \alg{A}}.
\end{align*}
Es ist leicht zu sehen, da"s f"ur $\msf{a},\msf{b}: H \to
\zentrum[H]{\alg{A}}$ das Konvolutionsprodukt $\msf{a} \ast \msf{b}$
wieder ein Homomorphismus ist, der Werte in $\zentrum[H]{\alg{A}}$
annimmt. Ferner gilt 
\begin{align*}
    \msf{a}^{-1}(h) = \varepsilon(h_{\sss (1)}) \msf{a}(S^{-1}(h_{\sss
      (2)}) = \msf{a}(S^{-1}(h)) = \msf{a}(S(h)).
\end{align*}
Da aufgrund der Kokommutativit"at von $H$ f"ur die Antipode
$S^{2} = \id$ gilt, und $\msf{a}(h)$ invariant ist, ist auch $\msf{a}^{-1}$ ein
Homomorphismus, falls $\msf{a}$ ein $^\ast$-Homomorphismus ist, da die Antipode $S$ mit der
$^\ast$-Involution vertauscht.
\end{proof}

Die Algebrahomomorphismen 
\begin{align}
\chi: H \to \ring{C},
\end{align}
 also die Charaktere\index{Charakter} von $H$, tragen immer zur Gruppe
 $\GR{GL}{H,\alg{A}}$ bei. Ist $\chi$ zus"atzlich ein $^\ast$-Ho\-mo\-mor\-phis\-mus so nennen
 wir ihn einen {\em unit"aren
   Charakter}\index{Charakter!unitaerer@unit\"arer}. Falls das Zentrum
 der Algebra $\alg{A}$ trivial ist, dann bilden die Charaktere von $H$ die
 ganze Gruppe $\GR{GL}{H,\alg{A}}$. Dies wollen wir in der folgenden
 Proposition festhalten.

\begin{proposition}
    Sei $H$ eine kokommutative \Name{Hopf}-Algebra.
    \begin{compactenum}
    \item "Uber die Abbildung $\chi \mapsto
        \msf{a}^{\sss \chi}$ mit $\msf{a}^{\sss \chi}(h) = \chi(h)
        1_{\sss \alg{A}}$ bilden die Charaktere von $H$ eine Untergruppe von
        $\GR{GL}{H, \alg{A}}$, und die unit"aren Charaktere von $H$
        bilden eine Untergruppe von $\GR{U}{H, \alg{A}}$.
\item Ist das Zentrum von $\alg{A}$ trivial, d.~h.~$\zentrum{\alg{A}} =
    \ring{C}1_{\sss \alg{A}}$, dann ist jedes Element von $\GR{GL}{H,
      \alg{A}} = \GRn{GL}{H, \alg{A}}$ ein Charakter und jedes Element
    von $\GR{U}{H, \alg{A}} = \GRn{U}{H, \alg{A}}$  ist ein unit"arer
    Charakter. 
    \end{compactenum}
\end{proposition}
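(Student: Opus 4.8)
The plan is to treat the two parts separately: part~i) reduces to a direct verification of the conditions of Definition~\ref{Definition:GLHAundUHA} together with closure under the convolution product, while part~ii) exploits the preceding proposition, which in the cocommutative case forces every element of $\GR{GL}{H,\alg{A}}$ to take central values.

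First I would check that for a character $\chi\colon H\to\ring{C}$ the map $\msf{a}^{\sss \chi}(h)=\chi(h)1_{\sss\alg{A}}$ lies in $\GR{GL}{H,\alg{A}}$. Normalization is immediate from $\chi(1_{\sss H})=1_{\sss\ring{C}}$. For the action condition I would compute $\msf{a}^{\sss\chi}(g_{\sss(1)})(g_{\sss(2)}\act\msf{a}^{\sss\chi}(h))=\chi(g_{\sss(1)})\varepsilon(g_{\sss(2)})\chi(h)1_{\sss\alg{A}}=\chi(g)\chi(h)1_{\sss\alg{A}}$, using $h\act 1_{\sss\alg{A}}=\varepsilon(h)1_{\sss\alg{A}}$ and the counit axiom, which equals $\chi(gh)1_{\sss\alg{A}}=\msf{a}^{\sss\chi}(gh)$ since $\chi$ is multiplicative. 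The module condition then reduces to $\chi(h_{\sss(2)})(h_{\sss(1)}\act b)=\chi(h_{\sss(1)})(h_{\sss(2)}\act b)$, and this is exactly where cocommutativity $\Deltaop=\Delta$ enters. Next I would observe $\msf{a}^{\sss\chi}\ast\msf{a}^{\sss\psi}=\msf{a}^{\sss\chi\ast\psi}$, so $\chi\mapsto\msf{a}^{\sss\chi}$ is a group homomorphism; since the convolution $\chi\ast\psi$ of two characters is again a character (by commutativity of $\ring{C}$), with neutral element $\varepsilon$ and inverse $\chi\circ S$ (from $\chi(h_{\sss(1)})\chi(S(h_{\sss(2)}))=\chi(h_{\sss(1)}S(h_{\sss(2)}))=\varepsilon(h)$), the image is a subgroup.

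For the unitary statement in part~i) I would verify condition~iv) for $\msf{a}^{\sss\chi}$ when $\chi$ is a $^\ast$-homomorphism, i.e.\ $\chi(x^\ast)=\cc{\chi(x)}$: using antilinearity of $^\ast$ one gets $(\msf{a}^{\sss\chi}(S(h_{\sss(2)})^\ast))^\ast=\chi(S(h_{\sss(2)}))1_{\sss\alg{A}}$, so the left side of iv) collapses to $\chi(h_{\sss(1)}S(h_{\sss(2)}))1_{\sss\alg{A}}=\varepsilon(h)1_{\sss\alg{A}}$. That unitary characters are closed under convolution follows from $\Delta(h^\ast)=\Delta(h)^{\ast\otimes\ast}$, and closure under inversion uses $S(h^\ast)=S(h)^\ast$, which holds because $S^2=\id$ for cocommutative $H$. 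For part~ii) I would invoke the previous proposition: in the cocommutative case each $\msf{a}\in\GR{GL}{H,\alg{A}}$ satisfies $\msf{a}(h)\in\zentrum{\alg{A}}$. With $\zentrum{\alg{A}}=\ring{C}1_{\sss\alg{A}}$ this means $\msf{a}(h)=\chi(h)1_{\sss\alg{A}}$ for a unique $\ring{C}$-linear $\chi\colon H\to\ring{C}$ (uniqueness and linearity because $z\mapsto z1_{\sss\alg{A}}$ is injective and $\msf{a}$ is linear). The action condition then forces $\chi(gh)1_{\sss\alg{A}}=\chi(g)\chi(h)1_{\sss\alg{A}}$ and normalization forces $\chi(1_{\sss H})=1_{\sss\ring{C}}$, so $\chi$ is a character and $\msf{a}=\msf{a}^{\sss\chi}$; the identity $\GR{GL}{H,\alg{A}}=\GRn{GL}{H,\alg{A}}$ is Bemerkung~\ref{Bemerkung:TrivialesZentrumGleichheitderGruppen}. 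For $\msf{a}\in\GR{U}{H,\alg{A}}$ I would read condition~iv) as the statement that $h\mapsto\cc{\chi(S(h)^\ast)}$ is the convolution inverse of $\chi$; comparing with $\chi^{-1}=\chi\circ S$ and substituting $k=S(h)$ (legitimate since $S$ is bijective) yields $\chi(k^\ast)=\cc{\chi(k)}$, so $\chi$ is a unitary character.

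The computations are all routine Sweedler-notation manipulations, so I do not expect a single hard step; the points demanding care are using cocommutativity exactly where it is needed (the module condition in part~i), and the bijectivity of $S$ with $S^2=\id$ in the unitary arguments), and correctly reading condition~iv) in part~ii) as the assertion that $h\mapsto\cc{\chi(S(h)^\ast)}$ is the convolution inverse of $\chi$, so that comparison with $\chi^{-1}=\chi\circ S$ \emph{extracts} the $^\ast$-compatibility of $\chi$ rather than leaving it as an extra hypothesis.
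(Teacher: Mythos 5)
Your proof is correct. The paper prints no proof for this proposition, and the intended route is visibly to read it off from the proposition immediately preceding it: since $\ring{C}1_{\sss \alg{A}}$ is central and $H$-invariant (because $h\act 1_{\sss \alg{A}}=\varepsilon(h)1_{\sss \alg{A}}$), the map $\msf{a}^{\sss \chi}=\eta\circ\chi$ is an algebra homomorphism $H\to\zentrum[H]{\alg{A}}$, so part \textit{i.)} is a special case of part \textit{iv.)} of that proposition, while part \textit{ii.)} follows from its part \textit{ii.)} (values of any $\msf{a}\in\GR{GL}{H,\alg{A}}$ are central) combined with Bemerkung~\ref{Bemerkung:TrivialesZentrumGleichheitderGruppen}. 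Your argument takes the slightly more self-contained road of verifying the four conditions of Definition~\ref{Definition:GLHAundUHA} directly, and every delicate point is handled correctly: cocommutativity is invoked exactly where it is needed (only in the module condition; the normalization and action conditions hold for any character), the closure of unitary characters under inversion correctly uses $S^{2}=\id$ to extract $S(h^{\ast})=S(h)^{\ast}$ from the axiom $S(S(h^{\ast})^{\ast})=h$, and in part \textit{ii.)} you correctly read the unitarity condition \textit{iv.)} as exhibiting $h\mapsto\cc{\chi(S(h)^{\ast})}$ as a convolution inverse of $\chi$, so that comparison with $\chi^{-1}=\chi\circ S$ and the substitution $k=S(h)$ \emph{derives} the $^{\ast}$-compatibility rather than assuming it. What the direct verification buys is independence from the preceding proposition for part \textit{i.)}; what it costs is a little repetition of computations the paper has already done in greater generality. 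Either way the result stands.
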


\section{Cross-Produktalgebren}
\label{sec:CrossProdukte}
\index{Algebra!Cross-Produktalgebra|textbf}
\index{Cross-Produktalgebra}

\begin{definition}[Die Cross-Produktalgebra {$\cross{\alg{A}}{H}$}]
\label{Definition:CrossProdukt}
Gegeben seien eine $H$-Linksmodulalgebra $(H, \alg{A}, \neact)$.
Die assoziative Algebra {$\cross{\alg{A}}{H}=(\alg{A} \otimes H, \cdot$)}
sei das Tensorprodukt  $\alg{A} \otimes H$ ausgestattet mit der
Multiplikation $\cdot$, so da"s f"ur alle $g,h \in H$ und $a,b \in
\alg{A}$ gilt 
\begin{align}
(a \otimes g)\cdot (b \otimes h) = a(g_{\sss (1)}\act b) \otimes
g_{\sss (2)} h.  
\end{align} 
Wir bezeichnen eine solche Algebra als {\em Cross-Produktalgebra}
oder als {\em Cross-Produkt}. In der Literatur findet man auch den
Begriff des {\em Smash-Produkts}.  
\end{definition}

\begin{bemerkungen}
~\vspace{-5mm}
\begin{compactenum}
\item Die Assoziativit"at l"a"st sich leicht nachrechnen. Dazu brauchen
wir die Koassoziativit"at der \Name{Hopf}-Algebra, sowie die
Definitionen des $H$-Linksmoduls $(H,\alg{A},\neact)$. 

\item Ist die Algebra $\alg{A}$ mit einem Einselement $1_{\sss \alg{A}}$
  ausgestattet, so ist $1_{\sss \alg{A}}\otimes 1_{H}$ das Einselement
  in $\cross{\alg{A}}{H}$, denn offensichtlich gilt
  \begin{align*}
     (1_{\sss \alg{A}}\otimes 1_{\sss H})\cdot (a \otimes h) & = 1_{\sss
       \alg{A}} (1_{\sss H} \act a)  \otimes 1_{\sss H} h = a \otimes
     h = a (h \act 1_{\sss \alg{A}}) \otimes h 1_{\sss H} =  (a
     \otimes h)\cdot (1_{\sss \alg{A}}\otimes 1_{\sss H}).      
  \end{align*}
\end{compactenum}
\end{bemerkungen}

Im Rahmen von $^\ast$-Algebren und \Name{Hopf}-$^\ast$-Algebren
m"ochten wir auch eine $^\ast$-Struktur auf einer
Cross-Produktalgebra $\cross{\alg{A}}{H}$ definieren. Diese bekommen
wir auf nat"urliche Weise geschenkt, wenn $\alg{A}$ eine
$^\ast$-Algebra und $H$ eine \Name{Hopf}-$^\ast$-Algebra sind. 

\begin{lemma}[$^\ast$-Struktur auf {$\cross{\alg{A}}{H}$}]
\label{Lemma:SternStrukturAufCrossProdukt}
Sei $(H,\alg{A},\neact)$ eine $H$-Linksmodulalgebra und sowohl
$\alg{A}$ als auch $H$ sei mit einer $^\ast$-Struktur
versehen, und $\neact$ sei eine $^\ast$-Wirkung\footnote{Manchmal bezeichnen wir eine derartige Modulalgebra
  auch als kurz als $^\ast$-Modulalgebra. Aus dem Zusammenhang sollte
  klar sein was gemeint ist.}, so wird
$\cross{\alg{A}}{H}$ zu einer $^\ast$-Algebra mittels 
\begin{align}
(a\otimes h)^\ast = h_{\sss (1)}^{\ast} \act a^{\ast} \otimes h_{\sss
  (2)}^{\ast}. 
\end{align}
\end{lemma}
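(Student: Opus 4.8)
The plan is to verify directly that the prescribed map is a $^\ast$-involution on $\cross{\alg{A}}{H}$, i.e.\ that it is antilinear, involutive, an antiautomorphism of the product, and unital. Compatibility with the unit $1_{\sss \alg{A}}\otimes 1_{\sss H}$ is immediate, since $(1_{\sss \alg{A}}\otimes 1_{\sss H})^\ast = (1_{\sss H})_{\sss (1)}^\ast\act 1_{\sss \alg{A}}^\ast\otimes (1_{\sss H})_{\sss (2)}^\ast = 1_{\sss H}\act 1_{\sss \alg{A}}\otimes 1_{\sss H} = 1_{\sss \alg{A}}\otimes 1_{\sss H}$ using $\Delta(1_{\sss H})=1_{\sss H}\otimes 1_{\sss H}$ and $1_{\sss H}\act 1_{\sss \alg{A}}=1_{\sss \alg{A}}$; and antilinearity is immediate because $a\mapsto a^\ast$ is antilinear and the action $\act$ is $\ring{C}$-linear in its algebra argument. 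Throughout I would use the Hopf-$^\ast$-algebra axioms of Definition~\ref{Definition:HopfSternAlgebra}, in particular $\Delta(h^\ast)=h_{\sss (1)}^\ast\otimes h_{\sss (2)}^\ast$, $(gh)^\ast=h^\ast g^\ast$ and $S(S(h^\ast)^\ast)=h$, the module-algebra axioms of Definition~\ref{Definition:AxiomeWirkungHopf}, and crucially the $^\ast$-action identity $(h\act a)^\ast=S(h)^\ast\act a^\ast$.

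For the antiautomorphism property I would start from $(a\otimes g)(b\otimes h)=a(g_{\sss (1)}\act b)\otimes g_{\sss (2)}h$, apply the involution, and expand $\Delta(g_{\sss (2)}h)=g_{\sss (2)}h_{\sss (1)}\otimes g_{\sss (3)}h_{\sss (2)}$ together with $\big(a(g_{\sss (1)}\act b)\big)^\ast=(S(g_{\sss (1)})^\ast\act b^\ast)a^\ast$. Splitting the resulting $H$-action through the module-algebra axioms produces a factor $(g_{\sss (2)}^\ast S(g_{\sss (1)})^\ast)\act b^\ast=(S(g_{\sss (1)})g_{\sss (2)})^\ast\act b^\ast$. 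The decisive step is then the antipode identity $\sum S(g_{\sss (1)})g_{\sss (2)}\otimes g_{\sss (3)}=1_{\sss H}\otimes g$, which collapses this factor to $b^\ast$ and relabels the remaining legs; a final use of $h_{\sss (1)}^\ast\act\big(b^\ast(g_{\sss (1)}^\ast\act a^\ast)\big)$ via axiom ii.)\ reproduces exactly $(b\otimes h)^\ast(a\otimes g)^\ast=(h_{\sss (1)}^\ast\act b^\ast)\big((h_{\sss (2)}^\ast g_{\sss (1)}^\ast)\act a^\ast\big)\otimes h_{\sss (3)}^\ast g_{\sss (2)}^\ast$, which I would compute separately from the definition of the product in $\cross{\alg{A}}{H}$.

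For involutivity I would apply $^\ast$ twice. Writing $(a\otimes h)^\ast=(h_{\sss (1)}^\ast\act a^\ast)\otimes h_{\sss (2)}^\ast$ and using $\Delta(h_{\sss (2)}^\ast)=h_{\sss (2)}^\ast\otimes h_{\sss (3)}^\ast$ gives $\big((a\otimes h)^\ast\big)^\ast=h_{\sss (2)}\act(h_{\sss (1)}^\ast\act a^\ast)^\ast\otimes h_{\sss (3)}$; the $^\ast$-action axiom turns this into $(h_{\sss (2)}\,S(h_{\sss (1)}^\ast)^\ast)\act a\otimes h_{\sss (3)}$. Here I would invoke the consequence $S(h^\ast)^\ast=S^{-1}(h)$ of axiom v.)\ to rewrite $S(h_{\sss (1)}^\ast)^\ast=S^{-1}(h_{\sss (1)})$, and then the $S^{-1}$-antipode relation $\sum h_{\sss (2)}S^{-1}(h_{\sss (1)})\otimes h_{\sss (3)}=1_{\sss H}\otimes h$ (valid since $S$ is invertible on a Hopf-$^\ast$-algebra) to recover $a\otimes h$.

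The main obstacle I expect is purely combinatorial: bookkeeping the Sweedler legs correctly across the repeated applications of coassociativity, of $\Delta(h^\ast)=h_{\sss (1)}^\ast\otimes h_{\sss (2)}^\ast$ and of $(gh)^\ast=h^\ast g^\ast$, and ensuring that the $^\ast$-action axiom is applied with exactly the right argument so that no spurious $S$ or complex conjugate survives. I would deliberately use the antipode relations in the clean tensor form $\sum S(g_{\sss (1)})g_{\sss (2)}\otimes g_{\sss (3)}=1_{\sss H}\otimes g$ rather than first isolating a scalar $\varepsilon$-factor: this avoids an antilinearity subtlety (the conjugate $\cc{\varepsilon(g_{\sss (1)})}$ that would otherwise appear once $^\ast$ hits the leg carrying that scalar) and keeps the whole computation transparent.
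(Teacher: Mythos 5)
Your proposal is correct and follows essentially the same route as the paper's proof: a direct verification of involutivity, the antiautomorphism property, and invariance of the unit, using the Hopf-$^\ast$-axioms, the module-algebra axioms, the $^\ast$-action identity $(h\act a)^{\ast}=S(h)^{\ast}\act a^{\ast}$, and the antipode relations (including $S(h^{\ast})^{\ast}=S^{-1}(h)$ for involutivity). The only difference is cosmetic: the paper collapses the antipode contractions to scalars $\varepsilon(g_{\sss(1)})$ and then tracks the conjugate $\cc{\varepsilon(g_{\sss(1)})}$, whereas you keep the identity in tensor form to avoid exactly that bookkeeping — both computations are the same argument.
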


\begin{proof}
Wir rechnen nach, da"s diese Definition allen Strukturen einer
$^\ast$-Algebra gerecht wird. 
\begin{align*}
   (a\otimes h)^{\ast \ast} & = (h_{\sss (1)}^{\ast} \act a^{\ast}
   \otimes h_{\sss (2)}^{\ast})^\ast \\ &=  h_{\sss (2)(1)}^{\ast
     \ast} \act (h_{\sss (1)}^{\ast} \act a^{\ast} )^{\ast} \otimes
   h_{\sss (2)(2)}^{\ast \ast} \\& =  h_{\sss (2)(1)} \act (S(h_{\sss
     (1)}^{\ast})^\ast \act a^{\ast \ast}) \otimes h_{\sss (2)(2)} \\
   & = \underbrace{S^{-1}(h_{\sss (1)}) h_{\sss
       (2)(1)}}_{\varepsilon(h_{\sss (1)})} \act a \otimes h_{\sss
     (2)(2)} \\ & = a \otimes h. \\
\intertext{Desweiteren gilt}
 \left( (a \otimes g)\cdot(b \otimes h) \right)^\ast &= (a(g_{\sss
   (1)}\act b) \otimes g_{\sss (2)} h)^{\ast} \\ &= (g_{\sss (2)(1)}
 h_{\sss (1)})^{\ast} \act (a(g_{\sss (1)}\act b))^{\ast} \otimes
 h_{\sss (2)}^{\ast} g_{\sss (2)(2)}^{\ast} \\ &= (g_{\sss (2)(1)}
 h_{\sss (1)})^{\ast} \act (g_{\sss (1)}\act b)^{\ast} a^{\ast}
 \otimes h_{\sss (2)}^{\ast} g_{\sss (2)(2)}^{\ast} \\ &= h_{\sss
   (1)}^{\ast} g_{\sss (2)(1)}^{\ast} \act (S(g_{\sss (1)})^{\ast}
 \act b^\ast) a^{\ast} \otimes h_{\sss (2)}^{\ast} g_{\sss
   (2)(2)}^{\ast} \\ &= ((h_{\sss (1)}^{\ast} g_{\sss
   (2)(1)}^{\ast})_{\sss (1)} \act (S(g_{\sss (1)})^{\ast} \act
 b^\ast)) ((h_{\sss (1)}^{\ast}g_{\sss (2)(1)}^{\ast})_{\sss (2)} \act
 a^{\ast}) \otimes h_{\sss (2)}^{\ast} g_{\sss (2)(2)}^{\ast} \\ &=
 (h_{\sss (1)(1)}^{\ast} (\underbrace{S(g_{\sss (1)})g_{\sss
     (2)(1)(1)}}_{\varepsilon(g_{\sss (1)})})^{\ast} \act b^\ast)
 (h_{\sss (1)(2)}^{\ast}g_{\sss (2)(1)(2)}^{\ast} \act a^{\ast})
 \otimes h_{\sss (2)}^{\ast} g_{\sss (2)(2)}^{\ast} \\ &= (h_{\sss
   (1)}^{\ast} \act b^{\ast})(h_{\sss (2)}^{\ast}
 \cc{\varepsilon(g_{\sss (1)})}g_{\sss (2)}^{\ast} \act a^{\ast})
 \otimes h_{\sss (3)}^{\ast} g_{\sss (3)}^{\ast} \\ &= (h_{\sss
   (1)}^{\ast} \act b^{\ast} \otimes h_{\sss (2)}^{\ast}) \cdot
 (g_{\sss (1)}^{\ast} \act a^{\ast} \otimes g_{\sss (2)}^{\ast}) \\& =
 (b \otimes h)^{\ast} \cdot (a \otimes g)^{\ast}.  
\intertext{Zuletzt zeigt man noch, da"s das Einselement unter der
$^\ast$-Involution invariant ist}
(1_{\sss \alg{A}}\otimes 1_{\sss H})^{\ast} & = 1_{\sss H} \act 1_{\sss
   \alg{A}} \otimes 1_{\sss H} = 1_{\sss \alg{A}}\otimes 1_{\sss H},
\end{align*}
womit die Behauptung bewiesen w"are.
\end{proof}

\begin{lemma}
Sei $\Phi: \alg{A} \to \alg{B}$ ein $H$-"aquivarianter
$^\ast$-Homomorphismus, dann ist
\begin{align}
    \label{eq:CrossProduktSternHomomorphismus}
    \Phi \otimes \id: \cross{\alg{A}}{H} \to \cross{\alg{B}}{H}
\end{align}
ein $^\ast$-Homomorphismus, der insbesondere einen
Gruppoidhomomorphismus $\cross{\cdot}{H}: \starIsoH \to \starIso$
induziert, so da"s die Identit"aten $\alg{A}$ in $\starIsoH$ auf ihre
Cross-Produktalgebren $\cross{\alg{A}}{H}$ und die Pfeile $\Phi$ auf
$\Phi \otimes \id$ abgebildet werden.
\end{lemma}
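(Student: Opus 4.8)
The plan is to verify directly that $\Phi \otimes \id$ respects all three structures carried by the cross product algebra — multiplication, unit, and $^\ast$-involution — and then to promote this to a statement about the groupoids $\starIsoH$ and $\starIso$ by checking functoriality. The guiding observation is that the only place where $\Phi$ meets a non-trivial structure in the defining formulas of $\cross{\alg{A}}{H}$ is where a Hopf action $\act$ intervenes, so that $H$-equivariance of $\Phi$ is exactly the ingredient that allows $\Phi$ to slide past the action; once this is isolated, everything reduces to short Sweedler-notation computations.

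First I would treat multiplicativity. Applying $\Phi \otimes \id$ to the product $(a \otimes g)\cdot(b \otimes h) = a(g_{\sss (1)} \act b) \otimes g_{\sss (2)} h$ yields $\Phi\big(a(g_{\sss (1)} \act b)\big) \otimes g_{\sss (2)} h$. Since $\Phi$ is an algebra homomorphism and $H$-equivariant, one has $\Phi\big(a(g_{\sss (1)} \act b)\big) = \Phi(a)\,\Phi(g_{\sss (1)} \act b) = \Phi(a)\,(g_{\sss (1)} \act \Phi(b))$, which is precisely the first tensor factor of $(\Phi(a) \otimes g)\cdot(\Phi(b) \otimes h)$. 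Preservation of the unit is immediate from $\Phi(1_{\sss \alg{A}}) = 1_{\sss \alg{B}}$, since $(\Phi \otimes \id)(1_{\sss \alg{A}} \otimes 1_{\sss H}) = 1_{\sss \alg{B}} \otimes 1_{\sss H}$.

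Next I would handle the $^\ast$-involution, using the formula of Lemma~\ref{Lemma:SternStrukturAufCrossProdukt}. Here one computes $(\Phi \otimes \id)\big((a \otimes h)^\ast\big) = \Phi(h_{\sss (1)}^\ast \act a^\ast) \otimes h_{\sss (2)}^\ast$, and again $H$-equivariance together with $\Phi(a^\ast) = \Phi(a)^\ast$ gives $\Phi(h_{\sss (1)}^\ast \act a^\ast) = h_{\sss (1)}^\ast \act \Phi(a)^\ast$, so that the result equals $h_{\sss (1)}^\ast \act \Phi(a)^\ast \otimes h_{\sss (2)}^\ast = \big((\Phi \otimes \id)(a \otimes h)\big)^\ast$. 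This shows that $\Phi \otimes \id$ is a $^\ast$-homomorphism; the target indeed lies in $\starIso$ because the objects $\cross{\alg{A}}{H}$ are genuine $^\ast$-algebras by Lemma~\ref{Lemma:SternStrukturAufCrossProdukt}.

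Finally, to obtain the groupoid homomorphism I would verify functoriality. For an $H$-equivariant $^\ast$-isomorphism $\Phi$, the inverse $\Phi^{-1}$ is again $H$-equivariant and $^\ast$, hence $\Phi \otimes \id$ is invertible with inverse $\Phi^{-1} \otimes \id$ and is therefore an isomorphism in $\starIso$. The identities $\id_{\alg{A}} \otimes \id = \id_{\cross{\alg{A}}{H}}$ and $(\Psi \circ \Phi) \otimes \id = (\Psi \otimes \id) \circ (\Phi \otimes \id)$ are checked on elementary tensors $a \otimes h$ and follow at once from the corresponding identities for $\Phi$ and $\Psi$, so objects $\alg{A}$ are sent to $\cross{\alg{A}}{H}$ and arrows $\Phi$ to $\Phi \otimes \id$ compatibly with composition. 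I do not expect a genuine obstacle: the entire argument is a chain of elementary verifications, and the single conceptual point — that each appearance of the action is neutralised by equivariance — has already been isolated above. The only mild care required is to track the coproducts correctly in the $^\ast$-computation.
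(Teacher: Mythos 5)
Your proof is correct and is precisely the routine verification that the paper itself omits (the lemma is stated without proof in the source): multiplicativity and compatibility with the involution each reduce to one use of $H$-equivariance to slide $\Phi$ past the action, and functoriality follows since identities, compositions and inverses of $H$-equivariant $^\ast$-isomorphisms behave as expected. No gaps; the Sweedler bookkeeping in the $^\ast$-computation is handled correctly.
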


Es liegt nahe, die beiden injektiven (abgesehen von Torsionseffekten\index{Torsionseffekt}
durch das Tensorprodukt $\otimes$ "uber $\ring{C}$, von denen wir aber mal absehen
wollen) $^\ast$-Homomorphismen  
\begin{align}
\label{eq:EinbettungAinAxH}
\imath: \alg{A} \ni a \mapsto  a \otimes 1_{\sss H} \in \cross{\alg{A}}{H}, \\
\label{eq:EinbettungHinAxH}
\jmath:  H \ni h \mapsto 1_{\sss \alg{A}} \otimes h \in  \cross{\alg{A}}{H}
\end{align}
zu definieren. Sie liefern eine Einbettung der Algebren $\alg{A}$ und
der \Name{Hopf}-Algebra $H$ in die Cross-Produktalgebra. Offensichtlich sind die Algebren
$\alg{A}\otimes 1_{\sss H}$ und $1_{\sss 
  \alg{A}} \otimes H$ Unteralgebren der Cross-Produktalgebra
$\cross{\alg{A}}{H}$. 
Auf der Cross-Produktalgebra $\cross{\alg{A}}{H}$ existiert auf nat"urliche
Weise eine $^\ast$-Links\-wir\-kung von $H$, die durch 
\begin{align}
g \act (a\otimes h) = (g_{\sss (1)}\act a) \otimes g_{\sss (2)} h
S(g_{\sss (3)}) 
\end{align} 
gegeben ist. Unter Verwendung von Definition
\eqref{eq:EinbettungHinAxH} kann man die Wirkung als eine \glqq
innere\grqq{} schreiben:  

\begin{align}
g \act (a\otimes h)=\jmath(g_{\sss (1)}) (a\otimes h) \jmath(S(g_{\sss
  (2)})). 
\end{align}

Der $^\ast$-Homomorphismus $\imath:\alg{A}
\to \cross{\alg{A}}{H}$ ist $H$-"aquivariant, d.~h.~es gilt
$h \act \imath(a)=\imath(h \act a)$. 

Die Abbildungen \eqref{eq:EinbettungAinAxH} und \eqref{eq:EinbettungHinAxH} 
 k"onnen wir nun die Cross-Produktalgebra
als ein {\em universelles Objekt} interpretieren. 

\begin{definition}[Universelle Objekte]
\label{Definition:UniverselleObjekte}
\index{Objekt!universelles}
Seien $\kat{A}$ und $\kat{B}$ zwei Kategorien und $F:\kat{A}
\to \kat{B}$ ein Funktor von $\kat{A}$ nach $\kat{B}$, ferner sei
$B\in \Obj (\kat{B})$. Eine {\em Universelle von $B$} bez"uglich
des Funktors $F$ ist das Paar $(U,u)$, wobei $U \in \Obj(\kat{A})$ ein Objekt in
$\kat{A}$ und $u: B \to FU$ ein Morphismus von $B$ nach $FU$ ist, so da"s
falls $g:B \to F \tilde{B}$ ein beliebiger Morphismus von $B$ nach $F\tilde{B}$ ist, ein
eindeutiger Morphismus $\tilde{g}: U \to \tilde{B}$ von $U$ nach $\tilde{B} \in \Obj
(\kat{B})$ existiert, und das Diagramm $$\bfig
\ptriangle<600,600>[B`FU`F{\tilde{B}};u`g`\tilde{g}]\efig$$ 
kommutiert. Man nennt $U$ ein {\em universelles $\kat{A}$-Objekt} f"ur
$B$ und $u$ den korrespondierenden {\em universellen Morphismus}.\index{Morphismus!universeller}  
\end{definition}

\begin{proposition}[Cross-Produktalgebra als universelles Objekt]
Sei $\alg{B}$ eine $^\ast$-Algebra mit Einselement mit den beiden
$^\ast$-Homomorphismen $\imath_{\sss \alg{B}}: \alg{A} \to \alg{B}$
und $\jmath_{\sss \alg{B}}: H \to \alg{B}$, so da"s $\imath_{\sss
  \alg{B}}(g\neact a)= \jmath_{\sss \alg{B}}(g_{\sss (1)}) \imath_{\sss
  \alg{B}}(a) \jmath_{\sss \alg{B}}(S(g_{\sss (2)}))$, dann existiert
ein eindeutiger $^\ast$-Ho\-mo\-mor\-phis\-mus $\phi: \cross{\alg{A}}{H} \to
\alg{B}$, so da"s $\imath_{\sss \alg{B}}=\phi \circ \imath$ und
$\jmath_{\sss \alg{B}}=\phi \circ \jmath$. Dies bedeutet $\phi (a
\otimes g) = \imath_{\sss \alg{B}}(a) \jmath_{\sss \alg{B}} (g)$, und
das folgende Diagramm ist kommutativ. 
  
$$\bfig\Vtrianglepair/>`<-`>`>`>/<600,600>[\alg{A}`\cross{\alg{A}}{H}`H`\alg{B};\imath`\jmath`\imath_{\sss
  \alg{B}}`\phi`\jmath_{\sss \alg{B}}]\efig$$ 
\end{proposition}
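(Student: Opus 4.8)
Der Beweis folgt dem Standardschema für universelle Eigenschaften: Zuerst erzwinge ich die Gestalt von $\phi$ aus den geforderten Faktorisierungen, und anschließend verifiziere ich, dass die so definierte Abbildung tatsächlich ein $^\ast$-Homomorphismus ist. Da sich jedes Element von $\cross{\alg{A}}{H}$ als $a \otimes g = \imath(a) \jmath(g)$ schreiben lässt -- man rechnet $(a \otimes 1_{\sss H})\cdot(1_{\sss \alg{A}} \otimes g) = a(1_{\sss H} \act 1_{\sss \alg{A}}) \otimes g = a \otimes g$ mittels der Definition der Cross-Produktmultiplikation unmittelbar nach --, zwingen die Bedingungen $\imath_{\sss \alg{B}} = \phi \circ \imath$ und $\jmath_{\sss \alg{B}} = \phi \circ \jmath$ zusammen mit der Multiplikativität die Formel $\phi(a \otimes g) = \imath_{\sss \alg{B}}(a) \jmath_{\sss \alg{B}}(g)$. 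Dies liefert sofort die Eindeutigkeit. Für die Existenz definiere ich $\phi$ durch eben diese Formel; die Wohldefiniertheit und $\ring{C}$-Linearität erhalte ich aus der universellen Eigenschaft des Tensorprodukts, da $(a,g) \mapsto \imath_{\sss \alg{B}}(a) \jmath_{\sss \alg{B}}(g)$ bilinear ist.

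Der Kern der Rechnung ist der Nachweis der Multiplikativität. Ich werde
\begin{align*}
\phi\bigl((a \otimes g)\cdot(b \otimes h)\bigr) &= \imath_{\sss \alg{B}}\bigl(a (g_{\sss (1)} \act b)\bigr) \jmath_{\sss \alg{B}}(g_{\sss (2)} h) \\
&= \imath_{\sss \alg{B}}(a)\, \imath_{\sss \alg{B}}(g_{\sss (1)} \act b)\, \jmath_{\sss \alg{B}}(g_{\sss (2)})\, \jmath_{\sss \alg{B}}(h)
\end{align*}
berechnen und an der entscheidenden Stelle die Voraussetzung $\imath_{\sss \alg{B}}(g_{\sss (1)} \act b) = \jmath_{\sss \alg{B}}(g_{\sss (1)(1)}) \imath_{\sss \alg{B}}(b) \jmath_{\sss \alg{B}}(S(g_{\sss (1)(2)}))$ einsetzen. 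Mit der Koassoziativität $g_{\sss (1)(1)} \otimes g_{\sss (1)(2)} \otimes g_{\sss (2)} = g_{\sss (1)} \otimes g_{\sss (2)} \otimes g_{\sss (3)}$ und der Antipoden- sowie Koeins-Eigenschaft $S(g_{\sss (2)}) g_{\sss (3)} = \varepsilon(g_{\sss (2)}) 1_{\sss H}$ kollabiert der mittlere Ausdruck $\jmath_{\sss \alg{B}}(S(g_{\sss (2)})) \jmath_{\sss \alg{B}}(g_{\sss (3)})$ zu $\varepsilon(g_{\sss (2)}) 1_{\sss \alg{B}}$, so dass nach Anwendung der Koeins insgesamt $\imath_{\sss \alg{B}}(a) \jmath_{\sss \alg{B}}(g) \imath_{\sss \alg{B}}(b) \jmath_{\sss \alg{B}}(h) = \phi(a \otimes g) \phi(b \otimes h)$ verbleibt. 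Die Erhaltung des Einselements ist wegen $\phi(1_{\sss \alg{A}} \otimes 1_{\sss H}) = \imath_{\sss \alg{B}}(1_{\sss \alg{A}}) \jmath_{\sss \alg{B}}(1_{\sss H}) = 1_{\sss \alg{B}}$ klar, da $\imath_{\sss \alg{B}}$ und $\jmath_{\sss \alg{B}}$ unitär sind.

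Für die $^\ast$-Verträglichkeit benutze ich die $^\ast$-Bialgebra-Identität $\Delta(h^{\ast}) = h_{\sss (1)}^{\ast} \otimes h_{\sss (2)}^{\ast}$, um mit der Setzung $k = h^{\ast}$ den Ausdruck $\phi\bigl((a \otimes h)^{\ast}\bigr) = \imath_{\sss \alg{B}}(k_{\sss (1)} \act a^{\ast}) \jmath_{\sss \alg{B}}(k_{\sss (2)})$ zu behandeln; danach verläuft die Rechnung völlig parallel zur Multiplikativität und liefert $\jmath_{\sss \alg{B}}(h^{\ast}) \imath_{\sss \alg{B}}(a^{\ast}) = \phi(a \otimes h)^{\ast}$. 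Die Kommutativität des Diagramms ist per Konstruktion von $\phi$ gegeben. Ein echtes Hindernis erwarte ich nicht; die eigentliche Sorgfalt liegt in der korrekten Buchführung der \Name{Sweedler}-Indizes und darin, die Verträglichkeitsvoraussetzung genau an der Stelle einzusetzen, an der die $H$-Wirkung auf die Einbettungen trifft -- dort entscheidet sich, dass die Kreuzrelation der Cross-Produktmultiplikation mit der vorausgesetzten inneren Realisierung von $\neact$ durch $\imath_{\sss \alg{B}}$ und $\jmath_{\sss \alg{B}}$ zusammenpasst.
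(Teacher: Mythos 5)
Your proof is correct. The paper itself gives no proof of this proposition (the statement is followed directly by the diagram and the next example), so there is nothing to compare against; your argument is the standard one: uniqueness is forced by $a \otimes g = \imath(a)\jmath(g)$ together with multiplicativity, and existence reduces to the Sweedler computation in which the compatibility assumption $\imath_{\sss \alg{B}}(g \act b) = \jmath_{\sss \alg{B}}(g_{\sss (1)})\imath_{\sss \alg{B}}(b)\jmath_{\sss \alg{B}}(S(g_{\sss (2)}))$, coassociativity and the antipode axiom make the cross relation collapse correctly; the $^\ast$-compatibility check via $k = h^{\ast}$ runs parallel, as you say.
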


Ein weiteres Beispiel f"ur ein universelles Objekt erw"ahnten wir
schon zuvor: die uni\-ver\-sell Ein\-h"ul\-len\-de einer
\Name{Lie}-Algebra \index{Algebra!Lie-Algebra@\Name{Lie}-Algebra} \index{Lie-Algebra@\Name{Lie}-Algebra}
(siehe Beispiel \ref{Beispiel:UniverselleEinhuellendeHopf}).  

\begin{dkonstruktion}[Universell einh"ullende Algebren]
\index{Algebra!universell Einh\"ullende}
Sei $\LieAlg{g}$ eine \Name{Lie}-Algebra "uber einem K"orper
$\field{K}$. Wir nennen eine assoziative Algebra mit Einselement
$\alg{U}(\LieAlg{g})$ "uber $\field{K}$ und eine Abbildung $\alpha:
\LieAlg{g} \to \alg{U}(\LieAlg{g})$ eine {\em universell
  einh"ullende Algebra} falls die folgenden Bedingungen erf"ullt
sind: 
\begin{compactenum}
\item Die Abbildung $\alpha: \LieAlg{g} \to
    L(\alg{U}(\LieAlg{g}))$ ist ein \Name{Lie}-Algebrahomomorphismus. 
\item Falls $\alg{A}$ eine assoziative Algebra mit Einselement "uber
    $\field{K}$ und $h: \LieAlg{g} \to L(\alg{A})$ ein
    Algebrahomomorphismus ist, so existiert ein \Name{Lie}-Algebrahomomorphismus
    $k: \alg{U}(\LieAlg{g}) \to\alg{A}$ so da"s $k(1)=1_{\alg{A}}$ und
    $h=k\alpha$.  
\end{compactenum}
\end{dkonstruktion}

\begin{satz}[Existenz und Isomorphie universeller Einh"ullender]
Zu jeder \Name{Lie}-Algebra $\LieAlg{g}$ existiert eine universell
einh"ullende Algebra $\alg{U}(\LieAlg{g})$. Zwei verschiedene
universell einh"ullende Algebren zu $\LieAlg{g}$ sind stets
isomorph. 
\end{satz}

\section{Verb"ande}
\label{sec:Verbaende}

Wir wollen eine kurze Einf"uhrung in die Verbandstheorie geben. Diese
geht ein wenig "uber die von uns gebrauchten Anwendungen hinaus und  
basieren auf \citep{jacobson:1985a} und den unver"offentlichen
Aufzeichnungen von \citet{bordemann:misc}. 

\begin{definition}[Verband]
\index{Verband|textbf}
Man nennt ein Menge $V$ mit den beiden Abbildungen $\wedge, \vee: V 
\times V \to V$ einen {\em Verband}, wenn die folgenden Eigenschaften
f"ur alle $a,b,c \in V$ erf"ullt sind:

\begin{compactenum}
\item $(a \vee b) \vee c = a \vee (b \vee c)$ und $(a \wedge
        b) \wedge c = a \wedge (b \wedge c)$ (Assoziativit"at der Verkn"upfungen)
\item $a \vee b = b\vee a$ und $a \wedge b = b \wedge a$
    (Kommutativit"at der Verkn"upfungen)
\item $a \wedge a = a \vee a =a$ (Idempotenz der Verkn"upfungen)
\item $a \vee (a \wedge b) =a$ und $a \wedge (a \vee b) = a$
    (Vertr"aglichkeit der beiden Verkn"upfungen).
\end{compactenum}
\end{definition}

\begin{lemma}[Halbgeordnete Menge]
\index{Menge!halbgeordnete}
Jeder Verband $(V,\vee, \wedge)$ wird zu einer halbgeordneten Menge
verm"oge 
\begin{align}
a \le b \quad \Longleftrightarrow \quad a \wedge b = a \quad
\Longleftrightarrow \quad a \vee b = b  
\end{align} 
\end{lemma}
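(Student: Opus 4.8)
Der Plan ist, die Behauptung in zwei Teile zu zerlegen: zun"achst die "Aquivalenz der beiden angegebenen Charakterisierungen $a \wedge b = a$ und $a \vee b = b$, und danach den Nachweis, da"s die dadurch erkl"arte Relation $\le$ tats"achlich eine Halbordnung ist, also reflexiv, antisymmetrisch und transitiv. Erst wenn die "Aquivalenz gesichert ist, ist n"amlich die Relation $\le$ "uberhaupt wohldefiniert, so da"s dieser Schritt logisch zuerst kommen mu"s.

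Zuerst w"urde ich die "Aquivalenz der beiden Bedingungen zeigen, wof"ur allein die Kommutativit"at und die Vertr"aglichkeit der Verkn"upfungen (die Absorptionsgesetze) ben"otigt werden. Aus $a \wedge b = a$ folgt n"amlich $a \vee b = (a \wedge b) \vee b = b \vee (b \wedge a) = b$, wobei der letzte Schritt das Absorptionsgesetz $b \vee (b \wedge a) = b$ verwendet und zwischendurch die Kommutativit"at von $\wedge$ und $\vee$ eingeht. Umgekehrt erh"alt man aus $a \vee b = b$ durch Einsetzen sofort $a \wedge b = a \wedge (a \vee b) = a$ nach dem zweiten Absorptionsgesetz $a \wedge (a \vee b) = a$. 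Damit sind beide Bedingungen austauschbar.

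Sodann pr"ufe ich die drei Ordnungsaxiome. Die Reflexivit"at $a \le a$ ist unmittelbar die Idempotenz $a \wedge a = a$. F"ur die Antisymmetrie nehme ich $a \le b$ und $b \le a$ an, also $a \wedge b = a$ und $b \wedge a = b$; wegen der Kommutativit"at $a \wedge b = b \wedge a$ folgt dann $a = a \wedge b = b \wedge a = b$. Die Transitivit"at ergibt sich aus der Assoziativit"at: Gelten $a \wedge b = a$ und $b \wedge c = b$, so ist $a \wedge c = (a \wedge b) \wedge c = a \wedge (b \wedge c) = a \wedge b = a$, also $a \le c$.

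Eine echte Schwierigkeit erwarte ich bei diesem Beweis nicht, da es sich durchweg um direkte Umformungen aus den Verbandsaxiomen handelt. Der einzige Punkt, der etwas Aufmerksamkeit verlangt, ist die "Aquivalenz der beiden definierenden Bedingungen, da hier korrekt zwischen $\wedge$ und $\vee$ gewechselt und die beiden Absorptionsgesetze in der jeweils passenden Form eingesetzt werden m"ussen; alles "Ubrige ist reine Routine.
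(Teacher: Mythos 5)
Ihr Beweis ist korrekt und verläuft im Kern genauso wie der der Arbeit: Reflexivität über die Idempotenz, Antisymmetrie über die Kommutativität, Transitivität über die Assoziativität von $\wedge$. Sie ergänzen zusätzlich den expliziten Nachweis der Äquivalenz $a \wedge b = a \Leftrightarrow a \vee b = b$ mittels der Absorptionsgesetze, den die Arbeit stillschweigend übergeht — das ist eine sinnvolle, kleine Vervollständigung, ändert aber nichts am Beweisweg.
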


\begin{proof}
~\vspace{0mm}
\begin{compactenum}
\item $a \le a \quad a\wedge a = a = a \vee a$.
\item $a \le b$ und $b \le a$ dann ist $a \wedge b = a$ und ebenfalls
    $a \wedge b =b$ woraus offensichtlich $a=b$ folgt. Analog l"auft
    die Argumentation f"ur $\vee$.
\item F"ur $a \le b \le c$ gilt $a \wedge b =a$ und $b \wedge c
    =b$. Damit kann man schreiben: $a \wedge c = (a \wedge b) \wedge c
    = a \wedge (b \wedge c) =a \wedge b = a$ und damit ist $a \le
    c$. Analog zeigt man auch, da"s $a \vee c = c$ ist.
\end{compactenum}
\end{proof}

\begin{definition}[Orthokomplementierter und orthomodularer Verband]
\index{Verband!orthokomplementierter}
\index{Verband!orthomodularer}
Ein Verband $(V, \vee, \wedge)$ hei"st {\em orthokomplementiert}, falls es
eine Abbildung $^\prime: V \to V$ sowie zwei Elemente $0$ und $1$ 
gibt, so da"s f"ur alle $a,b \in V$ gilt
\begin{compactenum}
\item $0 \le a \le 1$,
\item $^\prime$ ist injektiv,
\item aus $a \le b$ folgt $b^\prime \le a^\prime$,
\item $(a^\prime)^\prime = a$,
\item $a \wedge a^\prime = 0$,
\item $a \vee a^\prime =1$.  
\end{compactenum}

Einen orthokomplementierten Verband $(V, \vee, \wedge, ^\prime)$ f"ur
den zus"atzlich noch die Eigenschaft $(a \vee b^\prime) \wedge b = a$
f"ur alle $a \le n$ gilt, nennt man {\em orthomodular}. 
\end{definition}

\begin{lemma}[Eigenschaften orthokomplementierter Verb"ande]
Sei $(V, \vee, \wedge, ^\prime)$ ein orthokomplementierter Verband, so
gilt
\begin{compactenum}
\item $^\prime : V \to V$ ist bijektiv,
\item $(a \wedge b)^\prime = a^\prime \vee b^\prime$,
\item $(a \vee b)^\prime = a^\prime \wedge b^\prime$,
\item $a \le (a \vee b^\prime ) \wedge b$ falls $a \le b$.
\end{compactenum}

Die Aussagen {\it ii.)} und {\it iii.)} werden als {\em \Name{de Morgan}sche Regeln}
bezeichnet. 
\end{lemma}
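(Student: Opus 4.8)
Der Plan ist, zun"achst die aus den Verbandsaxiomen folgende Charakterisierung von $\wedge$ und $\vee$ als gr"o"ste untere bzw.~kleinste obere Schranke bez"uglich der durch $a \le b \Leftrightarrow a \wedge b = a \Leftrightarrow a \vee b = b$ gegebenen Halbordnung bereitzustellen; auf ihr beruhen alle vier Aussagen. Konkret notiere ich zuerst die Hilfstatsachen $a \wedge b \le a$, $a \wedge b \le b$ sowie $a \le a \vee b$ und $b \le a \vee b$, die unmittelbar aus Idempotenz, Assoziativit"at und Kommutativit"at folgen (etwa $(a \wedge b) \wedge a = (a \wedge a) \wedge b = a \wedge b$), und anschlie"send die Universaleigenschaften: ist $c$ eine gemeinsame untere Schranke von $a$ und $b$, so ist $c \le a \wedge b$, und dual ist jede gemeinsame obere Schranke $\ge a \vee b$. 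Diese folgen aus den Absorptionsgesetzen.

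Die Bijektivit"at von $^\prime$ in Teil~{\it i.)} ist dann sofort klar: Die Abbildung ist nach Voraussetzung injektiv, und wegen der Involutivit"at $(a^\prime)^\prime = a$ ist $^\prime$ zu sich selbst invers, also auch surjektiv und damit bijektiv.

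Der Kern der Arbeit liegt in den \Name{de Morgan}schen Regeln, den Teilen~{\it ii.)} und~{\it iii.)}. Die Idee ist zu zeigen, da"s die ordnungsumkehrende Involution $^\prime$ Suprema in Infima "uberf"uhrt. Dazu setze ich $s = a \vee b$ und pr"ufe, da"s $s^\prime$ die gr"o"ste untere Schranke von $a^\prime$ und $b^\prime$ ist: Aus $a \le s$ und $b \le s$ folgt mit der Antitonie von $^\prime$ sofort $s^\prime \le a^\prime$ und $s^\prime \le b^\prime$, also ist $s^\prime$ eine untere Schranke. Ist umgekehrt $t$ irgendeine untere Schranke von $a^\prime$ und $b^\prime$, so liefert nochmaliges Anwenden von $^\prime$ zusammen mit der Involutivit"at, da"s $t^\prime$ eine obere Schranke von $a$ und $b$ ist, mithin $s \le t^\prime$ und daraus wiederum $t \le s^\prime$. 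Also ist $s^\prime = a^\prime \wedge b^\prime$, d.~h.~$(a \vee b)^\prime = a^\prime \wedge b^\prime$; das ist Teil~{\it iii.)}. Teil~{\it ii.)} gewinne ich hieraus rein formal, indem ich {\it iii.)} auf $a^\prime$ und $b^\prime$ anwende, $(a^\prime)^\prime = a$ und $(b^\prime)^\prime = b$ ausnutze und beide Seiten mit $^\prime$ versehe.

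Teil~{\it iv.)} ist schlie"slich eine direkte Konsequenz der Infimum-Eigenschaft und ben"otigt die Orthomodularit"at gerade \emph{nicht}: Stets gilt $a \le a \vee b^\prime$, und nach Voraussetzung ist $a \le b$; also ist $a$ eine gemeinsame untere Schranke von $a \vee b^\prime$ und $b$ und liegt damit unter deren gr"o"ster unterer Schranke $(a \vee b^\prime) \wedge b$. Die gr"o"ste zu erwartende H"urde ist nicht rechnerischer, sondern begrifflicher Natur: Man mu"s die beiden "aquivalenten Fassungen der Halbordnung (die $\wedge$- und die $\vee$-Form) jeweils im richtigen Moment einsetzen und die Universaleigenschaften von $\wedge$ und $\vee$ sauber aus den Absorptionsgesetzen herleiten, bevor die Antitonie-Argumente greifen.
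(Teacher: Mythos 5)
Your proof is correct and follows essentially the same route as the paper: part \textit{i.)} from involutivity, the \Name{de Morgan} rules from the antitonicity of $^\prime$ combined with the order-theoretic characterisation of $\wedge$ and $\vee$ as infimum and supremum, and part \textit{iv.)} from the fact that $a$ is a common lower bound of $a \vee b^\prime$ and $b$. If anything, your version is more complete, since you also supply the converse half of the universal-property argument (that any lower bound of $a^\prime$ and $b^\prime$ lies below $(a\vee b)^\prime$), which the paper's very terse proof leaves implicit.
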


\begin{proof}
~\vspace{0mm}
\begin{compactenum}
\item Klar, da $(a^\prime)^\prime =a$.
\item Aus $a \wedge b \ge a$ folgt $(a \wedge b)^\prime \le a$
    bzw.~aus $a \wedge b \ge b$ folgt $(a \wedge b)^\prime \le b$. Und
    daraus folgt dann $(a \wedge b)^\prime = a^\prime \vee b^\prime$. 
\item Analoge Argumentation wie bei {\it ii.)}.
\item Wenn $a \le b$ ist $a \le a \vee b^\prime$. 
\end{compactenum}
Daraus folgt die Behauptung. 
\end{proof}

\begin{beispiel}[Menge der abgeschlossenen $^\ast$-Ideale]
\label{Beispiel:MengeAbgeschlosseneSternIdeale}
\index{Ideal!Stern-Ideal@$^\ast$-Ideal!abgeschlossenes}
\index{Stern-Ideal@$^\ast$-Ideal!abgeschlossenes}
Die Menge der $(\alg{D},H)$-abgeschlossenen $^\ast$-Ideale bilden einen Verband, den
wir mit $\verband[\alg{D},H]$ bezeichnen. 
\end{beispiel}

\begin{proof}[Beweis zu Lemma
    \ref{Lemma:DHAbgeschlosseneSternIdealeVerband} und Beispiel
    \ref{Beispiel:MengeAbgeschlosseneSternIdeale}]
    Es gilt folgendes zu zeigen
    \begin{compactenum}
    \item Es gelten die "ublichen Verbandsregeln,
        d.~h.~Assoziativit"at, Kommutativit"at, Idempotenz und die
        Vertr"aglichkeit. 
 
    \end{compactenum}
Im $H$-"aquivarianten Fall mu"s zus"atzlich noch gezeigt werden,
da"s die durch die Verkn"upfungen $\wedge, \vee$ erzeugten
Elemente des Verbands weiterhin
$H$-abgeschlossen\index{Verband!H-abgeschlossen@$H$-abgeschlossener} sind.
\begin{compactenum}
\stepcounter{enumi}      
\item F"ur beliebige $(\alg{D},H)$-abgeschlossene $^\ast$-Ideale
    $\alg{I}$, $\alg{J}$ ist auch $\alg{J} \wedge \alg{I}$
    $(\alg{D},H)$-abgeschlossen. 
\item F"ur beliebige $(\alg{D},H)$-abgeschlossene $^\ast$-Ideale
    $\alg{I}$, $\alg{J}$ ist auch $\alg{J} \vee \alg{I} := \cap_{i}
    \tilde{\alg{J}}_{i}$, das kleinste $H$-abgeschlossene Ideal mit
    $\alg{J} \cup \alg{I} \subseteq \tilde{\alg{J}}$,  
    $(\alg{D},H)$-abgeschlossen.
\end{compactenum}

Teil {\it i.)} ist einfach und im wesentlichen auch klar. Seien nun $\alg{J}=\ker (\pi)$,
$\alg{I}=\ker (\rho)$ und $\alg{K} = \ker (\kappa)$, dann gilt 
\begin{align*}
    (\alg{J} \wedge \alg{I}) \wedge \alg{K} & = (\ker(\pi \oplus
    \rho)) \wedge \ker (\kappa) \\ & = \ker ((\pi \oplus \rho)
    \oplus \kappa) \\ & =  \ker (\pi \oplus (\rho
    \oplus \kappa)) \\ & =  \alg{J} \wedge (\alg{I} \wedge \alg{K}),
\end{align*}
was die Assoziativit"at zeigt. Die Kommutativit"at folgt aus
\begin{align*} 
   \alg{J} \wedge \alg{I} & = \ker (\pi \oplus \rho) = \ker (\rho
   \oplus \pi) = \alg{I} \wedge \alg{J}.
\end{align*}
Komplett analog verh"alt es sich f"ur $\vee$. F"ur die Idempotenz
ist nun
\begin{align*}
 \alg{J} \wedge \alg{J} & = \ker (\pi \oplus \pi) = \ker (\pi) = \alg{J} 
\quad \text{und} \quad
\alg{J} \vee \alg{J} = \cap_{i} \tilde{\alg{J}}_{i} = \alg{J} \cap
\alg{J} = \alg{J}. 
\end{align*}
In einem letzten Schritt m"ussen wir noch zeigen, da"s die beiden
Verkn"upfungen vertr"aglich sind. Dazu rechnen wir nach
\begin{align*}
    \alg{J} \vee (\alg{J} \wedge \alg{I}) &  =  \alg{J} \vee \ker (\pi
    \oplus \rho) = \ker (\pi) \vee \ker (\pi \oplus \rho) = \ker (\pi)
    = \alg{J},
\end{align*}
und analog nat"urlich $ \alg{J} \wedge (\alg{J} \vee \alg{I}) =
\alg{J}$.

F"ur Teil {\it ii.)}  seien nun $\alg{J}=\ker (\pi)$ und
$\alg{I}=\ker(\rho)$ $(\alg{D},H)$-abgeschlossene $^\ast$-Ideale. Dann
ist $\alg{J} \wedge \alg{I} = \ker (\pi \oplus \rho)$. Da die direkte
Summe zweier $H$-"aquivarianter Darstellungen wieder $H$-"aquivariant ist
ist $\alg{J} \wedge \alg{I}$ auch $H$-"aquivariant und
$(\alg{D},H)$-abgeschlossen.

In Teil {\it iii.)} sind $\tilde{\alg{J}}_{i}$ per Definition
Kern einer $H$-"aquivarianten Darstellung und $(\alg{D},H)$-abgeschlossen, so da"s wir schreiben
k"onnen $\tilde{\alg{J}}_{i} = \ker (\tau_{i})$. Damit ist aber
\begin{align*}
    \alg{J} \vee \alg{I} = \cap_{i} \tilde{\alg{J}}_{i} = \ker
    (\oplus_{i} \tau_{i}),
\end{align*}
was nat"urlich wieder $(\alg{D},H)$-abgeschlossen ist. Damit haben
wir gezeigt, da"s $\verband[\alg{D},H](\alg{A})$ einen Verband bildet.

\end{proof}

\begin{beispiel}[Orthomodularer Verband -- Die Potenzmenge $\mathcal{P}(N)$]
\index{Potenzmenge}
Ein Beispiel f"ur einen orthomodularen Verband ist die Potenzmenge
einer Menge $N$, die wir mit  $(V=\mathcal{P}(N), \cup, \cap,
\backslash)$ bezeichnen. Dabei sind die Verkn"upfungen die
bekannten mengentheoretischen $\cup$ und $\cap$, so da"s $a\vee b := a
\cup b$ und $a\wedge b := a \cap b$ f"ur alle  $a,b \in
\mathcal{P}(N)$, desweiteren ist $0=\varnothing$ und $1=N$.   
\end{beispiel}

\section{Projektive Moduln}

Eine Einf"uhrung in die projektiven Moduln findet man in
\citep{jacobson:1989a}. 

\begin{definition}[Projektiver Modul]
\label{Definition:ProjektiverModul}
\index{Modul!projektiver}
Sei $\alg{A}$ eine assoziative Algebra mit Einselement. Der $\alg{A}$-Rechtmodul
$\rmod{E}{A}$ ist genau dann ein {\em endlich erzeugter, projektiver
  $\alg{A}$-Rechtmodul}, falls
\begin{compactenum}
\item ein Erzeugendensystem $(e_{1},e_{2},\cdots,e_{m})$ existiert, so
    da"s f"ur jedes $x \in \rmod{E}{A}$ Koeffizienten $a^{i}\in \alg{A}$ mit $i=1,\cdots,m$
    existieren, so da"s $x=\sum_{i=1}^{m} e_{i} a^{i}$.
\item ein Projektor $P \in M_{n} (\alg{A})$ mit $P=P^{2}$ existiert,
    so da"s $P\alg{A}^{n} \approx \rmod{E}{A}$.
\end{compactenum}
Dabei ist
\begin{align}
    P\alg{A}^{n} = \bild P = \left\{ (b_{i} \in \alg{A}^{n} \big|
        \exists c_{j} \in \alg{A}^{n}; \sum_{j} P_{ij} c_{j} = b_{i} \right\}.
\end{align}
\end{definition}

Dabei sind folgende Punkte zu bemerken. 
\begin{bemerkungen}[Projektiver Modul]
~\vspace{-5mm}
\begin{compactenum}
\item $(e_{1},e_{2},\cdots,e_{m})$ ist keine Basis, sondern nur ein
    Erzeugendensystem. Damit hat $m$ im allgemeinen nicht die
    Bedeutung einer Dimension. Nur in einer Art \glqq minimalem
    Fall\grqq{} k"onnte man $m$ als Dimension bezeichnen.
\item $\alg{A}^{n}$ ist auf nat"urliche Weise ein $\alg{A}$-Rechtsmodul, indem man
komponentenweise multipliziert. Ebenso ist $P\alg{A}^{n}$ ein
$\alg{A}$-Rechtsmodul, da $(P b_{i})\cdot a = P (b_{i} \cdot a)$,
wobei $a\in \alg{A}$ und $b_{i}$ in $\alg{A}^{n}$. 
\item Die Algebra $M_{n}(\alg{A})$ operiert mittels Matrixmultiplikation auf
$\alg{A}^{n}$, d.~h.~f"ur $a_{ij} \in M_{n}(\alg{A})$ und $b_{k} \in
\alg{A}^{n}$ ist $\alg{A}^{n} \ni c_{j}=\sum_{j} a_{ij} b_{j}$. 
\end{compactenum}
\end{bemerkungen}

\begin{lemma}
   Die $\alg{A}$-linearen Endomorphismen von $\alg{A}^{n}$ sind genau
   die $n\times n$-Matrizen "uber $\alg{A}$,
   d.~h.~$\End[\alg{A}]{\alg{A}^{n}} = M_{n}(\alg{A})$.
\end{lemma}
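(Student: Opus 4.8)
The plan is to construct an explicit, mutually inverse pair of maps between $M_{n}(\alg{A})$ and $\End[\alg{A}]{\alg{A}^{n}}$ and to verify that this correspondence is an isomorphism of algebras. I fix the standard generating system $e_{1}, \ldots, e_{n}$ of $\alg{A}^{n}$, where $e_{i}$ carries $1_{\sss \alg{A}}$ in the $i$-th slot and $0$ elsewhere. Since $\alg{A}^{n}$ is the free right $\alg{A}$-module on these generators, every $x \in \alg{A}^{n}$ has a unique expansion $x = \sum_{i} e_{i} x^{i}$ with $x^{i} \in \alg{A}$, the scalars standing on the right in accordance with the right-module structure recalled in the preceding remarks.

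First I would pass from matrices to endomorphisms. Given $A = (A_{ji}) \in M_{n}(\alg{A})$, I define $\Phi_{A}$ by left matrix multiplication on column vectors, so that $(\Phi_{A} x)_{j} = \sum_{i} A_{ji} x^{i}$. The key point -- and the single place where the left/right conventions matter -- is that $\Phi_{A}$ is right $\alg{A}$-linear: for $b \in \alg{A}$ one has $(\Phi_{A}(x \cdot b))_{j} = \sum_{i} A_{ji}(x^{i} b) = \big(\sum_{i} A_{ji} x^{i}\big) b$, which uses precisely the associativity of $\alg{A}$ together with the fact that the scalar $b$ multiplies from the right while the matrix entries multiply from the left. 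Hence $\Phi_{A} \in \End[\alg{A}]{\alg{A}^{n}}$, and a second short computation shows $\Phi_{\Unit} = \id$ and $\Phi_{AB} = \Phi_{A} \circ \Phi_{B}$, so that $A \mapsto \Phi_{A}$ is a unital algebra homomorphism.

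Next I would construct the inverse. For $\Phi \in \End[\alg{A}]{\alg{A}^{n}}$ the images $\Phi(e_{i})$ lie in $\alg{A}^{n}$ and hence expand uniquely as $\Phi(e_{i}) = \sum_{j} e_{j} A_{ji}$, which defines a matrix $A \in M_{n}(\alg{A})$. Right-linearity then gives, for an arbitrary $x = \sum_{i} e_{i} x^{i}$,
\begin{align*}
\Phi(x) = \sum_{i} \Phi(e_{i}) x^{i} = \sum_{i,j} e_{j} A_{ji} x^{i} = \sum_{j} e_{j} \Big(\sum_{i} A_{ji} x^{i}\Big) = \Phi_{A}(x),
\end{align*}
so that $\Phi = \Phi_{A}$. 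This simultaneously shows that $A \mapsto \Phi_{A}$ is surjective and that $\Phi \mapsto A$ is a two-sided inverse, whence the homomorphism of the previous paragraph is an isomorphism of algebras and $\End[\alg{A}]{\alg{A}^{n}} = M_{n}(\alg{A})$ follows.

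I do not expect any genuine obstacle; the entire content is bookkeeping. The only subtlety demanding care is consistently respecting the right-module convention -- scalars on the right, matrix entries on the left -- so that right $\alg{A}$-linearity actually holds and, in particular, so that composition of endomorphisms corresponds to matrix multiplication in the correct order rather than to its opposite.
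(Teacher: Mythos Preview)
Your proposal is correct and follows essentially the same approach as the paper: apply an endomorphism to the expansion $x = \sum_i e_i a^i$, use right $\alg{A}$-linearity to pull out the coefficients, and read off the matrix from the images $\phi(e_i)$. The paper's proof is a one-line sketch of exactly this computation, whereas you have written out both directions of the bijection and the algebra-homomorphism property in full; your version is more thorough but not different in substance.
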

\begin{proof}
    Sei $\phi \in \End[\alg{A}]{\alg{A}^{n}}$, dann gilt 
    \begin{align*}
        \phi(x) = \phi \left( \sum_{i} e_{i}a^{i}\right) = \sum_{i}
        \phi(e_{i}) a^{i} = \sum_{i} \phi_{ij} a^{i}.
    \end{align*}
\end{proof}

Eine Frage, die sich nun stellt ist, wie sieht
$\End[\alg{A}]{P\alg{A}^{n}}$ aus. Die Antwort gibt folgendes Lemma. 
 
\begin{lemma}
Sei $P\alg{A}^{n}$ ein projektiver, endlich erzeugter $\alg{A}$-Rechtsmodul. Die
$\alg{A}$-linearen Endomorphismen von $P\alg{A}^{n}$ k"onnen wie
folgt beschrieben werden. 
    \begin{align}
        \End[\alg{A}]{P\alg{A}^{n}} = PM_{n}(\alg{A}) P :=\left\{ A
            \in M_{n}(\alg{A}) \big| \exists B \in M_{n}(\alg{A}) \,
            \text{mit} \, A=PBP \right\}
    \end{align}
\end{lemma}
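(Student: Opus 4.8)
The plan is to use the direct sum decomposition induced by the idempotent $P$ together with the already established identification $\End[\alg{A}]{\alg{A}^{n}} = M_{n}(\alg{A})$ from the preceding lemma. Since $P = P^{2}$, every $b \in \alg{A}^{n}$ splits as $b = Pb + (1-P)b$, and the image $P\alg{A}^{n} = \bild P$ consists exactly of those $b$ with $Pb = b$; in particular $P$ acts as the identity on $P\alg{A}^{n}$. These two observations are the only structural facts I need.

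First I would construct a map $\End[\alg{A}]{P\alg{A}^{n}} \to PM_{n}(\alg{A})P$. Given $\phi \in \End[\alg{A}]{P\alg{A}^{n}}$, I extend it to an endomorphism of the full free module by precomposing with the projection $P \colon \alg{A}^{n} \to P\alg{A}^{n}$ and postcomposing with the inclusion $\iota \colon P\alg{A}^{n} \hookrightarrow \alg{A}^{n}$, i.e. $\tilde{\phi} := \iota \circ \phi \circ P$. By the previous lemma $\tilde{\phi}$ is given by left multiplication with a unique matrix $B \in M_{n}(\alg{A})$. The key computation is to check $PB = B$ and $BP = B$: the first holds because $\bild \tilde{\phi} \subseteq P\alg{A}^{n}$, so $P$ fixes every output; the second because $\tilde{\phi}$ begins with the projection $P$ and $P^{2}=P$. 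Together these give $B = PBP$, so $B$ indeed lies in $PM_{n}(\alg{A})P$.

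Conversely, I would send $A = PBP \in PM_{n}(\alg{A})P$ to its restriction to $P\alg{A}^{n}$. This is well defined, since for $y \in P\alg{A}^{n}$ one has $Ay = PBPy = PBy \in P\alg{A}^{n}$, so $A$ maps $P\alg{A}^{n}$ into itself and acts $\alg{A}$-linearly by matrix multiplication. It then remains to verify that the two maps are mutually inverse: for $A = PBP$ the extension of its restriction reproduces $A$ because $A \cdot Px = PBP \cdot Px = Ax$, and for $\phi$ the restriction of its extension reproduces $\phi$ because $By = \phi(Py) = \phi(y)$ for $y \in P\alg{A}^{n}$. Finally one checks that this bijection respects composition and sends $\id$ of $P\alg{A}^{n}$ to $P$, so that it is an isomorphism of algebras (with unit $P$), which justifies the asserted equality.

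The main obstacle is bookkeeping rather than conceptual: one must keep the left action of $M_{n}(\alg{A})$ by matrix multiplication strictly separate from the right $\alg{A}$-module structure, and be careful that $\alg{A}$ is noncommutative so that the entries of $B$ and the scalars act on opposite sides. The slightly delicate point is the identity $B = PBP$, which requires both one-sided relations $PB = B$ and $BP = B$; obtaining both and not just one is what pins the matrix down to the corner $PM_{n}(\alg{A})P$ rather than merely to $PM_{n}(\alg{A})$ or $M_{n}(\alg{A})P$.
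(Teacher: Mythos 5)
Your proposal is correct. The paper actually states this lemma without giving any proof, so there is nothing to compare against; your argument is the standard ``corner ring'' identification and it supplies exactly what the paper omits. The two one-sided identities are the heart of the matter and you obtain both correctly: $PB=B$ because the extension $\tilde{\phi}=\iota\circ\phi\circ P$ has image in $P\alg{A}^{n}$ (evaluate on the standard generators to pass from $PBx=Bx$ for all $x$ to the matrix identity), and $BP=B$ because $\tilde{\phi}\circ P=\tilde{\phi}$ by idempotence of $P$. The verification that restriction and extension are mutually inverse, and that the bijection is an algebra isomorphism sending $\id_{P\alg{A}^{n}}$ to the unit $P$ of the corner ring, is also right; the only point worth stating explicitly is that $P\circ\iota=\id_{P\alg{A}^{n}}$, which is what makes the map multiplicative.
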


\begin{lemma}[Projektive Moduln]
\label{Lemma:ProjektiveModulnDiverseDefinitionen}
Folgende Aussagen sind "aquivalent
\begin{compactenum}
\item Der Modul $\rmod{E}{A}$ ist projektiv.
\item Seien $\rmod{M}{A}$ und $\rmod{\tilde{M}}{A}$
    $\alg{A}$-Rechtsmoduln und $\phi,\psi$ $\alg{A}$-Rechtsmodulmorphismen, $\phi$ sei
    zudem surjektiv. Dann ist das folgende Diagramm kommutativ, d.~h.~es
    existiert ein $T$ mit $\phi \circ T = \psi$.
  $$\bfig
\btriangle(0,0)|lra|/->`->`->/<500,400>[\rmod{E}{A}`\rmod{M}{A}`\rmod{\tilde{M}}{A};T`\psi`\phi]
\morphism(605,0)||/->/<340,0>[`0.;] \efig$$ 

\item Es existiert ein Modul $\rmod{F}{A}$, so da"s $\rmod{E}{A}
    \oplus \rmod{F}{A} \approx \alg{A}^{n}$.
\item Es existieren $x_{i} \in \rmod{E}{A}$ und $\alg{A}$-lineare $f^{i}:\rmod{E}{A} \to
    \alg{A}$, so da"s $\rmod{E}{A} \ni x = \sum_{i} x_{i}f^{i}(x)$
    darstellbar ist f"ur alle $x \in \rmod{E}{A}$.  
\end{compactenum}
    
\end{lemma}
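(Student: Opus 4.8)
The final lemma asserts the equivalence of four characterizations of a finitely generated projective right $\alg{A}$-module $\rmod{E}{A}$: (i) $\rmod{E}{A}$ is projective; (ii) the universal lifting property against surjections; (iii) the existence of a complement $\rmod{F}{A}$ with $\rmod{E}{A} \oplus \rmod{F}{A} \approx \alg{A}^n$; (iv) the existence of a dual basis $x_i \in \rmod{E}{A}$ and $\alg{A}$-linear functionals $f^i$ with $x = \sum_i x_i f^i(x)$.

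The plan is to establish the four characterizations by running through the cycle $(i) \Rightarrow (iii) \Rightarrow (iv) \Rightarrow (ii) \Rightarrow (i)$, exploiting that Definition~\ref{Definition:ProjektiverModul} already presents projectivity concretely as the existence of a projector $P \in M_n(\alg{A})$ with $P\alg{A}^n \approx \rmod{E}{A}$, so that the translation between the concrete and the categorical descriptions is what needs to be made.

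First, $(i) \Rightarrow (iii)$ is essentially immediate: given such a $P = P^2$, the ambient free module splits as $\alg{A}^n = P\alg{A}^n \oplus (1-P)\alg{A}^n$, since every $v \in \alg{A}^n$ equals $Pv + (1-P)v$ and $P(1-P) = 0$. Setting $\rmod{F}{A} := (1-P)\alg{A}^n$ and using $P\alg{A}^n \approx \rmod{E}{A}$ yields $\rmod{E}{A} \oplus \rmod{F}{A} \approx \alg{A}^n$. For $(iii) \Rightarrow (iv)$ I would fix an isomorphism $\rmod{E}{A} \oplus \rmod{F}{A} \approx \alg{A}^n$ with its standard generators $e_1,\dots,e_n$ and coordinate functionals $\pi^k$; writing $p$ for the projection $\alg{A}^n \to \rmod{E}{A}$ and $\iota$ for the inclusion, I set $x_k := p(e_k) \in \rmod{E}{A}$ and $f^k := \pi^k \circ \iota$. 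Then for $x \in \rmod{E}{A}$ one has $\iota(x) = \sum_k e_k \pi^k(\iota(x))$, and applying $p$ gives $x = \sum_k x_k f^k(x)$, which is exactly the dual-basis identity.

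The two remaining implications are the lifting arguments. For $(iv) \Rightarrow (ii)$, given a surjection $\phi: \rmod{M}{A} \to \rmod{\tilde{M}}{A}$ and a morphism $\psi: \rmod{E}{A} \to \rmod{\tilde{M}}{A}$, I would choose for each of the finitely many $x_i$ a preimage $m_i \in \rmod{M}{A}$ with $\phi(m_i) = \psi(x_i)$, and define $T(x) := \sum_i m_i f^i(x)$. This $T$ is $\alg{A}$-linear because each $f^i$ is, and $\phi(T(x)) = \sum_i \psi(x_i) f^i(x) = \psi(\sum_i x_i f^i(x)) = \psi(x)$, so the required lift exists. Finally, for $(ii) \Rightarrow (i)$ I would use that $\rmod{E}{A}$ is finitely generated (the standing hypothesis of this subsection) to obtain a surjection $\phi: \alg{A}^n \to \rmod{E}{A}$ from a generating system; applying $(ii)$ with $\psi = \id_{\sss \rmod{E}{A}}$ produces a section $T$ with $\phi \circ T = \id$. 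Then $P := T \circ \phi$ is idempotent, $\bild P = \bild T \approx \rmod{E}{A}$, and $P \in \End[\alg{A}]{\alg{A}^{n}} = M_n(\alg{A})$, recovering precisely the projector description of Definition~\ref{Definition:ProjektiverModul}.

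The individual steps are short, so the main point requiring care — rather than a genuine obstacle — is the role of finite generation. Statement $(ii)$ is the categorical lifting property and does not by itself encode finite rank, whereas $(iii)$ and $(iv)$ (with a finite index set) do. I would therefore make explicit that the equivalence is asserted within the finitely generated setting fixed by Definition~\ref{Definition:ProjektiverModul}, and that finite generation is exactly what lets the free cover in $(ii) \Rightarrow (i)$ be taken of the form $\alg{A}^n$; without it one only recovers projectivity as a summand of a free module of possibly infinite rank. A secondary bookkeeping point is to check throughout that all constructed maps ($p$, $\iota$, $T$, $P$) are right-$\alg{A}$-linear, which is automatic since the $f^i$ and $\pi^k$ are $\alg{A}$-linear and scalars act on the right.
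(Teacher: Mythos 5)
Your proof is correct. Note that the paper itself gives no proof of this lemma at all -- it is stated as a standard fact (the section opens by deferring to Jacobson's textbook, and the only follow-up is the remark identifying condition \textit{iii.)} with the algebraic Serre--Swan theorem) -- so there is no argument in the paper to compare against. Your cycle $(i)\Rightarrow(iii)\Rightarrow(iv)\Rightarrow(ii)\Rightarrow(i)$ is the standard one and every step checks out: the splitting $\alg{A}^n = P\alg{A}^n \oplus (1-P)\alg{A}^n$, the dual basis extracted from the coordinate functionals, the lift $T(x)=\sum_i m_i f^i(x)$ built from preimages of the $x_i$, and the idempotent $P = T\circ\phi$ recovering the projector of Definition~\ref{Definition:ProjektiverModul}. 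Your closing caveat is also the right one to make explicit: since the paper's Definition~\ref{Definition:ProjektiverModul} bundles finite generation into ``projektiv'', while the lifting property \textit{ii.)} alone does not see finite rank, the equivalence only holds as stated within the finitely generated setting, and finite generation is precisely what allows the free cover in $(ii)\Rightarrow(i)$ to be taken of the form $\alg{A}^n$.
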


\begin{bemerkung}
Bedingung {\it iii.)} von Lemma
    \ref{Lemma:ProjektiveModulnDiverseDefinitionen} ist die
    algebraische Formulierung des Satzes von \Name{Serre-Swan}.
\index{Satz!Serre-Swan@von \Name{Serre-Swan}} 
\end{bemerkung}


\label{chapter:GeometrischeGrundlagen}

\section{Faserb"undel}
\label{sec:Faserbuendel}

Wir wollen eine kurze Definition der von uns gebrauchten Strukturen im
Rahmen der B"un\-del\-geo\-me\-trie geben. Einige der Definitionen
entstammen \citep{nakahara:1990a,kobayashi.nomizu:1963a}. 

\subsection{Grundlagen}

\begin{definition}[Faserb"undel]
\label{Definition:Faserbuendel}
\index{Faserbuendel@Faserb\"undel}
Ein (differenzierbares) Faserb"undel $(E,\pi,M,F,G)$ (oder k"urzer
$\bundle{E}{\pi}{M}$) besteht aus den folgenden Elementen: 
\begin{compactenum}
\item Eine differenzierbare Mannigfaltigkeit $E$, der {\em Totalraum}.
\item Eine differenzierbare Mannigfaltigkeit $M$, der {\em
      Basisraum} (oder der {\em Basismannigfaltigkeit}).
\item Eine differenzierbare Mannigfaltigkeit $F$, der {\em Faser} (oder
    der {\em typischen Faser}). 
\item Eine Surjektion $\pi: E \to M$, die {\em Projektion}. Das Urbild
    $\pi^{-1}\equiv F_{p} \cong F$ ist die Faser am Punkte $p$. 
\item Eine Lie-Gruppe $G$, die {\em Strukturgruppe}. Sie wirkt von links auf
    die Faser $F$. 
\item Eine Menge offener "Uberdeckungen $\{U_{i}\}$ von $M$ mit
    einem Diffeomorphismus $\phi_{i}: U_{i} \times F \to
    \pi^{-1}(U_{i})$, so da"s $\pi \circ \phi_{i} (p,f) =
    p$. Man nennt die Abbildungen $\phi_{i}$ eine {\em lokale
    Trivialisierung}. 
\item $\phi_{i}(p,\cdot) =: \phi_{i,p}:F \to F_{p}$ ist
    ein Diffeomorphismus. Auf $U_{i} \cap U_{j} \neq
    \varnothing$ f"uhrt man $t_{ij} := \phi^{-1}_{i,p} \phi
    _{j,p} : F \to F$ ein. Dann sind $\phi _{i}$ und $\phi
    _{j}$ verkn"upft "uber die glatte Abbildung $t_{ij} :
    U_{i} \cap U_{j} \to G$, die Werte in der Lie-Gruppe $G$
    annimmt, und  
\begin{align}
\phi_{j}(p,f)=\phi_{i}(p,t_{ij}(p)f).
\end{align}
Man nennt $\{t_{ij}\}$  die {\em "Ubergangsfunktionen}.
\end{compactenum}  
\end{definition}

\begin{bemerkung}
\label{Bemerkung:Faserbuendel}
\index{Koordinatenbuendel@Koordinatenb\"undel}
Ein Faserb"undel ist unabh"angig von der speziellen Wahl
der "Uberdeckung. Daher bezeichnet man auch $(E,\pi,M,F,G,\{U_{\sss
  i}\}, \{ \phi_{i} \})$ als {\em Koordinatenb"undel}, und die
"Aquivalenzklasse der Koordinatenb"undel als das {\em Faserb"undel}.  
\end{bemerkung}

\begin{definition}["Aquivalenz von Faserb"undeln]
\label{Definition:AequivalenzVonBuendeln}
\index{Aequivalenz@\"Aquivalenz!Faserbuendel@von Faserb\"undeln}
Zwei B"undel $\bundle{E}{\pi}{M}$ und $\bundle{E'}{\pi'}{M}$ sind
genau dann "aquivalent, wenn es einen
B"un\-del\-dif\-feo\-mor\-phis\-mus $\cc{f}:E \to E'$ gibt, so
da"s $f:M \to M$ die Identit"at ist und das folgende Diagramm
kommutiert: 
 $$\bfig \Square[E`E'`M`M;\cc{f}`\pi`\pi'`\id_{\sss M}]\efig$$
\end{definition}

\begin{definition}[Triviales Faserb"undel]
\label{Definition:TrivialesFaserbuendel}
\index{Faserbuendel@Faserb\"undel!triviales}
Ein Faserb"undel $\bundle{E}{\pi}{M}$ nennt man {\em trivial}, falls
es sich als direktes Produkt $E=M\times F$ aus der
Basismannigfaltigkeit $M$ und der Faser $F$ schreiben l"a"st. 
\end{definition}

\begin{definition}[Schnitte]
\label{Definition:Schnitte}
\index{Schnitt}
Gegeben ein Faserb"undel $\bundle{E}{\pi}{M}$. Ein {\em Schnitt} $s:
M \to E$ ist eine Abbildung von der Basismannigfaltigkeit in den
Totalraum, so da"s $\pi\circ s = \id_ {\sss M}$. Die Menge aller
$k$-fach differenzierbaren (bzw.~glatten) Schnitte von
$\bundle{E}{\pi}{M}$ bezeichnet man mit $\Gamma^{k}(M,E)$
(bzw.~$\Gamma^{\infty}(M,E)$) oder einfacher mit $\Gamma^{k}(E)$
(bzw.~$\Gamma^{\infty}(E)$).   
\end{definition}

\begin{beispiele}[Vektorfelder und Einsformen auf $M$]
\label{Beispiele:SchnitteVektorfelderUndEinsformen}
\index{Vektorfeld} \index{Einsform}
Die einfachsten Bespiele f"ur Schnitte sind die Vektorfelder auf einer
Mannigfaltigkeit: $\mathfrak{X}(M)=\Gamma^{\infty}(M,TM)$ und die
Einsformen: $\Omega^{1}(M)=\Gamma^{\infty}(M,T^{\ast}M)$. 
\end{beispiele}

\begin{definition}[Vektorb"undel und Geradenb"undel]
\label{Defintion:Vektorbuendel}
\index{Geradenbuendel@Geradenb\"undel}
Man bezeichnet ein Faserb"undel $\bundle{E}{\pi}{M}$ als {\em 
  Vektorb"undel}, falls die typische Faser ein Vektorraum $V$
ist. Ist die Faser ein eindimensionaler Vektorraum, so spricht man
auch von einem {\em Geradenb"undel}.  
\end{definition}

\subsection{Zusammenhang und Kr"ummung}
\label{sec:ZusammenhaengeUndKruemmung}

\begin{definition}[Zusammenhang auf Vektorb"undel $\bundle{E}{\pi}{M}$]
    \label{Definition:Zusammenhang}
\index{Zusammenhang!linearer|textbf}
    Gegeben sei ein Vektorb"undel $\bundle{E}{\pi}{M}$. Man definiert 
    einen {\em 
      linearen Zusammenhang} (oder eine {\em kovariante Ableitung}) 
     \begin{align}
      \nabla^{E} : \schnitt{TM} \times \schnitt{E} \to \schnitt{E}  
      \end{align}
      "uber die folgenden Eigenschaften: 
      \begin{compactenum}
    \item $\lconnE[fX+gY]s = f \lconnE[X]s + g \lconnE[Y] s$,
    \item $\lconnE[X](fs) = f \lconnE[X] s + X(f) s$,
    \end{compactenum}
f"ur $X,Y \in \schnitt{TM}$, $f,g \in C^\infty(M)$ und $s \in \schnitt{E}$.
\end{definition}

Der Zusammenhang kann aufgrund von Bedingung {\it ii.)} kein
Tensorfeld sein. Allerdings kann man mit Hilfe des Zusammenhangs
interessante Tensorfelder konstruieren.  

\begin{definition}[Kr"ummung eines Zusammenhangs $\nabla^{E}$ auf
    $\bundle{E}{\pi}{M}$] 
\label{Definition:KruemmungEinesZusammenhangs}
\index{Kruemmung@Kr\"ummung!linearer Zusammenhang@eines linearen Zusammenhangs}
Gegeben sei ein Vektorb"undel $\bundle{E}{\pi}{M}$ mit einem
Zusammenhang $\nabla^{E}$. Der {\em Kr"ummungstensor $R \in
  \schnitt{\Lambda^{2}T^{\ast}M \otimes \End{E}}$} des Zusammenhangs
definiert man via 

\begin{align}
R^{E}(X,Y)s = \lconnE[X] \lconnE[Y]s - \lconnE[Y]\lconnE[X]s - \lconnE[{[X,Y]}]s,
\end{align}
 f"ur $X,Y \in \schnitt{TM}$ und $s \in \schnitt{E}$.
\end{definition}

An dieser Stelle m"u"ste man beispielsweise zeigen, da"s die
Kr"ummung wirklich ein Tensorfeld ist, allerdings
wollen wir statt dessen auf geeignete Literatur verweisen,
beispielsweise \citep[Chapter III.5]{kobayashi.nomizu:1963a}.

\begin{lemma}[Zusammenhang auf dem Endomorphismenb"undel
    $\bundle{\End{E}}{\pi'}{E}$]
\label{Lemma:ZusammenhangEndomorphismen}
\index{Zusammenhang!Endomorphismenbuendel@auf Endomorphismenb\"undel}
Sei $\bundle{E}{\pi}{M}$ ein Vektorb"undel mit einem Zusammenhang
$\lconnE$. Auf den Endomorphismenb"undel
$\bundle{\End{E}}{\pi'}{E}$ wird ein Zusammenhang $\lconnEnd{E} :
\schnitt{TM} \times \schnitt{\End{E}} \to \schnitt{\End{E}}$ durch 

\begin{align}
    \label{eq:ZusammenhangEndomorphismen}
\left( \lconnEnd[X]{E}A \right)(s):=\left[\lconnE[X],A \right](s)
\end{align}
f"ur alle $X\in \schnitt{TM}$, $A\in \End{E}$ und $s\in \schnitt{E}$
induziert.  
\end{lemma}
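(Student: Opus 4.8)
Der Plan ist, die beiden Axiome einer kovarianten Ableitung aus Definition~\ref{Definition:Zusammenhang} direkt nachzurechnen, wobei der eigentliche Kern des Beweises nicht in diesen Axiomen, sondern in der \emph{Wohldefiniertheit} der rechten Seite von \eqref{eq:ZusammenhangEndomorphismen} liegt. Zuerst würde ich daher zeigen, dass für festes $X \in \schnitt{TM}$ und festes $A \in \schnitt{\End{E}}$ die durch $s \mapsto \left(\lconnEnd[X]{E}A\right)(s) = \lconnE[X](A(s)) - A(\lconnE[X]s)$ gegebene Abbildung überhaupt wieder ein Schnitt im Endomorphismenbündel ist, also $\Cinf{M}$-linear in $s$. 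Das ist der wesentliche Punkt, denn obwohl weder $\lconnE[X]$ noch die Komposition $A \circ \lconnE[X]$ einzeln tensoriell sind, heben sich die durch die Leibniz-Regel von $\lconnE$ erzeugten Ableitungsterme gerade auf. Konkret erhält man für $f \in \Cinf{M}$ unter Ausnutzung der $\Cinf{M}$-Linearität von $A$ (als Schnitt im Endomorphismenbündel gilt $A(fs)=f A(s)$) und der Leibniz-Regel aus Definition~\ref{Definition:Zusammenhang}~{\it ii.)}
\begin{align*}
\left[\lconnE[X], A\right](fs)
&= \lconnE[X]\!\left(f\, A(s)\right) - A\!\left(f\lconnE[X]s + X(f)s\right) \\
&= X(f) A(s) + f\lconnE[X]\!\left(A(s)\right) - f A(\lconnE[X]s) - X(f) A(s) \\
&= f\, \left[\lconnE[X], A\right](s),
\end{align*}
so dass der Kommutator tatsächlich ein tensorielles Objekt, also einen Endomorphismus liefert.

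Ist die Wohldefiniertheit einmal etabliert, so folgt das erste Axiom, die Funktionen-Linearität im Argument $X$, unmittelbar aus der entsprechenden Eigenschaft $\lconnE[fX+gY] = f\lconnE[X] + g\lconnE[Y]$ von $\lconnE$: durch Einsetzen in die Definition spaltet sich der Ausdruck $\left(\lconnEnd[fX+gY]{E}A\right)(s)$ linear in $f\left[\lconnE[X],A\right](s) + g\left[\lconnE[Y],A\right](s)$ auf. Für das zweite Axiom, die Leibniz-Regel im Argument $A$, rechnet man analog zur obigen Rechnung
\begin{align*}
\left(\lconnEnd[X]{E}(fA)\right)(s)
&= \lconnE[X]\!\left(f A(s)\right) - f A(\lconnE[X]s) \\
&= X(f) A(s) + f\left(\lconnE[X]\!\left(A(s)\right) - A(\lconnE[X]s)\right) \\
&= X(f) A(s) + f\left(\lconnEnd[X]{E}A\right)(s),
\end{align*}
womit $\lconnEnd[X]{E}(fA) = f\lconnEnd[X]{E}A + X(f)A$ gezeigt ist.

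Die Additivität in beiden Argumenten sowie die $\ring{R}$-Linearität sind trivial und sollen nur kurz erwähnt werden. Eine eigentliche Hauptschwierigkeit gibt es bei dieser Aussage nicht; der einzige nichtoffensichtliche Schritt ist die oben durchgeführte Tensorialitätsprüfung, die garantiert, dass der Kommutator $[\lconnE[X], A]$ ein echter Endomorphismus und nicht bloß ein Differentialoperator ist. Alternativ könnte man denselben Sachverhalt koordinatenfrei über die induzierte Ableitung auf dem Tensorprodukt $E \otimes E^{\ast}$ begründen, doch erscheint mir die direkte Rechnung mit dem Kommutator am kürzesten und transparentesten.
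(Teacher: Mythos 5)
Dein Beweis ist korrekt und stimmt im Wesentlichen mit dem der Arbeit überein: dort werden ebenfalls genau die Leibniz-Regel im Argument $A$ und die $\Cinf{M}$-Linearität in den Schnitten $s$ durch dieselben Kommutator-Rechnungen nachgewiesen, während die Linearität in $X$ als offensichtlich abgetan wird. Dein zusätzlicher Hinweis, dass die Tensorialität in $s$ der eigentliche Kern (die Wohldefiniertheit als Endomorphismus) ist, ist eine angemessene Betonung desselben Arguments.
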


\begin{proof}
Wie rechnen nach, da"s $\lconnEnd{E}$ die Eigenschaften eines
Zusammenhangs hat und funktionenlinear in den Schnitten $s\in
\schnitt{E}$ ist. Die Eigenschaft {\it i.)} in Definition
\ref{Definition:Zusammenhang} ist offensichtlich. Bleibt zu zeigen 
    \begin{align*}
       \left( \lconnEnd[X]{E} (fA)\right)(s) & = \left[ \lconnE[X],fA \right](s) \\ &
        =\lconnE[X](fAs) - fA \lconnE[X]s \\ &=f \lconnE[X](As) + As
        X(f) - fA \lconnE[X]s \\ &= f \left( \lconnE[X] (As)- A
            \lconnE[X] s \right)
        + As X(f) \\ &=f \left[ \lconnE[X],A \right](s) + As X(f) \\ &=f
        (\lconnEnd[X]{E}A)(s) + X(f)A (s), 
\intertext{sowie die Linearit"at in den Schnitten}
      \left( \lconnEnd[X]{E} A \right)(fs) & = \left[ \lconnE[X],A \right] (fs) \\ &=
      \lconnE[X] A (fs) - A \lconnE[X] (fs) \\ &=  f \lconnE[X] (As) -
      (As) X(f) - (Af) \lconnE[X]s + (As) X(f) \\ &= f \left( \lconnE[X] A s - A
      \lconnE[X] s \right) \\ & = f \left[ \lconnE[X],A \right] (s) \\ &= f
      \left( \lconnEnd[X]{E} A\right) (s). 
    \end{align*}
\end{proof}

\section{Symplektische Geometrie}
\label{sec:SymplektischeGeometrie}
\subsection{Allgemeine Definitionen}
\index{Geometrie!symplektische|(}
In diesem Kapitel stellen wir einige 
Grundlagen zur {\em symplektischen Geometrie} zusammen. Die
symplektische Geometrie ist in der Physik von gro"ser Bedeutung, da
eine gro"se Klasse der Phasenr"aume von Teilchen
Kotangentialb"undel und damit symplektische
Mannigfaltigkeiten sind. Die Klasse der Kotangentialb"undel ist im
allgemeinen jedoch nicht ausreichend, da durch Phasenraumreduktion
durchaus symplektische Mannigfaltigkeiten entstehen k"onnen, die keine
Kotangentialb"undel sind. Eine Verallgemeinerung liegt daher auf der Hand.
Andererseits spielen auch \Name{K"ahler}-Mannigfaltigkeiten
$(M,\omega, I, g)$ im Rahmen der Quantisierung (z.~B.~beim
\Name{Wick}-Produkt) eine zentrale Rolle. Auch diese sind
symplektische Mannigfaltigkeiten mit weiteren Strukturen, siehe
z.~B.~\citep{wells:1980a}.      

\begin{definition}[Symplektische Mannigfaltigkeit $M$]
\label{Definition:SymplektischeMannigfaltigkeit}
\index{Mannigfaltigkeit!symplektische|textbf}
Eine {\em symplektische Mannigfaltigkeit $(M,\omega)$} ist eine
Mannigfaltigkeit mit einer punktweise nichtausgearteten geschlossenen
Zweiform $\omega \in \schnitt{\Lambda^{2}T^{\ast}M}$.  
\end{definition}

\begin{bemerkung}
Die Dimension einer symplektischen Mannigfaltigkeit ist immer gerade, d.~h.~$\dim M=2n$.  
\end{bemerkung}

\begin{dlemma}[Symplektomorphismen] 
\index{Symplektomorphismus}
\index{Gruppe!Symplektomorphismen@der Symplektomorphismen}
Ein Diffeomorphismus $\phi:M \to N$ zwischen zwei symplektischen
Mannigfaltigkeiten $(M,\omega)$ und $(N,\omega')$ hei"st {\em
  symplektisch} oder {\em ein Symplektomorphismus}, falls $\phi^{\ast}
\omega' = \omega$. Die symplektischen Diffeomorphismen $M \to M$
bilden eine Gruppe, die {\em Symplektomorphismengruppe $\Sympl (M)$}.
\end{dlemma}

\begin{beispiel}
Ein erstes Beispiel f"ur eine symplektische Mannigfaltigkeit ist der
$\field{R}^{2n}$ mit der kanonischen symplektischen Form $\omega{\sss
  0} = \sum_{i=1}^{n} \de q^{i} \wedge \de p_{i}$. Die symplektische Form ist
offensichtlich geschlossen, d.~h.~$\de \omega_{\sss 0} =0$ und in einer
globalen Karte f"ur den $\field{R}^{2n}$ sogar konstant.  
\end{beispiel}

Das Beispiel des $\field{R}^{2n}$ ist insofern wichtig, als da"s
jede symplektische Mannigfaltigkeit {\em lokal} wie ein
$(\field{R}^{2n}, \omega_{\sss 0})$ aussieht, was die Aussage des \Name{Darboux}-Theorems ist.

\begin{satz}[\Name{Darboux}-Theorem]
    \label{Satz:DarbouxTheorem}
\index{Satz!Darboux@von \Name{Darboux}}
\index{Darboux-Theorem@\Name{Darboux}-Theorem}
Sei $(M,\omega)$ eine symplektische Mannigfaltigkeit der Dimension
$\dim M= 2n$, und $p\in M$. Dann existiert eine offene Umgebung $U
\subseteq M$ von $p$, ein Diffeomorphismus $\phi: U \to V$ sowie eine
offene Umgebung $V \subseteq \field{R}^{2n}$, so da"s   
\begin{align}
\phi:(U,\omega|_{\sss U}) \stackrel{\cong}{\to} (V,\omega_{\sss
  0}|_{\sss V}) 
\end{align}
ein Symplektomorphismus ist.
 \end{satz}

\subsection{Symplektische Zusammenh"ange}
\label{sec:SymplektischeZusammenhaenge}

F"ur die \Name{Fedosov}-Konstruktion ist der Zusammenhang von gro"ser
Bedeutung. Daher werden wir kurz die wesentlichen (ben"otigten)
Eigenschaften von Zusammenh"angen auff"uhren. Im weiteren sind wir an Zusammenh"angen auf $TM$
interessiert. Dies hei"st insbesondere, da"s $E=TM$ und damit
wird der Zusammenhang eine Abbildung 

\begin{align*}
\nabla: \schnitt{TM} \times \schnitt{TM} \to \schnitt{TM}.
\end{align*}

Diese spezielle Art von Zusammenhang erlaubt uns eine neue Gr"o"se, die {\em
Torsion} des Zusammenhangs, zu definieren.  
    
\begin{dlemma}[Torsion eines Zusammenhangs $\nabla$ auf $M$]
\label{Definition:TorsionKruemmungZusammenhang}
\index{Torsion!Zusammenhang@eines Zusammenhangs}
Gegeben eine Mannigfaltigkeit $M$ mit einem beliebigen
Zusammenhang\footnote{Wenn wir von \glqq einer Mannigfaltigkeit $M$
  mit Zusammenhang\grqq{} sprechen, dann ist ein Zusammenhang auf dem
  Tangentialb"undel $TM$ gemeint.} $(M,\nabla)$. Man definiert die {\em Torsion}  
\begin{align}
T_{\nabla} (X,Y)=\lconn[X] Y - \lconn[Y] X -[X,Y] 
\end{align}
f"ur alle $X,Y,Z \in \schnitt{TM}$.
\end{dlemma}

\begin{lemma}[Torsionsfreier Zusammenhang]
\label{Lemma:TorsionsfreierZusammenhang}
Auf jeder Mannigfaltigkeit mit einem Zusammenhang $(M,\tilde{\nabla})$
existiert ein torsionsfreier Zusammenhang $\nabla$,
d.~h.~$T_{\nabla}(X,Y)=0$ f"ur alle $X,Y \in \schnitt{TM}$.       
\end{lemma}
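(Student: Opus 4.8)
Der Plan ist, den gesuchten torsionsfreien Zusammenhang durch Symmetrisierung des gegebenen Zusammenhangs $\tilde{\nabla}$ zu konstruieren. Zuerst w"urde ich zeigen, da"s die Torsion $T_{\tilde{\nabla}}$ aus Definition und Lemma~\ref{Definition:TorsionKruemmungZusammenhang} ein Tensorfeld ist. Konkret rechnet man nach, da"s $T_{\tilde{\nabla}}(fX,Y) = f\, T_{\tilde{\nabla}}(X,Y)$ gilt, wobei sich die aus der Leibniz-Regel (Eigenschaft {\it ii.)} in Definition~\ref{Definition:Zusammenhang}) und der \Name{Lie}-Klammer $[fX,Y] = f[X,Y] - (Yf)X$ stammenden Terme $(Yf)X$ gerade wegheben. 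Zusammen mit der offensichtlichen Antisymmetrie $T_{\tilde{\nabla}}(X,Y) = -T_{\tilde{\nabla}}(Y,X)$ ist damit $T_{\tilde{\nabla}} \in \schnitt{\Lambda^{2} T^{\ast}M \otimes TM}$ ein Tensorfeld.

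Im zweiten Schritt definiere ich den neuen Zusammenhang durch
\begin{align*}
    \lconn[X] Y := \tilde{\nabla}_{X} Y - \tfrac{1}{2} T_{\tilde{\nabla}}(X,Y)
\end{align*}
f"ur alle $X,Y \in \schnitt{TM}$. Da $-\tfrac{1}{2} T_{\tilde{\nabla}}$ nach dem ersten Schritt ein Tensorfeld ist, "ubertr"agt sich die Funktionenlinearit"at im ersten Argument (Eigenschaft {\it i.)}) unmittelbar, und die Leibniz-Regel im zweiten Argument bleibt erhalten, da der tensorielle Zusatzterm keinen $X(f)$-Beitrag liefert. Damit ist $\nabla$ wieder ein linearer Zusammenhang im Sinne von Definition~\ref{Definition:Zusammenhang}.

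Im letzten Schritt bleibt die Torsionsfreiheit nachzurechnen. Einsetzen der Definition liefert
\begin{align*}
    T_{\nabla}(X,Y) = T_{\tilde{\nabla}}(X,Y) - \tfrac{1}{2} T_{\tilde{\nabla}}(X,Y) + \tfrac{1}{2} T_{\tilde{\nabla}}(Y,X),
\end{align*}
und mit der Antisymmetrie $T_{\tilde{\nabla}}(Y,X) = -T_{\tilde{\nabla}}(X,Y)$ folgt sofort $T_{\nabla}(X,Y) = 0$. Eine echte Schwierigkeit erwarte ich hier nicht; der einzige Punkt, auf den man achten mu"s, ist die saubere Verifikation der Tensorialit"at von $T_{\tilde{\nabla}}$ im ersten Schritt, da hiervon sowohl die Wohldefiniertheit der Konstruktion als auch die Zusammenhangseigenschaften von $\nabla$ abh"angen.
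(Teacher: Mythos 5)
Ihr Vorschlag ist korrekt und verwendet genau dieselbe Konstruktion wie die Arbeit, n"amlich $\nabla = \tilde{\nabla} - \tfrac{1}{2} T_{\tilde{\nabla}}$ mit anschlie"sendem Nachrechnen von $T_{\nabla} = 0$ "uber die Antisymmetrie der Torsion. Die zus"atzliche Verifikation der Tensorialit"at von $T_{\tilde{\nabla}}$, die die Wohldefiniertheit von $\nabla$ als Zusammenhang sichert, wird in der Arbeit stillschweigend "ubergangen und ist eine sinnvolle Erg"anzung.
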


\begin{proof}
    Sei $\tilde{\nabla}$ ein beliebiger Zusammenhang, so ist $\nabla:=
    \tilde{\nabla}-\frac{1}{2}T_{\tilde{\nabla}}$ torsionsfrei, da 
    \begin{equation}
        \begin{aligned}
            T_{\nabla}(X,Y) & =  \tilde{\nabla}_{\!\!\sss X} Y - \frac{1}{2}
            T_{\sss \tilde\nabla} (X,Y) - \tilde\nabla_{\!\!\sss Y} X +
            \frac{1}{2} T_{\sss \tilde\nabla} (Y,X) - [X,Y] \\ &=
            T_{\sss \tilde\nabla} (X,Y) - T_{\sss \tilde\nabla} (X,Y)
            \\&  =0  
        \end{aligned}
    \end{equation}
\end{proof}

\begin{definition}[Symplektischer Zusammenhang]
\label{Definition:SymplektischerZusammenhang}
\index{Zusammenhang!symplektischer|textbf}
Gegeben eine symplektische Mannigfaltigkeit $(M,\omega)$. Man nennt
einen Zusammenhang 
\begin{align*}
\nabla:\schnitt{TM} \times \schnitt {TM} \to \schnitt{TM}
\end{align*}
{\em symplektisch} falls 
\begin{align}
\nabla \omega = 0.
\end{align} 
Ausgeschrieben bedeutet dies, da"s
\begin{align*}
X(\omega(Y,Z)) = \omega (\lconn[X]Y,Z) + \omega (Y,\lconn[X]Z)
\end{align*}
f"ur alle $X,Y,Z \in \schnitt{TM}$.
\end{definition}

\begin{lemma}[Konstruktion symplektischer Zusammenhang]
\label{Lemma:HessTrick}
Sei $(M,\omega,\tilde{\nabla})$ eine symplektische Mannigfaltigkeit
mit einem torsionsfreien Zusammenhang $\tilde{\nabla}$. Der Ausdruck 
\begin{align}
\omega(\lconn[X]Y,Z)=\omega(\tilde{\nabla}_{\sss X}Y,Z) + \frac{1}{3}
\left( \tilde{\nabla}_{\sss X} \omega \right)(Y,Z) + \frac{1}{3}
\left( \tilde{\nabla}_{\sss Y} \omega \right)(X,Z) 
\end{align}
mit $X,Y,Z \in \schnitt{TM}$ definiert einen symplektischen
Zusammenhang $\nabla$.  
\end{lemma}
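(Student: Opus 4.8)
The plan is to check three things in turn: that the displayed formula determines a unique object $\lconn[X]Y$, that this object obeys the two axioms of a linear connection from Definition~\ref{Definition:Zusammenhang}, and finally that the resulting $\nabla$ annihilates $\omega$ as demanded by Definition~\ref{Definition:SymplektischerZusammenhang}. For well-definedness, fix $X,Y \in \schnitt{TM}$ and consider the assignment
\[
Z \longmapsto \omega(\tilde{\nabla}_{\sss X}Y,Z) + \tfrac{1}{3}(\tilde{\nabla}_{\sss X}\omega)(Y,Z) + \tfrac{1}{3}(\tilde{\nabla}_{\sss Y}\omega)(X,Z).
\]
Every term is $\Cinf{M}$-linear in $Z$, because $\omega$ and the covariant derivatives $\tilde{\nabla}_{\sss X}\omega$, $\tilde{\nabla}_{\sss Y}\omega$ are two-forms and hence tensorial in all arguments; the assignment is therefore a one-form. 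Since $\omega$ is pointwise nondegenerate (Definition~\ref{Definition:SymplektischeMannigfaltigkeit}), there is exactly one vector field, which we name $\lconn[X]Y$, representing this one-form through $\omega(\lconn[X]Y,\,\cdot\,)$. This is the only place where nondegeneracy enters, and it is what legitimises solving the equation for the vector field $\lconn[X]Y$.

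Next I would verify the connection axioms. The $\Cinf{M}$-linearity in $X$ is immediate, since all three terms on the right are tensorial in $X$ (the third uses $\tilde{\nabla}_{\sss fX}=f\,\tilde{\nabla}_{\sss X}$). For the Leibniz rule in $Y$ I would substitute $fY$ with $f\in\Cinf{M}$: the piece $\omega(\tilde{\nabla}_{\sss X}(fY),Z)$ produces the inhomogeneous term $X(f)\,\omega(Y,Z)$ through axiom \textit{ii.)} for $\tilde{\nabla}$, whereas the two $\omega$-derivative terms are tensorial and merely pick up a factor $f$. Comparing the result with $\omega\!\left(f\lconn[X]Y+X(f)Y,\,Z\right)$ and invoking nondegeneracy yields $\lconn[X](fY)=f\lconn[X]Y+X(f)Y$. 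Additivity in both slots is clear, so $\nabla$ is a genuine linear connection.

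The decisive step is $\nabla\omega=0$, that is
\[
X(\omega(Y,Z)) = \omega(\lconn[X]Y,Z) + \omega(Y,\lconn[X]Z).
\]
I would expand both terms on the right by the defining formula; for the second I first write $\omega(Y,\lconn[X]Z)=-\omega(\lconn[X]Z,Y)$ and use the antisymmetry of $\omega$ and of $\tilde{\nabla}_{\sss X}\omega$. The undeformed pieces are removed by the identity
\[
X(\omega(Y,Z)) = (\tilde{\nabla}_{\sss X}\omega)(Y,Z) + \omega(\tilde{\nabla}_{\sss X}Y,Z) + \omega(Y,\tilde{\nabla}_{\sss X}Z),
\]
which is just the definition of the covariant derivative of a two-form. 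After cancellation the residual equals $\tfrac{1}{3}\,\de\omega(X,Y,Z)$, where I use the formula $\de\omega(X,Y,Z)=(\tilde{\nabla}_{\sss X}\omega)(Y,Z)-(\tilde{\nabla}_{\sss Y}\omega)(X,Z)+(\tilde{\nabla}_{\sss Z}\omega)(X,Y)$, valid because $\tilde{\nabla}$ is torsion-free. Since $\omega$ is closed this residual vanishes. This is the main obstacle, being the only step that uses both the closedness of $\omega$ and the torsion-freeness of $\tilde{\nabla}$; the real content is recognising that the coefficients $\tfrac{1}{3}$ are tuned precisely so that the surplus terms reassemble into $\de\omega$.

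Finally, as a byproduct I would remark that forming $\lconn[X]Y-\lconn[Y]X$ makes the two $\tfrac{1}{3}$-terms cancel pairwise, so that $\omega(\lconn[X]Y-\lconn[Y]X,Z)=\omega(\tilde{\nabla}_{\sss X}Y-\tilde{\nabla}_{\sss Y}X,Z)=\omega([X,Y],Z)$; by nondegeneracy $\nabla$ is again torsion-free. Although Definition~\ref{Definition:SymplektischerZusammenhang} only requires $\nabla\omega=0$, this extra fact is exactly what the subsequent \Name{Fedosov}-construction needs, which is the reason the lemma is invoked there alongside Lemma~\ref{Lemma:TorsionsfreierZusammenhang}.
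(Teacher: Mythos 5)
Your proof is correct and fills in precisely the computation the paper leaves implicit: its proof consists only of the remark that the claim follows from an elementary calculation using the torsion-freeness of $\tilde{\nabla}$ and the closedness of $\omega$, which are exactly the two ingredients your argument isolates in the decisive step. The verification of well-definedness via nondegeneracy, the connection axioms, the reassembly of the surplus terms into $\frac{1}{3}\de\omega(X,Y,Z)$, and the bonus observation that $\nabla$ remains torsion-free are all sound.
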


\begin{proof}
Der Beweis ist eine elementare Rechnung, der die Torsionsfreiheit des
Zusammenhangs $\tilde{\nabla}$, sowie die Geschlossenheit der Zweiform
$\omega$ nutzt.    
\end{proof}

\begin{dlemma}[Symplektische Kr"ummung eines Zusammenhangs]
\label{Lemma:SymplektischeKruemmung}
\index{Kruemmung@Kr\"ummung!symplektische}
Sei auf einer symplektische Mannigfaltigkeit $(M,\omega)$ ein
symplektischer Zusammenhang $\nabla$ gegeben, so ist die
{\em symplektische Kr"ummung} $R$ des Zusammenhangs die mit der 
symplektischen Form kontrahierte Kr"ummung $\hat{R}$, d.~h. 

\begin{align*}
\hat{R}(X,Y)& =\lconn[X] \lconn[Y] - \lconn[Y] \lconn[X] -
\lconn[{[X,Y]}]  \in \schnitt{\Lambda^2 T^{\ast}M \otimes \End{TM}}, \\
R(Z,U,X,Y) & =\omega(Z,\hat{R}(X,Y)U) \in \schnitt{S^{2}T^{\ast}M
  \otimes \Lambda^{2}T^{\ast}M}, 
\end{align*} 

f"ur $X,Y,Z,U \in \schnitt{TM}$.

\end{dlemma}

\begin{proof}
Der Zusammenhang $\nabla$ ist symplektisch, daher ist $R$ in den ersten
beiden Argumenten symmetrisch. Die Schiefsymmetrie in den zweiten
beiden Argumenten bleibt erhalten.     
\end{proof}

\section{Symmetrien -- $G$- und $\LieAlg{g}$-Invarianz}
\label{sec:GundgInvarianz}
\index{Lie-Gruppe@\Name{Lie}-Gruppe}
\index{Lie-Algebra@\Name{Lie}-Algebra}
\index{Algebra!Lie-Algebra@\Name{Lie}-Algebra}

In diesem Kapitel wollen wir einen kurzen "Uberblick "uber die
Wirkung von \Name{Lie}-Gruppen und \Name{Lie}-Algebren
auf Mannigfaltigkeiten geben.

\begin{definition}[$G$-Mannigfaltigkeit]
    \label{Definition:GMannigfaltigkeit}
\index{G-Mannigfaltigkeit@$G$-Mannigfaltigkeit}
\index{Mannigfaltigkeit!G-Mannigfaltigkeit@$G$-Mannigfaltigkeit}    
Eine Mannigfaltigkeit $M$ mit einer Gruppenwirkung $G$ nennt man
    eine {\em $G$-Mannigfaltigkeit}, die man mit $(M,G)$ bezeichnet. Man
    definiert eine Linkswirkung $\phi$ und eine Rechtswirkung $\psi$
    durch 
\begin{align*}
    \phi:  G  \times M & \to M, \quad (g,m)  \mapsto \phi(g,m)=:
    \phi_g(m),  \\ 
    \psi: M \times G & \to M , \quad  (m,g) \mapsto \psi(m,g) =: \psi_g(m).
\end{align*}
sa da"s f"ur alle $g,g' \in G$ gilt:
\begin{align*}
    \phi_{g'} \circ \phi_g =\phi_{g'.g} \quad \text{sowie} \quad
    \phi_{e}=\id,
\end{align*}
beziehungsweise f"ur die Rechtswirkung $\psi$
\begin{align*}
     \psi_{g'} \circ \psi_g =\psi_{g.g'} \quad \text{sowie} \quad
    \psi_{e}=\id.
\end{align*} 
\end{definition}

\begin{bemerkungen}[Notation]
 \label{Bemerkungen:GMannigfaltigkeiten}
 Um die Notation zu vereinfachen werden wir in Zukunft die
    Linkswirkung verk"urzt als $g.m$ schreiben. Die Wirkung mit $g.$ abzuk"urzen
    hat weiter den Vorteil, da"s wir kategoriell unterschiedliche
    Wirkungen (auf Punkte, Funktionen,Vektorfelder, Formen,...)
    vereinheitlicht schreiben k"onnen. Um dies machen zu k"onnen,
    m"ussen wir uns allerdings von der Konsistenz der Wirkungen
    "uberzeugen. Daher werden wir im folgenden einige
    Kompatibilit"aten zeigen.  
\end{bemerkungen}

\begin{definition}[Symplektomorphismus]
    \label{Definition:symplektomorph}
\index{Symplektomorphismus|textbf}
    Gegeben sei eine symplektische $G$-Mannigfaltigkeit
    $(M,\omega, G)$. Man nennt $\phi_{g} :M \to M$ einen {\em
    Symplektomorphismus}, falls die symplektische Form f"ur alle
    $g\in G$ invariant unter der Gruppenwirkung ist, so da"s
    \begin{align}
     \label{eq:Symplektomorphismus}
        {\phi_{g^{-1}}^{\ast}}(\omega) = \omega \quad \mbox{oder kurz} \quad
        g.\omega=\omega \qquad \forall g\in G. 
    \end{align}
\end{definition}

Der n"achste Schritt besteht nun darin die $G$-Wirkung auf ein
Vektorb"undel "uber einer Mannigfaltigkeit zu definieren. Diese
spielen insbesondere bei der "aquivarianten \Name{Morita}-Theorie eine
wichtige Rolle.

\begin{definition}[$G$-Vektorb"undel]
    \label{Definition:GVektorbuendel}
\index{Vektorbuendel@Vektorb\"undel!G-Vektorbuendel@$G$-Vektorb\"undel}
$E\stackrel{\pi}{\to}M$ sei ein Vektorb"undel "uber der
$G$-Mannigfaltigkeit $(M,G)$. Man nennt $E\stackrel{\pi}{\to}M$ ein
{\em $G$ -Vektorb"undel} falls es eine Gruppenwirkung $\Phi$
auf dem Vektorb"undel gibt, so da"s
\begin{enumerate}
\item f"ur alle $v \in E$ 
\begin{equation}
\begin{aligned}
    \label{eq:GruppenwirkungVektorbuendel}
    \Phi: G \times E & \to E,\quad (g,v)  \mapsto \Phi(g,v)=:
    \Phi_g(v)=g.v,
\end{aligned}    
\end{equation}
\item f"ur alle $v\in E$ und $g\in G$ die Gleichung $g.(\pi(v))=\pi(g.v)$ (oder
    argumentfrei $\pi \circ \Phi = \phi \circ \pi$) gilt. Dies ist
    gleichbedeutend mit der Kommutativit"at des folgenden Diagramms
    \begin{equation}
    \begin{CD}
        \Phi:G \times E @>>> E \\
        @VV{\pi}V       @VV{\pi}V \\
        \phi:G \times M @>>> M. 
    \end{CD}
\end{equation}
\end{enumerate}
Man bezeichnet ein $G$-Vektorb"undel mit $(E\stackrel{\pi}{\to}M, G)$ 
oder kurz mit $(E,G)$.  
\end{definition}

\begin{bemerkungen}
~\vspace{-5mm}
\begin{compactenum}
\item Die Abbildung $\Phi_{g}$ ist ein Vektorb"undelautomorphismus "uber $\phi_g$.
\item Die Fasern $E_m=\pi^{-1}(m)$ sind hier komplexe Vektorr"aume
    endlicher Dimension.
\item Die Abbildung $\Phi:E_{m} \to E_{g.m}$ ist ein Vektorraum
    Homomorphismus f"ur alle $m\in M$ und alle $g\in G$.
\end{compactenum}
\end{bemerkungen}

\begin{lemma}[Auf Schnitte induzierte $G$-Wirkung]
\label{Lemma:AufSchnitteInduzierteGWirkung}
Die G-Wirkung auf den Schnitten $s\in\schnitt{E}$ ist auf nat"urliche
Weise gegeben durch
\begin{align}
    (g.s)(m):=\Phi_{g}(s(\phi_{g}^{-1}(m))).
\end{align}
\end{lemma}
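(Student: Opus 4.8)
Der Plan ist, die drei Aussagen nachzuweisen, die in der Behauptung stillschweigend enthalten sind: erstens, dass die Vorschrift $(g.s)(m) := \Phi_{g}(s(\phi_{g}^{-1}(m)))$ "uberhaupt wieder einen Schnitt liefert (also mit $\pi$ vertr"aglich ist); zweitens, dass die so erkl"arte Operation eine Linkswirkung der Gruppe $G$ ist; und drittens, dass sie Glattheit und Linearit"at respektiert, so dass tats"achlich eine Abbildung $\schnitt{E} \to \schnitt{E}$ vorliegt. Alle drei Punkte folgen direkt aus den in Definition~\ref{Definition:GVektorbuendel} und Definition~\ref{Definition:GMannigfaltigkeit} gesammelten Eigenschaften.

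Zuerst w"urde ich die Wohldefiniertheit als Schnitt zeigen, was der eigentlich entscheidende Schritt ist. Hier geht die $G$-"Aquivarianz der Projektion aus Definition~\ref{Definition:GVektorbuendel} {\it ii.)}, also $\pi \circ \Phi_{g} = \phi_{g} \circ \pi$, wesentlich ein. Zusammen mit der Schnitteigenschaft $\pi \circ s = \id_{M}$ und der Wirkungseigenschaft $\phi_{g} \circ \phi_{g}^{-1} = \id_{M}$ rechnet man
\begin{align*}
    \pi((g.s)(m)) = \pi(\Phi_{g}(s(\phi_{g}^{-1}(m)))) = \phi_{g}(\pi(s(\phi_{g}^{-1}(m)))) = \phi_{g}(\phi_{g}^{-1}(m)) = m,
\end{align*}
so dass $\pi \circ (g.s) = \id_{M}$ gilt und $g.s$ wieder ein Schnitt in $E$ ist.

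Im zweiten Schritt pr"ufe ich die Wirkungsaxiome. F"ur das Einselement liefern $\Phi_{e} = \id$ und $\phi_{e} = \id$ sofort $(e.s)(m) = \Phi_{e}(s(\phi_{e}^{-1}(m))) = s(m)$, also $e.s = s$. F"ur die Vertr"aglichkeit nutze ich $\Phi_{g'} \circ \Phi_{g} = \Phi_{g'g}$ sowie $\phi_{g}^{-1} \circ \phi_{g'}^{-1} = (\phi_{g'} \circ \phi_{g})^{-1} = \phi_{g'g}^{-1}$ und erhalte
\begin{align*}
    (g'.(g.s))(m) = \Phi_{g'}(\Phi_{g}(s(\phi_{g}^{-1}(\phi_{g'}^{-1}(m))))) = \Phi_{g'g}(s(\phi_{g'g}^{-1}(m))) = ((g'g).s)(m).
\end{align*}
Damit ist $g.(\cdot)$ eine Linkswirkung auf $\schnitt{E}$.

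Schlie"slich bleibt zu bemerken, dass $g.s$ glatt ist, da es als Komposition des Diffeomorphismus $\phi_{g}^{-1}$, des glatten Schnitts $s$ und des Vektorb"undelautomorphismus $\Phi_{g}$ entsteht, und dass die Abbildung $s \mapsto g.s$ $\field{C}$-linear ist, weil $\Phi_{g}$ nach der Bemerkung im Anschluss an Definition~\ref{Definition:GVektorbuendel} faserweise ein Vektorraum-Homomorphismus $E_{m} \to E_{g.m}$ ist. Eine eigentliche H"urde erwarte ich nicht; die einzige Subtilit"at, auf die man achten muss, ist die konsequente Unterscheidung von Links- und Rechtswirkung samt der korrekten Behandlung der Inversen, damit sich die Kompositionen in Schritt~zwei tats"achlich zu $\phi_{g'g}^{-1}$ bzw.~$\Phi_{g'g}$ zusammensetzen.
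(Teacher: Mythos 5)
Dein Beweis ist korrekt und entspricht genau der Standardverifikation, die das Papier an dieser Stelle stillschweigend unterdrückt (die Arbeit gibt für dieses Lemma keinen eigenen Beweis an): Die Wohldefiniertheit als Schnitt folgt aus $\pi \circ \Phi_{g} = \phi_{g} \circ \pi$, die Wirkungsaxiome aus den Kompositionsregeln für $\phi$ und $\Phi$, und Glattheit sowie faserweise Linearität aus den Eigenschaften des Bündelautomorphismus $\Phi_{g}$. Es gibt nichts zu beanstanden.
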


\begin{definition}[$G$-invariante Schnitte]
\label{Definition:GInvarianteSchnitte}
\index{Schnitt!G-invarianter@$G$-invarianter}
Gegeben sei ein $G$-Vektorb"undel $(E,G)$. 
Man nennt Schnitte {\em $G$-invariant}, falls $g.s=s$ f"ur
alle $g \in G$ gilt.
\end{definition}

\begin{lemma}[Induzierte $G$-Wirkung auf Funktionen]
\label{Lemma:InduzierteGWirkungAufFunktionen}
 Die Abbildung $\phi: G \times M \to M$ induziert eine
 $G$-Linkswirkung auf den Funktionen $C^\infty(M)$: 
\begin{align}
    g.f := \phi_{g^{-1}}^\ast f = f \circ \phi_{g^{-1}}.
\end{align}
\end{lemma}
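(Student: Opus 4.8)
The plan is to verify directly that $g.f := \phi_{g^{-1}}^\ast f = f\circ\phi_{g^{-1}}$ satisfies the two defining properties of a left action, exactly as in Definition~\ref{Definition:GMannigfaltigkeit}, but now on the commutative algebra $\Cinf{M}$ rather than on the points of $M$. First I would record well-definedness: by the group-action axioms $\phi_{g'}\circ\phi_g=\phi_{g'.g}$ and $\phi_e=\id$ of Definition~\ref{Definition:GMannigfaltigkeit}, each $\phi_g$ is a diffeomorphism of $M$ whose inverse is precisely $\phi_{g^{-1}}$ (since $\phi_g\circ\phi_{g^{-1}}=\phi_{g.g^{-1}}=\phi_e=\id$). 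Hence precomposition with the smooth map $\phi_{g^{-1}}$ maps $\Cinf{M}$ into $\Cinf{M}$, so $g.f$ is again a smooth function, and $f\mapsto g.f$ is a $\field{C}$-linear map (indeed an algebra automorphism, since pullback respects the pointwise product and the unit).

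Next I would check the two action axioms. For the identity, $e^{-1}=e$ together with $\phi_e=\id$ gives $e.f=f\circ\phi_e=f$. For the compatibility $g'.(g.f)=(g'.g).f$ I would compute, for $g,g'\in G$ and $f\in\Cinf{M}$,
\begin{align*}
g'.(g.f) = (g.f)\circ\phi_{g'^{-1}} = f\circ\phi_{g^{-1}}\circ\phi_{g'^{-1}} = f\circ\phi_{g^{-1}.g'^{-1}} = f\circ\phi_{(g'.g)^{-1}} = (g'.g).f,
\end{align*}
where the third equality uses the composition law of Definition~\ref{Definition:GMannigfaltigkeit} with $a=g^{-1}$, $b=g'^{-1}$, and the fourth uses the group identity $(g'.g)^{-1}=g^{-1}.g'^{-1}$.

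There is no genuine obstacle here; the whole content of the lemma is the correct bookkeeping of one order reversal. The only point worth emphasizing is \emph{why} the inverse appears in the definition: pullback is contravariant, so the naive assignment $g\mapsto\phi_g^\ast$ would be an anti-homomorphism, i.e.\ a right action. Precomposing with the group inversion $g\mapsto g^{-1}$ reverses this order once more and thereby produces a bona fide \emph{left} action, which is exactly the interplay exploited in the computation above through $(g'.g)^{-1}=g^{-1}.g'^{-1}$. This also makes transparent the consistency with Lemma~\ref{Lemma:AufSchnitteInduzierteGWirkung}, where the same inverse convention is used for the induced action on sections.
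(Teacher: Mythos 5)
Your proof is correct: the paper states this lemma without any proof at all, and your direct verification of the two action axioms (using $\phi_{g^{-1}}\circ\phi_{g'^{-1}}=\phi_{(g'.g)^{-1}}$ from Definition~\ref{Definition:GMannigfaltigkeit}) is exactly the standard argument the author implicitly relies on. Your remark explaining that the inverse compensates for the contravariance of the pullback is a welcome clarification and is consistent with the convention used in Lemma~\ref{Lemma:AufSchnitteInduzierteGWirkung}.
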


\begin{lemma}
    \label{Lemma:FunktionenSchnitte}
    Gegeben sei das $G$-Vektorb"undel $(E\stackrel{\pi}{\to}M,G)$, $f
    \in C^\infty(M)$, $s\in\schnitt{E}$ und $g\in G$, dann gilt:
    \begin{align}
        g.(sf)=(g.s)(g.f)
    \end{align}
\end{lemma}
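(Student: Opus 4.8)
The plan is to verify the identity pointwise: both $g.(sf)$ and $(g.s)(g.f)$ are elements of $\schnitt{E}$, and two sections of $E$ coincide if and only if they agree in every fibre $E_m$, $m \in M$. The only structural ingredient I will need, beyond the defining formulas of the induced actions, is that the bundle automorphism $\Phi_g$ restricts on each fibre to a linear isomorphism $\Phi_g : E_{\phi_g^{-1}(m)} \to E_m$. This is precisely the remark recorded after Definition~\ref{Definition:GVektorbuendel}, where it is noted that $\Phi_g$ is a vector bundle automorphism over $\phi_g$ whose fibrewise restrictions are vector space homomorphisms.

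First I would fix $m \in M$ and unfold the left-hand side using Lemma~\ref{Lemma:AufSchnitteInduzierteGWirkung} together with the fact that $(sf)(p) = f(p)\,s(p)$ is ordinary fibrewise scalar multiplication in $E_p$. This yields
\begin{align*}
    (g.(sf))(m) = \Phi_g\big((sf)(\phi_g^{-1}(m))\big) = \Phi_g\big(f(\phi_g^{-1}(m))\, s(\phi_g^{-1}(m))\big).
\end{align*}
The crucial step is then to pull the scalar $f(\phi_g^{-1}(m))$ out of $\Phi_g$, which is legitimate because $\Phi_g$ is linear on the fibre $E_{\phi_g^{-1}(m)}$, giving $f(\phi_g^{-1}(m))\,\Phi_g\big(s(\phi_g^{-1}(m))\big)$.

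Next I would unfold the right-hand side using Lemma~\ref{Lemma:AufSchnitteInduzierteGWirkung} and Lemma~\ref{Lemma:InduzierteGWirkungAufFunktionen}:
\begin{align*}
    ((g.s)(g.f))(m) = (g.s)(m)\,(g.f)(m) = \Phi_g\big(s(\phi_g^{-1}(m))\big)\, f(\phi_g^{-1}(m)),
\end{align*}
where I use $(g.f)(m) = (f \circ \phi_{g^{-1}})(m) = f(\phi_g^{-1}(m))$ and the fact that $\phi_g^{-1} = \phi_{g^{-1}}$ since $\phi$ is a left action. Comparing the two expressions, they coincide because scalar multiplication in $E_m$ is commutative; as $m$ was arbitrary, the two sections are equal and the claim follows.

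I do not expect any genuine obstacle here: the entire content of the statement is the commutation of $\Phi_g$ with fibrewise scalar multiplication, which is built into the definition of a $G$-vector bundle. The one point deserving a line of care is the bookkeeping of base points — making explicit that $f(\phi_g^{-1}(m))$ acts first as a scalar on $E_{\phi_g^{-1}(m)}$ and afterwards on $E_m$, so that the linearity of $\Phi_g$ is invoked on the correct fibre. Once this is made explicit, the equality is immediate.
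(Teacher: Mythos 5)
Your proof is correct and follows exactly the route the paper intends: the paper's own proof consists of the single remark that the claim is immediate by substituting the definitions, and your pointwise computation — unfolding both sides via the induced actions on sections and functions and using the fibrewise linearity of $\Phi_g$ to pull the scalar $f(\phi_g^{-1}(m))$ out — is precisely that verification written out in full.
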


\begin{proof}
Durch Einsetzen der obigen Definitionen leicht zu sehen.
\end{proof}

\begin{lemma}[Induzierte $G$-Wirkung auf Einsformen und Vektorfeldern]
\label{Lemma:GWirkungAufEinsformenUndVektorfeldern}
Die Gruppenwirkung $\phi$ induziert auch eine $G$-Linkswirkung auf den
Schnitten im Kotangentialb"undel $\schnitt{T^\ast M} \ni \alpha$, den 1-Formen

\begin{equation}
\begin{aligned}
   (g.\alpha)(m):&=(\phi_{g^{-1}}^\ast \alpha)(m) \\
& = \alpha(\phi_{g^{-1}}(m)) \circ T_m \phi_{g^{-1}} \\
& = \alpha(\phi_{g^{-1}}(m)) \circ (T_{\phi_{g^{-1}}(m)} \phi_g)^{-1},
\end{aligned}
\end{equation}

sowie auf den Schnitten im Tangentialb"undel $\schnitt{TM} \ni X$,
den Vektorfeldern

\begin{align}
    (g.X)(m):&=(\phi_{g\ast} X)(m) = T_{\phi_{g^{-1}}(m)}
    \phi_gX(\phi_{g^{-1}}(m)).  
\end{align}
Ferner ist die Gruppenwirkung $\phi$ vertr"aglich mit der Kontraktion. Das
hei"st es gilt: $(g.\alpha)(g.X) = g.(\alpha(X))$.
\end{lemma}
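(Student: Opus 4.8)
The plan is to reduce everything to the standard functoriality of pullback and pushforward under diffeomorphisms, using only the two defining properties of the action from Definition~\ref{Definition:GMannigfaltigkeit}, namely $\phi_{g'} \circ \phi_g = \phi_{g'g}$ and $\phi_e = \id$. First I would establish the two equalities in the displayed formula for $g.\alpha$. The middle expression is by definition the pullback of the one-form $\alpha$ under $\phi_{g^{-1}}$ evaluated at $m$, i.e.\ $(\phi_{g^{-1}}^\ast \alpha)(m) = \alpha(\phi_{g^{-1}}(m)) \circ T_m \phi_{g^{-1}}$. For the last equality I would note that $\phi_{g^{-1}}$ is the inverse diffeomorphism of $\phi_g$, since $\phi_g \circ \phi_{g^{-1}} = \phi_{g g^{-1}} = \phi_e = \id$; differentiating this identity at $m$ by the chain rule gives $T_{\phi_{g^{-1}}(m)}\phi_g \circ T_m \phi_{g^{-1}} = \id_{T_m M}$, hence $T_m \phi_{g^{-1}} = (T_{\phi_{g^{-1}}(m)}\phi_g)^{-1}$. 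The analogous pointwise identity for $g.X$ is just the definition of the pushforward $\phi_{g\ast}X$.

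Next I would verify the left-action axioms for both quantities, and here the variances must be tracked carefully. For one-forms the essential observation is that pullback is \emph{contravariant}, $(\Psi_1 \circ \Psi_2)^\ast = \Psi_2^\ast \circ \Psi_1^\ast$, while group inversion is an anti-homomorphism, $(g'g)^{-1} = g^{-1}g'^{-1}$, so the two order-reversals cancel: $g'.(g.\alpha) = \phi_{g'^{-1}}^\ast \phi_{g^{-1}}^\ast \alpha = (\phi_{g^{-1}} \circ \phi_{g'^{-1}})^\ast \alpha = \phi_{(g'g)^{-1}}^\ast \alpha = (g'g).\alpha$, using $\phi_{g^{-1}} \circ \phi_{g'^{-1}} = \phi_{g^{-1}g'^{-1}} = \phi_{(g'g)^{-1}}$, together with $e.\alpha = \phi_e^\ast \alpha = \alpha$. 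For vector fields I would instead invoke the \emph{covariance} of pushforward, $(\Psi_1 \circ \Psi_2)_\ast = \Psi_{1\ast} \circ \Psi_{2\ast}$, whence $g'.(g.X) = \phi_{g'\ast}\phi_{g\ast} X = (\phi_{g'} \circ \phi_g)_\ast X = \phi_{g'g\,\ast} X = (g'g).X$, matching the homomorphism property directly, with $e.X = X$.

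Finally, the compatibility with contraction I would check by a pointwise computation. Evaluating $(g.\alpha)(g.X)$ at $m$ and substituting the formulas yields $\alpha(\phi_{g^{-1}}(m))\bigl(T_m\phi_{g^{-1}} \cdot T_{\phi_{g^{-1}}(m)}\phi_g \cdot X(\phi_{g^{-1}}(m))\bigr)$. The composite of tangent maps in the middle is $T_{\phi_{g^{-1}}(m)}(\phi_{g^{-1}} \circ \phi_g) = T_{\phi_{g^{-1}}(m)}\id = \id$, because $\phi_{g^{-1}} \circ \phi_g = \phi_e = \id$; thus the vector $X(\phi_{g^{-1}}(m)) \in T_{\phi_{g^{-1}}(m)}M$ is left unchanged, and the expression collapses to $\alpha(\phi_{g^{-1}}(m))\bigl(X(\phi_{g^{-1}}(m))\bigr) = (\alpha(X))\circ\phi_{g^{-1}}(m)$, which is precisely $g.(\alpha(X))$ evaluated at $m$ by Lemma~\ref{Lemma:InduzierteGWirkungAufFunktionen}.

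I expect no deep obstacle; the content is elementary. The only genuine care-point is the bookkeeping: keeping the base points of the tangent maps straight in each application of the chain rule, and getting the inverses and variances right so that one really lands on a \emph{left} action (rather than a right action) for both the one-forms and the vector fields. The cleanest safeguard is to phrase $g.\alpha = \phi_{g^{-1}}^\ast\alpha$ and $g.X = \phi_{g\ast}X$ abstractly and to invoke functoriality of $(\cdot)^\ast$ and $(\cdot)_\ast$ before ever writing out base points, reserving the pointwise formulas only for the contraction identity where the cancellation $T_m\phi_{g^{-1}} \circ T_{\phi_{g^{-1}}(m)}\phi_g = \id$ makes the claim transparent.
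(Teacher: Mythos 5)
Your proposal is correct. The paper states this lemma without any proof at all (it is treated as standard background in the geometric appendix), and your argument --- the defining formulas for pullback and pushforward, the chain rule applied to $\phi_g \circ \phi_{g^{-1}} = \phi_e = \id$ to identify $T_m\phi_{g^{-1}}$ with $(T_{\phi_{g^{-1}}(m)}\phi_g)^{-1}$, the cancellation of the contravariance of $(\cdot)^\ast$ against the anti-homomorphism property of inversion to get a left action on one-forms, and the pointwise collapse of $T_m\phi_{g^{-1}} \circ T_{\phi_{g^{-1}}(m)}\phi_g = \id$ for the contraction identity consistent with Lemma~\ref{Lemma:InduzierteGWirkungAufFunktionen} --- is exactly the standard verification the author leaves to the reader, with the base points and variances tracked correctly.
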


\begin{definition}[Wirkung auf Endomorphismenb\"undel]
Sei $A \in \schnitt{\End{E}}$ und $s \in \schnitt{E}$ so definiert man
eine Gruppenwirkung auf $\schnitt{\End{E}}$ mittels
\begin{align}
    \label{eq:WirkungAufendE}
    (g.A)(s) : = g.(A(g^{-1}a)).
\end{align}
\end{definition}

Die so definierte $G$-Wirkung ist vertr"aglich mit der nat"urlichen
Multiplikation von $A$ und $s$:
    \begin{align}
        g.(As)=(g.A)(g.s).
    \end{align}
Wir rechnen mit $A \in \schnitt{\End{E}}$ und $s \in \schnitt{E}$ nach:
        \begin{align*}
            g.(As)(m) &=\Phi_g(As)\phi_{g^{-1}}(m) \\ &=
            \Phi_g|_{E_{\phi_g^{-1}(m)}} A(\phi_{g^{-1}}s)(m) \\ &=
            \underbrace{\Phi_g|_{E_{\phi_g^{-1}(m)}}
              A(\Phi_g|_{E_{\phi_g^{-1}}(m)})^{-1}}_{(g.A)(m)}
            \underbrace{\Phi_g|_{E_{\phi_{g^{-1}(m)}}}
              s(\phi_g^{-1}(m))}_{ (g.s)(m)} \\ & \Rightarrow \qquad
            g.(A.s)=(g.A)(g.s).  
        \end{align*}

\begin{definition}[$G$- und $\LieAlg{g}$-invarianter Zusammenhang $\nabla$]
\label{Defintion:GundgInvarianterZusammenhang}
\index{Zusammenhang!g-invarianter@$\LieAlg{g}$-invarianter}
\index{Zusammenhang!G-invarianter@$G$-invarianter}
Sei $(M,\omega)$ eine symplektische Mannigfaltigkeit mit einem
linearen Zusammenhang nach Definition
\ref{Definition:SymplektischerZusammenhang}, $G$ eine
\Name{Lie}-Gruppe und $\LieAlg{g}$ eine \Name{Lie}-Algebra, die nicht
notwendigerweise von einer Gruppe $G$ kommt. Man nennt den Zusammenhang
$\nabla$ f"ur alle $X,Y \in \schnitt{TM}$ 
\begin{compactenum}

\item {\em $G$-invariant}, falls f"ur alle $g \in G$ gilt:
\begin{align}
\phi_{g}^{\ast} \lconn[X]{Y} = \lconn[\phi_{g}^{\ast}
X]{\phi_{g}^{\ast} Y}. 
\end{align}
\item {\em $\LieAlg{g}$-invariant}, falls f"ur alle $\xi \in
    \LieAlg{g}$ gilt:
\begin{align}
\Lie[\xi] \left(\lconn[X]{Y} \right) = \lconn[{\Lie[\xi]X}]{Y} +
\lconn[X]{\Lie[\xi] Y}.
\end{align}
\end{compactenum}
\end{definition}

Eine wichtige Frage ist nun, wie man einen $G$-invarianten
Zusammenhang auf einer Mannigfaltigkeit $M$ erh"alt. F"ur
kompakte Gruppen kann man sich einen solchen konstruieren, indem man
den Zusammenhang "uber die Gruppe mittelt.

\begin{lemma}[Mittelung eines Zusammenhangs $\nabla$]
    \label{Lemma:MittelungZusammenhang}
\index{Zusammenhang!gemittelter}
    Gegeben sei eine kompakte \Name{Lie}-Gruppe $G$, die auf der
    Mannigfaltigkeit 
    $(M,\nabla)$ durch Linksaktion agiere. Man erh"alt wie folgt einen
    $G$-invarianten Zusammenhang $\cc{\nabla}$:
    \begin{align}
        \label{eq:MittelungZusammenhang}
        \cc{\nabla}_{\!\!\sss X} Y=\frac{1}{\Vol G} \int_G
        g^{-1}.(\nabla_{\!\!\sss g.X} g.Y) \, \mu(g)
    \end{align}
Dabei ist $\mu$ das
\Name{Haar}-Ma"s\index{Haar-Mass@\Name{Haar}-Ma{\ss}}, und $X,Y \in
\schnitt{TM}$. 
\end{lemma}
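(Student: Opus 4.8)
The plan is to verify the two defining properties of a connection from Definition~\ref{Definition:Zusammenhang} for $\cc{\nabla}$ and then the $G$-invariance from Definition~\ref{Defintion:GundgInvarianterZusammenhang}, throughout exploiting that the integrand $g \mapsto g^{-1}.(\lconn[g.X]{g.Y})$ is a smooth $\schnitt{TM}$-valued family over the compact group $G$, so that the pointwise \Name{Haar} integral converges and again defines a smooth vector field. The linear action $g.$ on $\schnitt{TM}$ commutes with integration over $G$, and $\int_G \mu(g) = \Vol G$; these two facts will be used repeatedly. Additivity in both slots is immediate from the linearity of $\nabla$ and of the action, so only the subtler homogeneity conditions require care.

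First I would check $C^\infty(M)$-linearity in the lower slot. For $f \in C^\infty(M)$ the compatibility $g.(fX) = (g.f)(g.X)$ from Lemma~\ref{Lemma:FunktionenSchnitte}, together with the $C^\infty$-linearity of $\nabla$, gives $\lconn[g.(fX)]{g.Y} = (g.f)\lconn[g.X]{g.Y}$; applying $g^{-1}.$ and using that $g^{-1}.(g.f) = f$ (a group action on functions, Lemma~\ref{Lemma:InduzierteGWirkungAufFunktionen}) pulls the factor $f$ out of the action, hence out of the integral, yielding $\cc{\nabla}_{fX}Y = f\,\cc{\nabla}_X Y$. For the \Name{Leibniz} rule I would expand
\begin{align*}
\lconn[g.X]{g.(fY)} = (g.f)\lconn[g.X]{g.Y} + \big((g.X)(g.f)\big)(g.Y).
\end{align*}
The key computational step is the naturality identity $(g.X)(g.f) = g.(X(f))$, which is immediate from $g.X = \phi_{g\ast}X$ and $g.f = \phi_{g^{-1}}^{\ast}f$ (Lemma~\ref{Lemma:GWirkungAufEinsformenUndVektorfeldern}). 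Applying $g^{-1}.$ makes the second summand equal to $X(f)\,Y$, which is independent of $g$; integrating and dividing by $\Vol G$ reproduces exactly the term $X(f)\,Y$, so that $\cc{\nabla}_X(fY) = f\,\cc{\nabla}_X Y + X(f)\,Y$.

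The heart of the argument is the $G$-invariance. Here I would fix $h \in G$ and compute, using the group-action law $g.(h.X) = (gh).X$,
\begin{align*}
\cc{\nabla}_{h.X}(h.Y) = \frac{1}{\Vol G}\int_G g^{-1}.\big(\lconn[(gh).X]{(gh).Y}\big)\,\mu(g).
\end{align*}
The substitution $k = gh$ turns $g^{-1}.$ into $(hk^{-1}). = h.(k^{-1}.)$, and the decisive point is that for a compact group the \Name{Haar} measure is bi-invariant, so $\mu(g)$ is left unchanged under $g \mapsto kh^{-1}$. Pulling the linear operator $h.$ out of the integral then gives exactly $h.(\cc{\nabla}_X Y)$, which is the invariance condition of Definition~\ref{Defintion:GundgInvarianterZusammenhang}.

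I expect the main obstacle to be bookkeeping rather than conceptual. Concretely, the points needing genuine attention are establishing smoothness of the integrand jointly in $g$ and the base point (so that the \Name{Haar} integral of a family of vector fields is again a smooth vector field) and carrying out the change of variables carefully, since it is precisely the bi-invariance of the \Name{Haar} measure --- available only because $G$ is compact --- that makes the averaging produce an invariant object. The naturality identity $(g.X)(g.f) = g.(X(f))$ and the action law $g.(h.X) = (gh).X$ are the two structural facts on which the whole computation rests; once they are in place, everything else reduces to routine applications of linearity and the invariance of $\mu$.
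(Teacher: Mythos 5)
Your proposal is correct and follows essentially the same route as the paper's proof: verifying function-linearity in the first slot via $g.(fX)=(g.f)(g.X)$, the \Name{Leibniz} rule via $(g.X)(g.f)=g.(X(f))$ so that the derivative term becomes $g$-independent under the average, and $G$-invariance via the right-translation change of variables in the \Name{Haar} integral. Your explicit remarks on the bi-invariance of the \Name{Haar} measure and the smoothness of the integrand only make precise what the paper's substitution $\mu(g)=\mu(\tilde g)$ uses tacitly.
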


\begin{proof}
    Um Lemma \ref{Lemma:MittelungZusammenhang} zu beweisen mu"s man zum
    einen zeigen, da"s es sich bei $\cc{\nabla}$ um einen Zusammenhang
    auf $TM$ handelt und zum anderen, da"s dieser invariant ist. 
    
\begin{compactenum}
    \item Im ersten Schritt zeigen wir die $\Cinf{M}$-Linearit"at im
        ersten Argument, es gilt also $\cc{\nabla}_{\!\!\sss fX}Y =
        f\cc{\nabla}_{\!\!\sss X} Y$ f"ur alle $f \in C^\infty(M)$ und $X,Y \in
        \schnitt{TM}$. 
    \begin{align*}
              \cc{\nabla}_{\!\! \sss fX}Y & =  \frac{1}{\Vol G} \int_G
        g^{-1}.(\nabla_{\!\! \sss g.(fX)}g.Y) \, \mu(g) \\ & =  \frac{1}{\Vol G}
        \int_G g^{-1}.(\nabla_{\!\! \sss (g.f)(g.X)}g.Y) \, \mu(g) \\ & =
        \frac{1}{\Vol G} \int_G g^{-1}.(g.f) g^{-1}.(\nabla_{\!\! \sss
          g.X}g.Y) \, \mu(g) \\ & =f \frac{1}{\Vol G}  \int_G
        g^{-1}.(\nabla_{\!\! \sss g.X}g.Y) \, \mu(g) \\ & = f
        \cc{\nabla}_{\!\! \sss h.X} h.Y  
     \end{align*}
        
    \item Im zweiten Argument ist der Zusammenhang derivativ,
        $\cc{\nabla}_{\!\!\sss X}(fY) = f\cc{\nabla}_{\!\!\sss X}Y +
        X(f)Y$ f"ur alle $f \in C^\infty(M)$  wie eine
        kurze Rechnung zeigt.
    \begin{align*}
        \cc{\nabla}_{\!\! \sss X}(fY) &=  \frac{1}{\Vol G} \int_G
        g^{-1}.(\nabla_{\!\! \sss g.X}g.(fY)) \, \mu(g) \\ &=  \frac{1}{\Vol G}
       \int_G g^{-1}.(\nabla_{\!\! \sss g.X}((g.f)(g.Y)) \, \mu(g) \\
       &=\frac{1}{\Vol G} \int_G g^{-1}.\left(f (\nabla_{\!\! \sss g.X}g.Y) +
           ((g.X)(g.f))Y \right)\, \mu(g) \\ &= \frac{1}{\Vol G}
       \left( \int_G f g^{-1}.\nabla_{\!\! \sss g.X}g.Y \,\mu(g) +
           \int_G g^{-1} 
           ((g.X)(g.f) g.Y) \,\mu(g)\right) \\&=
       f\cc{\nabla}_{\!\! \sss X} Y + X(f)Y. 
     \end{align*}
    \item Der Zusammenhang ist $G$-invariant,
        d.~h.~$h.\cc{\nabla}_{\!\! \sss X} Y= \cc{\nabla}_{\!\!\sss
          h.X} h.Y$ f"ur alle $h \in G$. 
     \begin{align*}
                   h.\cc{\nabla}_{\!\! \sss X} Y &= h. \frac{1}{\Vol G} \int_G
        g^{-1}.(\nabla_{\!\! \sss g.X} g.Y) \, \mu(g) \\ &= \frac{1}{\Vol G}
        \int_G \underbrace{h.g^{-1}}_{=\tilde{g}^{-1}}.(\nabla_{\!\!
          \sss g.X} g.Y) \, \underbrace{\mu(g)}_{=\mu(\tilde{g})} \\
        &= \frac{1}{\Vol 
          G} \int_G \tilde{g}^{-1}.(\nabla_{\!\! \sss \tilde{g}.h.X}
        \tilde{g}.h.Y) \, \mu(\tilde{g}) \\ &=  \cc{\nabla}_{\!\! \sss
          h.X} h.Y. 
             \end{align*}
    \end{compactenum}
\end{proof}

\begin{lemma}[$G$-invarianter, symplektischer Zusammenhang]
    \label{Lemma:SymplektischerZusammenhangGInvarianz}
    W"ahlt man einen beliebigen, torsionsfreien, $G$-invarianten Zusammenhang
    $\tilde{\nabla}$ auf der symplektischen $G$-Mannigfaltigkeit
    $(M,\omega,G)$, und ist die $G$-Wirkung $g.$ ein
    Symplektomorphismus, so erh"alt man mit der Konstruktion eines
    symplektischen Zusammenhangs aus Lemma \ref{Lemma:HessTrick}, einen symplektischen und
    $G$-invarianten Zusammenhang $\nabla$.  
\end{lemma}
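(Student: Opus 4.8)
The symplectic property of $\nabla$ is handed to us for free by Lemma~\ref{Lemma:HessTrick}, so the entire content of the assertion is the $G$-invariance of $\nabla$ in the sense of Definition~\ref{Defintion:GundgInvarianterZusammenhang}, that is, $g.(\nabla_X Y) = \nabla_{g.X}(g.Y)$ for all $g \in G$ and $X,Y \in \schnitt{TM}$. The plan is to apply the action $g.$ to the defining relation of Lemma~\ref{Lemma:HessTrick} and to recover the invariance from the non-degeneracy of $\omega$. Concretely, since $\omega$ is non-degenerate and since $Z \mapsto g.Z$ is a bijection of $\schnitt{TM}$ (the action is by diffeomorphisms), it suffices to prove
\begin{align*}
\omega\big(g.(\nabla_X Y),\, g.Z\big) = \omega\big(\nabla_{g.X}(g.Y),\, g.Z\big)
\end{align*}
for all $X,Y,Z \in \schnitt{TM}$.

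The main auxiliary step, and the one carrying the real work, is to show that the tensor $\tilde{\nabla}\omega$ is itself $G$-invariant, i.e.
\begin{align*}
g.\big((\tilde{\nabla}_X\omega)(Y,Z)\big) = (\tilde{\nabla}_{g.X}\omega)(g.Y, g.Z).
\end{align*}
First I would expand $\tilde{\nabla}_{g.X}\omega$ by the Leibniz rule $(\tilde{\nabla}_U\omega)(V,W) = U(\omega(V,W)) - \omega(\tilde{\nabla}_U V, W) - \omega(V, \tilde{\nabla}_U W)$. Each of the three resulting terms transforms correctly: the derivative term because $(g.X)(g.f) = g.(Xf)$ — which follows from $g.(\de f) = \de(g.f)$ together with the contraction compatibility $(g.\alpha)(g.X) = g.(\alpha(X))$ of Lemma~\ref{Lemma:GWirkungAufEinsformenUndVektorfeldern} — combined with the invariance $\omega(g.V, g.W) = g.(\omega(V,W))$ coming from $g.\omega = \omega$ (Definition~\ref{Definition:symplektomorph}); and the two connection terms because $\tilde{\nabla}$ is $G$-invariant, so $\tilde{\nabla}_{g.X}(g.Y) = g.(\tilde{\nabla}_X Y)$, followed again by contraction compatibility. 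Collecting the three pieces and pulling the $g.$ out in front gives the claimed tensor invariance.

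With this in hand the conclusion is immediate. Applying $g.$ to the defining formula of Lemma~\ref{Lemma:HessTrick}, using that the function action $\phi_{g^{-1}}^{\ast}$ is $\field{R}$-linear, and inserting the contraction identity for the first summand and the tensor invariance of $\tilde{\nabla}\omega$ for the two correction terms, I obtain
\begin{align*}
g.\big(\omega(\nabla_X Y, Z)\big) = \omega\big(\tilde{\nabla}_{g.X}(g.Y), g.Z\big) + \tfrac{1}{3}(\tilde{\nabla}_{g.X}\omega)(g.Y, g.Z) + \tfrac{1}{3}(\tilde{\nabla}_{g.Y}\omega)(g.X, g.Z),
\end{align*}
whose right-hand side is exactly $\omega(\nabla_{g.X}(g.Y), g.Z)$ by the very formula of Lemma~\ref{Lemma:HessTrick} evaluated at $g.X, g.Y, g.Z$. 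Since the left-hand side equals $\omega(g.(\nabla_X Y), g.Z)$ by contraction compatibility, the required identity follows, and non-degeneracy of $\omega$ yields $g.(\nabla_X Y) = \nabla_{g.X}(g.Y)$.

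I expect the only genuinely delicate point to be bookkeeping: keeping the direction of the left action straight (note $g.X = \phi_{g^{-1}}^{\ast}X = \phi_{g\ast}X$, so the invariance $g.(\nabla_X Y) = \nabla_{g.X}(g.Y)$ for all $g$ is equivalent to the $\phi_g^{\ast}$-form in Definition~\ref{Defintion:GundgInvarianterZusammenhang}), and verifying the auxiliary tensor identity termwise without sign or ordering slips. Everything else is a direct consequence of the invariances of $\tilde{\nabla}$ and $\omega$ already assumed in the hypotheses, together with the compatibility of the $G$-action with contraction established in Lemma~\ref{Lemma:GWirkungAufEinsformenUndVektorfeldern}.
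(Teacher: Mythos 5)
Your proposal is correct and follows essentially the same route as the paper: apply the group action to the defining relation of Lemma~\ref{Lemma:HessTrick}, use the invariance of $\tilde{\nabla}$ and of $\omega$ together with the equivariance of $\tilde{\nabla}\omega$ to identify the result with $\omega(\nabla_{g.X}(g.Y),g.Z)$, and conclude by non-degeneracy of $\omega$. You merely make explicit (via the Leibniz-rule expansion) the equivariance of $\tilde{\nabla}\omega$, a step the paper's chain of equalities uses implicitly.
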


\begin{proof}
Wir betrachten dazu folgenden Ausdruck
\begin{equation}
    \begin{aligned}
        \omega(\nabla_{\!\! \sss g.X}g.Y,g.Z) &=
        \omega(\tilde{\nabla}_{\!\! \sss g.X}g.Y,g.Z)
        +\frac{1}{3}(\tilde{\nabla}_{\!\! \sss g.X} \omega)(g.Y,g.Z) +
        \frac{1}{3}(\tilde{\nabla}_{\!\! \sss g.Y} \omega)(g.X,g.Z) \\ &=
        \omega(g.\tilde{\nabla}_{\!\! \sss X}Y,g.Z)
        +\frac{1}{3}(\tilde{\nabla}_{\!\! \sss g.X} g.\omega)(Y,Z) +
        \frac{1}{3}(\tilde{\nabla}_{\!\! \sss g.Y} g.\omega)(X,Z) \\ &=
        g. \omega(\tilde{\nabla}_{\!\! \sss X}Y,Z) +\frac{1}{3}
        g.(\tilde{\nabla}_{\!\! \sss X} \omega)(Y,Z) +
        \frac{1}{3}g.(\tilde{\nabla}_{\!\! \sss Y} \omega)(X,Z) \\ &=
        g.(\omega(\tilde{\nabla}_{\!\! \sss X} Y,Z) +\frac{1}{3}
        (\tilde{\nabla}_{\!\! \sss X} \omega)(Y,Z) + \frac{1}{3} (\tilde{\nabla}_Y
        \omega)(X,Z))\\  &= g.\omega(\nabla_{\!\! \sss X} Y,Z) \\
        &=\omega(g.(\nabla_{\!\! \sss X} Y),g.Z) \\ & \\ \Rightarrow
        \quad & g.(\nabla_{\!\! \sss X} Y)=\nabla_{\!\! \sss g.X}g.Y.  
    \end{aligned}
\end{equation}
\end{proof}

\begin{lemma}[$G$-invarianter Zusammenhang auf Endomorphismenb"undel]
    \label{Lemma:GInvarianterZusammenhangAufEndomorphismenbuendel}
Sei $\bundle{E}{\pi}{M}$ ein B"undel mit einem $G$-invarianten
Zusammenhang $\lconnE$ "uber der Mannigfaltigkeit $M$. Der auf dem
Endomorphismenb"undel $\bundle{\End{E}}{\pi'}{E}$ induzierte
Zusammenhang $\lconnEnd{E}$ ist auch $G$-invariant.
\end{lemma}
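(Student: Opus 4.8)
The plan is to verify the $G$-invariance of $\lconnEnd{E}$ directly from its defining formula in Lemma~\ref{Lemma:ZusammenhangEndomorphismen}, namely $(\lconnEnd[X]{E}A)(s) = \lconnE[X](As) - A(\lconnE[X]s)$, by testing the invariance identity $g.(\lconnEnd[X]{E}A) = \lconnEnd[g.X]{E}(g.A)$ on an arbitrary section $s \in \schnitt{E}$. The three ingredients I would assemble beforehand are: (i) the hypothesis that $\lconnE$ is $G$-invariant, which in the abbreviated $g.$-notation reads $g.(\lconnE[X]s) = \lconnE[g.X](g.s)$ for all $g \in G$, $X \in \schnitt{TM}$ and $s \in \schnitt{E}$; (ii) the induced $G$-action on the endomorphism bundle, $(g.A)(s) = g.(A(g^{-1}.s))$ from Equation~\eqref{eq:WirkungAufendE}; and (iii) the compatibility of this action with evaluation, $g.(As) = (g.A)(g.s)$, which is recorded and proved immediately after Equation~\eqref{eq:WirkungAufendE}.

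First I would expand the right-hand side applied to $s$ using the induced-connection formula,
\[
(\lconnEnd[g.X]{E}(g.A))(s) = \lconnE[g.X]((g.A)s) - (g.A)(\lconnE[g.X]s).
\]
Then I would expand the left-hand side, using (ii) to strip off the outer action: $(g.(\lconnEnd[X]{E}A))(s) = g.\big((\lconnEnd[X]{E}A)(g^{-1}.s)\big) = g.\big(\lconnE[X](A(g^{-1}.s)) - A(\lconnE[X](g^{-1}.s))\big)$. The decisive step is to push $g.$ through each summand. For the first summand, invariance (i) applied to the section $A(g^{-1}.s)$ together with the defining relation (ii) gives $g.(\lconnE[X](A(g^{-1}.s))) = \lconnE[g.X](g.(A(g^{-1}.s))) = \lconnE[g.X]((g.A)s)$, since $g.(g^{-1}.s) = s$. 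For the second summand, compatibility (iii) followed by invariance (i) gives $g.(A(\lconnE[X](g^{-1}.s))) = (g.A)(\lconnE[g.X]s)$. Matching the two expressions term by term closes the argument, and linearity of all operations involved makes the passage from pointwise identities to the identity of connections immediate.

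I expect no genuine conceptual obstacle: the statement is a naturality fact that follows mechanically from the corresponding invariance of $\lconnE$. The only point requiring care is the bookkeeping of the $g^{-1}$-twists built into the definition of the $G$-action on $\schnitt{\End{E}}$, so that invariance (i) and compatibility (iii) are invoked at exactly the right stage and the inner $g^{-1}.s$ cancels against the outer $g.$, leaving the untwisted $s$ inside the transported connection $\lconnE[g.X]$. Since the computation runs completely parallel to the invariance verifications for $D$, $D'$ and $D^{\sss E}$ already carried out in Kapitel~\ref{sec:HInvarianteFedosovKonstruktion}, I would present it compactly and simply indicate the cancellations.
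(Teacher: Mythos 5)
Your argument is correct and coincides with the paper's own proof: both expand $g.(\lconnEnd[X]{E}A)(s)$ via the twisted action $(g.B)(s)=g.(B(g^{-1}.s))$, split the commutator $[\lconnE[X],A]$, and push $g.$ through each summand using the $G$-invariance of $\lconnE$ and the compatibility $g.(As)=(g.A)(g.s)$, with the inner $g^{-1}.s$ cancelling against the outer action. No differences worth noting.
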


\begin{proof}
Der Beweis ist eine einfache Rechnung.
\begin{align*}
g. \left( \lconnEnd[X]{E} A \right) (s) & =g.\left( \left[ 
        \lconnE[X], A \right] \right)(s) \\ & = g. \left( \left[ 
        \lconnE[X], A \right](g^{-1}.s) \right) \\ & = g. \left (
    \lconnE[X](A(g^{-1}.s)) - A \lconnE[X](g^{-1}.s) \right) \\ & =
g. \lconnE[X](A(g^{-1}.s)) - g. \left( A \lconnE[X](g^{-1}.s) \right)
\\ & = \lconnE[g.X](g.(A(g^{-1}.s))) - (g.A) g.\left(
    \lconnE[X](g^{-1}.s) \right) \\ & = \lconnE[g.X]((g.A)(s)) - (g.A)
\lconnE[g.X](s) \\ & =    \left[ \lconnE[g.X], g.A \right](s) \\ & =
\left(\lconnEnd[g.X]{E} (g.A) \right) (s).   
\end{align*}
\end{proof}

\index{Geometrie!symplektische|)}

\section{\Name{Poisson}-Geometrie}
\index{Geometrie!Poisson@\Name{Poisson}|(}
Sich an dieser Stelle ausf"uhrlich mit der \Name{Poisson}-Geometrie
auseinander zu setzen w"urden den Rahmen dieser Arbeit sprengen. Daher
wollen wir uns kurz fassen und nur ein paar grundlegende Definitionen
angeben, und den interessierten Leser auf die folgenden B"ucher
aufmerksam machen. Geeignete Literatur zur \Name{Poisson}-Geometrie sind
\citep{vaisman:1994a}, \citep{marsden.ratiu:2000a},
\citep{cannasdasilva.weinstein:1999a} sowie
\citep{waldmann:2004a}. 

\begin{definition}[\Name{Poisson}-Klammer und \Name{Poisson}-Mannigfaltigkeit]
\label{definition:PoissonMannigfaltigkeit}
\index{Mannigfaltigkeit!Poisson-Mannigfaltigkeit@\Name{Poisson}-Mannigfaltigkeit|textbf}
\index{Poisson-Mannigfaltigkeit@\Name{Poisson}-Mannigfaltigkeit|textbf}
Sei $M$ eine Mannigfaltigkeit. Man nennt eine bilineare Abbildung 
\begin{align*}
\{ \cdot, \cdot\} : \Cinf{M} \times \Cinf{M} \to \Cinf{M}
\end{align*} 
eine  {\em \Name{Poisson}-Klammer}\index{Poisson-Klammer@\Name{Poisson}-Klammer} falls sie die folgenden Eigenschaften hat:
\begin{compactenum}
\item $\{f,g\} = - \{g,f\}$ (Schiefsymmetrie),
\item $\{\alpha_{\sss 1} f_{\sss 1} + \alpha_{\sss 2} f_{\sss 2}, g\}
    = \alpha_{\sss 1} \{f_{\sss 1}, g\} + \alpha_{\sss 2} \{f_{\sss
      2}, g\}$ (Bilininearit"at) 
\item $\{fg,h\} = f\{g,h\} + \{f,g\}h$ (\Name{Leibniz}-Regel),\index{Leibniz-Regel@\Name{Leibniz}-Regel}
\item $\{f , \{g,h \} \} = \{ \{f,g \},h \} + \{g, \{f,h\} \}$
    (\Name{Jacobi}-Identit"at),\index{Jacobi-Identitaet@\Name{Jacobi}-Identit\"at} 
\end{compactenum}
mit $\alpha_{\sss 1}, \alpha_{\sss 2} \in \field{R}$. Man nennt $(M,
\{\cdot, \cdot \})$ eine \Name{Poisson}-Mannigfaltigkeit.  
\end{definition}

\begin{bemerkungen}
~\vspace{-5mm}
\begin{compactenum}
\item Die \Name{Leibniz}-Regel von $\{\cdot, \cdot \}$ ist "aquivalent
dazu, da"s es ein Bivektorfeld $\Lambda \in \schnitt{\Lambda^2 TM}$
gibt, so da"s 
\begin{align}
\{f,g\} = \Lambda (\de f \otimes \de g).
\end{align} 
Man nennt $\Lambda$ den {\em \Name{Poisson}-Bivektor}. In lokalen
Koordinaten $(U,u)$ schreibt man das Bivektorfeld als 
\begin{align}
\Lambda\big|_{\sss U} = \frac{1}{2} \sum_{i,j} \Lambda^{ij}\frac{\del}{\del u^i}
\wedge \frac{\del}{\del u^j}.   
\end{align}
Die \Name{Poisson}-Klammer wird in einer Karte $(U,u)$ dann zu 
\begin{align}
\{f,g\}\big|_{U} = \sum_{i,j} \Lambda^{ij} \frac{\del f}{\del u^i} \frac{\del
  g}{\del u^j}. 
\end{align}
\item Die \Name{Jacobi}-Identit"at kann man dann in einer den lokalen
    Koordinaten $(U,u)$ schreiben als
    \begin{align}
       \sum_{i,j,s,r} \left(\Lambda^{ij}_{,r} \Lambda^{rs} + \Lambda^{js}_{,r} \Lambda^{ri}
        + \Lambda^{si}_{,r} \Lambda^{rj}\right) \frac{\del f}{\del u^i}
        \frac{\del g }{\del u^j} \frac{\del h}{\del u^s} = 0.
    \end{align}
\end{compactenum}
\end{bemerkungen}

\index{Geometrie!Poisson@\Name{Poisson}|)}


\end{appendix}

\backmatter

\pagebreak

\pagestyle{fancy}

\chapter*{Symbolverzeichnis}
\fancyhead[CE]{\slshape \nouppercase{Symbolverzeichnis}} 
\fancyhead[CO]{\slshape \nouppercase{Symbolverzeichnis}} 
\addcontentsline{toc}{chapter}{Symbolverzeichnis}

\begin{longtable}{lp{12,4cm}}

\multicolumn{2}{l}{\Large \textbf{Allgemeines}} \\
&  \\
$\hbar$ & \Name{Planck}sches Wirkungsquantum $(\hbar=\frac{h}{2\pi} =
1,0546 \times 10^{-34}$ Js) \\ 
$\im$ & Imagin"are Einheit $\im^2 = -1$ \\
& \\
$\field{N}$ & Nat"urliche Zahlen \\
$\field{Z}$ & Ring der ganzen Zahlen \\
$\field{Z}_{p}$ & Restklassenring $\field{Z}_{p}=\field{Z}/
p\field{Z}$ \\
$\field{Q}$ & K"orper der rationalen Zahlen \\
$\field{R}$ & K"orper der reellen Zahlen \\
$\field{C}$ & K"orper der komplexen Zahlen \\
$\field{K}$ & K"orper: $\field{Q}$, $\field{R}$ oder $\field{C}$ \\
$\fieldf{K}$ & Ring der formalen Potenzreihen in $\lambda$ mit Werten in
$\field{K}$ \\ 
$\field{T}^{n}$ & $n$-dimensionaler Torus
$\field{T}^{n}=\field{R}^{n}/ \field{Z}^{n}$ \\
$S^{n}$ & $n$-dimensionale Sph"are \\
$M_{n}(\cdot) $ & Ring der $n\times n$-Matrizen \\
$\GR{O}{n}$, $\GR{SO}{n}$ & Orthogonale Gruppe, Spezielle Orthogonale
Gruppe in $n$ Dimensionen\\
$\GR{U}{n}$, $\GR{SU}{n}$ & Unit"are Gruppe, Spezielle Unit"are
Gruppe in $n$ Dimensionen \\

$\ring{R}$, $\ring{S}$ & (geordnete) Ringe  \\
$\ring{C}$ & komplexe Erweiterung des geordneten Rings $\ring{R}$,
$\ring{C}=\ring{R}(\im)$ \\

& \\
\multicolumn{2}{l}{\Large \textbf{Kapitel \ref{chapter:Morita}}} \\
&  \\

$\alg{H}$ & Pr"a-\Name{Hilbert}-Raum \\
$\hilbert{H}$ & \Name{Hilbert}-Raum\\
$\SP{\cdot,\cdot}$ & Skalarprodukt auf (Pr"a-)\Name{Hilbert}-Raum \\
$\psi, \phi, \chi$ & Vektoren in einem (Pr"a-)\Name{Hilbert}-Raum \\
$\alg{H}^{\perp}$ & Ausartungsraum von $\SP{\cdot,\cdot}$ \\
$[\psi], [\phi]$ & Vektoren im Quotientenraum $\alg{H} /
\alg{H}^{\perp}$ ("Aquivalenzklasse)   \\
$A,B,C$ & (Adjungierbare) Operatoren \\ 
$\alg{B}(\alg{H})$ & Raum der Operatoren auf $\alg{H}$, deren
Adjungiertes existiert \\
$\PraeHilbert{\ring{C}}$ & Kategorie der Pr"a-\Name{Hilbert}-R"aume
"uber $\ring{C}$ \\
$\Theta_{\sss \phi, \psi}$ & Operator vom Rang Eins \\
$\alg{F}(\alg{H}_{1},\alg{H}_{2})$ & Raum der Operatoren mit endlichem
Rang \\
$\alg{A,B,C,D}$ & $^\ast$-Algebren \\
$\staralg(\ring{C})$ & Kategorie der $^\ast$-Algebren "uber $\ring{C}$ \\
$\starAlg(\ring{C})$ & Kategorie der $^\ast$-Algebren mit Einselement "uber $\ring{C}$ \\
$L^2(\cdot)$ & Raum der quadratintegrablen Funktionen \\
$1_{\sss \alg{A}} \in \alg{A}$ & Einselement der Algebra $\alg{A}$ \\ 
$\alg{A}^{+}$ & Menge der positiven Elemente in $\alg{A}$ \\
$\alg{A}^{++}$ & Menge der quadratischen Elemente in $\alg{A}$ \\
$\omega$ & Positives Funktional \\
$\rmod{H}{D}$, $\lmod{A}{H}$ & $\alg{D}$-Rechtsmodul, $\alg{A}$-Linksmodul \\
$\rSP{\cdot, \cdot}{\alg{D}}$ & $\alg{D}$-wertiges inneres Produkt auf
$\alg{D}$-Rechtsmodul \\
$\lSP{\alg{A}}{\cdot, \cdot}$ & $\alg{A}$-wertiges inneres Produkt auf
$\alg{A}$-Linksmodul \\
$\rmodplus{H}{D}$ & Innerer Produktmodul \\
$\rmodplusne{H}{D}$ & Pr"a-\Name{Hilbert}-Modul \\
$\CSpan{\cdot}$ & $\ring{C}$-lineare H"ulle \\
$h$ & \Name{Hermite}sche Fasermetrik \\ 
 $h_{\sss x} $ &  \Name{Hermite}sche Fasermetrik am Punkt $x$
ausgewertet \\
$(\alg{H},\pi)$ & $^\ast$-Darstellung \\
$T$ & Verschr"ankungsoperator zwischen $^\ast$-Darstellungen \\
$\smod[\alg{D}](\alg{A})$ & Kategorie der $^\ast$-Darstellungen der
$^\ast$-Algebra $\alg{A}$ auf $\alg{D}$-Rechtsmodul \\
$\srep[\alg{D}](\alg{A})$ & Kategorie der $^\ast$-Darstellungen der
$^\ast$-Algebra $\alg{A}$ auf Pr"a-\Name{Hilbert}-$\alg{D}$-Rechtsmoduln \\
$\sMod[\alg{D}](\alg{A})$ & Kategorie der stark nichtentarteten $^\ast$-Darstellungen der
$^\ast$-Algebra $\alg{A}$ auf $\alg{D}$-Rechtsmodul\\
$\sRep[\alg{D}](\alg{A})$ & Kategorie der stark nichtentarteten $^\ast$-Darstellungen der
$^\ast$-Algebra $\alg{A}$ auf Pr"a-\Name{Hilbert}-$\alg{D}$-Rechtsmoduln  \\
$\katrmod{\ring{R}}$  & Kategorie der $\ring{R}$-Rechtsmoduln \\
$\katlmod{\ring{R}}$  & Kategorie der $\ring{R}$-Linksmoduln\\
$\katrMod{\ring{R}}$  & Kategorie der $\ring{R}$-Rechtsmoduln mit
$\rmodo{\alg{E}}{\ring{R}} \cdot \ring{R} = \rmodo{\alg{E}}{\ring{R}}$\\
$\katlMod{\ring{R}}$ & Kategorie der $\ring{R}$-Linksmoduln mit $\ring{R} \cdot
  \lmodo{\ring{R}}{\alg{E}}  = \lmodo{\ring{R}}{\alg{E}}$ \\
 $\bimod{B}{E}{A}$, $\bimod{B}{F}{A}$  &  $\alg{B}$-Links-
$\alg{A}$-Rechtsmodul, d.~h.~$(\alg{B},\alg{A})$-Bimodul \\
$\bimodo{\ring{R}}{\alg{E}^{\ast}}{\ring{S}}$ & dualer Bimodul zu
$\bimodo{\ring{S}}{\alg{E}}{\ring{R}}$ \\
$\bimod{A}{\cc{E}}{B}$ & komplex-konjugierter Bimodul zu $\bimod{B}{E}{A}$ \\
$\rSP[\sss \alg{E}]{\cdot, \cdot}{\alg{A}}$ & $\alg{A}$-wertiges
inneres Produkt auf $\bimod{B}{E}{A}$ \\
$\bimodplus{B}{E}{A}$ & $(\alg{B}, \alg{A})$-Bimodul mit
$\alg{B}$-wertigem und $\alg{A}$-wertigem inneren Produkt \\
$\tensor[\alg{B}]$ & Inneres Tensorprodukt "uber der Algebra $\alg{B}$
\\ 
$\tensorhat[\alg{B}]$ & Inneres Tensorprodukt "uber der Algebra
$\alg{B}$ (ber"ucksichtigt innesres Produkt) \\
$\tensortilde[\alg{B}]$ & Inneres Tensorprodukt "uber der Algebra
$\alg{B}$ (ber"ucksichtigt beide inneren Produkte) \\
$\rieffel{R}{\alg{E}}$ & \Name{Rieffel}-Induktion \\
$\rieffel{S}{\alg{E}}$ & Wechsel der Basisalgebra \\
$\kat{A},\kat{B}, F$ & Kategorien $\kat{A},\kat{B}$ und Funktor
$F:\kat{A}\to \kat{B}$, \\ 
$\Id_{\kat{A}}$ & Identit"atsfunktor auf der Kategorie $\kat{A}$ \\
$\Obj(\kat{A})$ & Objekte der Kategorie $\kat{A}$ \\
$\Morph(\kat{A})$ & Morphismen der Kategorie $\kat{A}$ \\
$J(\ring{R})$ & \Name{Jacobson}-Radikal des Rings $\ring{R}$ \\
$\zentrum{\ring{R}}$ & Zentrum des Rings $\ring{R}$ \\ 
$\Hom[\alg{A}](\alg{B}, \alg{C})$ & $\alg{A}$-lineare Homomorphismen
von $\alg{B}$ nach $\alg{C}$ \\
$\End[\ring{R}]{\rmodo{\alg{E}}{\ring{R}}}$ & $\ring{R}$-lineare
Endomorphismen von $\rmodo{\alg{E}}{\ring{R}}$ \\
$P$ & idempotentes Element oder Projektor \\
$P \ring{R}^{n}$ & $n$-komponentiger projektiver Modul \\

$H$ & \Name{Hopf}-Algebra (siehe auch Anhang~\ref{chapter:AlgebraischeGrundlagen}) \\
$(H, \neact)$ & \Name{Hopf}-Algebra $H$ mit Wirkung $\act$ \\
$(H, \alg{A}, \neact)$ & $H$-Modulalgebra \\ 
$\ccact$ & Wirkung auf dualem Bimodul $\bimod{A}{\cc{E}}{B}$ \\
$\smod[\alg{D},H](\alg{A})$ & Kategorie der $H$-"aquivarianten
$^\ast$-Darstellungen der $^\ast$-Algebra $\alg{A}$ auf
$\alg{D}$-Rechtsmodul \\
$\srep[\alg{D},H](\alg{A})$ & Kategorie der $H$-"aquivarianten
$^\ast$-Darstellungen der $^\ast$-Algebra $\alg{A}$ auf
Pr"a-\Name{Hilbert}-$\alg{D}$-Rechtsmodul \\
$\sMod[\alg{D},H](\alg{A})$ & Kategorie der stark nichtentarteten
$H$-"aquivarianten $^\ast$-Darstellungen der
$^\ast$-Algebra $\alg{A}$ auf $\alg{D}$-Rechtsmodul\\
$\sRep[\alg{D},H](\alg{A})$ & Kategorie der stark nichtentarteten
$H$-"aquivarianten $^\ast$-Darstellungen der
$^\ast$-Algebra $\alg{A}$ auf Pr"a-\Name{Hilbert}-$\alg{D}$-Rechtsmoduln  \\

& \\
\multicolumn{2}{l}{\Large \textbf{Kapitel \ref{chapter:Picard}}} \\
&  \\

$\KPic$ & \Name{Picard}-Bikategorie \\ 
$\Pic$, $\starPic$, $\strPic$ & \Name{Picard}-Gruppoid,
$^\ast$-\Name{Picard}-Gruppoid, starkes \Name{Picard}-Gruppoid \\

$\Pic(\alg{A})$ & \Name{Picard}-Gruppe von $\alg{A}$ \\
$\SPic(\alg{A})$ & Statische (klassische, kommutative) \Name{Picard}-Gruppe von
$\alg{A}$ \\

$\Iso(\alg{A},\alg{B})$ & Isomorphismen von $\alg{A} \to \alg{B}$ \\
$\!\starIso(\alg{A},\alg{B})$ &  $^\ast$-Isomorphismen von $\alg{A} \to
\alg{B}$ \\
$\!\Aut(\alg{A})$ & Automorphismengruppe von $\alg{A}$ \\
$\!\starAut(\alg{A})$ &  $^\ast$-Automorphismen von $\alg{A}$ \\
$\InnAut(\alg{A})$ & Gruppe der inneren Automorphismen der Algebra $\alg{A}$ \\
$\!\starInnAut(\alg{A})$ & Gruppe der inneren $^\ast$-Automorphismen der Algebra
$\alg{A}$  \\
$\OutAut(\alg{A})$ & "au"sere Automorphismen der Algebra $\alg{A}$ \\
$\!\starOutAut(\alg{A})$ & "au"sere $^\ast$-Automorphismen der
Algebra $\alg{A}$ \\ 
$\cdot_{\sss \phi}$, $\cdot_{\sss \psi}$ & durch Automorphismus $\phi$ bzw.~$\psi$
getwistete Modulstruktur \\ 
$\Diff(M)$ & Diffeomorphismengruppe von $M$ \\
$\HCech[n](M,\field{Z})$ & $n$-te integrale \v{C}ech-Kohmologie \\  
$\!\AutH$, $\PicH$, & $H$-"aquivariante Automorphismen, $H$-"aquivariantes
\Name{Picard}-Gruppoid  \\
$G$, $\LieAlg{g}$ & \Name{Lie}-Gruppe $G$, \Name{Lie}-Algebra
$\LieAlg{g}$ \\
$\Lie[\xi]$ & \Name{Lie}-Ableitung \\
$X_{\sss \xi}$ & linksinvariantes Vektorfeld zu $G$ \\
$e_{1}, \ldots, e_{n}$ & Basis von $\LieAlg{g}$ \\
$X_{\sss e_{1}}, \ldots, X_{\sss e_{n}}$ & Modulbasis aller
Vektorfelder $\schnitt{TG}$ von $G$\\
$\nabla$ & linearer Zusammenhang \\
$h \mapsto u_{\sss h}$ & $\ring{C}$-linearer Endomorphismus von
$\bimod{B}{E}{A}$ \\
$\msf{a}, \msf{b}$ & $\ring{C}$-linearer Homomorphismus von
\Name{Hopf}-Algebra $H$  in $^\ast$-Algebra \\
$\neactb$ & mit $\msf{b}$ von links getwistet Wirkung $\neact$ auf Bimodul \\
$\neacta$ & mit $\msf{a}$ von rechts getwistet Wirkung $\neact$ auf Bimodul \\
$\GR{GL}{H,\alg{A}}$ & Gruppe von Elementen aus
$\Hom[\ring{C}](H,\alg{A})$ bez"uglich Konvolutionsprodukt $\ast$   \\ 
$\GR{U}{H,\alg{A}}$ & Unit"are Elemente in $\GR{GL}{H,\alg{A}}$ \\
$\GR{GL}{\zentrum{\alg{A}}}$ & \Name{Abel}sche Gruppe der invertierbaren, zentralen
Elemente der Algebra $\alg{A}$\\ 
$\GR{U}{\zentrum{\alg{A}}}$ & \Name{Abel}sche Gruppe der unit"aren, zentralen Elemente der
$^\ast$-Algebra $\alg{A}$ \\
$\GRn{GL}{H,\alg{A}}$, $\GRn{U}{H,\alg{A}}$ & Quotientengruppen \\
$Z^{n}_{\sss \mathrm{CE}}(\LieAlg{g},\alg{A})$ & $n$-Kozykeln der
  \Name{Chevalley-Eilenberg} Kohomologie mit Werten in $\alg{A}$ \\
$J$ & Impulsabbildung \\
$(\bimod{B}{E}{A}, \neact)$ & Bimodul mit Wirkung \\

$\Bij(M)$ & Gruppe der Bijektionen auf $M$ \\

$\alg{I},\alg{J},\alg{K}$ & $^\ast$-Ideale \\
$\alg{J}^{\mathrm{cl}}$ & kleinstes abgeschlossenes Ideal, das $\alg{J}$
enth"alt \\
$\verband[\alg{D}]$ & Verband der $\alg{D}$-abgeschlossenen Ideale \\
$\verband[\alg{D},H]$ & Verband der $(\alg{D},H)$-abgeschlossenen Ideale \\

$\zentrum[H]{\alg{A}}$ & $H$-invariantes Zentrum der Algebra $\alg{A}$
\\ 

$\KProj(\alg{A})$ & Kategorie der endlich erzeugten, projektiven
$\alg{A}$-Rechtsmoduln \\
$\KstarProj(\alg{A})$  & Kategorie der endlich erzeugten, projektiven
Pr"a-\Name{Hilbert}-$\alg{A}$-Rechtsmoduln \\
$\KstrProj(\alg{A})$  & Kategorie der endlich erzeugten, projektiven
starken Pr"a-\Name{Hilbert}-$\alg{A}$-Rechtsmoduln \\
$\KProjH(\alg{A})$ & Kategorie der $H$-"aquivarianten endlich erzeugten, projektiven
$\alg{A}$-Rechtsmoduln \\
$\KstarProjH(\alg{A})$ &  Kategorie der $H$-"aquivarianten, endlich erzeugten, projektiven
Pr"a-\Name{Hilbert}-$\alg{A}$-Rechtsmoduln \\
$\KstrProjH(\alg{A})$ & Kategorie der $H$-"aquivarianten, endlich erzeugten, projektiven
starken Pr"a-\Name{Hilbert}-$\alg{A}$-Rechtsmoduln \\
$\{x_{i},y_{i} \}_{i=1,...,n}$ & \Name{Hermite}sche duale Basis eines
endlich erzeugten, projektiven Rechtsmoduls \\

$\uu{\kat{A}}$, $\uu{\kat{B}}$, $\uu{\kat{C}}$, $\uu{F}$ & Bikategorien
$\uu{\kat{A}}$, $\uu{\kat{B}}$, $\uu{\kat{C}}$ und Bifunktor $\uu{F}:
\uu{\kat{A}} \to \uu{\kat{B}}$ \\
$\Bimorph(\uu{\kat{C}})$ & $2$-Morphismen der Bikategorie
$\uu{\kat{C}}$ \\

& \\
\multicolumn{2}{l}{\Large \textbf{Kapitel \ref{sec:MoritaAequivalenzVonCrossProdukten}}} \\
&  \\

$\cross{\alg{A}}{H}$, $\cross{\alg{B}}{H}$ & Cross-Produktalgebren \\
$I_{1}$, $I_{2}$, $I_{3}$ & (kanonische) Isomorphismen \\
$\chi$ & Charakter \\

& \\
\multicolumn{2}{l}{\Large \textbf{Kapitel \ref{chapter:Sternprodukte}}} \\
&  \\

$M,N,Q$ & $C^{\infty}$-Mannigfaltigkeiten (\Name{Hausdorff}sch,
zweites Abz"ahlbarkeitsaxiom) \\
$\Cinf{M}$ & glatte komplexwertige Funktionen auf $M$ \\
$\Cinfc{M}$ & glatte komplexwertige Funktionen mit
kompaktem Tr"ager auf $M$ \\
$f,g$ & Funktionen auf Mannigfaltigkeit \\
$f \mapsto \cc{f}$ & punktweise komplexe Konjugation \\
$\supp f$ & Tr"ager von $f$ \\
$\mu$ & \Name{Borel}-Ma"s \\  
$\Comega{\field{C}^{n}}$ & reell-analytische Funktionen auf dem
$\field{C}^{n}$ \\
$(M,\omega)$ & symplektische Mannigfaltigkeit \\
$(M,\omega,I,g)$ & \Name{K"ahler}-Mannigfaltigkeit mit komplexer
Struktur $I^2=-\id$ und Metrik $g$ \\
$(M,\Lambda)$ & \Name{Poisson}-Mannigfaltigkeit \\
$\{\cdot,\cdot\}$ & \Name{Poisson}-Klammer \\
$TQ$ &  Tangentialb"undel der Mannigfaltigkeit $Q$ \\
$T^{\ast}Q$ &  Kotangentialb"undel der Mannigfaltigkeit $Q$ \\
$\omega_{\sss 0}$ & kanonische symplektische Form auf
Kotangentialb"undeln \\ 
$(T^{\ast}Q, \omega_{\sss 0})$ & Kotangentialb"undel mit kanonischer
symplektischer Zweiform $\omega_{\sss 0} = - \de \theta_{\sss 0}$  \\
$q^{1},\cdots, q^{n},p_{1},\cdots p_{n}$ & lokale B"undelkoordinaten
auf $T^{\ast}Q$ \\
$H$ & \Name{Hamilton}-Funktion \\
$(M,G)$ & $G$-Mannigfaltigkeit \\
$(M,\omega,G)$ & symplektische $G$-Mannigfaltigkeit \\
$\hilbert{D}, \hilbert{H}$ & \Name{Hilbert}-R"aume \\
$\field{P}\hilbert{H}$ & projektiver \Name{Hilbert}-Raum \\
$\psi, \psi'$ & Vektoren in \Name{Hilbert}-Raum \\
$\rSP{\cdot, \cdot}{~}$ & inneres $\field{C}$-wertiges Produkt \\
$P, Q$ & Impulsoperator, Ortsoperator \\
$P_{j},Q^{i}$ & $j$-te Komponente des Impulsoperators, $i$-te
Komponente des Ortsoperators \\
$\id_{\sss \hilbert{H}}$ & Identit"at auf \Name{Hilbert}-Raum
$\hilbert{H}$ \\
$A \mapsto A^{\ast}$ & Adjungieren des Operators $A$  \\
$\delta_{ij}$ & \Name{Kronecker}-Delta \\
$[\cdot, \cdot]$ & Kommutator (von Operatoren)\\ 
$\DiffOp(\field{R}^{n})$ & Differentialoperatoren mit glatten Koeffizientenfunktionen
auf $\field{R}^{n}$ \\
$\DiffOpPol(\field{R}^{n})$ & Differentialoperatoren mit polynomialen Koeffizientenfunktionen
auf $\field{R}^{n}$ \\

$\Pol(\field{R}^{2n})$, $\Pol(T^{\ast}\field{R}^{n})$ & Polynome auf dem
$\field{R}^{2n}$ oder $T^{\ast}\field{R}^{n}$ \\
$z^{1}\cdots z^{n}, \cc{z}^{1}\cdots \cc{z}^{n}$ & (globale)
Koordinatenfunktionen auf dem $\field{C}^{n}$ \\ 
$\de \mu$ & \Name{Gauss}-Ma"s \\
$L^{2}(\field{C}^{n}, \de \mu)$ & Raum der quadratintegrablen
Funktionen auf dem $\field{C}^{n}$ \\
$\ordweyl$, $\ordstd$, $\ordkappa$ & \Name{Weyl}-, Standard-,
 $\kappa$-geordnete Darstellung \\
$\ordwick$, $\ordtkappa$ & \Name{Wick}-, $\tilde{\kappa}$-geordnete Darstellung \\
$\symbweyl$, $\symbstd$, $\symbkappa$   &
\Name{Weyl}-, Standard-, $\kappa$-Symbolabbildungen \\
$\symbwick$, $\symbtkappa$ & \Name{Wick}-, $\tilde{\kappa}$-Symbolabbildungen \\
$\star$ & Sternprodukt \\
$\spweyl$, $\spstd$, $\spkappa$ & \Name{Weyl}-, Standard-,
$\kappa$-geordnete Sternprodukte \\
$\spwick, \sptkappa $ &  \Name{Wick}-, $\tilde{\kappa}$-geordnete Sternprodukte \\
$[\cdot, \cdot]_{\sss \star}$ & Kommutator bez"uglich des
Sternprodukts $\star$ \\
$\alg{O}(\lambda^{n})$ & Terme von $\lambda$ mit mindestens der Potenz
$n$ \\
$N_{\sss \kappa}, S_{\sss \tilde{\kappa}}$ & $\kappa$-ordnender
\Name{Neumaier}-Operator, Analogon f"ur $\tilde{\kappa}$-Ordnung \\ 
$\Delta, \tilde{\Delta}$ & \Name{Laplace}-Operator \\
$P,P^{\ast},Q,Q^{\ast}$ & differentielle Abbildungen auf $\Pol(T^{\ast}
\field{R}^{n}) \otimes \Pol(T^{\ast} \field{R}^{n})$  \\
$D$ & allgemeiner Differentialoperator \\
$\mu$ & Multiplikationsabbildung $\mu (a \otimes b) = ab$ \\
$\tau$ & Vertauschungsoperator, Flip $\tau (a \otimes b) = b \otimes
a$ \\
$C$ & komplexe Konjugation $C(a)=\cc{a}$ \\
$Z,\cc{Z}$ & differentielle Abbildungen auf $\Pol (\field{C}^{n}) \otimes
\Pol(\field{C}^{n})$  \\  
$B_{n}$ & Bidifferentialoperator der Stufe $n$ \\

$\alg{A}$ & Algebra $\alg{A}$ "uber dem Ring $\ring{C}$ \\
$\algf{A}$ & formale Potenzreihe in $\lambda$ mit Koeffizienten aus $\alg{A}$ \\
$\defalg{A}$ & deformierte Algebra $\defalg{A} = (\algf{A}, \star)$
bez"uglich $\star$ \\
$1_{\sss \alg{A}}$ & Einselement der Algebren $\alg{A}$ bzw.~$\defalg{A}$ \\
$C_{n}$ & $\ring{C}$-bilineare Abbildung auf Algebra $\alg{A}$ \\
$T$ & "Aquivalenzoperator \\
$[\star]$ & "Aquivalenzklasse der Deformation $\star$ \\
$\Def(\alg{A})$ & Menge der "Aquivalenzklassen von Deformationen der
Algebra $\alg{A}$ \\
$\Def(\alg{A}, \{\cdot, \cdot\})$ & Menge der "Aquivalenzklassen von Deformationen der
Algebra $\alg{A}$ bei fester \Name{Poisson}-Struktur \\
$\Aut(\alg{A})$ & Gruppe der Automorphismen der Algebra $\alg{A}$ \\
$M_{n}(\alg{A})$ & $n\times n$-Matrizen mit Eintr"agen aus $\alg{A}$,
$M_{n}(\alg{A}) = M_{n}(\ring{C}) \otimes \alg{A}$ \\
$\cl$ & klassische Limes-Abbildung \\ 
$(\alg{A}_{\sss \hbar},\spwick)$ & konvergente deformierte Algebra
$\alg{A}_{\sss \hbar} \subset \Comegaf{\field{C}^{n}}$ bez"uglich des
\Name{Wick}-Produkts \\
$(M, \Lambda, \star)$ &  \Name{Poisson}-Mannigfaltigkeit mit Sternprodukt
$\star$ \\
$(M, \omega, \star)$ &  symplektische Mannigfaltigkeit mit Sternprodukt
$\star$ \\
$R,S,T$ & "Aquivalenzoperatoren bei Sternprodukten \\
$\HdeRham[n] (M)$, $\HdeRham[n](M,\field{C})$ &  $n$-te \Name{de Rham}-Kohomologie von $M$\\
$\HdeRham[n] (M,\field{Z})$ & $n$-te integrale \Name{de Rham}-Kohomologie von $M$ \\
$c(\star)$ & charakteristische Klasse des Sternprodukts $\star$ \\

& \\

$x^{1}, \cdots, x^{n}$ & lokale Koordinaten in einer Umgebung $U
\subseteq M$ \\ 
$\deohne x^{1}, \cdots, \de x^{n}$ & Koordinateneinsformen \\
$ \frac{\del}{\del x^{1}}, \cdots, \frac{\del}{\del x^{n}}$ &
Koordinatenvektorfelder \\
$\omega_{ij}$, $\Lambda^{ij}$ & Koeffizienten der symplektischen Form
$\omega$, des \Name{Poisson}-Tensors $\Lambda$ \\

$\Lambda^{n} T^{\ast}M$ &
schiefsymmetrische Tensoren der Stufe $n$ auf $T^{\ast}M$  \\
$S^{n} T^{\ast}M$ & symmetrische Tensoren der Stufe $n$ auf $T^{\ast}M$  \\

$W_{p}$, $W_{p} \otimes \Lambda^{\bullet}$ & formale
\Name{Weyl}-Algebren "uber dem Punkt $p\in M$) \\
$W$, $W \otimes \Lambda^{\bullet}$ & B"undel aller formalen
\Name{Weyl}-Algebren \\
$\dega$, $\degs$ & schiefsymmetrische, symmetrische
Gradabbildung \\ 
$\degl$, $\Deg$ & $\lambda$-Gradabbildung, totaler Grad \\  
$W^{(k)}$,$W^{(k)}_{p}$ & homogene Elemente bez"uglich des totalen
Grades ("uber $p \in M$)\\  

$(W_{k} \otimes \Lambda)$ &  Menge aller Elementen vom $\Deg$-Grad
$\ge k$ \\ 
$a^{(k)}$ & homogenes Element in $W^{(k)}_{p}$ \\ 
$\alg{W} \otimes \mit{\Lambda}^{\bullet}$ & \Name{Weyl}-Algebra,
direktes Produkt der Schnitte im B"undel $W \otimes \Lambda$ \\ 
$\mu$, $\fpweyl$ & undeformiertes Produkt, faserweises
\Name{Weyl-Moyal}-Produkt  \\
$\ad(a)$, $[a, \cdot]_{\sss \fpweyl}$ & $\field{Z}_{2}$-gradierter
Kommutator bez"uglich $\fpweyl$ mit $a$ \\  
$\delta$, $\delta^{\ast}$, $\delta^{-1}$ & Differentiale \\
$\sigma$, $\sigma'$, $\sigma^{\sss E}$ & Projektion auf symmetrischen
und schiefsymmetrischen Grad $0$ \\ 
$\lfloor \cdot \rfloor$ & abrunden auf n"achste ganze Zahl,
d.~h.~$x-1 < \lfloor x \rfloor \le x$ und $\lfloor x \rfloor \in
\field{Z}$ f"ur alle $x \in \field{R}$ \\
$\nabla$ & (symplektischer) Zusammenhang \\
$\alg{W} \otimes \mit{\Lambda}^{\bullet} \otimes \alg{E}$ & B"undel,
($\alg{W} \otimes \mit{\Lambda}^{\bullet} \otimes \END{E},\alg{W}
\otimes \mit{\Lambda}^{\bullet})$-Bimodul  \\
$\alg{W} \otimes \mit{\Lambda}^{\bullet} \otimes \END{E}$ &
Assoziative Algebra, Erweiterung von $\alg{W} \otimes
\mit{\Lambda}^{\bullet}$ \\
$D$, $D$', $D^{\sss E}$ & kovariante Differentiale auf $\alg{W} \otimes
\mit{\Lambda}^{\bullet}$,  $\alg{W} \otimes \mit{\Lambda}^{\bullet}
\otimes \END{E}$, $\alg{W} \otimes \mit{\Lambda}^{\bullet} \otimes
\alg{E}$ \\ 

$\alg{D}$, $\alg{D}'$ ,$\alg{D}^{\sss E}$ & \Name{Fedosov}-Derivationen \\
$\hat{R}$ & Kr"ummung eines Zusammenhangs $\nabla$ \\
$R$ & symplektische Kr"ummung eines symplektischen Zusammenhangs $\nabla$ \\
$R^{\sss E}$ & Kr"ummung des Zusammenhangs $\lconnE$ auf
$\bundle{E}{\pi}{M}$ \\
$R'$ & Kr"ummung des Zusammenhangs  $\lconnEnd{E}$ auf
Endomorphismenb"undel $\bundle{\End{E}}{\pi'}{E}$ \\
$r$, $r^{\sss E}$, $r'$ & spezielle Elemente in \Name{Weyl}-Algebren \\
$\Omega$ & formale Reihe geschlossener Zweiformen  \\
$s$ & Element in $\alg{W}_{3} \otimes \mit{\Lambda}^{0}$ \\ 
$\star_{\sss (\nabla,\Omega,s)}$ & \Name{Fedosov}-Sternprodukt \\
$\bundle{E}{\pi}{M}$ & (komplexes) Vektorb"undel "uber $M$ \\ 
$\bundle{L}{\pi}{M}$ & (komplexes) Geradenb"undel "uber $M$ \\
$\cdot', \cdot$ & undeformierte Modulverkn"upfungen \\
$\circ'$, $\circ$ & \Name{Weyl-Moyal} deformierte Modulverkn"upfungen \\
$\bullet', \bullet$ & deformierte Modulmultiplikationen \\    
$(M,\omega,G,\nabla)$ & symplektische $G$-Mannigfaltigkeit mit
Zusammenhang \\
$g.$ & Wirkung der Gruppe $G$ \\

& \\
\multicolumn{2}{l}{\Large \textbf{Kapitel \ref{sec:MoritaAequivalenzdeformierterAlgebren}}} \\
& \\
$(M_{\sss \mathrm{red}}, \omega_{\sss \mathrm{red}})$ & durch
Phasenraumreduktion reduzierte symplektische Mannigfaltigkeit \\
$J,\mbf{J}$ & Impuslabbildung, Quantenimpulsabbildung \\ 
$T_{\sss \field{C}}^{\bullet}(\LieAlg{g})\FP$ & formale Potenzreihe
der komplexifizierten Tensoralgebra einer \Name{Lie}-Algebra
$\LieAlg{g}$ \\
$\universelll{\LieAlg{g}}$ & $\lambda$-abh"angige modifizierte
universell einh"ullende Algebra der \Name{Lie}-Algebra $\LieAlg{g}$ \\
$\Pol^{\bullet}(\LieAlgd{g})$ & Algebra der Polynome auf $\LieAlgd{g}$ \\
$\spgutt$ & \Name{Gutt}-Sternprodukt auf
$\Pol^{\bullet}(\LieAlgd{g})\FP$ \\
$P_{0}$ & idempotentes Element oder Projektor in $M_{n}(\alg{A})$ \\
$\defpro{P}$ & deformiertes idempotentes Element oder deformierter
Projektor \\
$\SPf{\cdot,\cdot}$ & deformiertes inneres Produkt \\
$\HPoisson[n](M,\field{C})$ & $n$-te \Name{Poisson}-Kohomologie \\
$\HPoisson[n](M,\field{Z})$ & $n$-te integrale \Name{Poisson}-Kohomologie \\

& \\
\multicolumn{2}{l}{\Large \textbf{Anhang \ref{chapter:AlgebraischeGrundlagen}}} \\
& \\
$(G,\cdot)$ & Gruppe \\
$\rmod{M}{\ring{R}}$ & $\ring{R}$-Rechtsmodul\\
$\ring{I}_{\sss L}$, $\ring{I}_{\sss R}$, $I$ & Linksideal,
Rechtsideal, Ideal eines Rings $\ring{R}$ \\ 
$\hat{\ring{R}}$ & Quotientenk"orper des Rings $\ring{R}$ \\
$\alg{A}$ & Assoziative Algebra \\
$\mu, \muop$ &  Multiplikation $\mu (a \otimes b) = ab$,
geflipte Multiplikation $\muop(a \otimes b) = ba$ \\
$\eta$ & Einsabbildung \\
$\zentrum{\alg{A}}$ & Zentrum der Algebra $\alg{A}$ \\ 
$\alg{K}$ & Koassoziative Koalgebra \\
$\Delta, \Deltaop$ & Komultiplikation, geflipte Komultiplikation \\
$\varepsilon$ &  Koeins \\
$\Prim{\alg{K}}$ & Primitive Elemente der Koalgebra $\alg{K}$ \\
$H$ & \Name{Hopf}-Algebra, \Name{Hopf}-$^\ast$-Algebra \\
$S$ & Antipodenabbildung einer \Name{Hopf}-Algebra \\
$I,^\ast$ & Antilinearer Antiautomorphismus / $^\ast$-Involution \\
$\ideal{I}$ & \Name{Hopf}-Ideal \\
$\mathfrak{g}$ & \Name{Lie}-Algebra \\
$\universell{\LieAlg{g}}$ & Universell Einh"ullende von
$\LieAlg{g}$ \\ 
$\ring{C}(G)$ &  Gruppenalgebra "uber Ring $\ring{C}$ \\
$\universellC{\LieAlg{g}}$ & komplexifizierte universell
Einh"ullende $\universellC{\LieAlg{g}}=\universellR{\LieAlg{g}}
\tensor[\field{R}] \field{C}$ \\
$\alg{F}(G)$ & $\ring{C}$-wertige Funktionen auf der Gruppe $G$ \\
$\Hom[\ring{C}](H,\alg{A})$ & $\ring{C}$-linearen Homomorphismen von
$H$ nach $\alg{A}$ \\ 
$\msf{a},\msf{b}$ & Elemente in $\Hom[\ring{C}](H,\alg{A})$ \\

$\cross{\alg{A}}{H}$ & Cross-Produktalgebra \\
$\kat{A},\kat{B},F$ & Kategorien $\kat{A},\kat{B}$ und Funktor
$F:\kat{A}\to \kat{B}$ \\ 
$\lambda$ & Formaler Parameter \\
$\modulf{M}$ & Formale Reihen in $\lambda$ mit Koeffizienten
im Modul $\modul{M}$ \\ 
$\modul{M}[\lambda]$ & Polynome in $\lambda$ mit Koeffizienten in
$\modul{M}$ \\
$\algf{A}$ & Formalen Potenzreihen in $\lambda$ mit Koeffizienten in der
Algebra $\alg{A}$ \\
$o(\cdot)$ & $\lambda$-adische Ordnung \\
$\varphi(\cdot)$ & $\lambda$-adische Bewertung \\ 
$d_{\sss \varphi}$ & von $\varphi$ induzierte Metrik \\

& \\
\multicolumn{2}{l}{\Large \textbf{Anhang \ref{chapter:GeometrischeGrundlagen}}} \\
&  \\

$(E,\pi,M,F,G)$ & Faserb"undel mit Totalraum $E$, Projektion $\pi$,
Basismannigfaltigkeit $M$, Faser $F$ und Strkturgruppe $G$ \\
$t_{ij}$ & "Ubergangsfunktionen \\
$\lconnE$ & Zusammenhang auf Vektorb"undel $\bundle{E}{\pi}{M}$ \\
$\lconnEnd{E}$ & Zusammenhang auf Endomorphismenb"undel $\bundle{\End{E}}{\pi}{E}$ \\
$R^{E}$ & Kr"ummung eines Zusammhangs auf $E$ \\
$\Lambda^{k \ell}_{,j}$ & $\Lambda^{k \ell}$ nach der $j$-ten
Komponente abgeleitet\\
$T_{\nabla}$ & Torsion eines Zusammenhangs $\nabla$ auf $TM$ \\
\end{longtable}


\chapter*{Personenindex}
\fancyhead[CE]{\slshape \nouppercase{Personenindex}} 
\fancyhead[CO]{\slshape \nouppercase{Personenindex}} 
\addcontentsline{toc}{chapter}{Personenindex}

F"ur Biographien von wichtigen Geometern und Algebraikern verweisen wir auf die B"ucher
 \citep{scriba.schreiber:2005a} und \citep{alten.et.al:2003a}. Eine
 "Ubersicht zu den genannten Nobelpreistr"agern liefert beispielsweise
 \citep{brockhaus.nobelpreise:2001}. Weitere hier verwendete
 Informationen stammen von der Internetzseite {\tt http://www.wikipedia.org}. 

\begin{longtable}{lp{85mm}}

\Namebd{Abel, Niels H.}{1802}{1829} & \Name{Niels Henrik Abel} war
norwegischer Mathematiker und starb im Alter von 26 Jahren an Tuberkulose.\\

\Namebd{Archimedes}{287}{212 v.~Chr.} & \Name{Archimedes von Syrakus}
war antiker griechischer Mathematiker, Physiker und Ingenieur. \\ 

\Namebd{Artin, Emil}{1898}{1962} & \Name{Emil Artin} war
"osterreichischer Mathematiker und besch"aftigte sich mit Algebra,
Zahlentheorie sowie der K"orpertheorie. \\ 

\Namebd{Bargmann, Valentine}{1908}{1989} & \Name{Valentin Bargmann}
war ein in Deutschland geborener mathematische Physiker. Er emigrierte
in die USA in der er \Name{Albert Einstein} assistierte.\\


\Namebd{Birkhoff, Gerrett}{1911}{1996} & \Name{Gerrett Birkhoff} war amerikanischer
Mathematiker, der auf dem Gebiet der Algebra forschte. \\

\Namebd{Borel, \'{E}mile}{1871}{1956} & Sein kompletter Name lautetet
\Name{F\'{e}lix \'{E}douard Justin \'{E}mile Borel}. Er war franz"osischer
Mathematiker und Politiker und leistete grundlegende Beitr"age zur Topologie, Ma"s-,
Wahrscheinlichkeits- und Spieltheorie. \\ 



\Namebd{Cartan, \'{E}lie J.}{1869}{1951} & \Name{\'{E}lie Joseph
  Cartan} war ein bedeutender franz"osischer Mathematiker, der
Beitr"age zu \Name{Lie}-Gruppen, der Differentialgeometrie sowie der
mathematischen Physik geliefert hat. \\

\Namebd{Cauchy, Augustin L.}{1789}{1857} & \Name{Augustin Louis
  Cauchy} war franz"osischer Mathematiker und verfa"ste knapp 800
Ver"offentlichungen in fast allen Bereichen der Mathematik,
insbesondere in der Funktionentheorie.\\ 

\Namebd{\v{C}ech, Eduard}{1893}{1960} & \Name{Eduard \v{C}ech} war ein
b"ohmischer Mathematiker.\\ 

\Namebd{Chern, Shiing-Shen}{1911}{2004} & \Name{Shiing-Shen Chern} war
chinesisch-ame\-ri\-ka\-nischer Mathematiker, der mit seinen
Beitr"agen zur Topologie und Geometrie die Mathematik des
20.~Jahrhunderts gepr"agt hat.\\ 

\Namebd{Chevalley, Claude}{1909}{1984} & \Name{Claude Chevalley} war ein franz"osischer
Mathematiker und Gr"undungsmitglied der Bourbaki-Gruppe. Er hat
insbesondere in der \Name{Lie}-Gruppentheorie und
\Name{Lie}-Algebrentheorie wichtige Beitr"age beigesteuert. \\  

\Namebd{Darboux, Jean G.}{1842}{1917} & \Name{Jean Gaston Darboux} war
ein franz"osischer Mathematiker und assistierte \Name{Joseph
  Liouville} an der Sorbonne in Paris. \\

\Namebd{de Morgan, Augustus}{1806}{1871} & \Name{Augustus de Morgan}
war ein in Indien geborener, britischer Mathematiker,
der sich insbesondere der Logik verschrieben hat.\\

\Namebd{de Rham, Georges}{1903}{1990} & \Name{Georges de Rham} war ein
schweizerischer Mathematiker, der sich insbesondere mit der
Differentialtopologie auseinandersetzte. Ihm gelang der Beweis der
Homotopieinvarianz der nach ihm benannten Kohomologie.\\

\Nameb{Deligne, Pierre R.}{1944} & \Name{Pierre Ren\'{e} Deligne} ist
belgischer Mathematiker und erhielt 1978 die
\Name{Fields}-Medaille. Ihm gelang der vollst"andige Beweis der
\Name{Weil}-Vermutungen.   \\

\Namebd{Descartes, Ren\'{e}}{1596}{1650} & \Name{Ren\'{e} Descartes}
war franz"osischer Philosoph, Mathematiker und Wissenschaftler, der auch als \Name{Cartesius}
bekannt war. Er gilt als einer der wichtigsten Denker der Neuzeit, da
er sowohl die Philosophie als auch die Mathematik entscheidend beeinflu"ste. \\

\Namebd{Dirac, Paul A.~M.}{1902}{1984} & \Name{Paul Adrien Maurice
  Dirac} war britischer Physiker. Er erhielt 1933 den \Name{Nobel}preis
f"ur Physik: \glqq für die Entdeckung einer neuen, n"utzlichen Form der Atomtheorie\grqq. \\ 

\Namebd{Einstein, Albert}{1879}{1955} & \Name{Albert Einstein} war Deutsch-schweizerischer
Physiker. Er erhielt 1921 den \Name{Nobel}preis Physik \glqq f"ur
seine Verdienste in der theoretischen Physik, besonders f"ur die
Entdeckung des Photoelektrischen Effekts\grqq. \\  

\Namebd{Eilenberg, Samuel}{1913}{1998} & \Name{Samuel Eilenberg} war
polnischer Mathematiker. Er besch"aftigte sich vorwiegend mit der
algebraischen Topologie sowie der Kategorientheorie. \\

\Namebd{Euler, Leonhard}{1707}{1783} & \Name{Leonhard Euler} war schweizerischer Mathematiker
und Physiker. Es war wahrscheinlich der wichtigste Mathematiker des 18.~Jahrhunderts.\\

\Nameb{Fedosov, Boris}{1938} & \Name{Boris Fedosov} ist ein
russischer Mathematiker, der grundlegende Beitr"age auf dem Gebiet
der Deformationsquantisierung geleistet hat. \\

\Namebd{Fock, Wladimir A.}{1898}{1974} & \Name{Wladimir Alexandrowitsch Fock}
war russischer Physiker, der grundlegende Beitr"age zur
Quantenmechanik und zur Quantenfeldtheorie beigetragen hat.\\

\Namebd{Gauss, Carl F.}{1777}{1855} & \Name{Carl Friedrich Gau"s} war
deutscher Mathematiker, Geod"at und Erfinder. \\

\Nameb{Grothendieck, Alexander}{1928} & Als Sohn eines russischen
Vaters und einer deutschen Mutter in Berlin geboren, entwickelte er
sich zu einem der wichtigsten Mathematikern des 20.~Jahrhunderts und
leistete wichtige Beitr"age in der algebraischen Topologie, der
algebraischen Geometrie sowie der Funktionalanalysis. \\

\Nameb{Gutt, Simone}{1956} & \Name{Simone Gutt} ist belgische
Mathematikerin und an der Universit\'{e} Libre de Bruxelles t"atig. \\

\Namebd{Haar, Alfr\'{e}d}{1885}{1933} & \Name{Alfr\'{e}d Haar} war
ungarischer Mathematiker. Er promovierte bei \Name{David Hilbert}. \\

\Namebd{Hamilton, William R.}{1805}{1865} & Sir \Name{William Rowan
Hamilton} war irisch-englischer Mathematiker und Physiker, hatte ab
1827 eine Professur f"ur Astronomie inne und war k"oniglicher
Astronom f"ur Irland. \\

\Namebd{Hausdorff, Felix}{1868}{1942} & \Name{Felix Hausdorff} war
deutscher Mathematiker und Mitbegr"under der modernen Topologie.\\

\Namebd{Heisenberg, Werner}{1901}{1976} & \Name{Werner Heisenberg} war
deutscher Physiker und erhielt 1932 den Physik \Name{Nobel}preis f"ur seine Arbeiten zur
Aufstellung der Quantenmechanik. \\ 

\Namebd{Hellinger, Ernst D.}{1883}{1950} & \Name{Ernst David Hellinger}
war deutscher Mathematiker. \\

\Namebd{Hermite, Charles}{1822}{1901} & \Name{Charles Hermite} war
franz"osischer Mathematiker. Er arbeitete insbesondere auf dem Gebiet
der Zahlentheorie und der Algebra. \\


\Namebd{Hilbert, David}{1862}{1943} & \Name{David Hilbert} war
ostpreu"sischer Mathematiker. Er axiomatisierte die Geometrie und
formulierte die 23 \Name{Hilbert}-Probleme.
\\

\Nameb{Hochschild, Gerhard P.}{1915} & \Name{Gerhard Paul Hochschild}
ist ein in
Berlin geborener Mathematiker.  \\

\Namebd{Hopf, Heinz}{1894}{1971} & \Name{Heinz Hopf} war
schweizerischer Mathematiker, der insbesondere im Bereich der
algebraischen Topologie arbeitete.\\

\Namebd{Jacobi, Carl G.~J.}{1804}{1851} & \Name{Carl Gustav Jacob
  Jacobi} war ein deutscher Mathematiker und wurde von seinen
Sch"ulern der \glqq \Name{Euler} des 19.~Jahrhunderts\grqq{}
genannt. Er trug zu vielen Bereichen der Mathematik wichtige
Ergebnisse bei und wird daher als einer der vielseitigsten
Mathematiker der Geschichte angesehen.\\

\Namebd{Jacobson, Nathan}{1910}{1999} & \Name{Nathan Jacobson} war
amerikanischer Mathematiker.\\

\Namebd{K"ahler, Erich}{1906}{2000} & \Name{Erich K"ahler} war deutscher Mathematiker. \\

\Namebd{Kepler, F.~Johannes}{1571}{1630} & \Name{Friedrich Johannes
  Kepler} war ein deutscher Naturphilosoph, Mathematiker, Astronom,
Astrologe und Optiker. \\

\Nameb{Kontsevich, Maxim}{1964} & \Name{Maxim Kontsevich} ist
russischer Mathematiker. Er bekam 1998 die
\Name{Fields}-Medaille und ist derzeit am Institut des Hautes
\'{E}tudes Scientifiques (IH\'{E}S) in Bures-sur-Yvette, Frankreich.\\

\Namebd{Kronecker, Leopold}{1823}{1891} & \Name{Leopold Kronecker} war deutscher
Mathematiker. Er besch"aftigte sich mit Algebra, Zahlentheorie,
Analysis und der Funktionentheorie. \\

\Namebd{Lagrange, Joseph L.}{1736}{1813} & \Name{Joseph Louis
  Lagrange}, der als \Name{Giuseppe Luigi Lagrangia} in Turin geboren
wurde, nahm die franz"osische Staatsb"urgerschaft an und ist f"ur seine
Arbeiten in der Astronomie, der Physik sowie der Mathematik bekannt
geworden. \\ 

\Namebd{Laplace, Pierre-Simon}{1749}{1827} & \Name{Pierre-Simon
  Marquis de Laplace} war franz"osischer Mathematiker und Astronom.\\ 

\Namebd{Laurent, Pierre A.}{1813}{1853} & \Name{Pierre Alphonse Laurent} war
franz"osischer Mathematiker und Entdecker der \Name{Laurent}-Reihen. \\

\Namebd{Lebesgue, Henri L.}{1875}{1941} & \Name{Henri L\'{e}on Lebesgue} war
franz"osischer Mathematiker.\\ 

\Namebd{Leibniz, Gottfried W.}{1646}{1716} & \Name{Gottfried Wilhelm
  Leibniz} war ein deutscher Philosoph, Mathematiker, Diplomat, Physiker,
Historiker, Bibliothekar und Doktor des weltlichen und des
Kirchenrechts.\\

\Namebd{Lie, M.~Sophus}{1842}{1899}  & \Name{Marius Sophus Lie} war ein
norwegischer Mathematiker. Er begr"undete die kontinuierlichen
Symmetrien und leistete Beitr"age zur Theorie der Differentialgleichungen.\\ 

\Namebd{Liouville, Joseph}{1809}{1882} & \Name{Joseph Liouville} war
franz"osischer Mathematiker. Er
legte an der \'Ecole Polytechnique u.~a.~Pr"ufungen bei
\Name{Sim\'eon-Denis Poisson}
ab und erhielt dort 1838 eine Professur. Er arbeitete insbesondere in der
Zahlentheorie, Funktionentheorie und der Differentialgeometrie.\\

\Namebd{Morita, Kiiti}{1915}{1995} & \Name{Kiiti Morita} war japanischer Mathematiker. \\

\Namebd{Moyal, Jose E.}{1910}{1998} & \Name{Jose Enrique Moyal} war
australischer Mathematiker und Physiker. \\

\Nameb{Neumaier, Nikolai A.}{1971} & \Name{Nikolai Alexander Neumaier}
ist deutscher Physiker und Mathematiker. Er ist 
z.~Zt.~an der \Name{Albert-Ludwigs}-Universit"at Freiburg t"atig und
arbeitet auf dem Gebiet der Deformationsquantisierung.\\ 

\Namebd{Newton, Isaac}{1643}{1727} & Sir \Name{Isaac Newton} war englischer
Mathematiker, Physiker, Astronom, Alchemist und Philosoph.\\

\Namebd{Noether, Emmy A.}{1882}{1935} & \Name{Emmy Amalie Noether}
habilitierte sich 1919 als erste Frau in Deutschland. Sie gilt als eine der
Mitbegr"underinnen der modernen Algebra. \\

\Namebd{Picard, Charles \'{E}.}{1856}{1941} & \Name{Charles \'{E}mile
Picard} war franz"osischer Mathematiker. \\

\Namebd{Planck, Max}{1858}{1947} & \Name{Max Planck} war deutscher
Physiker. Er erhielt den \Name{Nobel}preis f"ur Physik 1918 \glqq als Anerkennung des
Verdienstes das er sich durch seine Quantentheorie um die Entwicklung
der Physik erworben hat\grqq. \\ 

\Namebd{Poincar\'{e}, Henri}{1854}{1912} & \Name{Henri Poincar\'{e}}
war ein bedeutender franz"osischer Mathematiker und theoretischer Physiker. \\

\Namebd{Poisson, Sim\'eon-Denis}{1781}{1840} & \Name{Sim\'eon-Denis
  Poisson} war ein bedeutender franz"osischer Mathematiker und
Physiker. In der Mathematik arbeitete auf den Gebieten der Differentialgeometrie, der
Wahrscheinlichkeitsrechnung sowie der Infinitesimalrechnung. In der
Physik forschte er insbesondere an der Wellentheorie,
der Akustik, sowie der Elastizit"atstheorie und der W"arme. \\

\Namebd{Riemann, Bernhard}{1826}{1866} & \Name{Bernhard Riemann} war
ein bedeutender deutscher Mathematiker, der
unter anderem bei \Name{Gauss} und \Name{Dirichlet} lernte. Er schuf
die mathematischen Grundlagen f"ur die Allgemeine
Relativit"atstheorie \Name{Einstein}s. Er starb mit
39 Jahren an Tuberkulose. \\

\Namebd{Schr"odinger, Erwin}{1887}{1961} & \Name{Erwin Schr"odinger}
war "osterreichischer Physiker. Er erhielt 1933 den \Name{Nobel}preis
f"ur Physik. \\ 

\Namebd{Schwarz, K.~Hermann A.}{1843}{1921} & \Name{Karl Hermann
  Amandus Schwarz} war ein deutscher Mathematiker, der sich
insbesondere mit der komplexen Analysis auseinandergesetzt hat.\\

\Nameb{Serre, Jean-Pierre}{1926} & \Name{Jean-Pierre Serre} ist
franz"osischer Mathematiker, der die Mathematik des 20.~Jahrhunderts
pr"agte. Er  erhielt u.~a.~die \Name{Fields}-Medaille im Jahre
1954 und den \Name{Abel}-Preis 2003.\\ 

\Name{Swan, Richard G.}{} & \Name{Richard Gordon Swan} ist
amerikanischer Mathematiker. \\

\Name{Sweedler, Moss E.}{} & \Name{Moss Eisenberg Sweedler} ist
amerikanischer Mathematiker.  \\

\Namebd{Taylor, Brook}{1685}{1731} &  \Name{Brook Taylor} war englischer Mathematiker, dem
es gelang, die allgemeine Konstruktion f"ur \Name{Taylor}-Reihen von 
Funktionen, die eine solche Reihe besitzen, anzugeben. \\

\Namebd{Toeplitz, Otto}{1881}{1940} & \Name{Otto Toeplitz} war deutsch-j"udischer
Mathematiker. Er besch"aftigte sich mit Funktionalanalysis und
linearer Algebra und hatte eine Professur in Bonn bevor er 1939 nach
Pal"astina fl"uchtete. \\

\Namebd{Vey, Jacques}{1943}{1979} & \Name{Jacques Vey} war schweizerischer Mathematiker. \\



\Namebd{Weyl, Hermann K.~H.}{1885}{1955} & \Name{Hermann Klaus Hugo Weyl} war
deutscher Mathematiker.\\

\Namebd{Wick, Gian-Carlo}{1909}{1992} & \Name{Gian-Carlo Wick} war italienischer Physiker. \\

\Namebd{Witt, Ernst}{1911}{1991} & \Name{Ernst Witt} war deutscher Mathematiker. Er
studierte in Freiburg im Breisgau und G"ottingen, wo er bei Emmy
Noether promovierte und sich habilitierte. \\
\end{longtable}



\addcontentsline{toc}{chapter}{Literaturverzeichnis}
\fancyhead[CE]{\slshape \nouppercase{Literaturverzeichnis}} 
\fancyhead[CO]{\slshape \nouppercase{Literaturverzeichnis}}
 
\bibliography{dqarticle,dqproceeding,dqprocentry,dqthesis,dqbook,dqmisc,dqpreprints}
\bibliographystyle{stefan6}

\addcontentsline{toc}{chapter}{Index}
\fancyhead[CE]{\slshape \nouppercase{Index}} 
\fancyhead[CO]{\slshape \nouppercase{Index}}
\printindex
\pagestyle{empty}
\cleardoublepage

\end{document}